\newtheorem{theorem}{Theorem}[section]
\newtheorem{proposition}[theorem]{Proposition}
\newtheorem{lemma}[theorem]{Lemma}
\newtheorem{corollary}[theorem]{Corollary}
\theoremstyle{definition}
\newtheorem{definition}[theorem]{Definition}
\newtheorem{example}[theorem]{Example}
\newtheorem{theoremdefinition}[theorem]{Theorem-Definition}
\newtheorem{propositiondefinition}[theorem]{Proposition-Definition}
\theoremstyle{remark}
\newtheorem{remark}[theorem]{Remark}
\numberwithin{equation}{section}
\newcommand{\bC}{\mathbb{C}}
\newcommand{\bE}{\mathbb{E}}
\newcommand{\bF}{\mathbf{F}}
\newcommand{\bL}{\mathbb{L}}
\newcommand{\bN}{\mathbb{N}}
\newcommand{\bR}{\mathbb{R}}
\newcommand{\bT}{\mathbb{T}}
\newcommand{\bZ}{\mathbb{Z}}
\newcommand{\cH}{\mathcal{H}}
\newcommand{\cL}{\mathcal{L}}
\newcommand{\cE}{\mathcal{E}}
\newcommand{\cF}{\mathcal{F}}
\newcommand{\cO}{\mathcal{O}}
\newcommand{\cP}{\mathcal{P}}
\newcommand{\cS}{\mathcal{S}}
\newcommand{\cV}{\mathcal{V}}
\newcommand{\sci}{\mathfrak{i}}
\newcommand{\sce}{\mathfrak{e}}
\newcommand{\conj}[1]{\overline{#1}}
\newcommand{\ver}{\mathrm{ver}}
\newcommand{\hor}{\mathrm{hor}}
\newcommand{\bas}{\mathrm{bas}}
\newcommand{\tot}{\mathrm{tot}}
\newcommand{\even}{\mathrm{even}}
\DeclareMathOperator{\ran}{ran}
\newcommand{\CP}{\mathbb{C}\mathrm{P}}
\newcommand{\loc}{\mathrm{loc}}
\newcommand{\alg}{\mathrm{alg}}
\newcommand{\act}{\mathbin{\triangleright}}
\DeclareMathOperator{\Aut}{Aut}
\DeclareMathOperator{\Ad}{Ad}
\DeclareMathOperator{\dAd}{\widehat{Ad}}
\DeclareMathOperator{\Corr}{{\normalfont\textsc{Corr}}}
\DeclareMathOperator{\Diff}{Diff}
\DeclareMathOperator{\tDiff}{\widetilde{Diff}}
\DeclareMathOperator{\coev}{coev}
\DeclareMathOperator{\ev}{ev}
\DeclareMathOperator{\id}{id}
\DeclareMathOperator{\End}{End}
\DeclareMathOperator{\Pic}{{\normalfont\textsc{Pic}}}
\DeclareMathOperator{\dPic}{{\normalfont\textsc{DPic}}}
\DeclareMathOperator{\Circ}{{\normalfont\textsc{Circ}}}
\DeclareMathOperator{\dCirc}{{\normalfont\textsc{DCirc}}}
\DeclareMathOperator{\Hor}{Hor}
\DeclareMathOperator{\pic}{Pic}
\DeclareMathOperator{\dpic}{DPic}
\DeclareMathOperator{\Obj}{Obj}
\DeclareMathOperator{\Unit}{U}
\DeclareMathOperator{\GL}{GL}
\DeclareMathOperator{\bb}{bb}
\DeclareMathOperator{\cl}{cl}
\DeclareMathOperator{\Zent}{Z}
\DeclareMathOperator{\SU}{SU}
\DeclareMathOperator{\vol}{vol}
\DeclareMathOperator{\twogrp}{2{\normalfont Grp}}
\DeclareMathOperator{\CE}{CE}
\DeclareMathOperator{\Tot}{Tot}
\newcommand{\Cstar}{\ensuremath{\mathrm{C}^\ast}}
\newcommand{\hotimes}{\mathbin{\widehat{\otimes}}}
\newcommand{\grp}[1]{{\normalfont\textsc{#1}}}
\newcommand{\tsD}{\widetilde{\slashed{D}}}
\providecommand\given{}
\newcommand\SetSymbol[1][]{%
\nonscript\:#1\vert
\allowbreak
\nonscript\:
\mathopen{}}
\DeclarePairedDelimiterX\Set[1]\{\}{%
\renewcommand\given{\SetSymbol[\delimsize]}
#1
}
\DeclarePairedDelimiterX\hp[2]\lparen\rparen{
\ifblank{#1}{\ifblank{#2}{\cdot,\cdot}{\cdot,#2}}{\ifblank{#2}{#1,\cdot}{#1,#2}}
}
\DeclarePairedDelimiterX\ip[2]\langle\rangle{
\ifblank{#1}{\ifblank{#2}{\cdot,\cdot}{\cdot,#2}}{\ifblank{#2}{#1,\cdot}{#1,#2}}
}
\DeclarePairedDelimiterX\norm[1]\lVert\rVert{
\ifblank{#1}{\:\cdot\:}{#1}
}
\newcommand{\cconj}[1]{\overline{\dbl@overline{#1}}}
\newcommand{\dbl@overline}[1]{\mathpalette\dbl@@overline{#1}}
\newcommand{\dbl@@overline}[2]{%
  \begingroup
  \sbox\z@{$\m@th#1\overline{#2}$}%
  \ht\z@=\dimexpr\ht\z@-2\dbl@adjust{#1}\relax
  \box\z@
  \ifx#1\scriptstyle\kern-\scriptspace\else
  \ifx#1\scriptscriptstyle\kern-\scriptspace\fi\fi
  \endgroup
}
\newcommand{\dbl@adjust}[1]{%
  \fontdimen8
  \ifx#1\displaystyle\textfont\else
  \ifx#1\textstyle\textfont\else
  \ifx#1\scriptstyle\scriptfont\else
  \scriptscriptfont\fi\fi\fi 3
}
\newcommand{\du}{\mathrm{d}}
\newcommand{\ract}{\triangleleft}
\newcommand{\U}{\operatorname{U}(1)}
\newcommand{\iu}{\mathrm{i}}
\newcommand{\rest}[2]{{#1}\!\!\restriction_{#2}}
\newcommand{\leg}[2]{{#1}_{\langle #2 \rangle}}
\begin{document}

\title[Noncommutative \texorpdfstring{\(\U\)}{U(1)}-gauge theory]{Geometric foundations for classical \texorpdfstring{\(\U\)}{U(1)}-gauge theory on noncommutative manifolds}

\author{Branimir \'{C}a\'{c}i\'{c}}
\address{Department of Mathematics \& Statistics, University of New Brunswick, PO Box 4400, Fredericton, NB\space\space{}E3B 4A3, Canada}
\email{\url{mailto:bcacic@unb.ca}}

\begin{abstract}
	We systematically extend the elementary differential and Riemannian geometry of classical \(\operatorname{U}(1)\)-gauge theory to the noncommutative setting by combining recent advances in noncommutative Riemannian geometry with the theory of coherent \(2\)-groups.
	We show that Hermitian line bimodules with Hermitian bimodule connection over a unital pre-\(\mathrm{C}^\ast\)-algebra with \(\ast\)-exterior algebra form a coherent \(2\)-group, and we prove that weak monoidal functors between coherent \(2\)-groups canonically define bar or involutive monoidal functors in the sense of Beggs--Majid and Egger, respectively.
	Hence, we prove that a suitable Hermitian line bimodule with Hermitian bimodule connection yields an essentially unique differentiable quantum principal \(\U\)-bundle with principal connection and \textit{vice versa}; here, \(\operatorname{U}(1)\) is \(q\)-deformed for \(q\) a numerical invariant of the bimodule connection.
	From there, we formulate and solve the interrelated lifting problems for noncommutative Riemannian structure in terms of abstract Hodge star operators and formal spectral triples, respectively; all the while, we account precisely for emergent modular phenomena of geometric nature.
	In particular, it follows that the spin Dirac spectral triple on quantum \(\operatorname{\mathbb{C}P}^1\) does not lift to a twisted spectral triple on \(3\)-dimensional quantum \(\operatorname{SU}(2)\) but does recover Kaad--Kyed's compact quantum metric space on quantum \(\operatorname{SU}(2)\) for a canonical choice of parameters.
\end{abstract}

\vspace*{-0.5em}
\maketitle
\vfill

\tableofcontents

\newpage

\addtocontents{toc}{\protect\setcounter{tocdepth}{1}}
\section{Introduction}\label{sec:1}

The primordial application of noncommutative (NC) geometry to theoretical physics is the conceptually economical construction of physical models as classical physics on NC manifolds.
For example, in Bellissard--Van Elst--Schulz-Baldes' model of the integer quantum hall effect~\cite{BVESB}, the NC Brouillin zone accounts for both the magnetic field and disorder in the crystal, while in particle physics~\cite{VDDVS} and cosmological models~\cite{Marcolli} using the spectral action principle~\cite{CC}, \(0\)-dimensional NC fibres encode the particle content.
The prototypical such construction is Connes--Rieffel's Yang--Mills gauge theory on irrational NC \(2\)-tori~\cite{CR}, the first of many NC field theories built from a range of seemingly disparate variations on Connes's NC differential geometry~\cite{Connes80,Connes85}.
Indeed, one can approach various aspects or special cases of NC \(\U\)-gauge theory in terms of quantum principal bundles~\cite{BrM,DJ97}, principal \(\U\)-spectral triples~\cite{DS,BMS,CaMe}, or even the spectral action principle~\cite{VDDVS}.

This fragmentary understanding of classical \(\U\)-gauge theory on NC manifolds is an obstacle to physical applications.
For example, in the quantum adiabatic transport approach to the integer quantum Hall effect, one probes the qualitative behaviour of relevant observables by considering the integer quantum Hall effect on general compact Riemann surfaces~\cite{ASZ}.
A satisfactory generalisation to NC Riemann surfaces would require a precise extension of the elementary differential and Riemannian geometry of classical \(\U\)-gauge theory \emph{as a coherent whole} that is compatible with both NC K\"{a}hler geometry~\cite{OB} and the framework of spectral triples~\cite{Connes95}.
Our goal is to effect just such an extension, which would be just as applicable to the study of electromagnetism on NC spacetimes~\cite{MR} and to the refinement of NC \(T\)-duality as applied to the bulk-edge correspondence~\cite{MathaiThiang}.

We construct this extension from the ground up according to the philosophy of quantum Riemannian geometry~\cite{BeMa}.
Thus, an NC Riemannian manifold is an NC manifold---a unital pre-\Cstar-algebra together with a \(\ast\)-exterior calculus---equipped with additional structure, whether an abstract Hodge star operator or a spectral triple.
In stark contrast with other areas of NC geometry and operator algebras, this requires working exclusively `on the nose'---at worst, up to explicit isomorphism.
Fortunately, in our setting, we may obviate any resulting algebraic difficulties through the use of \emph{coherent \(2\)-groups}~\cite{BaezLauda} and \emph{bar categories}~\cite{BeMa09,Egger}.
Moreover, following relevant applications of unbounded \(\mathrm{KK}\)-theory~\cite{BMS,FR,CaMe}, we obviate a wide range of analytic and algebraic difficulties through the systematic use of finite tight Parseval frames on (pre-)Hilbert modules~\cite{FL}.

Our results have several immediate implications that we must leave for future work.
One is that the Gysin sequence for principal \(\U\)-bundles in de Rham cohomology generalises almost \emph{verbatim} to the NC setting; when combined with NC Hodge theory~\cite{OB,OBSVR}, this permits the efficient computation of NC de Rham cohomology for NC principal \(\U\)-bundles as well as a differential-geometric perspective on NC \(T\)-duality.
Another is that, under relatively mild hypotheses, we may constuct moduli spaces of \(\U\)-instantons with fixed topological sector (when non-empty) using NC Hodge theory and a generalised first Chern class in de Rham cohomology, which now fails to be a group homomorphism.
Indeed, the stage is now set for detailed investigation of Chern--Weil theory on NC principal \(\U\)-bundles.

Our results are independent of Schwieger--Wagner's cohomological classification of principal \(\bT^N\)-\Cstar-algebras~\cite{SW1} and Salda\~{n}a's Tannaka--Krein theorem~\cite{Saldana20} for differentiable quantum principal bundles \emph{d'apr\`{e}s} \DJ{}ur\dj{}evi\'{c}~\cite{Dj96}.
However, the former presages the r\^{o}le of coherent \(2\)-groups and their group-cohomological classification in the case of Abelian structure groups, while the latter will be prototypical for any generalisation of our results to non-Abelian or quantum structure groups.

\subsection*{Overview of results}

We begin in \S \ref{sec:2} by developing the elementary theory of NC Hermitian line bundles with unitary connection.
Let \(B\) be a unital pre-\Cstar-algebra with \(\ast\)-exterior algebra \((\Omega_B,\du_B)\).
Building on a proposal of Beggs--Brzezi\'{n}ski~\cite{BeBrz14}, we define \emph{Hermitian line \(B\)-bimodules with connection} to be suitable strong Morita auto-equivalences of \(B\) equipped with suitable extendable bimodule connections~\cite{BeMa18} with respect to \((\Omega_B,\du_B)\).
Then, building on results of Beggs--Majid~\cite{BeMa18}, we prove that Hermitian line \(B\)-bimodules with connection form a coherent \(2\)-group \(\dPic(B)\), the \emph{differential Picard \(2\)-group} of \((B;\Omega_B,\du_B)\).
The isomorphism classes of \(\dPic(B)\) still form a group \(\dpic(B)\), the \emph{differential Picard group}, whose canonical action on the graded centre \(\Zent(\Omega_B)\) of \(\Omega_B\) will appear throughout this work.
By results of Beggs--Majid~\cite{BeMa18}, this \(\dpic(B)\)-action admits a \(1\)-cocycle of supreme importance: the curvature \(2\)-forms of Hermitian line \(B\)-bimodules with connection.

Next, in \S \ref{sec:3}, we develop the elementary theory of NC principal \(\U\)-bundles with principal connection.
Given \(\kappa > 0\), we synthesize a definition of \emph{\(\kappa\)-differentiable quantum principal \(\U\)-bundle with connection} from work of Brzezi\'{n}ski--Majid~\cite{BrM}, Hajac~\cite{Hajac}, \DJ{}ur\dj{}evi\'{c}~\cite{DJ97}, and Beggs--Majid~\cite{BeMa}; here, the differential calculus on \(\U\) is deformed to satisfy \(\du{z} \cdot z = \kappa z \cdot \du{z}\).
One may define a functor that maps a \(\kappa\)-differentiable quantum principal \(\U\)-bundle with connection to its NC associated line bundle with connection of winding number \(-1\); we show that \([E,\nabla_E] \in \dpic(B)\) lies in the essential range of this functor if and only if its curvature \(2\)-form \(\bF_{[E,\nabla_E]}\) satisfies \(\bF_{[E,\nabla_E]} \ract [E,\nabla_E] = \kappa^{-1} \bF_{[E,\nabla_E]}\) with respect to the \(\dpic(B)\)-action on \(\Zent(\Omega_B)\).
Hence, we prove that this functor is an equivalence of categories onto its essential range, generalising the familiar dictionary between Hermitian line bundles with unitary connection and principal \(\U\)-bundles with principal connection.

Our proof depends on two apparently novel technical results on coherent \(2\)-groups.
The first, that \(\bZ\) is the free coherent \(2\)-group on one generator, is a straightforward corollary of Joyal--Street's group-cohomological classification of weak monoidal functors between coherent \(2\)-groups~\cite{JS}.
The second, that every weak monoidal functor between coherent \(2\)-groups is a \emph{bar functor} or \emph{involutive monoidal functor} in the sense of Beggs--Majid~\cite{BeMa18} and Egger~\cite{Egger}, respectively, is a non-trivial application of the coherence theorem for coherent \(2\)-groups of Ulbrich~\cite{Ulbrich} and Laplaza~\cite{Laplaza}.
We view this pair of results as an abstract distillation of Pimsner's construction~\cite{Pimsner}: by applying them to weak monoidal functors from \(\bZ\) to the coherent \(2\)-group \(\Pic(B)\) of Hermitian line \(B\)-bimodules, we may ultimately recover Arici--Kaad--Landi's characterisation~\cite{AKL} of NC topological principal \(\U\)-bundles.

At last, in \S \ref{sec:4}, we turn to the NC Riemannian geometry of NC principal \(\U\)-bundles with principal connection.
The best-known NC \(3\)-manifolds are total spaces of NC principal \(\U\)-bundles with principal connection.
However, \(3\)-dimensional quantum \(\SU(2)\) poses fundamental challenges: for example, it cannot be faithfully represented by a spectral triple~\cite{Schmuedgen}.
We draw on a range of advances in NC Riemannian geometry---unbounded \(\mathrm{KK}\)-theory~\cite{Mesland,KL,BMS}, NC K\"{a}hler geometry~\cite{OB}, and quantum Riemannian geometry~\cite{BeMa}---to \emph{lift} NC Riemannian geometry from well-behaved NC base spaces to NC total spaces.
Our guide is the commutative case: a principal \(\U\)-bundle \(\pi : X \to Y\) with principal connection \(\Pi\) admits a bijection between metrics on \(Y\) and \(\U\)-invariant metrics on \(X\) that make \(\Pi\) orthogonal and the fibres have unit length, which is defined by the constraint that \(\pi\) become a Riemannian submersion~\cite[\S 4]{AB}.

First, in \S\S \ref{sec:4.1} and \ref{sec:4.2}, we consider NC Riemannian geometry \emph{via} abstract Hodge operators: a \emph{Riemannian geometry} on an NC manifold \((B;\Omega_B,\du_B)\) is a pair \((\star,\tau)\), where \(\star\) generalises the Hodge star operator and \(\tau\) is a state generalising integration against the Riemannian volume form.
This suffices to formulate (Euclidean) Maxwell's equations, whose moduli spaces of solutions we study in future work.
We propose an analogous definition of \emph{total Riemannian geometry} for a \(\kappa\)-differentiable quantum principal \(\U\)-bundle with connection \((P;\Omega_P,\du_P;\Pi)\) on \((B;\Omega_B,\du_B)\), where failure of the Hodge operator to be right \(P\)-linear and \(\ast\)-preserving is governed by a commuting pair of modular automorphisms of \(\Omega_P\).
We show that \((\star,\tau)\) lifts to at most one total Riemannian geometry on \((P;\Omega_P,\du_P;\Pi)\), whose existence we characterize in terms of conformality of the corresponding Hermitian line \(B\)-bimodule with connection.
For example, the unique lift of the canonical Riemanian geometry on quantum \(\CP^1\) as an NC K\"{a}hler manifold to the \(q\)-monopole of Brzezi\'{n}ski--Majid~\cite{BrM} recovers a construction of Zampini~\cite{Zampini} together with a canonical choice of parameters.

Next, in \S \ref{sec:4.3}, we consider Connes's familiar NC Riemannian geometry \emph{via} spectral triples~\cite{Connes95}, which, following Schm\"{u}dgen~\cite{Schmuedgen}, we generalise to \emph{bounded commutator representations}.
We propose a definition of \emph{projectable commutator representation}, where represented \(1\)-forms are only locally bounded in a certain sense.
We then use a formal unbounded Kasparov product~\cite{Mesland,KL} to construct an equivalence of categories between faithful bounded commutator representations of \((B;\Omega_B,\du_B)\) and faithful projectable commutator representations of \((P;\Omega_P,\du_P;\Pi)\); isomorphism of the latter is \(\U\)-equivariant unitary equivalence up to perturbation by a suitable \emph{relative remainder}.
If \((B;\Omega_B,\du_B)\) is equipped with a liftable Riemannian geometry and \((P;\Omega_P,\du_P;\Pi)\) is equipped with its unique lift, then the resulting \emph{Hodge--de Rham commutator representation} of \((B;\Omega_B,\du_B)\) lifts to the resulting \emph{total Hodge--de Rham commutator representation} of \((P;\Omega_P,\du_P;\Pi)\).

Finally, in \S \ref{sec:4.4}, we draw on Connes--Moscovici's formalism of \emph{twisted spectral triples}~\cite{CM} to control unboundedness of represented \(1\)-forms.
We consider \emph{modular pairs} \((N,\nu)\), where \(\nu\) is a modular automorphism of \(\Omega_P\) and \(N\) is a suitable unbounded operator satisfying \(\nu = N^{-1}(\cdot)N\); let us say that \((N,\nu)\) \emph{damps} an unbounded operator \(S\) whenever \(N S N\) is bounded.
Hence, we define a \emph{vertical} or \emph{horizontal twist} for a faithful projectable commutator representation to be a modular pair that damps all represented vertical or horizontal \(1\)-forms, respectively.
We demonstrate a universal vertical twist and characterize the existence of horizontal twists using a conformal generalisation of \emph{metric equicontinuity}~\cite{BMR}; in particular, a total Hodge--de Rham representation always admits a canonical horizontal twist.
In the case of \(3\)-dimensional quantum \(\SU(2)\), we show that vertical and horizontal twists are unique but distinct, thereby excluding the existence of non-pathological \(\U\)-equivariant twisted spectral triples.
Nonetheless, we obtain a geometric derivation for the compact quantum metric space on quantum \(\SU(2)\) constructed by Kaad--Kyed~\cite{KK} for a canonical choice of parameters.

In this work, we shall make extensive use of the following running examples:

\begin{enumerate}[leftmargin=*]
	\item the commutative case---\ref{ex:classical0}, \ref{ex:classical1}, \ref{ex:classical2}, \ref{ex:classical3}, \ref{ex:classicaltotal1}, \ref{ex:classicaltotal2}, \ref{ex:classical4}, \ref{ex:classical6};
	\item the \emph{real multiplication instanton}---\ref{ex:heis1}, \ref{ex:heis3}, \ref{ex:heis4}, \ref{ex:heis5}, \ref{ex:heis6}, \ref{ex:heis7}, \ref{ex:heis9}, \ref{ex:heis11}, \ref{ex:heis12}, \ref{ex:heis10}, \ref{ex:heis13};
	\item the \emph{\(q\)-monopole}---\ref{ex:hopf1}, \ref{ex:hopf2}, \ref{ex:hopf3}, \ref{ex:hopf4}, \ref{ex:hopf5}, \ref{ex:hopf6}, \ref{ex:hopf7}, \ref{ex:hopf8}, \ref{ex:hopf9}, \ref{ex:hopf10}, \ref{ex:hopf11}, \ref{ex:hopf12a}, \ref{ex:hopf12}, \ref{ex:hopf13}.
\end{enumerate}

\subsection*{Acknowledgments}

The author thanks E.\ Beggs, C.\ Dunphy, V.\ Husain, A.\ Krutov, M.\ Marcolli, B.\ Mesland, R.\ \'{O} Buachalla, A.\ Rennie, K.\ Strung, N.\ Touikan, and A.\ Zampini for helpful conversations and correspondence, and he especially thanks T.\ V.\ Karthik for numerous technical conversations over the last several years that have indelibly shaped this work.
The author was supported by Natural Sciences and Engineering Research Council of Canada (NSERC) Discovery Grant RGPIN-2017-04249 and a Harrison McCain Foundation Young Scholar Award.

\addtocontents{toc}{\protect\setcounter{tocdepth}{2}}

\section{A coherent \texorpdfstring{\(2\)}{2}-group of NC Hermitian line bundles with connection}\label{sec:2}

In this section, we build on work of Beggs--Brzezi\'{n}ski~\cite{BeBrz14} and Beggs--Majid~\cite{BeMa18} to construct a coherent \(2\)-group of NC Hermitian line bundles with unitary connection over an NC differentiable manifold, the \emph{differential Picard \(2\)-group}, that makes curvature into a canonical group \(1\)-cocycle.
Moreover, we algebraically characterise the fibers of the forgetful functors passing to NC Hermitian line bundles and NC Hermitian vector bundles, respectively.

Let us recall some category-theoretic terminology.
A category is \emph{essentially small} whenever its hom-sets and its class of isomorphism classes are all sets.
A \emph{concrete category} is a category \(\grp{C}\) equipped with a faithful functor \(U : \grp{C} \to \grp{Set}\) to the category \(\grp{Set}\) of sets and functions.
Likewise, we define a \emph{functor category} to be a category \(\grp{C}\) equipped with a faithful functor \(U : \grp{C} \to [\grp{A},\grp{B}]\), where \(\grp{A}\) and \(\grp{B}\) are categories and \([\grp{A},\grp{B}]\) is the usual functor category whose objects are functors \(F : \grp{A} \to \grp{B}\) and whose arrows are natural transformations.
Finally, a subcategory \(\grp{A}\) of a category \(\grp{B}\) is \emph{strictly full} whenever it is full---every arrow in \(\grp{B}\) between objects of \(\grp{A}\) is an arrow of \(\grp{A}\)---and closed under isomorphism.

\subsection{Preliminaries on coherent \texorpdfstring{\(2\)}{2}-groups}

We first review the elementary theory of \emph{coherent \(2\)-groups}, which generalise ordinary groups by permitting the group law, unit, and inversion to satisfy the group axioms up to coherent isomorphisms.
In particular, we show that \(\bZ\) is the free coherent \(2\)-group on one generator. 
We follow the account of Baez--Lauda~\cite{BaezLauda} but with simplications drawn from Laplaza~\cite{Laplaza}.

Recall that a \emph{(weak) monoidal category} is a category \(\grp{C}\) equpped with a bifunctor \(\otimes : \grp{C} \times \grp{C} \to \grp{C}\), the \emph{monoidal product}, a distinguished object \(1\), the \emph{unit}, and natural isomorphisms \((\lambda_a : 1 \otimes a \to a)_{a \in \Obj(\grp{C})}\), the \emph{left unitor}, \((\rho_a : a \otimes 1 \to a)_{a \in \Obj(\grp{C})}\), the \emph{right unitor}, and \(\left(\alpha_{a,b,c} : (a \otimes b) \otimes c \to a \otimes (b \otimes c)\right)_{(a,b,c) \in \Obj(\grp{C})^3}\), the \emph{associator}, that satisfy certain coherence diagrams~\cite[pp.\ 428--9]{BaezLauda}; in particular, it is \emph{strict} whenever its left unitor, right unitor, and associator consist of identity arrows.
Moreover, a \emph{monoidal subcategory} of a monoidal category \(\grp{C}\) is a subcategory \(\grp{D}\) of \(\grp{C}\) that is closed under the monoidal product, contains the unit, and contains all left unitor, right unitor, and associator arrows between its objects.

\begin{example}
	Let \(B\) be a unital associative algebra over \(\bC\).
	The concrete category \(\grp{Bimod}(B)\) of \(B\)-bimodules and \(B\)-bimodule homomorphisms defines a monoidal category with respect to the usual balanced tensor product of \(B\)-bimodules and of \(B\)-bimodule homomorphisms.
	In particular, the associator of \(B\)-bimodules \(E\), \(F\), and \(G\) is \(\alpha_{E,F,G} \coloneqq \left((e \otimes f) \otimes g \mapsto e \otimes (f \otimes g)\right)\),	the unit object is the trivial \(B\)-bimodule \(B\), and the left and right unitors of a \(B\)-bimodule are induced by its left and right \(B\)-module structures, respectively.
\end{example}


\begin{definition}[{S\'{i}nh~\cite{Sinh}; Laplaza~\cite[\S 4]{Laplaza}; Baez--Lauda~\cite{BaezLauda}}]
	A \emph{coherent \(2\)-group} is an essentially small monoidal category \(\grp{G}\) in which every arrow is invertible equipped with a function
	\(
		(g \mapsto \conj{g}) : \Obj(\grp{G}) \to \Obj(\grp{G})
	\)
	called \emph{monoidal inversion} and a family of arrows
	\(
		(\ev_g : \conj{g} \otimes g \to 1)_{g \in \Obj(\grp{G})}
	\)
	in \(\grp{G}\) called \emph{evaluation}.
	Hence, a \emph{sub-\(2\)-group} of a coherent \(2\)-group \(\grp{G}\) is a monoidal subcategory \(\grp{H}\) of \(\grp{G}\) that is closed under monoidal inversion and contains \(\Set{\ev_g \given g \in \Obj(\grp{H})}\).
\end{definition}

A group \(\Gamma\) defines a coherent \(2\)-group: take the discrete category on its underlying set with the strict monoidal structure given by the group law and monoidal inversion given by inversion in the group.
This example admits the following wide-ranging generalisation; for a review of the relevant group cohomology, see~\cite[\S 2.1]{Jones}.

\begin{example}[{see~\cite[\S 2.2]{Jones}}]
	Let \(\Gamma\) be a group, let \(M\) be a \(\Gamma\)-module, and let \(\omega \in Z^3(\Gamma,M)\) be a normalised cocycle. The following defines a coherent \(2\)-group \(\twogrp(\Gamma,M,\omega)\) whose set of objects is \(\Gamma\) and whose arrows are all automorphisms.
	\begin{enumerate}[leftmargin=*]
		\item The automorphism group of an object \(\gamma \in \Gamma\) is \(M \times \Gamma\), where composition of arrows is induced by the group law of \(M\) and the identity of \(\gamma\) is \((1_M,\gamma)\).
		\item The monoidal product on objects is given by the group law of \(\Gamma\), the monoidal product on arrows is given by the group law of \(M \rtimes \Gamma\), the monoidal unit is \(1_\Gamma\), left unitors and right unitors are identity arrows, and the associator of \((\gamma_1,\gamma_2,\gamma_3) \in \Gamma^3\) is \(\alpha_{\gamma_1,\gamma_2,\gamma_3} \coloneqq (\omega(\gamma_1,\gamma_2,\gamma_3),\gamma_1\gamma_2\gamma_3)\).
		\item Monoidal inversion is given by inversion in the group \(\Gamma\), so that evaluation is induced by the group law of \(\Gamma\).
	\end{enumerate}
\end{example}

We now take a closer look at monoidal inversion.
Let \(g\) be an object of a monoidal category \(\grp{G}\).
Recall~\cite[Deff.\ 2.10.1 \& 2.11.1]{EGNO} that an \emph{inverse} for \(g\) is a triple \((h,\sce,\sci)\) consisting of an object \(h\) of \(\grp{G}\) and isomorphisms \(\sce : h \otimes g \to 1\) and \(\sci : 1 \to g \otimes h\) in \(\grp{G}\) that make the following commute for all \(f,g,h \in \Obj(\grp{G})\):

\noindent\begin{minipage}{.49\linewidth}
	\begin{equation}\scalebox{0.95}{
		\begin{tikzcd}[ampersand replacement=\&, column sep=tiny]
			{(g \otimes h)	\otimes g} \&\& {g \otimes (h \otimes g)}\\
			{1 \otimes g} \& {g} \& {g \otimes 1}
			\arrow["{\alpha_{g,h,g}}", from=1-1, to=1-3]
			\arrow["{\lambda_g}"', from=2-1, to=2-2]
			\arrow["{\rho_g}", from=2-3, to=2-2]
			\arrow["{\sci \otimes \id_g}", from=2-1, to=1-1]
			\arrow["{\id_g{} \otimes \sce}", from=1-3, to=2-3]
		\end{tikzcd}
	}\end{equation}
\end{minipage}
\begin{minipage}{.49\linewidth}
	\begin{equation}\scalebox{0.95}{
		\begin{tikzcd}[ampersand replacement=\&, column sep=tiny]
			{(h \otimes g) \otimes h} \&\& {h \otimes (g \otimes h)}\\
			{1 \otimes h} \& {h} \& {h \otimes 1}
			\arrow["{\alpha_{h,g,h}}^{-1}"', from=1-3, to=1-1]
			\arrow["{\lambda_h}"', from=2-1, to=2-2]
			\arrow["{\rho_h}", from=2-3, to=2-2]
			\arrow["{\sce \otimes \id_h}"', from=1-1, to=2-1]
			\arrow["{\id_h{} \otimes \sci}"', from=2-3, to=1-3]
		\end{tikzcd}
	}\end{equation}
\end{minipage}

\noindent Recall, moreover, that an \emph{isomorphism} of inverses \((h_1,\sce_1,\sci_1)\) and \((h_2,\sce_2,\sci_2)\) for the object \(g\) is an isomorphism \(u : h_1 \to h_2\) in \(\grp{G}\) that makes the following diagrams commute for all \(g,h \in \Obj(\grp{G})\):
	
\noindent\begin{minipage}{.49\linewidth}
	\begin{equation}\label{cd:inverseiso1}
		\begin{tikzcd}[ampersand replacement=\&, column sep=tiny, row sep=tiny]
			{h_1 \otimes g} \&\& {h_2 \otimes g}\\
			\& {g} \&
			\arrow["{u \otimes \id_g{}}", from=1-1, to=1-3]
			\arrow["{\sce_1}"', from=1-1, to=2-2]
			\arrow["{\sce_2}", from=1-3, to=2-2]
		\end{tikzcd} 
	\end{equation}
\end{minipage}
\begin{minipage}{.49\linewidth}
	\begin{equation}\label{cd:inverseiso2}
		\begin{tikzcd}[ampersand replacement=\&, column sep=tiny, row sep=tiny]
			{g \otimes h_1} \&\& {g \otimes h_2}\\
			\& {g} \&
			\arrow["{\id_g{} \otimes u}", from=1-1, to=1-3]
			\arrow["{\sci_1}", from=2-2, to=1-1]
			\arrow["{\sci_2}"', from=2-2, to=1-3]
		\end{tikzcd}
	\end{equation}
\end{minipage}

\noindent It is well known that if an object \(g\) of a monoidal category \(\grp{G}\) has an inverse, then it is unique up to unique isomorphism in the above sense~\cite[Prop.\ 2.10.5]{EGNO}.

\begin{theorem}[{Laplaza~\cite[\S 4]{Laplaza}}]\label{thm:laplaza}
	Let \(\grp{G}\) be a coherent \(2\)-group.
	\begin{enumerate}[leftmargin=*]
		\item Monoidal inversion in \(\grp{G}\) uniquely extends to a functor \(\grp{G} \to \grp{G}\) that makes evaluation in \(\grp{G}\) into a natural isomorphism.
		\item There is a unique natural isomorphism \((\coev_g : 1_{\grp{G}} \to g \otimes \conj{g})_{g \in \Obj(\grp{G})}\), such that, for every \(g \in \Obj(\grp{G})\), the triple \((\conj{g},\ev_g,\coev_g)\) is an inverse for \(g\).
		\item There exists a unique natural isomorphism \((\bb_g : g \to \cconj{g})_{g \in \Obj(\grp{G})}\), such that, for every \(g \in \Obj(\grp{G})\), the arrow \(\bb_g : g \to \cconj{g}\) gives an isomorphism of the inverses \((g,\coev_g^{-1},\ev_g^{-1})\) and \((\cconj{g},\ev_{\conj{g}},\coev_{\conj{g}})\) of \(\conj{g}\).
	\end{enumerate}
\end{theorem}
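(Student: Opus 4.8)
The plan is to reconstruct the entire rigid structure of the coherent $2$-group $\grp{G}$ out of the bare data $\bigl((g\mapsto\conj g),(\ev_g)_{g}\bigr)$, using throughout that every arrow of $\grp{G}$ is invertible, and reducing all naturality and uniqueness assertions to the uniqueness of inverses up to unique isomorphism recalled above together with the coherence theorem for coherent $2$-groups of Ulbrich~\cite{Ulbrich} and Laplaza~\cite{Laplaza}. The first step is to show that every object $g$ is \emph{two-sided} invertible, i.e.\ that both $\conj g\otimes g\cong 1_{\grp{G}}$ and $g\otimes\conj g\cong 1_{\grp{G}}$; the former is automatic since $\ev_g$ is an isomorphism, and for the latter the isomorphism $\ev_{\conj g}\colon\cconj g\otimes\conj g\to 1_{\grp{G}}$ gives
\[
	g\;\cong\;1_{\grp{G}}\otimes g\;\cong\;(\cconj g\otimes\conj g)\otimes g\;\cong\;\cconj g\otimes(\conj g\otimes g)\;\cong\;\cconj g\otimes 1_{\grp{G}}\;\cong\;\cconj g ,
\]
whence $g\otimes\conj g\cong\cconj g\otimes\conj g\cong 1_{\grp{G}}$. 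Consequently $-\otimes g$ and $-\otimes\conj g$ are mutually quasi-inverse equivalences of $\grp{G}$, so each of $-\otimes g$, $g\otimes-$, $\conj g\otimes-$, and $-\otimes\conj g$ is fully faithful; this is the technical fact powering everything downstream.

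For claim~(2): since $-\otimes g$ is an equivalence, the standard improvement of an equivalence to an adjoint equivalence supplies, after passing to the component at the monoidal unit, a coevaluation $\coev_g\colon 1_{\grp{G}}\to g\otimes\conj g$ for which $(\conj g,\ev_g,\coev_g)$ satisfies both triangle identities, hence is an inverse for $g$. If $(\conj g,\ev_g,\coev_g')$ were another such inverse, uniqueness of inverses yields a unique isomorphism $w\colon\conj g\to\conj g$ making diagrams~\eqref{cd:inverseiso1}--\eqref{cd:inverseiso2} commute; the first forces $w\otimes\id_g=\id$, hence $w=\id_{\conj g}$, hence $\coev_g'=\coev_g$ by the second; naturality of $(\coev_g)_g$ then reduces, via full faithfulness of tensoring, to naturality of $(\ev_g)_g$ together with the triangle identities. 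For claim~(1): with $\coev$ in hand, I would define $\conj{(-)}$ on an arrow $f\colon g\to h$ by the mate construction,
\[
	\conj f\;\coloneqq\;\lambda_{\conj h}\circ(\ev_g\otimes\id_{\conj h})\circ\alpha^{-1}\circ\bigl(\id_{\conj g}\otimes(f^{-1}\otimes\id_{\conj h})\bigr)\circ(\id_{\conj g}\otimes\coev_h)\circ\rho_{\conj g}^{-1} ,
\]
equivalently as the unique arrow with $\ev_h\circ(\conj f\otimes f)=\ev_g$ (uniqueness by full faithfulness of tensoring); functoriality of $\conj{(-)}$, invertibility of $\conj f$, and naturality of $\ev$ with respect to $\conj{(-)}$ are then diagram chases closed by coherence and the triangle identities, and uniqueness of the functorial extension is immediate from this characterisation.

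For claim~(3): rearranging the triangle identities for $(\conj g,\ev_g,\coev_g)$ shows that $(g,\coev_g^{-1},\ev_g^{-1})$ is an inverse for $\conj g$, while $(\cconj g,\ev_{\conj g},\coev_{\conj g})$ is another by claim~(2) applied to $\conj g$; uniqueness of inverses then yields a unique isomorphism $\bb_g\colon g\to\cconj g$ realising the isomorphism of these two inverses, and naturality of $(\bb_g)_g$ follows from naturality of $\ev$ and $\coev$ together with the uniqueness clause. I expect the only genuine obstacle to be the very first step: squeezing two-sided invertibility, and hence the equivalence property of tensoring, out of the single family $(\ev_g)_g$. Everything afterwards is formal, although it is worth stressing that those formal verifications genuinely require the coherence theorem for coherent $2$-groups of Ulbrich and Laplaza---not merely Mac Lane's---since the pertinent diagrams involve $\ev$ and $\coev$, not just the associator and unitors.
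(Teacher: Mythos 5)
The paper itself gives no proof of this theorem---it is quoted from Laplaza---so the only benchmark is the standard argument in the cited literature, and your outline (two-sided invertibility of every object, then correction of a coevaluation while keeping \(\ev_g\), then mates and uniqueness of inverses up to unique isomorphism) is that argument in spirit; your first step and the sketches of parts (1) and (3) are fine.

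There is, however, a genuine gap at the crux of part (2). Having shown that \(-\otimes g\) and \(-\otimes\conj{g}\) are quasi-inverse equivalences of categories, you invoke the improvement of an equivalence to an adjoint equivalence and ``pass to the component at the monoidal unit''. Keep the counit \(\epsilon\) of \((-\otimes g)\dashv(-\otimes\conj{g})\) induced by \(\ev_g\) and let \(\eta'\) be the improved unit; define \(\coev_g\) from \(\eta'_1\). The triangle identity \(\epsilon F\circ F\eta'=\id_F\) evaluated at \(1\) does unwind (via Mac Lane coherence) to the first zigzag for \((\conj{g},\ev_g,\coev_g)\), but the other triangle identity, \(G\epsilon\circ\eta' G=\id_G\), evaluated at \(1\) involves \(\eta'_{1\otimes\conj{g}}\) (equivalently \(\eta'_{\conj{g}}\)), and nothing in the Cat-level improvement forces that component to be tensor-induced by \(\coev_g\): naturality of \(\eta'\) cannot relate \(\eta'_{\conj{g}}\) to \(\eta'_1\), since there need be no morphisms between \(1\) and \(\conj{g}\). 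So the second object-level zigzag does not follow by restricting the functor-level adjoint equivalence to the unit component. What is actually needed---and what Laplaza's \S 4 and Baez--Lauda's ``improvement'' lemma supply---is the object-level statement: with \(\ev_g\) invertible, define \(\coev_g\) as the unique solution of one zigzag (existence and uniqueness do follow from fullness and faithfulness of \(-\otimes g\), which you have), and then prove by a genuine diagram chase that the other zigzag holds automatically; equivalently, apply the bicategorical fact that an equivalence \(1\)-cell satisfying one triangle identity satisfies both, in the delooping of \(\grp{G}\). That lemma is the real content of part (2) and cannot be outsourced to the functor-level improvement in the way you describe.

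A smaller but important point: your closing claim that the verifications ``genuinely require'' the coherence theorem of Ulbrich--Laplaza (Theorem \ref{thm:coherence}) is backwards. The class of structural arrows in that theorem is generated using monoidal inversion on arrows and, in the paper's later applications, the canonical \(\coev\)---that is, using precisely the structure that Theorem \ref{thm:laplaza} constructs---so invoking it here courts circularity and is in any case unnecessary: the required chases close with Mac Lane coherence, naturality, the zigzag identities, and invertibility of all arrows alone.
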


This robust functorial picture of monoidal inversion and evaluation permits a direct statement for general coherent \(2\)-groups of the following result.

\begin{corollary}[{S\'{i}nh~\cite{Sinh}, see~\cite[\S 8.3]{BaezLauda}}]
	Let \(\grp{G}\) be a coherent \(2\)-group.
	Let \(\pi_0(\grp{G})\) be the group of isomorphisms classes in \(\grp{G}\) with group law induced by the monoidal product, and let \(\pi_1(\grp{G})\) be the group of automorphisms of the monoidal unit \(1\) of \(\grp{G}\).
	Then \(\pi_1(\grp{G})\) is Abelian and defines a \(\pi_0(\grp{G})\)-module with respect to the left action \(\act_{\grp{G}}\) given by
	\begin{multline*}
		\forall g \in \Obj(\grp{G}), \, \forall \alpha \in \pi_1(\grp{G}), \\ [g] \act_{\grp{G}} \alpha \coloneqq \coev_g^{-1} \circ (\rho_g \otimes \id_{\conj{g}}) \circ \left((\id_g{} \otimes \alpha) \otimes \id_{\conj{g}}\right) \circ (\rho_g^{-1} \otimes \id_{\conj{g}}) \circ \coev_g.
	\end{multline*}
\end{corollary}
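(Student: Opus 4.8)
The plan is to treat the two claims separately: that \(\pi_1(\grp{G}) = \Aut_{\grp{G}}(1)\) is Abelian will come from the Eckmann--Hilton argument, and that \(\act_{\grp{G}}\) is a well-defined \(\pi_0(\grp{G})\)-module structure will be reduced to naturality calculations in which \(\alpha\), which lives on a copy of the monoidal unit, is transported through unitors and associators. For abelianness, I would equip \(\Aut_{\grp{G}}(1)\) with its composition and with the second operation \(\alpha \bullet \beta \coloneqq \rho_1 \circ (\alpha \otimes \beta) \circ \rho_1^{-1}\); since \(\lambda_1 = \rho_1\) by the standard coherence for the monoidal unit, naturality of \(\rho\) and of \(\lambda\) make \(\id_1\) a two-sided unit for \(\bullet\) as well, while bifunctoriality of \(\otimes\) supplies the interchange law relating \(\bullet\) to \(\circ\). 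The Eckmann--Hilton argument then forces \(\bullet = \circ\) and forces this common operation to be commutative, so composition in \(\Aut_{\grp{G}}(1)\) is commutative.

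For the module structure, I would introduce for each object \(g\) the group homomorphism \(\iota_g \colon \pi_1(\grp{G}) \to \Aut_{\grp{G}}(g)\), \(\iota_g(\alpha) \coloneqq \rho_g \circ (\id_g \otimes \alpha) \circ \rho_g^{-1}\) (a homomorphism by bifunctoriality), and record two properties: it is natural in \(g\), meaning \(u \circ \iota_g(\alpha) = \iota_{g'}(\alpha) \circ u\) for every arrow \(u \colon g \to g'\) (naturality of \(\rho\)); and it is compatible with the monoidal product, \(\iota_{g \otimes h}(\alpha) = \id_g \otimes \iota_h(\alpha)\), the latter following from the standard identity \(\rho_{g \otimes h} = (\id_g \otimes \rho_h) \circ \alpha_{g,h,1}\) together with naturality of the associator. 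Bifunctoriality rewrites the defining formula as
\[
	[g] \act_{\grp{G}} \alpha = \coev_g^{-1} \circ \bigl(\iota_g(\alpha) \otimes \id_{\conj{g}}\bigr) \circ \coev_g,
\]
exhibiting \([g] \act_{\grp{G}} (-)\) as a composite of the group homomorphisms \(\iota_g\), \((-) \otimes \id_{\conj{g}}\), and conjugation by the isomorphism \(\coev_g\); hence it is a group endomorphism of \(\pi_1(\grp{G})\). Using that monoidal inversion is a functor and that \(\coev\) is a natural isomorphism (Theorem~\ref{thm:laplaza}(1)--(2)), naturality of \(\coev\) and of \(\iota\) together with bifunctoriality show that replacing \(g\) by an isomorphic \(g'\) introduces matching factors \(u\) and \(\conj{u}\) that cancel---\(\alpha\) sits on the untouched unit strand---so \([g] \act_{\grp{G}} (-)\) depends only on \([g] \in \pi_0(\grp{G})\).

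It remains to verify the module axioms. For \([1] \act_{\grp{G}} \alpha = \alpha\), I would use that both \((\conj{1}, \ev_1, \coev_1)\) (Theorem~\ref{thm:laplaza}(2)) and \((1, \lambda_1, \lambda_1^{-1})\) (standard) are inverses for the monoidal unit, so uniqueness of inverses up to unique isomorphism yields an isomorphism \(\conj{1} \to 1\) through which, using \(\lambda_1 = \rho_1\) and naturality of \(\rho\), the formula collapses to \(\rho_1 \circ (\alpha \otimes \id_1) \circ \rho_1^{-1} = \alpha\). For multiplicativity, \([g \otimes h] \act_{\grp{G}} \alpha = [g] \act_{\grp{G}} \bigl([h] \act_{\grp{G}} \alpha\bigr)\), the crucial preliminary observation is that the formula for \([x] \act_{\grp{G}} \alpha\) is unchanged if \(\conj{x}\) and \(\coev_x\) are replaced by any other inverse \((k, \sce, \sci)\) for \(x\) and its isomorphism \(\sci\), not merely by an isomorphic object: the unique comparison isomorphism \(\theta \colon \conj{x} \to k\) furnished by \eqref{cd:inverseiso2} satisfies \(\sci = (\id_x \otimes \theta) \circ \coev_x\), and substituting this into \(\sci^{-1} \circ (\iota_x(\alpha) \otimes \id_k) \circ \sci\) collapses it, by bifunctoriality, to \(\coev_x^{-1} \circ (\iota_x(\alpha) \otimes \id_{\conj{x}}) \circ \coev_x\). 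I would then take \(x = g \otimes h\) and feed in the standard tensor-of-duals inverse \((\conj{h} \otimes \conj{g}, \widetilde{\ev}, \widetilde{\coev})\), where \(\widetilde{\coev} \colon 1 \to (g \otimes h) \otimes (\conj{h} \otimes \conj{g})\) is the evident composite of \(\coev_g\), \(\coev_h\), a unitor, and associators; using \(\iota_{g \otimes h}(\alpha) = \id_g \otimes \iota_h(\alpha)\) and pushing the single-strand morphism \(\iota_h(\alpha) \colon h \to h\) through the associators and unitor of \(\widetilde{\coev}\) by naturality alone, the chase telescopes, the inner \(\coev_h\)-conjugation yielding \(\id_g \otimes \bigl([h] \act_{\grp{G}} \alpha\bigr)\) and, after the unitor is absorbed into \(\iota_g\) via \(\rho_g \circ (\id_g \otimes \beta) = \iota_g(\beta) \circ \rho_g\), the outer \(\coev_g\)-conjugation yielding \([g] \act_{\grp{G}} \bigl([h] \act_{\grp{G}} \alpha\bigr)\). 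Since \([\conj{g}]\) is inverse to \([g]\) in \(\pi_0(\grp{G})\), it follows automatically that each \([g] \act_{\grp{G}} (-)\) is an automorphism, so \(\act_{\grp{G}}\) is a genuine module structure.

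The step I expect to be the main obstacle is the bookkeeping in this last chase: one must be confident that every associator and unitor occurring in \(\widetilde{\coev}\) involves only the four strands \(g\), \(h\), \(\conj{h}\), \(\conj{g}\), so that each commutes past \(\iota_h(\alpha) \colon h \to h\) purely by naturality and no further appeal to the pentagon or triangle axioms is needed beyond what is already packaged into the two external facts that \((\conj{h} \otimes \conj{g}, \widetilde{\ev}, \widetilde{\coev})\) is an inverse for \(g \otimes h\) and that \(\rho_{g \otimes h} = (\id_g \otimes \rho_h) \circ \alpha_{g,h,1}\). Once the independence-of-inverse reduction is in place, everything else is naturality and bifunctoriality.
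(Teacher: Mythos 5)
Your proposal is correct. Note that the paper itself offers no proof of this corollary: it is quoted from S\'{i}nh via Baez--Lauda, so there is nothing internal to compare against line by line. Your route is the standard one and it goes through: Eckmann--Hilton (with \(\bullet\) conjugated by \(\rho_1 = \lambda_1\), so the interchange law transfers verbatim) gives commutativity of \(\pi_1(\grp{G})\); rewriting the action as \(\coev_g^{-1} \circ (\iota_g(\alpha) \otimes \id_{\conj{g}}) \circ \coev_g\) makes each \([g] \act_{\grp{G}} (-)\) a composite of group homomorphisms; independence of the representative follows from naturality of \(\coev\) (Theorem~\ref{thm:laplaza}) together with bifunctoriality; and your ``independence of the chosen inverse'' lemma, proved from \eqref{cd:inverseiso2}, is exactly the right pivot for multiplicativity, since feeding in the tensor-of-duals inverse \((\conj{h} \otimes \conj{g}, \widetilde{\ev}_{g\otimes h}, \widetilde{\coev}_{g \otimes h})\) and pushing \(\iota_h(\alpha)\) through by naturality alone telescopes to \([g] \act_{\grp{G}} ([h] \act_{\grp{G}} \alpha)\), as you say. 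The two external coherence inputs you flag---that \(\widetilde{\coev}_{g\otimes h}\) really does exhibit \(\conj{h} \otimes \conj{g}\) as an inverse of \(g \otimes h\), and that \(\rho_{g \otimes h} = (\id_g{} \otimes \rho_h) \circ \alpha_{g,h,1}\)---are genuine uses of the pentagon/triangle axioms, but they are standard and are exactly the facts the paper itself invokes without comment in Corollary~\ref{cor:laplaza} and the remark following it, so leaving them as cited facts is consistent with the paper's own level of rigour. The only cosmetic slip is in the unit axiom, where the formula collapses to \(\lambda_1 \circ (\iota_1(\alpha) \otimes \id_1) \circ \lambda_1^{-1}\) rather than literally to \(\rho_1 \circ (\alpha \otimes \id_1) \circ \rho_1^{-1}\); both equal \(\alpha\) by naturality of the unitors and \(\lambda_1 = \rho_1\), so nothing breaks.
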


For example, a group \(\Gamma\) \emph{qua} coherent \(2\)-group satisfies \(\pi_0(\Gamma) = \Gamma\) and \(\pi_1(\Gamma) = 1\).
More generally, given a group \(\Gamma\), a \(\Gamma\)-module \(M\), and a normalised cocycle \(\omega \in Z^3(\Gamma,M)\), it follows that \(\pi_0\mleft(\twogrp(\Gamma,M,\omega)\mright) = \Gamma\) and \(\pi_1\mleft(\twogrp(\Gamma,M,\omega)\mright) = M \times \Set{1_\Gamma} \cong M\), where the \(\pi_0\mleft(\twogrp(\Gamma,M,\omega)\mright)\)-module structure on \(\pi_1\mleft(\twogrp(\Gamma,M,\omega)\mright)\) reduces to the given \(\Gamma\)-module structure on \(M\).

We now generalise group homomorphisms to  coherent \(2\)-groups.
Let \(\grp{G}\) and \(\grp{G}^\prime\) be monoidal categories.
A \emph{(weak) monoidal functor} \(F: \grp{G} \to \grp{G}^\prime\) is a functor \(F: \grp{G} \to \grp{G}^\prime\) equipped with an an isomorphism \(F^{(0)} : F(1) \to 1\) and a natural isomorphism \(\left(F^{(2)}_{g, h} : F(g \otimes h) \to F(g) \otimes F(h)\right)_{(g,h) \in \Obj(\grp{G})^2}\) 
satisfying certain coherence diagrams~\cite[pp.\ 429--430]{BaezLauda}.
Given monoidal functors \(P,Q : \grp{G} \to \grp{G}^\prime\), a natural transformation \(\phi : P \Rightarrow Q\) is \emph{monoidal} whenever \(P^{(0)} = Q^{(0)} \circ \phi_1\) and
\(
	\phi_{g \otimes h} \circ P^{(2)}_{g,h} = Q^{(2)}_{g,h} \circ (\phi_g \otimes \phi_h)
\)
for all \(g,h \in \Obj(\grp{G})\).

\begin{definition}[{see~\cite[\S 3]{BaezLauda}}]
	Let \(\grp{G}\) and \(\grp{G}^\prime\) be coherent \(2\)-groups.
	We denote by \(\grp{Hom}(\grp{G},\grp{G}^\prime)\) the essentially small functor category whose objects are monoidal functors \(F : \grp{G} \to \grp{G}^\prime\) and whose arrows are monoidal natural transformations.
	Thus, a \emph{homomorphism} from \(\grp{G}\) to \(\grp{G}^\prime\) is an object of \(\grp{Hom}(\grp{G},\grp{G}^\prime)\), while a \emph{\(2\)-isomorphism} between homomorphisms \(R,S : \grp{G} \to \grp{G}^\prime\) is an arrow \(\eta : R \Rightarrow S\) in \(\grp{Hom}(\grp{G},\grp{G}^\prime)\).
	Moreover, given a homomorphism \(F : \grp{G} \to \grp{G}^\prime\), let \(\pi_0(F) : \pi_0(\grp{G}) \to \pi_0(\grp{G}^\prime)\) and \(\pi_1(F) : \pi_1(\grp{G}) \to \pi_1(\grp{G}^\prime)\) be the respective group homomorphisms induced by \(F\).
\end{definition}

For example, let \(\Gamma_1\) and \(\Gamma_2\) be groups.
A homomorphism of coherent \(2\)-groups \(f : \Gamma_1 \to \Gamma_2\) is simply a group homomorphism  with \(f^{(0)}\) and \(f^{(2)}\) given by identity arrows, so that \(\pi_0(f) = f\) and \(\pi_1(f) = \id_1\).
Moreover, all \(2\)-homomorphisms in \(\grp{Hom}(\Gamma_1,\Gamma_2)\) are simply identity natural isomorphisms.

It turns out that a composition of homomorphisms of coherent \(2\)-groups is again a homomorphism of coherent \(2\)-groups, making the assignments \(\pi_0\) and \(\pi_1\) functorial in the sense of mapping compositions to compositions.
More generally, let \(\grp{G}_1\), \(\grp{G}_2\), and \(\grp{G}_3\) be monoidal categories, and let \(P : \grp{G}_1 \to \grp{G}_2\) and \(Q : \grp{G}_2 \to \grp{G}_3\) be monoidal functors.
Then \(Q \circ P : \grp{G}_1 \to \grp{G}_3\) defines a monoidal functor with respect to \((Q \circ P)^{(0)} \coloneqq Q^{(0)} \circ Q(P^{(0)})\) and \((Q \circ P)^{(2)} \coloneqq \left(Q^{(2)}_{P(g),P(h)} \circ Q(P^{(2)}_{g,h})\right)_{(g,h) \in \Obj(\grp{G}_1)^2}\).

We conclude by using the cohomological classification of coherent \(2\)-groups and their homomorphisms to show that \(\bZ\) is the free coherent \(2\)-group on one generator.
Recall that a \emph{monoidal equivalence} of monoidal categories \(\grp{G}_1\) and \(\grp{G}_2\) is a monoidal functor \(P : \grp{G}_1 \to \grp{G}_2\) for which there exist a monoidal functor \(Q : \grp{G}_2 \to \grp{G}_1\) and monoidal natural isomorphisms \(P \circ Q \Rightarrow \id_{\grp{G}_2}\) and \(Q \circ P \Rightarrow \id_{\grp{G}_1}\).
Coherent \(2\)-groups admit the following classification up to monoidal equivalence.

\begin{theorem}[{S\'{i}nh~\cite{Sinh}, see~\cite[\S 8.3]{BaezLauda}}]\label{thm:sinh}
	Let \(\grp{G}\) be a coherent \(2\)-group.
	There exists a normalised cocycle \(\omega \in Z^3(\pi_0(\grp{G}),\act_{\grp{G}},\pi_1(\grp{G}))\), unique up to cohomology, such that \(\grp{G}\) is monoidally equivalent to \(\twogrp\mleft(\pi_0(\grp{G}),\pi_1(\grp{G}),\omega\mright)\).
	Hence, \(\grp{G}\) is determined up to monoidal equivalence by \((\pi_0(\grp{G}),\pi_1(\grp{G}),\act_{\grp{G}},[\omega])\), where \([\omega] \in H^3(\pi_0(\grp{G}),\pi_1(\grp{G}))\) is the cohomology class of \(\omega\).
\end{theorem}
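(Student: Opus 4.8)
The plan is to prove S\'{i}nh's theorem in the streamlined form of \cite[\S 8.3]{BaezLauda}: pass to a \emph{skeletal} model of \(\grp{G}\) and read the invariants \((\pi_0(\grp{G}),\pi_1(\grp{G}),\act_{\grp{G}},[\omega])\) directly off its monoidal structure. Using the axiom of choice, select one object in each isomorphism class of \(\grp{G}\) together with isomorphisms onto the chosen representatives, and transport the monoidal product, unit, unitors, associator, monoidal inversion, and evaluation along the resulting equivalence of underlying categories. Since every structure map and every coherence condition defining a coherent \(2\)-group is equational, this yields a \emph{skeletal} coherent \(2\)-group \(\grp{G}_0\) --- one in which isomorphic objects coincide --- together with a monoidal equivalence \(\grp{G}_0 \simeq \grp{G}\); in particular \(\pi_0(\grp{G}_0) = \pi_0(\grp{G})\), \(\pi_1(\grp{G}_0) = \pi_1(\grp{G})\), and the two actions agree, so it suffices to analyse \(\grp{G}_0\). (Moreover, by uniqueness of inverses and Theorem~\ref{thm:laplaza}, monoidal inversion on \(\grp{G}_0\) is forced to be group-theoretic inversion, so it need not be tracked separately.)

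Write \(G \coloneqq \pi_0(\grp{G})\) and \(A \coloneqq \pi_1(\grp{G})\). The objects of \(\grp{G}_0\) are the elements of \(G\), and since every object of a coherent \(2\)-group is invertible, the monoidal product of objects is a group law --- namely that of \(G\). Left and right translation by an object \(g\) are autoequivalences of \(\grp{G}_0\), hence induce isomorphisms \(\Aut_{\grp{G}_0}(g) \cong \Aut_{\grp{G}_0}(1) = A\); fixing these identifications, the arrow set of \(\grp{G}_0\), together with its composition and monoidal product, becomes exactly that of \(\twogrp(G,A,-)\). The associator then consists of automorphisms \(\alpha_{g,h,k} \in \Aut_{\grp{G}_0}\mleft((gh)k\mright) = \Aut_{\grp{G}_0}\mleft(g(hk)\mright)\), so it defines a function \(\omega \colon G^3 \to A\); translating the left-whiskering \(\id_g \otimes \alpha_{h,k,l}\) through our identifications into the action \(\act_{\grp{G}}\) of \(g\) on \(A\), one checks that the pentagon axiom is literally the \(3\)-cocycle identity, so that \(\omega \in Z^3(G,\act_{\grp{G}},A)\). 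The triangle axiom links \(\omega\) to the unitors \(\lambda\) and \(\rho\); after renormalising \(\lambda\) and \(\rho\) to identities --- which does not change the monoidal equivalence class --- it forces \(\omega\) to be normalised and exhibits a strict monoidal isomorphism \(\grp{G}_0 \cong \twogrp(G,A,\omega)\), proving the existence statement.

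For uniqueness up to cohomology, let \(F \colon \twogrp(G,A,\omega) \to \twogrp(G,A,\omega')\) be a monoidal functor inducing the identity on \(\pi_0\) and \(\pi_1\); this is the relevant situation, since any two skeletal models arising as above, once their \(\pi_0\)'s and \(\pi_1\)'s are identified with those of \(\grp{G}\), are related by such a functor. Then \(F\) is the identity on objects and arrows, and its structure isomorphism \(F^{(2)}\) amounts to a function \(c \colon G^2 \to A\); the coherence square relating \(F^{(2)}\) to the two associators collapses to \(\omega' = \omega \cdot \delta c\), whence \([\omega] = [\omega']\) in \(H^3(G,A)\). Conversely, running this computation backwards shows that cohomologous cocycles yield monoidally equivalent \(2\)-groups. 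The concluding ``hence'' is then immediate: \(\grp{G} \simeq \twogrp\mleft(\pi_0(\grp{G}),\pi_1(\grp{G}),\omega\mright)\), and the right-hand side depends only on the quadruple \((\pi_0(\grp{G}),\pi_1(\grp{G}),\act_{\grp{G}},[\omega])\), each entry of which is manifestly a monoidal invariant of \(\grp{G}\).

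I expect the main obstacle to be the bookkeeping of the middle step: arranging the canonical isomorphisms \(\Aut_{\grp{G}_0}(g) \cong A\) so that the induced group structure on arrows is genuinely that of \(\twogrp(G,A,-)\) with the \emph{correct} action \(\act_{\grp{G}}\), and then verifying --- without direction or normalisation errors --- that the pentagon and triangle axioms become exactly the normalised \(3\)-cocycle condition. The skeletalization and the cohomological uniqueness step, by comparison, are routine transport-of-structure and diagram-chasing arguments.
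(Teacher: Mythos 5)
Your argument is correct and is essentially the standard skeletalization proof of S\'{i}nh's theorem that the paper itself does not reprove but simply cites from S\'{i}nh and Baez--Lauda~\cite[\S 8.3]{BaezLauda}: pass to a skeletal model, read off the group law on objects, identify each \(\Aut(g)\) with \(\pi_1(\grp{G})\), extract \(\omega\) from the associator via the pentagon (with the triangle and a strict-unit normalisation giving normalisation of \(\omega\)), and obtain uniqueness of \([\omega]\) from the coherence square for \(F^{(2)}\) of an equivalence inducing the identity on \(\pi_0\) and \(\pi_1\). The only points needing the care you already flag are the consistent choice of translation isomorphisms \(\Aut(g)\cong\pi_1(\grp{G})\) so that the \(\pi_0\)-action comes out as \(\act_{\grp{G}}\), and the (standard) fact that the unit can be strictified without changing the monoidal equivalence class.
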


Thus, the \emph{S\'{i}nh invariant} of a coherent \(2\)-group \(\grp{G}\) is the complete monoidal equivalence invariant \((\pi_0(\grp{G}),\pi_1(\grp{G}),\act_{\grp{G}},[\omega])\) constructed by S\'{i}nh's theorem.
For example, the S\'{i}nh invariant of a group \(\Gamma\) is \((\Gamma,1,\Gamma \times 1 \to 1,1)\).
Given the additional data of a \(\Gamma\)-module \(M\) and a normalised cocycle \(\omega \in Z^3(\Gamma,M)\), the S\'{i}nh invariant of \(\twogrp(\Gamma,M,\omega)\) reduces to \((\Gamma,M,\act,[\omega])\), where \(\act\) is the given \(\Gamma\)-action on \(M\).

Homomorphisms of coherent \(2\)-groups now also admit a cohomological classification.
For simplicity, we give the relevant special case.

\begin{theorem}[{Joyal--Street~\cite[\S 6]{JS}; see~\cite[\S 8.3]{BaezLauda} and~\cite[\S 5.3]{Jacqmin}}]\label{thm:JS}
Let \(G\) and \(\Gamma\) be groups, let \(M\) be a \(\Gamma\)-module (written multiplicatively), and let \(\omega \in Z^3(\Gamma,M)\) be a normalised cocycle.
	Define a category \(\mathcal{H}(G;\Gamma,M,\omega)\) as follows.
	\begin{enumerate}[leftmargin=*]
		\item An object is a pair \((\alpha,\kappa)\), where \(\alpha : G \to \Gamma\) is a group homomorphism and \(\kappa \in B^2(G,M)\) is a normalised \(2\)-cochain with respect to \(\alpha\) with \(\du\kappa = (\alpha^\ast\omega)^{-1}\).
		\item Suppose that \((\alpha_1,\kappa_1)\) and \((\alpha_2,\kappa_2)\) are objects.
			If \(\alpha_1 = \alpha_2\), then an arrow \(\varpi : (\alpha_1,\kappa_1) \to (\alpha_2,\kappa_2)\) is a normalised \(1\)-cochain \(\varpi \in B^1(G,M)\), such that \(\du\mu = \kappa_1 \cdot \kappa_2^{-1}\); else, there are no arrows from \((\alpha_1,\kappa_1)\) to \((\alpha_2,\kappa_2)\).
		\item Composition of composable arrows is given by pointwise multiplication of normalised \(1\)-cochains.
		\item The identity of an object \((\alpha,\kappa)\) is the trivial \(1\)-cochain \(1 : \Gamma \to \pi_1(\grp{G})\).
	\end{enumerate}
	Define a functor \(\Theta : \mathcal{H}(G;\Gamma,M,\omega) \to \grp{Hom}(G,\twogrp(\Gamma,M,\omega))\) as follows.
	\begin{enumerate}[leftmargin=*]
		\item Given an object \((\alpha,\kappa)\), define \(\Theta(\alpha,\kappa) : G \to \twogrp(\Gamma,M,\omega)\) by
			\begin{gather*}
				\forall g \in G, \quad \Theta(\alpha,\kappa)(g) \coloneqq \alpha(g); \quad \Theta(\alpha,\kappa)^{(0)} \coloneqq (1,1);\\
				\forall g,h \in G, \quad \Theta(\alpha,\kappa)^{(2)}_{g,h} \coloneqq \mleft(\kappa(g,h),g h\mright).
			\end{gather*}
		\item Given an arrow \(\mu : (\alpha,\kappa_1) \to (\alpha,\kappa_2)\), define \(\Theta(\mu) : \Theta(\alpha,\kappa_1) \Rightarrow \Theta(\alpha,\kappa_2)\) by setting \(\Theta(\mu)_g \coloneqq (\mu(g),\alpha(g))\) for all \(g \in G\).
	\end{enumerate}
	Then \(\Theta\) is an equivalence of categories.
\end{theorem}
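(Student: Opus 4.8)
The plan is to prove that \(\Theta\) is well defined, fully faithful, and essentially surjective, hence an equivalence of categories. Everything hinges on one preliminary observation: since \(G\) is a plain group \emph{qua} coherent \(2\)-group, hence a discrete category, a monoidal functor \(F : G \to \twogrp(\Gamma,M,\omega)\) is completely determined by its object map \(\alpha := \rest{F}{\Obj(G)} : G \to \Gamma\) together with its structure isomorphisms, which---all automorphism groups of \(\twogrp(\Gamma,M,\omega)\) being of the form \(M \times \Gamma\)---amount to an element \(m_0 \in M\) with \(F^{(0)} = (m_0,1_\Gamma)\) and a (not necessarily normalised) \(2\)-cochain \(c : G \times G \to M\) with \(F^{(2)}_{g,h} = (c(g,h),\alpha(g)\alpha(h))\). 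Because \(\twogrp(\Gamma,M,\omega)\) has no morphisms between distinct objects, the bare existence of \(F^{(0)}\) and \(F^{(2)}\) already forces \(\alpha\) to be a group homomorphism. I would then translate the monoidal-functor coherence diagrams of~\cite{BaezLauda}: using triviality of the unitors of \(G\) and \(\twogrp(\Gamma,M,\omega)\) and commutativity of \(M\), the associativity hexagon becomes the cochain identity \(\du c = (\alpha^\ast\omega)^{-1}\), while the two unit triangles pin \(m_0\) down as a fixed power of \(c(1_G,1_G)\)---so that, in particular, \(F^{(0)} = (1_M,1_\Gamma)\) holds exactly when \(c\) is normalised. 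The same unwinding shows that a natural transformation \(\phi : F \Rightarrow F'\) between two such functors sharing the object map \(\alpha\) is given by \(\phi_g = (\mu(g),\alpha(g))\) for a \(1\)-cochain \(\mu\), and is monoidal precisely when \(\mu(1_G)\) absorbs the discrepancy between \(F^{(0)}\) and \({F'}^{(0)}\) and \(c \cdot (c')^{-1}\) equals the coboundary of \(\mu\).

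Granting this dictionary, well-definedness and functoriality of \(\Theta\) are direct: for an object \((\alpha,\kappa)\), the hypothesis \(\du\kappa = (\alpha^\ast\omega)^{-1}\) yields the associativity coherence of \(\Theta(\alpha,\kappa)\), while normalisation of \(\kappa\) together with \(\alpha(1_G) = 1_\Gamma\) yields its unit coherences with \(\Theta(\alpha,\kappa)^{(0)} = (1,1)\); for an arrow \(\mu : (\alpha,\kappa_1) \to (\alpha,\kappa_2)\), normalisation of \(\mu\) and \(\du\mu = \kappa_1 \cdot \kappa_2^{-1}\) are precisely the monoidality conditions on \(\Theta(\mu)\); and compatibility of \(\Theta\) with the pointwise-product composition and with identities is immediate.

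Fullness and faithfulness I would establish together. Since \(\twogrp(\Gamma,M,\omega)\) has no morphisms between distinct objects, any monoidal natural transformation \(\Theta(\alpha_1,\kappa_1) \Rightarrow \Theta(\alpha_2,\kappa_2)\) forces \(\alpha_1 = \alpha_2\); thus both hom-sets are empty unless \(\alpha_1 = \alpha_2 =: \alpha\). When \(\alpha_1 = \alpha_2 = \alpha\), the dictionary identifies monoidal natural transformations \(\Theta(\alpha,\kappa_1) \Rightarrow \Theta(\alpha,\kappa_2)\) with \(1\)-cochains \(\mu\) having trivial unit discrepancy---i.e.\ normalised \(\mu\)---and satisfying \(\du\mu = \kappa_1 \cdot \kappa_2^{-1}\), which are exactly the arrows \((\alpha,\kappa_1) \to (\alpha,\kappa_2)\) of \(\mathcal{H}(G;\Gamma,M,\omega)\); and under this identification \(\Theta\) acts as the identity on the underlying \(1\)-cochain. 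Hence \(\Theta\) is bijective on all hom-sets.

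It remains to prove essential surjectivity, which I expect to be the main obstacle. Given any monoidal functor \(F\), extract \(\alpha\), \(m_0\), and \(c\) as above, so that \(\alpha\) is a homomorphism and \(\du c = (\alpha^\ast\omega)^{-1}\). As \(\alpha^\ast\omega\) is a normalised \(3\)-cocycle---being the pullback of the normalised cocycle \(\omega\) along the homomorphism \(\alpha\)---the cochain \(c\) can be normalised as in the standard theory of group cohomology: there is a \(1\)-cochain \(\mu\) supported at the identity \(1_G\) such that modifying \(c\) by the coboundary of \(\mu\) yields a normalised \(2\)-cochain \(\kappa\) with \(\du\kappa = \du c = (\alpha^\ast\omega)^{-1}\), whence \((\alpha,\kappa)\) is an object of \(\mathcal{H}(G;\Gamma,M,\omega)\). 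The crux is that this same \(\mu\) must simultaneously realise a monoidal natural isomorphism \(F \cong \Theta(\alpha,\kappa)\): by the unit-coherence relations that pin \(m_0\) to \(c(1_G,1_G)\), the value \(\mu(1_G)\) that normalises \(c\) is exactly the value that absorbs the discrepancy between \(F^{(0)}\) and \(\Theta(\alpha,\kappa)^{(0)} = (1,1)\), so \(\phi_g := (\mu(g),\alpha(g))\) does satisfy both monoidality conditions. Carrying out this last piece of bookkeeping---for which I would lean on Joyal--Street~\cite{JS} and the expositions in~\cite{BaezLauda} and~\cite{Jacqmin}---completes the proof that \(\Theta\) is essentially surjective, and therefore an equivalence of categories.
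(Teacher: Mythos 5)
The paper itself offers no proof of Theorem \ref{thm:JS}: it is imported from Joyal--Street (see also Baez--Lauda \S 8.3 and Jacqmin \S 5.3) and used as a black box in the proof of Corollary \ref{cor:abstractpimsner}, so there is no in-paper argument to compare yours against. Judged on its own terms, your proposal is the standard unwinding argument and is essentially correct. The dictionary is right: since \(G\) is discrete and \(\twogrp(\Gamma,M,\omega)\) has no arrows between distinct objects, a monoidal functor \(F\) amounts to a homomorphism \(\alpha\), an element \(m_0 \in M\) with \(F^{(0)} = (m_0,1_\Gamma)\), and a \(2\)-cochain \(c\) with \(F^{(2)}_{g,h} = (c(g,h),\alpha(gh))\); the associator coherence is the cochain identity relating \(\du c\) to \(\alpha^\ast\omega\), naturality of transformations is vacuous, and your identification of monoidal natural transformations with \(1\)-cochains satisfying the unit and coboundary conditions gives fullness and faithfulness exactly as you say.

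Two points deserve attention, though neither is a fatal gap. First, the unit coherences give not only \(m_0 = c(1_G,1_G)^{-1}\) but the stronger relations \(c(1_G,g) = m_0^{-1}\) and \(c(g,1_G) = (\alpha(g)\act m_0)^{-1}\) for \emph{all} \(g\); you only record the \((1_G,1_G)\) consequence, yet it is the stronger relations you actually need in the essential-surjectivity step, since twisting \(c\) by the coboundary of a \(1\)-cochain \(\mu\) supported at \(1_G\) must kill every value \(c(1_G,g)\) and \(c(g,1_G)\), not just \(c(1_G,1_G)\). Granting that, the crux you flag does close: with \(\mu(1_G) = c(1_G,1_G)^{-1}\) one has \((\du\mu)(1_G,g) = \mu(1_G)\) and \((\du\mu)(g,1_G) = \alpha(g)\act\mu(1_G)\), so the twisted cochain is normalised, has the same coboundary as \(c\), and the same \(\mu\) simultaneously satisfies both monoidality conditions for a monoidal natural isomorphism with \(\Theta(\alpha,\kappa)\). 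Second, all the remaining content lives in the sign and direction bookkeeping you defer: which of \(\kappa = c\cdot\du\mu\) or \(c\cdot(\du\mu)^{-1}\) appears, and whether the arrow condition reads \(\du\mu = \kappa_1\kappa_2^{-1}\) or its inverse, depend on the orientations chosen for \(F^{(0)}\), \(F^{(2)}\) and on the coboundary convention; note in this regard that the monoidality condition as displayed in the paper, \(\phi_{g\otimes h}\circ P^{(2)}_{g,h} = Q^{(2)}_{g,h}\circ(\phi_g\otimes\phi_h)\), does not typecheck with the stated direction \(F^{(2)}_{g,h}\colon F(g\otimes h)\to F(g)\otimes F(h)\), so you must fix one consistent set of conventions before carrying out the bookkeeping --- with any such choice your argument goes through.
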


Finally, given coherent \(2\)-groups \(\grp{G}\) and \(\grp{G}^\prime\) and \(g \in \Obj(\grp{G}\), define the evaluation functor \(\epsilon_g : \grp{Hom}(\grp{G},\grp{G}^\prime) \to \grp{G}^\prime\) by
\[
	\forall P \in \Obj(\grp{Hom}(\grp{G},\grp{G}^\prime)), \,\,\, \epsilon_g(P) \coloneqq P(g); \quad \forall \eta \in \operatorname{Hom}(\grp{Hom}(\grp{G},\grp{G}^\prime)), \,\,\, \epsilon_g(\eta) \coloneqq \eta_g.
\] 
We now show that \(\bZ\) is indeed the free coherent \(2\)-group on one generator.

\begin{corollary}\label{cor:abstractpimsner}
	Let \(\grp{G}\) be a coherent \(2\)-group.
	Then \(\epsilon_1 : \grp{Hom}(\bZ,\grp{G}) \to \grp{G}\) is an equivalence of categories.
	Hence, for every object \(g\) of \(\grp{G}\), there exists an essentially unique homomorphism \(F : \bZ \to \grp{G}\) that satisfies \(F(1) \cong g\).
\end{corollary}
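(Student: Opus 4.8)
The plan is to deduce everything from the cohomological classification results already in hand---Sính's Theorem \ref{thm:sinh} and Joyal--Street's Theorem \ref{thm:JS}---together with the single elementary input that \(\bZ\) is free, so that \(H^n(\bZ,M) = 0\) for all \(n \geq 2\) and every \(\bZ\)-module \(M\) (because \(\bZ\) admits a free resolution of length one). The concluding ``Hence'' clause is then purely formal: once \(\epsilon_1 \colon \grp{Hom}(\bZ,\grp{G}) \to \grp{G}\) is known to be an equivalence of categories, its essential surjectivity produces a homomorphism \(F \colon \bZ \to \grp{G}\) with \(F(1) = \epsilon_1(F) \cong g\), while composing with a chosen pseudo-inverse of \(\epsilon_1\) shows that any two such \(F\) are isomorphic in \(\grp{Hom}(\bZ,\grp{G})\), i.e.\ \(2\)-isomorphic.

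First I would reduce to the case \(\grp{G} = \twogrp(\Gamma,M,\omega)\). By Theorem \ref{thm:sinh}, fix a monoidal equivalence \(\Phi \colon \grp{G} \to \twogrp(\Gamma,M,\omega)\), where \((\Gamma,M,\act,[\omega])\) is the Sính invariant of \(\grp{G}\) and \(\omega\) a chosen representative cocycle. Post-composition with \(\Phi\) defines a functor \(\Phi_\ast \colon \grp{Hom}(\bZ,\grp{G}) \to \grp{Hom}(\bZ,\twogrp(\Gamma,M,\omega))\)---it lands in \(\grp{Hom}\) because composites of homomorphisms of coherent \(2\)-groups are again homomorphisms and whiskering a monoidal natural transformation by a monoidal functor yields a monoidal natural transformation---and whiskering a monoidal pseudo-inverse of \(\Phi\) produces a pseudo-inverse of \(\Phi_\ast\), so \(\Phi_\ast\) is an equivalence. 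Since \(\epsilon_1 \circ \Phi_\ast = \Phi \circ \epsilon_1\) on the nose (both send \(P\) to \(\Phi(P(1))\) and \(\eta\) to \(\Phi(\eta_1)\)) and \(\Phi\), \(\Phi_\ast\) are equivalences, the functor \(\epsilon_1\) for \(\grp{G}\) is an equivalence if and only if it is for \(\twogrp(\Gamma,M,\omega)\).

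So assume \(\grp{G} = \twogrp(\Gamma,M,\omega)\) and invoke Theorem \ref{thm:JS} with \(G = \bZ\): the functor \(\Theta \colon \mathcal{H}(\bZ;\Gamma,M,\omega) \to \grp{Hom}(\bZ,\twogrp(\Gamma,M,\omega))\) is an equivalence, so it suffices to prove that \(\epsilon_1 \circ \Theta\) is an equivalence. By the explicit formulas of Theorem \ref{thm:JS}, this composite sends an object \((\alpha,\kappa)\) to \(\alpha(1) \in \Gamma\) and an arrow \(\mu \colon (\alpha,\kappa_1) \to (\alpha,\kappa_2)\) to the automorphism \((\mu(1),\alpha(1))\) of \(\alpha(1)\). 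I would then check essential surjectivity, faithfulness, and fullness directly. A group homomorphism \(\alpha \colon \bZ \to \Gamma\) is \(n \mapsto \alpha(1)^n\), hence determined by the arbitrary element \(\alpha(1)\); and since \(H^3(\bZ,M) = 0\) the cocycle \((\alpha^\ast\omega)^{-1}\) is a coboundary, so a normalising \(2\)-cochain \(\kappa\) with \(\du\kappa = (\alpha^\ast\omega)^{-1}\) exists, giving an object over each \(\gamma \in \Gamma\)---and distinct objects of \(\twogrp(\Gamma,M,\omega)\) admit no arrows, so essential surjectivity is here surjectivity on objects. For faithfulness and fullness, note that an arrow in either category forces \(\alpha_1 = \alpha_2 =: \alpha\); then the arrows \((\alpha,\kappa_1) \to (\alpha,\kappa_2)\) are the normalised \(1\)-cochains \(\mu\) with \(\du\mu = \kappa_1\kappa_2^{-1}\), and \(\epsilon_1\Theta\) acts by \(\mu \mapsto \mu(1) \in M\). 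Faithfulness holds because two such \(\mu,\mu'\) with \(\mu(1) = \mu'(1)\) differ by a normalised \(1\)-cocycle vanishing at the generator, hence by the trivial cochain; fullness holds because \(\kappa_1\kappa_2^{-1}\) is a \(2\)-cocycle (as \(\du\kappa_1 = \du\kappa_2 = (\alpha^\ast\omega)^{-1}\)), hence a coboundary \(\du\mu_0\) since \(H^2(\bZ,M) = 0\), and adjusting \(\mu_0\) by the unique \(1\)-cocycle with prescribed value at the generator realises every \((m,\alpha(1)) \in \Aut(\alpha(1))\). Thus \(\epsilon_1 \circ \Theta\), and hence \(\epsilon_1\), is an equivalence.

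The step needing the most care---the ``hard part'', such as it is---is the bookkeeping inside the twisted cochain complex underlying \(\mathcal{H}(\bZ;\Gamma,M,\omega)\): one must keep track that ``\(\kappa \in B^2(\bZ,M)\)'' is taken with respect to the \(\alpha\)-twisted \(\bZ\)-module structure, that the coboundary map is a homomorphism of the relevant groups of cochains (so that \(\kappa_1\kappa_2^{-1}\) and \(\mu(\mu')^{-1}\) behave as claimed), and, for fullness, that the solution set of \(\du\mu_0 = \kappa_1\kappa_2^{-1}\) is precisely a torsor under the \(1\)-cocycles \(Z^1(\bZ,M)\), which are parametrised by their value at the generator---so the value \(\mu(1) = m\) can always be prescribed. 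The two vanishing statements \(H^2(\bZ,M) = H^3(\bZ,M) = 0\), used for fullness and for essential surjectivity respectively, are the only genuine inputs beyond the quoted theorems, and both are immediate from freeness of \(\bZ\).
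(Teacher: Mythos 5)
Your proposal is correct and follows essentially the same route as the paper: reduce via S\'{i}nh's classification to \(\twogrp(\Gamma,M,\omega)\), pass through the Joyal--Street equivalence \(\Theta\), and verify essential surjectivity, fullness, and faithfulness of \(\epsilon_1 \circ \Theta\) using the vanishing of higher group cohomology of \(\bZ\). The only (harmless) variations are that you spell out the transport of the statement along a monoidal equivalence, which the paper compresses into ``without loss of generality'', and that your fullness check treats arrows between arbitrary objects \((\alpha,\kappa_1)\to(\alpha,\kappa_2)\) via \(H^2(\bZ,M)=0\), where the paper instead constructs a preimage \(1\)-cocycle directly as an endomorphism of a chosen preimage object.
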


\begin{proof}
	By Theorem \ref{thm:sinh}, without loss of generality, there exist a group \(\Gamma\), a \(\Gamma\)-module \(M\), and a normalised cocycle \(\omega \in Z^3(\Gamma,M)\), such that \(\grp{G} = \twogrp(\Gamma,M,\omega)\).
	Hence, let \(\Theta : \cH(\bZ;\Gamma,M,\omega) \to \grp{Hom}(\bZ,\grp{G})\) be the equivalence of categories of Theorem \ref{thm:JS}.
	It suffices to show that \(\epsilon_1 \circ \Theta : \cH(\bZ;\Gamma,M,\omega) \to \grp{G}\) is an equivalence of categories.

	First, we show that \(\epsilon_1 \circ \Theta\) is essentially surjective.
	Let \(\gamma \in \Gamma = \Obj(\grp{G})\), and set \(\alpha_\gamma \coloneqq (k \mapsto \gamma^k)\).
	Since  \(\bZ\) has cohomological dimension \(1\)~\cite[Ex.\ 2.4.(b)]{Brown}, the \(3\)-cocycle 	\(\alpha_\gamma^\ast\omega\) on \(\bZ\) is trivial in cohomology, so that there exists a normalised \(2\)-cochain \(\kappa_\gamma \in B^2(\bZ,M)\) that satisfies \(\du\kappa_\gamma \cdot \alpha_\gamma^\ast\omega = 1\).
	It now follows that \((\alpha_\gamma,\kappa_\gamma)\) is a well-defined object of \(\cH(\bZ;\Gamma,M,\omega)\) satisfying  \(\epsilon_1 \circ \Theta(\alpha_\gamma,\kappa_\gamma) = \gamma\).
	
	Next, we show that \(\epsilon_1 \circ \Theta\) is full.
	Let \((m,\gamma) \in M \times \Gamma = \operatorname{Hom}(\grp{G})\), so that \((m,\gamma)\) is an automorphism of \(\gamma\).
	By the above argument, let \((\alpha_\gamma,\kappa_\gamma)\) be any preimage of \(\gamma\) under \(\epsilon_1 \circ \Theta\), and let \(\beta_{(m,\gamma)} \in Z^1(\bZ,M)\) be the unique normalised \(1\)-cocycle with respect to \(\alpha_\gamma\) satisfying \(\beta_{(m,\gamma)}(1) = m\).
	Then \(\beta_{(m,\gamma)} : (\alpha_\gamma,\kappa_\gamma) \to (\alpha_\gamma,\kappa_\gamma)\) is a well-defined arrow of \(\cH(\bZ;\Gamma,M,\omega)\) satisfying \(\epsilon_1\circ\Theta(\beta_{(m,\gamma)}) = (m,\gamma)\).
	
	Finally, we show that \(\epsilon_1 \circ \Theta\) is faithful.
	Fix a homomorphism \(\alpha : \bZ \to \Gamma\) and normalised \(2\)-cochains \(\kappa,\kappa^\prime \in B^2(\bZ,M)\), such that \(\du\kappa = \du\kappa^\prime = (\alpha^\ast \omega)^{-1}\); suppose that \(\mu_1,\mu_2 : (\alpha,\kappa) \to (\alpha,\kappa^\prime)\) satisfy \(\epsilon_1 \circ \Theta(\mu_1) = \epsilon_1 \circ \Theta(\mu_2)\).
	This means that \(\mu_1,\mu_2 \in B^1(\bZ,M)\) are normalised chains, such that \(\du\mu_1 = \kappa \cdot (\kappa^\prime)^{-1} = \du\mu_2\) and \(\mu_1(1) = \mu_2(1)\).
	It follows that \(\beta \coloneqq \mu_1 \cdot \mu_2^{-1}\) is a normalised \(1\)-cocycle on \(\bZ\) that satisfies \(\beta(1) = 1\), so that \(\beta = (m \mapsto 1)\), hence \(\mu_1 = \mu_2\).
\end{proof}

\subsection{The Picard \texorpdfstring{\(2\)}{2}-group of an NC topological space}\label{sec:2.1}

Let \(B\) be a given unital pre-\Cstar-algebra, which we view as a NC topological space; we define its \emph{positive cone} \(B_+\) to be the set of all elements of \(B\) that are positive in the \Cstar-algebra completion of \(B\).
We now review the theory of NC Hermitian line bundles over \(B\), i.e., strong Morita auto-equivalences~\cite{Rieffel74} passed through the algebraic lens of Beggs--Brzezi\'{n}ski~\cite{BeBrz14}.
This is standard material with adaptations to the setting of pre-\Cstar-algebras; following Kajiwara--Watatani~\cite{KW}, we derive substantial technical simplifications from the systematic use of finite pre-Hilbert module \emph{frames} or \emph{bases}.

Let \(E\) be a right \(B\)-bimodule. 
A \emph{\(B\)-valued inner product} on  \(E\) is a \(\bR\)-bilinear map \(\hp{}{} : E \times E \to B\) that is right \(B\)-linear in the second argument and satisfies
\[
	\forall x,y \in E, \quad \hp{y}{x} = \hp{x}{y}^\ast;
\]
hence, we define a \emph{cobasis} for \(\hp{}{}\) to be finite family \((\epsilon_i)_{i=1}^n\) in \(E\) that satisfies 
\(
	\sum_{i=1}^n \hp{\epsilon_i}{\epsilon_i} = 1
\),
and we say that \(\hp{}{}\) is \emph{strictly full} whenever \(\hp{}{}\) admits a cobasis.
Note that a right \(B\)-module is faithful whenever it admits a strictly full \(B\)-valued inner product~\cite[Lemma 1.5]{KW}.

\begin{definition}[{Rieffel~\cite[\S 6]{Rieffel74}, cf.\ Bass~\cite[\S \textsc{ii}.5]{Ba}}]
	A \emph{Hermitian line \(B\)-bimodule} is a \(B\)-bimodule \(E\) together with strictly full inner products on both \(E\) and \(\conj{E}\), respectively, such that
	\begin{align}
		\forall b \in B, \, \forall x \in E, && \norm{\hp{b  x}{b  x}} &\leq \norm{b}^2 \norm{\hp{x}{x}}, \label{eq:imprimitivity1a}\\
		\forall b \in B, \, \forall x \in E, && \norm{\hp{\conj{x  b}}{\conj{x  b}}} &\leq \norm{b}^2 \norm{\hp{\conj{x}}{\conj{x}}}, \label{eq:imprimitivity1b}\\
		\forall b \in B, \, \forall x,y \in E, && \hp{x}{b  y} &= \hp{b^\ast  x}{y}, \label{eq:imprimitivity2a}\\
		\forall b \in B, \, \forall x,y \in E, && \hp{\conj{x}}{\conj{y  b}} &= \hp{\conj{x  b^\ast}}{\conj{y}}, \label{eq:imprimitivity2b}\\
		\forall x,y,z \in E, && \hp{\conj{x}}{\conj{y}}  z &= x  \hp{y}{z}. \label{eq:imprimitivity3}
	\end{align}
\end{definition}

For example, the \emph{trivial Hermitian line \(B\)-bimodule} is the trivial \(B\)-bimodule \(B\) together with the \(B\)-valued inner products on \(B\) and \(\conj{B}\) defined, respectively, by
\begin{equation}
	\forall b,c \in B, \quad \hp{b}{c} \coloneqq b^\ast c, \quad \hp{\conj{b}}{\conj{c}} \coloneqq bc^\ast. \label{eq:trivialip}\\
\end{equation}
This example admits the following non-trivial generalisation.

\begin{example}\label{ex:triviallines}
	Let \(\phi\) be an isometric \(\ast\)-automorphism of \(B\).
	Let \(B_\phi \coloneqq \Set{b_\phi \given b \in B}\) be \(B\) as a free left \(B\)-module together with the right \(B\)-module structure defined by
	\begin{equation*}
		\forall b,c \in B, \quad b_\phi \cdot c \coloneqq (b  \phi(c))_{\phi},
	\end{equation*}
	and the \(B\)-valued inner products on \(B_\phi\) and \(\conj{B_\phi}\) respectively defined by
	\begin{equation*}
		\forall b,c \in B, \quad \hp{b_\phi}{c_\phi} \coloneqq \phi^{-1}(b^\ast c), \quad \hp{\conj{b_\phi}}{\conj{c_\phi}} \coloneqq bc^\ast.
	\end{equation*}
	Then \(B_\phi\) is a Hermitian line \(B\)-bimodule with cobases \(1_\phi\) for \(B_\phi\) and \(\conj{1_\phi}\) for \(\conj{B_\phi}\).
\end{example}

\begin{example}\label{ex:classical0}
	Let \(X\) be a closed manifold.
	Recall that the commutative unital \(\ast\)-algebra \(C^\infty(X)\) of smooth \(\bC\)-valued functions on \(X\) defines a unital pre-\Cstar-algebra with respect to the supremum norm.
	Given a Hermitian line bundle \(\cE \to X\), the balanced \(C^\infty(X)\)-bimodule \(\Gamma(\cE)\) of smooth global sections of \(\cE\) defines a Hermitian line \(C^\infty(X)\)-bimodule with respect to the \(C^\infty(X)\)-valued inner product on \(\Gamma(\cE)\) induced by the Hermitian metric on \(\cE\) and the \(C^\infty(X)\)-valued inner product on \(\conj{\Gamma(\cE)} \cong \Gamma(\conj{\cE})\) defined by \(\hp{\conj{\sigma_1}}{\conj{\sigma_2}} \coloneqq \hp{\sigma_2}{\sigma_1}\) for \(\sigma_1,\sigma_2 \in \Gamma(\cE)\).
	In particular, cobases for both of these \(C^\infty(X)\)-valued inner products can be constructed using an atlas of local trivialisations for \(\cE \to X\) together with a smooth partition of unity subordinate to the corresponding open cover of \(X\).
\end{example}

Our primary goal for this subsection is the following refinement of standard lore.

\begin{theoremdefinition}[{Rieffel~\cite[\S 6]{Rieffel74}, Brown--Green--Rieffel~\cite{BGR}; cf.\ Bass~\cite[\S \textsc{ii}.5]{Ba}}]\label{thmdef:picard}
	The \emph{Picard \(2\)-group} of \(B\) is the coherent \(2\)-group \(\Pic(B)\) defined as follows.
	\begin{enumerate}[leftmargin=*]
		\item As a category, \(\Pic(B)\) is the concrete category whose objects are Hermitian line \(B\)-bimod\-ules and whose arrows are \(B\)-bimodule isomorphisms \(u : E \to F\) with
		\begin{equation}\label{eq:unitary}
			\forall x,y \in E, \quad \hp{u(x)}{u(y)} = \hp{x}{y}.
		\end{equation}
		\item The monoidal product of objects \(E\) and \(F\) is the balanced tensor product \(E \otimes_B F\) together with the \(B\)-valued inner products on \(E \otimes_B F\) and \(\conj{E \otimes_B F}\) defined by
		\begin{align}
			\forall x_1,y_1,x_2,y_2 \in E, && \hp{x_1 \otimes y_1}{x_2 \otimes y_2} &\coloneqq \hp{y_1}{\hp{x_1}{x_2}  y_2},\label{eq:tensorip}\\
			\forall x_1,y_1,x_2,y_2 \in E, && \hp{\conj{x_1 \otimes y_1}}{\conj{x_2 \otimes y_2}} &\coloneqq \hp{\conj{x_1}}{\hp{\conj{y_1}}{\conj{y_2}}  \conj{x_2}} \label{eq:tensoripconj},
		\end{align}
			respectively; moreover, the monoidal product of arrows is given by their monoidal product in \(\grp{Bimod}(B)\).
		\item The unit object is the trivial Hermitian line \(B\)-bimodule \(B\), and left unitors, right unitors, and associators are given by the corresponding left unitors, right unitors, and associators in \(\grp{Bimod}(B)\), respectively.
		\item The monoidal inverse of a Hermitian line \(B\)-bimodule \(E\) is \(\conj{E}\) with the given \(B\)-valued inner product on \(\conj{E}\) and the \(B\)-valued inner product on \(\cconj{E}\) defined by
		\begin{equation}
			\forall x, y \in E, \quad \hp{\cconj{x}}{\cconj{y}} \coloneqq \hp{x}{y}.
		\end{equation}
		\item The evaluation morphism for an object \(E\) is \(\ev_E : \conj{E} \otimes_B E \to B\) given by
		\begin{equation}
			\forall e_1,e_2 \in E, \quad \ev_E(\conj{e_1}\otimes e_2) \coloneqq \hp{e_1}{e_2}.
		\end{equation}
	\end{enumerate}
	Hence, the \emph{Picard group} of \(B\) is the group \(\pic(B) \coloneqq \pi_0(\Pic(B))\).
\end{theoremdefinition}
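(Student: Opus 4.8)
The plan is to read Theorem-Definition~\ref{thmdef:picard} as a repackaging of Rieffel's imprimitivity theory, adapted to the unital pre-\Cstar-algebra \(B\) by carrying a finite cobasis through every construction rather than passing to \Cstar-completions, and then to verify clause by clause that the stated data satisfy the definition of a coherent \(2\)-group given above. The substantive first step is to extract the structural consequences of the imprimitivity relations \eqref{eq:imprimitivity1a}--\eqref{eq:imprimitivity3} in the presence of strictly full inner products. Given a cobasis \((\conj{f_j})_{j}\) of \(\conj{E}\), relation \eqref{eq:imprimitivity3} yields the reconstruction formula \(x = \sum_j f_j \hp{f_j}{x}\) for all \(x \in E\); hence \(\hp{x}{x} = \sum_j \hp{f_j}{x}^\ast \hp{f_j}{x} \in B_+\), so the inner products are genuinely positive and \eqref{eq:imprimitivity1a}--\eqref{eq:imprimitivity1b} are meaningful norm estimates, while the same formula exhibits \(E\) as a direct summand of a free right \(B\)-module, hence finitely generated projective as a right \(B\)-module. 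Combining this with Bass~\cite[\S \textsc{ii}.5]{Ba}, Rieffel~\cite{Rieffel74}, and Brown--Green--Rieffel~\cite{BGR}, one gets that \(E\) is invertible in \(\grp{Bimod}(B)\) with inverse \(\conj{E}\) and that the inner products induce \(B\)-bimodule isomorphisms \(\conj{E} \otimes_B E \cong B\) and \(E \otimes_B \conj{E} \cong B\).

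Granting this, \(\Pic(B)\) is an essentially small groupoid: its hom-sets are sets of \(B\)-bimodule maps; its isomorphism classes form a set because every Hermitian line \(B\)-bimodule is finitely generated projective as a right \(B\)-module and the remaining data are two (set-sized) inner products; and the inverse of a unitary \(B\)-bimodule isomorphism is unitary by applying \eqref{eq:unitary} to \(u^{-1}\), so every arrow is invertible. The next step is to show that \(\otimes_B\) restricts from \(\grp{Bimod}(B)\) to a bifunctor on \(\Pic(B)\). On objects, one checks that \(E \otimes_B F\) with the inner products \eqref{eq:tensorip}--\eqref{eq:tensoripconj} is again a Hermitian line \(B\)-bimodule: well-definedness over the balanced tensor product, \(\bR\)-bilinearity, \(\ast\)-symmetry, and right \(B\)-linearity in the second slot are immediate from the corresponding properties on \(E\) and \(F\); strict fullness holds because \(\otimes_B\) of cobases is a cobasis, \(\sum_{i,k} \hp{\epsilon_i \otimes \eta_k}{\epsilon_i \otimes \eta_k} = \sum_{i,k}\hp{\eta_k}{\hp{\epsilon_i}{\epsilon_i}\,\eta_k} = \sum_k \hp{\eta_k}{\eta_k} = 1\) for cobases of \(E\) and \(F\) and symmetrically for \(\conj{E \otimes_B F} \cong \conj{F} \otimes_B \conj{E}\) using cobases of \(\conj{E}\) and \(\conj{F}\); and \eqref{eq:imprimitivity1a}--\eqref{eq:imprimitivity3} for \(E \otimes_B F\) unwind to those for \(E\) and \(F\)---equivalently, this is the standard fact that a balanced tensor product of imprimitivity bimodules is one. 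On arrows, \(u \otimes v\) satisfies \eqref{eq:unitary} whenever \(u\) and \(v\) do, by a one-line computation from \eqref{eq:tensorip} and right \(B\)-linearity of \(u\) and \(v\).

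With \(\otimes_B\) a bifunctor on \(\Pic(B)\) and the unit, unitors, and associator inherited from \(\grp{Bimod}(B)\), the pentagon and triangle axioms are automatic, so it remains only to check that the left unitor, right unitor, and associator satisfy \eqref{eq:unitary}---e.g.\ \(\hp{b_1 \otimes x_1}{b_2 \otimes x_2} = \hp{x_1}{b_1^\ast b_2\, x_2} = \hp{b_1 x_1}{b_2 x_2}\) by \eqref{eq:imprimitivity2a}, with analogous one-line identities for the right unitor and the associator. This makes \(\Pic(B)\) a monoidal category in which every arrow is invertible. For monoidal inversion, \(\conj{E}\) with its given inner product and with \(\hp{\cconj{x}}{\cconj{y}} \coloneqq \hp{x}{y}\) on \(\cconj{E}\) is a Hermitian line \(B\)-bimodule, since \eqref{eq:imprimitivity1a}--\eqref{eq:imprimitivity3} for \(\conj{E}\) are exactly those for \(E\) with the two inner products interchanged and \(\cconj{E}\) identified with \(E\). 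Finally, \(\ev_E \colon \conj{E} \otimes_B E \to B\), \(\conj{e_1} \otimes e_2 \mapsto \hp{e_1}{e_2}\), is well-defined over \(\otimes_B\) by \eqref{eq:imprimitivity2a} and is a \(B\)-bimodule map by \(\ast\)-symmetry and right \(B\)-linearity of \(\hp{}{}\); it is a \(B\)-bimodule isomorphism by the first step; and it satisfies \eqref{eq:unitary} since, by \eqref{eq:tensorip}, \eqref{eq:imprimitivity3}, and \eqref{eq:trivialip}, \(\hp{\conj{x_1} \otimes x_2}{\conj{y_1} \otimes y_2} = \hp{x_2}{\hp{\conj{x_1}}{\conj{y_1}} y_2} = \hp{x_2}{x_1 \hp{y_1}{y_2}} = \hp{x_2}{x_1}\hp{y_1}{y_2} = \hp{x_1}{x_2}^\ast \hp{y_1}{y_2} = \hp{\hp{x_1}{x_2}}{\hp{y_1}{y_2}}\). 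All the data of a coherent \(2\)-group are thereby in place, and \(\pic(B) = \pi_0(\Pic(B))\) is then a group because the isomorphism classes in any coherent \(2\)-group form one.

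The steps I expect to require genuine care rather than formal bookkeeping are the Morita-theoretic ones at the level of the pre-\Cstar-algebra \(B\): positivity of the inner products, finite projectivity, bijectivity of the inner-product maps \(\conj{E} \otimes_B E \to B\) and \(E \otimes_B \conj{E} \to B\), and the norm estimates \eqref{eq:imprimitivity1a}--\eqref{eq:imprimitivity1b} for \(\otimes_B\) and \(\conj{(\cdot)}\). All of these are precisely what Rieffel's imprimitivity theory~\cite{Rieffel74, BGR} supplies over \Cstar-algebras and what the systematic use of finite pre-Hilbert module frames of Kajiwara--Watatani~\cite{KW} supplies over pre-\Cstar-algebras; once they are in hand, the coherent-\(2\)-group structure of \(\Pic(B)\) is a clause-by-clause unwinding of \eqref{eq:imprimitivity1a}--\eqref{eq:imprimitivity3}, together with the observation that the monoidal structure isomorphisms inherited from \(\grp{Bimod}(B)\) all remain unitary.
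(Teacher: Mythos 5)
Your proposal is correct and follows essentially the same route as the paper: both realise \(\Pic(B)\) inside the known monoidal structure on \(B\)-bimodules/correspondences, use finite cobases to get strict fullness of the tensor-product inner products and the reconstruction/projectivity facts, verify \eqref{eq:imprimitivity3} for \(E \otimes_B F\) and the unitarity of \(\ev_E\) by the same direct computations, and defer the genuinely Morita-theoretic inputs (positivity, bijectivity of evaluation/coevaluation, the norm estimates) to Rieffel and Kajiwara--Watatani, which the paper packages as Proposition~\ref{prop:imprimitivity}, Lemma~\ref{lem:coev}, and Corollary~\ref{cor:imprimitivity}. The only cosmetic difference is that you check closure under monoidal inversion by the symmetry that swaps the two inner products and verify unitarity of the unitors and associator by hand, whereas the paper routes these points through the category \(\grp{Corr}(B)\) and Corollary~\ref{cor:imprimitivity}; both are adequate.
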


\begin{example}[{Bass~\cite[Prop. 5.2]{Ba}}]\label{ex:twist1}
	The following defines a homomorphism of coherent \(2\)-groups \(\tau : \Aut(B) \to \Pic(B)\).
	\begin{enumerate}[leftmargin=*]
		\item Given \(\phi \in \Aut(B)\), let \(\tau(\phi) \coloneqq B_\phi\) be the Hermitian line \(B\)-bimodule of Example~\ref{ex:triviallines}.
		\item Set \(\tau^{(0)} \coloneqq \id_B\); given \(\phi,\psi \in \Aut(B)\), define \(\tau^{(2)}_{\phi,\psi} : \tau(\phi) \otimes_B \tau(\psi) \to \tau(\phi \psi)\) by
		\begin{equation*}
			\forall a,b \in B, \quad \tau^{(2)}_{\phi,\psi}(a_\phi \otimes b_\psi) \coloneqq \left(a\phi(b)\right)_{\phi\psi}.
		\end{equation*}
	\end{enumerate}
\end{example}

Recall~\cite[\S 1]{KW} that a \emph{basis} for a right \(B\)-module \(E\) with respect to a right \(B\)-valued inner product \(\hp{}{}\) is a finite family \((e_i)_{i=1}^n\) in \(E\), such that \(x = \sum_{i=1}^n e_i \hp{e_i}{x}\) for all \(x \in E\).
Thus, we define a \emph{right pre-Hilbert \(B\)-module of finite type} to be a right \(B\)-module \(E\) equipped with a \(B\)-valued inner product \(\ip{}{}\) that admits a basis.
In turn, we denote by \(\grp{Hilb}(B)\) the concrete category whose objects are right pre-Hilbert \(B\)-modules of finite type and whose arrows are isomorphisms of right \(B\)-modules satisfying \eqref{eq:unitary}.

\begin{example}\label{ex:projection}
	Let \(B\) be a unital pre-\Cstar-algebra, let \(n \in \bN\), and let \(\cP \in M_n(B)\) be an \emph{orthogonal projection}, i.e., \(\cP^2 = \cP = P^\ast\).
	Then \(\cP \cdot B^n\) defines a right pre-Hilbert \(B\)-module of finite type with respect to the \(B\)-linear inner product defined by
	\[
		\forall (x_i)_{i=1}^n, (y_i)_{i=1}^n \in \cP \cdot B^n, \quad \hp*{(x_i)_{i=1}^n}{(y_i)_{i=1}^n } \coloneqq \sum_{i=1}^n x_i^\ast y_i.
	\]
\end{example}

Note that if \(E\) is a right pre-Hilbert \(B\)-module of finite type with \(B\)-valued inner product \(\hp{}{}\), then \(E\) is necessarily finitely generated and projective as a right \(B\)-module and \(\hp{}{}\) is necessarily \emph{positive definite} in the sense that
\begin{gather}
	\forall x \in E, \quad \hp{x}{x} \geq 0, \label{eq:posdef1}\\
	\Set{x \in E \given \hp{x}{x} = 0} = \Set{0} \label{eq:posdef2}.
\end{gather}
Thus, every right pre-Hilbert \(B\)-module is isomorphic in \(\grp{Hilb}(B)\) to a right pre-Hilbert \(B\)-module of finite type of the kind constructed in Example \ref{ex:projection}, so that the category \(\grp{Hilb}(B)\) is essentially small.

Now, let \(E\) be a a right pre-Hilbert \(B\)-module of finite type.
By positive-definiteness of the \(B\)-valued inner product \(\hp{}{}\) on \(E\), the norm \(\norm{}{}\) defined by 
\begin{equation*}
	\forall x \in E, \quad \norm{x} \coloneqq \norm{\hp{x}{x}}^{1/2}
\end{equation*}
satisfies the following crucial inequalities:
\begin{align}
	\forall x \in E, \, \forall b \in B, && \norm{x  b} &\leq \norm{x} \cdot \norm{b}, \label{eq:rightnorm}\\
	\forall x,y \in E, && \hp{x}{y}^\ast\hp{x}{y} &\leq \norm{y}^2 \hp{x}{x}. \label{eq:cauchyschwarz}
\end{align}
Hence, one can show that the algebra \(\bL(E)\) of all right \(B\)-linear maps \(E \to E\) defines a unital pre-\Cstar-algebra with respect to the \(\ast\)-operation implicitly defined by
\begin{equation*}
	\forall T \in \bL(E), \, \forall x,y \in E, \quad \hp{x}{T^\ast y} \coloneqq \hp{Tx}{y}
\end{equation*}
and the operator norm induced by the aforementioned norm \(\norm{}\) on \(E\).
At last, given a unital pre-\Cstar-algebra \(A\), we define an \emph{\((A,B)\)-correspondence of finite type} to be a right pre-Hilbert \(B\)-module of finite type \(E\) equipped with a isometric unital \(\ast\)-homomorphism \(A \to \bL(E)\); in particular, when \(A = B\), we call \(E\) a \emph{\(B\)-self-correspondence of finite type}.

\begin{proposition}\label{prop:imprimitivity}
	Let \(E\) be a Hermitian line \(B\)-bimodule equipped with \(B\)-valued inner products \(\hp{}{}_E\) on \(E\) and \(\hp{}{}_{\conj{E}}\) on \(\conj{E}\).
	Then \(E\) and \(\conj{E}\) define \(B\)-self-corresponden\-ces of finite type with respect to \(\hp{}{}_E\) and \(\hp{}{}_{\conj{E}}\), respectively, such that
	\begin{equation}\label{eq:norms}
		\forall x \in E, \quad \norm{\hp{\conj{x}}{\conj{x}}_{\conj{E}}} = \norm{\hp{x}{x}_{E}}.
	\end{equation}	
\end{proposition}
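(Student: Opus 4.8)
The plan is to reduce the statement to the cobasis-to-basis trick supplied by \eqref{eq:imprimitivity3} together with the classical theory of imprimitivity bimodules, exploiting the symmetry of the axioms \eqref{eq:imprimitivity1a}--\eqref{eq:imprimitivity3} under interchanging \((E,\hp{}{}_E)\) with \((\conj E,\hp{}{}_{\conj E})\) and using \(\cconj E \cong E\): once \(E\) itself is shown to be a \(B\)-self-correspondence of finite type, the analogous assertion for \(\conj E\) follows by applying that result to the Hermitian line \(B\)-bimodule \(\conj E\).

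First I would show that \(E\) is a right pre-Hilbert \(B\)-module of finite type with respect to \(\hp{}{}_E\). Since \(\hp{}{}_{\conj E}\) is strictly full, fix a cobasis for it, written \((\conj{\epsilon_i})_{i=1}^n\) with \(\epsilon_i \in E\), so that \(\sum_{i=1}^n\hp{\conj{\epsilon_i}}{\conj{\epsilon_i}}_{\conj E} = 1\); then for each \(z \in E\), applying \eqref{eq:imprimitivity3} termwise gives \(z = \sum_{i=1}^n\hp{\conj{\epsilon_i}}{\conj{\epsilon_i}}_{\conj E}\, z = \sum_{i=1}^n\epsilon_i\,\hp{\epsilon_i}{z}_E\), so \((\epsilon_i)_{i=1}^n\) is a basis for \(E\) with respect to \(\hp{}{}_E\). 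By the remark following Example~\ref{ex:projection}, \(E\) is thus finitely generated projective as a right \(B\)-module and \(\hp{}{}_E\) is positive definite; by the symmetry above, so are \(\conj E\) and \(\hp{}{}_{\conj E}\).

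Next I would check that the left \(B\)-action makes \(E\) a \(B\)-self-correspondence. Left multiplication is right \(B\)-linear, hence defines a unital algebra homomorphism \(\lambda : B \to \bL(E)\); by \eqref{eq:imprimitivity2a}, \(\hp{x}{\lambda(b)y}_E = \hp{\lambda(b^\ast)x}{y}_E\) for all \(x,y \in E\) and \(b \in B\), so \(\lambda(b)^\ast = \lambda(b^\ast)\) in \(\bL(E)\) and \(\lambda\) is a unital \(\ast\)-homomorphism, while \eqref{eq:imprimitivity1a} shows \(\lambda\) is contractive. For isometry, observe that \(\conj E\) is a faithful right \(B\)-module by~\cite[Lemma 1.5]{KW}, since \(\hp{}{}_{\conj E}\) is strictly full; read through the conjugate bimodule structure, this says exactly that \(E\) is faithful as a \emph{left} \(B\)-module, i.e.\ that \(\lambda\) is injective. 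Passing to \(\Cstar\)-completions, \(\lambda\) extends to the left action \(\bar\lambda\) of \(\bar B\) on the Hilbert-module completion \(\bar E\) of \(E\); by the previous paragraph and the completion of \eqref{eq:imprimitivity3}, \(\bar E\) is a \(\bar B\)-\(\bar B\)-imprimitivity bimodule whose left inner product is the completion of \(\hp{}{}_{\conj E}\), so Rieffel's imprimitivity theorem~\cite[\S 6]{Rieffel74} (see also~\cite{BGR}) identifies \(\bar B\) with \(\bL_{\bar B}(\bar E) = \overline{\bL(E)}\) via \(\bar\lambda\); restricting back to \(B\), \(\lambda\) is isometric. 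Hence \(E\), and by symmetry \(\conj E\), is a \(B\)-self-correspondence of finite type.

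Finally, \eqref{eq:norms} is already part of the imprimitivity picture---the two inner products of \(\bar E\) induce a common norm---but it may also be seen directly: taking \(y = z = x\) in \eqref{eq:imprimitivity3} gives \(\hp{\conj x}{\conj x}_{\conj E}\,x = x\,\hp xx_E\), whence, writing \(a \coloneqq \hp{\conj x}{\conj x}_{\conj E}\) and \(c \coloneqq \hp xx_E\), one has \(p(a)\,x = x\,p(c)\) for every polynomial \(p\) with \(p(0)=0\) and then, by uniform approximation and contractivity of \(\bar\lambda\), for all continuous \(p\) vanishing at \(0\); comparing \(\norm{x\,p(c)}_{\bar E}^2 = \norm{p(c)\,c\,p(c)} = \sup_{t\in\sigma(c)}p(t)^2 t\) with \(\norm{p(a)\,x}_{\bar E}^2 \le \norm{p(a)}^2\,\norm{c}\) and specialising \(p\) to bump functions concentrated at \(\norm{c} = \max\sigma(c)\) forces \(\max\sigma(c) \in \sigma(a)\), hence \(\norm{c} \le \norm{a}\), and the reverse inequality---hence \eqref{eq:norms}---follows by applying the same argument to \(\conj E\). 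The step I expect to be the main obstacle is precisely these \emph{reverse} norm estimates, i.e.\ isometry of \(\lambda\) and \(\norm{\hp xx_E} \le \norm{\hp{\conj x}{\conj x}_{\conj E}}\): the forward directions are immediate from \eqref{eq:imprimitivity1a}--\eqref{eq:imprimitivity1b}, but the reverse directions are where \emph{two}-sided strict fullness is essential (a one-sided Hermitian structure would leave only a contractive, possibly non-isometric, left action), and upgrading faithfulness to isometry and equating the two norms is exactly the content imported from the classical imprimitivity theorem.
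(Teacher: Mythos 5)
Your proposal is correct in substance, but it reaches the key point---isometry of the left action---by a genuinely different route than the paper. You and the paper agree on the first step: the cobasis of \(\conj{E}\) becomes, via \eqref{eq:imprimitivity3}, a basis for \(E\) (and symmetrically), so both modules are of finite type and their inner products are automatically positive definite; and injectivity and contractivity of \(\lambda : B \to \bL(E)\) follow from strict fullness of \(\hp{}{}_{\conj{E}}\) and from \eqref{eq:imprimitivity1a}, \eqref{eq:imprimitivity2a}. After that the paper does not pass to completions: it shows \(\pi_E : B \to \bL(E)\) is actually \emph{bijective} onto \(\bL(E)\), writes the inverse explicitly through the frame, \(\pi_E^{-1}(t) = \sum_j \hp{\conj{t e_j}}{\conj{e_j}}\), bounds it using \eqref{eq:cauchyschwarz} and \eqref{eq:norms}, and concludes isometry from the fact that a bounded bijective \(\ast\)-homomorphism of pre-\(\Cstar\)-algebras with bounded inverse extends to a \(\ast\)-isomorphism of completions; the identity \eqref{eq:norms} itself is imported from the classical result on pre-imprimitivity bimodules~\cite[Prop.\ 3.11]{RW}. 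You instead stop at injectivity, complete, and extract isometry from fullness of the left inner product on the imprimitivity bimodule \(\conj{B}\)-completion, while for \eqref{eq:norms} you either cite the same imprimitivity picture or give a direct bump-function/spectral argument from \(a x = x c\)---in effect reproving~\cite[Prop.\ 3.11]{RW}. Both work; the paper's route buys the stronger surjectivity statement \(\pi_E(B) = \bL(E)\), which is exactly what gets reused in Corollary~\ref{cor:imprimitivity}, whereas yours is lighter on frame computations but leans more on the Hilbert-module machinery.

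One point of care in your ordering: injectivity of \(\lambda\) on the dense subalgebra \(B\) does \emph{not} by itself give injectivity (hence isometry) of the extension \(\conj{\lambda}\), and to even form the completed imprimitivity bimodule---i.e.\ to extend \(\hp{}{}_{\conj{E}}\) and the left action to the completion of \(E\) in the right-module norm---you already need the two module norms to be comparable. That comparability is precisely the content of \eqref{eq:norms} (equivalently, the pre-imprimitivity framework of~\cite[Def.\ 6.10]{Rieffel74} together with~\cite[Prop.\ 3.11]{RW}, which is what the paper invokes). So your argument should be sequenced as: verify the pre-imprimitivity axioms, obtain \eqref{eq:norms} (your direct argument suffices, since it only uses contractivity of the actions via \eqref{eq:imprimitivity1a} and \eqref{eq:rightnorm}), then complete and deduce isometry. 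As written, with isometry presented before \eqref{eq:norms}, the appeal to ``the imprimitivity picture'' risks circularity unless it is read as invoking the pre-completion results; with that reordering the proof is sound.
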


\begin{lemma}[{Rieffel~\cite[Lemma 6.22]{Rieffel74}, Kajiwara--Watatani~\cite[Prop.\ 2.5]{KW}}]\label{lem:coev}
	Let \(B\) be a unital pre-\Cstar-algebra, and let \(E\) be a right pre-Hilbert \(B\)-module of finite type.
	There exists a unique isomorphism of \(\bL(E)\)-bimodules \(\coev_E : \bL(E) \to E \otimes_B \conj{E}\), such that \(\coev_E^{-1}(x \otimes \conj{y}) z = x  \hp{y}{z}\) for all \(x,y,z \in E\).
\end{lemma}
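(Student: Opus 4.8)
The plan is to construct \(\coev_E^{-1}\) directly as the map that sends \(x \otimes \conj{y}\) to the finite-rank operator \(\theta_{x,y} \colon z \mapsto x\hp{y}{z}\), and then to check that it is an isomorphism of \(\bL(E)\)-bimodules whose inverse is necessarily \(\coev_E\). First I would record the module structures in play: since \(E\) is an \((\bL(E),B)\)-bimodule, \(\conj{E}\) becomes a \((B,\bL(E))\)-bimodule with left \(B\)-action \(b \cdot \conj{y} \coloneqq \conj{y b^\ast}\) and right \(\bL(E)\)-action \(\conj{y} \cdot S \coloneqq \conj{S^\ast y}\), so that both \(E \otimes_B \conj{E}\) and \(\bL(E)\) are \(\bL(E)\)-bimodules in the evident way. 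Here one uses that \(\bL(E)\) consists of \emph{all} right \(B\)-linear endomorphisms of \(E\), each automatically adjointable because \(E\) is of finite type, so that \(\theta_{x,y} \in \bL(E)\) for all \(x,y \in E\) (right \(B\)-linearity of \(\theta_{x,y}\) being immediate from right \(B\)-linearity of \(\hp{}{}\) in its second argument).

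Next I would check that \((x,\conj{y}) \mapsto \theta_{x,y}\) is \(\bR\)-bilinear and \(B\)-balanced, the balancing condition reducing to the identity \(\hp{y b^\ast}{z} = b\hp{y}{z}\) for \(b \in B\), which follows from \(\hp{v}{w} = \hp{w}{v}^\ast\) together with right \(B\)-linearity in the second slot. This produces an \(\bR\)-linear map \(\Psi \colon E \otimes_B \conj{E} \to \bL(E)\) with \(\Psi(x \otimes \conj{y}) = \theta_{x,y}\), and the one-line identities \(\theta_{Sx,y} = S\theta_{x,y}\) and \(\theta_{x,S^\ast y} = \theta_{x,y}S\) (valid for \(S \in \bL(E)\)) show that \(\Psi\) is a homomorphism of \(\bL(E)\)-bimodules.

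The heart of the argument is bijectivity, and for this I would fix a basis \((e_i)_{i=1}^n\) of \(E\) and define \(\Phi \colon \bL(E) \to E \otimes_B \conj{E}\) by \(\Phi(T) \coloneqq \sum_{i=1}^n T e_i \otimes \conj{e_i}\). Then \(\Psi \circ \Phi = \id\) is precisely the reconstruction formula \(Tz = \sum_i (Te_i)\hp{e_i}{z}\), while \(\Phi \circ \Psi = \id\) follows by pushing the scalars \(\hp{y}{e_i}\) across the tensor product (using \(\hp{y}{e_i}^\ast = \hp{e_i}{y}\)) and invoking \(\sum_i e_i \hp{e_i}{y} = y\) in the second tensor slot. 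Hence \(\coev_E \coloneqq \Phi = \Psi^{-1}\) does the job, and uniqueness is automatic: the defining property pins down the inverse of any candidate on the spanning family \(\{x \otimes \conj{y} : x,y \in E\}\), hence on all of \(E \otimes_B \conj{E}\), so that the inverse must equal \(\Psi\) and the candidate must equal \(\coev_E\). The only part requiring genuine care---rather than presenting an actual obstacle---is keeping straight the conjugate-module conventions and the two one-sided \(\bL(E)\)-actions; none of the individual verifications exceeds a single line.
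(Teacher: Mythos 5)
Your proof is correct, and it is exactly the standard frame-based argument: the paper does not prove this lemma itself but cites Rieffel and Kajiwara--Watatani, whose argument (rank-one operators \(\theta_{x,y}\) together with the reconstruction formula \(T z = \sum_i Te_i\hp{e_i}{z}\) for a finite basis) is the one you have reproduced, in keeping with the paper's systematic use of finite frames. The only step worth making explicit in a write-up is the \(B\)-balancing identity \(\hp{y b^\ast}{z} = b\hp{y}{z}\) and the push-through computation \(x\hp{y}{e_i} \otimes \conj{e_i} = x \otimes \conj{e_i\hp{e_i}{y}}\), both of which you have.
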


\begin{proof}[Proof of Prop.\ \ref{prop:imprimitivity}]
	Fix cobases \((\epsilon_i)_{i=1}^m\) and \((\conj{e_j})_{j=1}^n\) for \(\hp{}{}_{E}\) and \(\hp{}{}_{\conj{E}}\), respectively.
	Using \eqref{eq:imprimitivity3}, one shows that \((e_j)_{j=1}^n\) is a basis for \(E\) with respect to \(\hp{}{}_{E}\) and that \((\conj{\epsilon_i})_{i=1}^m\) is a basis for \(\conj{E}\) with respect to \(\hp{}{}_{\conj{E}}\), so that \(E\) is a right pre-Hilbert \(B\)-module of finite type with respect to \(\hp{}{}_{E}\), and \(\conj{E}\) is a right pre-Hilbert \(A\)-module of finite type with respect to \(\hp{}{}_{\conj{E}}\).
	
	Next, by \eqref{eq:imprimitivity1a} and \eqref{eq:imprimitivity2a}, the map \(\pi_E : B \to \bL(E)\) defines a bounded \(\ast\)-homomorphism, which is surjective by Lemma~\ref{lem:coev} together with strict fullness of \(\hp{}{}_{E}\) and injective by strict fullness of \(\hp{}{}_{\conj{E}}\).
	By symmetry, this also shows that \(\pi_{\conj{E}} : B \to \bL(\conj{E})\) defines a bounded bijective \(\ast\)-homomorphism \(\pi_{\conj{E}} : B \to \bL(\conj{E})\).
	
	Now, by positive-definiteness of \(\hp{}{}_E\) together with the assumption that \(B\) is a pre-\Cstar-algebra, the data \(\left(E,\hp{}{}_{\conj{E}},\hp{}{}_{E}\right)\) yield a pre-imprimitivity \((B,B)\)-bimodule in the usual sense~\cite[Def.\ 6.10]{Rieffel74} with left \(B\)-valued inner product induced by \(\hp{}{}_{\conj{E}}\).
	Thus, Equation \ref{eq:norms} follows from the corresponding result for pre-imprimitivity bimodules~\cite[Prop.\ 3.11]{RW}.
	We now prove boundedness of \(\pi_E^{-1}\) and \(\pi_{\conj{E}}^{-1}\) as follows.
	Let \(t \in \bL(E)\) be given.
	Using \eqref{eq:imprimitivity3}, one shows that \(\pi_E^{-1}(t) = \sum_{j=1}^n \hp{\conj{t e_j}}{\conj{e_j}}\), so that
	\[
		\norm{\pi_E^{-1}(t)} \leq \sum_{j=1}^n \norm{\hp{\conj{t e_j}}{\conj{e_j}}} \leq \sum_{j=1}^n \norm{\conj{t e_j}}\norm{\conj{e_j}} = \sum_{j=1}^n \norm{t e_j}\norm{e_j} \leq \left(\sum_{j=1}^n \norm{e_j}^2\right) \norm{t},
	\]
	by \eqref{eq:cauchyschwarz} together with \eqref{eq:norms}.
	The same argument also shows that \(\pi_{\conj{E}}^{-1}\) is bounded.
	Thus, \(\pi_E\) and \(\pi_{\conj{E}}\) are bounded bijective \(\ast\)-homomorphisms between unital pre-\Cstar-algebras with bounded inverses, and hence are isometric \(\ast\)-isomorphisms.
\end{proof}

It is easy to check that a \(B\)-self-correspondence of finite type \(E\) admits at most one \(B\)-valued inner product on \(\conj{E}\) making \(E\) into a Hermitian line \(B\)-bimodule.
Indeed, suppose that \(\hp{}{}_1\) and \(\hp{}{}_2\) are two such \(B\)-valued inner products on \(\conj{E}\).
Then
\(
	(\hp{\conj{x}}{\conj{y}}_1-\hp{\conj{x}}{\conj{y}}_2)  z = x \hp{y}{z} - x  \hp{y}{z} = 0 z
\)
for all \(x,y,z \in E\),
so that \(\hp{\conj{x}}{\conj{y}}_1 = \hp{\conj{x}}{\conj{y}}_2\) by strict fullness of either of \(\hp{}{}_1\) or \(\hp{}{}_2\).
Moreover, by Proposition \ref{prop:imprimitivity}, such a \(B\)-valued inner product on \(\conj{E}\) exists only if the left \(B\)-module structure \(B \to \bL(E)\) on \(E\) is an isometric \(\ast\)-isomorphism.
This is not only necessary but sufficient.

\begin{corollary}\label{cor:imprimitivity}
	Let \(E\) be an \(B\)-self-correspondence of finite type with strictly full \(B\)-valued inner product, and let \(\pi_E : B \to \bL(E)\) be the left \(B\)-module structure on \(E\).
	There exists a \(B\)-valued inner product on \(\conj{E}\) making \(E\) into a Hermitian line \(B\)-bimodule if and only if \(\pi_E\) is an isometric \(\ast\)-isomorphism.
\end{corollary}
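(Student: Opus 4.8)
The forward direction is exactly the remark preceding the statement (it is deduced there from Proposition~\ref{prop:imprimitivity}), so all the work is in the converse. Assume \(\pi_E : B \to \bL(E)\) is an isometric \(\ast\)-isomorphism; the plan is to exhibit the required \(B\)-valued inner product on \(\conj{E}\) directly. The uniqueness discussion preceding the statement already pins down what it must be: \eqref{eq:imprimitivity3} forces \(\hp{\conj{x}}{\conj{y}}_{\conj{E}} \cdot z = x\hp{y}{z}_E\) for all \(x,y,z \in E\), and the operator \(z \mapsto x\hp{y}{z}_E\) is precisely \(\coev_E^{-1}(x \otimes \conj{y}) \in \bL(E)\) in the notation of Lemma~\ref{lem:coev}. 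I would therefore simply set
\[
	\hp{\conj{x}}{\conj{y}}_{\conj{E}} \coloneqq \pi_E^{-1}\mleft(\coev_E^{-1}(x \otimes \conj{y})\mright), \qquad x,y \in E,
\]
which makes sense precisely because \(\pi_E\) is invertible and which satisfies \eqref{eq:imprimitivity3} by construction.

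The remaining task is to check that \(\hp{}{}_{\conj{E}}\) is a strictly full \(B\)-valued inner product and that \eqref{eq:imprimitivity1a}, \eqref{eq:imprimitivity1b}, \eqref{eq:imprimitivity2a}, and \eqref{eq:imprimitivity2b} hold; each is a short computation that I would carry out in the following order. Additivity, \(\bR\)-bilinearity, and right \(B\)-linearity in the second slot come from the corresponding properties of \((x,y) \mapsto \coev_E^{-1}(x \otimes \conj{y})\), the identity \(\coev_E^{-1}\mleft(x \otimes (\conj{y}\cdot b)\mright) = \coev_E^{-1}(x \otimes \conj{y})\circ\pi_E(b)\) read off from \eqref{eq:imprimitivity2a}, and the fact that \(\pi_E^{-1}\) is an algebra homomorphism. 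Conjugate symmetry reduces, since \(\pi_E\) is a \(\ast\)-homomorphism, to \(\coev_E^{-1}(y \otimes \conj{x}) = \coev_E^{-1}(x \otimes \conj{y})^\ast\) in \(\bL(E)\), i.e.\ to the elementary identity \(\hp{x\hp{y}{z}_E}{w}_E = \hp{z}{y\hp{x}{w}_E}_E\). Positive-definiteness follows because \(\coev_E^{-1}(x \otimes \conj{x})\) is a positive element of \(\bL(E)\) (indeed \(\hp{\coev_E^{-1}(x\otimes\conj{x})z}{z}_E = \hp{x}{z}_E^\ast\hp{x}{z}_E\)), which \(\pi_E^{-1}\) carries to a positive element of \(B\) vanishing only when \(x=0\). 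For strict fullness, and for finite type of \(\conj{E}\), I would use the finite-type hypothesis on \(E\) exactly as in the proof of Proposition~\ref{prop:imprimitivity}: a basis \((e_j)\) of \(E\) gives \(\sum_j \coev_E^{-1}(e_j \otimes \conj{e_j}) = \id_E\), hence a cobasis \((\conj{e_j})\) for \(\hp{}{}_{\conj{E}}\); and a cobasis \((\epsilon_i)\) for \(\hp{}{}_E\) gives, via \eqref{eq:imprimitivity3}, a basis \((\conj{\epsilon_i})\) for \(\conj{E}\). Finally, \eqref{eq:imprimitivity1a} and \eqref{eq:imprimitivity2a} are immediate from \(\pi_E\) being an isometric \(\ast\)-homomorphism into \(\bL(E)\); \eqref{eq:imprimitivity2b} reduces, upon right-multiplying both sides by an arbitrary \(z \in E\) and using \eqref{eq:imprimitivity3}, to associativity of the right \(B\)-module structure together with injectivity of \(\pi_E\); and \eqref{eq:imprimitivity1b} then follows from \eqref{eq:imprimitivity2b} and submultiplicativity of the pre-\Cstar-norm on \(B\) via \(\hp{\conj{xb}}{\conj{xb}}_{\conj{E}} = b\hp{\conj{x}}{\conj{x}}_{\conj{E}}b^\ast\).

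I do not expect a serious obstacle: everything is squeezed out of the single hypothesis that \(\pi_E\) is an isometric \(\ast\)-isomorphism, which is exactly the data needed to transport the missing structure across \(\pi_E^{-1}\circ\coev_E^{-1}\). The only real care is bookkeeping around the conjugate-bimodule conventions---in particular the left \(B\)-action \(b \cdot \conj{x} = \conj{x b^\ast}\) on \(\conj{E}\), so that e.g.\ \(\conj{xb} = b^\ast\cdot\conj{x}\)---and, relatedly, the relations \eqref{eq:imprimitivity2b} coupling the left action to \(\hp{}{}_{\conj{E}}\), which amount to the statement that the left \(B\)-action on \(\conj{E}\) is itself adjointable and isometric with respect to \(\hp{}{}_{\conj{E}}\). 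If one prefers to avoid these checks piecemeal, one can instead observe that the \(\hp{}{}_{\conj{E}}\) defined above is precisely the left \(B\)-valued inner product turning \(\bigl(E, \hp{}{}_{\conj{E}}, \hp{}{}_E\bigr)\) into a pre-imprimitivity \((B,B)\)-bimodule in Rieffel's sense, and then invoke the standard structure theory of such bimodules---as already done in the proof of Proposition~\ref{prop:imprimitivity}---to conclude in one stroke.
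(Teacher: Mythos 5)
Your construction is the same as the paper's: define \(\hp{\conj{x}}{\conj{y}}_{\conj{E}} \coloneqq \pi_E^{-1}\mleft(\coev_E^{-1}(x\otimes\conj{y})\mright)\) via Lemma~\ref{lem:coev}, get strict fullness from a basis of \(E\), and observe that \eqref{eq:imprimitivity1a}, \eqref{eq:imprimitivity2a}, \eqref{eq:imprimitivity3}, and \eqref{eq:imprimitivity2b} come out of the definition together with injectivity/isometry of \(\pi_E\). All of those checks are fine. However, your verification of \eqref{eq:imprimitivity1b} contains a genuine error: the identity \(\hp{\conj{x b}}{\conj{x b}}_{\conj{E}} = b\,\hp{\conj{x}}{\conj{x}}_{\conj{E}}\,b^\ast\) is false. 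With your own convention \(\conj{xb} = b^\ast\cdot\conj{x}\), the element \(\hp{\conj{xb}}{\conj{xb}}_{\conj{E}}\) corresponds under \(\pi_E\) to the operator \(z \mapsto x\,bb^\ast\,\hp{x}{z}\), whereas \(b\,\hp{\conj{x}}{\conj{x}}_{\conj{E}}\,b^\ast\) corresponds to \(z \mapsto (bx)\hp{bx}{z}\); the identity of that shape which is actually true is \(\hp{\conj{bx}}{\conj{bx}}_{\conj{E}} = b\,\hp{\conj{x}}{\conj{x}}_{\conj{E}}\,b^\ast\), i.e.\ it controls the \emph{left} action of \(B\) on \(E\), while \eqref{eq:imprimitivity1b} is precisely the boundedness of the \emph{right} action of \(B\) on \(E\) with respect to \(\hp{}{}_{\conj{E}}\). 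So your argument bounds the wrong quantity. Your fallback---declare \(\bigl(E,\hp{}{}_{\conj{E}},\hp{}{}_E\bigr)\) a pre-imprimitivity bimodule and cite the standard theory---does not repair this, because the Rieffel boundedness condition \(\norm{\hp{\conj{xb}}{\conj{xb}}}\le\norm{b}^2\norm{\hp{\conj{x}}{\conj{x}}}\) is part of the \emph{definition} of a pre-imprimitivity bimodule and is not automatic when \(B\) is merely a pre-\Cstar-algebra (the usual positivity trick needs square roots of \(\norm{b}^2 - b^\ast b\), which need not exist in \(B\)); invoking it here begs the question.

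The correct fix is the paper's norm estimate: for all \(y\in E\),
\begin{equation*}
	\norm{\hp{\conj{x  b}}{\conj{x  b}}_{\conj{E}}\,y} = \norm{x\,bb^\ast\,\hp{x}{y}} \leq \norm{x}\,\norm{b}^2\,\norm{\hp{x}{y}} \leq \norm{b}^2\,\norm{x}^2\,\norm{y},
\end{equation*}
by \eqref{eq:rightnorm} and \eqref{eq:cauchyschwarz}, so that \(\norm{\pi_E(\hp{\conj{xb}}{\conj{xb}}_{\conj{E}})} \le \norm{b}^2\norm{x}^2\); since \(\pi_E\) is isometric and \(\norm{\pi_E(\hp{\conj{x}}{\conj{x}}_{\conj{E}})} = \norm{\coev_E^{-1}(x\otimes\conj{x})} = \norm{x}^2\), this yields \eqref{eq:imprimitivity1b}. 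With that one step replaced, your proof agrees with the paper's.
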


\begin{proof}
	Suppose that the left \(B\)-module structure \(\pi_E\) is an isometric \(\ast\)-isomorphism.
	Note that \eqref{eq:imprimitivity1a} and \eqref{eq:imprimitivity2a} are already satisfied.
	By Lemma \ref{lem:coev} together with bijectivity of \(\pi_E\), we may define an \(B\)-valued inner product \(\hp{}{}\) on \(\conj{E}\) satsifying \eqref{eq:imprimitivity3} and \eqref{eq:imprimitivity2b} by \(\hp{\conj{x}}{\conj{y}} \coloneqq \pi_E^{-1}(x \otimes \conj{y})\) for \(x,y \in E\);
	indeed, this \(B\)-valued inner product is strictly full since any basis \((e_i)_{i=1}^n\) for \(E\) yields a cobasis \((\conj{e_i})_{i=1}^n\) for \(\conj{E}\).
	Finally, Equation \ref{eq:imprimitivity1b} follows since, for all \(x,y \in E\) and \(b \in B\), by positive definitness of \(\ip{}{}\) on \(E\), isometry of \(\pi_E\), and equations \ref{eq:rightnorm} and \ref{eq:cauchyschwarz},
	\[
		\norm{\hp{\conj{x  b}}{\conj{x  b}}  y} = \norm{x  b \hp{x  b}{y}} = \norm{x  bb^\ast \hp{x}{y}} \leq \norm{x} \norm{b}^2 \norm{\hp{x}{y}} \leq \norm{b}^2 \norm{x}^2 \norm{y}. \qedhere
	\]
\end{proof}

At last, we can prove Theorem-Definition \ref{thmdef:picard} exactly as stated.

\begin{proof}[Proof of Theorem-Definition \ref{thmdef:picard}]
	First, by replacing Hermitian line \(B\)-bimodules with \(B\)-self-correspondences, the proposed definition of \(\Pic(B)\) recovers the familiar essentially small monoidal concrete category \(\grp{Corr}(B)\) whose objects are \(B\)-self-correspondences of finite type~\cite[\S 2.2]{BMZ}; note that essential smallness of \(\grp{Corr}(B)\) follows from essential smallness of the category \(\grp{Hilb}(B)\).
	Corollary \ref{cor:imprimitivity} now implies that the category \(\Pic(B)\) is well-defined as a strictly full subcategory of \(\grp{Corr}(B)\), which clearly contains the monoidal unit \(B\).
	Moreover, Proposition \ref{prop:imprimitivity} and Corollary \ref{cor:imprimitivity} together show that monoidal inversion is well-defined as a function \(\Obj(\Pic(B)) \to \Obj(\Pic(B))\).
	
	Next, let \(E\) and \(F\) be Hermitian \(B\)-line modules, so that their tensor product \(E \otimes_B F\) in \(\grp{Corr}(B)\) is a well-defined \(B\)-self-correspondence of finite type.
	On the one hand, the \(B\)-valued inner product on \(E \otimes_B F\) is strictly full since cobases \((\epsilon_i)_{i=1}^n\) and \((\phi_j)_{j=1}^q\) for \(E\) and \(F\), respectively, yield a cobasis \((\epsilon_i \otimes \phi_j)_{1 \leq i \leq n, 1 \leq j \leq q}\) for \(E \otimes_B F\).
	On the other hand, the \(B\)-valued inner product on the tensor product \(\conj{F} \otimes_B \conj{E}\) in \(\grp{Corr}(B)\) pulls back under the canonical isomorphism of \(B\)-bimodules \((\conj{x \otimes y} \mapsto \conj{y} \otimes \conj{x}) : \conj{E \otimes_B F} \to \conj{F} \otimes_B \conj{E}\) to the \(B\)-valued inner product on \(\conj{E \otimes_B F}\) of \eqref{eq:tensoripconj}, which is strictly full since cobases \((\conj{e_i})_{i=1}^m\) and \((\conj{f_j})_{j=1}^p\) for \(\conj{E}\) and \(\conj{F}\), respectively, yield a cobasis \((\conj{e_i \otimes f_j})_{1 \leq i \leq m, 1\leq j \leq p}\) for  \(\conj{E \otimes_B F}\).
	Equation \eqref{eq:imprimitivity3} for \(E \otimes_B F\) now follows from repeated applications of \eqref{eq:imprimitivity2a}, \eqref{eq:imprimitivity2b}, and \eqref{eq:imprimitivity3}.
	
	Finally, let \(E\) be a Hermitian \(B\)-line module.
	By Lemma~\ref{lem:coev} together with Proposition~\ref{prop:imprimitivity}, the map \(\ev_E\) is an isomorphism of \(B\)-bimodules; that \(\ev_E\) satisfies \eqref{eq:unitary} now follows from observing that for all \(x_1,x_2 \in E\) and \(y_1,y_2\in E\),
	\[
		\hp{\conj{x_1} \otimes y_1}{\conj{x_2} \otimes y_2} = \hp{y_1}{\hp{\conj{x_1}}{\conj{x_2}}  y_2} = \hp{y_1}{x_1  \hp{x_2}{y_2}} = \hp{x_1}{y_1}^\ast \hp{x_2}{y_2}. \qedhere
	\]
\end{proof}

This characterization of the monoidal category \(\Pic(B)\) as a monoidal subcategory of the monoidal category \(\Corr(B)\) of \(B\)-self-correspondences of finite type yields, with superficial changes, a right action of the Picard group \(\pic(B)\) on the \emph{\(K_0\)-monoid} \(\cV(B)\) of isomorphism classes of right pre-Hilbert \(B\)-modules of finite type.
Indeed, given a right pre-Hilbert \(B\)-module of finite type \(E\) and a Hermitian line \(B\)-bimodule \(F\), set
\(
	[E] \ract [F] \coloneqq [E \otimes_B F],
\)
where the balanced tensor product \(E \otimes_B F\) is equipped with the right \(B\)-valued inner product given by \eqref{eq:tensorip}.
We may use this \(\pic(B)\)-action to characterise the fibres of the obvious forgetful map \(\pic(B) \to \cV(B)\); in turn, this helps us understand the information lost when passing from \(\pic(B)\) to the \(K\)-theory of \(B\) or its \Cstar-algebraic completion.

\begin{proposition}[{Bass~\cite[Propp.\ 5.2 \& 5.3]{Ba}}]\label{prop:topstabilizer}
	Let \(\Pi_{\cV(B)} : \pic(B) \to \cV(B)\) denote the set function induced by the forgetful functor \(\Pic(B) \to \grp{Hilb}(B)\).
	Let \(\pic(B)_{[B]}\) denote the stabiliser subgroup of \(\pic(B)\) with respect to \([B] \in \ran \Pi_{\cV(B)}\).
	Then the homomorphism of coherent \(2\)-groups \(\tau : \Aut(B) \to \Pic(B)\) of Example \ref{ex:twist1} yields the exact sequence of groups
	\[
		1 \to \Unit(\Zent(B)) \to \Unit(B) \xrightarrow{u \mapsto \Ad_u} \Aut(B) \xrightarrow{\pi_0(\tau)} \pic(B)_{[B]} \to 1.
	\]
\end{proposition}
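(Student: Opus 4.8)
The plan is to prove exactness at each of the four non-trivial positions in turn, working throughout with the concrete descriptions of \(\Pic(B)\) from Theorem-Definition~\ref{thmdef:picard} and of the sub-\(2\)-group \(\tau(\Aut(B))\) from Examples~\ref{ex:triviallines} and~\ref{ex:twist1}; the only non-formal input will be Corollary~\ref{cor:imprimitivity}. Since \(\tau\) is a homomorphism of coherent \(2\)-groups, \(\pi_0(\tau)\) is a group homomorphism with \(\pi_0(\tau)(\phi) = [B_\phi]\), and one checks that \(a_\phi \mapsto \phi^{-1}(a)\) is an isomorphism \(B_\phi \to B\) in \(\grp{Hilb}(B)\); using the left unitor to identify \([B] \ract [B_\phi]\) with \([B_\phi]\) in \(\cV(B)\), this shows the image of \(\pi_0(\tau)\) does lie in \(\pic(B)_{[B]}\), so that the sequence makes sense. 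Exactness at \(\Unit(\Zent(B))\) is injectivity of an inclusion, and exactness at \(\Unit(B)\) is immediate once one notes that \(u \mapsto \Ad_u\) is a homomorphism into \(\Aut(B)\)—conjugation by a unitary is an isometric \(\ast\)-automorphism—whose kernel is the set of unitaries commuting with all of \(B\), namely \(\Unit(\Zent(B))\).

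For exactness at \(\Aut(B)\), I would show that \([B_\phi] = [B]\) in \(\pic(B)\) if and only if \(\phi = \Ad_u\) for some \(u \in \Unit(B)\). Given an arrow \(v : B \to B_\phi\) in \(\Pic(B)\), bimodule linearity forces \(v(a) = (a w)_\phi\) for \(w \coloneqq v(1)\), while the inner-product condition \(\hp{v(a)}{v(b)}_{B_\phi} = a^\ast b\) forces \(w^\ast w = 1\) and \(\phi(c) = w^\ast c w\) for all \(c \in B\); bijectivity of \(v\) makes \(w\) right-invertible, hence unitary with \(w^{-1} = w^\ast\), so that \(\phi = \Ad_{w^\ast}\). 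Conversely, if \(\phi = \Ad_u\) with \(u\) unitary, then \(a \mapsto (a u^\ast)_\phi\) is readily checked to be an arrow \(B \to B_\phi\) in \(\Pic(B)\). Hence \(\ker(\pi_0(\tau)) = \Set{\Ad_u \given u \in \Unit(B)}\).

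It remains to prove that \(\pi_0(\tau) : \Aut(B) \to \pic(B)_{[B]}\) is surjective. If \([E] \in \pic(B)_{[B]}\), then \([E] = [B] \ract [E] = [B]\) in \(\cV(B)\), so there is an isomorphism \(\iota : B \to E\) in \(\grp{Hilb}(B)\); setting \(e_0 \coloneqq \iota(1)\), every element of \(E\) is \(e_0 b\) for a unique \(b \in B\), with \(\hp{e_0 a}{e_0 b}_E = a^\ast b\). Then \(T \mapsto \Phi(T)\), where \(\Phi(T)\) is the unique element of \(B\) with \(T(e_0) = e_0 \Phi(T)\), is an isometric \(\ast\)-isomorphism \(\bL(E) \to B\); precomposing with the left-module structure \(\pi_E : B \to \bL(E)\), which is an isometric \(\ast\)-isomorphism by Corollary~\ref{cor:imprimitivity}, yields an isometric \(\ast\)-automorphism \(\psi \coloneqq \Phi \circ \pi_E\) of \(B\), characterised by \(b \cdot e_0 = e_0 \psi(b)\) for all \(b \in B\). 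I then claim \(\phi \coloneqq \psi^{-1} \in \Aut(B)\) works: the assignment \(e_0 a \mapsto (\phi(a))_\phi\) is a \(B\)-bimodule isomorphism \(E \to B_\phi\) preserving the relevant \(B\)-valued inner products, hence an arrow in \(\Pic(B)\), so \([E] = [B_\phi] = \pi_0(\tau)(\phi)\).

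The main obstacle is this last step: one must first recognise that membership in the stabiliser \(\pic(B)_{[B]}\) is precisely triviality of the underlying right pre-Hilbert module, and then genuinely extract the automorphism \(\phi\) from the left-module structure on \(E\), which is where Corollary~\ref{cor:imprimitivity}—and hence the \(\Cstar\)-algebraic hypothesis—is indispensable; the same hypothesis is also what upgrades \(w^\ast w = 1\) together with invertibility to unitarity of \(w\) in the exactness argument at \(\Aut(B)\). Everything else is bookkeeping with the defining data of \(\Pic(B)\).
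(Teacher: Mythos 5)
Your proposal is correct, and it follows essentially the route the paper takes: the paper itself defers this topological statement to Bass, but its own proof of the differential analogue (Theorem \ref{thm:stabilizer}) proceeds exactly as you do—extracting a unitary \(w = v(1)\) from an arrow \(B \to B_\phi\) for exactness at \(\Aut(B)\), and, for surjectivity onto the stabiliser, taking a generator \(e_0 = \iota(1)\) with \(\hp{e_0}{e_0} = 1\), defining \(\psi(b)\) by \(b \cdot e_0 = e_0\psi(b)\) via the isometric \(\ast\)-isomorphism \(\pi_E : B \to \bL(E)\) from Proposition \ref{prop:imprimitivity}/Corollary \ref{cor:imprimitivity}, and exhibiting \(E \cong B_{\psi^{-1}}\) in \(\Pic(B)\). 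Your identification of the stabiliser condition with triviality of the underlying right pre-Hilbert module via the left unitor, and your use of injectivity of \(v\) (together with \(w^\ast w = 1\)) to get \(ww^\ast = 1\), are both sound, so no gaps.
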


Note that this canonically identifies the outer automorphism group of \(B\) with a subgroup of \(\pic(B)\).
What is more surprising is that the \emph{entire} Picard group \(\pic(B)\) acts as isometric \(\ast\)-automorphisms on the centre of \(B\).

\begin{propositiondefinition}[{Fr\"{o}hlich~\cite[Thm.\ 2]{Fr73}, Beggs--Brzezi\'{n}ski~\cite[\S 10]{BeBrz14}}]\label{propdef:topfroh}
	The \emph{Fr\"{o}hlich homomorphism} of the unital pre-\Cstar-algebra \(B\) is the unique group homomorphism \(\Phi : \pic(B) \to \Aut(\Zent(B))\), such that, for every Hermitian line \(B\)-bimodule \(E\), the \emph{Fr\"{o}hlich automorphism} \(\Phi_{[E]}\) of \([E]\) satisfies
	\begin{equation}
		\forall b \in B, \, \forall x \in E, \quad \Phi_{[E]}(b)  x = x  b.
	\end{equation}
	Hence, the canonical left action of \(\pi_0(\Pic(B)) \eqqcolon \pic(B)\) on \(\pi_1(\Pic(B)) = \Unit(\Zent(B))\) is the left action induced by \(\Phi\).
\end{propositiondefinition}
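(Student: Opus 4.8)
The plan is to read $\Phi$ off from a fact already established: by Proposition~\ref{prop:imprimitivity} and Corollary~\ref{cor:imprimitivity}, for every Hermitian line $B$-bimodule $E$ the left action $\pi_E \colon B \to \bL(E)$ is an isometric $\ast$-isomorphism onto the pre-\Cstar-algebra $\bL(E)$ of right $B$-linear endomorphisms of $E$. First I would construct the \emph{Fr\"{o}hlich automorphism} of a fixed such $E$. For $b$ in the centre $\Zent(B)$, centrality makes right multiplication $R^E_b \colon x \mapsto x \cdot b$ right $B$-linear, so $R^E_b \in \bL(E)$; a one-line computation using centrality of $b$ together with the Hermitian symmetry and right $B$-linearity of the inner product on $E$ shows $R^E_{b^\ast} = (R^E_b)^\ast$ in $\bL(E)$, and the bimodule axiom gives $R^E_b \circ \pi_E(c) = \pi_E(c) \circ R^E_b$ for all $c \in B$. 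Since $\pi_E$ is a surjective $\ast$-isomorphism, $R^E_b$ is central in $\bL(E)$, so $\Phi_E(b) \coloneqq \pi_E^{-1}(R^E_b)$ lies in $\Zent(B)$, and $b \mapsto \Phi_E(b)$ is a unital $\ast$-homomorphism $\Zent(B) \to \Zent(B)$ characterised by $\Phi_E(b) \cdot x = x \cdot b$ for all $x \in E$. By faithfulness of $E$ as a left $B$-module (immediate from injectivity of $\pi_E$), this characterisation is unambiguous, and it shows that $\Phi_E$ depends only on the class $[E]$: every arrow $u \colon E \to F$ of $\Pic(B)$ is a $B$-bimodule isomorphism, so $u \circ R^E_b = R^F_b \circ u$ forces $\Phi_E = \Phi_F$.

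Next I would check functoriality and uniqueness. Writing $\Phi_{[E]} \coloneqq \Phi_E$, one has $\Phi_B = \id_{\Zent(B)}$ because central elements are two-sided in the trivial bimodule, and for $x \otimes y \in E \otimes_B F$ and $b \in \Zent(B)$ the computation $(x \otimes y)\cdot b = x \otimes \Phi_F(b)\,y = (x\cdot\Phi_F(b)) \otimes y = (\Phi_E(\Phi_F(b))\cdot x) \otimes y$ gives $\Phi_{[E]\cdot[F]} = \Phi_{[E]} \circ \Phi_{[F]}$. In particular $\Phi_{[\conj{E}]}$ is a two-sided inverse of $\Phi_{[E]}$, so $\Phi_{[E]} \in \Aut(\Zent(B))$ — a bijective $\ast$-homomorphism of unital pre-\Cstar-algebras with $\ast$-homomorphic inverse is automatically isometric — and $\Phi \colon \pic(B) \to \Aut(\Zent(B))$ is a group homomorphism. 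It is the unique homomorphism with the stated property, since that property already determines each $\Phi_{[E]}$.

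Finally, for the ``hence'' clause, I would identify $\pi_1(\Pic(B)) = \Aut_{\Pic(B)}(B)$ with $\Unit(\Zent(B))$ via $z \mapsto (a \mapsto za)$ — an inner-product-preserving $B$-bimodule automorphism of the trivial bimodule is necessarily multiplication by a unique central unitary — and then unwind the canonical action $[E] \act_{\grp{G}} \alpha$ from the formula of S\'{i}nh's corollary with $\grp{G} = \Pic(B)$. Bifunctoriality of $\otimes$ collapses its inner composite to $\bigl(\rho_E \circ (\id_E \otimes \alpha) \circ \rho_E^{-1}\bigr) \otimes \id_{\conj{E}}$, which equals $R^E_u \otimes \id_{\conj{E}}$ for $u \coloneqq \alpha(1) \in \Unit(\Zent(B))$; hence $[E] \act_{\grp{G}} \alpha = \coev_E^{-1} \circ (R^E_u \otimes \id_{\conj{E}}) \circ \coev_E$, with $\coev_E$ the coevaluation of Theorem~\ref{thm:laplaza}. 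Since $R^E_u = \pi_E(\Phi_E(u))$ and $\pi_E(c)$ acts by left multiplication by $c$, the endomorphism $R^E_u \otimes \id_{\conj{E}}$ of $E \otimes_B \conj{E}$ is precisely left multiplication by the central element $\Phi_E(u)$; and since $\coev_E$ is an arrow of $\Pic(B)$, hence left $B$-linear, it follows that $\coev_E^{-1} \circ (R^E_u \otimes \id_{\conj{E}}) \circ \coev_E = (c \mapsto \Phi_E(u)\,c)$, which corresponds to $\Phi_{[E]}(u)$ under the identification. This is the asserted compatibility. The genuine content of the proposition is packaged into Proposition~\ref{prop:imprimitivity}; the remaining obstacle is purely organisational, namely keeping straight the several a priori distinct left $B$-actions on $E \otimes_B \conj{E}$ while unwinding the coherence-theoretic formula in this last step.
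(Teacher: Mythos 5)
Your reconstruction of the algebraic content (existence of \(\Phi\), well-definedness on isomorphism classes, multiplicativity, uniqueness, and the unwinding of the canonical \(\pic(B)\)-action on \(\pi_1(\Pic(B)) \cong \Unit(\Zent(B))\)) is correct and in fact more self-contained than the paper, which simply cites Fr\"{o}hlich and Beggs--Brzezi\'{n}ski for all of it. However, the one point the paper's proof actually establishes---that each \(\Phi_{[E]}\) is \emph{isometric}, which is what membership in \(\Aut(\Zent(B))\) means in this paper's conventions---is exactly where your argument has a genuine gap. You justify it by asserting that a bijective \(\ast\)-homomorphism of unital pre-\Cstar-algebras with \(\ast\)-homomorphic inverse is automatically isometric. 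That is true for \Cstar-algebras, but false for incomplete pre-\Cstar-algebras: automatic contractivity comes from spectral radii computed in the completion, which the incomplete algebra does not see. Concretely, on the unital pre-\Cstar-algebra of polynomial functions on \([0,1]\) (dense in \(C[0,1]\) with the supremum norm), the substitution \(p(x) \mapsto p(2x)\) is a bijective \(\ast\)-homomorphism with \(\ast\)-homomorphic inverse, yet \(\norm{x^n} = 1\) while its image has norm \(2^n\), so it is not even bounded. Hence ``\(\Phi_{[E]}\) and \(\Phi_{[\conj{E}]}\) are mutually inverse \(\ast\)-homomorphisms'' does not by itself place \(\Phi_{[E]}\) in \(\Aut(\Zent(B))\).

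The gap is easily repaired, and the repair is essentially the entire content of the paper's proof: choosing a cobasis \((e_i)_{i=1}^n\) for \(E\), one has \(\Phi_{[E]}^{-1}(b) = \sum_{i=1}^n \hp{e_i}{b  e_i}\), which Cauchy--Schwarz bounds by \(\left(\sum_{i=1}^n \norm{e_i}^2\right)\norm{b}\); the same argument with \(\conj{E}\) bounds \(\Phi_{[E]}\), and a bounded bijective \(\ast\)-homomorphism with bounded inverse extends to the \Cstar-completions, where it is automatically isometric. Alternatively, within your own setup the estimate is even shorter: by \eqref{eq:rightnorm} right multiplication satisfies \(\norm{R^E_b} \leq \norm{b}\) as an operator on \(E\), so isometry of \(\pi_E\) (Proposition \ref{prop:imprimitivity}) gives \(\norm{\Phi_{[E]}(b)} \leq \norm{b}\), and applying this to \(\conj{E}\) gives contractivity of \(\Phi_{[E]}^{-1}\), whence isometry after completion. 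Either way, a boundedness argument must be supplied before any ``automatic isometry'' can be invoked; as written, that step fails.
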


\begin{proof}
	Relative to the references, it remains to show each Fr\"{o}hlich automorphism is isometric.
	Let \(E\) be a Hermitian line \(B\)-bimodule, and let \((e_i)_{i=1}^n\) be a cobasis for \(E\).
	Then,
	\(
		\norm{\Phi_{[E]}^{-1}(b)} = \norm*{\sum_{i=1}^n \hp{e_i}{b  e_i}} \leq \sum_{i=1}^n \norm{e_i} \norm{b  e} \leq \left(\sum_{i=1}^n \norm{e_i}^2\right) \norm{b}
	\) for every \(b \in B\), so that \(\Phi_{[E]}^{-1}\) is bounded. 
	Using \(\conj{E}\) instead now shows that \(\phi_{[E]} = \phi_{[\conj{E}]}^{-1}\) is also bounded.
\end{proof}

\begin{example}\label{ex:classical1}
	We continue from Example \ref{ex:classical0}.
	Let \(\pic(X)\) be the Picard group of isomorphism classes of complex line bundles over \(X\), which admits a right \(\Diff(X)\)-action by pullbacks.
	On the one hand, since any two Hermitian metrics on a line bundle are unitarily equivalent, the map
	\(
		\left([\cE] \mapsto [\Gamma(\cE)]\right) : \pic(X) \to \pic(C^\infty(X))
	\)
	is well-defined.
	On the other hand, \(\left(f \mapsto (f^{-1})^\ast\right) : \Diff(X) \to \Aut(C^\infty(X))\) is an isomorphism~\cite{Mrcun}, so let \(\Psi : \pic(C^\infty(X)) \to \Diff(X)\) be the resulting homomorphism induced by the Fr\"{o}hlich homomorphism of \(C^\infty(X)\).
	Thus, Serre--Swan duality yields a split exact sequence
	\(
		1 \to \pic(X) \xrightarrow{[\cE] \mapsto [\Gamma(\cE)]} \pic(C^\infty(X)) \xrightarrow{\Psi} \Diff(X) \to 1
	\)
	with right splitting \(\phi \mapsto \pi_0(\tau)((\phi^{-1})^\ast)\).
	Moreover, given the resulting isomorphism
	\[
		\left((\phi,[\cE]) \mapsto [\Gamma((\phi^{-1})^\ast\cE)] \cdot \pi_0(\tau)((\phi^{-1})^\ast)\right) : \Diff(X) \ltimes \pic(X) \to \pic(C^\infty(X)),
	\]
	we may identify the Fr\"{o}hlich homomorphism of \(C^\infty(X)\) with the quotient map
	\[
		\left((\phi,[\cE]) \mapsto \phi\right) : \Diff(X) \ltimes \pic(X) \to \Diff(X).
	\]
\end{example}

We conclude by noting certain simplications that arise when \(B\) behaves sufficiently like a \Cstar-algebra.
This will permit us to introduce our first main running example.
In what follows, recall that an element \(a\) of unital pre-\Cstar-algebre \(A\) is \emph{positive} whenever it is positive in the \Cstar-algebra completion of \(A\).

Let \(n \in \bN\), and let \(M_n(B)\) denote the unital \(\ast\)-algebra of \(n \times n\) matrices with entries in \(B\), which is defined by analogy with \(M_n(\bC)\); one calls \(M_n(B)\) a \emph{matrix algebra} over \(B\).
Recall that \(B^n\) defines a right pre-Hilbert \(B\)-module of finite type by Example \ref{ex:projection}; hence observe that matrix multiplication on left defines an injective \(\ast\)-homomorphism \(M_n(\bC) \to \bL(B^n)\).
Thus, the operator norm on \(\bL(B^n)\) pulls back to a \Cstar-norm on \(M_n(B)\).

\begin{definition}
	We say that \(B\) \emph{admits polar decompositions} if, for every \(n \in \bN\) and positive \(b \in M_n(B)\), there exists unique positive \(\sqrt{b} \in M_n(B)\) that satisfies \((\sqrt{b})^2 = b\) and is invertible in \(M_n(B)\) whenever \(b\) is.
	In this case, given \(n \in \bN\), the \emph{polar decomposition} of invertible \(b \in M_n(B)\) is \(b = \operatorname{sgn}(b) \lvert b \rvert\), where \(\lvert b \rvert \coloneqq \sqrt{b^\ast b} \in M_n(B)\) is positive and invertible and \(\operatorname{sgn}(b) \coloneqq b \lvert b \rvert^{-1} \in M_n(B)\) is unitary.
\end{definition}

For example, a unital \Cstar-algebra admits polar decompositions by the holomorphic functional calculus.
More generally, a unital pre-\Cstar-algebra \(B\) admits polar decompositions whenever it and all its matrix algebras are closed under the holomorphic functional calculus in their respective \Cstar-closures.

Finally, recall that a \(B\)-valued inner product on a right \(B\)-module \(E\) is \emph{algebraically full} whenever it satisfies \(
		\operatorname{Span}_{\bC}\Set{\hp{x}{y} \given x,y \in E} = B
	\).
	
\begin{proposition}\label{prop:localcstar}
	Suppose that \(B\) is unital pre-\Cstar-algebra that admits polar decompositions.
	Let \(E\) be a \(B\)-bimodule, let \(\hp{}{}_E\) be a \(B\)-valued inner product on \(E\), and let \(\hp{}{}_{\conj{E}}\) be a \(B\)-valued inner product on \(\conj{E}\).  
	Then \(E\) defines a Hermitian line \(B\)-bimodule with respect to \(\hp{}{}_E\) and \(\hp{}{}_{\conj{E}}\) if and only if the following conditions are all satisfied:
	\begin{enumerate}[leftmargin=*]
		\item \(\hp{}{}_E\) is algebraically full and satisfies \eqref{eq:posdef1}, \eqref{eq:imprimitivity1a} and \eqref{eq:imprimitivity2a};
		\item \(\hp{}{}_{\conj{E}}\) is algebraically full and satisfies \eqref{eq:posdef1}, \eqref{eq:imprimitivity1b} and \eqref{eq:imprimitivity2b};
		\item the \(B\)-valued inner products \(\hp{}{}_E\) and \(\hp{}{}_{\conj{E}}\) respectively satisfy \eqref{eq:imprimitivity3}.
	\end{enumerate}
\end{proposition}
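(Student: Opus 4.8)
The plan is to prove the two implications separately; necessity is essentially formal, while sufficiency hinges on a single non-formal input from the theory of finite tight Parseval frames, which is the only place the polar-decomposition hypothesis is used. For \emph{necessity}: if $E$ is a Hermitian line $B$-bimodule and $(\epsilon_i)_{i=1}^n$ is a cobasis for $\hp{}{}_E$, then $b = \sum_{i=1}^n \hp{\epsilon_i}{\epsilon_i b}_E$ for every $b \in B$, so $\hp{}{}_E$ is algebraically full, and symmetrically for $\hp{}{}_{\conj E}$; Proposition~\ref{prop:imprimitivity} realises $E$ and $\conj E$ as $B$-self-correspondences of finite type, so both inner products satisfy \eqref{eq:posdef1} (indeed \eqref{eq:posdef2}); and \eqref{eq:imprimitivity1a}, \eqref{eq:imprimitivity2a}, \eqref{eq:imprimitivity1b}, \eqref{eq:imprimitivity2b}, \eqref{eq:imprimitivity3} hold by definition, including the instance of \eqref{eq:imprimitivity3} obtained by interchanging $E$ and $\conj E$, since the defining conditions for a Hermitian line bimodule are symmetric under this interchange (using $\hp{\cconj{x}}{\cconj{y}} = \hp{x}{y}$ as in Theorem-Definition~\ref{thmdef:picard}).

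For \emph{sufficiency}, assume (1)--(3). Since all of \eqref{eq:imprimitivity1a}--\eqref{eq:imprimitivity3} are among the hypotheses, the task reduces to producing a cobasis for each of $\hp{}{}_E$ and $\hp{}{}_{\conj E}$. First I would use algebraic fullness of $\hp{}{}_{\conj E}$ to write $1 = \sum_{j=1}^N \hp{\conj{e_j}}{\conj{f_j}}_{\conj E}$ (absorbing scalars into the $f_j$); then \eqref{eq:imprimitivity3} yields the reconstruction identity $z = \sum_{j=1}^N e_j\hp{f_j}{z}_E$ for all $z \in E$. This exhibits $E$ as a retract of $B^N$, hence finitely generated projective as a right $B$-module; and, combined with the Cauchy--Schwarz inequality for the positive semidefinite $B$-valued form $\hp{}{}_E$, it forces $\hp{x}{x}_E = 0 \Rightarrow \hp{x}{y}_E = 0$ for all $y \Rightarrow x = 0$, so that \eqref{eq:posdef2} holds. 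The symmetric argument---via algebraic fullness of $\hp{}{}_E$ and the interchanged instance of \eqref{eq:imprimitivity3}---shows that $\conj E$ is finitely generated projective and $\hp{}{}_{\conj E}$ is positive definite.

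Next, because $B$ admits polar decompositions, a finitely generated projective right $B$-module with positive-definite $B$-valued inner product carries a finite basis---a finite tight Parseval frame---obtained by renormalising any finite frame by the inverse square root of its (positive, invertible) frame operator; cf.\ \cite{FL}. Hence $(E,\hp{}{}_E)$ and $(\conj E,\hp{}{}_{\conj E})$ are right pre-\Cstar-modules of finite type, so $\bL(E)$ and $\bL(\conj E)$ are unital pre-\Cstar-algebras and Lemma~\ref{lem:coev} applies. Then, as in the proof of Proposition~\ref{prop:imprimitivity}: \eqref{eq:imprimitivity1a} and \eqref{eq:imprimitivity2a} make the left action $\pi_E : B \to \bL(E)$ a bounded unital $\ast$-homomorphism; algebraic fullness together with positive-definiteness of $\hp{}{}_{\conj E}$ already suffices to make $\conj E$ faithful as a right $B$-module---equivalently, $\pi_E$ injective---by a Cauchy--Schwarz argument (cf.\ \cite[Lemma~1.5]{KW}); Lemma~\ref{lem:coev} with \eqref{eq:imprimitivity3} makes $\pi_E$ surjective, with $\pi_E^{-1}(t) = \sum_j \hp{\conj{t e_j}}{\conj{e_j}}_{\conj E}$ bounded for any basis $(e_j)$ of $E$; so $\pi_E$ is an isometric $\ast$-isomorphism, and symmetrically so is $\pi_{\conj E}$. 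Finally, fixing a basis $(e_j)_{j=1}^m$ for $(E,\hp{}{}_E)$, equation \eqref{eq:imprimitivity3} gives, for all $z \in E$,
\[
	z = \sum_j e_j\hp{e_j}{z}_E = \sum_j \hp{\conj{e_j}}{\conj{e_j}}_{\conj E}\, z = \pi_E\mleft(\sum_j \hp{\conj{e_j}}{\conj{e_j}}_{\conj E}\mright) z,
\]
so comparison with $\pi_E(1) = \id_E$ and injectivity of $\pi_E$ force $\sum_j \hp{\conj{e_j}}{\conj{e_j}}_{\conj E} = 1$: the family $(\conj{e_j})_{j=1}^m$ is a cobasis for $\hp{}{}_{\conj E}$. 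Symmetrically, a basis for $(\conj E,\hp{}{}_{\conj E})$ yields a cobasis for $\hp{}{}_E$. Hence both inner products are strictly full, and $E$ is a Hermitian line $B$-bimodule.

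I expect the main obstacle to be precisely the step ``finitely generated projective with positive-definite $B$-valued inner product $\Rightarrow$ has a finite basis'': this requires the frame operator of a finite generating family---a priori invertible only after completion---to have a positive invertible square root \emph{inside} $\bL(E)$, equivalently inside a matrix algebra over $B$, which a general unital pre-\Cstar-algebra does not provide, and which is exactly what the systematic use of finite tight Parseval frames \cite{FL} together with the polar-decomposition hypothesis buys. Everything else is a direct unwinding of the imprimitivity relations along the lines already established for Proposition~\ref{prop:imprimitivity}.
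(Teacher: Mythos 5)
Your forward implication is fine and agrees with the paper, which dismisses it as trivial. The problem is in the backward implication: its pivotal step --- ``a finitely generated projective right \(B\)-module with positive-definite \(B\)-valued inner product has a finite basis because \(B\) admits polar decompositions'' --- is asserted with a ``cf.\ \cite{FL}'' but never proved, and you yourself flag it as the main obstacle. Your sketch for it (renormalise a finite frame by the inverse square root of its positive \emph{invertible} frame operator) needs more than the stated hypothesis supplies: ``admits polar decompositions'' only guarantees positive square roots in \(M_n(B)\), together with invertibility of the square root when the element is \emph{already} invertible in \(M_n(B)\). It gives you neither invertibility of the frame operator of a generating family inside the pre-\Cstar-setting (as opposed to in the \Cstar-completion), nor inverses in \(M_n(B)\) of elements that are merely invertible in the completion, nor functional calculus in the corner \(\bL(E) \cong pM_n(B)p\) where the renormalisation would have to take place; Frank--Larson's frame theory is a statement about Hilbert modules over genuine \Cstar-algebras, so the transfer is exactly the point at issue. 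Everything downstream in your argument (surjectivity of \(\pi_E\) via Lemma~\ref{lem:coev}, converting a basis of \(E\) into a cobasis for \(\hp{}{}_{\conj{E}}\)) depends on this unproved step.

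The paper's proof shows the detour through bases, Lemma~\ref{lem:coev} and Proposition~\ref{prop:imprimitivity} is unnecessary, and in doing so avoids needing any invertibility at all: from algebraic fullness of \(\hp{}{}_E\) choose \((x_i)_{i=1}^n\) and \((y_i)_{i=1}^n\) with \(\sum_i \hp{x_i}{y_i}_E = 1\), set \(X \coloneqq \bigl(\hp{\conj{x_i}}{\conj{x_j}}_{\conj{E}}\bigr)_{i,j} \in M_n(B)\), and use \eqref{eq:imprimitivity3} to rewrite \(1 = \sum_{i,j}\hp{y_i}{X_{ij}y_j}_E\); positivity of \(X\) follows from \eqref{eq:posdef1} for \(\hp{}{}_{\conj{E}}\) via \cite[Cor.\ 2.7]{Rieffel74}, and the polar-decomposition hypothesis is invoked only to produce \(a \in M_n(B)\) with \(a^\ast a = X\) (no invertibility required), whereupon \(\bigl(\sum_k a_{ik}y_k\bigr)_i\) is a cobasis for \(\hp{}{}_E\); the same argument with \(E\) and \(\conj{E}\) interchanged gives a cobasis for \(\hp{}{}_{\conj{E}}\). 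To salvage your outline you would have to either prove your frame-existence claim from the stated hypothesis --- which, at least along the lines you sketch, I do not see how to do --- or replace it by this direct cobasis construction.
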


\begin{proof}
	The forward implication is trivial, so we prove the backward implication.
	Suppose that all three conditions are satisfied; it remains to show that both \(\hp{}{}_E\) and \(\hp{}{}_{\conj{E}}\) are strictly full.
	
	Since \(\hp{}{}_E\) is algebraically full, choose finite families \((x_i)_{i=1}^n\) and \((y_i)_{i=1}^n\) in \(E\) that satisfy \(\sum_{i=1}^n \hp{x_i}{y_i}_E = 1\).
	Let \(X \coloneqq \left(\hp{\conj{x_i}}{\conj{x_j}}_{\conj{E}}\right)_{i,j=1}^n \in M_n(B)\), so that
	\begin{multline*}
		1 = \sum_{i,j=1}^n \hp{y_i}{x_i}_E\hp{x_j}{y_j}_E = \sum_{i,j=1}^n \hp{y_i}{x_i  \hp{x_j}{y_j}_E}_E = \sum_{i,j=1}^n \hp{y_i}{\hp{\conj{x_i}}{\conj{x_j}}_{\conj{E}} y_j}_E\\ = \sum_{i,j=1}^n \hp{y_i}{X_{ij}  y_j}_E.
	\end{multline*}
	by \eqref{eq:imprimitivity3}.
	Applying \cite[Cor.\ 2.7]{Rieffel74} to \(X\) as a bounded operator on \(B^n\) with the \(B\)-valued inner product of Example \ref{ex:projection} shows that \(X \geq 0\).
	By our hypothesis on \(B\), there exists \(a = (a_{ij})_{i,j=1}^n \in M_n(B)\), such that \(a^\ast a = X\); hence, \(\left(\sum_{k=1}^n a_{ik}y_k\right)_{i=1}^n\) is a cobasis for \(\hp{}{}_E\).
	An identical argument shows that \(\hp{}{}_{\conj{E}}\) is strictly full.
\end{proof}

We now introduce our first main running example.
Let \(\theta \in \bR\), so that the corresponding (continuous) \emph{NC \(2\)-torus} is the universal \Cstar-algebra \(C_\theta(\bT^2)\) generated by unitaries \(u\) and \(v\) satisfying
\(
	vu = \mathrm{e}^{2\pi\iu{}\theta}uv.
\)
The corresponding \emph{smooth NC \(2\)-torus} \(C^\infty_\theta(\bT^2)\) is the dense unital \(\ast\)-subalgebra of \(C_\theta(\bT^2)\) consisting of Laurent series in \(u\) and \(v\) with rapidly decaying coefficients, which admits polar decompositions since it and all its matrix algebras are closed under the holomorphic functional calculus~\cite{Connes80}.

\begin{example}\label{ex:heis1}
	Let \(\theta \in \bR\) be a quadratic irrationality, so that the subgroup
	\[
		\Gamma_\theta \coloneqq \Set*{g \in \operatorname{SL}(2,\bZ) \given \tfrac{g_{11}\theta+g_{12}}{g_{21}\theta+g_{22}} = \theta, \, g_{21}\theta+g_{22} > 0}
	\]
	of \(\operatorname{SL}(2,\bZ)\) is non-trivial and hence infinite cyclic~\cite[Thm 5.2.10]{HK}.
	Connes's Heisenberg modules~\cite{Connes80} over the unital pre-\Cstar-algebra \(C^\infty_\theta(\bT^2)\) yield, in particular, a homomorphism \(E : \Gamma_\theta \to \Pic(C^\infty_\theta(\bT^2))\) as follows.
	\begin{enumerate}[leftmargin=*]
		\item Given \(g \in \Gamma_\theta\), let \(E(g)\) be the basic Heisenberg module of rank \(g_{21}\theta+g_{22}\) and degree \(g_{21}\)~\cite[\S 1.1]{PolishchukSchwarz}, which defines a Hermitian line \(C^\infty_\theta(\bT^2)\)-bimodule by a result of Rieffel~\cite[Thm 2.15]{Rieffel88} together with Proposition \ref{prop:localcstar}.
		\item Since \(E(1) = C^\infty_\theta(\bT^2)\), set \(E^{(0)} \coloneqq \id_{C^\infty_\theta(\bT^2)}\).
		\item Given \(g,h \in \Gamma_\theta\), let \(E^{(2)}_{g,h} : E(g) \otimes_{A^\infty_\theta} E(h) \to E(gh)\) be the isomorphism of \(C^\infty_\theta(\bT^2)\)-bimodules constructed by Schwarz~\cite[\S 3]{Schwarz} and Dieng--Schwarz~\cite{DiengSchwarz}, which is an isomorphism of Hermitian line \(C^\infty_\theta(\bT^2)\)-bimodules by a result of Vlasenko~\cite[Thm 6.1]{Vlasenko}.
	\end{enumerate}
	In particular, that the functor \(E\) is monodal with respect to \(E^{(0)}\) and \(E^{(2)}\) reduces to a result of Polishchuk--Schwarz~\cite[Prop.\ 1.2]{PolishchukSchwarz}.
\end{example}

\subsection{The differential Picard \texorpdfstring{\(2\)}{2}-group of an NC manifold}

At last, we build on results of Beggs--Majid~\cite{BeMa18} to construct a coherent \(2\)-group of NC Hermitian line bundles with connection over an NC manifold, the \emph{differential Picard \(2\)-group}.

Let us recall some preliminary definitions.
A \emph{graded algebra} is a unital \(\bC\)-algebra \(\Omega\) with vector space decomposition \(\Omega = \bigoplus_{k=0}^\infty \Omega^k\), such that \(1 \in \Omega^0\) and \(\Omega^j \cdot \Omega^{j+k} \subseteq \Omega^{j+k}\) for all \(j,k\in\bN_0\).
Hence, a \emph{graded \(\ast\)-algebra} is a graded algebra \(\Omega\) with a unit- and grading-preserving \(\bC\)-antilinear involution \(\ast : \Omega \to \Omega\), such that
\[
	\forall j,k \in \bN_0, \, \forall \alpha \in \Omega^j, \, \forall \beta \in \Omega^k, \quad (\alpha \beta)^\ast = (-1)^{jk}\beta \alpha.
\]
At last, given a unital pre-\Cstar-algebra \(B\), a \emph{graded \(\ast\)-algebra over \(B\)} is a graded \(\ast\)-algebra \(\Omega\) together with a unital \(\ast\)-isomorphism \(B \to \Omega^0\), which we suppress; in this case, we denote by \(\Aut(\Omega)\) the group of all grading- and \(\ast\)-preserving automorphisms \(\phi\) of \(\Omega\) as a unital \(\bC\)-algebra that restrict to an isometric \(\ast\)-automorphism of \(B\).

Now, suppose that \(B\) is a unital pre-\Cstar-algebra.
We define a \emph{\(\ast\)-quasi-differential graded algebra} or \emph{\(\ast\)-quasi-\textsc{dga}} over \(B\) to be a pair \((\Omega,\du)\), where \(\Omega\) is a graded \(\ast\)-algebra over \(B\) and \(\du : \Omega \to \Omega\) is a \(\ast\)-preserving complex linear map that satisfies \(\du(\Omega^k) \subset \Omega^{k+1}\) for all \(k \in \bN_0\) together with the graded Leibniz rule
\[
	\forall k \in \bN_0, \, \forall \alpha \in \Omega^k, \, \forall \beta \in \Omega, \quad \du_B(\alpha  \beta) = \du_B(\alpha)  \beta + (-1)^k \alpha  \du_B(\beta);
\]
hence, its \emph{graded centre} is the graded \(\ast\)-subalgebra \(\Zent(\Omega)\) of \(\Omega\) defined by
\[
	\forall m \in \bN_0, \quad \Zent(\Omega)^m \coloneqq \Set{\omega \in \Omega^m \given \forall n \in \bN_0, \, \forall \xi \in \Omega^n_B, \, \omega \xi = (-1)^{mn} \xi\omega},
\]
which is closed under \(\du\), and its \emph{dimension} (when it exists) is the largest \(N \in \bN\) such that \(\Omega^N \neq 0\) and \(\Omega^k = 0\) for all \(k > N\).
At last, we call \((\Omega,\du)\) a \emph{\(\ast\)-exterior algebra} over \(B\) whenever \(\Omega\) is generated by \(B\) and \(\du(B)\) and \(\du^2 = 0\).

Finally, we define a concrete category \(\grp{QDGA}\) as follows:
\begin{enumerate}[leftmargin=*]
	\item an object \((B;\Omega,\du)\) consists of a unital pre-\Cstar-algebra \(B\) and \(\ast\)-quasi-\textsc{dga} \((\Omega,\du)\) over \(B\);
	\item an arrow \(f : (B_1,\Omega_1,\du_1) \to (B_2,\Omega_2,\du_2)\) is a grading- and \(\ast\)-preserving homomorphism of unital \(\bC\)-algebras \(f : \Omega_1 \to \Omega_2\) that restricts to a bounded (hence contractive) \(\ast\)-homomorphism \(\rest{f}{B_1} : B_1 \to B_2\) and satisfies \(f \circ \du_1 = \du_2 \circ f\).
\end{enumerate}
In particular, given a \(\ast\)-quasi-\textsc{dga} \((\Omega,\du)\) over a unital pre-\Cstar-algebra \(B\), we denote by \(\Aut(\Omega,\du)\) the automorphism group of \((B;\Omega,\du)\) in this category.

From now on, let \(B\) be a unital pre-\Cstar-algebra with \(\ast\)-exterior calculus \((\Omega_B,\du_B)\), which we view as an NC manifold.
Given a \(B\)-bimodule \(E\), we shall apply the following Sweedler-type notation to elements of \(E \otimes_B \Omega^1_B\) and \(\Omega^1_B \otimes_B E\), respectively:
\[
	\forall \eta \in E \otimes_B \Omega^1_B, \quad \leg{\eta}{0} \otimes \leg{\eta}{1} \coloneqq \eta; \quad \quad \forall \xi \in \Omega^1_B \otimes_B E, \quad \leg{\xi}{-1} \otimes \leg{\xi}{0} \coloneqq \xi.
\]

We now recall the generalization of unitary connection appropriate to our setting.
Let \(E\) be a right pre-Hilbert \(B\)-module of finite type.
Extend the \(B\)-valued inner product \(\hp{}{}\) on \(E\) to a real-bilinear map \(\hp{}{} : E \otimes_B \Omega_B \times E \otimes_B \Omega_B \to \Omega_B\) by
\begin{equation*}
	\forall x,y \in E, \, \forall \alpha,\beta \in \Omega_B, \quad \hp{x \otimes \alpha}{y \otimes \beta} \coloneqq \alpha^\ast \hp{x}{y} \beta.
\end{equation*}
This extension satisfies
\begin{align}
	\forall \xi,\upsilon \in E \otimes_B \Omega_B, \, \forall \beta \in \Omega_B, && \hp{\xi}{\upsilon  \beta} &= \hp{\xi}{\upsilon}  \beta,\label{eq:formip1}\\
	\forall \xi,\upsilon \in E \otimes_B \Omega_B, && \hp{\upsilon}{\xi} &= (-1)^{\deg{\xi}\deg{\upsilon}} \hp{\xi}{\upsilon}^\ast \label{eq:formip2}.
\end{align}
Following Connes~\cite[Def.\ \textsc{ii}.18]{Connes85}, one now defines \emph{right Hermitian connection} on \(E\) to be a complex-linear map \(\nabla : E \to E \otimes_B \Omega^1_B\), such that
\begin{align}
	\forall x \in E, \, \forall b \in B, && \nabla(x  b) &= \nabla x  b + x \otimes \du_B b,\label{eq:rightleibniz}\\
	\forall x,y \in E, && \du_B \hp{x}{y} &= \hp{\nabla x}{y \otimes 1} + \hp{x \otimes 1}{\nabla y}.\label{eq:hermconn}
\end{align}
One can now show that there exists unique complex-linear \(\nabla : E \otimes_B \Omega_B \to E \otimes_B \Omega_B\) extending the right Hermitian connection \(\nabla\), such that
\begin{align*}
	\forall \eta \in E \otimes_B, \, \forall \beta \in \Omega_B, && \nabla(\eta  \beta) &= \nabla(\eta) \beta + (-1)^{\deg{\eta}} \eta  \du \beta,\\
	\forall \xi,\upsilon \in E \otimes_B \Omega_B, && \du_B \hp{\xi}{\upsilon} &= \hp{\nabla \xi}{\eta} + (-1)^{\deg{\xi}}\hp{\xi}{\nabla \eta}.
\end{align*}

\begin{definition}[{Beggs--Majid~\cite[Def. 5.1 \& \S 5.2]{BeMa18}}]
	Let \(E\) be a \(B\)-self-cor\-re\-spon\-dence of finite type.
	A \emph{generalised braiding} for \(E\) is an isomorphism of graded \(B\)-bimodules \(\sigma : \Omega_B \otimes_B E \to E \otimes_B \Omega_B\) that extends \(\rho_E^{-1} \circ \lambda_E : B \otimes_B E \to E \otimes_B B\) and satisfies
	\begin{equation}\label{eq:braidassoc}
		\forall \alpha,\beta \in \Omega_B, \, \forall x \in E, \quad \sigma(\alpha \otimes \leg{\sigma(\beta \otimes x)}{0})  \leg{\sigma(\beta \otimes x)}{1} = \sigma(\alpha\beta \otimes x).
	\end{equation}
	Hence, a \emph{Hermitian bimodule connection} on \(E\) is a pair \((\sigma,\nabla)\), where \(\sigma\) is a Hermitian generalised braiding on \(E\) and \(\nabla\) is a Hermitian right connection on \(E\), such that
	\begin{equation}
		\forall \beta \in \Omega_B, \, \forall \xi \in E \otimes_B \Omega_B, \quad \nabla(\beta  \xi) = \du_B(\beta)  \xi + (-1)^{\deg{\beta}}\beta  \nabla\xi,\label{eq:leftleibniz}
	\end{equation}
	where \(E \otimes_B \Omega_B\) carries the graded \(\Omega_B\)-bimodule structure given by
	\begin{equation}
	\forall \alpha,\beta \in \Omega_B, \, \forall \xi \in E \otimes_B \Omega_B, \quad \alpha  \xi \beta \coloneqq \sigma(\alpha \otimes \leg{\xi}{0})  \leg{\xi}{1}\beta.
	\end{equation}
\end{definition}

\begin{example}\label{ex:trivialconn}
	The pair \((\sigma_B, \nabla_B) \coloneqq (\lambda_{\Omega_B}^{-1} \circ \rho_{\Omega_B},\lambda_{\Omega_B}^{-1} \circ \du)\) defines a Hermitian bimodule connection on the trivial Hermitian line \(B\)-bimodule \(B\); where convenient, we shall abuse notation and identify \((\sigma_B,\nabla_B)\) with \((\id_{\Omega_B},\du)\).
\end{example}

If \(E\) is a \(B\)-self-correspondence of finite type with right Hermitian connection \(\nabla_E\), then there exists at most one Hermitian generalised braiding \(\sigma_E\) on \(E\) that makes \((\sigma_E,\nabla_E)\) into a Hermitian bimodule connection.
Moreover, in this case,
\begin{align}
	\forall k \in \bN_0, \, \forall \xi \in E \otimes_B \Omega^k_B,\, \forall \upsilon \in E\otimes_B \Omega_B, && \du_B\hp{\xi}{\upsilon} &= \hp{\nabla_E\xi}{\upsilon} + (-1)^{k}\hp{\xi}{\nabla_E\upsilon},\label{eq:hermconndiff}\\
	\forall \beta \in \Omega_B, \, \forall \xi,\upsilon \in E \otimes_B \Omega_B, && \hp{\alpha  \xi}{\upsilon} &= \hp{\xi}{\alpha^\ast  \upsilon}; \label{eq:braidherm}
\end{align}
by \eqref{eq:braidassoc}, it suffices to check \eqref{eq:braidherm} when \(\beta \in \du(B)\) and \(\xi,\upsilon \in E \otimes 1\).

We shall use the following characterisation of Hermitian bimodule connections.

\begin{proposition}[{Beggs--Majid~\cite[Lemma 5.2]{BeMa18}}]\label{prop:bemacriterion}
	Let \(E\) be a \(B\)-self-correspondence of finite type, let \(\sigma\) be a generalised braiding on \(E\), and let \(\nabla\) be a Hermitian right connection on \(E\). 
	Then \((\sigma,\nabla)\) defines a Hermitian bimodule connection on \(E\) if and only if the following both hold:
	\begin{align}
		\forall b \in B, \, \forall x \in E, && \nabla(b  \xi) &= \sigma(\du_B b \otimes x) + b  \nabla x,\label{eq:bemacriterion1}\\
		\forall b \in B, \, \forall x \in E, && \nabla^2(b  \xi) &= b  \nabla^2 \xi.\label{eq:bemacriterion2}
	\end{align}
\end{proposition}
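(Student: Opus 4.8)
The plan is to observe that, given the standing hypotheses, essentially all of the defining data of a \emph{Hermitian bimodule connection} $(\sigma,\nabla)$ are already in place except for the graded left Leibniz rule \eqref{eq:leftleibniz}: indeed $\nabla$ is assumed Hermitian, and the Hermiticity of the braiding \eqref{eq:braidherm} I would extract from \eqref{eq:bemacriterion1} together with the Hermitian property \eqref{eq:hermconn} of $\nabla$ --- as remarked in the text just before the statement, it suffices to check \eqref{eq:braidherm} for $\beta \in \du_B(B)$ and $\xi,\upsilon \in E \otimes 1$, where \eqref{eq:bemacriterion1} rewrites $\sigma(\du_B b \otimes x)$ as $\nabla(b x) - b\cdot\nabla x$ and reduces the claim to the Hermitian identity for $\nabla$. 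So the real content is the equivalence of \eqref{eq:leftleibniz} (for all $\beta \in \Omega_B$, $\xi \in E \otimes_B \Omega_B$) with the conjunction of its degree-zero instance \eqref{eq:bemacriterion1} and the left-$B$-linearity of the curvature \eqref{eq:bemacriterion2}.

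The forward implication is immediate: restricting \eqref{eq:leftleibniz} to $\beta = b \in B = \Omega^0_B$ and $\xi = x \otimes 1$ gives \eqref{eq:bemacriterion1} (using $\du_B 1 = 0$ and $\du_B(b)\cdot(x\otimes 1) = \sigma(\du_B b \otimes x)$), and applying $\nabla$ twice to \eqref{eq:leftleibniz} makes the two cross terms cancel by the sign $(-1)^{\deg\beta} + (-1)^{\deg\beta+1} = 0$, leaving $\nabla^2(\beta\cdot\xi) = \du_B^2\beta\cdot\xi + \beta\cdot\nabla^2\xi = \beta\cdot\nabla^2\xi$ since $\du_B^2 = 0$; specialising again yields \eqref{eq:bemacriterion2}. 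For the converse I would first upgrade the two hypotheses from $E$ to $E \otimes_B \Omega_B$: writing a general element as a sum of $x \otimes \gamma$ and using the defining right Leibniz rule of the extended $\nabla$ (the one characterised by its universal property right before Example~\ref{ex:trivialconn}) together with \eqref{eq:bemacriterion1}, one obtains $\nabla(b\cdot\xi) = \sigma(\du_B b \otimes \leg{\xi}{0})\leg{\xi}{1} + b\cdot\nabla\xi$ for all $\xi \in E \otimes_B \Omega_B$, which --- since $\du_B b\cdot\xi = \sigma(\du_B b \otimes \leg{\xi}{0})\leg{\xi}{1}$ by definition of the $\Omega_B$-bimodule structure --- is exactly \eqref{eq:leftleibniz} for $\beta \in B$; the same bookkeeping (with the $\du_B^2 = 0$ cross-term cancellation again) shows $\nabla^2$ is right $\Omega_B$-linear, so \eqref{eq:bemacriterion2} extends to $\nabla^2(b\cdot\xi) = b\cdot\nabla^2\xi$ on all of $E \otimes_B \Omega_B$.

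Now apply $\nabla$ to the identity $\nabla(b\cdot\xi) = \du_B b\cdot\xi + b\cdot\nabla\xi$ just obtained: on the left $\nabla^2(b\cdot\xi) = b\cdot\nabla^2\xi$, and on the right the already-established degree-zero Leibniz rule (applied with $\nabla\xi$ in place of $\xi$) gives $\nabla(b\cdot\nabla\xi) = \du_B b\cdot\nabla\xi + b\cdot\nabla^2\xi$; the terms $b\cdot\nabla^2\xi$ cancel, leaving $\nabla(\du_B b\cdot\xi) = -\,\du_B b\cdot\nabla\xi$, which is \eqref{eq:leftleibniz} for $\beta = \du_B b$ (as $\du_B^2 b = 0$ and $(-1)^{\deg \du_B b} = -1$). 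Finally let $S \subseteq \Omega_B$ be the set of homogeneous $\beta$ for which \eqref{eq:leftleibniz} holds for all $\xi$; it is a graded linear subspace, it contains $B$ and $\du_B(B)$ by the above, and it is closed under products because $(\beta_1\beta_2)\cdot\xi = \beta_1\cdot(\beta_2\cdot\xi)$ --- this is precisely the associativity of the $\Omega_B$-bimodule structure on $E \otimes_B \Omega_B$ encoded by \eqref{eq:braidassoc} --- while $\du_B(\beta_1\beta_2)$ expands by the graded Leibniz rule for $\du_B$ to match the required output. Since $\Omega_B$ is generated by $B$ and $\du_B(B)$, we conclude $S = \Omega_B$, i.e.\ \eqref{eq:leftleibniz} holds, and hence $(\sigma,\nabla)$ is a Hermitian bimodule connection. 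I expect the only real friction to be the sign- and degree-bookkeeping in two spots: verifying the upgraded forms of \eqref{eq:bemacriterion1} and \eqref{eq:bemacriterion2} on homogeneous generators $x\otimes\gamma$ (the extended $\nabla$ being given only by a universal property), and checking that in the $\du_B b$ step the curvature contributions genuinely telescope --- both routine but where $\du_B^2 = 0$ does the essential work.
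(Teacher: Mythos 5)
Your argument is correct, and it fills in a proof that the paper itself omits: Proposition~\ref{prop:bemacriterion} is quoted there directly from Beggs--Majid's Lemma~5.2 with no proof given. Your route — forward direction by specialising \eqref{eq:leftleibniz} and using $\du_B^2=0$; converse by upgrading \eqref{eq:bemacriterion1}--\eqref{eq:bemacriterion2} to $E\otimes_B\Omega_B$ via the right Leibniz rule, telescoping to get the $\du_B(B)$ case, and then closing under products using \eqref{eq:braidassoc} together with the standing assumption that $\Omega_B$ is generated by $B$ and $\du_B(B)$, with \eqref{eq:braidherm} recovered from \eqref{eq:bemacriterion1} and \eqref{eq:hermconn} on generators — is exactly the standard argument the citation points to, and the bookkeeping you flag as the only friction does go through as you describe.
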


We now introduce our first nontrivial family of examples of Hermitian bimodule connections on Hermitian line bimodules.
Let \(\theta \in \bR\).
Recall that the smooth  \(2\)-torus \(C^\infty_\theta(\bT^2)\) admits a canonical \(\ast\)-exterior calculus \((\Omega_\theta(\bT^2),\du)\) due to Connes~\cite{Connes80}.
First, let \(\delta_1\) and \(\delta_2\) be the unique \(\ast\)-derivations on \(C^\infty_\theta(\bT^2)\), such that, respectively
\[
	\forall (m,n) \in \bZ^2, \quad \delta_1(U^m V^n) = 2\pi\iu{}m\,U^m V^n, \quad \delta_2(U^m V^n) = 2\pi\iu{}n\,U^m V^n;
\]
Then let \(\Omega_\theta(\bT^2)\) be the graded \(\ast\)-algebra over \(C^\infty_\theta(\bT^2)\) generated by central self-adjoint elements \(e^1,e^2 \in \Omega^1_\theta(\bT^2)\) satisfying \((e^1)^2 = (e^2)^2 = e^1  e^2 + e^2  e^1 = 0\), and let \(\du\) be the unique \(\ast\)-derivation of degree \(1\) on \(\Omega_\theta(\bT^2)\), such that
\[
	\forall a \in C^\infty_\theta(\bT^2), \,\, \du a \coloneqq \delta_1(a)  e^1 + \delta_2(a)  e^2; \quad \du e^1 \coloneqq 0; \quad \du e^2 \coloneqq 0.
\]
In the case where \(\theta\) is a quadratic irrationality, the basic Heisenberg modules of Example \ref{ex:heis1} admit canonical Hermitian bimodule connections due to Connes.

\begin{example}[{Connes~\cite[Thm 7]{Connes80}, Polishchuk--Schwarz~\cite[Prop.\ 2.1]{PolishchukSchwarz}}]\label{ex:heis3}
	We continue from Example \ref{ex:heis1}.
	Let \(g \in \Gamma_\theta\).
	Connes's maps \(\nabla_{g,1},\nabla_{g,2} : E(g) \to E(g)\) yield a right Hermitian connection \(\nabla_g : E(g) \to E(g) \otimes_B \Omega^1_B\) by setting
	\[
		\forall p \in E(g), \quad \nabla_g(p) \coloneqq \nabla_{g,1}(p) \otimes e^1 + \nabla_{g,2}(p) \otimes e^2;
	\]
	in particular, by~\cite[Thm 7]{Connes80}, the map \(\nabla_g\) satisfies
	\[
		\forall p \in E(g), \quad \nabla_g^2(p) = p \cdot 2\pi\iu{}\frac{g_{21}}{g_{21}\theta+g_{22}}e^1 e^2.
	\]
	Hence, by~\cite[Prop.\ 2.1]{PolishchukSchwarz} and Proposition \ref{prop:bemacriterion}, the map \(\nabla_g\) defines a Hermitian bimodule connection on \(E(g)\) with respect to the generalised braiding \(\sigma_g\) given by
	\[
		\forall i \in \Set{1,2}, \, \forall p \in E(g), \quad \sigma_g(e^i \otimes p) \coloneqq \frac{1}{g_{21}\theta+g_{22}}p \otimes e^i.
	\]
\end{example}

The primary technical advantage of bimodule connections is that they are compatible with balanced tensor products of bimodules.
In fact, they yield a monoidal category of \(B\)-self-correspondence of finite type with Hermitian bimodule connection.

\begin{theorem}[{Beggs--Majid~\cite[Thm 5.3]{BeMa18}, cf.\ Beggs--Brzezi\'{n}ski~\cite[\S 2.4]{BB}}]\label{thm:bema}
	The following defines an essentially small monoidal concrete category \(\grp{DCorr}(B)\).
	\begin{enumerate}[leftmargin=*]
		\item An object of \(\grp{DCorr}(B)\) is \((E,\sigma_E,\nabla_E)\), where \(E\) is a \(B\)-self-correspondence of finite type and \((\sigma_E,\nabla_E)\) is a Hermitian bimodule connection on \(E\);
		\item An arrow \(u : (E,\sigma_E,\nabla_E) \to (F,\sigma_F,\nabla_F)\) is an isomorphism \(f : E \to F\) of \(B\)-bimodules satisfying \eqref{eq:unitary} and
		\begin{equation}\label{eq:intertwineconn}
			\nabla_F \circ u = (u \otimes \id_{\Omega^1_B}) \circ \nabla_E.
		\end{equation}
		\item The tensor product of \((E,\sigma_E,\nabla_E)\) and \((F,\sigma_F,\nabla_F)\) is \((E \otimes_B F,\sigma_{E \otimes_B F},\nabla_{E\otimes_B F})\), where \(E \otimes_B F\) is the balanced tensor product of \(B\)-bimodules equipped with the \(B\)-valued inner product of \eqref{eq:tensorip}, and where
		\begin{align}
			\sigma_{E \otimes_B F} &\coloneqq \alpha^{-1}_{E,F,\Omega_B} \circ (\id_E{} \otimes \sigma_F) \circ \alpha_{E,\Omega_B,F} \circ \sigma_E \otimes \id_F,\\
			\nabla_{E \otimes_B F} &\coloneqq \alpha_{E,F,\Omega_B}^{-1} \circ \left((\id_{E}{} \otimes \sigma_F) \circ \alpha_{E,\Omega_B,F} \circ (\nabla_E \otimes \id) + \id_E{} \otimes \nabla_F\right);
		\end{align}
			moreover, the monoidal product of arrows is given by the balanced tensor product of \(B\)-bimodule homomorphisms, and associators are given by the corresponding associators in \(\grp{Bimod}(B)\).
		\item The unit object of \(\grp{DCorr}(B)\) is \((B,\sigma_B,\nabla_B)\), where \((\sigma_B,\nabla_B)\) is the Hermitian bimodule connection of Example \ref{ex:trivialconn}; moreover, left and right unitors are given by the corresponding left and right unitors of \(\grp{Bimod}(B)\), respectively.
	\end{enumerate}
	In addition, if \(u : (E,\sigma_E,\nabla_E) \to (F,\sigma_F,\nabla_F)\) is an arrow in \(\dPic(B)\), then
	\begin{align}
		\nabla_F \circ (u \otimes \id_{\Omega_B}) &= (u \otimes \id_{\Omega_B}) \circ \nabla_E,\label{eq:intertwine}\\
		\sigma_F \circ (\id_{\Omega_B}{} \otimes u) &= (u \otimes \id_{\Omega_B}) \circ \sigma_E.
	\end{align}
\end{theorem}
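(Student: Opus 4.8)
The plan is to derive the theorem from the corresponding connection-theoretic statement of Beggs--Majid~\cite[Thm 5.3]{BeMa18}, adding only the verifications specific to our Hermitian, pre-$\mathrm{C}^\ast$-algebraic formulation. First I would note that an arrow of $\grp{DCorr}(B)$ is, by definition, an arrow of $\grp{Corr}(B)$ subject to the extra complex-linear condition \eqref{eq:intertwineconn}, which is visibly closed under composition and satisfied by identities; hence $\grp{DCorr}(B)$ is a subcategory of the essentially small monoidal concrete category $\grp{Corr}(B)$, so it too is concrete with set-sized hom-sets. For essential smallness it remains to see that the isomorphism classes of objects form a set: as noted after Example~\ref{ex:trivialconn}, $\sigma_E$ is determined by $\nabla_E$, so an object of $\grp{DCorr}(B)$ amounts to a $B$-self-correspondence of finite type together with a right Hermitian connection, and since the isomorphism classes of $\grp{Corr}(B)$ form a set while the right Hermitian connections on a fixed finitely generated projective right $B$-module form a set, the isomorphism classes of $\grp{DCorr}(B)$ do too.

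Next, the monoidal structure. Beggs--Majid~\cite[Thm 5.3]{BeMa18} already establish that $\sigma_{E \otimes_B F}$ is a generalised braiding, that $\nabla_{E \otimes_B F}$ is a right connection satisfying the right Leibniz rule, that $(\sigma_{E \otimes_B F},\nabla_{E \otimes_B F})$ meets the criterion \eqref{eq:bemacriterion1}--\eqref{eq:bemacriterion2} of Proposition~\ref{prop:bemacriterion}, that this tensor product of bimodule connections is associative and unital compatibly with the associator and unitors of $\grp{Bimod}(B)$, and that the pentagon and triangle identities hold --- the last point because they already hold for the underlying associator and unitors, which Theorem-Definition~\ref{thmdef:picard} exhibits as arrows of $\grp{Corr}(B)$. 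What must be checked in the present setting is threefold: (i) that the tensor inner products \eqref{eq:tensorip}--\eqref{eq:tensoripconj} of strictly full inner products are again strictly full --- this is exactly the computation carried out in the proof of Theorem-Definition~\ref{thmdef:picard}; (ii) that $\nabla_{E \otimes_B F}$ is Hermitian for \eqref{eq:tensorip}, i.e.\ satisfies \eqref{eq:hermconn}; and (iii) that the associator, the unitors, and the tensor product $u \otimes v$ of arrows are themselves arrows of $\grp{DCorr}(B)$. For (ii) I would expand $\du_B$ of the tensor inner product of $x_1 \otimes y_1$ and $x_2 \otimes y_2$ via the Leibniz rule, reduce the result to derivatives of the $B$-valued inner products of $E$ and $F$ using \eqref{eq:bemacriterion1} and \eqref{eq:hermconn} applied to $E$ and $F$ separately, and collect terms using the Hermiticity \eqref{eq:braidherm} of $\sigma_F$ to recognise the right-hand side of \eqref{eq:hermconn} for $E \otimes_B F$; condition \eqref{eq:unitary} for $u \otimes v$ is already part of the monoidal structure of $\grp{Corr}(B)$.

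For (iii) it is convenient to prove the ``In addition'' clause first. Given an arrow $u : (E,\sigma_E,\nabla_E) \to (F,\sigma_F,\nabla_F)$, transporting the Hermitian bimodule connection $(\sigma_E,\nabla_E)$ along the isometric bimodule isomorphism $u$ produces a Hermitian bimodule connection on $F$ whose connection part is $\nabla_F$ precisely by \eqref{eq:intertwineconn} and whose generalised braiding is $(u \otimes \id_{\Omega_B}) \circ \sigma_E \circ (\id_{\Omega_B} \otimes u^{-1})$; by the uniqueness of the braiding noted after Example~\ref{ex:trivialconn}, this forces $\sigma_F \circ (\id_{\Omega_B} \otimes u) = (u \otimes \id_{\Omega_B}) \circ \sigma_E$. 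For \eqref{eq:intertwine}, both $(u \otimes \id_{\Omega_B}) \circ \nabla_E$ and $\nabla_F \circ (u \otimes \id_{\Omega_B})$ --- with $\nabla_E$ and $\nabla_F$ now denoting their canonical extensions to $E \otimes_B \Omega_B$ and $F \otimes_B \Omega_B$ --- satisfy the graded right Leibniz rule characterising that extension and agree on $E \otimes 1 \cong F$ by \eqref{eq:intertwineconn}, hence coincide since $E \otimes 1$ generates $E \otimes_B \Omega_B$ as a right $\Omega_B$-module. Substituting \eqref{eq:intertwine} and the braiding identity into the defining formulas for $\sigma_{E \otimes_B F}$ and $\nabla_{E \otimes_B F}$ then shows that $u \otimes v$ intertwines braidings and connections; the same substitution applied to an associator or unitor of $\grp{Bimod}(B)$ --- which intertwines the relevant connections by the associativity and unitality statement of Beggs--Majid --- shows these are arrows of $\grp{DCorr}(B)$, completing the verification of the monoidal axioms.

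The main obstacle is step (ii): it is the only genuinely computational point, since it forces one to track the graded signs of \eqref{eq:formip2}, the Hermiticity \eqref{eq:braidherm} of $\sigma_F$, and the behaviour of the extended connections on $E \otimes_B \Omega_B$ and $F \otimes_B \Omega_B$ at the same time. Everything else is either cited from \cite{BeMa18} or a formal consequence of the uniqueness of the generalised braiding given the connection together with essential smallness of $\grp{Corr}(B)$.
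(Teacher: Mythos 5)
Your proposal is correct, and its computational core is the same as the paper's: both arguments delegate the bimodule-connection monoidal structure to the cited theorem of Beggs--Majid and the categorical bookkeeping to the discussion preceding Theorem-Definition \ref{thmdef:picard}, so that the only genuinely new verification is compatibility of the tensor-product data with the Hermitian structure. The paper does this by first proving the auxiliary identity \eqref{eq:tensoriplem} and then deducing \emph{both} \eqref{eq:braidherm} for \(\sigma_{E \otimes_B F}\) and \eqref{eq:hermconn} for \(\nabla_{E \otimes_B F}\) from it together with \eqref{eq:formip1} and \eqref{eq:formip2}; your step (ii) is the same calculation organised slightly differently. Where you genuinely diverge is in what you prove rather than cite: the paper attributes all arrow-level statements --- that monoidal products of arrows, associators and unitors are arrows of \(\grp{DCorr}(B)\), as well as the ``in addition'' clause \eqref{eq:intertwine} and the braiding intertwining --- to Beggs--Majid ``with minor changes'', whereas you derive the ``in addition'' clause from scratch (transport of structure along \(u\) plus the uniqueness of the generalised braiding compatible with a given Hermitian right connection, and the graded Leibniz characterisation of the extension of \(\nabla_E\) to \(E \otimes_B \Omega_B\)) and then feed it back into the defining formulas; this is sound and buys a more self-contained argument at the cost of length. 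Two small points to tidy up: you never verify \eqref{eq:braidherm} for \(\sigma_{E \otimes_B F}\) explicitly --- this can be excused by the remark following the definition that \eqref{eq:braidherm} is a consequence of having a bimodule connection whose connection part is Hermitian, but since the paper checks it directly you should at least say you are invoking that remark; and your point (i) about strict fullness is misplaced, since objects of \(\grp{DCorr}(B)\) are \(B\)-self-correspondences of finite type rather than Hermitian line bimodules, so no inner product on \(\conj{E \otimes_B F}\) or strict fullness is required --- the fact actually needed, that \(E \otimes_B F\) with \eqref{eq:tensorip} is again of finite type, is already part of the cited monoidal structure of \(\grp{Corr}(B)\).
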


\begin{proof}
	Relative to~\cite[Thm 5.7]{BeMa18} and the discussion preceding the proof of Theorem-Definition \ref{thmdef:picard} (with minor changes), it suffices to check that the tensor product is well-defined on objects.
	Let \((E,\sigma_E,\nabla_E)\) and \((F,\sigma_F,\nabla_F)\) be objects of the category \(\grp{DCorr}(B)\).
	A straightforward calculation shows that
	\begin{equation}
		\forall x,v \in E, \, \forall \upsilon,\tau \in F \otimes_B \Omega_B, \quad \hp*{(x \otimes \leg{\upsilon}{0}) \otimes \leg{\upsilon}{1}}{(v \otimes \leg{\tau}{0}) \otimes \leg{\tau}{1}} = \hp*{\upsilon}{\hp{x}{v}\tau}.\label{eq:tensoriplem}
	\end{equation}
	Relative to~\cite[Thm 5.7]{BeMa18}, it remains to check that \(\sigma_{E \otimes_B F}\) and \(\nabla_{E \otimes_B F}\) satisfy \eqref{eq:braidherm} and \eqref{eq:hermconn}, respectively, but this now follows by repeated application of \eqref{eq:tensoriplem}, \eqref{eq:formip1}, and \eqref{eq:formip2}, as appropriate, to \((\sigma_E,\nabla_E)\) and \((\sigma_F,\nabla_F)\).
\end{proof}

We now construct our coherent \(2\)-group of NC Hermitian line bundles with unitary connection.

\begin{theoremdefinition}[{cf.\ Beggs--Majid~\cite[Thm 3.3]{BeMa11}}]\label{thmdef:dpic}
	The \emph{differential Picard \(2\)-group} of \(B\) is the coherent \(2\)-group \(\dPic(B)\) defined as follows.
	\begin{enumerate}[leftmargin=*]
		\item As a monoidal category, \(\dPic(B)\) is the full monoidal subcategory of \(\grp{DCorr}(B)\), whose objects are of the form \((E,\sigma_E,\nabla_E)\), where \(E\) is a Hermitian line \(B\)-bimodule.
		\item The monoidal inverse of an object \((E,\sigma_E,\nabla_E)\) is given by \((\conj{E},\sigma_{\conj{E}},\nabla_{\conj{E}})\), where
			\begin{align}
				\forall \beta \in \Omega_B, \, \forall x \in E, && \sigma_{\conj{E}}(\beta \otimes \conj{x}) &\coloneqq \Upsilon_{\Omega_B,E}\mleft(\conj{\sigma_E^{-1}(x \otimes \beta^\ast)}\mright),\label{eq:dpic1}\\
				\forall x \in E, && \nabla_{\conj{E}} \conj{x} &\coloneqq \Upsilon_{\Omega_B,E}\mleft(\conj{\sigma_E^{-1}(\nabla_E x)}\mright);\label{eq:dpic2}
			\end{align}
			here, by abuse of notation, we let \(\Upsilon_{\Omega_B,E} : \conj{\Omega_B \otimes_B E} \to \conj{E} \otimes_B \Omega_B\) denote the isomorphism of \(B\)-bimodules defined by
			\begin{equation}
				\forall x \in E, \, \forall \beta \in \Omega_B, \quad \Upsilon_{\Omega_B,E}(\conj{\beta \otimes x}) \coloneqq \conj{x} \otimes \beta^\ast.
			\end{equation}
		\item Evaluation arrows are given by the corresponding evaluation arrows in \(\Pic(B)\).
	\end{enumerate}
	Hence, a \emph{Hermitian line \(B\)-bimodule with connection} is an object of \(\dPic(B)\), and an \emph{isomorphism} of Hermitian line \(B\)-bimodules with connection is an arrow of \(\dPic(B)\).
	Finally, the \emph{differential Picard group} of \(B\) is the group \(\dpic(B) \coloneqq \pi_0(\dPic(B))\).
\end{theoremdefinition}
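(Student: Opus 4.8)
The plan is to derive the statement by combining two facts already in hand---Theorem-Definition~\ref{thmdef:picard}, that \(\Pic(B)\) is a coherent \(2\)-group, and Theorem~\ref{thm:bema}, that \(\grp{DCorr}(B)\) is an essentially small monoidal category---and then verifying directly only the genuinely new pieces: that the conjugate braiding-connection \((\sigma_{\conj{E}},\nabla_{\conj{E}})\) of \eqref{eq:dpic1}--\eqref{eq:dpic2} is a Hermitian bimodule connection on \(\conj{E}\), and that the evaluation isomorphisms \(\ev_E\) of \(\Pic(B)\) intertwine connections. First I would record a standing fact: every arrow \(u\colon(E,\sigma_E,\nabla_E)\to(F,\sigma_F,\nabla_F)\) of \(\grp{DCorr}(B)\) is invertible, since the \(B\)-bimodule inverse \(u^{-1}\) is automatically an isometry and, applying \((u^{-1}\otimes\id_{\Omega^1_B})\) on the left and \(u^{-1}\) on the right to \eqref{eq:intertwineconn}, also satisfies \eqref{eq:intertwineconn}. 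As \(\dPic(B)\) is declared to be the \emph{full} subcategory of \(\grp{DCorr}(B)\) on the objects \((E,\sigma_E,\nabla_E)\) with \(E\) a Hermitian line \(B\)-bimodule, it therefore suffices to establish three things: (I) this class of objects is closed under the monoidal product of \(\grp{DCorr}(B)\) and contains its monoidal unit; (II) the prescription in item~(2) of the statement is a well-defined function \(\Obj(\dPic(B))\to\Obj(\dPic(B))\); and (III) each \(\ev_E\) of item~(3) is an arrow of \(\grp{DCorr}(B)\). Given these, fullness makes \(\dPic(B)\) a monoidal subcategory closed under the monoidal product of arrows, essential smallness and invertibility of arrows are inherited, and \(\dpic(B)\coloneqq\pi_0(\dPic(B))\) is automatically a group by the discussion following the definition of coherent \(2\)-group.

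Claim~(I) is short: the monoidal unit \((B,\sigma_B,\nabla_B)\) of Example~\ref{ex:trivialconn} has underlying bimodule the trivial Hermitian line \(B\)-bimodule \eqref{eq:trivialip}, and the underlying \(B\)-self-correspondence of the tensor product \((E,\sigma_E,\nabla_E)\otimes(F,\sigma_F,\nabla_F)\) in \(\grp{DCorr}(B)\) is \(E\otimes_B F\) with the inner product \eqref{eq:tensorip}, which Theorem-Definition~\ref{thmdef:picard}---together with the inner product \eqref{eq:tensoripconj} on \(\conj{E\otimes_B F}\)---makes into a Hermitian line \(B\)-bimodule. Claim~(II) is the crux. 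Fixing \((E,\sigma_E,\nabla_E)\in\Obj(\dPic(B))\), item~(4) of Theorem-Definition~\ref{thmdef:picard} already tells us \(\conj E\) with the prescribed inner products is a Hermitian line \(B\)-bimodule, so what remains is that \((\sigma_{\conj E},\nabla_{\conj E})\) is a Hermitian bimodule connection on it. I would check in turn that \(\sigma_{\conj E}\) and \(\nabla_{\conj E}\) of \eqref{eq:dpic1}--\eqref{eq:dpic2} are well defined; that \(\nabla_{\conj E}\) is a right Hermitian connection, the right Leibniz rule~\eqref{eq:rightleibniz} and Hermitian compatibility~\eqref{eq:hermconn} for \(\nabla_{\conj E}\) following from those for \(\nabla_E\) via \(\ast\)-antilinearity of \(\du_B\) and the defining property of \(\Upsilon_{\Omega_B,E}\); that \(\sigma_{\conj E}\) is a generalised braiding, its braiding associativity~\eqref{eq:braidassoc} obtained by conjugating \eqref{eq:braidassoc} for \(\sigma_E\); and finally, invoking the criterion of Proposition~\ref{prop:bemacriterion}, that \eqref{eq:bemacriterion1} and \eqref{eq:bemacriterion2} hold for \(\conj E\), which follow by conjugating, respectively, \eqref{eq:bemacriterion1} and the \(B\)-linearity of \(\nabla_E^2\) for \(E\). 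Throughout, identities on \(E\) are transported to \(\conj E\) along \(\Upsilon_{\Omega_B,E}\), the graded \(\ast\)-operation of \(\Omega_B\), and the exchange identity \eqref{eq:imprimitivity3} between the left and right \(B\)-actions; this is the pre-\(\mathrm{C}^\ast\) incarnation of the bar-category structure on bimodule connections of Beggs--Majid~\cite{BeMa18}, paralleling \cite[Thm~3.3]{BeMa11}.

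For Claim~(III), the map \(\ev_E\colon\conj E\otimes_B E\to B\) is already an isometric isomorphism of \(B\)-bimodules by item~(5) of Theorem-Definition~\ref{thmdef:picard}, so it remains only to verify \eqref{eq:intertwineconn}, namely \(\nabla_B\circ\ev_E=(\ev_E\otimes\id_{\Omega^1_B})\circ\nabla_{\conj E\otimes_B E}\) as \(\bC\)-linear maps \(\conj E\otimes_B E\to\Omega^1_B\); since both sides are \(\bC\)-linear and \(\conj E\otimes_B E\) is spanned by elementary tensors \(\conj x\otimes y\) with \(x,y\in E\), it suffices to test on these. Expanding \(\nabla_{\conj E\otimes_B E}\) via item~(3) of Theorem~\ref{thm:bema} and \(\nabla_{\conj E}\) via \eqref{eq:dpic2}, the identity collapses to the Hermitian compatibility \eqref{eq:hermconn} (equivalently \eqref{eq:hermconndiff}) of \(\nabla_E\); hence \(\ev_E\) is an arrow of \(\grp{DCorr}(B)\) and, by fullness, of \(\dPic(B)\).

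The main obstacle is Claim~(II): the bookkeeping needed to move the Leibniz rules, the braiding associativity \eqref{eq:braidassoc}, and the Hermitian compatibilities from \(E\) to \(\conj E\) through the conjugation, the intertwiner \(\Upsilon_{\Omega_B,E}\), and the \(\ast\)-operation of \(\Omega_B\), arranged so as to land exactly inside the hypotheses of Proposition~\ref{prop:bemacriterion}. Everything else is either inherited verbatim from \(\grp{DCorr}(B)\) and \(\Pic(B)\) or is a routine diagram chase.
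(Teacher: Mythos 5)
Your proposal is correct and follows essentially the same route as the paper: take the monoidal structure from Theorem~\ref{thm:bema} and Theorem-Definition~\ref{thmdef:picard}, verify that \((\sigma_{\conj{E}},\nabla_{\conj{E}})\) of \eqref{eq:dpic1}--\eqref{eq:dpic2} is a Hermitian bimodule connection via Proposition~\ref{prop:bemacriterion}, and check that \(\ev_E\) satisfies \eqref{eq:intertwineconn} by a direct computation on elementary tensors. The only differences are cosmetic: the paper delegates the facts that \(\sigma_{\conj{E}}\) is a bimodule isomorphism, that \(\nabla_{\conj{E}}\) satisfies \eqref{eq:rightleibniz}, and that \eqref{eq:bemacriterion1} holds to Beggs--Majid (suitably adapted) rather than reproving them, and its evaluation check uses \eqref{eq:braidherm} in addition to \eqref{eq:hermconn}, which your sketch would also need when carried out in full.
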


\begin{proof}
	Given Theorem \ref{thm:bema} and Theorem-Definition~\ref{thmdef:picard}, it remains to show that monoidal inversion and evaluation in \(\dPic(B)\) are well-defined.
	Let \(E\) be a Hermitian line \(B\)-bimodule with Hermitian bimodule connection \((\sigma_E,\nabla_E)\).
	
	Let us first show that \(\conj{E}\) admits the Hermitian bimodule connection \((\sigma_{\conj{E}},\nabla_{\conj{E}})\) defined by \eqref{eq:dpic1} and \eqref{eq:dpic2}.
	By a theorem of Beggs--Majid~\cite[Thm 3.3]{BeMa11}, suitably adapted, we know that \(\sigma_{\conj{E}}\) is a \(B\)-bimodule isomorphism, that \(\nabla_{\conj{E}}\) satisfies \eqref{eq:rightleibniz}, and that the pair \((\sigma_{\conj{E}},\nabla_{\conj{E}})\) satisfies \eqref{eq:bemacriterion1}.
	By Proposition \ref{prop:bemacriterion}, it remains to show that \(\sigma_{\conj{E}}\) satisfies \eqref{eq:braidassoc} and that \(\nabla_{\conj{E}}\) satisfies \eqref{eq:hermconn} and \eqref{eq:bemacriterion2}.
	In turn, by construction of \(\sigma_{\conj{E}}\) and \(\nabla_{\conj{E}}\), it therefore suffices to show that, respectively, for all \(\alpha,\beta \in \Omega_B\) and \(x,y,z \in E\),
	\begin{align}
		\sigma_E^{-1}(x \otimes \beta\alpha) &= \leg{\sigma_E^{-1}(x \otimes \beta)}{-1}  \sigma_E^{-1}\mleft(\leg{\sigma_E(x \otimes \beta)}{0} \otimes \alpha\mright), \label{eq:braidassocconj}\\
		\sigma_E\mleft(\du_B\hp{\conj{x}}{\conj{y}} \otimes z\mright) &= \sigma_E\mleft(\hp{\nabla_{\conj{E}}\conj{x}}{\conj{y} \otimes 1} \otimes z\mright) + \sigma_E\mleft(\hp{\conj{x}\otimes 1}{\nabla_{\conj{E}}\conj{y}} \otimes z\mright),\label{eq:hermconnconj}\\
		\sigma_E^{-1}\mleft(\nabla^2_E x\mright) &\begin{multlined}[t]= \leg{\sigma_E^{-1}(\nabla_E x)}{-1}  \sigma_E^{-1}\mleft(\nabla_E(\leg{\sigma_E^{-1}(\nabla_E x)}{0})\mright)\\ + \du_B \leg{\sigma_E^{-1}(\nabla_E x)}{-1} \otimes \leg{\sigma_E^{-1}(\nabla_E x)}{0}.\end{multlined}\label{eq:bemacriterion2conj}
	\end{align}
	
	First, we check \eqref{eq:braidassocconj}. 
	Let \(\alpha,\beta \in \Omega_B\) and \(x \in E\).
	By \eqref{eq:braidassoc} applied to \(\sigma_E\),
	\begin{align*}
		e \otimes \beta\alpha
		&= \sigma_E\mleft(\leg{\sigma_E^{-1}(e \otimes \beta)}{-1} \otimes \leg{\sigma_E^{-1}(e \otimes \beta)}{0}\mright)  \alpha\\
		&= \sigma_E\mleft(\leg{\sigma_E^{-1}(e \otimes \beta)}{-1} \otimes \leg{(\leg{\sigma_E^{-1}(e \otimes \beta)}{0} \otimes \alpha)}{0} \mright)  \leg{(\leg{\sigma_E^{-1}(e \otimes \beta)}{0} \otimes \alpha)}{1}\\
		&=\sigma_E\mleft(\leg{\sigma_E^{-1}(e \otimes \beta)}{-1}  \sigma_E^{-1}\mleft(\leg{\sigma_E^{-1}(e \otimes \beta)}{0} \otimes \alpha\mright)\mright).
	\end{align*}
	
	Next, we check \eqref{eq:hermconnconj}.
	Let \(x,y,z \in E\) be given.
	Since \((\sigma_E,\nabla_E)\) is a Hermitian bimodule connection, it follows that
 	\[
 		\sigma_E\mleft(\du_B\hp{\conj{x}}{\conj{y}} \otimes z\mright) 
 		= \nabla_E\mleft(\hp{\conj{x}}{\conj{y}}  z\mright) - \hp{\conj{x}}{\conj{y}}  \nabla_E z
 		= \nabla_E (x) \hp{y}{z} + x \otimes \hp{\nabla_E y}{z},
 	\]
	On the one hand, by definition of \(\nabla_{\conj{E}}\), we see that
	\[
		\sigma_E\mleft(\hp{\nabla_{\conj{E}}\conj{x}}{\conj{y}} \otimes z\mright) = \sigma_E\mleft(\leg{\sigma_E^{-1}(\nabla_E x)}{-1} \otimes \leg{\sigma_E^{-1}(\nabla_E x)}{0}  \hp{y}{z}\mright) = \nabla_E (x) \hp{y}{z}.
	\]
	On the other hand, by definition of \(\nabla_{\conj{E}}\) together with \eqref{eq:braidherm} applied to \(\sigma_E\),
	\begin{align*}
		x \otimes \hp{\nabla_E y}{z}
		&=x \otimes \hp*{\leg{\sigma_{E}^{-1}(\nabla_E y)}{-1}  (\leg{\sigma_{E}^{-1}(\nabla_E y)}{0} \otimes 1)}{z \otimes 1}\\
		&=x \otimes \hp*{\leg{\sigma_{E}^{-1}(\nabla_E y)}{0} \otimes 1}{\leg{\sigma_{E}^{-1}(\nabla_E y)}{-1}^\ast \otimes z)}\\
		&= \hp*{\conj{x}}{\conj{\leg{\sigma_{E}^{-1}(\nabla_E y)}{0}}}  \sigma_E(\leg{\sigma_{E}^{-1}(\nabla_E y)}{-1}^\ast \otimes z)\\
		&=\sigma_E\mleft(\hp{\conj{x} \otimes 1}{\nabla_{\conj{E}}(\conj{y}))} \otimes z\mright).
	\end{align*}
	
	Finally, we check \eqref{eq:bemacriterion2conj}.
	Let \(x \in E\) be given.
	By \eqref{eq:braidassoc} and \eqref{eq:leftleibniz} applied to \(\sigma_E\) and \((\sigma_E,\nabla_E)\), respectively, 
	\begin{align*}
	&\nabla^2_E x
	=\nabla_E\circ\sigma_E\mleft(\leg{\sigma_E^{-1}(\nabla_E x)}{-1} \otimes \leg{\sigma_E^{-1}(\nabla_E x)}{0}\mright)\\
	&\quad=\sigma_E\mleft(\leg{\sigma_E^{-1}(\nabla_E x)}{-1} \otimes \nabla_E(\leg{\sigma_E^{-1}(\nabla_E x)}{0}) + \du_B \leg{\sigma_E^{-1}(\nabla_E x)}{-1} \otimes \leg{\sigma_E^{-1}(\nabla_E x)}{0}\mright)\\
	&\quad=\sigma_E\mleft(\leg{\sigma_E^{-1}(\nabla_E x)}{-1}  \sigma_E^{-1}(\nabla_E(\leg{\sigma_E^{-1}(\nabla_E x)}{0}))\mright)\\
	&\quad\quad\quad + \sigma_E\mleft( \du_B \leg{\sigma_E^{-1}(\nabla_E x)}{-1} \otimes \leg{\sigma_E^{-1}(\nabla_E x)}{0}\mright).
	\end{align*}
	
	We now show that \(\ev_E : \conj{E} \otimes_B E \to E\) in \(\Pic(B)\) defines a corresponding arrow in \(\dPic(B)\).
	Since \(\nabla_B = \lambda^{-1}_{\Omega_B} \circ \du\), it suffices to show that for all \(x,y \in E\),
	\begin{multline*}
		\du_B \ev_E(\conj{x} \otimes y)\\ = \lambda_{\Omega_B} \circ (\ev_{E}{} \otimes \id_{\Omega_B}) \circ \alpha_{\conj{E},E,\Omega_B}^{-1}\mleft((\id_{\conj{E}}{} \otimes \sigma_E) \circ \alpha_{\conj{E},\Omega_B,E}(\nabla_{\conj{E}}\conj{x} \otimes y) + \conj{x} \otimes \nabla_E y\mright).
	\end{multline*}
	Hence, let \(x,y \in E\) be given.
	By \eqref{eq:braidherm} applied to \(\sigma_E\) and \eqref{eq:hermconn} applied to \(\nabla_E\), 
	\begin{align*}
		&\du_B\ev_E(\conj{x} \otimes y) 
		=\hp{\nabla_E x}{y \otimes 1} + \hp{x \otimes 1}{\nabla_E y}\\
		&\quad=\hp*{\leg{\sigma_E^{-1}(\nabla_E x)}{0} \otimes 1}{\leg{\sigma_E^{-1}(\nabla_E x)}{-1}^\ast  (y \otimes 1)} + \hp{x \otimes 1}{\nabla_E y}\\
		&\quad= \ev_E\mleft(\leg{\nabla_{\conj{E}}(\conj{x})}{0} \otimes \leg{\sigma_E(\leg{\nabla_{\conj{E}}(\conj{x})}{1} \otimes y)}{0}\mright)  \leg{\sigma_E(\leg{\nabla_{\conj{E}}(\conj{x})}{1} \otimes y)}{-1}\\ 
		&\quad\quad\quad+ \ev_E\mleft(\conj{x} \otimes \leg{\nabla_E(y)}{0}\mright)  \leg{\nabla_E(y)}{1}\\
		&\quad=\lambda_{\Omega_B} \circ (\ev_{E}{} \otimes \id_{\Omega_B}) \circ \alpha_{\conj{E},E,\Omega_B}^{-1}\mleft((\id_{\conj{E}}{} \otimes \sigma_E) \circ \alpha_{\conj{E},\Omega_B,E}(\nabla_{\conj{E}}\conj{x} \otimes y) + \conj{x} \otimes \nabla_E y\mright).
	\end{align*}
\end{proof}

\begin{example}[{Connes~\cite[Thm 7]{Connes80}, Polishchuk--Schwarz~\cite[Prop.\ 2.2]{PolishchukSchwarz}}]\label{ex:heis4}
	We continue from Example \ref{ex:heis3}.
	The homomorphism \(E : \Gamma_\theta \to \Pic(C^\infty_\theta(\bT^2))\) of Example \ref{ex:heis1} lifts to \(\hat{E} : \Gamma_\theta \to \dPic(C^\infty_\theta(\bT^2))\) defined as follows.
	\begin{enumerate}[leftmargin=*]
		\item Given \(g \in \Gamma_\theta\), let \(\hat{E}(g) \coloneqq (E(g),\sigma_g,\nabla_g)\), where \((\sigma_g,\nabla_g)\) is the Hermitian bimodule connection on \(E(g)\) of Example \ref{ex:heis3}.
		\item Let \(\hat{E}^{(0)}\) be given by \(\id_{C^\infty_\theta(\bT^2)} \eqqcolon E^{(0)}\).
		\item Given \(g,h \in \Gamma_\theta\), let \(\hat{E}^{(2)}_{g,h}\) be given by \(E^{(2)}_{g,h}\).
	\end{enumerate}
	In particular, that \(\hat{E}^{(0)}\) and \(\hat{E}^{(2)}\) satisfy the required commutative diagrams follows (with superficial changes) from the result of Polishchuk--Schwarz.
\end{example}

\subsection{Canonical actions of the differential Picard group}\label{sec:2.4}

Again, let \(B\) be a unital pre-\Cstar-algebra with \(\ast\)-exterior algebra \((\Omega_B,\du_B)\).
We show that the differential Picard group \(\dpic(B)\) defines a generalised diffeomorphism group for \((B;\Omega_B,\du_B)\), whose action on the \(K_0\)-monoid \(\cV(B)\) characterizes the fibres of the forgetful map \(\dpic(B) \to \cV(B)\) and whose action on the graded centre \(\Zent(\Omega_B)\) admits curvature as a canonical group \(1\)-cocycle.

Let us first consider na\"{i}ve diffeomorphisms of the  manifold \((B;\Omega_B,\du_B)\).
Gel'fand duality initially suggests that a diffeomorphism of \((B;\Omega_B,\du_B)\) should be an automorphism of the \(\ast\)-exterior algebra \((\Omega_B,\du_B)\) over \(B\).
However, as we shall see in Theorem \ref{thm:stabilizer}, inner automorphisms of \(B\), i.e., automorphisms of the form \(b \mapsto u b u^\ast\) for fixed \(u \in \Unit(B)\), will generally only satisfy the following conservative generalisation.

\begin{definition}
	We define the \emph{extended diffeomorphism group} of \(B\) to be the subgroup \(\tDiff(B)\) of \((\Omega^1_B)_{\mathrm{sa}} \rtimes \Aut(\Omega_B)\) consisting of elements \((\omega,\phi)\) satisfying
	\begin{equation}
		\forall \beta \in \Omega_B, \quad \du\beta - \phi \circ \du{} \circ \phi^{-1}(\beta) = \iu{}[\omega,\beta];\label{eq:extdiff}
	\end{equation}
	here, \((\Omega^1_B)_{\mathrm{sa}} \coloneqq \Set{\omega \in \Omega^1_B \given \omega^\ast = \omega}\), while \([\cdot,\cdot]\) denotes the supercommutator in \(\Omega_B\) with respect to parity of degree; hence, an \emph{extended diffeomorphism} of \(B\) with respect to \((\Omega_B,\du)\) is an element of \(\tDiff(B)\).
	Moreover, we say that \((\omega,\phi) \in \tDiff(B)\) is \emph{topologically trivial} whenever \(\rest{\phi}{B} = \id_B\), and we denote by \(\tDiff_0(B)\) the normal subgroup of \(\tDiff(B)\) consisting of topologically trivial elements.
\end{definition}

\begin{example}\label{ex:classical2}
	Continuing from Example \ref{ex:classical1}, equip \(C^\infty(X)\) with the \(\ast\)-exterior algebra \((\Omega(X,\bC),\du)\), and note that \(\Diff(X)\) acts on \(\Omega^1(X,\bR)\) from the right by pullback.
	The map
	\[
		\left( (f,\omega) \mapsto ((f^{-1})^\ast\omega,(f^{-1})^\ast) \right) : \Diff(X) \ltimes \Omega^1(X,\bR) \to \tDiff(C^\infty(X))
	\]
	is an isomorphism that identifies \(\Set{\id_X} \times \Omega^1(X,\bR) \cong \Omega^1(X,\bR)\) with \(\tDiff_0(C^\infty(X))\).
\end{example}

\begin{example}\label{ex:twist2}
	Recall the homomorphism \(\tau : \Aut(B) \to \Pic(B)\) of Example \ref{ex:twist1}.
	The following defines a lift of \(\tau\) to a homomorphism \(\hat{\tau} : \tDiff(B) \to \dPic(B)\).
	\begin{enumerate}[leftmargin=*]
		\item Given \((\omega,\phi) \in \tDiff(B)\), let \(\hat{\tau}_{(\phi,\omega)} \coloneqq (B_\phi,\sigma_\phi,\nabla_{(\omega,\phi)})\), where \(B_\phi \eqqcolon \tau_\phi\) and
		\begin{align}
			\forall \beta \in \Omega_B, \, \forall b \in B, && \sigma_\phi(\beta \otimes b_\phi) &\coloneqq 1_\phi \otimes \phi^{-1}(\beta  b),\\
			\forall b \in B, && \nabla_{(\phi,\omega)}(b_\phi) &\coloneqq 1_\phi \otimes \phi^{-1}(\du b + b  \cdot \iu{}\omega).
		\end{align}
		\item Let \(\hat{\tau}^{(0)}\) be given by \(\id_B \eqqcolon \tau^{(0)}\).
		\item Given \((\omega_1,\phi_1),(\omega_2,\phi_2) \in \tDiff(B)\), let \(\hat{\tau}^{(2)}_{(\omega_1,\phi_1),(\omega_2,\phi_2)}\) be given by \(\tau^{(2)}_{\phi_1,\phi_2}\).
	\end{enumerate}
\end{example}

Note that the homomorphism \(\hat{\tau} : \tDiff(B) \to \dPic(B)\) is faithful on objects, so that it can be viewed as embedding the group \(\tDiff(B)\) in \(\dPic(B)\).

Now, let \(\Pi_{\pic(B)} : \dpic(B) \to \pic(B)\) be the group homomorphism induced by the forgetful functor \(\dPic(B) \to \Pic(B)\), and recall that \(\Pi_{\cV(B)} : \pic(B) \to \cV(B)\) is the set map induced by the forgetful functor \(\Pic(B) \to \grp{Hilb}(B)\).
Hence, note that the right \(\pic(B)\)-action on \(\cV(B)\) of Proposition \ref{prop:topstabilizer} pulls back via \(\Pi_{\pic(B)}\) to a right \(\dpic(B)\)-action on \(\cV(B)\); in turn, this right \(\dpic(B)\)-action correctly pulls back via \(\pi_0(\hat{\tau})\) to the usual pullback action of isometric \(\ast\)-automorphisms on \(\cV(B)\).

Since this \(\dpic(B)\)-action on \(\cV(B)\) is transitive on the range of \(\Pi_{\cV(B)} \circ \Pi_{\pic(B)}\), we may use the resulting stabilizer group \(\dpic(B)_{[B]}\) of \([B] \in \ran(\Pi_{\cV(B)} \circ \Pi_{\pic(B)})\) to characterize the fibres of the forgetful map from the differential Picard group \(\dpic(B)\) to the \(K_0\)-monoid \(\cV(B)\).
Moreover, since \(\Pi_{\pic(B)}\) is a group homomorphism, its kernel yields the fibres of the forgetful map from \(\dpic(B)\) to the (topological) Picard group \(\pic(B)\).
As it turns out, the subgroups \(\dpic(B)_{[B]}\) and \(\ker\Pi_{\pic(B)}\) are completely determined by the group homomorphism \(\pi_0(\hat{\tau}) : \tDiff(B) \to \dpic(B)\).

\begin{theorem}\label{thm:stabilizer}
	Let \(\dpic(B)_{[B]}\) denote the stabilizer subgroup of \(\dpic(B)\) with respect to \([B] \in \ran(\Pi_{\cV(B)} \circ \Pi_{\pic(B)})\),
	and let \(\dAd : \Unit(B) \to \tDiff(B)\) be given by
	\begin{equation}
		\forall u \in \Unit(B), \quad \dAd_u \coloneqq \left(-\iu{}\,\du_B(u) u^\ast,\Ad_u\right).
	\end{equation}
	Then the homomorphisms \(\pi_0(\hat{\tau})\) and \(\dAd\) fit into the exact sequences of groups
	\begin{gather}\label{eq:frohses}
		1 \to \Unit(\Zent(B) \cap \ker\du_B) \to \Unit(B) \xrightarrow{\dAd} \tDiff(B) \xrightarrow{\pi_0(\hat{\tau})} \dpic(B)_{[B]} \to 1,\\
		1 \to \Unit(\Zent(B) \cap \ker\du_B) \to \Unit(\Zent(\Omega_B)^0) \xrightarrow{\dAd} \tDiff_0(B) \xrightarrow{\pi_0(\hat{\tau})} \dpic(B) \xrightarrow{\Pi_{\pic(B)}} \pic(B).
	\end{gather}
\end{theorem}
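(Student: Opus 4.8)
The plan is to lift Proposition~\ref{prop:topstabilizer} to the differential level by tracking connection data; throughout, the leftmost maps are plainly injective inclusions. First, $\dAd$ is a well-defined group homomorphism into $\tDiff(B)$: that $\dAd_u = (-\iu\,\du_B(u)u^\ast,\Ad_u)$ satisfies the defining identity~\eqref{eq:extdiff} of $\tDiff(B)$ reduces, via $\du_B(u^\ast) = -u^\ast\du_B(u)u^\ast$, to the graded Leibniz rule for conjugation by $u$ in $\Omega_B$, and the homomorphism property is the same computation. Since $\rest{\Ad_u}{B} = \id_B$ exactly when $u \in \Zent(B)$, the homomorphism $\dAd$ carries $\Unit(\Zent(\Omega_B)^0)$ into $\tDiff_0(B)$; and $\dAd_u = (0,\id)$ forces $\Ad_u = \id_B$ together with $\du_B u = 0$, giving $\ker\dAd = \Unit(\Zent(B)\cap\ker\du_B) = \Unit(\Zent(\Omega_B)^0\cap\ker\du_B)$ and hence exactness at $\Unit(B)$ and at $\Unit(\Zent(\Omega_B)^0)$. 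Finally, $\pi_0(\hat{\tau})\circ\dAd$ is trivial: the assignment $b \mapsto (bu^\ast)_{\Ad_u}$ is a well-defined arrow $(B,\sigma_B,\nabla_B) \to \hat{\tau}(\dAd_u)$ of $\dPic(B)$, with~\eqref{eq:unitary} and~\eqref{eq:intertwineconn} following from unitarity of $u$ and, once more, from $\du_B(u^\ast) = -u^\ast\du_B(u)u^\ast$, which collapses the connection term to $\du_B(b)u^\ast$.

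Next, I would prove exactness at $\tDiff(B)$. If $\hat{\tau}(\omega,\phi) \cong (B,\sigma_B,\nabla_B)$ in $\dPic(B)$ via some $U$, then as a $B$-bimodule map $B \to B_\phi$ one has $U(b) = (bu_0)_\phi$ for a unit $u_0 \in B$; imposing~\eqref{eq:unitary} forces $u_0$ to be unitary with $\rest{\phi}{B} = \Ad_{u_0^{-1}}$, the braiding-intertwining automatically enjoyed by arrows of $\dPic(B)$ (Theorem~\ref{thm:bema}) upgrades this to $\phi = \Ad_{u_0^\ast}$ on all of $\Omega_B$, and~\eqref{eq:intertwineconn} forces $\omega = -\iu\,\du_B(u)u^\ast$ with $u := u_0^{-1}$; thus $(\omega,\phi) = \dAd_u$, the converse being the previous step, and restricting to topologically trivial elements simply restricts $u$ to $\Unit(\Zent(B))$. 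For surjectivity of $\pi_0(\hat{\tau})$ onto $\dpic(B)_{[B]}$, let $[E,\sigma_E,\nabla_E]$ lie in $\dpic(B)_{[B]}$, so that $E \cong B_\psi$ in $\Pic(B)$ for some $\psi \in \Aut(B)$ by Proposition~\ref{prop:topstabilizer}; transporting $(\sigma_E,\nabla_E)$ along such an isomorphism, I may assume $E = B_\psi$. As $B_\psi$ is free of rank one as a left $B$-module, every element of $B_\psi \otimes_B \Omega_B$ is uniquely of the form $1_\psi \otimes \gamma$; writing $\nabla_E(1_\psi) = 1_\psi \otimes \iu\omega$ and $\sigma_E(\beta \otimes 1_\psi) = 1_\psi \otimes \phi^{-1}(\beta)$, the Hermitian condition~\eqref{eq:hermconn} forces $\omega^\ast = \omega$, while~\eqref{eq:braidassoc} together with~\eqref{eq:braidherm} forces $\phi^{-1}$---hence $\phi$---to be a grading- and $\ast$-preserving algebra automorphism of $\Omega_B$ with $\rest{\phi}{B} = \psi$. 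One checks that then $\hat{\tau}(\phi(\omega),\phi) = (B_\psi,\sigma_E,\nabla_E)$ exactly, provided $(\phi(\omega),\phi) \in \tDiff(B)$; taking $\psi = \id_B$ gives the topologically trivial case and, with the foregoing, exactness of the second sequence at $\dpic(B)$ (the forward inclusion being immediate from $\Pi_{\pic(B)}(\pi_0(\hat{\tau})(\omega,\phi)) = \pi_0(\tau)(\rest{\phi}{B})$).

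I expect the crux to be the remaining verification, that $(\phi(\omega),\phi)$ satisfies~\eqref{eq:extdiff}. Both sides of~\eqref{eq:extdiff} are super-derivations of $\Omega_B$ of degree one, so it suffices to check the identity on the algebra generators $B$ and $\du_B(B)$. On $B$ it is precisely the mutual consistency of the right Leibniz rule~\eqref{eq:rightleibniz} and the left Leibniz rule~\eqref{eq:bemacriterion1} for $\nabla_E$ evaluated at $1_\psi$. On $\du_B(B)$ it follows from the curvature identity $\nabla_E^2(b \cdot 1_\psi) = b \cdot \nabla_E^2(1_\psi)$ of~\eqref{eq:bemacriterion2}: expanding $\nabla_E^2(1_\psi) = 1_\psi \otimes (\iu\,\du_B\omega - \omega^2)$ and comparing with the expression obtained by differentiating $\nabla_E(b \cdot 1_\psi)$ via the two Leibniz rules, the degree-two form terms reorganise to $\du_B\bigl(\phi^{-1}(\du_B b)\bigr) = -\iu\bigl[\omega,\phi^{-1}(\du_B b)\bigr]$, which is exactly~\eqref{eq:extdiff} applied to $\du_B b$. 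This also explains the emergence of the commutator term $\iu[\omega,\cdot]$ in the definition of $\tDiff(B)$.
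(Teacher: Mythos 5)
Your proposal is correct and follows the same overall architecture as the paper's proof (well-definedness of \(\dAd\), exactness at \(\Unit(B)\) and at \(\tDiff(B)\) by evaluating the isomorphism at the canonical generator, and surjectivity onto \(\dpic(B)_{[B]}\) by extracting a pair \((\omega,\phi)\) from the bimodule connection), but the crux is verified by a genuinely different route. The paper never reduces to \(E = B_\psi\): it takes the isometry \(U : B \to E\) in \(\grp{Hilb}(B)\), sets \(e_0 \coloneqq U(1)\), defines \(\Phi \coloneqq \left(\beta \mapsto \hp{e_0 \otimes 1}{\sigma_E(\beta \otimes e_0)}\right)\), re-proves isometry of \(\rest{\Phi}{B}\) directly from the inner products rather than citing Proposition~\ref{prop:topstabilizer}, and checks \eqref{eq:extdiff} by one computation on arbitrary \(n\)-forms using the graded Leibniz rule \eqref{eq:leftleibniz}; this inner-product formula for \(\phi^{\pm1}\) is then recycled verbatim for the Fr\"{o}hlich homomorphism in \eqref{eq:froh1}--\eqref{eq:froh2}. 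You instead transport the connection data to \(B_\psi\) via Proposition~\ref{prop:topstabilizer} and verify \eqref{eq:extdiff} only on the generators \(B \cup \du_B(B)\) by a super-derivation argument, using the two Leibniz rules on \(B\) and the curvature identity \eqref{eq:bemacriterion2} on \(\du_B(B)\); I checked that this closes (one must feed the degree-zero identity back into the degree-one computation, as you indicate), so your route gives a shorter verification that also explains the commutator term in \eqref{eq:extdiff}, at the price of being less self-contained and not producing the explicit formula the paper reuses later.

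Two points you gloss over. First, for the sequences to be well defined one must check \(\ran\pi_0(\hat{\tau}) \subseteq \dpic(B)_{[B]}\), which the paper does with the explicit unitary \(b \otimes c_\phi \mapsto \phi^{-1}(bc)\); this is easy but should be said. Second, in the second sequence the middle term is \(\Unit(\Zent(\Omega_B)^0)\), not \(\Unit(\Zent(B))\): knowing that a topologically trivial element of \(\ker\pi_0(\hat{\tau})\) equals \(\dAd_u\) with \(u \in \Unit(\Zent(B))\) does not by itself place it in \(\dAd\mleft(\Unit(\Zent(\Omega_B)^0)\mright)\), since a central unitary of \(B\) need not supercommute with \(\Omega^1_B\); your parenthetical remark leaves this exactness claim at \(\tDiff_0(B)\) unaddressed, although, to be fair, the paper's own proof treats only the first sequence \eqref{eq:frohses} and is silent on the second.
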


\begin{proof}
	Before continuing, we must show that \eqref{eq:frohses} is a well-defined diagram of groups.
	A straightforward calculation show that \(\dAd : \Unit(B) \to \tDiff(B)\) is well-defined, so it remains to check that \(\ran \pi_0(\hat{\tau}) \leq \dpic(B)_{[B]}\).
	However, given \((\omega,\phi) \in \tDiff(B)\), the required isomorphism \(U : B \otimes_B B_\phi \to B\) in \(\grp{Hilb}(B)\) is given by \(U \coloneqq \left(b \otimes c_\phi \mapsto \phi^{-1}(bc)\right)\).
	
	We now show that \eqref{eq:frohses} is an exact sequence.
	Exactness at \(\Unit(B)\) is immediate, so we proceed to checking exactness at \(\tDiff(B)\).
	On the one hand, let \((\omega,\phi) \in \ker\pi_0(\hat{\tau})\) be given. 
	Thus, there exists an arrow \(U : (B_\phi,\sigma_\phi,\nabla_{\omega,\phi}) \to (B,\sigma_B,\nabla_B)\) in \(\dPic(B)\).
	Set \(u \coloneqq U(1_\phi)\); we claim that \((\omega,\phi) = \dAd_u\).
	First, observe that \(u \in \Unit(B)\) since the singleton \(\Set{1_\phi}\) defines both a basis and strict cobasis for \(B_\phi\).
	Next, observe that 
	\begin{multline*}
		\beta  u = \lambda_{\Omega_B} \circ \sigma_0 \circ (\id_{\Omega_B}{} \otimes U)(\beta \otimes 1_\phi) = \lambda_{\Omega_B} \circ \sigma_0 \circ (\id_{\Omega_B}{} \otimes U) \circ \sigma_\phi^{-1}(1_\phi \otimes \phi^{-1}(\beta))\\ = u  \phi^{-1}(\beta).
	\end{multline*}
	for all \(\beta \in \Omega_B\), so that \(\phi = \Ad_u\).
	Finally, observe that \(\omega = -\iu{}\,\du(u) u^\ast\) since
	\begin{align*}
		0 &= \lambda_{\Omega_B}\mleft((U \otimes \id_{\Omega_B})(\nabla_{\omega,\phi}1_\phi) - \du_B u U(1_\phi)\mright)\\ 
		&= \lambda_{\Omega_B} \circ (U \otimes \id_{\Omega_B})\mleft(1_\phi \otimes \iu{}\phi^{-1}(\omega)\mright) - \du_B u \\
		&= \iu{}(\omega + \iu{}\du_B(u) u^\ast)  u.
	\end{align*}
	On the other hand, similar calculations show that, for each \(u \in \Unit(B)\), the map \((b_{\Ad_u}  \mapsto b  u) : B_{\Ad_u} \to B\) defines an arrow \(\hat{\tau}_{\dAd_u} \to (B,\sigma_0,\nabla_0)\) in \(\dPic(B)\).
	
	Finally, we check exactness at \(\dpic(B)_{[B]}\).
	Let \((E,\sigma_E,\nabla_E)\) be a Hermitian line \(B\)-bimodule with connection, and suppose that \([(E,\sigma_E,\nabla_E)] \in \dpic(B)_{[B]}\).
	Using \(\lambda_B : B \otimes_B B \to B\), we conclude that there exists an arrow \(U : B \to E\) in \(\grp{Hilb}(B)\).
	Set \(e_0 \coloneqq U(1)\); since \(\Set{1}\) defines both a basis and cobasis for \(B\), it follows that \(\Set{e_0}\) defines both a basis and cobasis for \(E\).
	We shall use \(e_0\) together with \((\sigma_E,\nabla_E)\) to construct \((\omega,\phi) \in \tDiff(B)\) and an arrow \(V : \hat{\tau}_{(\phi,\omega)} \to (E,\sigma_E,\nabla_E)\) in \(\dPic(B)\).
	
	First, define a \(\bC\)-linear map \(\Phi : \Omega_B \to \Omega_B\) by \(\Phi \coloneqq \left(\beta \mapsto \hp{e_0 \otimes 1}{\sigma_E(\beta \otimes e_0)}\right)\);	once we know that the degree-preserving map \(\Phi\) is an element of \(\Aut(\Omega_B)\), we shall set \(\phi \coloneqq \Phi^{-1}\).
	First, note that \(\Phi\) is unit-preserving since \(\Phi(1) = \hp{e_0}{e_0} = 1\).
	Next, note that \(\Phi\) is multiplicative by \eqref{eq:braidassoc} applied to \(\sigma_E\) and \(\ast\)-preserving by \eqref{eq:braidherm} applied to \(\sigma_E\).
	Next, note that \(\Phi\) is bijective since, for all \(\beta \in \Omega_B\) and \(x \in E\),
	\[
		\beta \otimes x = \beta \otimes e_0 \hp{e_0}{x} = \sigma_E^{-1}(e_0 \otimes \hp{e_0 \otimes 1}{\sigma_E^{-1}(\beta \otimes e_0)})  \hp{e_0}{x} = \sigma_E^{-1}(e_0 \otimes \Phi(\beta))  \hp{e_0}{e},
	\]
	so that, in terms of the arrow \(\Upsilon_{\Omega_B,E} : \conj{\Omega_B \otimes_B E} \to \conj{E} \otimes_B \Omega_B\) in \(\grp{Bimod}(B)\),
	\[
		\forall \beta \in \Omega_B, \quad \Phi^{-1}(\beta) = \hp*{\Upsilon_{\Omega_B,E}(\conj{\sigma^{-1}_E(e_0 \otimes \beta)})}{\conj{e_0} \otimes 1}.
	\]
	Finally, note that \(\Phi\) is isometric on \(B\) since, for all \(b \in B\),
	\begin{align*}
		\norm{\Phi(b)} &= \norm{\hp{e_0}{b  e_0}} \leq \norm{b} \cdot \norm{e_0}^2 = \norm{b} \cdot \norm{\hp{e_0}{e_0}} = \norm{b},\\
		\norm{\Phi^{-1}(b)} &= \norm{\hp{\conj{e_0  b}}{\conj{e_0}}} \leq \norm{b} \cdot \norm{\conj{e_0}}^2 = \norm{b} \cdot \norm{\hp{e_0}{e_0}} = \norm{b}.
	\end{align*}

	We now claim that \((\omega,\phi) \coloneqq (-\iu{}\Phi^{-1}(\hp{e_0 \otimes 1}{\nabla_E e_0}),\Phi^{-1})\) defines an element of the group \(\tDiff(B)\).
	Note that \(\omega \in \Omega^1_B\) is self-adjoint since \(\phi \in \Aut(\Omega_B)\) and since
	\[
		0 = \du_B\hp{e_0}{e_0} = \hp{\nabla_E e_0}{e_0 \otimes 1} + \hp{e_0 \otimes 1}{\nabla_E e_0} = \hp{e_0}{\nabla_E e_0}^\ast + \hp{e_0}{\nabla_E e_0}.
	\]
	Thus, it remains to show that \((\phi,\omega)\) satisfies \eqref{eq:extdiff}.
	Let \(n \in \bN_0\) and \(\beta \in \Omega^n_B\).
	Then
	\begin{align*}
		\sigma_E(\du_B\phi(\beta) \otimes e_0)
		&= \nabla_E\mleft(\sigma_E(\phi(\beta) \otimes e_0)\mright)-(-1)^{n}\sigma_E\mleft(\phi(\beta) \otimes \leg{\nabla_E(e_0)}{0}\mright)  \leg{\nabla_E(e_0)}{1}\\
		&= \nabla_E(e_0 \otimes \beta) - (-1)^{n}\sigma_E\mleft(\phi(\beta) \otimes e_0\mright)  \hp{e_0}{\nabla_E e_0}\\
		&= \nabla_E(e_0)  \beta + e_0 \otimes \du\beta - (-1)^{n} e_0 \otimes \omega  \hp{e_0}{\nabla_E e_0}\\
		&= e_0 \otimes \left(\du_B\beta + [\hp{e_0}{\nabla_E e_0},\beta]\right)\\
		&= \sigma_E\mleft(\left(\phi(\du_B\beta)+\iu{}[\omega,\phi(\beta)]\right) \otimes e_0\mright),
	\end{align*}
	so that, indeed, for every \(x \in E\),
	\begin{multline*}
		\du_B\phi(\beta) \otimes x = \du_B\phi(\beta) \otimes e_0  \hp{e_0}{x} = \left(\phi(\du_B\beta)+\iu{}[\omega,\phi(\beta)]\right) \otimes e_0  \hp{e_0}{x}\\ = \left(\phi(\du_B\beta)+\iu{}[\omega,\phi(\beta)]\right) \otimes x.
	\end{multline*}

	Finally, define an arrow \(V : B_\phi \to E\) in \(\Pic(B)\) by \(V(1_\phi) \coloneqq e_0\); we claim that it yields an arrow \(V : \hat{\tau}_{\omega,\phi} \to (E,\sigma_E,\nabla_E)\) in \(\dPic(B)\).
	Indeed, for all \(b \in B\),
	\begin{align*}
		\nabla_E \mleft(V(b_\phi)\mright)
		&= \sigma_E(\du b \otimes e_0) + b  \nabla_E e_0\\
		&= e_0 \otimes \left(\hp{e_0 \otimes 1}{\sigma_E(\du b \otimes e_0)}\right) + b  e_0 \otimes \hp{e_0 \otimes 1}{\nabla_E e_0}\\
		&= e_0 \otimes \phi^{-1}(\du b) + b  e_0 \otimes \iu{}\phi^{-1}(\omega)\\
		&= (V \otimes \id)\mleft(\nabla_{(\omega,\phi)}b_\phi\mright). \qedhere
	\end{align*}
\end{proof}

We have just seen that the generalised diffeomorphism group \(\tDiff(B)\) embeds via \(\hat{\tau}\) in \(\dPic(B)\).
The following refinement of Proposition-Definition \ref{propdef:topfroh} yields a surprising `moral converse': the entire differential Picard group \(\dpic(B)\) acts canonically as automorphisms on the graded centre of \(\Omega_B\) in a manner that will turn out to be explicitly compatible with this embedding by Example \ref{ex:twist3}.

\begin{propositiondefinition}[{Beggs--Majid~\cite[Prop. 5.9]{BeMa18}}]\label{propdef:froh}
	The \emph{Fr\"{o}hlich homomorphism} of \(B\) is the unique group homomorphism \(\hat{\Phi} : \dpic(B) \to \Aut(\Zent(\Omega_B),\du)\), such that, for every Hermitian \(B\)-bimodule with connection \((E,\sigma_E,\nabla_E)\),
	\begin{equation}
		\forall \beta \in \Zent_B(\Omega_B), \, \forall x \in E, \quad \!\!\!\hat{\Phi}_{[E,\nabla_E]}(\beta) \otimes x = \sigma_E^{-1}(x \otimes \beta);
	\end{equation}
	in this case, we call \(\hat{\Phi}_{[E,\nabla_E]}\) the \emph{Fr\"{o}hlich automorphism} of \((E,\sigma_E,\nabla_E)\).
\end{propositiondefinition}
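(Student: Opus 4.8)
Following Beggs--Majid~\cite[\S5]{BeMa18}, adapted to the present Hermitian, pre-\Cstar-algebraic conventions, the plan is to build \(\hat\Phi_{[E,\nabla_E]}\) objectwise, verify it is an automorphism of the differential graded \(\ast\)-algebra \(\Zent(\Omega_B)\), and then show that the assignment \((E,\sigma_E,\nabla_E)\mapsto\hat\Phi_{[E,\nabla_E]}\) is invariant under isomorphism in \(\dPic(B)\) and multiplicative for the monoidal product. Uniqueness is then automatic: the displayed relation determines each \(\hat\Phi_{[E,\nabla_E]}\), and every class in \(\dpic(B)\) is represented by some object.

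For the construction, fix \((E,\sigma_E,\nabla_E)\) and a homogeneous \(\beta\in\Zent(\Omega_B)\). Since \(\sigma_E\) is an isomorphism of graded \(B\)-bimodules and \(\beta\) commutes with \(B\), the assignment \(x\mapsto\sigma_E^{-1}(x\otimes\beta)\) is a homogeneous \(B\)-bimodule map \(E\to\Omega_B\otimes_B E\) of degree \(\deg\beta\). Because \((\conj E,\sigma_{\conj E},\nabla_{\conj E})\) is a monoidal inverse of \((E,\sigma_E,\nabla_E)\) in \(\dPic(B)\) by Theorem-Definition~\ref{thmdef:dpic}, the \(B\)-bimodule \(E\) is invertible, so every \(B\)-bimodule map \(E\to\Omega_B\otimes_B E\) has the form \(x\mapsto\gamma\otimes x\) for a unique \(\gamma\in\Omega_B\) commuting with \(B\); denote this \(\gamma\) by \(\hat\Phi_{[E,\nabla_E]}(\beta)\), so that the displayed relation holds by construction and \(\hat\Phi_{[E,\nabla_E]}(\beta)\) has degree \(\deg\beta\). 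That \(\hat\Phi_{[E,\nabla_E]}(\beta)\) actually lies in the \emph{graded} centre follows by comparing \(\sigma_E\bigl(\alpha\,\hat\Phi_{[E,\nabla_E]}(\beta)\otimes x\bigr)\) with \((-1)^{\deg\alpha\deg\beta}\sigma_E\bigl(\hat\Phi_{[E,\nabla_E]}(\beta)\,\alpha\otimes x\bigr)\) for homogeneous \(\alpha\in\Omega_B\) via the hexagon~\eqref{eq:braidassoc}: graded centrality of \(\beta\) makes both equal to \(\sigma_E(\alpha\otimes x)\,\beta\), so the two arguments of \(\sigma_E\) agree after tensoring with every \(x\in E\), hence agree by invertibility of \(E\). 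Extending linearly over homogeneous components yields a grading-preserving map \(\hat\Phi_{[E,\nabla_E]}\colon\Zent(\Omega_B)\to\Zent(\Omega_B)\); it is unital because \(\sigma_E\) extends \(\rho_E^{-1}\circ\lambda_E\), multiplicative by another application of~\eqref{eq:braidassoc}, and \(\ast\)-preserving by~\eqref{eq:braidherm} together with nondegeneracy of the inner product on \(E\). The one genuinely connection-dependent point is \(\du\)-equivariance: applying \(\nabla_E\) to \(\hat\Phi_{[E,\nabla_E]}(\beta)\cdot(x\otimes1)=x\otimes\beta\) in \(E\otimes_B\Omega_B\) and using the left Leibniz rule~\eqref{eq:leftleibniz}, the terms involving \(\nabla_E x\) cancel after invoking centrality of \(\beta\) and the displayed relation applied to the legs of \(\nabla_E x\in E\otimes_B\Omega^1_B\), leaving \(\du\hat\Phi_{[E,\nabla_E]}(\beta)=\hat\Phi_{[E,\nabla_E]}(\du\beta)\).

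For functoriality, if \(u\colon(E,\sigma_E,\nabla_E)\to(F,\sigma_F,\nabla_F)\) is an arrow of \(\dPic(B)\), then~\eqref{eq:intertwine} gives \((\id_{\Omega_B}\otimes u)\circ\sigma_E^{-1}=\sigma_F^{-1}\circ(u\otimes\id_{\Omega_B})\), whence \(\hat\Phi_{[F,\nabla_F]}=\hat\Phi_{[E,\nabla_E]}\) because \(u\) is surjective; thus \(\hat\Phi\) descends to a well-defined map on \(\dpic(B)\). For multiplicativity, feed \((x\otimes y)\otimes\beta\) into the formula for \(\sigma_{E\otimes_BF}\) of Theorem~\ref{thm:bema} and unwind through the associators: transporting \(\beta\) past \(F\) by \(\sigma_F^{-1}\) produces \(\hat\Phi_{[F,\nabla_F]}(\beta)\) between the two tensor factors, and transporting it further past \(E\) by \(\sigma_E^{-1}\)---legitimate since \(\hat\Phi_{[F,\nabla_F]}(\beta)\) is again central---produces \(\hat\Phi_{[E,\nabla_E]}\bigl(\hat\Phi_{[F,\nabla_F]}(\beta)\bigr)\); hence \(\hat\Phi_{[E\otimes_BF]}=\hat\Phi_{[E,\nabla_E]}\circ\hat\Phi_{[F,\nabla_F]}\). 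Together with \(\hat\Phi_{[B]}=\id_{\Zent(\Omega_B)}\) for the monoidal unit, this yields both the homomorphism property and bijectivity of each \(\hat\Phi_{[E,\nabla_E]}\), with two-sided inverse \(\hat\Phi_{[\conj E,\nabla_{\conj E}]}\).

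I expect the construction step to be the main obstacle---specifically, the conjunction of (i) the classification of \(B\)-bimodule maps \(E\to\Omega_B\otimes_B E\) by the \(B\)-centraliser of \(\Omega_B\), which rests on invertibility of \(E\), and (ii) the promotion of the output to the \emph{graded} centre, which rests on the generalised-braiding hexagon~\eqref{eq:braidassoc}. Once \(\hat\Phi_{[E,\nabla_E]}\) is known to take values in \(\Zent(\Omega_B)\), the remaining verifications are routine formal manipulations with~\eqref{eq:braidassoc}, \eqref{eq:braidherm}, \eqref{eq:leftleibniz} and the inner-product identities of \S\ref{sec:2}, essentially as in~\cite[\S5]{BeMa18}.
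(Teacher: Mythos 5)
Your argument is essentially correct and is in fact more self-contained than the paper's own proof: the paper defers the core construction (well-definedness of \(\hat{\Phi}_{[E,\nabla_E]}\) as a grading-preserving automorphism of \(\Zent(\Omega_B)\) and the homomorphism property on \(\dpic(B)\)) to Beggs--Majid, extracts the explicit basis/cobasis formulas \eqref{eq:froh1}--\eqref{eq:froh2} from the proof of Theorem \ref{thm:stabilizer} \emph{mutatis mutandis}, and then only verifies \(\ast\)-preservation (via \eqref{eq:froh2} and \eqref{eq:braidherm}) and compatibility with \(\du_B\) (via the Leibniz rules), exactly the two points you also single out as connection-specific. Your classification of \(B\)-bimodule maps \(E \to \Omega_B \otimes_B E\) by elements of \(\Omega_B\) commuting with \(B\), resting on invertibility of \(E\) in \(\dPic(B)\), is the abstract reformulation of those cobasis formulas, and your verifications of graded centrality, multiplicativity, \(\du\)-equivariance and isomorphism-invariance via \eqref{eq:braidassoc}, \eqref{eq:leftleibniz} and the intertwining identities of Theorem \ref{thm:bema} all go through (one small citation slip: the identity you actually use, \(\sigma_F \circ (\id_{\Omega_B} \otimes u) = (u \otimes \id_{\Omega_B}) \circ \sigma_E\), is the second displayed identity of Theorem \ref{thm:bema}, not \eqref{eq:intertwine} itself).

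The one genuine omission concerns the codomain. In this paper \(\Aut(\Zent(\Omega_B),\du)\) is the automorphism group in the category \(\grp{QDGA}\), so membership requires, beyond being a grading- and \(\ast\)-preserving algebra automorphism commuting with \(\du\), that the restriction to the degree-zero part \(\Zent(\Omega_B)^0\) be bounded (hence isometric) for the pre-\Cstar-norm; this is not automatic for a \(\ast\)-automorphism of a pre-\Cstar-algebra, and your construction never addresses it. The paper closes this point by observing that for \(\beta \in \Zent(\Omega_B)^0\) the defining relation collapses to \(\hat{\Phi}_{[E,\nabla_E]}(\beta) \cdot x = x \cdot \beta\), so that \(\rest{\hat{\Phi}_{[E,\nabla_E]}}{\Zent(\Omega_B)^0} = \rest{\Phi_{[E]}}{\Zent(\Omega_B)^0}\) is the topological Fr\"{o}hlich automorphism of Proposition-Definition \ref{propdef:topfroh}, whose isometry was already established there by a cobasis estimate; alternatively, you could run that estimate directly on your element \(\gamma\). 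With that supplement your proof is complete.
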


\begin{proof}
	Relative to~\cite[Prop. 5.9]{BeMa18}, it remains to check that for each object \((E,\sigma_E,\nabla_E)\) of \(dPic(B))\), the automorphism \(\hat{\Phi}_{[E,\nabla_E]}\) of the graded algebra \(\Zent_B(\Omega_B)\) is \(\ast\)-preserving and commutes with \(\du_B\) on \(\Zent(\Omega_B)\); observe that \(\rest{\hat{\Phi}_{[E,\nabla_E]}}{\Zent(\Omega_B)^0} = \rest{\Phi_{[E]}}{\Zent(\Omega_B)^0}\)  is isometric by Proposition-Definition \ref{propdef:topfroh}.
	Let \((E,\sigma_E,\nabla_E) \in \Obj(\dPic(B))\) be given, and fix a basis \((e_i)_{i=1}^n\) and a strict cobasis \((\epsilon_j)_{j=1}^m\) for \(E\).
	On the one hand, by the proof of Theorem \ref{thm:stabilizer}, \emph{mutatis mutandis},
	\begin{align}
		\forall \beta \in \Zent(\Omega_B), && \hat{\Phi}_{[E,\nabla_E]}(\beta) &= \sum\nolimits_{i=1}^n \hp*{\Upsilon_{\Omega_B,E}(\conj{\sigma_E^{-1}(e_i \otimes \beta)})}{\conj{e_i} \otimes 1},\label{eq:froh1}\\
		\forall \beta \in \Zent(\Omega_B), && \hat{\Phi}^{-1}_{[E,\nabla_E]}(\beta) &= \sum\nolimits_{j=1}^m \hp*{\epsilon_j \otimes 1}{\sigma_E(\beta \otimes \epsilon_j)};\label{eq:froh2}
	\end{align}
	hence, by \eqref{eq:froh2} and \eqref{eq:braidherm} applied to \(\sigma_E\), the map \(\hat{\Phi}_{[E,\nabla_E]}\) is \(\ast\)-preserving.
	On the other hand, let \(\beta \in \Zent(\Omega_B)\) be given.
	Then, for all \(x \in E\), 
	\begin{align*}
		x \otimes \du_B\hat{\Phi}_{[E,\nabla_E]}^{-1}(\beta)
		&= \nabla_E\mleft(x \otimes \Phi_{[E,\nabla_E]}(\beta)\mright) - \nabla_E (x) \Phi_{[E,\nabla_E]}^{-1}(\beta)\\
		&= \nabla_E\mleft(\beta  (x \otimes 1)\mright) - \beta  \nabla_E x \\
		&= x \otimes \hat{\Phi}_{[E,\nabla_E]}^{-1}(\du_B\beta). \qedhere
	\end{align*}
\end{proof}

\begin{corollary}
	The canonical left action of \(\pi_0(\dPic(B)) \eqqcolon \dpic(B)\) on the Abelian group 
	\(
		\pi_1(\dPic(B)) = \Unit(\Zent(B) \cap \ker\du_B)
	\)
	is the left action induced by \(\hat{\Phi}\).
\end{corollary}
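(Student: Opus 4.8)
The plan is to deduce the corollary from Proposition-Definition~\ref{propdef:froh} and Proposition-Definition~\ref{propdef:topfroh} by transporting the statement along the forgetful functor $U \colon \dPic(B) \to \Pic(B)$.

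First I would observe that $U$ is a \emph{strict} monoidal functor of coherent $2$-groups: its structure isomorphisms $U^{(0)} \colon U(1_{\dPic(B)}) \to 1_{\Pic(B)}$ and $U^{(2)}$ are identity arrows of $\grp{Bimod}(B)$, since the monoidal unit, monoidal product, unitors, and associators of $\dPic(B)$ are by construction exactly those of $\Pic(B)$, and $U$ acts as the identity on underlying $B$-bimodule maps. As $U$ moreover sends $(\conj E,\sigma_{\conj E},\nabla_{\conj E})$ to $\conj E$ and $\ev_{(E,\sigma_E,\nabla_E)}$ to $\ev_E$, strict monoidality forces $U$ to carry the inverse $((\conj E,\sigma_{\conj E},\nabla_{\conj E}),\ev_{(E,\sigma_E,\nabla_E)},\coev_{(E,\sigma_E,\nabla_E)})$ of $(E,\sigma_E,\nabla_E)$ supplied by Theorem~\ref{thm:laplaza} to an inverse $(\conj E,\ev_E,U(\coev_{(E,\sigma_E,\nabla_E)}))$ of $E$ in $\Pic(B)$; by uniqueness of monoidal inverses, $U(\coev_{(E,\sigma_E,\nabla_E)}) = \coev_E$. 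Since the formula for $\act_{\grp{G}}$ recalled above is assembled only from $\otimes$, composition, right unitors, identity arrows, and the arrows $\coev$ and $\coev^{-1}$, all of which $U$ preserves on the nose, I obtain, for every object $(E,\sigma_E,\nabla_E)$ of $\dPic(B)$ and every $\alpha \in \pi_1(\dPic(B))$, an equality of $B$-bimodule endomorphisms of $B$
\[
	[(E,\sigma_E,\nabla_E)] \act_{\dPic(B)} \alpha = [E] \act_{\Pic(B)} \alpha .
\]

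It then remains to identify the two sides. By Proposition-Definition~\ref{propdef:topfroh}, the right-hand side equals $\Phi_{[E]}(\alpha)$. For the left-hand side I would first note that $\pi_1(\dPic(B)) = \Unit(\Zent(B) \cap \ker\du_B)$ is contained in $\Unit(\Zent(\Omega_B)^0)$: if $u \in \Unit(B)$ is central with $\du_B u = 0$, then applying $\du_B$ to $ub = bu$ gives $u\,\du_B b = \du_B b\, u$, so $u$ commutes with $\du_B(B)$ and hence with all of $\Omega_B$. Thus $\hat\Phi_{[E,\nabla_E]}(\alpha)$ is defined, and by the proof of Proposition-Definition~\ref{propdef:froh} it agrees with $\Phi_{[E]}(\alpha)$, since $\rest{\hat\Phi_{[E,\nabla_E]}}{\Zent(\Omega_B)^0} = \rest{\Phi_{[E]}}{\Zent(\Omega_B)^0}$. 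Chaining the three equalities gives $[(E,\sigma_E,\nabla_E)] \act_{\dPic(B)} \alpha = \hat\Phi_{[E,\nabla_E]}(\alpha)$, which is the assertion.

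The one step I expect to require care is the passage in the second paragraph: making precise that a strict monoidal functor of coherent $2$-groups intertwines the canonical $\pi_0$-actions on $\pi_1$, which hinges on its preservation of the $\coev$ arrows through the uniqueness clauses of Theorem~\ref{thm:laplaza}. Everything after that is bookkeeping with the two Fröhlich homomorphisms already constructed. A self-contained alternative would bypass $\Pic(B)$ entirely and substitute the basis formula~\eqref{eq:froh1} for $\hat\Phi_{[E,\nabla_E]}$ directly into the formula for $\act_{\grp{G}}$, simplifying by means of~\eqref{eq:tensorip}, \eqref{eq:imprimitivity3}, and strict fullness; I would regard the functorial route as the cleaner one.
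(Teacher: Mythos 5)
Your proposal is correct, and it matches the paper's (implicit) reasoning: the corollary is stated without proof as an immediate consequence of Proposition-Definitions \ref{propdef:topfroh} and \ref{propdef:froh}, and your argument—transporting the canonical action along the strict monoidal forgetful functor \(\dPic(B) \to \Pic(B)\), which preserves evaluation and hence coevaluation by uniqueness of inverses, then invoking \(\rest{\hat{\Phi}_{[E,\nabla_E]}}{\Zent(\Omega_B)^0} = \rest{\Phi_{[E]}}{\Zent(\Omega_B)^0}\) together with the observation that \(\Zent(B) \cap \ker\du_B \subseteq \Zent(\Omega_B)^0\)—is exactly the bookkeeping the paper leaves to the reader. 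I see no gaps.
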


We can now introduce curvature as a \(1\)-cocycle for this \(\dpic(B)\)-action.
For convenience, let us define a \emph{pre-symplectic form} on \(B\) to be self-adjoint \(\beta \in \Zent(\Omega_B)^2\) satisfying \(\du\beta = 0\).
Hence, we denote by \(\cS(B)\) the real subspace of all pre-symplectic forms on \(B\), which we endow with the right \(\dpic(B)\)-action defined by
\begin{equation}
	\forall [E,\nabla_E] \in \dpic(B), \, \forall \omega \in \cS(B), \quad \omega \ract [E,\nabla_E] \coloneqq \hat{\Phi}_{[E,\nabla_E]}^{-1}(\omega).
\end{equation}
Moreover, recall that if \(\Gamma\) is a group and \(M\) is a right \(\Gamma\)-module (written additively), then a map \(c : \Gamma \to M\) is a right \(1\)-cocycle whenever
\[
	\forall \gamma,\eta\in\Gamma, \quad c(\gamma\eta) = c(\gamma) \ract \eta + c(\eta).
\]

\begin{propositiondefinition}[{Beggs--Majid~\cite[Prop.\ 5.9]{BeMa18}}]\label{propdef:curve}
	The \emph{curvature \(1\)-cocycle} of \(B\) is the unique right \(1\)-cocycle \(\bF : \dpic(B) \to \cS(B)\), such that, for every Hermitian \(B\)-bimodule with connection \((E,\sigma_E,\nabla_E)\),
	\begin{equation}\label{eq:curve}
		\forall \xi \in E \otimes_B \Omega_B, \quad \nabla_E^2 \xi = \xi \cdot \iu{}\bF_{[E,\nabla_E]};
	\end{equation}
	in this case, we call \(\bF_{[E,\nabla_E]} \in \cS(B)\) the \emph{curvature \(2\)-form} of \([E,\nabla_E]\).
\end{propositiondefinition}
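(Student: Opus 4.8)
The plan is to follow the proof of Proposition-Definition \ref{propdef:froh}, extracting from Beggs--Majid~\cite[Prop.\ 5.9]{BeMa18} the structural statements about curvature of a bimodule connection and supplying the extra verifications forced by our Hermitian setting. First I would fix an object \((E,\sigma_E,\nabla_E)\) of \(\dPic(B)\) and observe that, since \((\sigma_E,\nabla_E)\) is a Hermitian bimodule connection, its curvature \(\nabla_E^2\) extends to a degree-\(2\) endomorphism of \(E \otimes_B \Omega_B\) that is \emph{both} left and right \(\Omega_B\)-linear for the \(\Omega_B\)-bimodule structure twisted by \(\sigma_E\): right \(\Omega_B\)-linearity is automatic from the extended Leibniz rule, left \(\Omega_B\)-linearity follows from \eqref{eq:leftleibniz} together with \(\du_B^2 = 0\), and left \(B\)-linearity of \(\rest{\nabla_E^2}{E \otimes 1}\) is precisely \eqref{eq:bemacriterion2}. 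Since \(E\) is a Hermitian line \(B\)-bimodule, hence invertible, \(E \otimes_B \Omega_B\) is an invertible \(\Omega_B\)-bimodule (with inverse \(\conj{E} \otimes_B \Omega_B\)), so its graded bimodule endomorphism algebra is canonically \(\Zent(\Omega_B)\); therefore there is a unique \(\iu\bF_{[E,\nabla_E]} \in \Zent(\Omega_B)^2\) with \(\nabla_E^2 \xi = \xi \cdot \iu\bF_{[E,\nabla_E]}\) for all \(\xi \in E \otimes_B \Omega_B\), which is \eqref{eq:curve}. That \([E,\nabla_E] \mapsto \bF_{[E,\nabla_E]}\) is constant on isomorphism classes---and so descends to \(\dpic(B)\)---follows from \eqref{eq:intertwine}: an arrow \(u\) in \(\dPic(B)\) gives \(\nabla_F^2 \circ (u \otimes \id_{\Omega_B}) = (u \otimes \id_{\Omega_B}) \circ \nabla_E^2\), and \(u \otimes \id_{\Omega_B}\) is surjective and right \(\Omega_B\)-linear, forcing \(\bF_{[F,\nabla_F]} = \bF_{[E,\nabla_E]}\).

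Next I would check that \(\bF_{[E,\nabla_E]}\) actually lies in \(\cS(B)\). For self-adjointness, choosing a basis \((e_i)\) and a strict cobasis \((\epsilon_j)\) for \(E\) one writes \(\iu\bF_{[E,\nabla_E]}\) as a sum of contractions of the form \(\hp{\epsilon_j \otimes 1}{\nabla_E^2 \epsilon_j}\) (exactly as \(\omega\) was produced in the proof of Theorem \ref{thm:stabilizer}), and then applies \eqref{eq:hermconndiff} twice together with \eqref{eq:formip2}. Closedness \(\du_B \bF_{[E,\nabla_E]} = 0\) is the Bianchi identity, obtained by applying \(\du_B\) to the defining relation \eqref{eq:curve} and using \eqref{eq:leftleibniz} together with centrality of \(\bF_{[E,\nabla_E]}\). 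Uniqueness of the \(1\)-cocycle \(\bF\) is then immediate: \eqref{eq:curve} determines \(\bF_{[E,\nabla_E]}\) for any chosen representative of a class in \(\dpic(B)\) because \(E \otimes_B \Omega_B\) is a faithful right \(\Omega_B\)-module (as \(E\) is an invertible \(B\)-bimodule, so its trace ideal is all of \(B\)).

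It remains to establish the cocycle identity, i.e.\ \(\bF_{[E \otimes_B F,\nabla_{E \otimes_B F}]} = \hat{\Phi}_{[F,\nabla_F]}^{-1}\mleft(\bF_{[E,\nabla_E]}\mright) + \bF_{[F,\nabla_F]}\), which by the definition of \(\ract\) on \(\cS(B)\) says exactly \(\bF([E,\nabla_E] \otimes [F,\nabla_F]) = \bF_{[E,\nabla_E]} \ract [F,\nabla_F] + \bF_{[F,\nabla_F]}\). I would square the tensor-product connection of Theorem \ref{thm:bema}: the two diagonal terms reproduce the curvatures of \(E\) and of \(F\) on their respective factors, and the two cross terms cancel by graded-commutativity of \(\Omega_B\) (this is the NC avatar of the classical identity \(F_{L \otimes M} = F_L + F_M\), and is the Leibniz/Bianchi computation underlying \cite[Prop.\ 5.9]{BeMa18}). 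The \(E\)-curvature term emerges sitting in the slot \emph{between} the \(E\)- and \(F\)-tensor factors; transporting the central form \(\iu\bF_{[E,\nabla_E]}\) past \(F\) via \(\sigma_F\) introduces the factor \(\hat{\Phi}_{[F,\nabla_F]}^{-1}\), precisely because \(\sigma_F(\gamma \otimes y) = y \otimes \hat{\Phi}_{[F,\nabla_F]}^{-1}(\gamma)\) for central \(\gamma\), a restatement of the defining property of \(\hat{\Phi}\) in Proposition-Definition \ref{propdef:froh}.

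I expect the cocycle computation to be the main obstacle: verifying cleanly that the cross terms in \(\nabla_{E \otimes_B F}^2\) cancel, and that the surviving \(E\)-curvature contribution is reorganised into \(\hat{\Phi}_{[F,\nabla_F]}^{-1}(\bF_{[E,\nabla_E]})\) rather than some other twist, requires careful braiding bookkeeping with \(\sigma_E\), \(\sigma_F\), and the associators \(\alpha\) appearing in the tensor-product formula of Theorem \ref{thm:bema}, where signs and tensor orderings are delicate. The remaining points---\(\Omega_B\)-bilinearity of \(\nabla_E^2\), self-adjointness, closedness, invariance under isomorphism, and uniqueness---are routine given the machinery already assembled in \S\S\ref{sec:2}--\ref{sec:2.4}.
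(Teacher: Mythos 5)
Most of your argument is sound and runs parallel to the paper's: the paper also shows \(\nabla_E^2\) is left and right \(B\)-linear (via \eqref{eq:rightleibniz} and \eqref{eq:bemacriterion2}), extracts \(\bF_{[E,\nabla_E]} = -\iu\sum_j \hp{\epsilon_j\otimes 1}{\nabla_E^2\epsilon_j}\) from a cobasis, gets self-adjointness from \eqref{eq:hermconndiff}, centrality and closedness from \eqref{eq:curve} plus the Leibniz rules, and uniqueness/isomorphism-invariance by the same faithfulness considerations. Your route to existence and centrality in one stroke --- identifying the degree-\(2\) graded \(\Omega_B\)-bimodule endomorphisms of \(E\otimes_B\Omega_B\) with \(\Zent(\Omega_B)^2\) by Morita theory --- is a genuinely different and attractive packaging, but it leans on an unproved intermediate claim: that \(E\otimes_B\Omega_B\), with the left \(\Omega_B\)-action twisted by \(\sigma_E\), is an invertible \(\Omega_B\)-bimodule with inverse \(\conj{E}\otimes_B\Omega_B\). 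This is true and follows from the coherent \(2\)-group structure of \(\dPic(B)\) together with \eqref{eq:braidassoc}, but it is not free; the paper sidesteps it entirely by elementwise basis/cobasis computations.

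The genuine gap is in your treatment of the cocycle identity. You assert that the two cross terms in \(\nabla_{E\otimes_B F}^2\) ``cancel by graded-commutativity of \(\Omega_B\).'' In this setting \(\Omega_B\) is emphatically not graded-commutative (that is the whole point of the noncommutative framework --- see e.g.\ \(\Omega_q(\CP^1)\)), and even granting commutativity is not the mechanism at work. What actually kills the cross terms is the bimodule-connection axiom for \((\sigma_F,\nabla_F)\): when you apply the extended \(\nabla_{E\otimes_B F}\) to the first-order term \(x_{(0)}\otimes\sigma_F(x_{(1)}\otimes y)\), the combination \(\nabla_F\mleft(\sigma_F(x_{(1)}\otimes y)\mright)\) must be expanded using the left Leibniz rule \eqref{eq:leftleibniz}, \(\nabla_F(x_{(1)}\cdot(y\otimes 1)) = \du_B(x_{(1)})\cdot(y\otimes 1) - x_{(1)}\cdot\nabla_F y\), and it is the second summand here that cancels the other cross term \(x_{(0)}\otimes\sigma_F(x_{(1)}\otimes\leg{\nabla_F y}{0})\leg{\nabla_F y}{1}\); the surviving \(\du_B(x_{(1)})\) piece reassembles with the remaining terms into \(\sigma_F\)-transported \(E\)-curvature, whence the twist \(\hat{\Phi}_{[F,\nabla_F]}^{-1}(\bF_{[E,\nabla_E]})\). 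So the delicate step you flagged is exactly where your stated justification would fail as written; it needs \eqref{eq:leftleibniz} (equivalently \eqref{eq:bemacriterion1}), not any commutativity of \(\Omega_B\). For comparison, the paper does not redo this computation at all: it invokes Beggs--Majid~\cite[Prop.\ 5.9]{BeMa18} \emph{mutatis mutandis} for precisely this identity, so if you carry out the direct computation you are supplying more detail than the paper, and the braiding bookkeeping you anticipate is indeed where all the content lies.
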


\begin{proof}
	First, let \((E,\sigma_E,\nabla_E) \in \Obj(\dPic(B))\); fix a basis \((e_i)_{i=1}^m\) and a cobasis \((\epsilon_j)_{j=1}^n\) for \(E\).
	The map \(\nabla_E^2 : E \to E \otimes_B \Omega_B\) is right \(B\)-linear by repeated applications of \eqref{eq:rightleibniz} and left \(B\)-linear by \eqref{eq:bemacriterion2}.
	Thus, 
	\[
		\nabla_E^2 x = \nabla_E^2\mleft(\sum\nolimits_{j=1}^n \hp{\conj{x}}{\conj{\epsilon_j}}  \epsilon_j\mright) = \sum\nolimits_{j=1}^m \hp{\conj{x}}{\conj{\epsilon_j}}  \nabla^2_E \epsilon_j = x \otimes \sum\nolimits_{j=1}^m \hp{\epsilon_j \otimes 1}{\nabla^2_E\epsilon_j}
	\]
	for every \(x \in E\),
	so that the \(2\)-form \(\bF_{[E,\nabla_E]} \coloneqq -\iu{}\sum_{j=1}^m \hp{\epsilon_j \otimes 1}{\nabla^2_E\epsilon_j} \in \Omega^2_B\) satisfies
	\begin{equation}\label{eq:curv2form}
		\forall x \in E, \quad \nabla_E^2 x = x \otimes \iu{}\bF_{[E,\nabla_E]}.
	\end{equation}
	On the one hand, if \(\varpi\) is any \(2\)-form  satisfying \eqref{eq:curv2form}, then
	\[
		\varpi = \sum\nolimits_{j=1}^n \hp{\epsilon_j}{\epsilon_j}  \varpi = \sum\nolimits_{j=1}^n \hp{\epsilon_j \otimes 1}{\epsilon_j \otimes \varpi} = -\iu{}\sum\nolimits_{j=1}^n \hp{\epsilon_j \otimes 1}{\nabla_E^2\epsilon_j} = \bF_{[E,\nabla_E]};
	\]
	in fact, this uniqueness implies that \(\bF_{[E,\nabla_E]}\) depends only on \([E,\nabla_E] \in \dpic(B)\).
	On the other, \(\bF_{[E,\nabla_E]}\) is self-adjoint by construction from \((\epsilon_j)_{j=1}^n\) and repeated applications of \eqref{eq:hermconndiff}.

	We now show that \(\bF_{[E,\nabla_E]}\) is central, is closed, and satisfies \eqref{eq:curve}.
	First, by repeated applications of \eqref{eq:leftleibniz}, it follows that \(\nabla_E^2\sigma_E(\beta \otimes x) = \sigma_E(\beta \otimes x) \cdot \iu{}\bF_{[E,\nabla_E]}\) for all \(\beta \in \Omega_B\) and \(x \in E\), so that by invertibility of the map \(\sigma_E\), the \(2\)-form \(\bF_{[E,\nabla_E]}\) satisfies \eqref{eq:curve}.
	Next, by \eqref{eq:curve} and repeated applications of \eqref{eq:rightleibniz}, it follows that
	\(
		x \otimes \bF_{[E,\nabla_E]}  \beta = -\iu{}\nabla_E^2(x \otimes \beta) = x \otimes \beta  \bF_{[E,\nabla_E]}
	\)
	for every \(\beta \in \Omega_B\) and \(x \in E\),
	so that \(\bF_{[E,\nabla_E]}\) is central. 
	Finally, \(\bF_{[E,\nabla_E]}\) is closed since for every \(x \in E\), by \eqref{eq:curve},
	\[
		x \otimes \iu{}\,\du_B\bF_{[E,\nabla_E]} = \nabla_E(x \otimes \iu{}\bF_{[E,\nabla_E]}) - \nabla_E(x) \cdot \iu{}\bF_{[E,\nabla_E]} = \nabla_E(\nabla_E^2 x) - \nabla_E^2(\nabla_E x) = 0.
	\]
	
	Finally, by~\cite[Prop.\ 5.9]{BeMa18}, \emph{mutatis mutandis}, the map \([E,\nabla_E] \mapsto \bF_{[E,\nabla_E]}\) satisfies
	\(
		\bF_{[E \otimes_B F,\nabla_{E \otimes_B F}]} = \hat{\Phi}_{[F,\nabla_F]}^{-1}(\bF_{[E,\nabla_E]}) + \bF_{[F,\nabla_F]}
	\)
	for all objects \((E,\sigma_E,\nabla_E)\) and \((F,\sigma_F,\nabla_F)\) of \(\dPic(B)\),
	which is the required cocycle identity.
\end{proof}

\begin{example}\label{ex:classical3}
	We continue from Example \ref{ex:classical2}.
	Let \(\Omega^2(X,\bR)_{\cl}\) denote the \(\Diff(X)\)-invariant \(\bR\)-subspace of closed real \(2\)-forms on \(X\).
	On the one hand, let \(\Psi : \dpic(C^\infty(X)) \to \Diff(X)\) be the homomorphism induced by the Fr\"{o}hlich homomorphism of \(C^\infty(X)\).
	On the other, recall that the ordinary differential cohomology group \(\check{H}^2(X)\) is the group of isomorphism classes of Hermitian line bundles on \(X\) with unitary connection~\cite[Ex.\ 2.7]{HS}.
	Then, by Serre--Swan duality, 
	\[
		1 \to \check{H}^2(X) \xrightarrow{[\mathcal{E},\nabla_{\mathcal{E}}] \mapsto [\Gamma(\mathcal{E}),\nabla_{\mathcal{E}}]} \dpic(C^\infty(X)) \xrightarrow{\Psi} \Diff(X) \to 1
	\]
	defines a split exact sequence with canonical right splitting \(\phi \mapsto [\hat{\tau}(0,(\phi^{-1})^\ast)]\).
	Given the resulting isomorphism \(\Diff(X) \ltimes \check{H}^2(X)  \to \dpic(C^\infty(X))\) defined by
	\[
		(\phi,[\cE,\nabla_{\cE}]) \mapsto [\Gamma((\phi^{-1})^\ast\cE),(\phi^{-1})^\ast\nabla_{\cE}]  [\hat{\tau}(0,(\phi^{-1})^\ast)]
	\]
	we may identify the Fr\"{o}hlich homomorphism \(\Phi\) with the quotient map
	\[
		\left((\phi,[\cE,\nabla_{\cE}]) \mapsto \phi \right) : \Diff(X)  \ltimes \check{H}^2(X) \to \Diff(X)
	\]
	and the curvature \(1\)-cocycle \(\bF : \dpic(C^\infty(X)) \to \Omega^2(X,\bR)_{\cl}\) with the map
	\[
		\left((\phi,[\cE,\nabla_{\cE}]) \mapsto \phi^\ast \operatorname{tr}(\nabla^2_{\cE})\right) : \Diff(X) \ltimes \check{H}^2(X) \to \Omega^2(X,\bR)_{\cl}.
	\]
\end{example}

\begin{example}\label{ex:twist3}
	The homomorphism \(\hat{\tau}\) of Example \ref{ex:twist2} satisfies
	\[
		\forall (\omega,\phi) \in \tDiff(B), \quad \hat{\Phi} \circ \pi_0(\hat{\tau})(\omega,\phi) = \rest{\phi}{\Zent(\Omega_B)}, \quad
		 \bF \circ \pi_0(\hat{\tau})(\omega,\phi) = \phi^{-1}\mleft(\du\omega - \iu{}\omega^2\mright).
	\]
\end{example}

\begin{example}[{Connes~\cite[Thm 7]{Connes80}}]\label{ex:heis5}
	Continuing from Example \ref{ex:heis4}, observe that \(\Zent(\Omega_\theta(\bT^2))\) is the complex Gra\ss{}mann algebra in the self-adjoint generators \(e^1\) and \(e^2\) of degree \(1\).
	Hence, the homomorphism \(\hat{E} : \Gamma_\theta \to \dPic(C^\infty_\theta(\bT^2))\) satisfies
	\[
		\forall g \in \Gamma_\theta, \quad \hat{\Phi} \circ \pi_0(\hat{E})(g) = \bigoplus_{k=0}^2 (g_{21}\theta+g_{22})^{k} \id_{\Zent(\Omega_\theta)^k}, \,\, \bF \circ \pi_0(\hat{E})(g) = \frac{2\pi g_{21}}{g_{21}\theta+g_{22}}e^1  e^2.
	\]
\end{example}

\section{Reconstruction of NC principal \texorpdfstring{\(\U\)}{U(1)}-bundles with connection}\label{sec:3}

We now generalise the familiar correspondence between Hermitian line bundles with unitary connection and principal \(\U\)-bundles with principal connection to the NC setting.
This takes the form of an explicit equivalence of categories that can be viewed as an adaptation of Pimsner's construction~\cite{Pimsner} from the \Cstar-algebraic literature to the setting of NC differential geometry.

In what follows, we say that a representation \(U : \U \to \GL(V)\) is \emph{of finite type} whenever \(V = \bigoplus_{k \in \bZ}^\alg V_k\), where \(V_k \coloneqq \Set{v \in V \given \forall z \in \U, \, U_z v = z^k v}\) for all \(k \in \bZ\).

\subsection{Monoidal inversion and homomorphisms of coherent \texorpdfstring{\(2\)}{2}-groups}\label{sec:3.1}

First we leverage the coherence theorem for coherent \(2\)-groups of Ulbrich~\cite{Ulbrich} and Laplaza~\cite{Laplaza} to show that every homomorphism of coherent \(2\)-groups canonically defines a \emph{bar functor} or \emph{involutive monoidal functor} in the sense of Beggs--Majid and Egger~\cite{Egger}, respectively.
This will obviate any difficulties related to reconstructing \(\ast\)-structures on \textsc{nc} principal \(\U\)-bundle with principal connection.

We first recall the additional categorical structure that will fully capture the behaviour of monoidal inversion in a coherent \(2\)-group.

\begin{definition}[{Beggs--Majid~\cite{BeMa09}, Egger~\cite{Egger}}]
	A \emph{strong bar category} is a monoidal category \(\grp{G}\) equipped with a functor \(\conj{\cdot} : \grp{G} \to \grp{G}\), an isomorphism \(\star : 1 \to \conj{1}\), and natural isomorphisms
	\(\left(\bb_g : g \to \cconj{g}\right)_{g \in \Obj(\grp{G})}\) and \(\left(\Upsilon_{g,h} : \conj{g \otimes h} \to \conj{h}\otimes\conj{g}\right)_{(g,h) \in \Obj(\grp{G})^2}\), such that \(\bb_{\conj{g}} = \conj{\bb_g}\) for every \(g \in \Obj(\grp{G})\) and the following coherence diagrams commute for all \(g,h,k \in \Obj(\grp{G})\):
	
		\noindent\begin{minipage}[]{.33\linewidth}
		\begin{equation*}
			\begin{tikzcd}[ampersand replacement=\&, column sep=small]
				{1} \&\& {\conj{1}}\\
				\& {\cconj{1}} \&
				\arrow["{\star}", from=1-1, to=1-3]
				\arrow["{\bb_1}"', from=1-1, to=2-2]
				\arrow["{\conj{\star}}", from=1-3, to=2-2]
			\end{tikzcd}
		\end{equation*}
	\end{minipage}
	\begin{minipage}[]{.33\linewidth}
		\begin{equation*}
		\begin{tikzcd}[ampersand replacement=\&, column sep=small]
			{\conj{g \otimes 1}} \& {\conj{1} \otimes \conj{g}}\\
			{\conj{g}} \& {1 \otimes \conj{g}}
			\arrow["{\Upsilon_{g,1}}", from=1-1, to=1-2]
			\arrow["{\rho_{\conj{g}}^{-1}}", from=2-1, to=2-2]
			\arrow["{\conj{\rho_g}}"', from=1-1, to=2-1]
			\arrow["{\star \otimes \id_{\conj{g}}}", from=1-2, to=2-2]
		\end{tikzcd}
		\end{equation*}
	\end{minipage}
	\begin{minipage}[]{.33\linewidth}
	\begin{equation*}
		\begin{tikzcd}[ampersand replacement=\&, column sep=small]
			{\conj{1 \otimes g}} \& {\conj{g} \otimes \conj{1}}\\
			{\conj{g}} \& {\conj{g} \otimes 1}
			\arrow["{\Upsilon_{1,g}}", from=1-1, to=1-2]
			\arrow["{\lambda_{\conj{g}}^{-1}}", from=2-1, to=2-2]
			\arrow["{\conj{\lambda_g}}"', from=1-1, to=2-1]
			\arrow["{\id_{\conj{g}}{} \otimes \star}", from=1-2, to=2-2]
		\end{tikzcd}
	\end{equation*}
	\end{minipage}
	
	\noindent
	\begin{minipage}[]{.65\linewidth}
		\begin{equation*}
			\begin{tikzcd}[ampersand replacement=\&]
				{\conj{(g \otimes h) \otimes k}} \& {\conj{k} \otimes \conj{g \otimes h}} \& {\conj{k} \otimes (\conj{h} \otimes \conj{g})}\\
				{\conj{g \otimes (h \otimes k)}} \& {\conj{h \otimes k} \otimes \conj{g}} \& {(\conj{k} \otimes \conj{h}) \otimes \conj{g}}
				\arrow["{\Upsilon_{g \otimes h,k}}", from=1-1, to=1-2]
				\arrow["{\id{} \otimes \Upsilon_{g,h}}", from=1-2, to=1-3]
				\arrow["{\Upsilon_{g, h \otimes k}}", from=2-1, to=2-2]
				\arrow["{\Upsilon_{h,k} \otimes \id_g}", from=2-2, to=2-3]
				\arrow["{\conj{\alpha_{g,h,k}}}"', from=1-1, to=2-1]
				\arrow["{\alpha_{\conj{k},\conj{h},\conj{g}}}"', from=2-3, to=1-3]
			\end{tikzcd}
		\end{equation*}
	\end{minipage}
	\begin{minipage}[]{.33\linewidth}
		\begin{equation*}
			\begin{tikzcd}[ampersand replacement=\&, column sep=large]
				{g \otimes h} \& {\cconj{g} \otimes \cconj{h}}\\
				{\cconj{g \otimes h}} \& {\conj{\conj{h} \otimes \conj{g}}}
				\arrow["{\bb_g \otimes \bb_h}", from=1-1, to=1-2]
				\arrow["{\conj{\Upsilon_{g,h}}}", from=2-1, to=2-2]
				\arrow["{\bb_{g \otimes h}}"', from=1-1, to=2-1]
				\arrow["{\Upsilon_{\conj{h},\conj{g}}}"', from=2-2, to=1-2]
			\end{tikzcd}
		\end{equation*}	
	\end{minipage}
\end{definition}

For example, given a unital pre-\Cstar-algebra \(B\), the monoidal category \(\grp{Bimod}(B)\) defines a strong bar category with \(\star : B \to \conj{B}\), \(\bb\) and \(\Upsilon\) defined as follows:
	\begin{align*}
		\forall b \in B, && \star(b) &\coloneqq \conj{b^\ast},\\
		\forall E \in \Obj(\grp{Bimod}(B)), \, \forall x \in E, && \bb_E(x) &\coloneqq \cconj{x},\\
		\forall E,F \in \Obj(\grp{Bimod}(B)), \, \forall x \in E, \, \forall y \in F, && \Upsilon_{E,F}(\conj{x \otimes y}) &\coloneqq \conj{y} \otimes \conj{x}.
	\end{align*}
	
We now recall the coherence theorem for coherent \(2\)-groups.
Call an arrow of a coherent \(2\)-group \(\grp{G}\) \emph{structural} if it lies in the smallest subclass of \(\operatorname{Hom}(\grp{G})\) that:
\begin{enumerate}[leftmargin=*]
	\item contains the identity arrows, associators, left unitors, and right unitors of \(\grp{G}\) as a monoidal category and the evaluation arrows of \(\grp{G}\) as a coherent \(2\)-group;
	\item is closed under composition, inversion, and the monoidal product in \(\grp{G}\) as a monoidal category and monoidal inversion in \(\grp{G}\) as a coherent \(2\)-group.
\end{enumerate}
Hence, given endofunctors \(P,Q : \grp{G} \to \grp{G}\) of a coherent \(2\)-group \(\grp{G}\), we say that a natural transformation \(\eta : P \Rightarrow Q\) is \emph{structural} whenever \(\eta_g\) is structural for every object \(g\) of \(\grp{G}\).
For example, given a coherent \(2\)-group \(\grp{G}\), the natural isomorphisms \(\coev\) and \(\bb\) of Theorem~\ref{thm:laplaza} are both structural~\cite[Lemm.\ 4.4 \& 4.5]{Laplaza}.

\begin{theorem}[{Ulbrich~\cite{Ulbrich},~Laplaza~\cite[\S 2]{Laplaza}}]\label{thm:coherence}
	Let \(\grp{G}\) be a coherent \(2\)-group.
	For every pair \((g,h)\) of objects of \(\grp{G}\), there is at most one structural arrow \(g \to h\) in \(\grp{G}\).
\end{theorem}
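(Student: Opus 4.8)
The statement is the coherence theorem of Ulbrich and Laplaza for coherent $2$-groups, and the plan is to prove it by Mac Lane's method for the monoidal case, enlarged to cover monoidal inversion. The first step is to reduce to a skeletal model. By S\'{i}nh's theorem (Theorem~\ref{thm:sinh}) there is a monoidal equivalence $F$ from $\grp{G}$ to some $\twogrp(\Gamma,M,\omega)$; since $F$ is faithful and, by Theorem~\ref{thm:laplaza}, monoidal inversion together with $\ev$, $\coev$ and $\bb$ is uniquely determined by the monoidal structure, $F$ automatically intertwines all the coherent-$2$-group data with its comparison isomorphisms $F^{(2)}$, $F^{(0)}$, and so carries a structural arrow of $\grp{G}$ to a structural arrow of $\twogrp(\Gamma,M,\omega)$ up to pre- and post-composition with a composite of comparison isomorphisms depending only on the endpoints; equality of parallel structural arrows therefore transports across $F$. (The careful version of this reduction is part of the content of~\cite[\S2]{Laplaza}.) So it suffices to treat $\grp{G}=\twogrp(\Gamma,M,\omega)$.

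In the skeletal model a structural arrow $\gamma\to\gamma'$ exists only when $\gamma=\gamma'$ in $\Gamma$, and is then an automorphism $(m,\gamma)$ whose $M$-component is a product of values $\omega(\cdot,\cdot,\cdot)^{\pm1}$, one for each associator used; the unitors are identities, and — since $\omega$ is normalised and evaluation is induced by the group law of $\Gamma$ — one checks from the uniqueness characterisations in Theorem~\ref{thm:laplaza}, the snake identities, and the double-dual identity that $\coev_\gamma$ and $\bb_\gamma$ are likewise of the form $(1_M,\cdot)$. Hence the $2$-group-specific generators $\ev$, $\coev$, $\bb$ contribute no nontrivial factor, and the claim collapses to the assertion that any two reassociations of a fixed word in $\Gamma$ (units permitted) produce the same element of $M$. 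This is exactly Mac Lane's coherence theorem for the associator, proved by the usual argument that the associahedron is simply connected with pentagonal $2$-faces (the pentagon axiom, equivalently the cocycle condition on $\omega$) and that inserting or deleting unit objects costs nothing, $\omega$ being normalised.

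I expect the genuine work to be twofold: first, the reduction of the first paragraph, where one must verify with care that a monoidal equivalence sends structural arrows to structural arrows modulo comparison data; and second, within the skeletal model, checking that the evaluation, coevaluation and double-dual generators rewrite coherently against the associator — i.e.\ that all the relevant critical pairs (pentagon, triangle, zig--zag identities, naturality squares, and compatibility of $\bb$ with $\Upsilon$ and with duals) close up — which, together with the associahedron argument, is carried out in~\cite{Ulbrich} and~\cite[\S2]{Laplaza}, the references I would follow for the details.
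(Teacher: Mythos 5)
The paper offers no proof of Theorem~\ref{thm:coherence} at all: it is quoted as a black box from Ulbrich and Laplaza, so there is no internal argument to compare yours against, and your proposal has to stand on its own. Judged that way, the skeletal step contains a genuine error. In \(\twogrp(\Gamma,M,\omega)\) the evaluation arrows are indeed the identities \((1_M,1_\Gamma)\), but the coevaluations are then forced by the zig-zag identities to carry the \(M\)-component \(\omega(\gamma,\gamma^{-1},\gamma)^{\pm 1}\) (one checks this directly from diagram (2.1) of the paper, since the unitors are identities and \(\alpha_{\gamma,\gamma^{-1},\gamma}=(\omega(\gamma,\gamma^{-1},\gamma),\gamma)\)); normalisation of \(\omega\) only kills values with an identity entry, and for instance for \(\Gamma=\bZ/2\bZ\) with trivial module and the nontrivial cohomology class one may take \(\omega(x,x,x)=-1\), so \(\coev_x\) has a nontrivial component. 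Hence your claim that \(\ev\), \(\coev\), \(\bb\) ``contribute no nontrivial factor'' is false, and the asserted collapse of the statement to Mac Lane's coherence theorem for the associator does not go through: the entire content of the Ulbrich--Laplaza theorem is precisely the bookkeeping of these \(\omega\)-valued contributions of the duality generators (and of monoidal inversion applied to arrows) against reassociation, which is not covered by the associahedron argument.

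Beyond this, both items you yourself flag as ``the genuine work'' are deferred wholesale to the very references the paper cites: the transport of the at-most-one-structural-arrow property along the Sính equivalence (which needs the correcting comparison composites to depend only on the endpoints rather than on how each arrow was built --- itself a coherence-type assertion), and the closure of all critical pairs involving \(\ev\), \(\coev\), \(\bb\). There is also a circularity concern: Theorem~\ref{thm:laplaza}, which you invoke both in the reduction and in the skeletal model, is taken from Laplaza's \S 4, downstream of his coherence theorem in \S 2, so it cannot safely be used as an ingredient in a proof of Theorem~\ref{thm:coherence} without checking independence. As it stands, the proposal is a proof-by-citation resting on an incorrect intermediate simplification, not an independent argument.
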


Our first application of the coherence theorem is that a coherent \(2\)-group canonically defines a strong bar category with respect to monoidal inversion.

\begin{corollary}[{cf.\ Laplaza~\cite[p.\ 310]{Laplaza}}]\label{cor:laplaza}
	Let \(\grp{G}\) be a coherent \(2\)-group.
	There exist a unique structural isomorphism \(\star : 1 \to \conj{1}\) and a unique structural natural isomorphism
	\(
		 \left(\Upsilon_{g,h} : \conj{g \otimes h} \to \conj{h}\otimes\conj{g}\right)_{g,h \in \Obj(\grp{G})}
	\)
	making \(\grp{G}\) into a strong bar category with respect to monoidal inversion and the natural isomorphism \(\bb\) of Theorem~\ref{thm:laplaza}.
\end{corollary}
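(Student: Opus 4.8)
The plan is to define $\star$ and $\Upsilon$ as the \emph{unique} structural arrows of the appropriate types, so that the bulk of the verification reduces to the coherence theorem, Theorem~\ref{thm:coherence}. For $\star$, set $\star \coloneqq \lambda_{\conj 1} \circ \coev_1 : 1 \to \conj 1$; this is a composite of structural arrows (left unitors are structural by definition, and $\coev$ is structural as recalled just before Theorem~\ref{thm:coherence}), hence structural, and it is an isomorphism because every arrow of a coherent $2$-group is invertible. Any other structural arrow $1 \to \conj 1$ agrees with it by Theorem~\ref{thm:coherence}. For $\Upsilon$, I would, for each pair $(g,h)$, exhibit an explicit structural zigzag $\conj{g \otimes h} \to \conj h \otimes \conj g$: insert $1 \cong (g \otimes h) \otimes (\conj h \otimes \conj g)$ (assembled from $\coev_g$, $\coev_h$ and reassociators) into $\conj{g \otimes h} \cong \conj{g \otimes h} \otimes 1$, reassociate, and then contract $\conj{g \otimes h} \otimes (g \otimes h) \cong 1$ using $\ev_{g \otimes h}$ and a left unitor. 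The precise choice of zigzag is irrelevant by Theorem~\ref{thm:coherence}; write $\Upsilon_{g,h}$ for the resulting arrow, which is structural by construction and an isomorphism as before, hence the unique structural arrow of its type, so $(\Upsilon_{g,h})$ is the unique structural candidate.

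The one point that does \emph{not} reduce to coherence is the naturality of $(\Upsilon_{g,h})$ in $g$ and $h$, since the naturality squares involve an arbitrary pair of arrows $u : g \to g'$, $v : h \to h'$ rather than structural ones. This holds because $\Upsilon_{g,h}$ is visibly a composite of whiskered components of natural transformations: $\coev$ and $\ev$ are natural isomorphisms by Theorem~\ref{thm:laplaza}, hence so are $\coev^{-1}$, $\ev^{-1}$, and the associator and unitor families, and composites and whiskerings of natural transformations are again natural.

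It remains to verify the strong bar category axioms for the data $(\conj{\cdot}, \star, \bb, \Upsilon)$, with $\conj{\cdot}$ the monoidal-inversion functor and $\bb$ the natural isomorphism of Theorem~\ref{thm:laplaza}. The identity $\bb_{\conj g} = \conj{\bb_g}$ and each of the five coherence diagrams assert that two parallel structural arrows coincide: $\bb$ is structural, $\star$ and $\Upsilon$ are structural by construction, and the monoidal-inversion functor carries structural arrows to structural arrows — on an arrow $u$ it is given by a composite of $u$ with associators, unitors, evaluations and coevaluations, and the structural arrows are closed under composition, monoidal product and inversion — so in particular $\conj{\bb_g}$ is structural. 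Every arrow appearing in those diagrams is therefore structural, and Theorem~\ref{thm:coherence} makes each of them commute automatically. Thus the only genuinely non-formal steps are the naturality of $\Upsilon$ flagged above and the observation that $\conj{\cdot}$ preserves structural arrows; I expect the former to be the main obstacle, being the one assertion here that is not an immediate consequence of coherence.
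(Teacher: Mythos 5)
Your proposal is correct and follows essentially the same route as the paper: you define $\star = \lambda_{\conj{1}} \circ \coev_1$ and build $\Upsilon_{g,h}$ from the same structural zigzag (a $\widetilde{\coev}_{g\otimes h}$ assembled from $\coev_g$, $\coev_h$, associators and unitors, contracted against $\ev_{g\otimes h}$), then deduce uniqueness and all coherence diagrams from Theorem~\ref{thm:coherence}. Your explicit treatment of the naturality of $\Upsilon$ (via whiskered composites of the natural transformations $\ev$, $\coev$, $\alpha$, $\lambda$, $\rho$) is a point the paper leaves implicit, and closure of structural arrows under monoidal inversion is in fact built into their definition, so $\conj{\bb_g}$ being structural is immediate.
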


\begin{proof}
	First, construct a structural arrow \(\star : 1 \to \conj{1}\) by setting
	\(
		\star \coloneqq \lambda_{\conj{1}} \circ \coev_1
	\).
	Next, given objects \(g\) and \(h\) of \(\grp{G}\), construct a structural arrow \(\Upsilon_{g,h} : \conj{g \otimes h} \to \conj{h} \otimes \conj{g}\) as follows:
	first, construct a structural arrow \(\widetilde{\coev}_{g \otimes h} : 1 \to (g \otimes h) \otimes (\conj{h} \otimes \conj{g})\) by setting
	\[
		\widetilde{\coev}_{g \otimes h} \coloneqq \alpha_{g \otimes h,\conj{h},\conj{g}} \circ \left(\alpha^{-1}_{g,h,\conj{h}} \otimes \id_{\conj{g}}{}\right) \circ \left((\id_g{} \otimes \coev_h) \otimes \id_{\conj{g}}{}\right) \circ \left(\rho_g^{-1} \otimes \id_{\conj{g}}{}\right) \circ \coev_g,
	\]
	and then set
	\(
		\Upsilon_{g,h} \coloneqq \lambda_{\conj{h}\otimes\conj{g}} \circ (\ev_{g\otimes h}{} \otimes \id_{\conj{h}\otimes\conj{g}}) \circ \alpha^{-1}_{\conj{g\otimes h},g\otimes h,\conj{h}\otimes\conj{g}} \circ (\id_{\conj{g\otimes h}}{} \otimes \widetilde{\coev}_{g \otimes h}{}) \circ \rho^{-1}_{\conj{g \otimes h}}
	\).
	The claim now follows by Theorem \ref{thm:coherence}.
\end{proof}

\begin{remark}
	Let \(\grp{G}\) be a coherent \(2\)-group. 
	The structural isomorphism \(\star : 1 \to \conj{1}\) is the unique isomorphism of the inverses \((1,\lambda_1,\lambda_1^{-1})\) and \((\conj{1},\ev_1,\coev_1)\) of \(1\).
	Likewise, given \(g,h \in \Obj(\grp{G})\), the structural isomorphism \(\Upsilon_{g,h}\) is the unique isomorphism of the inverses \((\conj{h} \otimes \conj{g},\widetilde{\ev}_{g \otimes h}, \widetilde{\coev}_{g \otimes h})\) and \((\conj{g \otimes h},\ev_{g \otimes h},\coev_{g \otimes h})\) of \(g \otimes h\), where \(\widetilde{\ev}_{g \otimes h} : (\conj{h} \otimes \conj{g}) \otimes (g \otimes h) \to 1\) \and \(\widetilde{\coev}_{g \otimes h} : 1 \to (g \otimes h) \otimes (\conj{g} \otimes \conj{h})\) are the unique such structural arrows.
\end{remark}

For example, let \(B\) be a unital pre-\Cstar-algebra.
Then the canonical strong bar category structure on \(\Pic(B)\) of Corollary \ref{cor:laplaza} is that induced by the aforementioned strong bar category structure on \(\grp{Bimod}(B)\).

We now come to the main definition of this subsection.

\begin{definition}[{Beggs--Majid~\cite{BeMa09}, Egger~\cite{Egger}}]
	Let \(\grp{G}\) and \(\grp{G}^\prime\) be strong bar categories. 
	\begin{enumerate}[leftmargin=*]
		\item A \emph{bar functor} \(F: \grp{G} \to \grp{G}^\prime\) consists of a monoidal functor \(F: \grp{G} \to \grp{G}^\prime\) together with a natural isomorphism
		\(
			\left(F^{(-1)}_g : \conj{F(g)} \to F(\conj{g})\right)_{g \in \Obj(\grp{G})}
		\)
		making the following diagrams commute for all \(g,h \in \Obj(\grp{G})\):
		
		\noindent\begin{minipage}{0.5\linewidth}
			\begin{equation}\label{eq:bar1}\scalebox{0.9}{
				\begin{tikzcd}[ampersand replacement=\&]
						{\conj{F(1)}} \& {F(\conj{1})} \& {F(1)}\\
						{\conj{1}} \&\& 1
						\arrow["{F^{(-1)}_1}", from=1-1, to=1-2]
						\arrow["{F(\star^{-1})}", from=1-2, to=1-3]
						\arrow["{\star^{-1}}", from=2-1, to=2-3]
						\arrow["{\conj{F^{(0)}}}"', from=1-1, to=2-1]
						\arrow["{F^{(0)}}", from=1-3, to=2-3]
				\end{tikzcd}
			}\end{equation}	
		\end{minipage}
		\begin{minipage}{0.49\linewidth}
			\begin{equation}\label{eq:bar2}\scalebox{0.9}{
				\begin{tikzcd}[ampersand replacement=\&]
					{F(g)} \& {F(\cconj{g})}\\
					{\cconj{F(g)}} \& {\conj{F(\conj{g})}}
					\arrow["{F(\bb_g)}", from=1-1, to=1-2]
					\arrow["{\conj{F^{(-1)}_g}}", from=2-1, to=2-2]
					\arrow["{\bb_{F(g)}}"', from=1-1, to=2-1]
					\arrow["{F^{(-1)}_{\conj{g}}}"', from=2-2, to=1-2]
				\end{tikzcd}
			}\end{equation}
		\end{minipage}

		\begin{equation}\label{eq:bar3}
			\begin{tikzcd}[ampersand replacement=\&, column sep=large]
				{\conj{F(g) \otimes F(h)}} \& {\conj{F(h)} \otimes \conj{F(g)}} \& {F(\conj{h}) \otimes F(\conj{g})}\\
				{\conj{F(g \otimes h)}} \& {F(\conj{g \otimes h})} \& {F(\conj{h} \otimes \conj{g})}
				\arrow["{\Upsilon_{F(g),F(h)}}", from=1-1, to=1-2]
				\arrow["{F^{(-1)}_g \otimes F^{(-1)}_h}", from=1-2, to=1-3]
				\arrow["{F^{(-1)}_{g \otimes h}}", from=2-1, to=2-2]
				\arrow["{F(\Upsilon_{g,h})}", from=2-2, to=2-3]
				\arrow["{\conj{F^{(2)}_{g,h}}}"', from=1-1, to=2-1]
				\arrow["{F^{(2)}_{\conj{h},\conj{g}}}", from=1-3, to=2-3]
			\end{tikzcd}
		\end{equation}
		\item Given bar functors \(P,Q : \grp{G} \to \grp{G}^\prime\), a monoidal natural transformation \(\phi : P \Rightarrow Q\) is a \emph{bar natural transformation} if \(Q_g^{(-1)} \circ \conj{\phi_g} = \phi_{\conj{g}} \circ P_g^{(-1)}\) for all \(g \in \Obj(\grp{G})\).
	\end{enumerate}
\end{definition}

Given a homomorphism of coherent \(2\)-groups \(F\), the following will supply \(F^{(-1)}\).

\begin{proposition}[{Baez--Lauda~\cite[Thm 6.1]{BaezLauda}}]\label{prop:baezlauda}
	Let \(F : \grp{G} \to \grp{G}^\prime\) be a homomorphism of coherent \(2\)-groups.
	There exists a unique natural transformation \(\left(F^{(-1)}_g : \conj{F(g)} \to F(\conj{g})\right)_{g \in \Obj(\grp{G})}\) making the following commute for all \(g \in \Obj(\grp{G})\):	
	
\noindent\begin{minipage}{0.5\linewidth}
	\begin{equation}\label{eq:barfunctor1}\scalebox{0.8}{
		\begin{tikzcd}[ampersand replacement=\&,column sep=tiny]
		{\conj{F(g)} \otimes F(g)} \&\& {F(\conj{g}) \otimes F(g)}\\
		{1} \& {F(1)} \& {F(\conj{g} \otimes g)}
		\arrow["{F^{(-1)}_g \otimes \id_g}", from=1-1, to=1-3]
		\arrow["{F^{(0)}}", from=2-1, to=2-2]
		\arrow["{F(\ev_g)}"', from=2-3, to=2-2]
		\arrow["{\ev_g}"', from=1-1, to=2-1]
		\arrow["{F^{(2)}_{\conj{g},g}}", from=1-3, to=2-3]
		\end{tikzcd}
	}\end{equation}
\end{minipage}
\begin{minipage}{0.49\linewidth}
	\begin{equation}\label{eq:barfunctor2}\scalebox{0.8}{
		\begin{tikzcd}[ampersand replacement=\&, column sep=tiny]
		{F(g) \otimes \conj{F(g)}} \&\& {F(g) \otimes F(\conj{g})}\\
		{1} \& {F(1)}\& {F(g \otimes \conj{g})}
		\arrow["{\id_g{} \otimes F^{(-1)}_g}", from=1-1, to=1-3]
		\arrow["{F^{(0)}}", from=2-1, to=2-2]
		\arrow["{F(\coev_g)}", from=2-2, to=2-3]
		\arrow["{\coev_g}", from=2-1, to=1-1]
		\arrow["{F^{(2)}_{g,\conj{g}}}", from=1-3, to=2-3]
		\end{tikzcd}
	}\end{equation}
\end{minipage}
\end{proposition}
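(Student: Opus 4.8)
The plan is to prove this as an instance of the general fact that a weak monoidal functor preserves inverses, combined with the uniqueness of inverses up to unique isomorphism recalled just after \eqref{cd:inverseiso2}. Fix $g \in \Obj(\grp{G})$. By Theorem \ref{thm:laplaza}, the triple $(\conj{g},\ev_g,\coev_g)$ is an inverse for $g$ in $\grp{G}$, and, applied to $\grp{G}^\prime$, the triple $(\conj{F(g)},\ev_{F(g)},\coev_{F(g)})$ is an inverse for $F(g)$ in $\grp{G}^\prime$. The idea is to transport the first inverse along $F$ to obtain a second inverse of $F(g)$, and then to take the unique comparison isomorphism between the two.

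Concretely, set
\[
\begin{gathered}
	\sce_g \coloneqq F^{(0)} \circ F(\ev_g) \circ (F^{(2)}_{\conj{g},g})^{-1} : F(\conj{g}) \otimes F(g) \to 1,\\
	\sci_g \coloneqq F^{(2)}_{g,\conj{g}} \circ F(\coev_g) \circ (F^{(0)})^{-1} : 1 \to F(g) \otimes F(\conj{g}).
\end{gathered}
\]
The key claim is that $(F(\conj{g}),\sce_g,\sci_g)$ is an inverse for $F(g)$, i.e.\ that it satisfies the two triangle identities in the definition of an inverse. This is precisely the assertion that the weak monoidal functor $F$ preserves inverses, and it is verified by a diagram chase that feeds the two triangle identities for $(\conj{g},\ev_g,\coev_g)$ through the associativity and unit coherence diagrams for $F^{(0)}$ and $F^{(2)}$~\cite[pp.\ 429--430]{BaezLauda}. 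I expect this diagram chase to be the only genuinely laborious step; everything else is formal.

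Granting the claim, the uniqueness of inverses up to unique isomorphism~\cite[Prop.\ 2.10.5]{EGNO} furnishes a unique isomorphism $F^{(-1)}_g : \conj{F(g)} \to F(\conj{g})$ that is an isomorphism of inverses from $(\conj{F(g)},\ev_{F(g)},\coev_{F(g)})$ to $(F(\conj{g}),\sce_g,\sci_g)$; that is, $F^{(-1)}_g$ is the unique arrow making the analogues of \eqref{cd:inverseiso1} and \eqref{cd:inverseiso2} commute. Substituting the definitions of $\sce_g$ and $\sci_g$ and rearranging, these analogues are exactly the diagrams \eqref{eq:barfunctor1} and \eqref{eq:barfunctor2}; hence $(F^{(-1)}_g)_{g \in \Obj(\grp{G})}$ is the unique family making \eqref{eq:barfunctor1} (equivalently \eqref{eq:barfunctor2}) commute for all $g$. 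It remains to check naturality: given an arrow $u : g \to h$ in $\grp{G}$, one uses functoriality of monoidal inversion (Theorem \ref{thm:laplaza}) together with naturality of $\ev$, $\coev$, $F^{(0)}$ and $F^{(2)}$ to see that $F(\conj{u}) \circ F^{(-1)}_g \circ \conj{F(u)}^{-1}$ is again an isomorphism of inverses from $(\conj{F(h)},\ev_{F(h)},\coev_{F(h)})$ to $(F(\conj{h}),\sce_h,\sci_h)$, whence it equals $F^{(-1)}_h$ by the uniqueness clause --- and this identity is the required naturality square.
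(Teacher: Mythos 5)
Your proposal is correct and follows essentially the same route as the paper (and Baez--Lauda, to whom the result is attributed): transport the inverse \((\conj{g},\ev_g,\coev_g)\) along \(F\) using \(F^{(0)}\) and \(F^{(2)}\), invoke uniqueness of inverses up to unique isomorphism to obtain \(F^{(-1)}_g\), and deduce naturality from that uniqueness, exactly as in Lemma~\ref{lem:baezlauda} and the construction at the start of the proof of Theorem~\ref{thm:barfunctor}. The only step you leave implicit --- the coherence diagram chase showing that \((F(\conj{g}),\sce_g,\sci_g)\) is indeed an inverse of \(F(g)\) --- is likewise taken as standard in the paper, so there is no substantive divergence.
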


At last, we come to our main technical result.

\begin{theorem}\label{thm:barfunctor}
	Let \(\grp{G}\) and \(\grp{G}^\prime\) be coherent \(2\)-groups.
	\begin{enumerate}[leftmargin=*]
		\item Let \(F : \grp{G} \to \grp{G}^\prime\) be a homomorphism.
		Then \(F\) defines a bar functor with respect to the canonical natural isomorphism \(F^{(-1)}\) of Proposition \ref{prop:baezlauda}.
		\item Let \(P,Q : \grp{G} \to \grp{G}^\prime\) be homomorphisms, so that \(P\) and \(Q\) uniquely define bar functors satisfying \eqref{eq:barfunctor1} and \eqref{eq:barfunctor2}.
		Then every \(2\)-isomorphism \(\eta : P \Rightarrow Q\) is a bar natural transformation.
	\end{enumerate}	
\end{theorem}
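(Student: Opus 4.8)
The plan is to deduce everything from a single principle: uniqueness of inverses in a monoidal category (\cite[Prop.\ 2.10.5]{EGNO}, recalled just after \eqref{cd:inverseiso2}). The observation that makes this work is that \emph{every} natural isomorphism entering the bar-category axioms --- $\coev$ and $\bb$ of Theorem~\ref{thm:laplaza}, $\star$ and $\Upsilon$ of Corollary~\ref{cor:laplaza}, and the Baez--Lauda isomorphism $F^{(-1)}$ of Proposition~\ref{prop:baezlauda} --- is, after unwinding, a \emph{canonical isomorphism of inverses} in the sense of \eqref{cd:inverseiso1}--\eqref{cd:inverseiso2}, and that composites, monoidal products, and images under $\conj{\cdot}$ of isomorphisms of inverses are again isomorphisms of inverses. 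Hence each bar-functor axiom asserts the equality of two isomorphisms between the same pair of inverses of one fixed object, and uniqueness closes the case.

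To set this up I would first record that, because $F$ is a weak monoidal functor, each $F(\conj g)$ carries a \emph{transported} inverse structure for $F(g)$, with evaluation $F^{(0)} \circ F(\ev_g) \circ (F^{(2)}_{\conj g,g})^{-1}$ and coevaluation $F^{(2)}_{g,\conj g} \circ F(\coev_g) \circ (F^{(0)})^{-1}$; the defining identities for an inverse (the hexagons preceding \eqref{cd:inverseiso1}) follow by applying the monoidal-functor coherence diagrams to the identities exhibiting $(\conj g,\ev_g,\coev_g)$ as an inverse of $g$ (Theorem~\ref{thm:laplaza}(2)). With this in hand, the diagrams \eqref{eq:barfunctor1} and \eqref{eq:barfunctor2} say exactly that $F^{(-1)}_g \colon \conj{F(g)} \to F(\conj g)$ is \emph{the} isomorphism of inverses from the standard inverse $(\conj{F(g)},\ev_{F(g)},\coev_{F(g)})$ to this transported one, i.e.\ it satisfies \eqref{cd:inverseiso1}--\eqref{cd:inverseiso2} with $g$ replaced by $F(g)$.

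For part (1) I would verify \eqref{eq:bar1}, \eqref{eq:bar2}, \eqref{eq:bar3} one at a time by identifying the fixed object and the pair of inverses at stake. For \eqref{eq:bar1} the object is the monoidal unit: both composites are isomorphisms of inverses of $1$ between $\conj{F(1)}$ and the transported inverse on $F(\conj 1)$, using that $\star$ is the canonical isomorphism of inverses $(1,\lambda_1,\lambda_1^{-1}) \to (\conj 1,\ev_1,\coev_1)$ and that $F^{(0)}$ transports it. For \eqref{eq:bar2} the object is $F(\conj g)$: by Theorem~\ref{thm:laplaza}(3) the arrow $\bb_g$ is an isomorphism of inverses of $\conj g$, so $F(\bb_g)$ transports to one, while $F^{(-1)}_{\conj g} \circ \conj{F^{(-1)}_g} \circ \bb_{F(g)}$ is another isomorphism between the same inverses of $F(\conj g)$. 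For \eqref{eq:bar3} the object is $F(g) \otimes F(h) \cong F(g \otimes h)$: I would show that the route $\Upsilon_{F(g),F(h)}$ followed by $F^{(-1)}_g \otimes F^{(-1)}_h$ and the route $F(\Upsilon_{g,h})$ preceded by $F^{(2)}$ and $F^{(-1)}_{g \otimes h}$ are both isomorphisms of inverses onto the transported inverse of $F(g\otimes h)$, using the explicit construction of $\widetilde{\coev}_{g \otimes h}$ and $\Upsilon_{g,h}$ in Corollary~\ref{cor:laplaza} together with the hexagon coherence of $F$ against the associators.

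For part (2), let $\eta \colon P \Rightarrow Q$ be a $2$-isomorphism; I must show $Q^{(-1)}_g \circ \conj{\eta_g} = \eta_{\conj g} \circ P^{(-1)}_g$, two isomorphisms $\conj{P(g)} \to Q(\conj g)$. Since $\eta_g \colon P(g) \to Q(g)$ is invertible, the transported inverse of $Q(g)$ on $Q(\conj g)$ pulls back along $\eta_g$ to an inverse of $P(g)$. I would check that $\eta_{\conj g} \circ P^{(-1)}_g$ is an isomorphism of inverses from $(\conj{P(g)},\ev_{P(g)},\coev_{P(g)})$ to this pulled-back inverse --- this uses that $\eta$ is monoidal, via $P^{(0)} = Q^{(0)} \circ \eta_1$ and the compatibility of $\eta$ with $P^{(2)}$ and $Q^{(2)}$ --- whereas $Q^{(-1)}_g \circ \conj{\eta_g}$ is another one, using that $\conj{\eta_g}$ is itself the canonical isomorphism of inverses $\conj{P(g)} \to \conj{Q(g)}$ when $\conj{P(g)}$ is regarded as an inverse of $Q(g)$ through $\eta_g$ (naturality of $\ev$ and $\coev$, Theorem~\ref{thm:laplaza}(1)--(2)). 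Uniqueness then forces the two to coincide. The hard part throughout will be \eqref{eq:bar3}: the purely ``structural'' coherence theorem (Theorem~\ref{thm:coherence}) does not literally apply here, since $F^{(-1)}$ straddles $\grp{G}$ and $\grp{G}^\prime$, so the inverse-uniqueness argument must be run by hand through the definitions of $\Upsilon$ and $\widetilde{\coev}$ against the functoriality of $F^{(2)}$ and the hexagon coherences --- essentially all of the diagram-chasing lives there, and it must be organised so that no bracketing or unitor is ever chosen twice.
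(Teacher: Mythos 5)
Your proposal is correct and follows essentially the same route as the paper: transport the inverse structure along the monoidal functor, recognise $F^{(-1)}$ (and each side of every bar axiom) as an isomorphism of inverses of a fixed object, and conclude by uniqueness of such isomorphisms, with the remaining work being coherence/naturality diagram chases. The only difference is organisational — the paper's Lemma~\ref{lem:baezlauda} shows one triangle \eqref{cd:inverseiso1} already characterises the canonical isomorphism of inverses (so only one condition need be checked each time), and it isolates the $\otimes$-compatibility needed for \eqref{eq:bar3} into Lemma~\ref{lem:bar}, whereas you run the same checks directly.
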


\begin{lemma}[{Baez--Lauda~\cite[Proof of Thm 6.1]{BaezLauda}}]\label{lem:baezlauda}
	Let \(\grp{G}\) be a coherent \(2\)-group.
	For every object \(g\) of \(G\) and every inverse \((h,\sce,\sci)\) of \(g\), there exists a unique isomorphism of the inverses \((\conj{g},\ev_g,\coev_g)\) and \((h,\sce,\sci)\) of \(g\).
	Moreover, for every inverse \((h,\sce,\sci)\) of \(g\) and \(u : \conj{g} \to h\), the arrow \(u\) satisfies \eqref{cd:inverseiso1} with respect to the inverses \((\conj{g},\ev_g,\coev_g)\) and \((h,\sce,\sci)\) of \(g\) if and only if \(u\) is the unique isomorphism of the inverses \((\conj{g},\ev_g,\coev_g)\) and \((h,\sce,\sci)\) of \(g\).
\end{lemma}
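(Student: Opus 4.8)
The plan is to deduce everything from the standard uniqueness of inverses in a monoidal category, combined with the observation that, once the inverse \((h,\sce,\sci)\) is fixed, an arrow \(\conj{g} \to h\) is completely determined by the composite \(\sce \circ (-\otimes\id_g)\); thus a single one of the two triangles defining an isomorphism of inverses already pins it down.

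First I would record that \(\left(\conj{g},\ev_g,\coev_g\right)\) is an inverse of \(g\) by Theorem~\ref{thm:laplaza}(2), so that~\cite[Prop.\ 2.10.5]{EGNO} supplies a \emph{unique} isomorphism of the inverses \(\left(\conj{g},\ev_g,\coev_g\right)\) and \((h,\sce,\sci)\) of \(g\); call it \(u_0 : \conj{g} \to h\). This already proves the first assertion, and since \(u_0\) makes both~\eqref{cd:inverseiso1} and~\eqref{cd:inverseiso2} commute it also gives the implication ``if \(u\) is the unique isomorphism of inverses, then \(u\) satisfies~\eqref{cd:inverseiso1}''. For the remaining implication, the crucial step is that the function
\[
	\Phi : \operatorname{Hom}_{\grp{G}}(\conj{g},h) \to \operatorname{Hom}_{\grp{G}}(\conj{g}\otimes g,1), \qquad w \mapsto \sce \circ (w \otimes \id_g),
\]
is injective. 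Granting this, any arrow \(u : \conj{g}\to h\) satisfying~\eqref{cd:inverseiso1} obeys \(\Phi(u) = \ev_g = \Phi(u_0)\), whence \(u = u_0\); in particular \(u\) is an isomorphism and is \emph{the} unique isomorphism of the two inverses of \(g\), as claimed. Injectivity of \(\Phi\) follows because \(\sce\) is an isomorphism, hence monic, and because the functor \(-\otimes g : \grp{G}\to\grp{G}\) is faithful---indeed an equivalence with quasi-inverse \(-\otimes h\), the required natural isomorphisms \((-\otimes h)\otimes g \Rightarrow \id_{\grp{G}}\) and \((-\otimes g)\otimes h \Rightarrow \id_{\grp{G}}\) being built from \(\sce\), \(\sci^{-1}\), the associator, and the right unitor---since \(g\) admits the inverse \((h,\sce,\sci)\).

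I expect the only point needing care is the faithfulness of \(-\otimes g\), which is a short diagram chase using nothing beyond the monoidal category axioms (in particular, no coherence theorem): from \(w_1\otimes\id_g = w_2\otimes\id_g\) one gets \(w_1\otimes\id_{g\otimes h} = w_2\otimes\id_{g\otimes h}\) by conjugating with \(\alpha\) (naturality of the associator), then \(w_1\otimes\id_1 = w_2\otimes\id_1\) by precomposing with \(\id_{\conj{g}}\otimes\sci\) and postcomposing with \(\id_h\otimes\sci^{-1}\) and invoking bifunctoriality of \(\otimes\), and finally \(w_1 = w_2\) by naturality of \(\rho\). Should one prefer to avoid citing~\cite{EGNO}, an equivalent route is to exhibit \(u_0 \coloneqq \lambda_h \circ (\ev_g \otimes \id_h) \circ \alpha_{\conj{g},g,h}^{-1} \circ (\id_{\conj{g}} \otimes \sci) \circ \rho_{\conj{g}}^{-1}\) together with its symmetric candidate inverse \(\lambda_{\conj{g}} \circ (\sce \otimes \id_{\conj{g}}) \circ \alpha_{h,g,\conj{g}}^{-1} \circ (\id_h \otimes \coev_g) \circ \rho_h^{-1}\), verify the two-sidedness of the inverse and the commutativity of~\eqref{cd:inverseiso1}--\eqref{cd:inverseiso2} directly from the inverse axioms, and observe that the same formula with \(\ev_g\) replaced by \(\sce\circ(w\otimes\id_g)\) recovers \(w\), which yields the injectivity (in fact bijectivity) of \(\Phi\) at once.
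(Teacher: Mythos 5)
Your proposal is correct, and it supplies essentially the argument the paper itself leaves to the literature: the paper gives no in-text proof of Lemma \ref{lem:baezlauda}, relying on the uniqueness of inverses up to unique isomorphism (\cite[Prop.\ 2.10.5]{EGNO}, quoted in the preliminaries) and on Baez--Lauda's proof of their Theorem 6.1, which proceeds exactly as you do. Your key point --- that a single triangle \eqref{cd:inverseiso1} already determines \(u\), because \(w \mapsto \sce \circ (w \otimes \id_g)\) is injective (via faithfulness of \(- \otimes g\), or equivalently via the explicit recovery formula using the zig-zag identity for \((h,\sce,\sci)\)) --- is precisely the refinement the paper needs in the proof of Theorem \ref{thm:barfunctor}, and your verification of it is sound.
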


\begin{lemma}\label{lem:bar}
	Let \(F : \grp{G} \to \grp{G}^\prime\) be a homomorphism of coherent \(2\)-groups, and let \(F^{(-1)}\) be the natural isomorphism of Proposition \ref{prop:baezlauda}.
	Let \(g,h \in \Obj(\grp{G})\), and let 
	\(
		\sce_{g, h} : (\conj{h} \otimes \conj{g}) \otimes (g \otimes h) \to 1\) and \(\sce_{F(g),F(h)} : (\conj{F(h)} \otimes \conj{F(g)}) \otimes (F(g) \otimes F(h)) \to 1
	\)
	be the unique such structural arrows.
	The following diagram commutes:
	\[\resizebox{\linewidth}{!}{
		\begin{tikzcd}[ampersand replacement=\&]
			{(\conj{F(h)} \otimes \conj{F(g)}) \otimes (F(g) \otimes F(h))} \& {(F(\conj{h}) \otimes F(\conj{g})) \otimes (F(g) \otimes F(h))} \& {F(\conj{h} \otimes \conj{g}) \otimes F(g \otimes h)} \\
			{1} \& {F(1)} \& {F((\conj{h} \otimes \conj{g}) \otimes (g \otimes h))}
			\arrow["{\sce_{F(g),F(h)}}"', from=1-1, to=2-1]
			\arrow["\substack{(F^{(-1)}_h \otimes F^{(-1)}_g) \otimes \id\\{}}", from=1-1, to=1-2]
			\arrow["\substack{F^{(2)}_{\conj{h},\conj{g}} \otimes F^{(2)}_{g,h}\\{}}", from=1-2, to=1-3]
			\arrow["{F^{(2)}_{\conj{h}\otimes\conj{g},g\otimes h}}", from=1-3, to=2-3]
			\arrow["{F(\sce_{g,h})}", from=2-3, to=2-2]
			\arrow["{F^{(0)}}", from=2-2, to=2-1]
		\end{tikzcd}
	}\]
\end{lemma}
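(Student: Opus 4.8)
The plan is to reduce the assertion to a statement about purely structural arrows and then invoke the coherence theorem, Theorem~\ref{thm:coherence}. First I would record, via the Remark following Corollary~\ref{cor:laplaza}, that $\sce_{g,h}$ is the structural arrow $\widetilde{\ev}_{g \otimes h}$; unwinding the construction of $\widetilde{\coev}_{g \otimes h}$ (and of $\Upsilon_{g,h}$) in the proof of Corollary~\ref{cor:laplaza} yields an explicit expression for $\sce_{g,h}$ as a composite of the evaluation arrows $\ev_g$, $\ev_h$, $\ev_{g \otimes h}$, associators, and unitors of $\grp{G}$, and likewise for $\sce_{F(g),F(h)}$ in $\grp{G}^\prime$. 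I would also note that the $F^{(2)}$-arrows drawn in the diagram denote the structure isomorphisms of the excerpt read in whichever direction makes the relevant composite well-typed, i.e.\ the inverses of the $F^{(2)}_{g,h} : F(g \otimes h) \to F(g) \otimes F(h)$.

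The core of the argument is then to evaluate the boundary composite $F^{(0)} \circ F(\sce_{g,h}) \circ F^{(2)}_{\conj{h} \otimes \conj{g}, g \otimes h} \circ (F^{(2)}_{\conj{h},\conj{g}} \otimes F^{(2)}_{g,h}) \circ ((F^{(-1)}_h \otimes F^{(-1)}_g) \otimes \id)$ by pushing $F$ through the structural expression for $\sce_{g,h}$. Since $F$ preserves composition, and since by naturality of $F^{(2)}$ the functor $F$ applied to a monoidal product of arrows equals the monoidal product of their $F$-images conjugated by $F^{(2)}$-isomorphisms, it suffices to rewrite each elementary factor: the monoidal-functor coherence diagrams rewrite $F$ of an associator or unitor in terms of the associators and unitors of $\grp{G}^\prime$ together with $F^{(0)}$- and $F^{(2)}$-arrows, while the defining equations \eqref{eq:barfunctor1} and \eqref{eq:barfunctor2} of $F^{(-1)}$ from Proposition~\ref{prop:baezlauda} rewrite $F(\ev_g)$ in terms of $\ev_{F(g)}$, $F^{(-1)}_g$, $F^{(0)}$, and $F^{(2)}$. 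Collecting all $F^{(0)}$-, $F^{(2)}$-, and $F^{(-1)}$-arrows at the two ends, the boundary composite collapses to a composite of arrows that are structural in $\grp{G}^\prime$, hence to a structural arrow $(\conj{F(h)} \otimes \conj{F(g)}) \otimes (F(g) \otimes F(h)) \to 1$; since $\sce_{F(g),F(h)}$ is also such a structural arrow, Theorem~\ref{thm:coherence} identifies the two, which is the commutativity claimed. Conceptually, what is going on is that $F$ carries the inverse datum $(\conj{h} \otimes \conj{g},\, \sce_{g,h},\, \widetilde{\coev}_{g \otimes h})$ of $g \otimes h$ to an inverse datum of $F(g \otimes h)$, which, transported along the isomorphisms $F^{(2)}_{g,h}$, $F^{(2)}_{\conj{h},\conj{g}}$, $F^{(-1)}_g$, $F^{(-1)}_h$, becomes an inverse datum of $F(g) \otimes F(h)$ whose evaluation is exactly the boundary composite; the coherence theorem is precisely what pins this datum to the structural one with evaluation $\sce_{F(g),F(h)}$, since abstract uniqueness of inverses alone does not fix the evaluation on a prescribed underlying object.

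The main obstacle is bookkeeping rather than conceptual depth: one must track associator strings carefully, respect the order of the tensor factors $\conj{h} \otimes \conj{g}$ throughout so that the two $F^{(-1)}$-arrows attach to the correct summands, and keep straight the direction in which each $F^{(2)}$-isomorphism is used. It is cleanest to fix explicit structural representatives of $\widetilde{\coev}_{g \otimes h}$ and $\Upsilon_{g,h}$ at the outset, perform all of the rewriting at that level, and invoke Theorem~\ref{thm:coherence} only at the very end to absorb every remaining purely structural sub-composite in a single stroke; this avoids manipulating long pentagon/triangle chains by hand.
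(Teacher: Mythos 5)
Your proposal is correct and follows essentially the same route as the paper: both fix an explicit structural expression for \(\sce_{g,h}\) (evaluations, associators, unitors) and push \(F\) through it using monoidality of \(F\), naturality of \(F^{(2)}\) and of the associators, and the defining diagram \eqref{eq:barfunctor1} of \(F^{(-1)}\). The only cosmetic difference is that you invoke Theorem~\ref{thm:coherence} at the very end to identify the collapsed structural arrow with \(\sce_{F(g),F(h)}\), whereas the paper sidesteps that step by using the matching explicit representative \(\ev_{F(h)} \circ \rho_{\conj{F(h)}} \circ ((\id{} \otimes \ev_{F(g)}) \otimes \id) \circ \alpha^{-1}_{F(h),\conj{F(g)},F(g)}\) so that \(\sce_{F(g),F(h)}\) appears directly as one boundary of the diagram chase.
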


\begin{proof}
	Let \(g,h \in \Obj(\grp{G})\).
	Note that we can construct \(\sce_{g, h}\) and \(\sce_{F(g),F(h)}\) as
	\begin{align*}
		\sce_{g \otimes h} &= \ev_h{} \circ \rho_{\conj{h}} \circ (\id{} \otimes \ev_g) \otimes \id{} \circ \alpha_{h,\conj{g},g}^{-1},\\
		\sce_{F(g) \otimes F(h)} &= \ev_{F(h)}{} \circ \rho_{\conj{F(h)}} \circ (\id{} \otimes \ev_{F(g)}) \otimes \id{} \circ \alpha_{F(h),\conj{F(g)},F(g)}^{-1}.
	\end{align*}
	Hence, our claim will follows from commutativity of the following diagram, where, for visual clarity, we replace the symbol \(\otimes\) with \(\cdot\).
		\[\resizebox{\linewidth}{!}{
		\begin{tikzcd}[ampersand replacement=\&, column sep = small]
			{(\conj{F(h)} \cdot \conj{F(g)}) \cdot (F(g) \cdot F(h))} \& {(F(\conj{h}) \cdot F(\conj{g})) \cdot (F(g) \cdot F(h))} \& {F(\conj{h}\cdot\conj{g}) \cdot (F(g) \cdot F(h))} \& {F(\conj{h} \cdot \conj{g}) \cdot F(g \cdot h)} \& {F((\conj{h} \cdot \conj{g}) \cdot (g \cdot h))}\\
			{((\conj{F(h)} \cdot \conj{F(g)}) \cdot F(g)) \cdot F(h)} \& {((F(\conj{h}) \cdot F(\conj{g})) \cdot F(g)) \cdot F(h)} \& {(F(\conj{h}\cdot\conj{g}) \cdot F(g)) \cdot F(h)} \& \&\\
			{(\conj{F(h)} \cdot (\conj{F(g)} \cdot F(g))) \cdot F(h)} \& {(F(\conj{h}) \cdot (F(\conj{g}) \cdot F(g))) \cdot F(h)} \& {(F((\conj{h}\cdot\conj{g}) \cdot g) \cdot F(h)} \& \& {F(((\conj{h} \cdot \conj{g}) \cdot g) \cdot h)}\\
			\& {(F(\conj{h}) \cdot F(\conj{g} \cdot g)) \cdot F(h)} \& {F(\conj{h}\cdot(\conj{g}\cdot g)) \cdot F(g)} \& \& {F((\conj{h} \cdot (\conj{g} \cdot g)) \cdot h)}\\
			\& {(F(\conj{h}) \cdot F(1)) \cdot F(h)} \& {F(\conj{h}\cdot 1) \cdot F(g)} \& \& {F((\conj{h} \cdot 1) \cdot h)}\\
			{(\conj{F(h)} \cdot 1) \cdot F(h)} \& {(F(\conj{h}) \cdot 1) \cdot F(h)} \& \& \&\\
			{\conj{F(h)} \cdot F(h)} \& {F(\conj{h}) \cdot F(h)} \& \& \& {F(\conj{h} \cdot h)}\\
			{1} \& \& \& \& {F(1)}
			\arrow["\substack{(F^{(-1)}_h \cdot F^{(-1)}_g) \cdot \id\\{}}", from=1-1, to=1-2]
			\arrow["\substack{F^{(2)}_{\conj{h},\conj{g}} \cdot \id\\{}}", from=1-2, to=1-3]
			\arrow["\substack{\id{} \cdot F^{(2)}_{g,h}\\{}}", from=1-3, to=1-4]
			\arrow["\substack{F^{(2)}_{\conj{h}\cdot\conj{g},g \cdot h}\\{}}", from=1-4, to=1-5]
			\arrow["\substack{(F^{(2)}_{\conj{h},\conj{g}} \cdot \id) \cdot \id\\{}}", from=2-2, to=2-3]
			\arrow["\substack{(F^{(-1)}_h \cdot (F^{(-1)}_g \cdot \id)) \cdot \id\\{}}", from=3-1, to=3-2]
			\arrow["\substack{F^{(2)}_{(\conj{h} \cdot \conj{g}) \cdot g,h}\\{}}", from=3-3, to=3-5]
			\arrow["\substack{F^{(2)}_{\conj{h},\conj{g}\cdot g} \cdot \id \\{}}", from=4-2, to=4-3]
			\arrow["\substack{F^{(2)}_{\conj{h} \cdot (\conj{g}\cdot g),h} \cdot \id \\{}}", from=4-3, to=4-5]
			\arrow["{F^{(2)}_{\conj{h},1} \cdot \id}", from=5-2, to=5-3]
			\arrow["{F^{(2)}_{\conj{h} \cdot 1,h}}", from=5-3, to=5-5]
			\arrow["{(F^{(-1)}_h \cdot \id) \cdot \id}", from=6-1, to=6-2]
			\arrow["{F^{(-1)}_h \cdot \id}", from=7-1, to=7-2]
			\arrow["{F^{(2)}_{\conj{h},h}}", from=7-2, to=7-5]
			\arrow["{F^{(0)}}"', from=8-5, to=8-1]
			\arrow["{\alpha^{-1}_{\conj{F(h)}\cdot\conj{F(g)},F(g),F(h)}}"', from=1-1, to=2-1]
			\arrow["{\alpha^{-1}_{F(\conj{h}) \cdot F(\conj{g}),F(g),F(h)}}", from=1-2, to=2-2]
			\arrow["{\alpha^{-1}_{F(\conj{h}\cdot\conj{g}),F(g),F(h)}}", from=1-3, to=2-3]
			\arrow["{F(\alpha^{-1}_{\conj{h}\cdot\conj{g},g,h})}", from=1-5, to=3-5]
			\arrow["{\alpha_{\conj{F(h)},\conj{F(g)},F(g)} \cdot \id}"', from=2-1, to=3-1]
			\arrow["{\alpha_{F(\conj{h}),F(\conj{g}),F(g)} \cdot \id}", from=2-2, to=3-2]
			\arrow["{F^{(2)}_{\conj{h}\cdot\conj{g},g} \cdot \id}", from=2-3, to=3-3]
			\arrow["{(\id{} \cdot \ev_{F(g)}) \cdot \id}"', from=3-1, to=6-1]
			\arrow["{(\id{} \cdot F^{(2)}_{\conj{g},g}) \cdot \id}", from=3-2, to=4-2]
			\arrow["{F(\alpha_{\conj{h},\conj{g},g}) \cdot \id}", from=3-3, to=4-3]
			\arrow["{F(\alpha_{\conj{h},\conj{g},g}\cdot\id)}", from=3-5, to=4-5]
			\arrow["{(\id \cdot F(\ev_g)) \cdot \id}", from=4-2, to=5-2]
			\arrow["{F(\id \cdot \ev_g) \cdot \id}", from=4-3, to=5-3]
			\arrow["{F((\id \cdot \ev_g) \cdot \id)}", from=4-5, to=5-5]
			\arrow["{(\id{} \cdot F^{(0)}) \cdot \id}", from=5-2, to=6-2]
			\arrow["{F(\rho_{\conj{h}}) \cdot \id}", from=5-3, to=7-2]
			\arrow["{F(\rho_{\conj{h}} \cdot \id)}", from=5-5, to=7-5]
			\arrow["{\rho_{\conj{F(h)}}}"', from=6-1, to=7-1]
			\arrow["{\rho_{F(\conj{h})} \cdot \id}", from=6-2, to=7-2]
			\arrow["{\ev_{F(h)}}"', from=7-1, to=8-1]
			\arrow["{F(\ev_h)}", from=7-5, to=8-5]
		\end{tikzcd}
	}\]
	However, this now follows from applying naturality of \(\alpha\), monoidality of \(F\), naturality of \(F^{(2)}\), and commutativity of \eqref{eq:barfunctor1} as appropriate to each sub-diagram.
\end{proof}

\begin{proof}[Proof of Theorem \ref{thm:barfunctor}]
	First, let \(F : \grp{G} \to \grp{G}^\prime\) be a monoidal functor; let \(F^{(-1)}\) be the resulting natural isomorphism of Proposition \ref{prop:baezlauda}.
	Given \(g \in \Obj(\grp{G})\), set \(\widetilde{\ev}_{F(g)} \coloneqq (F^{(0)})^{-1} \circ F(\ev_g) \circ F^{(2)}_{\conj{g},g}\) and \(\widetilde{\coev}_{F(g)} \coloneqq (F^{(2)}_{g,\conj{g}})^{-1} \circ F(\coev_g) \circ F^{(0)}\), so that \((F(\conj{g}),\widetilde{\ev}_{F(g)},\widetilde{\coev}_{F(g)})\) is an inverse for \(F(g)\).
	By Lemma~\ref{lem:baezlauda}, \(F^{(-1)}\) is defined as follows: given \(g \in \Obj(\grp{G})\), let \(F^{(-1)}_g : \conj{F(g)} \to F(\conj{g})\) be the unique isomorphism of the inverses \((\conj{F(g)},\ev_{F(g)},\coev_{F(g)})\) and \((F(\conj{g}),\widetilde{\ev}_{F(g)},\widetilde{\coev}_{F(g)})\) of \(F(g)\).
	
	We first show that \eqref{eq:bar1} commutes.
	By Lemma \ref{lem:baezlauda}, it suffices to show that the arrow \(F(\star) \circ (F^{(0)})^{-1} \circ \star^{-1} \circ \conj{F^{(0)}}\) satisfies \eqref{cd:inverseiso1} with respect to the inverses \((\conj{F(1)},\ev_{F(1)},\coev_{F(1)})\) and \((F(\conj{1}),\widetilde{\ev}_{F(1)},\widetilde{\coev}_{F(1)})\) of \(F(1)\).
	This is now follows by applying bifunctoriality of \(\otimes\), coherence in \(\grp{G}\), coherence in \(\grp{G}^\prime\), monoidality of \(F\), or naturality of \(F^{(2)}\) as appropriate to each sub-diagram:
	\[\resizebox{\linewidth}{!}{
		\begin{tikzcd}[ampersand replacement=\&, row sep=small]
			{\conj{F(1)} \otimes F(1)} \& {\conj{1} \otimes F(1)} \& \& {1 \otimes F(1)} \& {F(1) \otimes F(1)} \& {F(\conj{1}) \otimes F(1)}\\
			\& {\conj{1} \otimes 1} \& {1 \otimes 1}  \&\& F(1 \otimes 1) \& \\
			{1} \& \& \& {F(1)} \& \& {F(\conj{1} \otimes 1)}
			\arrow["{F^{(0)} \otimes \id}", from=1-1, to=1-2]
			\arrow["{\star^{-1} \otimes \id}", from=1-2, to=1-4]
			\arrow["{F^{(0)} \otimes \id}", from=1-4, to=1-5]
			\arrow["{F(\star) \otimes \id}", from=1-5, to=1-6]
			\arrow["{\star^{-1} \otimes \id}", from=2-2, to=2-3]
			\arrow["{F^{(0)}}"', from=3-4, to=3-1]
			\arrow["{F(\ev_1)}"', from=3-6, to=3-4]
			\arrow["{\ev_{F(1)}}"', from=1-1, to=3-1]
			\arrow["{\lambda_{F(1)}}", from=1-4, to=3-4]
			\arrow["{F^{(2)}_{1,1}}", from=1-5, to=2-5]
			\arrow["{F^{(2)}_{\conj{1},1}}", from=1-6, to=3-6]
			\arrow["{\conj{F^{(0)}} \otimes F^{(0)}}", from=1-1, to=2-2]
			\arrow["{\id{} \otimes F^{(0)}}"', from=1-4, to=2-3]
			\arrow["{\ev_1}"', from=2-2, to=3-1]
			\arrow["{\lambda_1}", from=2-3, to=3-4]
			\arrow["{F(\lambda_1)}"', from=2-5, to=3-4]
			\arrow["{F(\star \otimes \id)}", from=2-5, to=3-6]
		\end{tikzcd}
	}\]
	
	Next, we show that \eqref{eq:bar2} commutes.
	Let \(g \in \Obj(\grp{G})\).
	By Lemma~\ref{lem:baezlauda}, it suffices to show that
	\(
		 F(\bb_g) \circ \bb_{F(g)}^{-1} \circ (\conj{F^{(-1)}_g})^{-1}
	\)
	satisfies \eqref{cd:inverseiso1} with respect to the inverses \((\conj{F(\conj{g})},\ev_{F(\conj{g})},\coev_{F(\conj{g})})\) and \((F(\cconj{g}),\widetilde{\ev}_{F(\conj{g})},\widetilde{\coev}_{F(\conj{g})})\) of \(F(\conj{g})\).
	This now follows by applying bifunctoriality of \(\otimes\), coherence in \(\grp{G}\), coherence in \(\grp{G}^\prime\), commutativity of \eqref{eq:barfunctor2}, naturality of \(\ev\), and naturality of \(F^{(2)}\) as appropriate to each sub-diagram:
	\[\resizebox{\linewidth}{!}{
		\begin{tikzcd}[ampersand replacement=\&, column sep=large, row sep = small]
			{\conj{F(\conj{g})} \otimes F(\conj{g})} \& {\cconj{F(g)} \otimes F(\conj{g})} \&\& {F(g) \otimes F(\conj{g})} \& {F(\cconj{g}) \otimes F(\conj{g})}\\
			\& {\cconj{F(g)} \otimes \conj{F(g)}} \& {F(g) \otimes \conj{F(g)}} \& {F(g \otimes \conj{g})} \& \\
			{1} \& \& \& {F(1)} \& {F(\cconj{g}\otimes \conj{g})}
			\arrow["\substack{(\conj{F^{(-1)}_g})^{-1} \otimes \id\\{}}", from=1-1, to=1-2]
			\arrow["{\bb_{F(g)}^{-1} \otimes \id}", from=1-2, to=1-4]
			\arrow["{F(\bb_g) \otimes \id}", from=1-4, to=1-5]
			\arrow["{\bb_{F(g)} \otimes \id}", from=2-2, to=2-3]
			\arrow["{F^{(0)}}"', from=3-4, to=3-1]
			\arrow["{F(\ev_{\conj{g}})}"', from=3-5, to=3-4]
			\arrow["{\ev_{F(\conj{g})}}"', from=1-1, to=3-1]
			\arrow["{\conj{F^{(-1)}_g} \otimes F^{(-1)}_g}", from=2-2, to=1-1]
			\arrow["{\id{} \otimes F^{(-1)}_g}", from=2-3, to=1-4]
			\arrow["{F^{(2)}_{g,\conj{g}}}"', from=1-4, to=2-4]
			\arrow["{F^{(2)}_{\cconj{g},\conj{g}}}", from=1-5, to=3-5]
			\arrow["{F(\coev_g)}", from=3-4, to=2-4]
			\arrow["{F(\bb_g \otimes \id)}", from=2-4, to=3-5]
			\arrow["{\ev_{\conj{F(g)}}}"', from=2-2, to=3-1]
			\arrow["{\coev_{F(g)}}"', from=3-1, to=2-3]
		\end{tikzcd}
	}\]
	
	Finally, let us show that \eqref{eq:bar3} commutes.
	Let \(g,h \in \Obj(\grp{G})\) be given; for convenience, let \(\sce_{g, h}\) and \(\sce_{F(g), F(h)}\) be defined as in Lemma~\ref{lem:bar}.
	By Lemma~\ref{lem:baezlauda}, it suffices to show that
	\(
		F(\Upsilon_{g,h}^{-1}) \circ F^{(2)}_{\conj{h},\conj{g}} \circ F^{(-1)}_h \otimes F^{(-1)}_g \circ \Upsilon_{F(g),F(h)} \circ (\conj{F^{(2)}_{g,h}})^{-1}
	\)
	satisfies \eqref{cd:inverseiso1} with respect to the inverses \((\conj{F(g \otimes h)},\ev_{F(g\otimes h)},\coev_{F(g\otimes h)})\) and \((F(\conj{g\otimes h}),\widetilde{\ev}_{F(g\otimes h)},\widetilde{\coev}_{F(g\otimes h)})\)
	of \(F(g \otimes h)\).
	This now follows by applying coherence in \(\grp{G}\), coherence in \(\grp{G}^\prime\), bifunctoriality of \(\otimes\), naturality of \(\ev\), naturality of \(F^{(2)}\) and Lemma~\ref{lem:bar} as appropriate to each sub-diagram:
	\[\resizebox{\linewidth}{!}{
		\begin{tikzcd}[ampersand replacement=\&, row sep=small]
			{\conj{F(g\otimes h)} \otimes F(g\otimes h)} \& \& \& {1}\\
			{\conj{F(g) \otimes F(h)} \otimes F(g\otimes h)} \& {\conj{F(g) \otimes F(h)} \otimes (F(g) \otimes F(h))} \& \& \\
			{(\conj{F(h)} \otimes \conj{F(g)}) \otimes F(g\otimes h)} \& {(\conj{F(h)} \otimes \conj{F(g)}) \otimes (F(g) \otimes F(h))} \&\&\\
			{(F(\conj{h}) \otimes F(\conj{g})) \otimes F(g\otimes h)} \& {(F(\conj{h}) \otimes F(\conj{g})) \otimes (F(g) \otimes F(h))} \&\&\\
			{F(\conj{h}\otimes\conj{g}) \otimes F(g \otimes h)} \& {F(\conj{h}\otimes\conj{g})\otimes (F(g)\otimes F(h))} \& \&\\
			\& {F((\conj{h}\otimes\conj{g}) \otimes (g\otimes h))} \& \& {F(1)}\\
			{F(\conj{g \otimes h}) \otimes F(g \otimes h)} \& \& \& F(\conj{g\otimes h} \otimes (g \otimes h))
			\arrow["{\ev_{F(g \otimes h)}}", from=1-1, to=1-4]
			\arrow["{\id{}\otimes F^{(2)}_{g,h}}"', from=2-2, to=2-1]
			\arrow["{\ev_{F(g) \otimes F(h)} \otimes \id}", from=2-2, to=1-4]
			\arrow["{\sce_{F(g), F(h)}}"', from=3-2, to=1-4]
			\arrow["{\id{}\otimes F^{(2)}_{g,h}}"', from=5-2, to=5-1]
			\arrow["{F(\sce_{g,h})}", from=6-2, to=6-4]
			\arrow["{F^{(2)}_{\conj{g\otimes h},g\otimes h}}", from=7-1, to=7-4]
			\arrow["{(\conj{F^{(2)}_{g,h}})^{-1} \otimes \id}"', from=1-1, to=2-1]
			\arrow["{F^{(0)}}"', from=6-4, to=1-4]
			\arrow["{\Upsilon_{F(g),F(h)} \otimes\id}"', from=2-1, to=3-1]
			\arrow["{\Upsilon_{F(g),F(h)} \otimes\id}"', from=2-2, to=3-2]
			\arrow["{(F^{(-1)}_h \otimes F^{(-1)}_g) \otimes \id}"', from=3-1, to=4-1]
			\arrow["{(F^{(-1)}_h \otimes F^{(-1)}_g) \otimes \id}"', from=3-2, to=4-2]
			\arrow["{F^{(2)}_{\conj{h},\conj{g}} \otimes \id}"', from=4-1, to=5-1]
			\arrow["{F^{(2)}_{\conj{h},\conj{g}} \otimes \id}"', from=4-2, to=5-2]
			\arrow["{F(\Upsilon_{g,h}^{-1}) \otimes \id}"', from=5-1, to=7-1]
			\arrow["{F(\ev_{g\otimes h})}"', from=7-4, to=6-4]
			\arrow["{F(\Upsilon_{g,h}^{-1} \otimes \id})", from=6-2, to=7-4]
			\arrow["{F^{(2)}_{\conj{h} \otimes \conj{g},g\otimes h}}"', from=5-1, to=6-2]
		\end{tikzcd}
	}\]
	
	Now, let \(P, Q : \grp{G} \to \grp{G}^\prime\) be monoidal functors, and let \(\eta : P \Rightarrow Q\) be a monoidal natural transformation.
	Let \(g \in \Obj(\grp{G})\).
	To show that \eqref{eq:bar3} commutes, it suffices to show that \(\phi_{\conj{g}}^{-1} \circ Q_g^{(-1)} \circ \conj{\phi_g}\) satisfies \eqref{cd:inverseiso1} with respect to the inverses \((\conj{P(g)},\ev_{P(g)},\coev_{P(g)})\) and \((P(\conj{g}),\widetilde{\ev}_{P(g)},\widetilde{\coev}_{P(g)})\) of \(P(g)\). In turn, it suffices to show that the following diagram commutes:
	\[
		\begin{tikzcd}[ampersand replacement=\&, row sep=small]
			{\conj{P(g)} \otimes P(g)} \& {\conj{Q(g)} \otimes Q(g)} \& {Q(\conj{g}) \otimes Q(g)} \& {P(\conj{g}) \otimes P(g)}\\
			\& {Q(1)} \& {Q(\conj{g} \otimes g)} \&\\
			{1} \& {P(1)} \& \& {P(\conj{g}\otimes g)}
			\arrow["{\conj{\phi_g} \otimes \phi_g}", from=1-1, to=1-2]
			\arrow["{Q^{(-1)}_g \otimes \id}", from=1-2, to=1-3]
			\arrow["{\phi_{\conj{g}} \otimes \phi_g}", from=1-3, to=1-4]
			\arrow["{Q(\ev_g)}", from=2-3, to=2-2]
			\arrow["{P(\ev_g)}", from=3-4, to=3-2]
			\arrow["{P^{(0)}}", from=3-2, to=3-1]
			\arrow["{\ev_{P(g)}}"', from=1-1, to=3-1]
			\arrow["{\ev_{Q(g)}}"', from=1-2, to=3-1]
			\arrow["{Q^{(2)}_{\conj{g},g}}", from=1-3, to=2-3]
			\arrow["{P^{(2)}_{\conj{g},g}}", from=1-4, to=3-4]
			\arrow["{Q^{(0)}}", from=2-2, to=3-1]
			\arrow["{\phi_1}", from=3-2, to=2-2]
			\arrow["{\phi_{\conj{g}\otimes g}}", from=3-4, to=2-3]
		\end{tikzcd}
	\]
	However, this diagram commutes by applying naturality of \(\ev\), commutativity of \eqref{eq:barfunctor1} for \(Q\), naturality of \(\phi\), and monoidality of \(\phi\) as appropriate to each sub-diagram.
\end{proof}

\begin{example}[{Buss--Meyer--Zhu~\cite[Thm 3.3]{BMZ}}]\label{ex:fell}
	Let \(B\) be a unital pre-\Cstar-algebra, let \(\Gamma\) be a group, and let \(F : \Gamma \to \Pic(B)\) be a homomorphism.
	The disjoint union \(\mathcal{F} \coloneqq \bigsqcup_{\gamma \in \Gamma} F(\gamma)\) defines a pre-Fell bundle over \(\Gamma\) in the sense of Exel~\cite[Def.\ 24.2]{exelbook} with respect to the fibrewise multiplication \(\mathcal{F} \times\mathcal{F} \to \mathcal{F}\) and the fibrewise \(\ast\)-operation \(\mathcal{F} \to \mathcal{F}\) defined, respectively, by
	\begin{align*}
		\forall \gamma,\eta\in \Gamma, \, \forall p \in F(\gamma), \, \forall q \in F(\eta), && p  q &\coloneqq F^{(2)}_{\gamma,\eta}(\gamma \otimes \eta),\\
		\forall \gamma \in \Gamma, \, \forall p \in F(\gamma), && p^\ast &\coloneqq F^{(-1)}_\gamma(\conj{p}),
	\end{align*}
	where \(F^{(-1)}\) is the natural transformation of Theorem~\ref{thm:barfunctor}.
	Note that Theorem~\ref{thm:barfunctor} as applied to \(\grp{Hom}(\Gamma,\Pic(B))\) recovers Buss--Meyer--Zhu's construction~\cite[Proof of Thm 3.3]{BMZ} of the fibrewise \(\ast\)-operation on \(\mathcal{F}\).
\end{example}

\subsection{Generalised crossed products via homomorphisms of coherent \texorpdfstring{\(2\)}{2}-groups}

We now revisit the well-known theory of \textsc{nc} topological principal \(\U\)-bundles~\cite{ABE,BeBrz14,AKL} from the perspective of coherent \(2\)-groups.
This will let us generalise Pimsner's construction~\cite{Pimsner} as adapted by Abadie--Eilers--Exel~\cite{ABE} to NC differential geometry by replacing the Picard \(2\)-group with the differentiable Picard \(2\)-group.
In what follows, let \(B\) be a unital pre-\Cstar-algebra; again, recall that its \emph{positive cone} \(B_+\) is the set of elements that are positive in the \Cstar-completion of \(B\).

Let us define a \emph{\(\Unit(1)\)-pre-\Cstar-algebra of finite type} is a unital pre-\Cstar-algebra \(P\) equipped with a strongly continuous \(\Unit(1)\)-action of finite type by isometric \(\ast\)-automorphisms.
In this case, the spectral subspace \(P^{\Unit(1)} = P_0\) is a unital \(\ast\)-subalgebra of \(P\), and the decomposition of complex vector spaces \(P = \oplus_{k\in\bZ}P_k\) defines a \(\bZ\)-grading of the unital \(\ast\)-algebra \(P\) in the sense that \(P_m \cdot P_n \subseteq P_{m+n}\) for all \(m,n \in \bZ\) and \(\ast(P_m) \subseteq P_{-m}\) for all \(m \in \bZ\).
This permits the following minimalistic definition of topological quantum principal \(\U\)-bundle.

\begin{definition}[{cf.\ Arici--Kaad--Landi~\cite[\S 4.2]{AKL}}]
	A \emph{topological quantum principal \(\U\)-bundle} is a pre-\Cstar-algebra \((P,\alpha)\) of finite type, such that there exist finite families \((e_i)_{i=1}^m\) and \((\epsilon_j)_{j=1}^n\) in \(P_1\) satisfying \(\sum_{i=1}^m e_ie_i^\ast = 1\) and \(\sum_{j=1}^n \epsilon_j^\ast \epsilon_j = 1\).
\end{definition}

This definition is slightly unconventional but relates to more familiar definitions as follows.
Let \((P,\alpha)\) be a topological quantum principal \(\U\)-bundle.
On the one hand, by an observation of N\u{a}st\u{a}sescu--Van Ostaeyen~\cite[Lemma I.3.2]{NVO}, the \(\Unit(1)\)-action \(\alpha\) is \emph{principal} in the sense that 
\(
	\operatorname{Span}_{\bC}\Set{z^k \otimes pq \given k \in \bZ, \, p \in P_k, \,q\in P} = \mathcal{O}(\Unit(1)) \otimes_{\bC} P.
\)
On the other hand, by an observation of Ulbrich~\cite[Lemma 2.1]{Ulbrichgraded}, it follows that the \(\bZ\)-grading \(P = \bigoplus_{k \in \bZ}P_k\) of \(P\) is \emph{strong} in the sense that \(P_m \cdot P_n = P_{m+n}\) for all \(m,n\in\bZ\).
The familiar fact that \(\alpha\) is principal if and only if the \(\bZ\)-grading of \(P\) is strong~\cite[Lemma I.3.2]{NVO} yields the familiar algebraic definition of (topological) quantum principal \(\U\)-bundle in the literature.

\begin{example}\label{ex:classicaltotal1}
Let \(\pi : X \to Y\) be a compact differentiable principal \(\Unit(1)\)-bundle with principal right \(\Unit(1)\)-action \(\sigma : \Unit(1) \to \Diff(X)\).
Then
\begin{equation*}
		C^\infty_{\mathrm{alg}}(X) \coloneqq \bigoplus_{k \in \bZ}^{\mathrm{alg}} \Set{\omega \in C^\infty(X) \given \forall z \in \Unit(1), \, (\sigma_z)^\ast \omega = z^{k} \omega}
\end{equation*}
defines a topological quantum principal \(\U\)-bundle with respect to the \(\U\)-action \(\alpha \coloneqq (z \mapsto (\sigma_{z^{-1}})^\ast)\), and the pullback map \(\pi^\ast : C^\infty(Y) \to C^\infty_{\mathrm{alg}}(X)^{\Unit(1)}\) is an isometric \(\ast\)-isomorphism.
In particular, one can use an atlas of local bundle trivialisations for \(\pi : X \to Y\) together with a subordinate smooth partition of unity on \(Y\) to construct a finite family \((e_i)_{i=1}^m\) in \(C^\infty_{\mathrm{alg}}(X)_1\) satisfying \(\sum_{i=1}^n e_i e_i^\ast = \sum_{i=1}^n e_i^\ast e_i = 1\).
\end{example}

The following introduces our second main running example, the first genuinely NC example of a topological quantum principal \(\U\)-bundle in the literature.

\begin{example}[{Brzezi\'{n}ski--Majid~\cite[\S 5.2]{BrM}}]\label{ex:hopf1}
	Let \(q \in (0,\infty) \setminus \Set{1}\), so that the corresponding \emph{quantum special unitary group} \'{a} la Woronowicz~\cite{Woronowicz87} is the universal \Cstar-algebra \(C_q(\SU(2))\) generated by elements \(a\) and \(c\) satisfying
	\[
		ac = qca, \quad ac^\ast = q c^\ast a, \quad c^\ast c = c c^\ast, \quad a^\ast a + c^\ast c = 1, \quad aa^\ast + q^2 cc^\ast = 1;
	\]
	the corresponding unital pre-\Cstar-algebra \(\cO_q(\SU_2)\) is the dense unital \(\ast\)-subalgebra of \(C_q(\SU_2)\) consisting of complex polynomials in \(a\), \(a^\ast\), \(c\), and \Cstar.
	Then \(\cO_q(\SU(2))\) defines a topological quantum principal \(\U\)-bundle with respect to the unique \(\U\)-action of finite type \(\alpha\) satisfying \(\alpha_z(a) = za\) and \(\alpha_z(c) = zc\) for all \(z \in \U\); in particular, the families \((a,qc)\) and \((a,c)\) in \(\cO_q(\SU(2))_1\) respectively satisfy \(aa^\ast + (qc)(qc)^\ast = 1\) and \(a^\ast a + c^\ast c = 1\).
	Moreover, the \(\Unit(1)\)-action \(\alpha\) satisfies \(\cO_q(\SU(2))^{\Unit(1)} = \cO_q(\CP^1)\), where \(\cO_q(\CP^1)\), the algebraic \emph{standard Podle\'{s} sphere}~\cite{Podles87}, is the unital \(\ast\)-subalgebra of \(\cO_q(\SU(2))\) consisting of complex polynomials in the elements \(c^\ast c\), \(ac^\ast\), and \(ca^\ast\).
\end{example}

We note that our rather strict definition of topological quantum principal \(\U\)-bundle reduces to a simpler definition whenever the \(\ast\)-subalgebra of \(\U\)-invariant elements is sufficiently like a \Cstar-algebra.

\begin{proposition}
	Let \(P\) be a unital pre-\Cstar-algebra with \(\Unit(1)\)-action of finite type \(\alpha\), such that \(P^{\Unit(1)}\) admits polar decompositions.
	Then \((P,\alpha)\) is a topological quantum principal \(\U\)-bundle if and only if \(P_1 \cdot P_{-1} = P^{\Unit(1)}\) and \(P_{-1} \cdot P_1 = P^{\Unit(1)}\).
\end{proposition}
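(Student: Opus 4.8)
The plan is to prove the two implications separately; only the backward one uses the hypothesis that \(P^{\U}\) admits polar decompositions. For the forward implication, suppose \((P,\alpha)\) is a topological quantum principal \(\U\)-bundle. Then, by Ulbrich's observation \cite[Lemma 2.1]{Ulbrichgraded} recalled above, its \(\bZ\)-grading is strong, i.e.\ \(P_m \cdot P_n = P_{m+n}\) for all \(m,n \in \bZ\); specialising to \((m,n)=(1,-1)\) and \((m,n)=(-1,1)\) gives \(P_1 \cdot P_{-1} = P_0 = P^{\U} = P_{-1}\cdot P_1\).

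For the backward implication, suppose \(P_1 \cdot P_{-1} = P^{\U} = P_{-1}\cdot P_1\). I would manufacture the required finite families by taking square roots of Gram matrices over \(P^{\U}\), in the spirit of the proof of Proposition \ref{prop:localcstar}. Since \(1 \in P^{\U} = P_1 \cdot P_{-1}\), fix finite families \((u_i)_{i=1}^m\) in \(P_1\) and \((v_i)_{i=1}^m\) in \(P_{-1}\) with \(\sum_i u_i v_i = 1\), so that
\[
	1 = 1 \cdot 1^\ast = \sum_{i,j=1}^m u_i\,(v_i v_j^\ast)\, u_j^\ast.
\]
The matrix \(W \coloneqq (v_i v_j^\ast)_{i,j=1}^m\) has all its entries in \(P_{-1}\cdot P_1 \subseteq P_0 = P^{\U}\), and, regarded via Example \ref{ex:projection} as the Gram matrix of \((v_i^\ast)_{i=1}^m\) in \(P\), it is positive in \(M_m(P)\) by \cite[Cor.\ 2.7]{Rieffel74}; since \(\overline{P^{\U}}\) sits inside \(\overline{P}\) as a unital \(\Cstar\)-subalgebra, \(M_m(\overline{P^{\U}})\) is a unital \(\Cstar\)-subalgebra of \(M_m(\overline{P})\), and so \(W\) is positive already in \(M_m(P^{\U})\). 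By hypothesis there is a positive \(c \coloneqq \sqrt{W} \in M_m(P^{\U})\) with \(cc^\ast = c^2 = W\); then \(e_k \coloneqq \sum_{i=1}^m u_i c_{ik}\) lies in \(P_1 \cdot P_0 \subseteq P_1\) and, unwinding matrix indices,
\[
	\sum_{k=1}^m e_k e_k^\ast = \sum_{i,j=1}^m u_i\,(cc^\ast)_{ij}\,u_j^\ast = \sum_{i,j=1}^m u_i W_{ij} u_j^\ast = 1.
\]
Symmetrically, from \(1 \in P^{\U} = P_{-1}\cdot P_1\) one obtains \((s_i)_{i=1}^n\) in \(P_{-1}\) and \((t_i)_{i=1}^n\) in \(P_1\) with \(\sum_i s_i t_i = 1\), whence \(1 = 1^\ast \cdot 1 = \sum_{i,j} t_i^\ast(s_i^\ast s_j)t_j\) with \(W^\prime \coloneqq (s_i^\ast s_j)_{i,j=1}^n\) positive in \(M_n(P^{\U})\) by the same argument; taking \(d \coloneqq \sqrt{W^\prime}\) and \(\epsilon_k \coloneqq \sum_{i=1}^n d_{ki} t_i \in P_0 \cdot P_1 \subseteq P_1\) gives \(\sum_{k=1}^n \epsilon_k^\ast \epsilon_k = \sum_{i,j} t_i^\ast (d^\ast d)_{ij} t_j = \sum_{i,j} t_i^\ast W^\prime_{ij} t_j = 1\).

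The only genuinely non-routine point — and the only place the hypothesis on \(P^{\U}\) enters — is the verification that \(W\) and \(W^\prime\) are positive as elements of \(M_m(P^{\U})\) rather than merely of \(M_m(P)\), so that the polar-decomposition axiom for \(P^{\U}\) actually applies; this rests on the elementary observation that \(\overline{P^{\U}}\) is a unital \(\Cstar\)-subalgebra of \(\overline{P}\) and that positivity in a unital \(\Cstar\)-subalgebra is intrinsic. Everything else is bookkeeping of the \(\bZ\)-grading and of matrix indices.
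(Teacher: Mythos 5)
Your proposal is correct and follows essentially the same route as the paper: the paper's proof simply invokes the argument of Proposition \ref{prop:localcstar} \emph{mutatis mutandis} for \(P_{\pm 1}\) with the \(P^{\U}\)-valued inner products \((p,q) \mapsto p^\ast q\), which is exactly the Gram-matrix/positive-square-root construction you carry out explicitly (your spectral-permanence remark for positivity in \(M_m(P^{\U})\) is the point the paper handles by working with \(P^{\U}\)-valued inner products from the start). The forward implication is likewise fine, though it follows even more directly from \(\sum_i e_i e_i^\ast = \sum_j \epsilon_j^\ast \epsilon_j = 1\) without needing to invoke Ulbrich's lemma.
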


\begin{proof}
	For each \(k \in \bZ\), the spectral subspace \(P_k\) defines a \(P^{\U}\)-bimodule with positive definite \(P^{\U}\)-valued inner product \(\hp{}{}_k \coloneqq \left((p,q) \mapsto p^\ast q\right)\).
	Moreover, for each \(k \in \bZ\), the \(P^{\U}\)-valued inner products \(\hp{}{}_k\) and \(\hp{}{}_{-k}\) satisfy \(p \cdot \hp{q}{r}_k = \hp{p^\ast}{q^\ast}_{-k} \cdot r\) for all elements \(p,q,r\in P_k\).
	Hence, we may apply the proof of Proposition~\ref{prop:localcstar}, \emph{mutatis mutandis}, to \(P_1\) and \(P_{-1}\), where \(P_{-1}\) admits the isomorphism of \(B\)-bimodules \((p \mapsto \conj{p^\ast}) : P_{-1} \to \conj{P_1}\).
\end{proof}

Recall that \(B\) is a fixed unital pre-\Cstar-algebra.
We define the concrete category \(\Circ(B)\) of \emph{topological quantum principal \(\U\)-bundles over \(B\)} as follows:
\begin{enumerate}[leftmargin=*]
	\item an object of \(\Circ(B)\) is a topological quantum principal \(\U\)-bundle together with an isometric \(\ast\)-isomorphism \(\iota_P : B \to P^{\Unit(1)}\);
	\item an arrow \(f : P \to Q\) in \(\Circ(B)\) is a \(\Unit(1)\)-equivariant isometric \(\ast\)-isomorphism, such that \(f \circ \iota_P = \iota_Q\).
\end{enumerate}	
We can now make precise sense of associated line bundles in the NC setting.

\begin{proposition}[{Exel~\cite[\S 2]{Exel}, Schwieger--Wagner~\cite[\S 4.1]{SW1}}]\label{prop:assocline}
	The following defines a functor \(\cL : \Circ(B) \to \grp{Hom}(\bZ,\Pic(B))\).
	\begin{enumerate}[leftmargin=*]
		\item Let \(P\) be a topological quantum principal \(\U\)-bundle over \(B\).
			Define a homomorphism \(\cL(P) : \bZ \to \Pic(B)\) as follows:
			\begin{enumerate}[leftmargin=*]
				\item given \(k \in \bZ\), let \(\cL(P)(k) \coloneqq P_k\) as a vector space with \(B\)-bimodule structure
				\begin{equation}
					\forall a,b \in B, \, \forall p \in P_k, \quad a  p  b \coloneqq \iota_P(a)p\iota_P(b)
				\end{equation}
				and the \(B\)-valued inner products on \(P_k\) and \(\conj{P_k}\) defined, respectively, by
				\begin{equation}
					\forall p,q \in P_k, \quad \hp{p}{q} \coloneqq \iota_P^{-1}(p^\ast q), \quad \hp{\conj{p}}{\conj{q}} \coloneqq \iota_P^{-1}(pq^\ast);\\
				\end{equation}
				\item set \(\cL(P)^{(0)} \coloneqq \iota_P^{-1}\);
				\item given \(m,n \in \bZ\), let \(\cL(P)^{(2)}_{m,n} : \cL(P)(m) \otimes \cL(P)(n) \to \cL(P)(m+n)\) be induced by multiplication in \(P\).
			\end{enumerate}
		\item Let \(f : P \to Q\) be an isomorphism of topological quantum principal \(\U\)-bundles over \(B\).
			Define the corresponding \(2\)-isomorphism \(\cL(f) : \cL(P) \Rightarrow \cL(Q)\) by
			\begin{equation}
				\forall k \in \bZ, \quad \cL(f)_k\coloneqq \rest{f}{P_k}.
			\end{equation}
	\end{enumerate}
\end{proposition}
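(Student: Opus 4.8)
The plan is to verify the claimed functoriality by direct computation, treating the statement as an adaptation of the constructions of Exel~\cite[\S 2]{Exel} and Schwieger--Wagner~\cite[\S 4.1]{SW1} to the pre-\Cstar-algebraic setting, where the only genuine analytic input is the pair of frame conditions in the definition of topological quantum principal \(\U\)-bundle. First I would fix a topological quantum principal \(\U\)-bundle \(P\) over \(B\) and show that \(\cL(P)\) is a well-defined homomorphism \(\bZ \to \Pic(B)\). The crux is that each spectral subspace \(P_k\) is a Hermitian line \(B\)-bimodule. For \(k = 0\), the \(\ast\)-isomorphism \(\iota_P\) identifies \(P_0\) with the trivial Hermitian line \(B\)-bimodule \(B\), since \(\iota_P^{-1}(p^\ast q) = \iota_P^{-1}(p)^\ast\iota_P^{-1}(q)\). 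For \(k = \pm 1\), the families \((\conj{e_i})_{i=1}^m\) and \((\epsilon_j)_{j=1}^n\) are cobases for the \(B\)-valued inner products on \(\conj{P_1}\) and \(P_1\) respectively, because \(\sum_i \hp{\conj{e_i}}{\conj{e_i}} = \iota_P^{-1}(\sum_i e_ie_i^\ast) = 1\) and \(\sum_j \hp{\epsilon_j}{\epsilon_j} = \iota_P^{-1}(\sum_j \epsilon_j^\ast\epsilon_j) = 1\); the map \(p \mapsto \conj{p^\ast}\) identifies \(P_{-1}\) with \(\conj{P_1}\); and the imprimitivity axioms \eqref{eq:imprimitivity1a}--\eqref{eq:imprimitivity3} reduce to associativity and the \(\ast\)-identity in \(P\) together with isometry of \(\iota_P\) (e.g.\ \eqref{eq:imprimitivity3} is exactly \((pq^\ast)r = p(q^\ast r)\) for \(p,q,r\in P_1\)).

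For general \(k\), I would observe that the multiplication maps \(\mu_{m,n}\colon P_m \otimes_B P_n \to P_{m+n}\) are \(B\)-bimodule homomorphisms (associativity) that preserve the \(B\)-valued inner products of \eqref{eq:tensorip} (again associativity and the \(\ast\)-identity), hence are injective by positive-definiteness \eqref{eq:posdef2}; they are moreover surjective because, by Ulbrich's lemma \cite[Lemma 2.1]{Ulbrichgraded}, the frame conditions force the \(\bZ\)-grading of \(P\) to be strong, i.e.\ \(P_mP_n = P_{m+n}\). Thus each \(\mu_{m,n}\) is an isomorphism in \(\Corr(B)\), so by induction \(P_k\) is isomorphic in \(\Corr(B)\) to \(P_1^{\otimes k}\) (for \(k\geq 0\)) or to \(\conj{P_1}^{\otimes\lvert k\rvert}\) (for \(k < 0\)); since \(\Pic(B)\) is a strictly full monoidal subcategory of \(\Corr(B)\) containing \(P_1\) and \(\conj{P_1}\), it follows that \(P_k \in \Obj(\Pic(B))\). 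It then remains to check the weak monoidal functor data: \(\cL(P)^{(0)} = \iota_P^{-1}\) is an arrow of \(\Pic(B)\) by isometry of \(\iota_P\), and the \(\cL(P)^{(2)}_{m,n}\), being induced by the \(\mu_{m,n}\) (equivalently by their inverses, per the chosen orientation convention for \(F^{(2)}\)), are arrows of \(\Pic(B)\) by the inner-product computation above; the coherence pentagon for \(\cL(P)^{(2)}\) is precisely associativity of multiplication in \(P\), and the two unit triangles are precisely the statements that \(\iota_P\) is a unital homomorphism, all holding on the nose since the associators and unitors of \(\Pic(B)\) are the canonical ones of \(\grp{Bimod}(B)\) fixed in Theorem-Definition \ref{thmdef:picard}.

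Finally, for an arrow \(f\colon P \to Q\) in \(\Circ(B)\), \(\U\)-equivariance gives \(f(P_k)\subseteq Q_k\), so \(\cL(f)_k \coloneqq \rest{f}{P_k}\) is defined; that \(f\) is an isometric \(\ast\)-isomorphism with \(f\circ\iota_P = \iota_Q\) makes each \(\cL(f)_k\) an arrow of \(\Pic(B)\) and makes \(\cL(f)\) compatible with \(\cL(P)^{(0)},\cL(Q)^{(0)}\) and with \(\cL(P)^{(2)},\cL(Q)^{(2)}\), i.e.\ a monoidal natural transformation, and \(\cL(\id_P) = \id_{\cL(P)}\), \(\cL(g\circ f) = \cL(g)\circ\cL(f)\) follow since restriction of identities and compositions are identities and compositions of restrictions. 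I expect the only step requiring genuine care rather than bookkeeping to be the passage from the degree-\(\pm1\) frames to strict fullness of the \(B\)-valued inner products in all degrees, i.e.\ the use of the strong grading to obtain surjectivity of \(\mu_{m,n}\) and thereby propagate the line-bimodule property along the \(\bZ\)-grading; once this is in hand, every remaining verification is a mechanical consequence of the associative \(\ast\)-algebra structure of \(P\) and the coherence conventions for \(\Pic(B)\).
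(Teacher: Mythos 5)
Your proof is correct, but for the one non-trivial point it follows a genuinely different route from the paper. The paper's proof isolates strict fullness of all the \(B\)-valued inner products as the only issue and settles it by hand: it writes down explicit cobases in every degree as \(k\)-fold products of the degree-one frames, namely \((\epsilon_J)_{J}\) for \(\cL(P)(k)\), \((\conj{e_I})_{I}\) for \(\conj{\cL(P)(k)}\), \((e_I^\ast)_I\) for \(\cL(P)(-k)\) and \((\conj{\epsilon_J^\ast})_J\) for \(\conj{\cL(P)(-k)}\), after which everything else is, as you also say, bookkeeping with associativity and multiplicativity of \(\iota_P\). You instead invoke Ulbrich's lemma to get the strong grading, show the multiplication maps \(P_m \otimes_B P_n \to P_{m+n}\) are inner-product-preserving bimodule bijections, and transport the Hermitian line bimodule structure along \(P_1^{\otimes k} \cong P_k\) (resp.\ \(\conj{P_1}^{\otimes\lvert k\rvert} \cong P_k\)) using that \(\Pic(B)\) is strictly full in \(\Corr(B)\). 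This buys you all degrees at once from machinery already in place (Theorem-Definition \ref{thmdef:picard}, Corollary \ref{cor:imprimitivity}), whereas the paper's cobases are more elementary and give the strict fullness of the \emph{stated} inner products directly. The one point you should make explicit to close your argument: the isomorphism-closure step only shows that \(P_k\), with the stated right structure, admits \emph{some} compatible conjugate inner product making it a Hermitian line \(B\)-bimodule; to conclude that it is the stated one \(\hp{\conj{p}}{\conj{q}} = \iota_P^{-1}(pq^\ast)\) (and in particular that this one is strictly full), note that it satisfies \eqref{eq:imprimitivity3} for all \(k\) by the same associativity computation you do for \(k=\pm1\), and then invoke the uniqueness of compatible conjugate inner products recorded just before Corollary \ref{cor:imprimitivity}. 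With that one-line addition the argument is complete, and the remaining verifications (coherence of \(\cL(P)^{(0)}\), \(\cL(P)^{(2)}\), and the arrow part) agree with the paper's.
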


\begin{proof}
	This is mostly a straightforward exercise in checking definitions.
	Let \(P \in \Obj(\Pic(B))\) be given.
	When checking that the functor \(\cL(P) : \bZ \to \Pic(B)\) is well defined, the only non-trivial point is strict fullness of all \(B\)-valued inner product.
	Let \((e_i)_{i=1}^m\) and \((\epsilon_j)_{j=1}^n\) be families in \(P_1\) satisfying \(\sum_{i=1}^m e_ie_i^\ast = \sum_{j=1}^n \epsilon_j^\ast \epsilon_j = 1\), and define \(e_I \coloneqq e_{i_1} \dotsc e_{i_k}\) for all \(k \in \bN\) and \(I = (i_1,\dotsc,i_k) \in \Set{1,\dotsc,m}^k\) and  \(\epsilon_J \coloneqq \epsilon_{j_1}\dotsc\epsilon_{j_k}\) for all \(k \in \bN\) and \(J = (j_1,\dotsc,j_k) \subset \Set{1,\dotsc,n}^k\).
	Then, for each \(k \in \bN\), it follows that \((e_I^\ast)_{I \in \Set{1,\dotsc,m}^k}\) is a cobasis for \(\cL(P)(-k)\), that \((\conj{\epsilon_J^\ast})_{J \in \Set{1,\dotsc,n}^k}\) is a cobasis for \(\conj{\cL(P)(-k)}\), that \((\epsilon_J)_{J \in \Set{1,\dotsc,n}^k}\) is a cobasis for \(\cL(P)(k)\), and that \((\conj{e_I})_{I \in \Set{1,\dotsc,m}^k}\) is a cobasis for \(\conj{\cL(P)(k)}\).
	From here, monoidality of \(\cL(P)\) follows from elementary algebraic properties of \(P\): coherence with respect to unitors follows from multiplicativity of the isometric \(\ast\)-isomorphism \(\iota_P : B \to P^{\Unit(1)}\), while coherence with respect to associators follows from associativity of multiplication in \(P\).
	Similarly, if \(f : P \to Q\) is an arrow in \(\Pic(B)\), then \(\cL(f)\) intertwines \(\cL(P)^{(0)}\) and \(\cL(Q)^{(0)}\)  since \(f\) intertwines the given isometric \(\ast\)-isomorphisms \(\iota_P : B \to P^{\Unit(1)}\) and \(\iota_Q : B \to Q^{\Unit(1)}\), while coherence of \(\cL(f)\) with respect to \(\cL(P)^{(2)}\) and \(\cL(Q)^{(2)}\) follows from multiplicativity of \(f\).
\end{proof}

For example, in Example \ref{ex:classicaltotal1}, \(C^\infty_{\mathrm{alg}}(X)\) defines an object of \(\Circ(C^\infty(Y))\) with respect to \(\pi^\ast : C^\infty(Y) \to C_{\mathrm{alg}}^\infty(X)^{\Unit(1)}\).
	By Serre--Swan duality, for each \(k \in \bZ\), the Hermitian line \(B\)-bimodule \(\cL_k(C^\infty_{\mathrm{alg}}(X))\) recovers the associated Hermitian line bundle of winding number \(-k\).

Likewise, in Example \ref{ex:hopf1}, the homomorphism \(\cE : \bZ \to \Pic(\cO_q(\mathbf{CP}^1))\) given by
\(
	\cE \coloneqq \cL(\cO_q(\SU(2)))
\)
recovers (up to a sign convention) the canonical line bundles on \(\cO_q(\mathbf{CP}^1)\) as studied by Landi--Reina--Zampini \cite{LRZ}.
In fact, by a result of Carotenuto--\'{O} Buachalla \cite[Prop.\ 4.4]{COB}, the homomorphism \(\cE\) exhausts the left \(\cO_q(\SU(2))\)-covariant Hermitian line \(\cO_q(\CP^1)\)-bimodules up to isomorphism.

We now recover the known result that the functor \(\cL\) is an equivalence of categories.
As a preliminary, recall that a \emph{conditional expectation} of a unital pre-\Cstar-algebra \(A_2\) onto a unital pre-\Cstar-algebra \(A_1\) with respect to an isometric \(\ast\)-homomorphism \(\iota : A_1 \to A_2\) is a contractive unit-preserving and \(\ast\)-preserving \(A_1\)-bimodule map \(\bE : A_2 \to A_1\) satisfying \(\bE((A_2)_+) \subseteq (A_1)_+\) and \(\bE \circ \iota = \id_{A_1}\).
In this case, we say that \(\bE\) is \emph{faithful} whenever it satisfies
\(
	\Set{a \in (A_2)_+ \given \bE(a) = 0} = \Set{0}.
\)

\begin{proposition}\label{prop:conditional}
	Let \(P\) be a topological quantum principal \(\U\)-bundle over \(B\).
	Define a complex-linear map \(\bE_P : P \to B\) by setting \(\rest{\bE_P}{P_j} \coloneqq \left(p \mapsto\iota_P^{-1}\mleft(\delta^{j,0}p\mright)\right)\) for all \(j \in \bZ\).
	Then \(\bE\) is a \(\U\)-invariant faithful conditional expectation of \(P\) onto \(B\) with respect to \(\iota_P\).
\end{proposition}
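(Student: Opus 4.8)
The plan is to identify $\bE_P$ with the restriction to $P$ of the averaging conditional expectation attached to a compact group action on a $\Cstar$-algebra, so that all the analytic content reduces to a classical fact, while the purely algebraic properties of $\bE_P$ are checked by inspection. First I would dispatch the latter. Since $\iota_P$ is an isometric $\ast$-isomorphism onto $P^{\U} = P_0$, the map $\bE_P$ is visibly unit-preserving; it is $\ast$-preserving because $\ast$ sends $P_j$ to $P_{-j}$ and hence preserves $P_0$; it satisfies $\bE_P \circ \iota_P = \id_B$ by construction; and it is a $B$-bimodule map because $\iota_P(a)\,P_j\,\iota_P(b) \subseteq P_j$ for all $a,b \in B$ and $j \in \bZ$, so that $\bE_P(\iota_P(a)p\iota_P(b)) = \delta^{j,0}\,a\,\iota_P^{-1}(p)\,b$ for $p \in P_j$. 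Finally, $\bE_P \circ \alpha_z = \bE_P$ since $\alpha_z$ acts on $P_j$ as multiplication by $z^j$ and $z^0 = 1$. It remains to prove contractivity, positivity on the positive cone $P_+$, and faithfulness.

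For these I would pass to the $\Cstar$-completion $\widehat{P}$ of $P$. Each $\alpha_z$, being isometric, extends to $\widehat{\alpha}_z \in \Aut(\widehat{P})$, and uniform boundedness together with density of $P$ upgrades strong continuity of $z \mapsto \alpha_z(p)$ on $P$ to strong continuity of $z \mapsto \widehat{\alpha}_z(x)$ on $\widehat{P}$; thus $(\widehat{P},\widehat{\alpha})$ is a $\Cstar$-dynamical system for $\U$. For $p = \sum_j p_j \in P$, a finite sum by finite type, integrating against normalised Haar measure yields $\int_{\U}\widehat{\alpha}_z(p)\,\du z = \sum_j\bigl(\int_{\U}z^j\,\du z\bigr)p_j = p_0$, whence $\bE_P(p) = \iota_P^{-1}\bigl(\int_{\U}\widehat{\alpha}_z(p)\,\du z\bigr)$. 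Contractivity follows at once, since $\norm{\bE_P(p)} = \norm{p_0}_{\widehat{P}} \leq \int_{\U}\norm{\widehat{\alpha}_z(p)}\,\du z = \norm{p}$. For positivity, if $p \in P_+$, i.e.\ $p \geq 0$ in $\widehat{P}$, then $\widehat{\alpha}_z(p) \geq 0$ for every $z$, so $p_0 = \int_{\U}\widehat{\alpha}_z(p)\,\du z \geq 0$ in $\widehat{P}$; since $\iota_P$ extends to an isometric (hence injective) $\ast$-homomorphism $\widehat{B} \to \widehat{P}$, positivity of $p_0 \in \iota_P(B)$ in $\widehat{P}$ is equivalent to positivity of $\bE_P(p) = \iota_P^{-1}(p_0)$ in $\widehat{B}$, that is, to $\bE_P(p) \in B_+$. (On elements of the form $q^\ast q$ with $q = \sum_j q_j \in P$ one has the sharper identity $\bE_P(q^\ast q) = \sum_j \hp{q_j}{q_j}$ in terms of the $B$-valued inner products of the Hermitian line $B$-bimodules $\cL(P)(j)$ of Proposition~\ref{prop:assocline}, each summand being positive by positive-definiteness; but passing to $\widehat{P}$ is needed to reach all of $P_+$, as $P$ need not admit square roots.)

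For faithfulness, suppose $p \in P_+$ with $\bE_P(p) = 0$; injectivity of $\iota_P$ gives $\int_{\U}\widehat{\alpha}_z(p)\,\du z = 0$ in $\widehat{P}$. For any state $\phi$ of $\widehat{P}$ the function $z \mapsto \phi(\widehat{\alpha}_z(p))$ is continuous and non-negative with $\int_{\U}\phi(\widehat{\alpha}_z(p))\,\du z = \phi(0) = 0$; as normalised Haar measure has full support, this forces $\phi(\widehat{\alpha}_z(p)) = 0$ for all $z$, in particular $\phi(p) = 0$. Since states separate points of $\widehat{P}$, we conclude $p = 0$. I expect the only genuinely delicate step to be the passage to the $\Cstar$-completion together with the attendant extensions of $\alpha$ and of $\iota_P$; once that is in place, $\bE_P$ is literally the restriction to $P$ of the standard faithful conditional expectation $x \mapsto \iota_P^{-1}\bigl(\int_{\U}\widehat{\alpha}_z(x)\,\du z\bigr)$ of $\widehat{P}$ onto its fixed-point algebra $\widehat{P}^{\widehat{\alpha}} \cong \widehat{B}$, and every required property is inherited.
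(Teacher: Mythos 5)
Your proposal is correct and follows essentially the same route as the paper: both express \(\bE_P\) as averaging over the \(\U\)-action against Haar measure, derive contractivity and positivity from the action being by isometric unital \(\ast\)-automorphisms, and prove faithfulness via bounded states together with continuity of \(z \mapsto \phi(\alpha_z(p))\) and full support of Haar measure (you phrase this contrapositively and make the passage to the \Cstar-completion explicit, but the mechanism is identical).
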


\begin{proof}
	Let \(\sigma\) denote the \(\U\)-action on \(P\), and let \(m\) denote the normalised Haar measure on \(\U\).
	Note that \(\bE_P\) is manifestly \(\U\)-invariant, unit-preserving, \(\ast\)-preserving, and \(B\)-bilinear and satisfies \(\bE_P \circ \iota_P = \id_B\).
	Since \(\sigma\) is of finite type, we may use Bochner integration on \(\U\) to write
	\(\bE_P = \left(p \mapsto \iota_P\mleft(\int_{\U} \sigma_z(p) \, \du{m}(z)\mright)\right)\).
	Since \(\sigma\) acts isometrically on \(P\), it follows that \(\bE\) is contractive; since \(\sigma\) acts by unital \(\ast\)-automorphisms, it follows that the \(\bE_P\) maps \(P_+\) to \(B_+\).
	
	Let us now show that \(\bE\) is faithful.\footnote{This elementary argument, which is surely folkloric, was found in an anonymous answer to a MathOverflow question (\url{https://mathoverflow.net/q/72624}).}
	Let \(p \in P_+ \setminus \Set{0}\), so that there exists a bounded state \(\phi : P \to \bC\), such that \(\phi(p) > 0\).
	Since \(\left(z \mapsto \phi(\sigma_z(p))\right) : \U \to [0,\infty)\) is continuous, there exists an open neighbourhood \(I\) of \(1\), such that \(\phi(\sigma_z(p)) > \tfrac{1}{2}\phi(p)\) for all \(z \in I\).
	Hence, by norm-continuity of \(\bE_P\), it follows that
	\[
		(\phi \circ \iota_P)\mleft(\bE_P(p)\mright) = \int_{\U} (\phi \circ \sigma_z)(p) \, \du{m}(z) \geq \frac{1}{2}\phi(p)m(I) > 0. \qedhere
	\]
\end{proof}

\begin{theorem}[{Buss--Meyer--Zhu~\cite[Thm 3.3]{BMS}, Schwieger--Wagner~\cite[Thmm 4.21 \& 5.2]{SW1}}]\label{thm:fell}
	The following defines a a weak inverse \(\Sigma : \grp{Hom}(\bZ,\Pic(B)) \to \Circ(B)\) of the functor \(\cL\).
	\begin{enumerate}[leftmargin=*]
		\item Given a homomorphism \(F : \bZ \to \Pic(B)\), construct a topological quantum principal \(\U\)-bundle \(\Sigma(F)\) over \(B\) as follows:
		\begin{enumerate}[leftmargin=*]
			\item define the unital \(\ast\)-algebra \(\Sigma(F)\) by equipping the complex vector space \(\bigoplus_{k \in \bZ} F(k)\) with the multiplication and \(\ast\)-operation defined, respectively, by
			\begin{align}
				\forall m,n \in \bZ, \, \forall p \in F(m), \, \forall q \in F(n), && p  q &\coloneqq F^{(2)}_{m,n}(p \otimes q),\\
				\forall m \in \bZ, \, \forall p \in F(m), && p^\ast &\coloneqq F^{(-1)}_m(\conj{p});
			\end{align}
			\item equip \(\Sigma(F)\) with the unique \Cstar-norm \(\norm{}_{\Sigma(F)}\), such that
			\begin{equation}\label{eq:cstarnorm}
				\forall k \in \bZ, \, \forall p \in F(k), \quad \norm{p}^2_{\Sigma(F)} = \norm{\hp{p}{p}};
			\end{equation}
			\item define a \(\Unit(1)\)-action of finite type \(\alpha\) on \(\Sigma(F)\) by
			\begin{equation}\label{eq:circleaction}
				\forall z \in \Unit(1), \, \forall m \in \bZ, \, \forall p \in F(m), \quad \alpha_z(p) \coloneqq z^m p;
			\end{equation}
			\item set \(\iota_{\Sigma(F)} \coloneqq (F^{(0)})^{-1}\).
		\end{enumerate}
		\item Given a \(2\)-isomorphism \(\eta : R \to S\), construct \(\Sigma(\eta) : \Sigma(R) \to \Sigma(S)\) by
		\begin{equation}
			\forall k \in \bZ, \, \forall p \in R(k), \quad \Sigma(\eta)(p) \coloneqq \eta_k(p).
		\end{equation}
	\end{enumerate}
	Hence, in particular, the category \(\Circ(B)\) is essentially small.
\end{theorem}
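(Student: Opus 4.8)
The plan is to check, in turn, that $\Sigma$ is a well-defined functor, that it is a weak inverse of $\cL$, and that essential smallness of $\Circ(B)$ then follows formally. For well-definedness on objects, fix a homomorphism $F\colon\bZ\to\Pic(B)$. The disjoint union $\bigsqcup_{k\in\bZ}F(k)$ is precisely the pre-Fell bundle over $\bZ$ of Example~\ref{ex:fell}, so the multiplication induced by $F^{(2)}$ together with the fibrewise $\ast$-operation built from the canonical natural isomorphism $F^{(-1)}$ of Theorem~\ref{thm:barfunctor} make $\Sigma(F)=\bigoplus_{k}F(k)$ into a unital associative $\ast$-algebra: associativity and the involution axioms are the Fell bundle identities, which in turn are the monoidality of $F$ and the bar-functor coherence of Theorem~\ref{thm:barfunctor}; the unit is $(F^{(0)})^{-1}(1)$, and $\iota_{\Sigma(F)}=(F^{(0)})^{-1}$ is a $\ast$-embedding of $B$ onto the degree-zero subalgebra $F(0)$.

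The substantive point is to equip $\Sigma(F)$ with the \Cstar-norm \eqref{eq:cstarnorm}. As in the proof of Proposition~\ref{prop:conditional}, the formula $\hp{p}{q}\coloneqq\iota_{\Sigma(F)}^{-1}(\bE(p^\ast q))$, with $\bE$ the projection onto $F(0)$, defines a positive-definite $B$-valued inner product on $\Sigma(F)$: the degree-zero component of $p^\ast p$ corresponds under $\iota_{\Sigma(F)}$ to $\sum_{k}\hp{p_k}{p_k}_{F(k)}$, a sum of positive elements of $B$ that vanishes only if every $p_k=0$. Left multiplication then defines an injective $\ast$-homomorphism $\lambda$ of $\Sigma(F)$ into the bounded adjointable operators on the (non-finite-type) pre-Hilbert $B$-module $(\Sigma(F),\hp{}{})$---injectivity because $\lambda(p)1=p$---and for homogeneous $p\in F(k)$ the elementary estimate using that $p^\ast p$ corresponds to $\hp{p}{p}_{F(k)}\le\norm{p}_{F(k)}^2\,1$ in $B$ gives $\norm{\lambda(p)}=\norm{p}_{F(k)}$. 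Hence the operator norm makes $\Sigma(F)$ a pre-\Cstar-algebra satisfying \eqref{eq:cstarnorm}, and this is the unique \Cstar-norm with that property by the standard rigidity of topologically graded algebras over the amenable group $\bZ$ (cf.\ \cite{SW1}). Strict fullness of the inner products on $F(1)$ and $\conj{F(1)}$ supplies finite families $(e_i)$, $(\epsilon_j)$ in $\Sigma(F)_1$ with $\sum_i e_ie_i^\ast=\sum_j\epsilon_j^\ast\epsilon_j=1$, so $\Sigma(F)$ is a topological quantum principal $\U$-bundle over $B$; functoriality of $\Sigma$ on $2$-isomorphisms is then immediate, monoidality of a monoidal natural isomorphism giving multiplicativity and compatibility with the $\iota$'s, bar-naturality (automatic by Theorem~\ref{thm:barfunctor}(2)) giving $\ast$-preservation, and \eqref{eq:cstarnorm} giving isometry degree by degree.

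For the weak-inverse assertion I would produce explicit natural isomorphisms. Unwinding Proposition~\ref{prop:assocline}, $\cL(\Sigma(F))$ is canonically identified with $F$ through $\iota_{\Sigma(F)}$, so $\cL\circ\Sigma$ is naturally isomorphic to the identity functor. In the reverse direction, finite-typeness of $P\in\Circ(B)$ gives a linear bijection $\Phi_P\colon\bigoplus_{k}P_k\to P$, $(p_k)_k\mapsto\sum_k p_k$; it is multiplicative because $\cL(P)^{(2)}$ is induced by the multiplication of $P$, it is $\ast$-preserving because $\cL(P)^{(-1)}_k(\conj{p})=p^\ast$ by the uniqueness clause of Proposition~\ref{prop:baezlauda} applied to the natural isomorphism $\conj{P_k}\to P_{-k}$, $\conj{p}\mapsto p^\ast$ furnished by the $\ast$-operation of $P$ (which satisfies \eqref{eq:barfunctor1} because that $\ast$-operation is compatible with the inner products defining $\cL(P)$), it is $\U$-equivariant and degreewise isometric by \eqref{eq:cstarnorm} (hence isometric), and it intertwines $\iota_{\Sigma(\cL(P))}=\iota_P$ with $\iota_P$; naturality in $P$ is clear since $\Sigma(\cL(f))=\bigoplus_k\rest{f}{P_k}$. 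Finally, $\grp{Hom}(\bZ,\Pic(B))$ is an essentially small functor category and $\cL$, $\Sigma$ form an equivalence, so $\Circ(B)$ has a set of isomorphism classes, and its hom-sets are sets because $\Circ(B)$ is concrete; hence $\Circ(B)$ is essentially small.

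The main obstacle will be the \Cstar-norm step. All the category theory is already packaged in Theorem~\ref{thm:barfunctor} and Example~\ref{ex:fell}, but showing that the purely algebraic crossed product $\Sigma(F)$ carries a \Cstar-norm at all, that this norm is unique, and that it restricts on each spectral subspace to the pre-Hilbert module norm \eqref{eq:cstarnorm} requires genuine operator-algebraic care---in particular, one must verify that the left regular representation of $\Sigma(F)$ on the conditional-expectation pre-Hilbert module is faithful and that its norm is isometric, not merely contractive, on homogeneous elements.
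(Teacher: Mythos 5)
Your route is, in substance, the paper's: the algebra structure and \(\ast\)-operation on \(\Sigma(F)\) come from Theorem~\ref{thm:barfunctor} applied through the pre-Fell bundle of Example~\ref{ex:fell}; your norm is the reduced cross-sectional norm, obtained by writing out the left regular representation on the conditional-expectation pre-Hilbert \(B\)-module by hand where the paper simply cites Exel's machinery (reduced cross-sectional algebra of the Fell bundle completion); and your two natural isomorphisms (the inclusions \(F(k)\hookrightarrow\Sigma(F)\) and the summation map \(\bigoplus_k P_k\to P\)) are exactly the paper's \(\nu\) and \(\mu\). Unpacking Exel's construction is a perfectly good, slightly more self-contained presentation of the same argument, and your identification of \(\cL(P)^{(-1)}_k(\conj{p})=p^\ast\) via the uniqueness clause of Proposition~\ref{prop:baezlauda} is a clean substitute for the paper's "fibrewise isometric isomorphism of pre-Fell bundles over \(\iota_P^{-1}\)".

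The gap is in how you dispatch uniqueness and isometry. A \Cstar-norm satisfying \eqref{eq:cstarnorm} alone is \emph{not} unique, and amenability of \(\bZ\) by itself does not rescue this: for \(B=\bC\) and \(F\) trivial, \(\Sigma(F)=\bC[z,z^{-1}]\), and the sup-norm over any infinite closed proper subset of \(\bT\) is a \Cstar-norm agreeing with \eqref{eq:cstarnorm} on homogeneous elements yet differing from the sup-norm over \(\bT\). The rigidity you invoke is a statement about \emph{topologically graded} completions, i.e.\ it presupposes that the norm in question admits a contractive conditional expectation onto \(B\) (equivalently, is \(\U\)-invariant), and you never verify this hypothesis for the norms and maps you compare. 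This is exactly where your two "hence isometric" steps break down as written: knowing that \(\Sigma(\eta)\), or the summation map \(\Phi_P\), is isometric on each homogeneous component says nothing about sums of components; one must first check that the pulled-back norm (respectively the given norm on \(P\)) is \(\U\)-invariant with contractive --- and, for the paper's route, faithful, cf.\ Proposition~\ref{prop:conditional} --- conditional expectation, and only then apply graded rigidity. This is precisely what the paper does via Exel's Propositions 17.9(iv), 19.8 and 21.3, and what Corollary~\ref{cor:fell} later records as the correct uniqueness statement ("\(\U\)-invariant norm agreeing on \(P^{\U}\)"). Your ingredients suffice to make this patch, but as it stands the "degree by degree, hence isometric" deductions and the unqualified uniqueness claim are non sequiturs.
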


\begin{proof}
	We supply a proof that we can (and shall) adapt to other contexts.
	We first show that \(\Sigma\) is well-defined on objects.
	Let \(F : \bZ \to \Pic(B)\) be a given homomorphism, which is a bar functor by Theorem \ref{thm:barfunctor}.
	This now implies that \(\Sigma(F)\) is a unital \(\ast\)-algebra and that \(\iota_{\Sigma(F)}\) is a \(\ast\)-isomorphism.
	Indeed, coherence of \(F\) with respect to associators implies associativity of \(\Sigma(F)\), while coherence of \(F\) with respect to unitors implies that \(\Sigma(F)\) is unital and that \(\iota_{\Sigma(F)}\) is a unital homomorphism.
	Hence, commutativity of \eqref{eq:bar3}, \eqref{eq:bar2}, and \eqref{eq:bar1} implies that the \(\ast\)-operation is antimultiplicative, involutive, and unital, respectively, while commutativity of \eqref{eq:bar1} also implies that \(\iota_{\Sigma(F)}\) is a \(\ast\)-homomorphism.
	
	Now, recall from Example~\ref{ex:fell} that \(F\) canonically defines a pre-Fell bundle \(\mathcal{F}\) over \(\bZ\) in the sense of Exel~\cite[Def.\ 24.2]{exelbook}; it follows that \(\Sigma(F)\) is precisely the \(\ast\)-algebra of compactly supported cross-sections of \(\mathcal{F}\).
	Thus, by~\cite[Propp.\ 17.9.(iv) \& 19.8]{exelbook}, the \Cstar-norm on the reduced cross-sectional \Cstar-algebra~\cite[Def.\ 17.6]{exelbook} of the Fell bundle completion~\cite[Def.\ 24.7]{exelbook} of \(\mathcal{F}\) yields the unique \Cstar-norm on \(\Sigma(F)\) satisfying \eqref{eq:cstarnorm}; since \(F^{(0)}\) satisfies \eqref{eq:unitary}, this implies that \(\iota_{\Sigma(F)}\) is isometric.
	Finally, by \cite[Thm 3]{Raeburn}, it follows \eqref{eq:circleaction} defines a \(\Unit(1)\)-action of finite type on the unital pre-\Cstar-algebra \(\Sigma(F)\); that \((\Sigma(F);\alpha)\) defines a topological quantum principal \(\U\)-bundle over \(B\) now follows from the existence of cobases for \(F(1)\) and \(\conj{F(1)}\).
	
	Next, we show that \(\Sigma\) is well-defined on arrows.
	Let \(\eta : R \Rightarrow S\) be a \(2\)-isomorphism in \(\grp{Hom}(\bZ,\Pic(B))\), so that \(\eta\) is a bar natural transformation by Theorem~\ref{thm:barfunctor}.
	This now implies that the \(\Unit(1)\)-equivariant vector space isomorphism \(\Sigma(\eta) : \Sigma(R) \to \Sigma(S)\) is a unital \(\ast\)-isomorphism intertwining \(\iota_{\Sigma(R)}\) and \(\iota_{\Sigma(S)}\).
	Indeed, coherence of \(\eta\) with respect to \(R^{(2)}\) and \(S^{(2)}\) implies that \(\Sigma(\eta)\) is multiplicative, that \(\eta_1\) intertwines \(R^{(0)}\) and \(S^{(0)}\) implies that \(\Sigma(\eta)\) is unital and intertwines \(\iota_{\Sigma(R)}\) and \(\iota_{\Sigma(S)}\), and the fact that \(\eta\) is a bar natural transformation implies that \(\Sigma(\eta)\) is \(\ast\)-preserving.
	Since \(\eta_k\) and \(\eta_k^{-1}\) both satisfy \eqref{eq:unitary} for each \(k \in \bZ\), the bar natural transformation \(\eta\) induces a isomorphism of the Fell bundle completions of the pre-Fell bundles induced by \(R\) and \(S\) respectively, so that \(\Sigma(\eta)\) is isometric by \cite[Prop.\ 21.3]{exelbook}.
	
	Now, functoriality of \(\Sigma\) is easily checked, so it remains to construct natural isomorphisms \(\mu : \id_{\Circ(B)} \Rightarrow \Sigma \circ \cL\) and \(\nu : \id_{\grp{Hom}(\bZ,Pic(B))} \Rightarrow \cL \circ \Sigma\).
	On the one hand, let \(P\) be a topological quantum principal \(\U\)-bundle over \(B\).
	Since the \(\bZ\)-grading \(P = \bigoplus_{k\in\bZ}P_k\) is strong, the spectral subspaces of \(P\) define a pre-Fell bundle over \(\bZ\) fibrewise-isometrically isomorphic (\emph{mutatis mutandis}) over \(\iota_P^{-1}\) to the pre-Fell bundle over \(\bZ\) induced by \(\cL(P)\); note that this \(\bZ\)-grading is topological in the sense of Exel \cite[Def.\ 19.2]{Exel} by Proposition \ref{prop:conditional} and that averaging over the \(\U\)-action yields a faithful conditional expectation of the \Cstar-completion of \(P\) onto the \Cstar-completion of \(B\), cf.\ \cite[\S 4]{ADL}.
	Hence, by \cite[Prop.\ 21.3]{exelbook}, there exists unique \(\Unit(1)\)-equivariant isometric \(\ast\)-isomorphism \(\mu_P : P \to \Sigma \circ \cL(P)\) that satisfies \(\mu_P \circ \iota_P = \iota_{\Sigma \circ \cL(P)}\), namely, set \(\rest{\mu_P}{P_k} \coloneqq \id\) for \(k \in \bZ \setminus \Set{0}\) and \(\rest{\mu_P}{P_0} \coloneqq \iota_P^{-1}\).
	Naturality of \(\mu \coloneqq (\mu_P : P \to \Sigma\circ \cL(P))_{P \in \Obj(\Circ(B))}\) now follows by uniqueness.
	On the other hand, given monoidal \(F : \bZ \to \Pic(B)\), define \(\nu_F : F \Rightarrow \cL \circ \Sigma(F)\) as follows: for each \(k \in \bZ\), let \((\nu_F)_k : F(k) \to \left(\cL \circ \Sigma(F)\right)\!(k)\) be the inclusion of \(F(k)\) in \(\Sigma(F)\) as a direct sum of \(B\)-bimodules.
	Naturality of \(\nu \coloneqq (\nu_F : F \Rightarrow \cL \circ \Sigma(F))_{F \in \Obj(\grp{Hom}(\bZ,\Pic(B)))}\) follows from the fact that direct sums in \(\grp{Bimod}(B)\) are coproducts. 
\end{proof}

\begin{remark}
	Let \(T\) be a compact Abelian group with Pontrjagin dual \(\hat{T}\); suppose that \(B\) admits polar decompositions.
	The results above generalise to yield an equivalence of categories between \(\grp{Hom}(\hat{T},\Pic(B))\) and an analogous category of topological quantum principal \(T\)-bundles over \(B\), thereby recovering the relevant classification results of Schwieger--Wagner~\cite{SW1} in a manner that is adaptable to NC differential geometry.
\end{remark}

The construction of the natural isomorphism \(\mu : \id_{\Circ(B)} \Rightarrow \Sigma \circ \cL\) in the proof of Theorem \ref{thm:fell} implies the following useful characterisation of relevant \Cstar-norms.

\begin{corollary}[{Arici--Kaad--Landi~\cite[Thm.\ 3.10]{AKL}}]\label{cor:fell}
	Let \(P\) be a topological quantum principal \(\U\)-bundle on \(B\); let \(\norm{}\) denote its \Cstar-norm.
	Let \(\norm{}^\prime\) be a \(\U\)-invariant \Cstar-norm on \(P\).
	Then \(\norm{}^\prime = \norm{}\) if and only if \(\rest{\norm{}^\prime}{P^{\U}} = \rest{\norm{}}{P^{\U}}\).
\end{corollary}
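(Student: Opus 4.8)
The forward implication is immediate, so suppose that $\rest{\norm{}^\prime}{P^{\U}} = \rest{\norm{}}{P^{\U}}$; write $\alpha$ for the $\U$-action on $P$ and $\iota_P : B \to P^{\U}$ for its structure map. The plan is to show that $(P,\norm{}^\prime)$, equipped with the very same $\alpha$ and $\iota_P$, is again a topological quantum principal $\U$-bundle over $B$, and then to transport both $\Cstar$-norms isometrically onto one and the same generalised crossed product $\Sigma(F)$ of Theorem~\ref{thm:fell}, whence they must coincide. That $(P,\norm{}^\prime,\alpha,\iota_P)$ defines an object of $\Circ(B)$ is checked directly: $\norm{}^\prime$ is a $\Cstar$-norm by hypothesis; $\alpha$ acts by isometric $\ast$-automorphisms for $\norm{}^\prime$ since $\norm{}^\prime$ is $\U$-invariant, it is of finite type since that is unchanged, and it is strongly continuous for $\norm{}^\prime$ because on each spectral subspace $P_k$ it acts by the scalar $z \mapsto z^{k}$; the map $\iota_P$ is isometric for $\rest{\norm{}^\prime}{P^\U}$ precisely by our assumption; and the families $(e_i)_{i=1}^m$, $(\epsilon_j)_{j=1}^n$ in $P_1$ with $\sum_i e_ie_i^\ast = \sum_j \epsilon_j^\ast\epsilon_j = 1$ witnessing the defining condition are purely algebraic.

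Next I would observe that the homomorphism $\cL(P) : \bZ \to \Pic(B)$ of Proposition~\ref{prop:assocline} is built entirely from the $\bZ$-grading of $P$, its multiplication and $\ast$-operation, and $\iota_P$, with no reference to the norm on $P$; hence $\cL$ applied to $(P,\norm{})$ and to $(P,\norm{}^\prime)$ yields the same homomorphism $F \coloneqq \cL(P) : \bZ \to \Pic(B)$. By the proof of Theorem~\ref{thm:fell}, the component $\mu_P$ of the natural isomorphism $\mu : \id_{\Circ(B)} \Rightarrow \Sigma \circ \cL$ is the $\U$-equivariant isometric $\ast$-isomorphism $P \to \Sigma(F)$ given by $\rest{\mu_P}{P_k} = \id$ for $k \neq 0$ and $\rest{\mu_P}{P_0} = \iota_P^{-1}$ — once more a formula that does not mention the norm on $P$. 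Applying this with each of the two objects $(P,\norm{})$ and $(P,\norm{}^\prime)$ of $\Circ(B)$ in turn, we conclude that the single $\ast$-algebra isomorphism $\mu_P : P \to \Sigma(F)$ is isometric both from $(P,\norm{})$ and from $(P,\norm{}^\prime)$ onto $\Sigma(F)$; therefore $\norm{p} = \norm{\mu_P(p)}_{\Sigma(F)} = \norm{p}^\prime$ for every $p \in P$, as required.

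The step requiring care — and the main obstacle — is making sure that $(P,\norm{}^\prime)$ genuinely supports the construction of $\mu_P$ carried out in the proof of Theorem~\ref{thm:fell}, i.e.\ that Proposition~\ref{prop:conditional} and the uniqueness-of-isometric-isomorphism input \cite[Prop.\ 21.3]{exelbook} go through verbatim with $\norm{}^\prime$ in place of $\norm{}$. This is so because the proof of Proposition~\ref{prop:conditional} uses only $\U$-invariance and isometry of $\alpha$, which hold for $\norm{}^\prime$; and because the fibrewise $\Cstar$-norm on the pre-Fell bundle over $\bZ$ induced by $F$ is $\norm{p}^2 = \norm{\hp{p}{p}}$, which for homogeneous $p$ equals $\norm{p^\ast p}^\prime = \norm{p}^{\prime 2}$ by the $\Cstar$-identity for $\norm{}^\prime$ together with isometry of $\iota_P$ for $\rest{\norm{}^\prime}{P^\U}$. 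Thus the spectral subspaces of $(P,\norm{}^\prime)$ still match the fibres of $F$ isometrically via the identity, so that the cross-sectional $\Cstar$-norm is forced, which is exactly what the argument needs.
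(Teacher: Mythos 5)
Your proposal is correct and follows essentially the same route as the paper: the corollary is obtained by noting that \((P,\norm{}^\prime)\) is again an object of \(\Circ(B)\) with the same \(\cL(P)\), that the \Cstar-identity together with agreement on \(P^{\U}\) forces the two norms to coincide on each spectral subspace, and that the construction of the natural isomorphism \(\mu_P : P \to \Sigma \circ \cL(P)\) in the proof of Theorem \ref{thm:fell} (via the faithful conditional expectation of Proposition \ref{prop:conditional} and Exel's uniqueness results) then pins down the cross-sectional \Cstar-norm. Your attention to why Proposition \ref{prop:conditional} and the Fell-bundle uniqueness input survive the replacement of \(\norm{}\) by \(\norm{}^\prime\) matches the role \(\U\)-invariance plays in the paper's argument.
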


Combining Theorem \ref{thm:fell} with Corollary \ref{cor:abstractpimsner} recovers Arici--Kaad--Landi's characterisation of topological quantum principal \(\U\)-bundles.

\begin{corollary}[{Arici--Kaad--Landi~\cite[\S 3]{AKL}; cf.\ Abadie--Eilers--Exel~\cite[Thm 3.1]{ABE}, Beggs--Brzezi\'{n}ski~\cite[Thm 7.3]{BeBrz14}}]
	The functor \(\epsilon_1 \circ \cL : \Circ(B) \to \Pic(B)\) is an equivalence of categories.
\end{corollary}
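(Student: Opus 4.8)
The plan is to realise \(\epsilon_1 \circ \cL\) as a composite of two equivalences of categories that are already in hand. First I would invoke Theorem \ref{thm:fell}: the functor \(\cL : \Circ(B) \to \grp{Hom}(\bZ,\Pic(B))\) is an equivalence of categories, with explicit weak inverse \(\Sigma\). Next I would apply Corollary \ref{cor:abstractpimsner} to the coherent \(2\)-group \(\grp{G} \coloneqq \Pic(B)\), which is legitimate since \(\Pic(B)\) is a coherent \(2\)-group by Theorem-Definition \ref{thmdef:picard}; this yields that the evaluation functor \(\epsilon_1 : \grp{Hom}(\bZ,\Pic(B)) \to \Pic(B)\) at the generating object \(1 \in \Obj(\bZ)\) is an equivalence of categories. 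Since a composite of equivalences of categories is an equivalence of categories, I would conclude at once that \(\epsilon_1 \circ \cL\) is an equivalence of categories.

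The remaining point is purely bookkeeping: one must confirm that \(\epsilon_1 \circ \cL\) is the functor named in the statement. This is immediate from unwinding the definitions---by Proposition \ref{prop:assocline} and the definition of the evaluation functor \(\epsilon_g\), the composite sends a topological quantum principal \(\U\)-bundle \(P\) over \(B\) to the Hermitian line \(B\)-bimodule \(\cL(P)(1) = P_1\), i.e.\ the associated line bundle of winding number \(-1\), and sends an arrow \(f : P \to Q\) to \(\rest{f}{P_1}\). I expect no genuine obstacle here: all the substance lies in the two ingredients, namely Theorem \ref{thm:fell} (which rests in turn on the bar-functor machinery of Theorem \ref{thm:barfunctor}) and Corollary \ref{cor:abstractpimsner} (which rests on the cohomological classification of coherent \(2\)-groups and their homomorphisms, Theorems \ref{thm:sinh} and \ref{thm:JS}, together with the fact that \(\bZ\) has cohomological dimension \(1\)). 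If an explicit weak inverse is wanted, one may take \(\Sigma \circ G\), where \(G : \Pic(B) \to \grp{Hom}(\bZ,\Pic(B))\) sends a Hermitian line \(B\)-bimodule to the essentially unique homomorphism \(\bZ \to \Pic(B)\) realising it at \(1\), as furnished by Corollary \ref{cor:abstractpimsner}.
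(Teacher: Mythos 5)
Your proposal is correct and follows the paper's own route exactly: the corollary is obtained by composing the equivalence \(\cL : \Circ(B) \to \grp{Hom}(\bZ,\Pic(B))\) of Theorem \ref{thm:fell} with the equivalence \(\epsilon_1 : \grp{Hom}(\bZ,\Pic(B)) \to \Pic(B)\) furnished by Corollary \ref{cor:abstractpimsner} applied to the coherent \(2\)-group \(\Pic(B)\). Nothing further is needed.
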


\begin{definition}[Abadie--Eilers--Exel~\cite{ABE}]
	Let \(E\) be a Hermitian line \(B\)-bimodule.
	The \emph{crossed product} of \(B\) by \(E\) is the essentially unique topological quantum principal \(\Unit(1)\)-bundle \(B \rtimes_E \bZ\) over \(B\), such that
	\(
		\cL(B \rtimes_E \bZ)(1) \cong E
	\).
\end{definition}

One may justify this terminology as follows.
Let \(\phi \in \Aut(B)\), so that its \emph{algebraic crossed product} \(B \rtimes_\phi^{\mathrm{alg}} \bZ\) is the unital \(\ast\)-algebra obtained from \(B\) by adjoining a unitary \(U\) satisfying \(UbU^\ast = \phi(b)\) for all \(b \in B\).
Then \(B \rtimes_\phi^{\mathrm{alg}} \bZ\) defines topological quantum \(\U\)-bundle over \(B\) when equipped with the reduced crossed product \Cstar-norm and the unique \(\U\)-action of finite type \(\alpha\), such that \(\rest{\alpha_z}{B} = \id\) and \(\alpha_z(U) = zU\) for all \(z \in \U\).
Since \(\left(b_\phi \mapsto U \phi^{-1}(b)\right) : B_\phi \to \cL(B \rtimes_\phi \bZ)(1)\) is an isomorphism of Hermitian line \(B\)-bimodules, we may take
\(
	B \rtimes_{\tau(\phi)} \bZ \coloneqq B \rtimes_\phi^{\mathrm{alg}}\bZ
\).

\subsection{Horizontal calculi as generalised crossed products}

As promised, we now adapt the considerations of the last subsection to the setting of NC differential geometry by replacing the Picard \(2\)-group with the differential Picard \(2\)-group.
However, in the absence of additional constraints, we can only reconstruct the \emph{horizontal calculus} of a quantum principal \(\U\)-bundle.
In what follows, let \(B\) be a given unital pre-\Cstar-algebra with \(\ast\)-exterior algebra \((\Omega_B,\du_B)\).

Let \(P\) be a \(\U\)-pre-\Cstar-algebra of finite type with \(\U\)-action \(\alpha\).
We define a \emph{\(\U\)-\(\ast\)-quasi-\textsc{dga} of finite type} over \(P\) to be a \(\ast\)-quasi-\textsc{dga} \((\Omega,\du)\) over \(P\) together with a pointwise extension of \(\alpha\) to a group homomorphism \(\hat{\alpha} : \U \to \Aut(\Omega,\du)\), such that, for each \(k \in \bN_0\), the restriction of \(\hat{\alpha}\) to a \(U\)-action on the complex vector space \(\Omega^k\) is of finite type.
In this case, we call \((\Omega,\du)\) a \emph{\(\U\)-\(\ast\)-exterior algebra of finite type} over \(P\) whenever the underlying \(\ast\)-quasi-\textsc{dga} is a \(\ast\)-exterior algebra.
At last, we denote by \(\grp{QDGA}^{\U}\) the concrete category whose objects \((P;\Omega,\du)\) consist of a \(\U\)-pre-\Cstar-algebra of finite type \(P\) together with a \(\U\)-\(\ast\)-quasi-\textsc{dga} of finite type \((\Omega,\du)\) over \(P\) and whose arrows \(f : (P_1;\Omega_1,\du_1) \to (P_2;\Omega_2,\du_2)\) are \(\U\)-equivariant morphisms of \(\ast\)-quasi-\textsc{dga}.

The following definition characterises the differentiable structure that a Hermitian line \(B\)-bimodule with connection can generally induce on the corresponding topological quantum principial \(\U\)-bundle over \(B\).

\begin{definition}[{\DJ{}ur\dj{}evi\'{c}~\cite[\S 2]{Dj98}, cf.\ \'{C}a\'{c}i\'{c}~\cite{Cacic}}]
	Let \(P\) be a topological quantum principal \(\U\)-bundle over \(B\).
	A \emph{horizontal calculus} for \(P\) is a \(\Unit(1)\)-\(\ast\)-quasi-\textsc{dga} \((\Omega_{P,\hor},\du_{P,\hor})\) of finite type over \(P\) together with an isomorphism of quasi-\(\ast\)-\textsc{dga}
	\(
		\hat{\iota}_P : (B;\Omega_B,\du) \to (P^{\U},(\Omega_{P,\hor})^{\Unit(1)},\rest{\du_{P,\hor}}{(\Omega_{P,\hor})^{\Unit(1)}})
	\)
	extending the isometric \(\ast\)-isomorphism \(\iota_P : B \to P^{\U}\), such that
	\(
		\Omega_{P,\hor} = P \cdot (\Omega_{P,\hor})^{\Unit(1)} \cdot P
	\).
\end{definition}

\begin{example}[{Majid~\cite[\S 3]{Majid05}}]\label{ex:hopf2}
	We continue from Example \ref{ex:hopf1}.
	Let \(\Omega_{q,\hor}(\SU(2))\) be the graded \(\ast\)-algebra over \(\cO_q(\SU(2))\) generated by \(e^+ \in \Omega^1_{q,\hor}(\SU(2))\) and \(e^- \coloneqq -(e^+)^\ast\) subject to the relations
	\begin{gather*}
		e^\pm  a = q^{-1} a  e^\pm, \quad e^\pm  a^\ast = q a^\ast  e^\pm, \quad e^\pm  c = q^{-1} c  e^\pm, \quad e^\pm  c^\ast = q c^\ast  e^\pm,\\
		(e^\pm)^2 = 0, \quad e^-  e^+ + q^{-2} e^+  e^- = 0.
	\end{gather*}
	Define complex-linear maps \(\partial_\pm : \cO_q(\SU(2)) \to \cO_q(\SU(2)\) by
	\begin{align*}
		\partial_+(a) &\coloneqq -qc^\ast, && \partial_+(a^\ast) \coloneqq 0, && \partial_+(c) \coloneqq a^\ast, && \partial_+(c^\ast) \coloneqq 0,\\
		\partial_-(a) &\coloneqq 0, && \partial_-(a^\ast) \coloneqq c, && \partial_-(c) \coloneqq 0, && \partial_-(c^\ast) \coloneqq -q^{-1} a,
	\end{align*}
	together with the twisted Leibniz rule
	\[
		\forall x \in \cO_q(\SU(2)), \, \forall j \in \bZ, \, \forall y \in \cO_q(\SU(2))_j, \quad \partial_\pm(xy) = \partial_{\pm}xyq^{-j} + x\partial_\pm(y);
	\]
	hence, define \(\du_{q,\hor} : \cO_q(\SU(2)) \to \Omega^1_{q,\hor}(\SU(2))\) by setting
	\[
		\forall p \in \cO_q(\SU(2)), \quad \du_{q,\hor}(p) \coloneqq \partial_+(p)  e^+ + \partial_-(p)  e^-,
	\]
	and extend \(\du_{q,\hor}\) to \(\Omega_{q,\hor}(\SU(2))\) by setting \(\du_{q,\hor}(e^\pm) \coloneqq 0\).
	Finally, extend the \(\Unit(1)\)-action from \(\cO_q(\SU(2))\) to \(\Omega_{q,\hor}(\SU(2))\) by setting \(\alpha_z(e^{\pm}) = z^{\pm 2} e^{\pm}\) for all \(z \in \U\).
	Then \((\Omega_{q,\hor}(\SU(2)),\du_{P,\hor})\) defines a horizontal calculus for the topological quantum principal \(\Unit(1)\)-bundle \(\cO_q(\SU(2))\) over \(\cO_q(\CP^1)\) with respect to
	\(
		(\Omega_q(\CP^1),\du) \coloneqq \left(\Omega_{q,\hor}(\SU(2))^{\Unit(1)}, \rest{\du_{q,\hor}}{\Omega_{q,\hor}(\SU(2))^{\Unit(1)}}\right)
	\),
	which, by Majid's result, recovers the \(2\)-dimensional calculus on \(\cO_q(\CP^1)\) first constructed by Podle\'{s}~\cite{Podles92}. 
\end{example}

We now define the concrete category \(\dCirc_\hor(B)\) of \emph{horizontally differentiable quantum principal \(\U\)-bundles over \(B\)} and their isomorphisms as follows:
\begin{enumerate}[leftmargin=*]
	\item an object \((P;\Omega_{P,\hor},\du_{P,\hor})\) consists of a topological quantum principal \(\U\)-bundle \(P\) over \(B\) together with a horizontal calculus \((\Omega_{P,\hor},\du_{P,\hor})\) on \(P\);
	\item an arrow \(f : (P;\Omega_{P,\hor},\du_{P,\hor}) \to (Q;\Omega_{Q,\hor},\du_{Q,\hor})\) is an isomorphism of \(\U\)-\(\ast\)-quasi-\textsc{dga}, such that \(\hat{\iota}_Q \circ f = f \circ \hat{\iota}_P\).
\end{enumerate}
It is useful to observe that the forgetful functor \(\dCirc_\hor(B) \to \Circ(B)\) is faithful: an arrow \(f : (P;\Omega_{P,\hor},\du_{P,\hor}) \to (Q;\Omega_{Q,\hor},\du_{Q,\hor})\) in \(\dCirc(B)\) is uniquely determined by the corresponding arrow \(\rest{f}{P} : P \to Q\) in \(\Circ(B)\) precisely because \(\Omega_{P,\hor}\) is generated as an algebra by \(P\) and \(\hat{\iota}_P(\du(B)) \subset \du_{P,\hor}(P)\).
We can now make precise sense of associated line bundles with connection in the NC setting.

\begin{proposition}[{cf.\ \'{C}a\'{c}i\'{c}--Mesland~\cite[Appx B]{CaMe}}]\label{prop:assoclineconn}
	Let \((P,\Omega_{P,\hor},\du_{P,\hor})\) be a horizontally differentiable quantum principal \(\U\)-bundle over \(B\).
	\begin{enumerate}[leftmargin=*]
		\item Observe that \(\Omega_{P,\hor}\) defines a \(B\)-bimodule with respect to \(\iota_P : B \to P^{\Unit(1)}\).
		There exists a unique \(\Unit(1)\)-equivariant isomorphism \(\hat{\ell}_P : \Omega_{P,\hor} \to P \otimes_B \Omega_B\) of \(B\)-bimodules, such that
			\begin{equation}\label{eq:stronguniversal}
				\forall p \in P, \, \forall \beta \in \Omega_B, \quad \hat{\ell}_P^{-1}(p \otimes \beta) = p  \hat{\iota}_P(\beta).
			\end{equation}
		\item Let \(k \in \bZ\) be given.
			Define functions \(\sigma_{P;k} : \Omega_B \otimes_B \cL(P)(k) \to \cL(P)(k) \otimes_B \Omega_B\) and \(\nabla_{P;k} : \cL(P)(k) \to \cL(P) \otimes_B \Omega^1_B\) by
			\begin{align}
				\forall \beta \in \Omega_B, \, \forall p \in P_k, && \sigma_{P;k}(\beta \otimes p) \coloneqq \hat{\ell}_P\mleft(\hat{\iota}_P(\beta)  p\mright),\\
				\forall p \in P_k, && \nabla_{P;k}(p) \coloneqq \hat{\ell}_P\mleft(\du_{P,\hor}(p)\mright),
			\end{align}
			respectively.
			Then \((\sigma_{P;k},\nabla_{P;k})\) defines a Hermitian bimodule connection on the Hermitian line \(B\)-bimodule \(\cL(P)(k)\).
	\end{enumerate}
\end{proposition}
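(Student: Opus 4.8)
The plan is to construct the isomorphism $\hat{\ell}_P$ explicitly, verify that it intertwines all the algebraic structure carried by $\Omega_{P,\hor}$ with the corresponding structure on $P \otimes_B \Omega_B$, and then obtain part (2) by transport of structure; the only genuine work will be two bijectivity statements, both of which I expect to reduce to strongness of the $\bZ$-grading of $P$ together with the existence of finite frames (bases and cobases) on the Hermitian line $B$-bimodules $\cL(P)(k) = P_k$.

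For part (1) I would argue as follows. The $\Unit(1)$-action of finite type on each $\Omega^k_{P,\hor}$ makes $\Omega_{P,\hor}$ a $\bZ$-graded algebra $\Omega_{P,\hor} = \bigoplus_{j \in \bZ}(\Omega_{P,\hor})_j$ refining the form grading, with $\du_{P,\hor}$ weight-preserving, $\Omega_{P,\hor}$ a graded $P$-bimodule, and $(\Omega_{P,\hor})_0 = (\Omega_{P,\hor})^{\Unit(1)} = \hat{\iota}_P(\Omega_B)$. Since the $\bZ$-grading of $P$ is strong (by the observation of Ulbrich recalled above), for each $j$ one may write $1 = \sum_i g_i f_i$ with $g_i \in P_j$, $f_i \in P_{-j}$; then any $\omega \in (\Omega_{P,\hor})_j$ equals $\sum_i g_i(f_i \omega)$ with $f_i \omega \in (\Omega_{P,\hor})_0 = \hat{\iota}_P(\Omega_B)$, so $(\Omega_{P,\hor})_j = P_j \cdot \hat{\iota}_P(\Omega_B)$ and hence $\Omega_{P,\hor} = P \cdot \hat{\iota}_P(\Omega_B)$. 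The assignment $p \otimes \beta \mapsto p\,\hat{\iota}_P(\beta)$ then defines a $\Unit(1)$-equivariant surjective homomorphism of $B$-bimodules $P \otimes_B \Omega_B \to \Omega_{P,\hor}$ (balancedness over $B$ uses $\rest{\hat{\iota}_P}{B} = \iota_P$ and multiplicativity of $\hat{\iota}_P$; equivariance uses that $\hat{\iota}_P(\Omega_B)$ is $\Unit(1)$-fixed). To invert it I would fix, for each $k$, a cobasis $(\conj{e_l})_l$ of $\conj{P_k}$, i.e.\ a finite family $(e_l)_l$ in $P_k$ with $\sum_l e_l e_l^* = 1$ (so that $q = \sum_l e_l\,\iota_P^{-1}(e_l^* q)$ for all $q \in P_k$), and set $\hat{\ell}_P(\omega) \coloneqq \sum_l e_l \otimes \hat{\iota}_P^{-1}(e_l^* \omega)$ on $(\Omega_{P,\hor})_k$; this is legitimate because $e_l^* \omega \in (\Omega_{P,\hor})_0$, and the (co)basis identities show it is a two-sided inverse of $p \otimes \beta \mapsto p\,\hat{\iota}_P(\beta)$, independent of choices. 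Uniqueness of $\hat{\ell}_P$ is then immediate from \eqref{eq:stronguniversal}, since $\hat{\ell}_P^{-1}$ is thereby pinned down on all of $P \otimes_B \Omega_B$.

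Before part (2) I would record the structural properties that $\hat{\ell}_P$ transports, each checked directly on elements $p\,\hat{\iota}_P(\beta)$: $\hat{\ell}_P$ is right $\hat{\iota}_P(\Omega_B)$-linear for the action $(p \otimes \beta)\cdot\gamma = p \otimes \beta\gamma$; it carries left multiplication by $\hat{\iota}_P(\gamma)$ to $\gamma \otimes (p \otimes \beta) \mapsto \sigma_{P;k}(\gamma \otimes p)\cdot\beta$, in particular $\hat{\ell}_P(\iota_P(b)\,\omega) = b \cdot \hat{\ell}_P(\omega)$ for $b \in B$; it intertwines $\du_{P,\hor}$ on $(\Omega_{P,\hor})_k$ with the canonical extension of $\nabla_{P;k}$ to $\cL(P)(k) \otimes_B \Omega_B$ (using the previous point, the Leibniz rule for $\du_{P,\hor}$, the identity $\du_{P,\hor} \circ \hat{\iota}_P = \hat{\iota}_P \circ \du_B$, and uniqueness of the extension); and it satisfies $\hp{\hat{\ell}_P(\omega)}{\hat{\ell}_P(\eta)} = \hat{\iota}_P^{-1}(\omega^*\eta)$ for $\omega,\eta \in (\Omega_{P,\hor})_k$, with $\hat{\ell}_P(q) = q \otimes 1$ for $q \in \cL(P)(k)$. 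With these in hand, part (2) is a sequence of routine checks. The map $\sigma_{P;k}$ is the composite of $\beta \otimes p \mapsto \hat{\iota}_P(\beta)p$ with $\hat{\ell}_P$, hence a degree-preserving $B$-bimodule map; it is bijective because $\hat{\ell}_P$ is and $\beta \otimes p \mapsto \hat{\iota}_P(\beta)p$ has inverse $\omega \mapsto \sum_l \hat{\iota}_P^{-1}(\omega\,\epsilon_l^*) \otimes \epsilon_l$ for a cobasis $(\epsilon_l)_l$ of $P_k$ (using \eqref{eq:imprimitivity3} and $\sum_l \epsilon_l^* \epsilon_l = 1$); it extends $\rho^{-1} \circ \lambda$ since $\sigma_{P;k}(b \otimes p) = (b\cdot p) \otimes 1$; and it satisfies \eqref{eq:braidassoc} by associativity of multiplication in $\Omega_{P,\hor}$ together with right $\hat{\iota}_P(\Omega_B)$-linearity of $\hat{\ell}_P$ — so $\sigma_{P;k}$ is a generalised braiding. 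Likewise $\nabla_{P;k}$ satisfies \eqref{eq:rightleibniz} and \eqref{eq:hermconn} from the Leibniz rule and $\ast$-invariance of $\du_{P,\hor}$, the identity $\du_{P,\hor} \circ \hat{\iota}_P = \hat{\iota}_P \circ \du_B$, and the inner-product identity above, so it is a Hermitian right connection. Finally I would apply Proposition \ref{prop:bemacriterion}: \eqref{eq:bemacriterion1} reads $\nabla_{P;k}(\iota_P(b)p) = \hat{\ell}_P\mleft(\hat{\iota}_P(\du_B b)\,p + \iota_P(b)\,\du_{P,\hor}p\mright) = \sigma_{P;k}(\du_B b \otimes p) + b\cdot\nabla_{P;k}p$, and for \eqref{eq:bemacriterion2} one observes that $\du_{P,\hor}^2$ is a derivation of degree $2$ with $\du_{P,\hor}^2 \circ \hat{\iota}_P = \hat{\iota}_P \circ \du_B^2 = 0$, so $\nabla_{P;k}^2(\iota_P(b)p) = \hat{\ell}_P\mleft(\du_{P,\hor}^2(\iota_P(b)p)\mright) = \hat{\ell}_P\mleft(\iota_P(b)\,\du_{P,\hor}^2 p\mright) = b \cdot \nabla_{P;k}^2 p$; Proposition \ref{prop:bemacriterion} then yields that $(\sigma_{P;k},\nabla_{P;k})$ is a Hermitian bimodule connection on $\cL(P)(k)$.

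I do not expect any deep obstacle: the real hazard is the bookkeeping of the two gradings on $\Omega_{P,\hor}$ — form degree versus $\Unit(1)$-weight — together with disciplined use of bases of $P_k$ and cobases of $\conj{P_k}$ to establish bijectivity of $\hat{\ell}_P$ and of $\sigma_{P;k}$. The only point where a hypothesis enters non-formally is \eqref{eq:bemacriterion2}, where flatness of the \emph{base} calculus, $\du_B^2 = 0$ — but not of the horizontal calculus — is precisely what forces $\nabla_{P;k}^2$ to be left $B$-linear.
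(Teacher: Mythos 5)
Your construction is correct and is essentially the paper's own proof: you define \(\hat{\ell}_P\) weight-by-weight by \(\omega \mapsto \sum_l e_l \otimes \hat{\iota}_P^{-1}(e_l^\ast\omega)\) for a finite family in \(P_k\) with \(\sum_l e_l e_l^\ast = 1\), invert it and \(\sigma_{P;k}\) by the multiplication maps using \(\sum_l \epsilon_l^\ast\epsilon_l = 1\), and deduce the braiding and connection axioms from the Leibniz rule for \(\du_{P,\hor}\) and \(\du_{P,\hor}\circ\hat{\iota}_P = \hat{\iota}_P\circ\du_B\). The only cosmetic deviation is that you finish via Proposition \ref{prop:bemacriterion} (checking \eqref{eq:bemacriterion1} and \eqref{eq:bemacriterion2}, the latter from \(\du_B^2 = 0\)) where the paper verifies \eqref{eq:leftleibniz} directly from the Leibniz rule; these are equivalent routes through the same criterion.
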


\begin{proof}
	We first show that \(\hat{\ell}_P\) is well-defined; uniqueness and \(\Unit(1)\)-equivariance will then follow by construction.
	Given \(k \in \bZ\), define \(\hat{\ell}_{P;k} : (\Omega_{P,\hor})_k \to \cL(P)(k) \otimes_B \Omega_B\) by \(\hat{\ell}_{P;k} \coloneqq \left(\omega \mapsto \sum_{i=1}^m e_i \otimes \hat{\iota}_P^{-1}(e_i^\ast  \omega)\right)\), where \((e_i)_{i=1}^m\) be a basis for \(\cL(P)(k)\); that \(\hat{\ell}_{P;k}\) is an isomorphism of \(B\)-bimodules with inverse given by \eqref{eq:stronguniversal} now follows from observing that \((e_i)_{i=1}^m\) satisfies
	\(
		1 = \iota_P\mleft(\sum_{i=1}^m \hp{\conj{e_i}}{\conj{e_i}}\mright) = \sum_{i=1}^m e_i  e_i^\ast
	\).
	We may now set \(\hat{\ell}_P \coloneqq \bigoplus_{k \in \bZ} \hat{\ell}_{P;k}\).
	
	We now fix \(k \in \bZ\) and show that \((\sigma_{P;k},\nabla_{P;k})\) defines a Hermitian bimodule connection on the Hermitian line \(B\)-bimodule \(\cL(P)(k)\).
	Let \((e_i)_{i=1}^m\) be a basis and let \((\epsilon_j)_{j=1}^n\) be a strict cobasis for \(\cL(P)(k)\).
	Recall that \(\sum_{i=1}^m e_i  e_i^\ast = 1\) and observe that \(\sum_{j=1}^n \epsilon_j^\ast \epsilon_j = \iota_P\mleft(\sum_{j=1}^n \hp{\epsilon_j}{\epsilon_j}\mright) = 1\).
	On the one hand, the fact that \(\sum_{j=1}^n \epsilon_j^\ast  \epsilon_j = 1\) implies that \(\sigma_{P;k}\) is indeed an isomorphism of graded \(B\)-bimodules with inverse \(\sigma_{P;k}^{-1} = \left(p \otimes \beta \mapsto \sum_{j=1}^n \hat{\iota}_P^{-1}\mleft(p  \beta  \epsilon_j^\ast\mright) \otimes \epsilon_j\right)\).
	On the other hand, the fact that \(\sum_{i=1} e_i e_i^\ast = 1\) implies, that for all \(\alpha,\beta \in \Omega_B\) and \(p \in P_k\),
	\[
		\hat{\ell}_P^{-1} \circ \sigma_{P;k}(\alpha  \beta \otimes p)
		= \hat{\iota}_P(\alpha  \beta)  p
		= \hat{\ell}_P^{-1}\mleft(\sigma_{P;k}(\alpha \otimes \leg{\sigma_{P;k}(\beta \otimes p)}{0})  \leg{\sigma_{P;k}(\beta \otimes p)}{1}\mright),
	\]
	which yields \eqref{eq:braidassoc}.
	Thus, \(\sigma_{P;k}\) defines a Hermitian generalised braiding; it remains to show that \(\nabla_{P;k}\) is a right Hermitian connection satisfying \eqref{eq:leftleibniz} with respect to \(\sigma_{P;k}\).
	However, we may again use the maps \(\hat{\ell}_P\) and \(\hat{\iota}_P\) together with the equality \(\sum_{i=1}^m e_i e_i^\ast = 1\) to derive \eqref{eq:rightleibniz}, \eqref{eq:hermconn}, and \eqref{eq:leftleibniz} from the Leibniz rule for \(\du_{P,\hor}\).
\end{proof}

\begin{proposition}[{cf.\ Beggs--Majid~\cite[Prop.\ 5.56]{BeMa18}, Salda\~{n}a~\cite[\S 3]{Saldana20}}]\label{prop:assoclineconnfunct}
	The functor \(\cL\) of Proposition \ref{prop:assocline} lifts to the functor \(\hat{\cL} : \grp{DCirc}_{\hor}(B) \to \grp{Hom}(\bZ,\dPic(B))\) defined as follows. 
	\begin{enumerate}[leftmargin=*]
		\item Let \((P,\Omega_{P,\hor},\du_{P,\hor})\) be a horizontally differentiable quantum principal \(\U\)-bundle over \(B\).
		Define \(\hat{\cL}(P,\Omega_{P,\hor},\du_{P,\hor}) : \bZ \to \dPic(B)\) as follows:
		\begin{enumerate}[leftmargin=*]
			\item given \(k \in \bZ\), let \(\hat{\cL}(P,\Omega_{P,\hor},\du_{P,\hor})(k) \coloneqq (\cL(P)(k),\sigma_{P,k},\nabla_{P,k})\), where \((\sigma_{P;k},\nabla_{P;k})\) is the Hermitian bimodule connection of Proposition \ref{prop:assoclineconn};
			\item let \(\hat{\cL}(P,\Omega_{P,\hor},\du_{P,\hor})^{(0)}\) be the unique lift of \(\id_{P_0} \eqqcolon \cL(P)^{(0)}\);
			\item given \(m,n \in \bZ\), let \(\hat{\cL}(P,\Omega_{P,\hor},\du_{P,\hor})^{(2)}_{m,n}\) be the unique lift of \(\cL(P)^{(2)}_{m,n}\).
		\end{enumerate}
		\item Given an isomorphism \(f : (P,\Omega_{P,\hor},\du_{P,\hor}) \to (Q,\Omega_{Q,\hor},\du_{Q,\hor})\) of horizontally differentiable quantum principal \(\U\)-bundles over \(B\), let
		\[\hat{\cL}(f) : \hat{\cL}(P,\Omega_{P,\hor},\du_{P,\hor}) \Rightarrow \hat{\cL}(Q,\Omega_{Q,\hor},\du_{Q,\hor})\] be the unique lift of the \(2\)-isomorphism \(\cL(f) : \cL(P) \Rightarrow \cL(Q)\).
	\end{enumerate}
\end{proposition}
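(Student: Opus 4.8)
The plan is to reduce everything to Proposition \ref{prop:assocline} by exploiting faithfulness of the forgetful functors. Recall that an arrow in $\dPic(B)$ is a $B$-bimodule isomorphism satisfying \eqref{eq:unitary} and \eqref{eq:intertwineconn}; in particular, the forgetful functor $\dPic(B) \to \Pic(B)$ is faithful, and hence so is the induced functor $\grp{Hom}(\bZ,\dPic(B)) \to \grp{Hom}(\bZ,\Pic(B))$. Thus \emph{uniqueness} of each claimed lift is automatic, and it remains to establish \emph{existence}: once we know that $\cL(P)^{(0)}$ and each $\cL(P)^{(2)}_{m,n}$ lift to arrows of $\dPic(B)$ between the stated objects, it follows at once that $\hat{\cL}(P,\Omega_{P,\hor},\du_{P,\hor})$ is a homomorphism of coherent $2$-groups, since its monoidal-functor coherence diagrams (whose edges include the associators and unitors of $\dPic(B)$, which are arrows of $\dPic(B)$ by Theorem \ref{thm:bema}) live over those of $\cL(P)$, which already commute. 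Likewise, once each $\cL(f)_k$ is shown to lift to an arrow of $\dPic(B)$, it follows that $\hat{\cL}(f)$ is a well-defined $2$-isomorphism, and functoriality of $\hat{\cL}$ together with the claim that it lifts $\cL$ is then immediate from functoriality of $\cL$.

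First I would handle the object part. Since $\nabla_B = \lambda_{\Omega_B}^{-1} \circ \du_B$, checking \eqref{eq:intertwineconn} for $\cL(P)^{(0)}$ amounts to showing that $\cL(P)^{(0)}$ intertwines $\nabla_{P;0}$ with $\du_B$; this follows from $\nabla_{P;0}(p) = \hat{\ell}_P(\du_{P,\hor}p)$, the identity $\du_{P,\hor} \circ \iota_P = \hat{\iota}_P \circ \du_B$ on $B$ (as $\hat{\iota}_P$ is a morphism of $\ast$-quasi-\textsc{dga} extending $\iota_P$), and \eqref{eq:stronguniversal}. For $\cL(P)^{(2)}_{m,n}$, which is induced by multiplication $P_m \otimes_B P_n \to P_{m+n}$, I would unwind the definition of $\nabla_{\cL(P)(m) \otimes_B \cL(P)(n)}$ from Theorem \ref{thm:bema} and translate it, via $\hat{\ell}_P$, $\hat{\iota}_P$, and the observation that $\sigma_{P;k}$ encodes left multiplication by horizontal $1$-forms on $P_k$ (so that the $\Omega_B$-bimodule structure on $\cL(P)(k) \otimes_B \Omega_B$ corresponds under $\hat{\ell}_P$ to multiplication in $\Omega_{P,\hor}$), into the graded Leibniz rule $\du_{P,\hor}(pq) = \du_{P,\hor}(p)\,q + p\,\du_{P,\hor}(q)$; this yields \eqref{eq:intertwineconn} for $\cL(P)^{(2)}_{m,n}$, after which compatibility of $\cL(P)^{(2)}_{m,n}$ with the generalised braidings comes for free from the final statement of Theorem \ref{thm:bema}.

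Next I would handle the arrow part. Let $f : (P,\Omega_{P,\hor},\du_{P,\hor}) \to (Q,\Omega_{Q,\hor},\du_{Q,\hor})$ be an isomorphism of horizontally differentiable quantum principal $\U$-bundles. Since $f$ is $\U$-equivariant, multiplicative, intertwines $\hat{\iota}_P$ and $\hat{\iota}_Q$, and satisfies $f \circ \du_{P,\hor} = \du_{Q,\hor} \circ f$, it intertwines $\hat{\ell}_P$ and $\hat{\ell}_Q$ by the uniqueness clause of Proposition \ref{prop:assoclineconn}(1); hence each $\cL(f)_k = \rest{f}{P_k}$ intertwines $\nabla_{P;k}$ with $\nabla_{Q;k}$ and so lifts to an arrow of $\dPic(B)$. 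Combined with the object part and faithfulness of the forgetful functor, this shows $\hat{\cL}$ is well-defined on objects and arrows, lifts $\cL$, and is functorial.

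The main obstacle I anticipate is precisely the compatibility of $\cL(P)^{(2)}_{m,n}$ with the tensor-product connection of Theorem \ref{thm:bema}: one must carefully match the formula $\nabla_{E \otimes_B F} = \alpha_{E,F,\Omega_B}^{-1} \circ \bigl((\id_E \otimes \sigma_F) \circ \alpha_{E,\Omega_B,F} \circ (\nabla_E \otimes \id) + \id_E \otimes \nabla_F\bigr)$ against the graded Leibniz rule for $\du_{P,\hor}$, keeping the $B$-bimodule and $\bZ$-grading bookkeeping straight through the isomorphisms $\hat{\ell}_P$ and the braidings $\sigma_{P;k}$. Everything else is a routine, if lengthy, verification of definitions.
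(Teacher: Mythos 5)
Your proposal is correct and follows essentially the same route as the paper's proof: existence of the lifts of \(\cL(P)^{(0)}\), \(\cL(P)^{(2)}_{m,n}\), and \(\cL(f)_k\) is checked via \(\hat{\iota}_P \circ \du_B = \du_{P,\hor}\), the graded Leibniz rule transported through \(\hat{\ell}_P\), and the fact that \(f\) intertwines \(\hat{\ell}_P\) and \(\hat{\ell}_Q\), while all coherence diagrams and uniqueness are dispatched by faithfulness of the forgetful functor \(\dPic(B) \to \Pic(B)\). No gaps.
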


\begin{proof}
	First, let \((P,\Omega_{P,\hor},\du_{P,\hor})\) be a horizontally differentiable quantum principal \(\U\)-bundle over \(B\).
	For notational simplicity, set \(F \coloneqq \cL(P)\) and denote our would-be homomorphism \(\hat{\cL}(P,\Omega_{P,\hor},\du_{P,\hor})\) by \(\hat{F}\).
	The functor \(\hat{F} : \bZ \to \dPic(B)\) is well defined by Proposition \ref{prop:assoclineconn}; that the arrow \(F^{(0)} : F(0) \to B\) satisfies \eqref{eq:intertwineconn} follows from the fact that \(\hat{\iota}_P \circ \du = \du_{P,\hor}\).
	Given \(m,n \in \bZ\), the arrow \(F^{(2)}_{m,n} : F(m) \otimes_B F(n) \to F(m+n)\) satisfies \eqref{eq:intertwineconn} by applying the isomorphism \(\hat{\ell}_P^{-1}\) of Proposition \ref{prop:assoclineconn} to both sides of the desired equality and then applying the Leibniz rule for \(\du_{P,\hor}\) in \(\Omega_{P,\hor}\); thus, the natural isomorphism \(\hat{F}^{(2)}\) is well defined.
	Commutativity of the relevant commutative diagrams now follows from observing that the forgetful functor \(\dPic(B) \to \Pic(B)\) is faithful.
	
	Now, let \(f : (P,\Omega_{P,\hor},\du_{P,\hor}) \to (Q,\Omega_{Q,\hor},\du_{Q,\hor})\) be an isomorphism of horizontally differentiable quantum principal \(\U\)-bundle over \(B\).
	Again, for notational simplicity, set \(R \coloneqq \cL(P)\), \(\hat{R} \coloneqq \hat{\cL}(P,\Omega_{P,\hor},\du_{P,\hor})\), \(S \coloneqq \cL(Q)\), and \(\hat{S} \coloneqq \hat{\cL}(Q,\Omega_{Q,\hor},\du_{Q,\hor})\).
	Observe that \(f \otimes \id_{\Omega_B}\) necessarily intertwines the isomorphisms \(\hat{\ell}_P\) and \(\hat{\ell}_Q\) of Proposition \ref{prop:assoclineconn}, so that for each \(k \in \bZ\), the arrow \(\cL(f)_k : R(k) \to S(k)\) in \(\Pic(B)\) satisfies \eqref{eq:intertwineconn} precisely since \(\du_{Q,\hor} \circ f = f \circ \du_{P,\hor}\); it follows that \(\hat{\cL}(f) : \hat{R} \to \hat{S}\) is well defined as a natural transformation.
	Once more, commutativity of the relevant commutative diagrams now follows from observing that the forgetful functor \(\dPic(B) \to \Pic(B)\) is faithful.
\end{proof}

\begin{example}[{Landi--Reina--Zampini~\cite{LRZ}, Khalkhali--Landi--Van Suijlekom~\cite{KLVS}}]\label{ex:hopf3}
	We continue from Example \ref{ex:hopf2}; in particular, we now equip \(\cO_q(\CP^1)\) with Podle\'{s}'s \(2\)-dimensional calculus \((\Omega_q(\CP^1),\du)\).
	The homomorphism \(\cE \coloneqq \cL(\cO_q(\SU(2)))\) lifts to \(\hat{\cE} : \bZ \to \dPic(\cO_q(\CP^2))\) by setting
	\(
		\hat{\cE} \coloneqq \hat{\cL}(\cO_q(\SU(2)),\Omega_{q,\hor}(\SU(2)), \du_{q,\hor})
	\).
	In fact, given \(k \in \bZ\), it follows that \(\hat{\cE}(k) = (\cE(k),\sigma_{k},\nabla_{k})\), where \(\nabla_k\) and \(\sigma_k\) respectively recover the canonical connection \cite[\S 4.1]{LRZ} and `twisted flip' \cite[\S\S 3.5-6]{KLVS} on \(\cE(k)\).
\end{example}

At last, we show that the functor \(\hat{\cL}\) is, indeed, an equivalence of categories.

\begin{proposition}\label{prop:horizontalconstruction}
	Let \(\hat{F} : \bZ \to \dPic(B)\) be a homomorphism, let \(F : \bZ \to \Pic(B)\) be its image under the forgetful functor \(\grp{Hom}(\bZ,\dPic(B)) \to \grp{Hom}(\bZ,\Pic(B))\), and let \(P \coloneqq \Sigma(F)\).
	The following defines a horizontal calculus \((\Omega_{P,\hor},\du_{P,\hor})\) on \(P\):
	\begin{enumerate}[leftmargin=*]
		\item define the graded \(\ast\)-algebra \(\Omega_{P,\hor}\) by equipping the complex vector space \(P \otimes_B \Omega_B\) with the multiplication and \(\ast\)-operation defined, respectively, by
		\begin{align}
			\forall \alpha,\beta \in \Omega_B, \, \forall p \in \bZ, \,\forall k \in \bZ, \, \forall q \in F(n), && (p \otimes \alpha)  (q \otimes \beta) &\coloneqq p  \sigma_{F(k)}(\alpha \otimes q)  \beta,\\
			\forall \alpha \in \Omega_B, \, \forall k \in \bZ, \, \forall p \in F(k), && (p \otimes \alpha)^\ast &\coloneqq \sigma_{F(-k)}(\alpha^\ast \otimes p^\ast),
		\end{align}
		and with the grading induced by the grading on \(\Omega_B\);
		\item define \(\du_{P,\hor} : \Omega_{P,\hor} \to \Omega_{P,\hor}\) by
		\begin{equation}
			\forall k \in \bZ, \, \forall p \in F(k), \, \forall \beta \in \Omega_B, \quad \du_{P,\hor}(p \otimes \beta) \coloneqq \nabla_{F(k)}(p) \otimes \beta + p \otimes \du \beta;
		\end{equation}
		\item extend the \(\Unit(1)\)-action \(\alpha\) on \(P\) pointwise to \(\hat{\alpha} : \Unit(1) \to \Aut(\Omega_{P,\hor})\) by
		\begin{equation}
			\forall z \in \Unit(1), \, \forall p \in P, \, \forall \beta \in \Omega_B, \quad \hat{\alpha}_z(p \otimes \beta) \coloneqq \alpha_z(p) \otimes \beta;
		\end{equation}
		\item let \(\hat{\iota}_P : (\Omega_B,\du) \to (\Omega_{P,\hor}^{\Unit(1)},\rest{\du_{P,\hor}}{\Omega_{P,\hor}^{\Unit(1)}})\) be induced by \((F^{(0)} \otimes \id) \circ \lambda_{\Omega_B}\).
	\end{enumerate}
\end{proposition}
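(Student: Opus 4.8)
The plan is to adapt the proof of Theorem \ref{thm:fell}, replacing $\Corr(B)$ with $\grp{DCorr}(B)$ (Theorem \ref{thm:bema}) and the bare Fell-bundle bookkeeping with the monoidal and bar structure available on $\dPic(B)$. By Theorem \ref{thm:barfunctor}, the homomorphism $\hat{F}$ is in particular a bar functor with respect to the canonical strong bar structure of Corollary \ref{cor:laplaza}, so alongside the connections and braidings $(\sigma_{F(k)},\nabla_{F(k)})$ on the objects $\hat{F}(k)$ we may use the coherences of $F^{(0)}$, $F^{(2)}$, and $F^{(-1)}$, where $F^{(0)} : F(0) \to B$ is an arrow of $\dPic(B)$ (hence intertwines $(\sigma_{F(0)},\nabla_{F(0)})$ with the trivial bimodule connection of Example \ref{ex:trivialconn}). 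The first step is to check that the proposed multiplication, $\ast$-operation, and $\du_{P,\hor}$ descend to the balanced tensor product $P \otimes_B \Omega_B = \bigoplus_{k} F(k) \otimes_B \Omega_B$: this is immediate from the fact that $\sigma_{F(k)}$, $F^{(2)}_{m,n}$, and $F^{(-1)}_k$ are $B$-bimodule maps and that $\nabla_{F(k)}$ satisfies the right Leibniz rule \eqref{eq:rightleibniz}.

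Next I would verify that $(\Omega_{P,\hor},\du_{P,\hor})$ is a $\ast$-quasi-\textsc{dga} over $P$. Associativity and unitality of the multiplication follow by a diagram chase from the pentagon and unit coherences for $F^{(2)}$ --- exactly as in the well-definedness portion of Theorem \ref{thm:fell} --- now combined with the braid identity \eqref{eq:braidassoc} for each $\sigma_{F(k)}$ and the tensor-product formula for $\sigma_{E \otimes_B F}$ of Theorem \ref{thm:bema}, which together govern how $\Omega_B$-forms are transported past elements of successive fibres. That $\ast$ is unital, antimultiplicative with the sign $(-1)^{jk}$ inherited from the graded $\ast$-algebra $\Omega_B$, and involutive is obtained by translating commutativity of \eqref{eq:bar1}, \eqref{eq:bar3}, and \eqref{eq:bar2} for $\hat{F}$, now absorbing the braidings using \eqref{eq:braidherm}, \eqref{eq:formip2}, and \eqref{eq:braidassoc}. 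That $\du_{P,\hor}$ raises form-degree by one and is $\ast$-preserving follows from \eqref{eq:hermconndiff} relating $\nabla_{F(k)}$ and $\nabla_{F(-k)}$ through the inner products, and the graded Leibniz rule for $\du_{P,\hor}$ across the twisted product follows from the tensor-product formula for $\nabla_{E \otimes_B F}$ of Theorem \ref{thm:bema}, the left Leibniz rule \eqref{eq:leftleibniz}, and the fact \eqref{eq:intertwine} that $F^{(2)}_{m,n}$ intertwines $\nabla_{F(m) \otimes_B F(n)}$ with $\nabla_{F(m+n)}$. Note that $\du_{P,\hor}^2$ involves the curvatures $\nabla_{F(k)}^2$ and so does not vanish in general: this is as it should be, since a horizontal calculus is only required to be a $\ast$-quasi-\textsc{dga}.

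The remaining three clauses are then dispatched quickly. The map $\hat{\alpha}$ is a $\U$-action of finite type by $\ast$-quasi-\textsc{dga} automorphisms commuting with $\du_{P,\hor}$, since the whole construction is $\bZ$-graded by $P = \bigoplus_k F(k)$, on which $\hat{\alpha}_z$ acts by $z^k$, and all structure maps are $B$-linear hence $\U$-invariant; its $k$-th spectral subspace is precisely $F(k) \otimes_B \Omega_B$, so the action is of finite type. Restricting to $k = 0$ and using that $F^{(0)}$ intertwines the structures identifies $\Omega_{P,\hor}^{\U}$ with $(B \otimes_B \Omega_B, \nabla_{F(0)}, \sigma_{F(0)}) \cong (\Omega_B, \du, \id)$, so $\hat{\iota}_P$, induced by $(F^{(0)} \otimes \id) \circ \lambda_{\Omega_B}$, is an isomorphism of $\ast$-quasi-\textsc{dga} onto $\Omega_{P,\hor}^{\U}$ extending $\iota_P = (F^{(0)})^{-1}$. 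Finally, $\Omega_{P,\hor} = P \cdot \Omega_{P,\hor}^{\U} \cdot P$ because, for $p \otimes \beta \in F(k) \otimes_B \Omega_B$, the unit coherence for $F^{(2)}$ together with $\sigma_{F(0)}(1 \otimes 1) = 1 \otimes 1$ gives $p \otimes \beta = p \cdot \hat{\iota}_P(\beta)$, so that each spectral subspace lies in $P \cdot \Omega_{P,\hor}^{\U}$.

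The main obstacle is the second step: organizing the interaction of the braidings $\sigma_{F(k)}$, the bar-functor data $F^{(-1)}$, and the $\bZ/2$-grading of $\Omega_B$ into a coherent sequence of diagram chases that establishes the graded-$\ast$-algebra axioms (with correct signs) and the graded Leibniz rule for $\du_{P,\hor}$ across the twisted product. There is no single clever shortcut here; it amounts to the differential-geometric refinement of the Fell-bundle computations of Theorem \ref{thm:fell}, with the identities of Theorems \ref{thm:bema} and \ref{thm:barfunctor} supplying exactly the coherences needed. Everything else is routine bookkeeping.
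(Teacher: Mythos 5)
Your proposal is correct and follows essentially the same route as the paper: adapt the proof of Theorem \ref{thm:fell} \emph{mutatis mutandis}, using that \(\hat{F}\) is canonically a bar functor (Theorem \ref{thm:barfunctor}) so that the coherences of \(F^{(0)}\), \(F^{(2)}\), \(F^{(-1)}\) transfer to the braidings \((\sigma_{F(k)})_k\) and yield the graded \(\ast\)-algebra structure, with the Leibniz rule across the twisted product coming from \(F^{(2)}_{m,n}\) satisfying \eqref{eq:intertwineconn}. One small point: the \(\ast\)-preservation of \(\du_{P,\hor}\) is most directly attributed to \(F^{(-1)}_k\) satisfying \eqref{eq:intertwineconn} (i.e.\ being an arrow of \(\dPic(B)\) intertwining \(\nabla_{\conj{F(k)}}\) with \(\nabla_{F(-k)}\)), rather than to \eqref{eq:hermconndiff} alone, but this is exactly the \(F^{(-1)}\)-coherence you already invoke.
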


\begin{proof}
	The construction of \(\Omega_{P,\hor}\), \(\hat{\alpha}\), and \(\hat{\iota}_P\) from \(\hat{F}\) follows, \emph{mutatis mutandis}, from the construction of \(P \coloneqq \Sigma(F)\), \(\alpha\), and \(\iota_P\) from \(F\) in the proof of Theorem \ref{thm:fell}.
	Indeed, recall that \(\hat{F}\) canonically defines a bar functor by Theorem \ref{thm:barfunctor}.
	Hence, each definitional commutative diagram satisfied by the bar functor \(\hat{F}\) yields a corresponding commutative diagram satisfied by the family of Hermitian generalised braidings \((\sigma_{F(k)})_{k \in \bZ}\), which, in turn, yields the corresponding properties of \(\Omega_{P,\hor}\).
	
	We now turn to \(\du_{P,\hor}\), which, by construction, is \(\Unit(1)\)-equivariant, is \(\bC\)-linear, and satisfies \(\du_{P,\hor} \circ \hat{\iota}_P = \hat{\iota}_P \circ \du\).
	Given \(m,n \in \bZ\), the fact that \(F^{(2)}_{m,n}\) satisfies \eqref{eq:intertwineconn} implies that \(\du_{P,\hor}\) satisfies the Leibniz rule on \((\Omega_{P,\hor})_m \cdot (\Omega_{P,\hor})_n = (\Omega_{P,\hor})_{m+n}\).
	Finally, given \(k \in \bZ\), the fact that \(F^{(-1)}_k\) satisfies \eqref{eq:intertwineconn} implies that \(\du_{P,\hor}\) is \(\ast\)-preserving on \(\ast\mleft((\Omega_{P,\hor})_k\mright) = (\Omega_{P,\hor})_{-k}\).
\end{proof}

\begin{theorem}\label{thm:horizontal}
	The functor \(\Sigma : \grp{Hom}(\bZ,\Pic(B)) \to \Circ(B)\) of Theorem \ref{thm:fell} lifts to the weak inverse \(\hat{\Sigma} :  \grp{Hom}(\bZ,\dPic(B)) \to \grp{DCirc}_{\hor}(B)\) of \(\hat{\cL}\) defined as follows.
	\begin{enumerate}[leftmargin=*]
		\item Given a homomorphism \(\hat{F}: \bZ \to \dPic(B)\) descending to \(F : \bZ \to \Pic(B)\), let \[\hat{\Sigma}(\hat{F}) \coloneqq (\Sigma(F),\Omega_{\Sigma(F),\hor},\du_{\Sigma(F),\hor}),\] where \((\Omega_{\Sigma(F),\hor},\du_{\Sigma(F),\hor})\) is the horizontal calculus of Proposition \ref{prop:horizontalconstruction}.
		\item Given a \(2\)-isomorphism \(\hat{\eta} : \hat{R} \Rightarrow \hat{S}\) descending to a \(2\)-isomorphism \(\eta : R \Rightarrow S\), let
		\(
			\hat{\Sigma}(\hat{\eta}) : \hat{\Sigma}(\hat{R}) \to \hat{\Sigma}(\hat{S})
		\)
		be the unique lift of \(\Sigma(\eta) : \Sigma(R) \to \Sigma(S)\).
	\end{enumerate}
	Hence, in particular, the category \(\grp{DCirc}_{\hor}(B)\) is essentially small.
\end{theorem}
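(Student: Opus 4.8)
The plan is to mirror the proof of Theorem \ref{thm:fell} line by line, one differential degree up, systematically exploiting faithfulness of the forgetful functors \(\dPic(B) \to \Pic(B)\) and \(\grp{DCirc}_{\hor}(B) \to \Circ(B)\): the former guarantees that lifts of homomorphisms and \(2\)-isomorphisms, when they exist, are unique, and the latter does the same for objects and arrows of \(\grp{DCirc}_{\hor}(B)\), so that every coherence diagram reduces to its counterpart already verified in Theorem \ref{thm:fell}. Well-definedness of \(\hat{\Sigma}\) on objects is immediate from Proposition \ref{prop:horizontalconstruction}, which shows that \((\Sigma(F);\Omega_{\Sigma(F),\hor},\du_{\Sigma(F),\hor})\) really is a horizontally differentiable quantum principal \(\U\)-bundle over \(B\). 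For arrows, given a \(2\)-isomorphism \(\hat{\eta} : \hat{R} \Rightarrow \hat{S}\) descending to \(\eta : R \Rightarrow S\), I would take \(\hat{\Sigma}(\hat{\eta})\) to be \(\bigoplus_{k \in \bZ}(\eta_k \otimes \id_{\Omega_B})\) on \(\Sigma(R) \otimes_B \Omega_B \to \Sigma(S) \otimes_B \Omega_B\); its restriction to form-degree zero is \(\Sigma(\eta)\), it is \(\U\)-equivariant by construction, it is multiplicative because \(\hat{\eta}\) is a monoidal natural transformation and each \(\eta_k\) is compatible with the generalised braidings by \eqref{eq:intertwine}, it is \(\ast\)-preserving because \(\hat{\eta}\) is a bar natural transformation by Theorem \ref{thm:barfunctor}, and it intertwines \(\du_{\Sigma(R),\hor}\) with \(\du_{\Sigma(S),\hor}\) because each \(\eta_k\) satisfies \eqref{eq:intertwineconn}; hence it is an arrow of \(\grp{DCirc}_{\hor}(B)\) lifting \(\Sigma(\eta)\), and the lift is unique by faithfulness. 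Functoriality of \(\hat{\Sigma}\) then follows from functoriality of \(\Sigma\) together with this uniqueness.

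It then remains to produce natural isomorphisms \(\hat{\mu} : \id_{\grp{DCirc}_{\hor}(B)} \Rightarrow \hat{\Sigma} \circ \hat{\cL}\) and \(\hat{\nu} : \id_{\grp{Hom}(\bZ,\dPic(B))} \Rightarrow \hat{\cL} \circ \hat{\Sigma}\) lifting the natural isomorphisms \(\mu\) and \(\nu\) of Theorem \ref{thm:fell}. For \(\hat{\mu}\), given a horizontally differentiable quantum principal \(\U\)-bundle \((P;\Omega_{P,\hor},\du_{P,\hor})\), I would take \(\hat{\mu}_{(P;\Omega_{P,\hor},\du_{P,\hor})}\) to be the isomorphism \(\hat{\ell}_P : \Omega_{P,\hor} \to P \otimes_B \Omega_B\) of Proposition \ref{prop:assoclineconn} (post-composed with \(\mu_P \otimes \id_{\Omega_B}\) to identify \(P\) with \(\Sigma(\cL(P))\)): since \(\Omega_{\Sigma(\cL(P)),\hor}\) is by construction \(\Sigma(\cL(P)) \otimes_B \Omega_B\) carrying exactly the product, involution, differential and \(\U\)-action built from \((\sigma_{P;k},\nabla_{P;k})_{k \in \bZ}\), and since those were defined in Proposition \ref{prop:assoclineconn} precisely so that \(\hat{\ell}_P\) intertwines them with the genuine multiplication, involution and \(\du_{P,\hor}\) on \(\Omega_{P,\hor}\), the map \(\hat{\ell}_P\) is an isomorphism of \(\U\)-\(\ast\)-quasi-\textsc{dga} compatible with the \(\hat{\iota}\)'s. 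For \(\hat{\nu}\), given \(\hat{F} : \bZ \to \dPic(B)\) descending to \(F\), I would take \((\hat{\nu}_{\hat{F}})_k\) to be the inclusion \(F(k) \hookrightarrow \Sigma(F)\), as for \(\nu_F\); this automatically satisfies \eqref{eq:intertwineconn} with respect to the connection on \(\hat{\cL}(\hat{\Sigma}(\hat{F}))(k)\), because that connection is \(\nabla_{\Sigma(F);k} = \hat{\ell}_{\Sigma(F)} \circ \du_{\Sigma(F),\hor}\) and, by the formula for \(\du_{\Sigma(F),\hor}\) in Proposition \ref{prop:horizontalconstruction} together with \eqref{eq:stronguniversal}, this restricts on the copy of \(F(k)\) inside \(\Sigma(F)\) to \(\nabla_{F(k)}\). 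In both cases every remaining condition — that \(\hat{\mu},\hat{\nu}\) are bar natural transformations, that their components lie in the relevant categories, and naturality — reduces through faithfulness to the corresponding statement for \(\mu,\nu\) already established in Theorem \ref{thm:fell}, with naturality also using uniqueness of lifts. Finally, essential smallness of \(\grp{DCirc}_{\hor}(B)\) follows since it is then equivalent to \(\grp{Hom}(\bZ,\dPic(B))\), which is essentially small because \(\dPic(B)\) is.

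I expect the only genuinely substantive step to be the verification that \(\hat{\ell}_P\) is an isomorphism of quasi-\textsc{dga}, not merely of \(B\)-bimodules: one must check that the generalised-braiding-twisted product and the connection-twisted differential on \(\Sigma(\cL(P)) \otimes_B \Omega_B\) specified in Proposition \ref{prop:horizontalconstruction} pull back under \(\hat{\ell}_P\) to the honest product and \(\du_{P,\hor}\) on \(\Omega_{P,\hor}\). This amounts to combining the characterisation \eqref{eq:stronguniversal} of \(\hat{\ell}_P^{-1}\), the defining identities \(\sigma_{P;k}(\beta \otimes p) = \hat{\ell}_P(\hat{\iota}_P(\beta) \cdot p)\) and \(\nabla_{P;k}(p) = \hat{\ell}_P(\du_{P,\hor}(p))\) of Proposition \ref{prop:assoclineconn}, the relation \(\Omega_{P,\hor} = P \cdot \hat{\iota}_P(\Omega_B) \cdot P\), associativity of multiplication in \(\Omega_{P,\hor}\), and the Leibniz rule for \(\du_{P,\hor}\) — the same bookkeeping that makes \(\Sigma\) and \(\cL\) mutually inverse at the topological level. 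Everything else is a matter of transporting the Fell-bundle-theoretic arguments of Theorem \ref{thm:fell} verbatim through the faithful forgetful functors.
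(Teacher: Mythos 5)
Your proposal is correct and follows essentially the same route as the paper: \(\hat{\Sigma}(\hat{\eta})\) is taken to be \(\Sigma(\eta)\otimes\id_{\Omega_B}\), with multiplicativity, unitality, \(\ast\)-preservation, and compatibility with \(\du\) deduced from monoidality, the bar property (Theorem \ref{thm:barfunctor}), and \eqref{eq:intertwineconn}/\eqref{eq:intertwine}, and the natural isomorphisms are the lifts of \(\mu\) and \(\nu\) from Theorem \ref{thm:fell} via \(\hat{\ell}_P\) and the inclusions. The paper compresses the last step into a \emph{mutatis mutandis} appeal to Theorem \ref{thm:fell}, which you have merely spelled out in more detail.
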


\begin{proof}
	We have seen that \(\Sigma\) is well defined on objects, so let us check that it is well defined on arrows.
	Let \(\hat{\eta} : \hat{R} \Rightarrow \hat{S}\) be an arrow in \(\grp{Hom}(\bZ,\dPic(B))\) descending to \(\eta : R \Rightarrow S\) in \(\grp{Hom}(\bZ,\Pic(B))\), so that \(\hat{\eta}\) and \(\eta\) define bar natural transformations by Theorem \ref{thm:barfunctor}.
	We can extend \(\Sigma(\eta) : \Sigma(R) \to \Sigma(S)\) to a \(\Unit(1)\)-equivariant isomorphism of graded \(B\)-bimodules \(\hat{\Sigma}(\hat{\eta}) : \Omega_{\Sigma(R),\hor} \to \Omega_{\Sigma(S),\hor}\) by setting \(\hat{\Sigma} \coloneqq \Sigma(\eta) \otimes \id_{\Omega_B}\).
	Coherence of \(\eta\) with respect to \(R^{(2)}\) and \(S^{(2)}\) implies that \(\hat{\Sigma}(\hat{\eta})\) is multiplicative, that \(\eta_1\) intertwines \(R^{(0)}\) and \(S^{(0)}\) implies that \(\hat{\Sigma}(\hat{\eta})\) is unital, and the fact that \(\hat{\eta}\) is a bar functor implies that \(\hat{\Sigma}(\eta)\) is \(\ast\)-preserving.
	Finally, given \(k \in \bZ\), the fact that \(\eta(k)\) satisfies \eqref{eq:intertwineconn} implies that \(\hat{\Sigma}(\eta)\) satisfies \(\hat{\Sigma}(\eta) \circ \du_{\Sigma(R),\hor} = \du_{\Sigma(S),\hor} \circ \hat{\Sigma}(\eta)\) on \((\Omega_{\Sigma(R),\hor})_k\).
	The rest now follows from Theorem \ref{thm:fell}, \emph{mutatis mutandis}.
\end{proof}

\begin{remark}
	Building on a proposal of~\DJ{}ur\dj{}evi\'{c} \cite[\S 4.4]{Dj96}, Salda\~{n}a proves analogues of Proposition \ref{prop:horizontalconstruction}~\cite[Thm 3.11]{Saldana20} and Theorem \ref{thm:horizontal}~\cite[Thm 3.12]{Saldana20} for quantum principal bundles with structure quantum group given by a Hopf \(\ast\)-algebra in terms of certain heavily structured functors.
	By contrast, in the special case of quantum principal \(\U\)-bundles, Theorem \ref{thm:barfunctor} allows us to use monoidal functors \emph{simpliciter}.
	Indeed, after suitable generalisation, the same will still be true in the more general case where the structure quantum group is a group ring.
\end{remark}

By combining Theorem \ref{thm:horizontal} with Corollary \ref{cor:abstractpimsner}, we obtain the differentiable analogue of Arici--Kaad--Landi's characterisation of topological quantum principal \(\U\)-bundles---and hence a differentiable analogue of Pimsner's construction---in the absence of any further constraints.

\begin{corollary}\label{cor:cphor}
	The functor \(\epsilon_1 \circ \hat{\cL} : \dCirc_\hor(B) \to \dPic(B)\) is an equivalence.
\end{corollary}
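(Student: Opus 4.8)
The plan is to recognise the functor $\epsilon_1 \circ \hat{\cL}$ as a composite of two equivalences that we have already established, so that no new construction is required.

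First, recall from Theorem-Definition \ref{thmdef:dpic} that $\dPic(B)$ is a coherent $2$-group; hence Corollary \ref{cor:abstractpimsner} applies with $\grp{G} = \dPic(B)$ and shows that the evaluation functor $\epsilon_1 : \grp{Hom}(\bZ,\dPic(B)) \to \dPic(B)$ is an equivalence of categories. Second, Theorem \ref{thm:horizontal} provides the functor $\hat{\Sigma} : \grp{Hom}(\bZ,\dPic(B)) \to \grp{DCirc}_{\hor}(B)$ together with natural isomorphisms exhibiting it as a weak inverse of $\hat{\cL} : \grp{DCirc}_{\hor}(B) \to \grp{Hom}(\bZ,\dPic(B))$, so that $\hat{\cL}$ is itself an equivalence of categories.

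The conclusion then follows: $\epsilon_1 \circ \hat{\cL}$ is a composite of equivalences of categories $\grp{DCirc}_{\hor}(B) \xrightarrow{\hat{\cL}} \grp{Hom}(\bZ,\dPic(B)) \xrightarrow{\epsilon_1} \dPic(B)$, and a composite of equivalences is an equivalence. Concretely, a quasi-inverse is given by $\hat{\Sigma} \circ F$, where $F : \dPic(B) \to \grp{Hom}(\bZ,\dPic(B))$ is the quasi-inverse of $\epsilon_1$ furnished by Corollary \ref{cor:abstractpimsner} (equivalently, one may pick $\hat{\Sigma}$ applied to the essentially unique homomorphism $\bZ \to \dPic(B)$ sending $1$ to a given object), with the coherence isomorphisms obtained by pasting those of the two factors.

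I do not anticipate any real obstacle here: the statement is a formal corollary of the two cited results, and the only point that must be checked is that the coherent $2$-group hypothesis needed to invoke Corollary \ref{cor:abstractpimsner} is met, which is exactly Theorem-Definition \ref{thmdef:dpic}.
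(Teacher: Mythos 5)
Your proposal is correct and matches the paper's own derivation: the corollary is obtained there precisely by combining Theorem \ref{thm:horizontal} (so that \(\hat{\cL}\) is an equivalence with weak inverse \(\hat{\Sigma}\)) with Corollary \ref{cor:abstractpimsner} applied to the coherent \(2\)-group \(\dPic(B)\) from Theorem-Definition \ref{thmdef:dpic}, and then composing the two equivalences. Nothing further is needed.
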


\begin{definition}
	The \emph{horizontal crossed product} of \((B;\Omega_B,\du)\) by a Hermitian line \(B\)-bimodule with connection \((E,\sigma_E,\nabla_E)\) is the essentially unique horizontally differentiable quantum principal \(\U\)-bundle
	\(
		(B; \Omega_B,\du) \rtimes^{\hor}_{(E,\sigma_E,\nabla_E)} \bZ
	\)
	over \((B;\Omega_B,\du)\), such that
	\(
		\hat{\cL}\mleft((B; \Omega_B,\du) \rtimes_{(E,\sigma_E,\nabla_E)}^{\hor} \bZ\mright)(1) \cong (E,\sigma_E,\nabla_E)
	\).
\end{definition}

One may justify this terminology as follows.
Let \((\omega,\phi) \in \tDiff(B)\), so that \(B \rtimes_\phi^{\mathrm{alg}} \bZ\) admits the horizontal calculus \((\Omega_B \rtimes_\phi^{\mathrm{alg}} \bZ,\du_{(\omega,\phi)})\), where the graded \(\ast\)-algebra \(\Omega_B \rtimes_\phi \bZ\) is obtained from \(\Omega_B\) by adjoining a unitary \(U \in (\Omega_B \rtimes_\phi \bZ)^0\) that satisfies \(U_\phi \beta U_\phi^{-1} = \phi(\beta)\) for all \(\beta \in \Omega_B\), the \(\ast\)-derivation \(\du_{(\omega,\phi)}\) is determined by requiring \(\rest{\du_{(\omega,\phi)}}{\Omega_B} \coloneqq \du_B\) and \(\du_{(\omega,\phi)}(U_\phi) \coloneqq \iu{}\omega  U_\phi\), and the \(\Unit(1)\)-action \(\hat{\alpha}\) on \(\Omega_B \rtimes_\phi^{\mathrm{alg}} \bZ\) is determined by \(\rest{\hat{\alpha}_z}{\Omega_B} = \id_{\Omega_B}\) and \(\alpha_z(U_\phi) \coloneqq zU_\phi\) for all \(z \in \U\).
Since \((b_\phi \mapsto U\phi^{-1}(b)) : \hat{\tau}(\omega,\phi) \to \hat{\cL}(B \rtimes^{\mathrm{alg}}_\phi \bZ;\Omega_B \rtimes_\phi^{\mathrm{alg}} \bZ,\du_{(\omega,\phi)})(1)\) is an isomorphism in \(\dPic(B)\), we may therefore take
\(
	(B; \Omega_B,\du) \rtimes^\hor_{\hat{\tau}(\omega,\phi)} \bZ \coloneqq (B \rtimes^{\mathrm{alg}}_\phi \bZ;\Omega_B \rtimes_\phi^{\mathrm{alg}} \bZ,\du_{(\omega,\phi)})
\).

We conclude this subsection by discussing curvature.
In general, the \emph{curvature} of a \(\ast\)-quasi-\textsc{dga} \((\Omega,\du)\) is the map \(\du^2\), which vanishes for a \(\ast\)-exterior algebra.
Thus, the curvature (in this sense) of a horizontally differentiable quantum principal \(\U\)-bundle \((P,\Omega_{P,\hor},\du_{P,\hor})\) over \(B\) is the map \(\du_{P,\hor}^2\), which is a \(\U\)-equivariant \(\ast\)-derivation that vanishes on \(\Omega_B\) and hence, in particular, is left and right \(\Omega_B\)-linear.
Passing this notion of curvature through the lens of Proposition \ref{prop:assoclineconnfunct} and Theorem \ref{thm:horizontal} yields the following more refined definition.

\begin{propositiondefinition}[{cf.\ \DJ{}ur\dj{}evi\'{c}~\cite[Lemma 2.2]{Dj98}}]\label{propdef:vertical}
	Let \((P,\Omega_{P,\hor},\du_{P,\hor})\) be a horizontally differentiable quantum principal \(\U\)-bundle over \(B\).
	\begin{enumerate}[leftmargin=*]
		\item Its \emph{Fr\"{o}hlich automorphism} is the unique \(\U\)-equivariant automorphism \(\hat{\Phi}_P\) of the \(\U\)-\(\ast\)-quasi-\textsc{dga} of finite type \((\Zent(\Omega_B),\rest{\du}{\Zent(\Omega_B)})\), such that
			\begin{equation}
				\forall k \in \bZ, \, \forall p \in P_k, \, \forall \beta \in \Zent(\Omega_B), \quad \hat{\iota}_P\mleft(\hat{\Phi}_P^k(\beta)\mright)  p = p  \hat{\iota}_P(\beta).
			\end{equation}
		\item Its \emph{curvature \(1\)-cocycle} is the unique group \(1\)-cocycle \(\bF_P : \bZ \to \cS(B)\) for the right \(\bZ\)-action generated by \(\hat{\Phi}_P^{-1}\), such that
			\begin{equation}
				\forall k \in \bZ, \, \forall p \in P_k, \quad \du_{P,\hor}^2(p) = p \cdot \hat{\iota}_P\mleft(\iu{}\bF_P(k)\mright).
			\end{equation}
	\end{enumerate}
	Hence, its \emph{curvature data} is the pair \((\Phi_P,\bF_P)\).
\end{propositiondefinition}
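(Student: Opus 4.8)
The plan is to assemble the statement from the associated-line-bundle functor $\hat{\cL}$ of Proposition~\ref{prop:assoclineconnfunct} together with the Fr\"{o}hlich homomorphism and curvature $1$-cocycle of $B$ from Proposition-Definition~\ref{propdef:froh} and Proposition-Definition~\ref{propdef:curve}, respectively. Given a horizontally differentiable quantum principal $\U$-bundle $(P,\Omega_{P,\hor},\du_{P,\hor})$ over $B$, first I would set
\[
	\hat{F} \coloneqq \hat{\cL}(P,\Omega_{P,\hor},\du_{P,\hor}) : \bZ \to \dPic(B),
\]
so that $\hat{F}(k) = (\cL(P)(k),\sigma_{P,k},\nabla_{P,k})$ with $B$-bimodule structure and Hermitian bimodule connection as in Proposition~\ref{prop:assoclineconn}, and then \emph{define}
\[
	\hat{\Phi}_P \coloneqq \hat{\Phi}_{[\hat{F}(1)]}, \qquad \bF_P \coloneqq \bF \circ \pi_0(\hat{F}).
\]
Since $\hat{\Phi} : \dpic(B) \to \Aut(\Zent(\Omega_B),\du)$ is a group homomorphism and $\bZ$ is generated by $1$, it follows that $\hat{\Phi}_{[\hat{F}(k)]} = \hat{\Phi}_P^k$ for every $k \in \bZ$; moreover, since the right $\dpic(B)$-action on $\cS(B)$ is $\omega \ract [E,\nabla_E] = \hat{\Phi}_{[E,\nabla_E]}^{-1}(\omega)$, its pullback along $\pi_0(\hat{F})$ is exactly the right $\bZ$-action generated by $\hat{\Phi}_P^{-1}$, whence $\bF_P$ is automatically a right $1$-cocycle of the required type valued in $\cS(B)$ by Proposition-Definition~\ref{propdef:curve}. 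That $\hat{\Phi}_P$ is an automorphism of the $\ast$-quasi-\textsc{dga} $(\Zent(\Omega_B),\rest{\du}{\Zent(\Omega_B)})$ is immediate from Proposition-Definition~\ref{propdef:froh}, and its $\U$-equivariance is automatic since $\Zent(\Omega_B)$ carries the trivial $\U$-action.

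The substance of the argument is translating the defining identities of $\hat{\Phi}_{[\hat{F}(k)]}$ and $\bF_{[\hat{F}(k)]}$ through the $\U$-equivariant $B$-bimodule isomorphism $\hat{\ell}_P : \Omega_{P,\hor} \to P \otimes_B \Omega_B$ of Proposition~\ref{prop:assoclineconn}. For the Fr\"{o}hlich automorphism, I would start from $\sigma_{P,k}(\hat{\Phi}_{[\hat{F}(k)]}(\beta) \otimes p) = p \otimes \beta$ for $p \in P_k$ and $\beta \in \Zent(\Omega_B)$, apply $\hat{\ell}_P^{-1}$, and use $\sigma_{P,k}(\beta \otimes p) = \hat{\ell}_P(\hat{\iota}_P(\beta) p)$ together with $\hat{\ell}_P^{-1}(p \otimes \beta) = p\, \hat{\iota}_P(\beta)$ to obtain $\hat{\iota}_P(\hat{\Phi}_P^k(\beta)) p = p\, \hat{\iota}_P(\beta)$. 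For the curvature, I would first check that $\hat{\ell}_P$ conjugates the restriction of $\du_{P,\hor}$ to $(\Omega_{P,\hor})_k$ to the extended connection $\nabla_{P,k}$ on $\cL(P)(k) \otimes_B \Omega_B$ --- this uses that $\hat{\ell}_P$ is right $\Omega_B$-linear, that every element of $(\Omega_{P,\hor})_k$ is a sum of elements $p\,\hat{\iota}_P(\beta)$ with $p \in P_k$, and the Leibniz rule for $\du_{P,\hor}$ --- so that $\hat{\ell}_P(\du_{P,\hor}^2(p)) = \nabla_{P,k}^2(p \otimes 1)$; by Proposition-Definition~\ref{propdef:curve} the latter equals $p \otimes \iu\bF_{[\hat{F}(k)]}$, and applying $\hat{\ell}_P^{-1}$ gives $\du_{P,\hor}^2(p) = p \cdot \hat{\iota}_P(\iu\bF_P(k))$, as required.

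It remains to establish uniqueness, for which I would exploit the finite families $(e_i)$ and $(\epsilon_j)$ in $P_1$ with $\sum_i e_i e_i^\ast = 1$ and $\sum_j \epsilon_j^\ast \epsilon_j = 1$ that accompany any topological quantum principal $\U$-bundle. If $\psi$ is another $\U$-equivariant automorphism of $(\Zent(\Omega_B),\rest{\du}{\Zent(\Omega_B)})$ with $\hat{\iota}_P(\psi^k(\beta)) p = p\,\hat{\iota}_P(\beta)$ for all $p \in P_k$, then the case $k = 1$ gives $\hat{\iota}_P(\psi(\beta) - \hat{\Phi}_P(\beta)) e_i = 0$ for each $i$; right-multiplying by $e_i^\ast$, summing, and invoking injectivity of $\hat{\iota}_P$ forces $\psi = \hat{\Phi}_P$. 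Likewise, if $\bF^\prime : \bZ \to \cS(B)$ is a right $1$-cocycle with $\du_{P,\hor}^2(p) = p\,\hat{\iota}_P(\iu\bF^\prime(k))$, then taking $p = \epsilon_j$, left-multiplying by $\epsilon_j^\ast$, summing, and using associativity in $\Omega_{P,\hor}$ with $\sum_j \epsilon_j^\ast \epsilon_j = 1$ forces $\bF^\prime(1) = \bF_P(1)$, hence $\bF^\prime = \bF_P$ since a $1$-cocycle on $\bZ$ is determined by its value at the generator. The main obstacle is not any single step but the careful degree- and weight-bookkeeping needed to transport the second-order identity $\du_{P,\hor}^2(p) = \nabla_{P,k}^2(p \otimes 1)$ along $\hat{\ell}_P$; everything else unwinds from definitions already made precise in the cited results.
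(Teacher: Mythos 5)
Your proposal is correct and takes essentially the same route as the paper, whose proof simply sets \(\hat{\Phi}_P\) and \(\bF_P\) equal to \(\hat{\Phi}\) and \(\bF\) evaluated on \(\pi_0\bigl(\hat{\cL}(P,\Omega_{P,\hor},\du_{P,\hor})\bigr)\) by citing Proposition \ref{prop:assoclineconnfunct} and Propositions-Definitions \ref{propdef:froh} and \ref{propdef:curve}. You merely make explicit the transport of the defining identities along \(\hat{\ell}_P\) and the frame-based uniqueness argument that the paper leaves implicit in the phrase ``we can and must take''.
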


\begin{proof}
	By Proposition \ref{prop:assoclineconnfunct} together with Proposition-Definition \ref{propdef:curve}, we can and must take
	\(
		\hat{\Phi}_P \coloneqq \Phi[\hat{\cL}(P,\Omega_{P,\hor},\mathrm{d}_{P,\hor})](1)\) and \(\bF_P \coloneqq \bF[\hat{\cL}(P,\Omega_{P,\hor},\du_{P,\hor})]
	\).
\end{proof}

Suppose that \((P,\Omega_{P,\hor},\du_{P,\hor})\) is a horizontally differentiable quantum principal \(\U\)-bundle over \(B\) with curvature data \((\Phi_P,\bF_P)\).
On the one hand, by Theorem \ref{thm:horizontal}, every homomorphism \(\hat{F} : \bZ \to \dPic(B)\) that is \(2\)-isomorphic to \(\hat{\cL}(P,\Omega_{P,\hor},\du_{P,\hor})\) satisfies
\(
	\hat{\Phi} \circ \pi_0(\hat{F})(1) = \hat{\Phi}_P\) and \(\bF \circ \pi_0(\hat{F}(1)) = \bF_P
\).
On the other hand, by Corollary \ref{cor:cphor}, every Hermitian line \(B\)-bimodule \((E,\sigma_E,\nabla_E)\) that is isomorphic to \(\hat{\cL}(P,\Omega_{P,\hor},\du_{P,\hor})(1)\) satisfies 
\(
	\hat{\Phi}_{[E,\nabla_E]} = \hat{\Phi}_P\) and \(\bF_{[E,\nabla_E]} = \bF_P(1)
\);
in other words, for every Hermitian line \(B\)-bimodule \((E,\sigma_E,\nabla_E)\), the resulting horizontal crossed product \((B,\Omega_B,\du) \rtimes_{(E,\sigma_E,\nabla_E)} \bZ\) has curvature data \((\Phi_{[E,\nabla_E]},\bF_{[E,\nabla_E]})\).

\begin{example}[{Landi--Reina--Zampini~\cite[Prop.\ 4.2]{LRZ}}]\label{ex:hopf4}
	Continuing from Example \ref{ex:hopf3}, let us determine the curvature data \((\Phi_{\cO_q(\SU(2))},\bF_{\cO_q(\SU(2))})\) of the horizontally differentiable quantum principal \(\U\)-bundle \((\cO_q(\SU(2)),\Omega_{q,\hor}(\SU(2)),\du_{q,\hor})\).
	Using the \textsc{pbw} basis for \(\cO_q(\SU(2))\), one may show that \(\Zent(\Omega_q(\CP^1)) = \bC[\iu{}e^+  e^-]\).
	Since the generators \(a,c \in \cO_q(\SU(2))_1\) satisfy \(aa^\ast + (qc)(qc)^\ast = a^\ast a + c^\ast c = 1\), one may therefore compute
	\begin{equation}\label{eq:hopfcurv}
		\hat{\Phi}_{\cO_q(\SU(2))}(\iu{}e^+  e^-) = q^2 \iu{} e^+  e^-, \quad \bF_{\cO_q(\SU(2))}(1) = q^{-2} \iu{} e^+ e^-.
	\end{equation}
\end{example}

\subsection{Reconstruction of total calculi}\label{sec:3.4}

At last, we leverage structural results of \DJ{}ur\dj{}evi\'{c}~\cite{DJ97} and Beggs--Majid~\cite{BeMa} to obtain the promised NC generalisation of the classical correspondence between Hermitian line bundles with unitary connection and principal \(\U\)-bundles with principal connection.
Once more, let \(B\) be a unital pre-\Cstar-algebra with \(\ast\)-exterior algebra \((\Omega_B,\du_B)\), which we view as a fixed NC base manifold.
In what follows, given \(q \in (0,\infty)\), we define the corresponding \emph{\(q\)-integers} by setting \([k]_q \coloneqq \tfrac{1-q^k}{1-q}\) for \(k \in \bZ\) when \(q \neq 1\) and \([k]_q \coloneqq k\) for \(k \in \bZ\) when \(q = 1\).

We begin by noting that \(\U\) will not always appear with its usual smooth structure as a Lie group.
Instead, we must allow for all possible \(1\)-dimensional bi-invariant \(\ast\)-exterior algebras on the unital pre-\Cstar-algebra \(\cO(\U)\) of trigonometric polynomials---the following conveniently generalises their construction.

\begin{definition}
	Let \(\kappa \in (0,\infty)\).
	We define \emph{\(\kappa\)-deformed Chevalley--Eilenberg extension} to be the faithful functor \(\CE_\kappa : \grp{QDGA}^{\Unit(1)} \to \grp{QDGA}^{\Unit(1)}\) constructed as follows.
	\begin{enumerate}[leftmargin=*]
		\item Given an object \((P;\Omega,\du)\), let
		\(
			\CE_\kappa(P;\Omega,\du) \coloneqq \left(P;\CE_\kappa(\Omega),\CE_\kappa(\du)\right)
		\),
		where \(\CE_\kappa(\Omega)\) is the graded \(\ast\)-algebra obtained from \(\Omega\) by adjoining a self-adjoint element \(e_\kappa\) of degree \(1\) satisfying the relations \(e_\kappa^2 = 0\) and
			\begin{equation}\label{eq:ce}
				\forall (n,k) \in \bN_0 \times \bZ, \, \forall \omega \in \Omega^n_k, \quad e_\kappa  \omega = (-1)^n \kappa^{-k} \omega  e_\kappa,
			\end{equation}
		where \(\CE_\kappa(\du)\) is defined by setting \(\CE_\kappa(\du)(e_\kappa) \coloneqq 0\) and 
		\begin{equation}
			\forall (n,k) \in \bN_0 \times \bZ, \, \forall \omega \in \Omega^n_k, \quad \CE_\kappa(\du)(\omega) \coloneqq (-1)^n\kappa^{-k}2\pi\iu{}[k]_{\kappa}\omega  e_\kappa + \du\omega. \label{eq:dce}
		\end{equation}
		and where the \(\Unit(1)\)-action on \(\CE_\kappa(\Omega)\) is the unique extension of the \(\Unit(1)\)-action on \(\Omega\) leaving \(e_\kappa\) invariant.
		\item Given an arrow \(f : (P,\Omega_P,\du_P) \to (Q,\Omega_Q,\du_Q)\), let
		\[
			\CE_\kappa(f) : \CE_\kappa(P,\Omega_P,\du_P) \to \CE_\kappa(Q,\Omega_Q,\du_Q)
		\]
		be the unique extension of \(f : \Omega_P \to \Omega_Q\) satisfying \(\CE_\kappa(f)(e_\kappa) = e_\kappa\).
	\end{enumerate}
\end{definition}

Given \(\kappa > 0\), the \(\ast\)-exterior algebra \((\Omega_\kappa(\U),\du_\kappa) \coloneqq (\CE_\kappa(\cO(\U)),\CE_\kappa(0))\) on \(\cO(\U)\) is the essentially unique \(\ast\)-exterior algebra on \(\cO(\U)\) of dimension \(1\) that satisfies the relation
\(
	\du_\kappa(z) \cdot z = \kappa z \cdot \du_\kappa(z),
\)
where \(\du_\kappa(z) = 2\pi\iu{}e_\kappa \cdot z\).
Note that \(\kappa = 1\) recovers the usual de Rham calculus on \(\U\) as a Lie group.
In general, differentiability of a \(\U\)-action with respect to the \(\ast\)-exterior algebra \((\Omega_\kappa(\U),\du_\kappa)\) may now be characterised as follows.

\begin{definition}[{cf.\ \DJ{}ur\dj{}evi\'{c}~\cite[\S 3]{DJ97}, Beggs--Brzezi\'{n}ski~\cite[\S 7]{BB}}]
	Let \(P\) be a \(\Unit(1)\)-pre-\Cstar-algebra of finite type and let \((\Omega,\du)\) be a \(\Unit(1)\)-\(\ast\)-exterior algebra over \(P\).
	We say that \((\Omega,\du)\) is \emph{\(\kappa\)-vertical} whenever there exists a (necessarily unique) lift of \(\id_P\) to a morphism of \(\U\)-\(\ast\)-quasi-\textsc{dga} \(\ver : (P,\Omega,\du) \to \CE_\kappa(P,\Omega,\du)\), the \emph{vertical coevaluation} on \((\Omega,\du)\).
	In this case, we define \emph{horizontal form} in \(\Omega\) to be an element of the \(\Unit(1)\)-invariant graded \(\ast\)-subalgebra
	\(
		\Omega_{\hor} \coloneqq \Set{\omega \in \Omega \given \ver(\omega) = \omega}
	\) of \(\Omega\),
	and a \emph{basic form} to be an element of the \(\Unit(1)\)-invariant and \(\du\)-invariant graded \(\ast\)-subalgebra
	\(
		\Omega_{\bas} \coloneqq (\Omega_{\hor})^{\Unit(1)}
	\) of \(\Omega\).
\end{definition}

At last, given \(\kappa > 0\), we can make precise sense of \textsc{nc} differentiable principal \(\U\)-bundles, where \(\U\) carries the bi-invariant \(\ast\)-exterior algebra \((\Omega_\kappa(\U),\du_\kappa))\).

\begin{definition}[{Brzezi\'{n}ski--Majid~\cite[\S 4]{BrM}, Hajac~\cite{Hajac}, \DJ{}ur\dj{}evi\'{c}~\cite[\S 3]{DJ97}, Beggs--Brzezi\'{n}ski~\cite[\S 7]{BB}, Beggs--Majid~\cite[\S 5.5]{BeMa}; cf.\ \'{C}a\'{c}i\'{c}~\cite{Cacic}}]
	Let \(\kappa \in (0,\infty)\).
	A \emph{\(\kappa\)-differentiable quantum principal \(\Unit(1)\)-bundle} over \(B\) is a triple \((P,\Omega_P,\du_P)\), where \(P\) is a topological quantum principal \(\U\)-bundle over \(B\) and \((\Omega_P,\du_P)\) is a \(\kappa\)-vertical \(\Unit(1)\)-\(\ast\)-exterior algebra over \(P\) together with an isomorphism of \(\ast\)-quasi-\textsc{dga} \(\hat{\iota}_P : (\Omega_B,\du_B) \to (\Omega_{P,\bas},\rest{\du_P}{\Omega_{P,\bas}})\) extending \(\iota_P\), such that
	\(
		\Omega_{P,\hor} = P \cdot \Omega_{P,\bas} \cdot P
	\).
\end{definition}

\begin{example}\label{ex:classicaltotal2} 
	Continuing from Example \ref{ex:classicaltotal1}, let
	\[
		\Omega_{\mathrm{alg}}(X) \coloneqq \bigoplus\nolimits_{k \in \bZ}^{\mathrm{alg}} \Set{\omega \in \Omega(X) \given \forall z \in \Unit(1), \, (\sigma_z)^\ast \omega = z^{-k} \omega},
	\]
	which we equip with the \(\Unit(1)\)-action \(z \mapsto (\sigma_{z^{-1}})^\ast\) and the usual exterior derivative.
	Then \((C^\infty_{\mathrm{alg}}(X),\Omega_{\mathrm{alg}}(X),\du)\) defines a \(1\)-differentiable quantum principal \(\Unit(1)\)-bundle over \((C^\infty(Y),\Omega(Y),\du)\) with respect to \(\pi^\ast : \Omega(Y) \to \Omega_{\mathrm{alg}}(X)^{\Unit(1)}\).
	Note that the vertical coevaluation reduces to the map
	\(
		\Omega_{\mathrm{alg}}(X) \to \Omega(\U)^{\U} \hotimes_\bC \Omega_{\mathrm{alg}}(X)
	\)
	that dualises contraction with the fundamental vector field \(\tfrac{\partial}{\partial t}\) of the \(\U\)-action on \(X\).
\end{example}

The following necessary and sufficient conditions are of both theoretical and practical importance.
Note that they involve the \emph{strong connection condition} first identified by Hajac \cite{Hajac}.

\begin{proposition}[{Beggs--Majid~\cite[Cor.\ 5.53 \& Lemma 5.60]{BeMa}}]
	Let \(\kappa \in (0,\infty)\), let \(P\) be a topological quantum principal \(\U\)-bundle over \(B\), let \((\Omega_P,\du_P)\) be a \(\kappa\)-vertical \(\Unit(1)\)-\(\ast\)-exterior algebra over \(P\), and let \(\hat{\iota}_P : (\Omega_B,\du_B) \to (\Omega_{P,\bas},\rest{\du_P}{\Omega_{P,\bas}})\) be an injective morphism of \(\ast\)-quasi-\textsc{dga} extending \(\iota_P\).
	Then \((P,\Omega_P,\du_P)\) defines a \(\kappa\)-differentiable quantum principal \(\Unit(1)\)-bundle over \(B\) with respect to \(\hat{\iota}_P\) if and only if
	\begin{equation}\label{eq:strong}
		\Omega_{P,\hor} = P \cdot \hat{\iota}_P(\Omega_B).
	\end{equation}
	Moreover, if \(\Omega^n_B\) is flat as a left \(B\)-module for all \(n \in \bN_0\), then \((P,\Omega_P,\du_P)\) defines a \(\kappa\)-differentiable quantum principal \(\Unit(1)\)-bundle over \(B\) with respect to \(\hat{\iota}_P\) if and only if
	\(
		\Omega^1_{P,\hor} = P \cdot \hat{\iota}_P(\Omega^1_B)
	\).
\end{proposition}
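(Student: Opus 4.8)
The plan is to derive both assertions from one reduction: under the running hypotheses, $(P,\Omega_P,\du_P)$ is a $\kappa$-differentiable quantum principal $\U$-bundle over $B$ with respect to $\hat{\iota}_P$ \emph{if and only if} the strong connection condition $\Omega_{P,\hor} = P\cdot\hat{\iota}_P(\Omega_B)$ holds. Assertion (1) is then exactly this reduction, and assertion (2) follows by propagating the degree-$1$ case of the strong connection condition to all form-degrees using flatness of the $\Omega^n_B$.

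For the reduction, recall that because $P$ is a topological quantum principal $\U$-bundle its $\bZ$-grading is strong, i.e.\ $P_m\cdot P_n = P_{m+n}$ for all $m,n$, and that the $\U$-action acts on the graded $\ast$-algebra $\Omega_{P,\hor}$ by grading automorphisms, so $P_j\cdot\hat{\iota}_P(\Omega_B)\subseteq(\Omega_{P,\hor})_j$ since $\hat{\iota}_P(\Omega_B)\subseteq\Omega_{P,\bas}$ is $\U$-invariant. \emph{($\Rightarrow$)} If $(P,\Omega_P,\du_P)$ is a $\kappa$-differentiable quantum principal $\U$-bundle, then by definition $\hat{\iota}_P(\Omega_B) = \Omega_{P,\bas}$ and $\Omega_{P,\hor} = P\,\Omega_{P,\bas}\,P$; given $\omega\in\Omega_{P,\bas}$ (weight $0$) and $p\in P_k$, writing $1 = \sum_a r_as_a$ with $r_a\in P_k$, $s_a\in P_{-k}$ gives $\omega p = \sum_a r_a(s_a\omega p)$ with $s_a\omega p\in P_{-k}\,\Omega_{P,\bas}\,P_k\subseteq(\Omega_{P,\hor})_0 = \Omega_{P,\bas}$, whence $\Omega_{P,\bas}\cdot P\subseteq P\cdot\Omega_{P,\bas}$ and therefore $\Omega_{P,\hor} = P\,\Omega_{P,\bas}\,P = P\,\Omega_{P,\bas} = P\,\hat{\iota}_P(\Omega_B)$. \emph{($\Leftarrow$)} If $\Omega_{P,\hor} = P\,\hat{\iota}_P(\Omega_B)$, then its weight-$0$ component is $P_0\,\hat{\iota}_P(\Omega_B) = \hat{\iota}_P(B)\,\hat{\iota}_P(\Omega_B) = \hat{\iota}_P(\Omega_B)$, so $\Omega_{P,\bas} = (\Omega_{P,\hor})^{\U} = \hat{\iota}_P(\Omega_B)$; together with injectivity of $\hat{\iota}_P$ this makes $\hat{\iota}_P\colon(\Omega_B,\du_B)\to(\Omega_{P,\bas},\rest{\du_P}{\Omega_{P,\bas}})$ an isomorphism, and then $\Omega_{P,\hor} = P\,\Omega_{P,\bas}\subseteq P\,\Omega_{P,\bas}\,P\subseteq\Omega_{P,\hor}$ forces $\Omega_{P,\hor} = P\,\Omega_{P,\bas}\,P$. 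This proves assertion (1).

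For assertion (2) the forward implication is immediate: restricting $\Omega_{P,\hor} = P\,\hat{\iota}_P(\Omega_B)$ to form-degree $1$ and using that $P$ is concentrated in degree $0$ yields $\Omega^1_{P,\hor} = P\,\hat{\iota}_P(\Omega^1_B)$. Conversely, by (1) it suffices to upgrade $\Omega^1_{P,\hor} = P\,\hat{\iota}_P(\Omega^1_B)$ to $\Omega^m_{P,\hor} = P\,\hat{\iota}_P(\Omega^m_B)$ for every $m$, which I would do by induction on $m$: the cases $m = 0$ (where $\Omega^0_{P,\hor} = P = P\,\hat{\iota}_P(B)$, since $\ver$ fixes $P$ and $\hat{\iota}_P(B) = P_0$) and $m = 1$ are settled. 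For the inductive step one combines the fact that $\Omega_B$ being a $\ast$-exterior algebra gives $\Omega^{m+1}_B = \Omega^1_B\cdot\Omega^m_B$, that $\Omega_P$ is generated in degrees $\le 1$, that the $\kappa$-vertical structure (the $\CE_\kappa$-calculus) splits off the single vertical line from $\Omega^1_P$ so that horizontal $(m+1)$-forms are built from $\Omega^1_{P,\hor}$ and $\Omega^m_{P,\hor}$, and finally the strong-grading trick of (1) to commute a $P$-factor past $\hat{\iota}_P(\Omega^1_B)\subseteq\Omega_{P,\bas}$, landing in $P\,\hat{\iota}_P(\Omega^1_B\cdot\Omega^m_B) = P\,\hat{\iota}_P(\Omega^{m+1}_B)$.

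The main obstacle is this inductive step, and it is the only place flatness enters: one needs $P\,\hat{\iota}_P(\Omega^n_B)\cong P\otimes_B\Omega^n_B$ and the multiplication maps $(P\otimes_B\Omega^1_B)\otimes_B\Omega^m_B\to P\otimes_B\Omega^{m+1}_B$ to remain surjective and compatible with the horizontal/vertical splitting after tensoring, so that no horizontal $(m+1)$-form escapes $\Omega^1_{P,\hor}\cdot\Omega^m_{P,\hor}$; flatness of $\Omega^m_B$ as a left $B$-module is exactly what keeps these tensor-product computations exact. I would carry this out by following the proof of~\cite[Cor.\ 5.53 \& Lemma 5.60]{BeMa}, with the $\kappa$-deformed Chevalley--Eilenberg calculus $\CE_\kappa$ playing the role of the canonical vertical calculus there.
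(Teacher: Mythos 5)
First, note that the paper itself offers no proof of this proposition: it is quoted directly from Beggs--Majid (Cor.\ 5.53 \& Lemma 5.60), so any comparison is with that reference rather than with an argument in the text. Your proof of the first assertion is correct and self-contained, and in fact goes beyond what the paper records: the forward direction, writing \(1=\sum_a r_a s_a\) with \(r_a\in P_k\), \(s_a\in P_{-k}\) (legitimate because the \(\bZ\)-grading of a topological quantum principal \(\U\)-bundle is strong) and observing that \(s_a\omega p\) is horizontal of weight \(0\), correctly yields \(\Omega_{P,\bas}\cdot P\subseteq P\cdot\Omega_{P,\bas}\); and the backward direction, extracting the weight-zero component of \(P\cdot\hat{\iota}_P(\Omega_B)\) to get \(\Omega_{P,\bas}=\hat{\iota}_P(\Omega_B)\) and then sandwiching \(P\cdot\Omega_{P,\bas}\subseteq P\cdot\Omega_{P,\bas}\cdot P\subseteq\Omega_{P,\hor}\), is likewise sound. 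The forward half of the second assertion (restricting to form-degree \(1\)) is also fine.

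The gap is in the converse half of the second assertion, which is the only place the flatness hypothesis does any work. Your induction needs the inclusion \(\Omega^{m+1}_{P,\hor}\subseteq\Omega^1_{P,\hor}\cdot\Omega^m_{P,\hor}\) (or an equivalent statement that horizontal forms of higher degree are generated over \(P\) by horizontal \(1\)-forms), and this is precisely the nontrivial claim: since \(\Omega_P\) is generated by \(P\) and \(\du_P(P)\), a horizontal \((m+1)\)-form is a priori a sum of products of \(1\)-forms each of which has a nonzero vertical component, and nothing in the strong-grading trick of part (1) rules out cancellations among vertical pieces that produce horizontal forms not of the asserted shape. Your stated mechanism for flatness --- that it keeps the multiplication maps \((P\otimes_B\Omega^1_B)\otimes_B\Omega^m_B\to P\otimes_B\Omega^{m+1}_B\) ``surjective'' --- is also not how flatness operates: tensoring with a flat module preserves exactness (in particular injections and kernels), not surjectivity of arbitrary maps. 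The standard argument (and, as far as I can tell, the one in Beggs--Majid) instead tensors the degree-one exact sequence expressing \(\Omega^1_P\) as the sum of \(\Omega^1_{P,\hor}=P\cdot\hat{\iota}_P(\Omega^1_B)\) and the vertical line by \(\Omega^m_B\) over \(B\), using flatness of \(\Omega^m_B\) to keep the resulting sequence exact and thereby to identify the kernel of the vertical projection in degree \(m+1\) with \(P\cdot\hat{\iota}_P(\Omega^{m+1}_B)\). Since you ultimately defer this step to the very reference the proposition cites, your proposal does not constitute an independent proof of the second assertion; to complete it you would need to carry out this tensoring argument (or an equivalent) explicitly in the \(\CE_\kappa\)-vertical setting.
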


We now recall the notions of principal Ehresmann connection and connection \(1\)-form appropriate to our NC setting; as we shall see, the familiar bijection between principal connections and connection \(1\)-forms persists.

\begin{definition}[{Brzezi\'{n}ski--Majid~\cite[\S 4.2 \& Appx.\ \textsc{a}]{BrM}, Hajac~\cite[\S 4]{Hajac}, \DJ{}ur\dj{}evi\'{c}~\cite[\S 4]{DJ97}, Beggs--Majid~\cite[\S 5.5]{BeMa}}]
	Let \(\kappa \in (0,\infty)\), and let \((P,\Omega_P,\du_P)\) be a \(\kappa\)-differentiable quantum principal \(\Unit(1)\)-bundle over \(B\) with respect to \((\Omega_B,\du)\).
	\begin{enumerate}[leftmargin=*]
		\item A \emph{connection} on \((P,\Omega_P,\du_P)\) is a surjective \(\Unit(1)\)-equivariant grading- and \(\ast\)-preserving algebra homomorphism \(\Pi : \Omega_P \to \Omega_{P,\hor}\), such that \(\Pi^2 = \Pi\) and
		\begin{equation}
			\forall \omega \in \Omega^1_P, \quad (\id-\Pi)(\omega)^2 = 0. \label{eq:connnil}
		\end{equation}
		\item A \emph{connection \(1\)-form} on \((P,\Omega_P,\du_P)\) is self-adjoint \(\vartheta \in (\Omega^1_P)^{\U}\) satisfying
		\begin{gather}
			\forall (n,k) \in \bN_0 \times \bZ, \, \forall \omega \in (\Omega^n_P)_k, \quad \vartheta  \omega = (-1)^n \kappa^{-k} \omega  \vartheta, \label{eq:conncent}\\
			\ver(\vartheta) = e_\kappa + \vartheta. \label{eq:connver}
		\end{gather}
	\end{enumerate}
\end{definition}

\begin{remark}
	In the terminology of Brzezi\'{n}ski--Majid~\cite[\S 4.2 \& Appx.\ \textsc{a}]{BrM}, Hajac~\cite[\S 4]{Hajac}, and Beggs--Majid~\cite[\S 5.5]{BeMa}, the restriction of a connection \(\Pi\) to \(1\)-forms is a \emph{\(\ast\)-preserving strong bimodule connection}.
	In the terminology of \DJ{}ur\dj{}evi\'{c}~\cite[\S 4]{DJ97}, the datum of a connection \(1\)-form is equivalent to the datum of a \emph{multiplicative regular connection}.
\end{remark}

\begin{proposition}[{cf.\ Brzezi\'{n}ski--Majid~\cite[Propp.\ 4.4 \& 5.10]{BrM}, \DJ{}ur\dj{}evi\'{c}~\cite[Proof of Thm 4.12]{DJ97}}]\label{prop:connection}
	Let \(\kappa \in (0,\infty)\); let \((P,\Omega_P,\du_P)\) be a \(\kappa\)-differentiable quantum principal \(\Unit(1)\)-bundle over \(B\).
	For every connection \(\Pi\) on \((P,\Omega_P,\du_P)\), there exists a unique connection \(1\)-form \(\vartheta\), such that
	\begin{equation}\label{eq:connection}
		\forall k \in \bZ, \, \forall p \in P_k, \quad (\id - \Pi) \circ \du_P (p) = 2\pi\iu{}[k]_{\kappa} \kappa^{-k} p  \vartheta. 
	\end{equation}
	Conversely, for every connection \(1\)-form \(\vartheta\) on \((P,\Omega_P,\du_P)\), there exists a unique connection \(\Pi\) that satisfies \eqref{eq:connection}.
\end{proposition}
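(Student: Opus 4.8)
The plan is to derive both directions of the correspondence from a single auxiliary gadget extracted from the vertical coevaluation. Since $\ver$ is the unique lift of $\id_P$ to a morphism $(P;\Omega_P,\du_P) \to \CE_\kappa(P;\Omega_P,\du_P)$ of $\U$-$\ast$-quasi-\textsc{dga}, and the augmentation $\CE_\kappa(P;\Omega_P,\du_P) \to (P;\Omega_P,\du_P)$ killing $e_\kappa$ is also such a morphism, their composite must be $\id_{\Omega_P}$ by the same uniqueness; hence, writing $\CE_\kappa(\Omega_P) = \Omega_P \oplus \Omega_P e_\kappa$ as graded vector spaces, there is a unique $\U$-equivariant degree-$(-1)$ map $\ver_1$ with $\ver(\omega) = \omega + \ver_1(\omega)\,e_\kappa$. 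First I would record its elementary properties: $\ver_1|_P = 0$, $\ver_1(\du_P p) = 2\pi\iu[k]_\kappa \kappa^{-k} p$ for $p \in P_k$, the twisted Leibniz rule $\ver_1(\alpha\beta) = \alpha\,\ver_1(\beta) + (-1)^{n}\kappa^{-k}\ver_1(\alpha)\,\beta$ for homogeneous $\beta \in (\Omega_P^n)_k$ (forced by $\ver$ being an algebra homomorphism together with the commutation relation \eqref{eq:ce}), and $\Omega_{P,\hor} = \ker\ver_1$. A short computation from the twisted Leibniz rule shows that $\ver_1^2$ is again a twisted derivation, and it vanishes on the algebra generators $P \cup \du_P(P)$, whence $\ver_1^2 = 0$; this identity is the technical heart of the argument.

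For the direction connection $\Rightarrow$ connection $1$-form, I would fix a frame $(e_i)_{i=1}^m$ in $P_1$ with $\sum_i e_i e_i^\ast = 1$ and set $\vartheta := \tfrac{1}{2\pi\iu}\sum_i (\id - \Pi)(\du_P e_i)\, e_i^\ast \in (\Omega^1_P)^{\U}$. Since a connection $\Pi$ is an idempotent grading- and $\ast$-preserving algebra endomorphism that surjects onto $\Omega_{P,\hor}$ and restricts to $\id$ on $P$, it is a $P$-bimodule projection with image $\Omega_{P,\hor} = \ker\ver_1$ and kernel $\ran(\id-\Pi)$; in particular $\Pi(\vartheta) = 0$. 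To prove \eqref{eq:connection} I would show that for $p \in P_k$ the difference of its two sides lies simultaneously in $\ran(\id-\Pi)$ (a $P$-subbimodule containing every $(\id-\Pi)(\du_P e_i)$) and in $\ker\ver_1$ (compute $\ver_1$ of both sides, using $\ver_1(\vartheta)=1$, which itself drops out of the twisted Leibniz rule and $\sum_i e_ie_i^\ast=1$), hence is zero. The same ``$\ker\Pi \cap \Omega_{P,\hor} = 0$'' argument, applied to the algebra generators $q \in P_l$ and $\du_P q$, yields the twisted-centrality relation \eqref{eq:conncent}; the identity $\ver_1(\vartheta)=1$ gives \eqref{eq:connver}; and self-adjointness of $\vartheta$ follows from \eqref{eq:connection} and the $q$-integer identity $[-k]_\kappa = -\kappa^{-k}[k]_\kappa$. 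Uniqueness is immediate, as \eqref{eq:connection} at $k = 1$ determines $\vartheta$ given any frame.

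For the converse, given a connection $1$-form $\vartheta$, I would first observe that $\omega = \vartheta$ in \eqref{eq:conncent} forces $\vartheta^2 = 0$; together with \eqref{eq:conncent} this says precisely that $-\vartheta$ obeys the relations defining $e_\kappa$ over $\Omega_P$, so $e_\kappa \mapsto -\vartheta$ extends to a $\U$-equivariant, grading- and $\ast$-preserving algebra homomorphism $\rho_{-\vartheta} \colon \CE_\kappa(\Omega_P) \to \Omega_P$. Setting $\Pi := \rho_{-\vartheta} \circ \ver$ then gives a $\U$-equivariant grading- and $\ast$-preserving algebra endomorphism of $\Omega_P$ with $\Pi|_P = \id_P$ and $(\id - \Pi)(\omega) = \ver_1(\omega)\,\vartheta$; in particular $\Pi(\du_P p) = \du_P p - 2\pi\iu[k]_\kappa\kappa^{-k}p\vartheta$, which is \eqref{eq:connection}, and $(\id-\Pi)(\omega)^2 = \ver_1(\omega)\vartheta\,\ver_1(\omega)\vartheta = 0$ for $\omega \in \Omega^1_P$ by $\vartheta^2 = 0$ and \eqref{eq:conncent}. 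That $\Pi$ lands in $\Omega_{P,\hor}$ is the assertion $\ver \circ \Pi = \Pi$; expanding $\ver(\Pi(\omega))$ with $\ver(\vartheta) = e_\kappa + \vartheta$ and $e_\kappa^2 = 0$ leaves only a term proportional to $\ver_1^2(\omega)$, which vanishes by the identity above. Since $\Pi$ then restricts to $\id$ on $\Omega_{P,\hor}$ (there $\ver(\omega)=\omega$, so $\Pi(\omega)=\rho_{-\vartheta}(\omega)=\omega$), it is an idempotent surjecting onto $\Omega_{P,\hor}$ satisfying \eqref{eq:connnil}, hence a connection, and it is the unique one satisfying \eqref{eq:connection} since an algebra homomorphism is pinned down by its values on $P \cup \du_P(P)$.

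I expect the main obstacle to be the bookkeeping around $\kappa$-verticality: confirming that $\ver$ is forced into the explicit form above, that $\ver_1$ inherits the twisted Leibniz rule, and, above all, the identity $\ver_1^2 = 0$. The remaining ingredients are the two ``$\ker\Pi \cap \Omega_{P,\hor} = 0$'' arguments together with the routine $q$-integer identities $[k+l]_\kappa = [k]_\kappa + \kappa^k[l]_\kappa$ and $[-k]_\kappa = -\kappa^{-k}[k]_\kappa$, for which I would also point the reader to Brzezi\'{n}ski--Majid~\cite[Propp.\ 4.4 \& 5.10]{BrM} and \DJ{}ur\dj{}evi\'{c}~\cite[proof of Thm 4.12]{DJ97}.
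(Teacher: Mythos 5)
Your argument is correct, and it shares the paper's skeleton --- everything turns on the decomposition of \(\ver\) relative to \(e_\kappa\) and on the universal property of \(\CE_\kappa(\Omega_P)\) --- but the execution differs in ways worth recording. The identity \(\ver_1^2=0\) that you isolate as the technical heart is precisely the content of the Beggs--Majid lemma the paper imports as \eqref{eq:verrange} (horizontality of the \(e_\kappa\)-coefficient of \(\ver-\id\), since \(\Omega_{P,\hor}=\ker\ver_1\)); your direct derivation from the twisted Leibniz rule and vanishing on the generators \(P\cup\du_P(P)\) makes the proposition self-contained, which is a genuine gain. In the forward direction the paper instead combines \eqref{eq:verrange} with \eqref{eq:strong} to get the short exact sequence \(0\to\Omega_{P,\hor}\to\Omega_P\xrightarrow{\ver-\id}\Omega_{P,\hor}\cdot e_\kappa\to0\), observes that \(\Pi\) splits it, and defines \(\vartheta\) abstractly as the preimage of \(e_\kappa\) under \(\rest{(\ver-\id)}{\ran(\id-\Pi)}\), inheriting \eqref{eq:conncent} and \eqref{eq:connver} from that \(\Omega_{P,\hor}\)-bimodule isomorphism; your explicit frame formula \(\vartheta=\tfrac{1}{2\pi\iu}\sum_i(\id-\Pi)(\du_P e_i)e_i^\ast\) together with the ``\(\ker\Pi\cap\ker\ver_1=0\)'' argument buys concreteness at the cost of a few extra verifications --- the passage of \eqref{eq:conncent} from the generators to all of \(\Omega_P\) (routine, since the twisted-commutation relation is multiplicative in \(\omega\)) and the self-adjointness check, which as stated also uses \(\ast\)-preservation of \(\Pi\) and \(\du_P\), \eqref{eq:connection} at \(k=-1\), and the frame identity; these all check out. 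In the converse direction your \(\Pi=\rho_{-\vartheta}\circ\ver\) is literally the same operator as the paper's \(\id-\psi_\vartheta\circ(\ver-\id)\), since both send \(\omega\mapsto\omega-\ver_1(\omega)\vartheta\); your packaging as a composition of two algebra homomorphisms makes multiplicativity, \(\ast\)- and grading-preservation, and \(\U\)-equivariance immediate, whereas the paper deduces multiplicativity separately from \(\Omega_{P,\hor}\)-bilinearity and the decomposition \(\Omega_P=\Omega_{P,\hor}\oplus\Omega_{P,\hor}\cdot\vartheta\), while horizontality of \(\ran\Pi\) rests on \(\ver_1^2=0\) in both treatments. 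The uniqueness arguments agree up to the cosmetic choice of resolving \(\vartheta\) with a family satisfying \(\sum_i e_ie_i^\ast=1\) rather than \(\sum_j\epsilon_j^\ast\epsilon_j=1\).
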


\begin{proof}[Proof of Prop.\ \ref{prop:connection}]
	We begin with preliminary observations.
	By a lemma of Beggs--Majid~\cite[Lemma 5.59]{BeMa}, the vertical coevaluation of \((P,\Omega_P,\du_P)\) satisfies
	\begin{equation}\label{eq:verrange}
		\forall n \in \bN, \quad (\ver - \id)(\Omega^n_P) \subseteq \Omega^{n-1}_{P,\hor} \cdot e_\kappa.
	\end{equation}
	Together with \eqref{eq:strong}, this yields a short exact sequence
	\begin{equation}\label{eq:ses}
		0 \to \Omega_{P,\hor} \to \Omega_P \xrightarrow{\ver - \id} \Omega_{P,\hor} \cdot e_\kappa \to 0
	\end{equation}
	of \(\ast\)-closed \(\Unit(1)\)-invariant \(\Omega_{P,\hor}\)-sub-bimodules of \(\Omega_P\) and \(\Unit(1)\)-equivariant left and right \(\Omega_{P,\hor}\)-linear maps preserving both the ambient \(\ast\)-operation and \(\bN_0\)-grading.
	
	First, suppose that \(\Pi\) is a connection on \(P,\Omega_P,\du_P)\).
	Then \(\Pi\) is a left splitting of \eqref{eq:ses}, so that
	\(
		\rest{(\ver-\id)}{\ran(\id-\Pi)} : \ran(\id-\Pi) \to \Omega_{P,\hor} \cdot e_\kappa
	\)
	is a \(\Unit(1)\)-equivariant isomorphism of \(\Omega_{P,\hor}\)-bimodules preserving both the ambient \(\ast\)-operation and the ambient \(\bN_0\)-grading.
	Hence, let
	\(
		\vartheta \coloneqq \left(\rest{(\ver - \id)}{\ran(\id-\Pi)}\right)^{-1}(e_\kappa)
	\),
	which is thus a \(\Unit(1)\)-invariant self-adjoint element of \(\Omega^1_P\) satisfying \eqref{eq:connver} by construction and \eqref{eq:connection} by \eqref{eq:dce} applied to \(\du_P(P)\).
	It remains to show that \(\vartheta\) satisfies \eqref{eq:conncent}.
	Since \(\vartheta^2 = 0\) by \eqref{eq:connnil}, it suffices to show that \eqref{eq:conncent} holds for horizontal \(\omega\), but this now follows from the fact that \(\rest{(\ver-\id)}{\ran(\id-\Pi)}\) is an isomorphism of \(\Omega_{P,\hor}\)-bimodules.
	Finally, let us show that \(\vartheta\) is uniquely determined by \(\Pi\).
	Let \((\epsilon_j)_{j=1}^n\) be a finite family in \(P_1\) satisfying \(\sum_{j=1}^n \epsilon_j^\ast \epsilon_j = 1\).
	Then
	\[
		\vartheta = \sum\nolimits_{j=1}^n \epsilon_j^\ast \epsilon_j  \vartheta = \frac{\kappa}{2\pi\iu{}}\sum\nolimits_{j=1}^n \epsilon_j^\ast (2\pi\iu{}[1]_\kappa \kappa^{-1} \epsilon_j  \vartheta) = (\id-\Pi)\mleft(\kappa\sum\nolimits_{j=1}^n \epsilon_j^\ast \du_P(\epsilon_j)\mright).
	\]
	
	Now, suppose that \(\vartheta\) is a connection \(1\)-form on \((P,\Omega_P,\du_P)\).
	On the one hand, by construction of \(\CE_\kappa(\Omega_P)\), the element \(e_\kappa\) freely generates the left \(\Omega_{P,\hor}\)-submodule \(\Omega_P \cdot e_\kappa \subseteq \CE_\kappa(\Omega_P)\).
	On the other hand, by \eqref{eq:conncent} and \eqref{eq:connver}, the element \(\vartheta\) satisfies the same relations in \(\Omega_P\) that \(e_\kappa\) satisfies in \(\CE_\kappa(\Omega_P)\).
	Hence, \(\id_{\Omega_P}\) extends to a surjective \(\Unit(1)\)-equivariant algebra homomorphism \(\psi_\vartheta : \CE_\kappa(\Omega_P) \to \Omega_P\) intertwining \(\ast\)-operations and \(\bN_0\)-gradings by setting \(\psi_\vartheta(e_\kappa) \coloneqq \vartheta\).
	We show that \(\Pi \coloneqq \id_{\Omega_P} - \psi_\vartheta \circ (\ver-\id_{\Omega_P})\) is a connection satisfying \eqref{eq:connection} with respect to \(\vartheta\).
	
	First, by construction, the map \(\Pi\) is \(\Unit(1)\)-equivariant and unital, is left and right \(\Omega_{P,\hor}\)-linear, and is \(\ast\)- and grading-preserving; moreover, \(\rest{\Pi}{\Omega_{P,\hor}} = \id_{\Omega_{P,\hor}}\) by definition of \(\Omega_{P,\hor}\).
	Next,
	\(
		(\ver-\id) \circ \Pi = (\ver - \id) - (\ver - \id) \circ \psi_\vartheta \circ (\ver - \id) = 0
	\) by \eqref{eq:verrange} together with \eqref{eq:connver},
	so that \(\ran\Pi \subset \Omega_{P,\hor}\); from this, it follows that \(\Pi^2 = \Pi\), and hence, in particular, that \(\ran(\id-\Pi) = \Omega_{P,hor} \cdot \vartheta\), so that \eqref{eq:connnil} follows since \(\vartheta^2=0\).
	Multiplicativity now follows from left and right \(\Omega_{P,\hor}\)-linearity of \(\Pi\) together with the decomposition \(\Omega_P = \Omega_{P,\hor} \oplus \Omega_{P,\hor} \cdot \vartheta\) of \(\Omega_{P,\hor}\)-bimodules.
	Finally, that \(\Pi\) is uniquely determined by \(\vartheta\) follows from multiplicativity of \(\Pi\) and the fact that \(P\) and \(\du_P(P)\) generate \(\Omega_P\).
\end{proof}

Hence, just as in the classical case, one may now use the connection \(1\)-form to define the curvature \(2\)-form of a principal connection.

\begin{definition}
	Let \((P,\Omega_P,\du_P)\) be a \(\kappa\)-differentiable quantum principal \(\Unit(1)\)-bundle over \(B\).
	Let \(\Pi\) be a connection on \((P,\Omega_P,\iota_P)\) with connection \(1\)-form \(\vartheta\).
	The \emph{curvature} of \(\Pi\) is the closed self-adjoint \(2\)-form \(\cF_\Pi \coloneqq -\hat{\iota}^{-1}_P(\du_P(\vartheta)) \in \Zent(\Omega_B)^2\).
\end{definition}

\begin{example}\label{ex:classicaltotal3}
	We continue from Example \ref{ex:classicaltotal2}.
	Let \(H^\ast X \to X\) be the horizontal cotangent bundle of \(X\), whose fibre at \(x \in X\) is the annihilator of \(\tfrac{\partial}{\partial t}\) at \(x\), so that 
	\[
		\Omega_{\mathrm{alg}}(X)_{\hor} = \bigoplus_{k \in \bZ} \Set*{\omega \in \Gamma\left(\bigwedge H^\ast X \otimes \bC\right) \given \forall z \in \Unit(1), \, (\sigma_z)^\ast \omega = z^{-k}\omega}.
	\]
	Hence, let \(\Pi\) be a principal connection on \(\pi : X \to Y\), which we view as a \(\Unit(1)\)-equivariant real vector bundle endomorphism \(\Pi : T^\ast X \to T^\ast X\) satisfying \(\Pi^2 = \Pi\) and \(\ran \Pi = H^\ast X\).
	Then \(\Pi\) induces a connection on \((C^\infty_{\mathrm{alg}}(X),\Omega_{\mathrm{alg}}(X),\du)\), whose connection \(1\)-form and curvature \(2\)-form respectively recover the usual connection \(1\)-form and curvature \(2\)-form of \(\Pi\).
\end{example}

We now leverage structural results of \DJ{}ur\dj{e}vi\'{c}~\cite{DJ97} to obtain the promised correspondence between NC Hermitian line bundles with connection and NC principal \(\U\)-bundles with principal connection.

Let \(\kappa \in (0,\infty)\).
Define the concrete category \(\grp{Gauge}_\kappa(B)\) of \emph{\(\kappa\)-differentiable quantum principal \(\U\)-bundle with connection over \(B\)} as follows:
\begin{enumerate}[leftmargin=*]
	\item an object is a triple \((P,\Omega_P,\du_P;\Pi)\) consisting of a \(\kappa\)-differentiable quantum principal \(\Unit(1)\)-bundle \((P,\Omega_P,\du_P)\) over \(B\) and a connection \(\Pi_P\) on \((P,\Omega_P,\du_P)\);
	\item an arrow \(f : (P,\Omega_P,\du_P;\Pi_P) \to (Q,\Omega_Q,\du_Q;\Pi_Q)\) is an isomorphism of \(\U\)-\(\ast\)-quasi-\textsc{dga} \(f: (P,\Omega_P,\du_P) \to (Q,\Omega_Q,\du_Q)\) that satisfies both \(f \circ \hat{\iota}_P = \hat{\iota}_Q\) and \(f \circ \Pi_P = \Pi_Q \circ f\).
\end{enumerate}
Hence, we may define a functor \(\Hor_\kappa : \grp{Gauge}_\kappa(B) \to \dCirc_\hor(B)\) as follows:
\begin{enumerate}[leftmargin=*]
	\item given an object \((P,\Omega,\du;\Pi)\), let
	\(
		\Hor_\kappa(P,\Omega,\du;\Pi) \coloneqq \left(P,\Omega_{\hor},\rest{\Pi \circ \du}{\Omega_{\hor}}\right)
	\);
	\item given an arrow \(f : (P,\Omega_P,\du_P;\Pi_P) \to (Q,\Omega_Q,\du_Q;\Pi_Q)\), let
	\[
		\Hor_\kappa(f) : \Hor_\kappa(P,\Omega_P,\du_P,\Pi_P) \to \Hor_\kappa(Q,\Omega_Q,\du_Q,\Pi_Q)
	\]
	be given by the map \(\rest{f}{\Omega_{P,\hor}} : \Omega_{P,\hor} \to \Omega_{Q,\hor}\).
\end{enumerate}
Thus, the functor \(\Hor_\kappa\) takes a \(\kappa\)-differentiable quantum principal \(\U\)-bundle with connection and extracts the horizontal calculus induced by the choice of connection.
A straightforward calculation shows that its essential range satisfies a simple algebraic constraint.

\begin{proposition}[{cf.\ \DJ{}ur\dj{}evi\'{c}~\cite[\S 6.6]{DJ97}}]
	Let \(\kappa \in (0,\infty)\), and let \((P,\Omega_P,\du_P;\Pi)\) be a \(\kappa\)-differentiable quantum principal \(\U\)-bundle with connection over \(B\).
	Let \(\cF_\Pi\) be the curvature \(2\)-form of \(\Pi\), and let \((\hat{\Phi}_{P,\Pi},\bF_{P,\Pi})\) be the curvature data of \(\Hor_\kappa(P,\Omega_P,\du_P;\Pi)\), so that
	\[
		\forall (n,k) \in \bN_0 \times \bZ, \, \forall \beta \in (\Omega^n_{P,\hor})_k, \quad (\Pi \circ \du_P)^2(\beta) = \beta  \cdot  \hat{\iota}_P\mleft(\iu{}\,\bF_{P,\Pi}(k)\mright).
	\]
	Then \(\bF_{P,\Pi} : \bZ \to \cS(B)\) is given by \(\bF_{P,\Pi} = \left(k \mapsto 2\pi[k]_\kappa \kappa^{-k} \cF_\Pi\right)\), so that
	\begin{equation}\label{eq:eigenvalue}
		\hat{\Phi}_{P,\Pi}\mleft(\bF_{P,\Pi}(1)\mright) = \kappa \bF_{P,\Pi}(1).
	\end{equation}
\end{proposition}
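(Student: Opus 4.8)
The plan is to compute the second curvature operator $(\Pi\circ\du_P)^2$ of the horizontal calculus $\Hor_\kappa(P,\Omega_P,\du_P;\Pi)$ directly on the spectral subspaces $P_k\subseteq P$ and then read off $\bF_{P,\Pi}$ from the defining property in Proposition-Definition \ref{propdef:vertical}. I would work with the connection $1$-form $\vartheta$ associated to $\Pi$ by Proposition \ref{prop:connection} and use three facts: $\Pi$ is a grading- and $\ast$-preserving algebra homomorphism with $\rest{\Pi}{\Omega_{P,\hor}}=\id$ and $\Pi(\vartheta)=0$ (from the proof of Proposition \ref{prop:connection}); $\du_P(\vartheta)=-\hat{\iota}_P(\cF_\Pi)$ is basic, hence horizontal (from the definition of $\cF_\Pi$); and \eqref{eq:connection} reads $(\id-\Pi)\circ\du_P(p)=2\pi\iu[k]_\kappa\kappa^{-k}\,p\,\vartheta$ for $p\in P_k$. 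Here $\du_{P,\hor}\coloneqq\rest{\Pi\circ\du_P}{\Omega_{P,\hor}}$.

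First I would fix $k\in\bZ$ and $p\in P_k$ and compute $\du_{P,\hor}^2(p)=\Pi\circ\du_P\bigl(\Pi\circ\du_P(p)\bigr)$. Using $\du_P^2=0$, the graded Leibniz rule, and \eqref{eq:connection},
\[
	\du_P\bigl(\Pi\circ\du_P(p)\bigr)=-\du_P\bigl((\id-\Pi)\circ\du_P(p)\bigr)=-2\pi\iu[k]_\kappa\kappa^{-k}\bigl(\du_P(p)\,\vartheta+p\,\du_P(\vartheta)\bigr),
\]
and applying the algebra homomorphism $\Pi$ annihilates the first term since $\Pi(\vartheta)=0$, while $\Pi(p)=p$ and $\Pi(\du_P(\vartheta))=\du_P(\vartheta)$; hence $\du_{P,\hor}^2(p)=-2\pi\iu[k]_\kappa\kappa^{-k}\,p\,\du_P(\vartheta)=p\cdot\hat{\iota}_P\bigl(\iu\,2\pi[k]_\kappa\kappa^{-k}\cF_\Pi\bigr)$. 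Comparing with the defining identity $\du_{P,\hor}^2(p)=p\cdot\hat{\iota}_P(\iu\,\bF_{P,\Pi}(k))$, it remains to see that a $2$-form $\gamma\in\Zent(\Omega_B)^2$ with $\hat{\iota}_P(\gamma)\,p=0$ for all $p\in P_k$ must vanish; but strongness of the $\bZ$-grading of $P$ gives $1=\sum_i p_i q_i$ with $p_i\in P_k$, $q_i\in P_{-k}$, so $\hat{\iota}_P(\gamma)=\sum_i\bigl(\hat{\iota}_P(\gamma)p_i\bigr)q_i=0$ and $\gamma=0$ by injectivity of $\hat{\iota}_P$. This gives $\bF_{P,\Pi}(k)=2\pi[k]_\kappa\kappa^{-k}\cF_\Pi$ for every $k\in\bZ$, the displayed identity of the proposition for general $\beta\in(\Omega^n_{P,\hor})_k$ following from $\Omega_{P,\bas}$-bilinearity of $(\Pi\circ\du_P)^2$ together with \eqref{eq:strong}, just as in the proof of Proposition-Definition \ref{propdef:curve}.

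For \eqref{eq:eigenvalue} I would substitute the values $\bF_{P,\Pi}(1)=2\pi\kappa^{-1}\cF_\Pi$ and $\bF_{P,\Pi}(2)=2\pi(\kappa^{-1}+\kappa^{-2})\cF_\Pi$ (using $[1]_\kappa=1$, $[2]_\kappa=1+\kappa$) into the $1$-cocycle identity $\bF_{P,\Pi}(2)=\hat{\Phi}_{P,\Pi}^{-1}(\bF_{P,\Pi}(1))+\bF_{P,\Pi}(1)$ for the right $\bZ$-action generated by $\hat{\Phi}_{P,\Pi}^{-1}$, which $\bF_{P,\Pi}$ satisfies by Proposition-Definition \ref{propdef:vertical}; this yields $\hat{\Phi}_{P,\Pi}^{-1}(\bF_{P,\Pi}(1))=\bF_{P,\Pi}(2)-\bF_{P,\Pi}(1)=2\pi\kappa^{-2}\cF_\Pi=\kappa^{-1}\bF_{P,\Pi}(1)$, i.e.\ \eqref{eq:eigenvalue}. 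I expect no genuine obstacle beyond careful bookkeeping with the twisted commutation relations \eqref{eq:conncent}, \eqref{eq:ce}, \eqref{eq:dce} and the Leibniz signs; the one step warranting a second look is the final cancellation argument, which is exactly where strongness of the $\bZ$-grading of $P$ enters. As an alternative to the cocycle shortcut, $\hat{\Phi}_{P,\Pi}(\cF_\Pi)=\kappa\cF_\Pi$ can be proved by hand: differentiating $p\,\vartheta=\kappa\,\vartheta\,p$ for $p\in P_1$ coming from \eqref{eq:conncent} and simplifying the resulting $\du_P(p)\,\vartheta$-term via \eqref{eq:conncent} once more gives $p\,\du_P(\vartheta)=\kappa\,\du_P(\vartheta)\,p$, whence the defining property of $\hat{\Phi}_{P,\Pi}$ forces the eigenvalue $\kappa$.
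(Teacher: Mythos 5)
Your argument is correct, and it is exactly the ``straightforward calculation'' that the paper leaves implicit (no proof of this proposition is given in the text): compute $(\Pi\circ\du_P)^2$ on $P_k$ using $\du_P^2=0$, the Leibniz rule, \eqref{eq:connection}, multiplicativity of $\Pi$ with $\Pi(\vartheta)=0$, and $\du_P(\vartheta)=-\hat{\iota}_P(\cF_\Pi)\in\Omega_{P,\bas}$, then identify the coefficient using uniqueness, and get \eqref{eq:eigenvalue} either from the $1$-cocycle identity or from differentiating $p\vartheta=\kappa\vartheta p$. One trivial bookkeeping point: your computation yields $p\cdot\hat{\iota}_P(\gamma)=0$ for all $p\in P_k$ (the factor $\hat{\iota}_P(\gamma)$ sits on the right), whereas your cancellation lemma is phrased as $\hat{\iota}_P(\gamma)\,p=0$; either version gives $\gamma=0$ by strongness of the grading (using $\sum_i q_ip_i=1$ with $q_i\in P_{-k}$, $p_i\in P_k$, or first converting via the Fr\"{o}hlich automorphism), so the slip is harmless.
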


\begin{definition}
	Let \((P,\Omega_{P,\hor},\du_{P,\hor})\) be a horizontally differentiable quantum principal \(\U\)-bundle over \(B\) with curvature data \((\hat{\Phi}_P,\bF_P)\).
	We say that \((P,\Omega_{P,\hor},\du_{P,\hor})\) is \emph{flat} whenever \(\bF_P = 0\).
	When \(\bF_P(1)\) is an eigenvector of \(\hat{\Phi}_P\), the \emph{vertical deformation parameter} \(\kappa_P \in \bR^\times\) of \((P,\Omega_{P,\hor},\du_{P,\hor})\) is defined to be the corresponding eigenvalue of \(\hat{\Phi}_P\).
\end{definition}

Remarkably, the algebraic constraint of \eqref{eq:eigenvalue} suffices to characterize the essential range of the functor \(\Hor_\kappa\), which therefore yields an equivalence of categories.

\begin{theorem}[{\DJ{}ur\dj{}evi\'{c}~\cite[Thm 4.12 \& \S 6.5]{DJ97}}]\label{thm:dj}
	Let \(\kappa \in (0,\infty)\), and let \(\dCirc_{\hor,\kappa}(B)\) denote the strictly full subcategory of \(\dCirc_\hor(B)\) whose objects are flat or have vertical deformation parameter \(\kappa\).
	Then the functor \(\Hor_\kappa\) restricts to an equivalence of categories \(\grp{Gauge}_\kappa(B) \to \dCirc_{\hor,\kappa}(B)\) with weak inverse \(\Tot_\kappa : \dCirc_{\hor,\kappa}(B) \to \grp{Gauge}_\kappa(B)\) defined as follows.
	\begin{enumerate}[leftmargin=*]
		\item Given an object \((P,\Omega_{P,\hor},\du_{P,\hor})\) with curvature \(1\)-cocycle \(\bF_\Pi\), let
		\[
			\Tot_\kappa(P,\Omega_{P,\hor},\du_{P,\hor}) \coloneqq (P,\CE_\kappa(\Omega_{P,\hor}),\CE_\kappa(\du_{P,\hor}) + \sci_\Pi,\Pi_\kappa),
		\]
		where \(\sci_\Pi : \CE_\kappa(\Omega_{P,\hor}) \to \CE_\kappa(\Omega_{P,\hor})\) is the complex-linear map defined by 
		\[
			\forall \omega_1,\omega_2 \in \Omega_{P,\hor}, \quad \sci_\Pi(\omega_1+\omega_2 e_\kappa) \coloneqq -\frac{\kappa}{2\pi}\omega_2  \bF_\Pi(1),
		\]
		and where \(\Pi_\kappa : \CE_\kappa(\Omega_{P,\hor}) \to \CE_\kappa(\Omega_{P,\hor})\) is the unique algebra homomorphism satisfying \(\rest{\Pi_\kappa}{\Omega_{P,\hor}} = \id_{\Omega_{P,\hor}}\) and \(\Pi_\kappa(e_\kappa) = 0\).
		\item Given an isomorphism \(f : (P,\Omega_{P,\hor},\du_{P,\hor}) \to (Q,\Omega_{Q,\hor},\du_{Q,\hor})\) of horizontally differentiable quantum principal \(\U\)-bundles over \(B\), let
		\[
			\Tot_\kappa(f) : \Tot_\kappa(P,\Omega_{P,\hor},\du_{P,\hor}) \to \Tot_\kappa(Q,\Omega_{Q,\hor},\du_{Q,\hor})
		\]
		be given by the map \(\CE_\kappa(f) : \CE_\kappa(\Omega_{P,\hor}) \to \CE_\kappa(\Omega_{Q,\hor})\).
	\end{enumerate}
	In particular, a canonical natural isomorphism \(\text{\textup{\DJ}} : \id_{\grp{Gauge}_\kappa(B)} \Rightarrow \Tot_\kappa \circ \Hor_\kappa\) is defined as follows: given an object \((P;\Omega_P,\du_P,\Pi)\) of \(\grp{Gauge}_\kappa(B)\), define the isomorphism \(\text{\textup{\DJ}}_{(P;\Omega_P,\du_P;\Pi)} : (P;\Omega_P,\du_P;\Pi) \to \Tot_\kappa \circ \Hor_\kappa(P;\Omega_P,\du_P;\Pi)\) by
	\begin{equation}\label{eq:djnat}
		\forall \omega \in \Omega_P, \quad \text{\textup{\DJ}}_{(P;\Omega_P,\du_P;\Pi)}(\omega) \coloneqq (\ver-\id) \circ (\id-\Pi)(\omega) + \Pi(\omega).
	\end{equation}
\end{theorem}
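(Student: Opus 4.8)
The plan is to route the entire argument through the algebra isomorphism implicit in the proof of Proposition~\ref{prop:connection}. Fix an object $(P,\Omega_P,\du_P;\Pi)$ of $\grp{Gauge}_\kappa(B)$ and let $\vartheta$ be the connection $1$-form of $\Pi$. The short exact sequence~\eqref{eq:ses} together with the left splitting $\Pi$ gives a decomposition $\Omega_P = \Omega_{P,\hor} \oplus \Omega_{P,\hor}\cdot\vartheta$ of $\Omega_{P,\hor}$-bimodules with $\vartheta^2 = 0$; since by~\eqref{eq:conncent} and~\eqref{eq:connver} the element $\vartheta$ satisfies in $\Omega_P$ exactly the relations that $e_\kappa$ satisfies in $\CE_\kappa(\Omega_{P,\hor})$, the algebra map sending $\Omega_{P,\hor}$ identically into $\Omega_P$ and $e_\kappa \mapsto \vartheta$ restricts to a $\Unit(1)$-equivariant, $\ast$- and grading-preserving algebra isomorphism $\psi_\vartheta : \CE_\kappa(\Omega_{P,\hor}) \xrightarrow{\ \sim\ } \Omega_P$. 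Comparing with~\eqref{eq:verrange} and~\eqref{eq:connver}, I would observe that its inverse is precisely the formula~\eqref{eq:djnat} for $\text{\textup{\DJ}}_{(P;\Omega_P,\du_P;\Pi)}$, once the target $\Omega_P$ is identified with the total space $\CE_\kappa(\Omega_{P,\hor})$ of $\Tot_\kappa\circ\Hor_\kappa(P;\Omega_P,\du_P;\Pi)$. With this in place, the theorem reduces to transporting the differential, the basic-form inclusion, and the connection across $\psi_\vartheta$.

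First I would check that $\Hor_\kappa$ indeed lands in $\dCirc_{\hor,\kappa}(B)$: because $\Pi$ is a $\ast$- and grading-preserving algebra homomorphism with $\rest{\Pi}{\Omega_{P,\hor}} = \id$, the map $\rest{\Pi\circ\du_P}{\Omega_{P,\hor}}$ is a $\Unit(1)$-equivariant $\ast$-derivation valued in $\Omega_{P,\hor}$ that restricts to $\rest{\du_P}{\Omega_{P,\bas}}$ on $\Omega_{P,\bas} = (\Omega_{P,\hor})^{\Unit(1)}$, so $(P,\Omega_{P,\hor},\rest{\Pi\circ\du_P}{\Omega_{P,\hor}})$ is a horizontal calculus over $\hat\iota_P$; that it is flat or has vertical deformation parameter $\kappa$ is exactly~\eqref{eq:eigenvalue}, established just above. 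Functoriality of $\Hor_\kappa$ is immediate.

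The main work, and the step I expect to be the chief obstacle, is showing that $\Tot_\kappa$ is well defined, i.e.\ that $(\CE_\kappa(\Omega_{P,\hor}),\CE_\kappa(\du_{P,\hor}) + \sci_\Pi)$ is a genuine $\kappa$-vertical $\ast$-exterior algebra with $\Pi_\kappa$ a connection (here I abbreviate $\du_{P,\hor} = \rest{\Pi\circ\du_P}{\Omega_{P,\hor}}$ and let $\bF_\Pi$ be its curvature $1$-cocycle of Proposition-Definition~\ref{propdef:vertical}). That $\CE_\kappa(\du_{P,\hor}) + \sci_\Pi$ is a degree-$1$ $\ast$-derivation and that $\CE_\kappa(\Omega_{P,\hor})$ is generated by $P$ and its image (using $[k]_\kappa \neq 0$ for $k \neq 0$) are bookkeeping checks from~\eqref{eq:dce}; the substantive point is $(\CE_\kappa(\du_{P,\hor}) + \sci_\Pi)^2 = 0$, with $\sci_\Pi$ correcting the non-vanishing $\CE_\kappa(\du_{P,\hor})^2$. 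Expanding, using $\sci_\Pi^2 = 0$ and $e_\kappa^2 = 0$, this reduces degreewise --- via~\eqref{eq:dce} and the definition of $\bF_\Pi$ through $\du_{P,\hor}^2$ --- to the single identity $\kappa^{1-k}[k]_\kappa\,\bF_\Pi(1) = \bF_\Pi(k)$ on $P_k$; and this holds precisely because $\bF_\Pi$ is a $1$-cocycle for the $\bZ$-action generated by $\hat\Phi_P^{-1}$ together with $\bF_\Pi(1) = 0$ or $\hat\Phi_P\bigl(\bF_\Pi(1)\bigr) = \kappa\,\bF_\Pi(1)$ --- exactly the defining condition of $\dCirc_{\hor,\kappa}(B)$. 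Granted $\du^2 = 0$, $\kappa$-verticality and the connection axioms for $\Pi_\kappa$ --- with $e_\kappa$ as connection $1$-form --- follow by adapting the structural analysis of \DJ{}ur\dj{}evi\'{c}~\cite[Thm 4.12 \& \S 6.5]{DJ97}, and $\hat\iota_P$ extends to the required basic-form isomorphism. The persistent subtlety throughout is keeping the $\CE_\kappa$-sign conventions consistent. Functoriality of $\Tot_\kappa$ is again routine.

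Finally, for the weak-inverse claim, I would verify that $\psi_\vartheta$ intertwines $\CE_\kappa(\du_{P,\hor}) + \sci_\Pi$ with $\du_P$: on $\Omega_{P,\hor}$ this unwinds the splitting $\du_P = \Pi\circ\du_P + (\id-\Pi)\circ\du_P$ together with~\eqref{eq:connection} rewritten in terms of $\vartheta$, while on $\Omega_{P,\hor}\cdot e_\kappa$ it uses the graded Leibniz rule, $\vartheta^2 = 0 = e_\kappa^2$, and $\du_P(\vartheta) = -\hat\iota_P(\cF_\Pi)$ with $\cF_\Pi = \tfrac{\kappa}{2\pi}\bF_\Pi(1)$ by the proposition above; moreover $\psi_\vartheta$ carries $\Pi_\kappa$ to $\Pi$ and the basic-form inclusion to $\hat\iota_P$. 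Hence $\text{\textup{\DJ}}_{(P;\Omega_P,\du_P;\Pi)} = \psi_\vartheta^{-1}$ is an isomorphism in $\grp{Gauge}_\kappa(B)$, and its naturality follows from the uniqueness clause of Proposition~\ref{prop:connection} together with functoriality of all the constructions involved. For the other composite, unravelling the definitions shows the horizontal forms of $\Tot_\kappa(P,\Omega_{P,\hor},\du_{P,\hor})$ are again $\Omega_{P,\hor}$, with induced horizontal differential $\rest{\Pi_\kappa\circ(\CE_\kappa(\du_{P,\hor}) + \sci_\Pi)}{\Omega_{P,\hor}} = \du_{P,\hor}$ (kill $e_\kappa$ and use $\rest{\sci_\Pi}{\Omega_{P,\hor}} = 0$), so $\Hor_\kappa\circ\Tot_\kappa = \id_{\dCirc_{\hor,\kappa}(B)}$ on the nose. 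Thus $\Hor_\kappa$ and $\Tot_\kappa$ are mutually weakly inverse, which is the assertion.
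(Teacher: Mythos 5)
The paper itself offers no proof of this theorem --- it is cited to \DJ{}ur\dj{}evi\'{c} --- so there is nothing in-text to compare against; your architecture (the algebra isomorphism \(\psi_\vartheta : \CE_\kappa(\Omega_{P,\hor}) \to \Omega_P\) extracted from the proof of Proposition~\ref{prop:connection}, with \eqref{eq:djnat} as its inverse, plus the eigenvalue condition \eqref{eq:eigenvalue} and the cocycle identity behind \eqref{eq:kappahom}) is exactly the natural route, and most of your reductions are correct. Deferring \(\kappa\)-verticality, the basic-form identification, and the connection axioms for \(\Pi_\kappa\) to \DJ{}ur\dj{}evi\'{c} is defensible, since the paper does the same.

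There are, however, two places where your central verification is not yet a proof. First, the graded Leibniz rule for \(\CE_\kappa(\du_{P,\hor}) + \sci_\Pi\) is not ``bookkeeping'': since \(\bF_\Pi(1) \in \Zent(\Omega_B)^2\) commutes with elements of \(P_k\) only up to \(\hat{\Phi}_P^{-k}\), the derivation property of the corrected differential already requires flatness or \(\hat{\Phi}_P(\bF_\Pi(1)) = \kappa\,\bF_\Pi(1)\); and it is only \emph{after} establishing Leibniz that \((\CE_\kappa(\du_{P,\hor}) + \sci_\Pi)^2 = 0\) may legitimately be checked on the generators \(P\), \(\du_{P,\hor}(P)\), \(e_\kappa\), which is where your ``single identity \(\kappa^{1-k}[k]_\kappa\bF_\Pi(1) = \bF_\Pi(k)\) on \(P_k\)'' is the whole content. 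Second, if you expand degreewise on all of \(\Omega_{P,\hor}\) with the formula for \(\sci_\Pi\) read literally, the computation does \emph{not} close: on \(\omega \in (\Omega^n_{P,\hor})_k\) one finds \((\CE_\kappa(\du_{P,\hor})+\sci_\Pi)^2(\omega) = \iu\,\omega\,\bF_\Pi(k) + (-1)^{n+1}\iu\,\kappa^{1-k}[k]_\kappa\,\omega\,\bF_\Pi(1)\), and likewise \(\du_P(\omega\vartheta) = \du_{P,\hor}(\omega)\vartheta + (-1)^{n+1}\tfrac{\kappa}{2\pi}\,\omega\,\bF_\Pi(1)\) (using \(\du_P\vartheta = -\cF_\Pi\) and \(\cF_\Pi = \tfrac{\kappa}{2\pi}\bF_\Pi(1)\)), so for odd \(n\) the stated \(\sci_\Pi\) neither squares to zero against \(\CE_\kappa(\du_{P,\hor})\) nor is intertwined with \(\du_P\) by \eqref{eq:djnat}. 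The fix is to read \(\sci_\Pi\) with the parity grading on \(\omega_2\) (equivalently, to write vertical forms as \(e_\kappa\cdot\omega_2\) with \(\bF_\Pi(1)\) multiplying on the left, using \eqref{eq:conncent}); you flag ``sign conventions'' as a subtlety but leave it unresolved, and resolving it is precisely what makes your degreewise reduction, and hence the identification of \(\Tot_\kappa\circ\Hor_\kappa\) with the identity via \(\text{\textup{\DJ}}\), go through in odd degrees. As written, a careful execution of your argument would stall at this point rather than merely require patience with signs.
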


By combining this theorem with Theorem \ref{thm:horizontal}, Proposition-Definition \ref{propdef:vertical}, and Corollary \ref{cor:abstractpimsner}, we obtain a precise NC generalisation of the classical correspondence between Hermitian line bundles with unitary connection and principal \(\U\)-bundles with principal connection.

\begin{definition} 
	Let \((E,\sigma_E,\nabla_E)\) be a Hermitian line \(B\)-bimodule with connection.
	We say that \((E,\sigma_E,\nabla_E)\) is \emph{flat} whenever \(\bF_{[E,\nabla_E]} = 0\).
	When \(\bF_{[E,\nabla_E]}\) is an eigenvector of the automorphism \(\hat{\Phi}_{[E,\nabla_E]}\), the \emph{vertical deformation parameter} \(\kappa_{[E,\nabla_E]} \in \bR^\times\) of \((E,\sigma_E,\nabla_E)\) is defined to be the corresponding eigenvalue of \(\hat{\Phi}_{[E,\nabla_E]}\).
\end{definition}

\begin{corollary}\label{cor:diffpimsner}
	Let \(\kappa \in (0,\infty)\), and let \(\dPic_\kappa(B)\) be the strictly full subcategory of \(\dPic(B)\) whose objects are full or have vertical deformation parameter \(\kappa\).	
	Then \(\dPic_\kappa(B)\) is the essential image of \(\dCirc_{\hor,\kappa}(B)\) under \(\epsilon_1 \circ \hat{\cL}\), so that the functor \(\epsilon_1 \circ \hat{\cL} \circ \Hor_\kappa : \dCirc_{\kappa,\tot}(B) \to \dPic_\kappa(B)\) is an equivalence of categories.
\end{corollary}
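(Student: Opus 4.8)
The plan is to assemble the statement from three ingredients already in hand: Corollary \ref{cor:cphor}, that \(\epsilon_1 \circ \hat{\cL} : \dCirc_\hor(B) \to \dPic(B)\) is an equivalence; Theorem \ref{thm:dj}, that \(\Hor_\kappa\) corestricts to an equivalence \(\grp{Gauge}_\kappa(B) \to \dCirc_{\hor,\kappa}(B)\); and the curvature-data computation following Proposition-Definition \ref{propdef:vertical}, which says that the Fr\"{o}hlich automorphism and curvature \(1\)-cocycle of a horizontally differentiable quantum principal \(\U\)-bundle over \(B\) are recovered from the Fr\"{o}hlich automorphism and curvature \(2\)-form of the associated Hermitian line \(B\)-bimodule with connection of winding number \(-1\). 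The sole real content is to identify the essential image of \(\dCirc_{\hor,\kappa}(B)\) under \(\epsilon_1 \circ \hat{\cL}\) with \(\dPic_\kappa(B)\); granting that, the functor in the statement is the composite \(\grp{Gauge}_\kappa(B) \xrightarrow{\Hor_\kappa} \dCirc_{\hor,\kappa}(B) \xrightarrow{\epsilon_1 \circ \hat{\cL}} \dPic_\kappa(B)\) of two equivalences, hence an equivalence.

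To carry out the identification, I would argue as follows. Let \((P,\Omega_{P,\hor},\du_{P,\hor})\) be an object of \(\dCirc_\hor(B)\) with curvature data \((\hat{\Phi}_P,\bF_P)\), and write \([E,\nabla_E] \coloneqq \epsilon_1 \circ \hat{\cL}(P,\Omega_{P,\hor},\du_{P,\hor})\). By the discussion following Proposition-Definition \ref{propdef:vertical} one has \(\hat{\Phi}_{[E,\nabla_E]} = \hat{\Phi}_P\) and \(\bF_{[E,\nabla_E]} = \bF_P(1)\); since \(\bF_P\) is a \(1\)-cocycle on \(\bZ\) it is determined by its value at \(1\), so \(\bF_P\) vanishes identically if and only if \(\bF_{[E,\nabla_E]} = 0\), and \(\bF_P(1)\) is an eigenvector of \(\hat{\Phi}_P\) with eigenvalue \(\kappa\) if and only if \(\bF_{[E,\nabla_E]}\) is an eigenvector of \(\hat{\Phi}_{[E,\nabla_E]}\) with eigenvalue \(\kappa\). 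Thus \((P,\Omega_{P,\hor},\du_{P,\hor})\) is flat or has vertical deformation parameter \(\kappa\) exactly when \([E,\nabla_E]\) is, i.e., \(\epsilon_1 \circ \hat{\cL}\) maps \(\dCirc_{\hor,\kappa}(B)\) into \(\dPic_\kappa(B)\). Conversely, given any object of \(\dPic_\kappa(B)\), essential surjectivity of \(\epsilon_1 \circ \hat{\cL}\) on \(\dPic(B)\) furnishes a preimage \((P,\Omega_{P,\hor},\du_{P,\hor}) \in \dCirc_\hor(B)\), and the same curvature identities force \((P,\Omega_{P,\hor},\du_{P,\hor}) \in \dCirc_{\hor,\kappa}(B)\). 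Hence the essential image of \(\dCirc_{\hor,\kappa}(B)\) under \(\epsilon_1 \circ \hat{\cL}\) is precisely \(\dPic_\kappa(B)\).

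It then remains to note that \(\epsilon_1 \circ \hat{\cL}\) corestricts to an \emph{equivalence} \(\dCirc_{\hor,\kappa}(B) \to \dPic_\kappa(B)\): full faithfulness is inherited from Corollary \ref{cor:cphor} because \(\dCirc_{\hor,\kappa}(B)\) is a full subcategory, and essential surjectivity onto \(\dPic_\kappa(B)\) is the identification just made (using that \(\dPic_\kappa(B)\) is strictly full, hence closed under isomorphism in \(\dPic(B)\)). Composing with \(\Hor_\kappa\) of Theorem \ref{thm:dj} gives the desired equivalence \(\epsilon_1 \circ \hat{\cL} \circ \Hor_\kappa : \grp{Gauge}_\kappa(B) \to \dPic_\kappa(B)\).

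The proof thus consists almost entirely of citing prior results and bookkeeping; there is no substantial obstacle, and the one step requiring genuine care is the two-directional curvature matching in the second paragraph, where one must use both that a \(1\)-cocycle on \(\bZ\) is pinned down by its value at \(1\) and that passage through \(\hat{\cL}\) and back via Corollary \ref{cor:cphor} preserves the pair \((\hat{\Phi},\bF(1))\), so that the defining condition of \(\dCirc_{\hor,\kappa}(B)\) transports verbatim to the defining condition of \(\dPic_\kappa(B)\). No new estimates, constructions, or coherence arguments are needed.
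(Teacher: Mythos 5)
Your argument is correct and follows exactly the route the paper intends: the corollary is stated without an explicit proof, being the combination of Theorem~\ref{thm:dj}, Corollary~\ref{cor:cphor} (itself Theorem~\ref{thm:horizontal} plus Corollary~\ref{cor:abstractpimsner}), and the curvature-data identities recorded after Proposition-Definition~\ref{propdef:vertical}, which is precisely what you assemble. Your two-directional matching of flatness and vertical deformation parameter (using that a $1$-cocycle on $\bZ$ is determined by its value at $1$ and that the pair $(\hat{\Phi},\bF(1))$ is preserved under $\hat{\cL}$ and isomorphism) correctly fills in the bookkeeping the paper leaves implicit, including reading the domain $\dCirc_{\kappa,\tot}(B)$ as $\grp{Gauge}_\kappa(B)$.
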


\begin{definition}
	Let \(\kappa \in (0,\infty)\), and let \((E,\sigma,\nabla)\) be a Hermitian line \(B\)-bimodule with connection that is flat or has vertical deformation parameter \(\kappa\).
	The \emph{\(\kappa\)-total crossed product} of \((B;\Omega_B,\du)\) by \((E,\sigma_E,\nabla_E)\) is the essentially unique \(\kappa\)-differentiable quantum principal \(\U\)-bundle with connection \((B;\Omega_B,\du) \rtimes_{(E,\sigma_E,\nabla_E)}^{\kappa,\tot}\bZ\) on \(B\), such that
	\(
		(\hat{\cL} \circ \Hor_\kappa)\mleft((B;\Omega_B,\du) \rtimes_{(E,\sigma_E,\nabla_E)}^{\kappa,\tot}\bZ\mright) \cong (E,\sigma_E,\nabla_E)
	\);
	in this case, we define a \(\ast\)-exterior algebra \((\Omega_B,\du_B) \rtimes_{(E,\sigma_E)}^{\kappa,\tot} \bZ\) and connection \(\Pi_{(E,\sigma_E,\nabla_E)}\) by
	\[
		\left(B \rtimes_E \bZ; (\Omega_B,\du_B) \rtimes_{(E,\sigma_E)}^{\kappa,\tot} \bZ;\Pi_{(E,\sigma_E,\nabla_E)}\right) \coloneqq (B;\Omega_B,\du) \rtimes_{(E,\sigma_E,\nabla_E)}^{\kappa,\tot}\bZ.
	\]
\end{definition}

Thus, given \(\kappa \in (0,\infty)\) and \((E,\sigma_E,\nabla_E)\) that is flat or has vertical deformation parameter \(\kappa\), we may take
\(
	(B;\Omega_B,\du) \rtimes_{(E,\sigma_E,\nabla_E)}^{\kappa,\tot}\bZ \coloneqq \Tot_\kappa((B;\Omega_B,\du) \rtimes_{(E,\sigma_E,\nabla_E)}^{\hor}\bZ),
\)
where \((B;\Omega_B,\du) \rtimes_{(E,\sigma_E,\nabla_E)}^{\hor}\bZ\) is any horizontal crossed product of \((B;\Omega_B,\du)\) by \((E,\sigma_E,\nabla_E)\).
Note that \((E,\sigma_E,\nabla_E)\) is flat if and only if \((B;\Omega_B,\du) \rtimes_{(E,\sigma_E,\nabla_E)}^{\hor}\bZ\) is flat and that \((E,\sigma_E,\nabla_E)\) has vertical deformation parameter \(\kappa\) if and only if \((B;\Omega_B,\du) \rtimes_{(E,\sigma_E,\nabla_E)}^{\hor}\bZ\) has vertical deformation parameter \(\kappa\).

There are certain examples that are naturally described in terms of homomorphisms \(\bZ \to \dPic(B)\) or that yield homomorphisms of particular interest.
In such cases, it convenient to have a straightforward algebraic characterization of the essential range of the composite functor \(\hat{\cL} \circ \Hor_\kappa\).

\begin{corollary}
	Let \(\kappa \in (0,\infty)\), and let \(\grp{Hom}_\kappa(\bZ,\dPic(B))\) be the essential image of the subcategory \(\dCirc_{\hor,\kappa}(B)\) under the equivalence \(\hat{\cL} : \dCirc_\hor(B) \to \grp{Hom}(\bZ,\dPic(B))\).
	Then a homomorphism \(\hat{F} : \bZ \to \dPic(B)\) defines an object of \(\grp{Hor}_\kappa(\bZ,\dPic(B))\) if and only if \(\hat{F}(1)\) is flat or has vertical deformation parameter \(\kappa\), so that, in the latter case,
	\begin{equation}\label{eq:kappahom}
		\forall m \in \bZ, \quad \bF \circ \pi_0(\hat{F})(m) = \kappa^{-m+1}[m]_\kappa \bF_{[\hat{F}(1)]}.
	\end{equation}	
\end{corollary}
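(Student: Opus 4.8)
The plan is to obtain the corollary purely by composing equivalences already in hand and then tracking curvature data through them. First I would dispose of the asserted characterisation of $\grp{Hom}_\kappa(\bZ,\dPic(B))$. By definition, $\grp{Hom}_\kappa(\bZ,\dPic(B))$ is the essential image of $\dCirc_{\hor,\kappa}(B)$ under the equivalence $\hat{\cL}$ of Theorem \ref{thm:horizontal}; by Corollary \ref{cor:diffpimsner}, the strictly full subcategory $\dPic_\kappa(B)$ of objects that are flat or have vertical deformation parameter $\kappa$ is the essential image of $\dCirc_{\hor,\kappa}(B)$ under $\epsilon_1 \circ \hat{\cL}$. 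Since $\epsilon_1 : \grp{Hom}(\bZ,\dPic(B)) \to \dPic(B)$ is an equivalence (Corollary \ref{cor:abstractpimsner}), it is fully faithful, hence reflects isomorphisms, and the essential image of $\grp{Hom}_\kappa(\bZ,\dPic(B))$ under $\epsilon_1$ is exactly the essential image of $\dCirc_{\hor,\kappa}(B)$ under $\epsilon_1 \circ \hat{\cL}$, namely $\dPic_\kappa(B)$. It follows formally that a homomorphism $\hat{F} : \bZ \to \dPic(B)$ lies in $\grp{Hom}_\kappa(\bZ,\dPic(B))$ if and only if $\hat{F}(1) = \epsilon_1(\hat{F})$ lies in $\dPic_\kappa(B)$, i.e., is flat or has vertical deformation parameter $\kappa$.

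It remains to establish the formula \eqref{eq:kappahom} in the case where $\hat{F}(1)$ has vertical deformation parameter $\kappa$. Using Theorem \ref{thm:horizontal}, write $\hat{F} \cong \hat{\cL}(P,\Omega_{P,\hor},\du_{P,\hor})$ for an object of $\dCirc_{\hor,\kappa}(B)$; since $\pi_0$ sends $2$-isomorphic homomorphisms to the same group homomorphism, Proposition-Definition \ref{propdef:vertical} and the discussion following it give $\bF \circ \pi_0(\hat{F}) = \bF_P$, the curvature $1$-cocycle of $(P,\Omega_{P,\hor},\du_{P,\hor})$, and in particular $\bF_{[\hat{F}(1)]} = \bF_P(1)$. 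By Theorem \ref{thm:dj} there is an object $(P',\Omega_{P'},\du_{P'};\Pi)$ of $\grp{Gauge}_\kappa(B)$ with $\Hor_\kappa(P',\Omega_{P'},\du_{P'};\Pi) \cong (P,\Omega_{P,\hor},\du_{P,\hor})$, and since the curvature $1$-cocycle is an isomorphism invariant of horizontally differentiable quantum principal $\U$-bundles, the Proposition preceding Theorem \ref{thm:dj} yields $\bF_P(m) = 2\pi[m]_\kappa\kappa^{-m}\cF_\Pi$ for every $m \in \bZ$, where $\cF_\Pi \in \Zent(\Omega_B)^2$ is the curvature $2$-form of $\Pi$. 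Evaluating at $m = 1$ gives $\bF_{[\hat{F}(1)]} = \bF_P(1) = 2\pi\kappa^{-1}\cF_\Pi$ since $[1]_\kappa = 1$; substituting back yields $\bF \circ \pi_0(\hat{F})(m) = \bF_P(m) = 2\pi[m]_\kappa\kappa^{-m}\cF_\Pi = \kappa^{-m+1}[m]_\kappa\bF_{[\hat{F}(1)]}$, which is \eqref{eq:kappahom}.

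I expect no conceptual obstacle: every ingredient is already available, and the argument is essentially bookkeeping. The one point requiring care is the identification of the curvature $1$-cocycle $\bF_P$ attached to $(P,\Omega_{P,\hor},\du_{P,\hor})$ with the pullback $\bF \circ \pi_0(\hat{F})$ of the curvature $2$-form cocycle along $\pi_0(\hat{F}) : \bZ \to \dpic(B)$ --- this is precisely the content of the proof of Proposition-Definition \ref{propdef:vertical}, applied \emph{mutatis mutandis} --- together with keeping the normalisation $[1]_\kappa = 1$ straight, so that $\bF_P(1)$ and $\cF_\Pi$ differ exactly by the scalar $2\pi\kappa^{-1}$. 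No new analytic or algebraic input is needed beyond Theorems \ref{thm:horizontal} and \ref{thm:dj}, Corollaries \ref{cor:abstractpimsner} and \ref{cor:diffpimsner}, and the curvature computation recorded just before Theorem \ref{thm:dj}.
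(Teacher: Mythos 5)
Your proposal is correct, and its treatment of the first assertion matches the paper's (the paper simply cites the discussion following Proposition-Definition \ref{propdef:vertical}, which is exactly the content you spell out via Corollary \ref{cor:diffpimsner} and full faithfulness of \(\epsilon_1\)). Where you genuinely diverge is in the derivation of \eqref{eq:kappahom}: the paper never passes to a total bundle at all. It works entirely inside \(\dpic(B)\), specialising the right \(1\)-cocycle identity for \(\bF\) to \(\bF \circ \pi_0(\hat{F})(m+n) = \hat{\Phi}_{[\hat{F}(1)]}^{-n}\bigl(\bF \circ \pi_0(\hat{F})(m)\bigr) + \bF \circ \pi_0(\hat{F})(n)\) and running a short induction using the eigenvalue relation \(\hat{\Phi}_{[\hat{F}(1)]}(\bF_{[\hat{F}(1)]}) = \kappa\,\bF_{[\hat{F}(1)]}\), which gives \(\bF \circ \pi_0(\hat{F}) = \bigl(m \mapsto [m]_{\kappa^{-1}}\bF_{[\hat{F}(1)]}\bigr)\) and hence \eqref{eq:kappahom} via \([m]_{\kappa^{-1}} = \kappa^{-m+1}[m]_\kappa\). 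You instead invoke Theorem \ref{thm:dj} to realise the horizontal bundle as \(\Hor_\kappa\) of an object of \(\grp{Gauge}_\kappa(B)\) and read the formula off the curvature computation \(\bF_{P,\Pi} = \bigl(k \mapsto 2\pi[k]_\kappa\kappa^{-k}\cF_\Pi\bigr)\) in the proposition preceding Theorem \ref{thm:dj}, then eliminate \(\cF_\Pi\) by evaluating at \(m=1\). Both arguments are sound; yours buys the formula as pure bookkeeping once the heavier equivalences are in place (and makes the geometric origin of the \(\kappa\)-deformed factor visible), whereas the paper's induction is lighter-weight, needs only the cocycle identity plus the deformation-parameter hypothesis, and would apply verbatim to any homomorphism \(\hat{F}\) with \(\hat{\Phi}_{[\hat{F}(1)]}(\bF_{[\hat{F}(1)]}) = \kappa\bF_{[\hat{F}(1)]}\) without reference to total calculi. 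One small caveat: your route uses the first assertion (membership of \(\hat{F}\) in \(\grp{Hom}_\kappa(\bZ,\dPic(B))\)) to produce the preimage in \(\dCirc_{\hor,\kappa}(B)\), so the order of the two steps matters; you do respect this, and there is no circularity, but it is worth stating explicitly that the second step presupposes the first.
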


\begin{proof}
	Relative to the discussion after Proposition-Definition \ref{propdef:vertical}, it remains to check \eqref{eq:kappahom}.
	Suppose that \(\hat{F} : \bZ \to \dPic(B)\) is a homomorphism, such that \(\hat{F}(1)\) has vertical deformation parameter \(\kappa\).
	The right \(1\)-cocycle identity for curvature \(1\)-cocycle of \(B\) specialises to
	\[
		\forall m,n \in \bZ, \quad \bF \circ \pi_0(\hat{F})(m+n) = \hat{\Phi}_{[\hat{F}(1)]}^{-n}\mleft(\bF \circ \pi_0(\hat{F})(m)\mright) + \bF \circ \pi_0(\hat{F})(n).
	\]
	By induction together with the equation \(\hat{\Phi}_{[\hat{F}(1)]}(\bF_{[\hat{F}(1)]}) = \kappa\bF_{[\hat{F}(1)]}\), it follows that \(\bF \circ \pi_0(\hat{F})\) satisfies \(\bF \circ \pi_0(\hat{F}) = \left(m \mapsto [m]_{\kappa^{-1}} \bF_{[\hat{F}(1)]}\right)\), which yields \eqref{eq:kappahom}.
\end{proof}

\begin{example}[{\DJ{}ur\dj{}evi\'{c}~\cite[\S 4]{DJ97}}]\label{ex:hopf5}
	We continue from Example \ref{ex:hopf4}. 
	By \eqref{eq:hopfcurv}, it follows that \((\cO_q(\SU(2)),\Omega_{q,\hor}(\SU(2)),\du_{q,\hor})\) has deformation parameter \(q^2\); hence, by \eqref{eq:hopfcurv} and \eqref{eq:kappahom}, the homomorphism \(\hat{\cE} : \bZ \to \dPic(\cO_q(\CP^1))\) of Example \ref{ex:hopf3} satisfies \(\bF \circ \pi_0(\hat{\cE}) = \left(m \mapsto [m]_{q^2} q^{-2m} \iu{}e^+ e^-\right)\).
	At last, by results of \DJ{}ur\dj{}evi\'{c},
	\[
		(\cO_q(\SU(2)),\Omega_q(\SU(2)),\du_q;\Pi_q) \coloneqq \Tot_{q^2}(\cO_q(\SU(2)),\Omega_{q,\hor}(\SU(2)),\du_{q,\hor}))
	\]
	recovers the \(3\)-dimensional calculus \((\Omega_q(\SU(2)),\du_q)\) on \(\cO_q(\SU(2))\) of Woronowicz~\cite{Woronowicz87} and the non-universal \(q\)-monopole connection \(\Pi_q\) of Brzezi\'{n}ski--Majid~\cite{BrM}.
	In other words, we may obtain \(\Omega_q(\SU(2))\) from \(\Omega_{q,\hor}(\SU(2))\) by adjoining the skew-adjoint \(\U\)-invariant \(1\)-form
	\(
		e^0 = 2\pi\iu{}q^{-2}e_{q^2}
	\)
	subject to the relations \((e^0)^2 = 0\) and
	\[
		\forall (n,k) \in \bN_0 \times \bZ, \, \forall \omega \in \Omega_{q,\hor}^n(\SU(2))_k, \quad e^0  \omega = (-1)^n q^{-2k} \omega e^0,
	\]
	and we may obtain \(\du_q\) from \(\du_{q,\hor}\) by setting \(\du_q(e^0) \coloneqq q^{-2} e^+  e^-\) and 
	\begin{equation*}
		\forall (n,k) \in \bN_0 \times \bZ, \, \forall \omega \in \Omega_{q,\hor}^n(\SU(2))_k, \quad \du_{q}(\omega) \coloneqq (-1)^n [k]_{q^{-2}} \omega e^0 + \du_{q,\hor}(\omega).
	\end{equation*}
	From now on, we shall refer to \((\cO_q(\SU(2)),\Omega_q(\SU(2)),\du_q;\Pi_q)\) as the \emph{\(q\)-monopole}.
\end{example}

\begin{example}\label{ex:heis6}
	We continue from Example \ref{ex:heis5}.
	Since \((g \mapsto g_{21}\theta + g_{22}) : \Gamma_\theta \to \bR^\times\) is an injective homomorphism~\cite[Thm 5.2.10]{HK}, there exists a unique generator \(\gamma\) of the infinite cyclic group \(\Gamma_\theta\) satisfying \(\gamma_{21}\theta+g_{22}>1\); hence, let \(\epsilon_\theta \coloneqq \gamma_{21}\theta+\gamma_{22}\), which recovers the norm-positive fundamental unit of the real quadratic field generated by \(\theta\).
	Next, since 
	\(
		\hat{\Phi}_{[\hat{E}(\gamma)]}(e^1 e^2) = (\gamma_{21}\theta+\gamma_{22})^2e^1 e^2 = \epsilon_\theta^2 \, e^1 e^2
	\),
	it follows that \(\hat{E}(\gamma)\) has vertical deformation parameter \(\epsilon_\theta^2\).
	Thus, the composite homomorphism  \(\hat{E} \circ (k \mapsto \gamma^k)\) is an object of \(\grp{Hom}_{\epsilon_\theta^2}(\bZ,\dPic(C^\infty_\theta(\bT^2)))\), so that \(\hat{\Sigma}\mleft(\hat{E} \circ (k \mapsto \gamma^k)\mright)\) defines an object of \(\dCirc_{\hor,\epsilon_\theta^2}(C^\infty_\theta(\bT^2))\).
	Hence, at last, we define the \emph{real multiplication instanton} to be the \(\epsilon^2_\theta\)-differentiable quantum principal \(\U\)-bundle over \(C^\infty_\theta(\bT^2)\) given by
	\(
		(P_\theta; \Omega_{P_\theta}, \du_{P_\theta}; \Pi_{P_\theta}) \coloneqq \Tot_{\epsilon^2_\theta} \circ \hat{\Sigma}\mleft(\hat{E} \circ (k \mapsto \gamma^k)\mright)
	\),
	which recovers a construction of \'{C}a\'{c}i\'{c} \cite{Cacic}.
	Note that \Cstar-algebraic completion of \(P_\theta\) is part of a family of Cuntz--Pimsner algebras first considered by Nawata \cite{Nawata}.
\end{example}

\section{Lifting problems for NC Riemannian structures}\label{sec:4}

In the commutative case, a Riemannian metric on the base space of a principal \(\U\)-bundle with principal connection lifts canonically to the total space.
Here, we study the analogous lifting problems for two closely interrelated notions of Riemannian structure on NC manifolds, which are based, respectively, on generalised Hodge star operators and formal spectral triples.
In particular, we show that these lifted Riemannian structures inexorably involve modular phenomena in both vertical and horizontal directions that are generally non-trivial and distinct.
Along the way, we  show that quantum \(\SU(2)\) \emph{qua} total space of the \(q\)-monopole does not admit a non-pathological \(\U\)-equivariant twisted spectral triple and obtain a geometric formal derivation of Kaad--Kyed's compact quantum metric space \cite{KK} on quantum \(\SU(2)\) for a canonical choice of parameters.

For the entirety of this section, let \(B\) be a unital separable pre-\Cstar-algebra with \(\ast\)-differential calculus \((\Omega_B,\du_B)\), which we assume has dimension \(N \in \bN\); let \(\gamma_B : \Omega_B \to \Omega_B\) denote the \(\bZ/2\bZ\)-grading on \(\Omega_B\) by parity of degree.
Moreover, given a horizontal quantum principal \(\U\)-bundle \((P;\Omega_{P,\hor};\du_{P,\hor})\) over \(B\), we suppress the isomorphism \(\hat{\iota}_P : \Omega_B \to \Omega_{P,\hor}^{\U}\).

\subsection{Hodge operators and conformality}\label{sec:4.1}

We begin by introducing the minimum of Riemannian structure required for classical \(\U\)-gauge theory on \textsc{nc} manifolds: the Hodge star operator and integration against the Riemannian volume form.
Such an approach was first proposed by Kustermans--Murphy--Tuset for quantum groups \cite{KMThodge}, it has since attained its fullest expression in the setting of NC K\"{a}hler geometry in the sense of \'{O} Buachalla \cite{OB}.
We show that it permits robust generalisation of the notion of conformal orientation-preserving diffeomorphism to the entire differential Picard group.

We begin with a straightforward generalisation of the Hodge star operator.

\begin{definition}[Kustermans--Murphy--Tuset \cite{KMThodge}, Majid \cite{Majid05}, Zampini \cite{Zampini}, \'{O} Buachalla \cite{OB}]
	A \emph{Hodge operator} on \((\Omega_B,\du_B)\) is a \(\ast\)-preserving \(B\)-bimodule morphism \(\star : \Omega_B \to \Omega_B\), such that, for every \(k \in \Set{0,\dotsc,N}\),
	\begin{gather}
		\star{}(\Omega^k_B) \subseteq \Omega^{N-k}_B, \quad \rest{\star^2}{\Omega^k_B} = (-1)^{k(N-k)}\id_{\Omega^k_B}, \label{eq:hodge1}\\
		\forall \omega,\eta \in \Omega^k_B, \quad \omega \cdot {\star}(\eta) = {\star}^{-1}(\omega) \cdot \eta . \label{eq:hodge2}
	\end{gather}
	Hence, the \emph{inverse metric} induced by \(\star\) is the right \(B\)-valued inner product \(g\) on \(\Omega_B\) defined by 
	\begin{equation}\label{eq:hodge3}
		\forall \omega, \eta \in \Omega_B, \quad g(\omega,\eta) \coloneqq \star{}\mleft(\omega^\ast \cdot {\star}(\eta)\mright).
	\end{equation}	
\end{definition}

By combining a generalised Hodge star operator with a suitable generalisation of integration against the corresponding Riemannian volume form, we obtain our first notion of NC Riemannian structure; following Connes \cite{Connes85} and Kustermans--Murphy--Tuset \cite{KMT}, we impose the divergence theorem as a requirement.

\begin{definition}[cf.\ Kustermans--Murphy--Tuset~\cite{KMThodge}, \'{O} Buachalla~\cite{OB}, Salda\~{n}a~\cite{Saldana21Yang}]
	A \emph{Riemannian geometry} on \((B;\Omega_B,\du_B)\) is a pair \((\star,\tau)\), where \(\star\) is a Hodge operator on \((\Omega_B,\du_B)\) whose inverse metric \(g\) admits a basis as a right \(B\)-valued inner product on \(\Omega_B\) and satisfies
	\begin{equation}\label{eq:metricbdd}
		\forall b \in B, \, \forall \omega \in \Omega_B, \quad g(b  \omega,b  \omega) \leq \norm{b}^2 g(\omega,\omega),
	\end{equation}
	and where \(\tau\) is a bounded state on \(B\) that satisfies
	\begin{align}
		\forall \omega \in \Omega^{N-1}_B, && (\tau \circ \star \circ \du_B)(\omega) &= 0, \label{eq:closed}\\
		\forall b \in B, && \sup\Set{\tau(a^\ast b^\ast b a) \given a \in A, \, \tau(a^\ast a) \leq 1} &= \norm{b}^2. \label{eq:faithfulstate} 
	\end{align}
\end{definition}

\begin{example}\label{ex:classical4} 
	We continue from Example \ref{ex:classical3}.
	Suppose that \(X\) is orientable.
	Equip \(X\) with an orientation and a Riemannian metric \(g\); let \(\star_g\) and \(\vol_g\) respectively denote the resulting Hodge star operator and Riemannian volume form.
	Then \((\star_g,\int_X (\cdot) \vol_g)\) defines a Riemannian geometry on \((C^\infty(X);\Omega(X,\bC),\du)\), whose inverse metric is the usual inverse Riemannian metric.
	Note that a basis for \(\Omega(X,\bC)\) with respect to the inverse metric can be constructed from local orthonormal frames using a smooth partition of unity.
\end{example}

\begin{example}\label{ex:hopf6}
	We continue from Example \ref{ex:hopf5}.
	Let \(h_q\) denote Woronowicz's Haar state on \(\cO_q(\SU(2))\), which is faithful on \(C_q(\SU(2))\) by a result of Nagy~\cite{Nagy}.
	Since \((\cO_q(\CP^1),\Omega_q(\CP^1),\du_q)\) is an NC K\"{a}hler manifold \`{a} la \'{O} Buachalla \cite[\S\S 4.4, 5.4]{OB}, it admits a canonical Riemannian geometry \((\star_q,\rest{h_q}{\cO_q(\CP^1)})\), where \(\star_q(1) \coloneqq \iu{}e^+e^-\) and \(\star_q\) restricts to \(\pm\iu{}\id\) on \(\cO_q(\SU(2))_{\mp 2}\cdot e^{\pm}\).
	Note that \(\star_q\) recovers Zampini's modification~\cite[Eq.\ 5.14]{Zampini} of Majid’s Hodge operator~\cite[\S 4]{Majid05} for \(\alpha^{\prime\prime} = -q^2\).
\end{example}

\begin{example}\label{ex:heis7}
	We continue from Example \ref{ex:heis6}.
	The canonical Riemannian geometry \((\star,\tau)\) on \((C^\infty_\theta(\bT^2);\Omega_\theta(\bT^2),\du)\) is given by
	\[
		\star(1) \coloneqq e^1  e^2, \,\,\, \star(e^1) \coloneqq e^2, \,\,\, \star(e^2) \coloneqq -e^1; \quad
		\forall (m,n) \in \bZ^2, \,\,\, \tau(U^m V^n)  \coloneqq \delta^{m,0} \delta^{n,0};
	\]
	so that \(\tau\) recovers the canonical \(\U\)-invariant faithful trace on \(C_\theta(\bT^2)\).
\end{example}

Just as in the classical case, we may now equip \(\Omega_B\) with an \(L^2\)-inner product and compute the (formal) adjoint of the exterior derivative \(\du_B\) in terms of the Hodge star operator.

\begin{proposition}[{\'{O} Buachalla \cite[\S\S 5.2--3]{OB}}]\label{prop:basicseparable}
	Let \((\star,\tau)\) be a Riemannian geometry on \((B;\Omega_B,\du_B)\); let \(g\) be the resulting inverse metric.
	Then \(\Omega_B\) defines a \(B\)-self-correspondence of finite type with respect to \(g\) that decomposes as an orthogonal direct sum \(\Omega_B = \bigoplus_{k=0}^N \Omega^k_B\) of sub-\(B\)-bimodules.
	Hence, the \(\bC\)-vector space \(\Omega_B\) defines a separable pre-Hilbert space with respect to the inner product \(\ip{}{}_\tau\) defined by
	\begin{equation}\label{eq:l2ip}
		\forall \omega, \eta \in \Omega_B, \quad \ip{\omega}{\eta}_\tau \coloneqq \tau\mleft(g(\omega,\eta)\mright), 
	\end{equation} 
	with respect to which the left \(B\)-module structure on \(\Omega_B\) defines an isometric \(\ast\)-representation of \(B\), the direct sum decomposition \(\Omega_B = \bigoplus_{k=0}^N \Omega^k_B\) is orthogonal, the Hodge operator \(\star\) is unitary, and \(\du_B^\ast = {\star}^{-1} \circ {\du_B} \circ {\star} \circ \gamma_B\).
\end{proposition}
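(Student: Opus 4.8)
The plan is to establish the claimed properties of the pre-Hilbert space $(\Omega_B,\ip{}{}_\tau)$ in a sequence of small steps, following the structure of \'O Buachalla's argument~\cite[\S\S 5.2--3]{OB} but streamlined via the finite-basis technology already in place.

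First I would record that the inverse metric $g$ is a genuine positive-definite right $B$-valued inner product admitting a basis: positivity comes from \eqref{eq:hodge3} together with \eqref{eq:hodge1} (so that $g(\omega,\omega) = \star(\omega^\ast\star\omega)$ is a norm-square after applying the isometric $\star$), and the existence of a basis is part of the hypothesis on a Riemannian geometry. Since $\star$ is a $B$-bimodule morphism and \eqref{eq:hodge2} forces $g(\Omega^j_B,\Omega^k_B) = 0$ for $j \neq k$ (degrees do not match after $\star$), we get the orthogonal decomposition $\Omega_B = \bigoplus_{k=0}^N \Omega^k_B$ of sub-$B$-bimodules, and each summand, being finitely generated projective with a full positive-definite $B$-valued inner product, makes $\Omega_B$ a $B$-self-correspondence of finite type once we check the left action is bounded---but that is exactly \eqref{eq:metricbdd}, which says $\norm{b \omega} \leq \norm{b}\norm{\omega}$ in the $g$-norm.

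Next I would build the pre-Hilbert space. Equation \eqref{eq:faithfulstate} says $\tau$ is faithful with the property that the GNS representation of $B$ is isometric; applying $\tau$ to $g$ gives a positive sesquilinear form $\ip{}{}_\tau$ on the $\bC$-vector space $\Omega_B$, and faithfulness of $\tau$ together with positive-definiteness of $g$ gives that $\ip{}{}_\tau$ is an inner product (if $\ip{\omega}{\omega}_\tau = \tau(g(\omega,\omega)) = 0$ then $g(\omega,\omega) = 0$ by \eqref{eq:faithfulstate} applied with $b = g(\omega,\omega)^{1/2}$ or directly by faithfulness on positives, hence $\omega = 0$). Separability of the completion follows from separability of $B$ and finite generation of each $\Omega^k_B$. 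The direct sum $\Omega_B = \bigoplus_k \Omega^k_B$ stays orthogonal for $\ip{}{}_\tau$ since it already was for $g$. That the left $B$-action is an isometric $\ast$-representation follows from \eqref{eq:faithfulstate}: boundedness and the $\ast$-property come from \eqref{eq:hodge2}/\eqref{eq:formip2}-type manipulations showing $\ip{b\omega}{\eta}_\tau = \ip{\omega}{b^\ast\eta}_\tau$ via $g(b\omega,\eta) = g(\omega,b^\ast\eta)$, and the norm equality is precisely \eqref{eq:faithfulstate} applied to each generator. Unitarity of $\star$ is immediate: $\ip{\star\omega}{\star\eta}_\tau = \tau(g(\star\omega,\star\eta)) = \tau(\star(\star\omega^\ast \cdot \star\star\eta))$, and using $\ast$-preservation of $\star$, the involution $\rest{\star^2}{\Omega^k_B} = (-1)^{k(N-k)}\id$, and the identity $\tau\circ\star\circ\star = \pm\tau\circ\star$ on the relevant degrees, this reduces to $\ip{\omega}{\eta}_\tau$ after one further application of \eqref{eq:hodge2}.

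The main obstacle is the last assertion, $\du_B^\ast = \star^{-1}\circ\du_B\circ\star\circ\gamma_B$, which is the NC Stokes-theorem computation. Here I would proceed formally: for $\omega \in \Omega^{k-1}_B$ and $\eta \in \Omega^k_B$, expand $\ip{\du_B\omega}{\eta}_\tau = \tau(\star(\du_B\omega)^\ast\star\eta)) = \tau(\star((\du_B\omega)^\ast \star\eta))$, use the graded Leibniz rule to write $(\du_B\omega)^\ast \star\eta = \pm\du_B(\omega^\ast \star\eta) \mp \omega^\ast \du_B(\star\eta)$ with the signs dictated by $\deg\omega^\ast = k-1$, apply \eqref{eq:closed} to kill the $\tau\circ\star\circ\du_B$ term, and then reassemble the surviving term $\mp\tau(\star(\omega^\ast\du_B\star\eta))$ into $\ip{\omega}{\star^{-1}\du_B\star\gamma_B\eta}_\tau$ by inserting $\star^{-1}\star = \id$ and applying \eqref{eq:hodge2} and \eqref{eq:hodge1} to move the Hodge operators across; the grading operator $\gamma_B$ absorbs exactly the degree-dependent sign. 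The only delicate point is bookkeeping the signs consistently and checking that \eqref{eq:closed} applies with the correct degree ($\omega^\ast \star\eta \in \Omega^{N-1}_B$ since $\omega^\ast \in \Omega^{k-1}_B$ and $\star\eta \in \Omega^{N-k}_B$), which I would verify once and then cite as routine.
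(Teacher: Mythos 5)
Your proposal is correct in substance but takes a genuinely different route from the paper. The paper's proof delegates the Hodge-theoretic content---the orthogonality of the degree decomposition, unitarity of \(\star\), and the codifferential formula \(\du_B^\ast = \star^{-1}\circ\du_B\circ\star\circ\gamma_B\)---to the cited reference of \'{O} Buachalla, and only verifies the two points specific to the pre-\Cstar-algebraic setting: separability of \(\Omega_B\) as a pre-Hilbert space (via the estimate \(\ip{\sum_i e_i a_i}{\sum_i e_i a_i}_\tau \leq \norm{X}\sum_i\norm{a_i}^2\) with \(X = (g(e_i,e_j))_{i,j}\) positive in \(M_n(B)\)), and isometry of the left \(B\)-action (contractivity of a bounded \(\ast\)-homomorphism plus the lower bound from \eqref{eq:faithfulstate} with \(a\) ranging over \(B \subseteq \Omega_B\)). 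You instead prove everything directly, and your computation of the adjoint formula is the genuinely nontrivial added content: the graded Leibniz expansion of \((\du_B\omega)^\ast\star\eta\), the degree check that \(\omega^\ast\star\eta \in \Omega^{N-1}_B\) so that \eqref{eq:closed} kills the exact term, and the sign absorbed by \(\gamma_B\) all check out. This buys a self-contained proof at the cost of redoing standard material; the paper's version buys brevity at the cost of an external citation.

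Three soft spots to repair. First, positive-definiteness of \(g\) does not follow from \eqref{eq:hodge1} and \eqref{eq:hodge3} as you suggest ("norm-square after applying the isometric \(\star\)" is not an argument available at that stage); it follows from the hypothesis that \(g\) admits a basis, via the fact recorded after Example \ref{ex:projection} that a \(B\)-valued inner product admitting a basis is automatically positive definite. Second, in the unitarity step the identity "\(\tau\circ\star\circ\star = \pm\tau\circ\star\)" is ill-typed (the two sides have different domains) and is not needed: one shows directly that \(g(\star\omega,\star\eta) = g(\omega,\eta)\) for \(\omega,\eta \in \Omega^k_B\), using \(\ast\)-preservation of \(\star\), \(\rest{\star^2}{\Omega^k_B} = (-1)^{k(N-k)}\id\), and \eqref{eq:hodge2} in the form \(\star(\omega^\ast)\cdot\eta = (-1)^{k(N-k)}\omega^\ast\cdot\star(\eta)\), so unitarity holds before \(\tau\) is even applied. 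Third, "separability follows from separability of \(B\) and finite generation" hides the one estimate that actually needs proving, namely continuity of \((a_i)_i \mapsto \sum_i e_i a_i\) from \(B^n\) to the \(\ip{}{}_\tau\)-topology; this is exactly the \(\sqrt{X}\) estimate in the paper and should be included rather than waved at.
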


\begin{proof}
	Relative to the references (cf.\ the proof of Proposition \ref{prop:totalprehilbert} below), it remains to show that \(\Omega_B\) is separable as a pre-Hilbert space and that the left \(B\)-module structure is isometric as a \(\ast\)-homomorphism.
	Let \(\mathfrak{B}\) be the \Cstar-algebraic completion of \(B\), so that \(\tau\) extends to a state on \(\mathfrak{B}\).
	Let \(m \in \Set{0,\dotsc,N}\), and let \((e_i)_{i=1}^n\) be a basis for \(\Omega^m_B\) with respect to \(g\), so that \(X \coloneqq \mleft(g(e_i,e_j)\mright)_{i,j=1}^n \in M_n(B)\) is positive with unique positive square root \(\sqrt{X} \in M_n(\mathfrak{B})\).
	Let \(a \coloneqq (a_1,\dotsc,a_n) \in B^n \subset \mathfrak{B}^n\) and set \(\omega \coloneqq \sum_{i=1}^n e_i  a_i\).
	Then
	\begin{multline*}
		\ip{\omega}{\omega}_\tau = \tau\mleft(\sum\nolimits_{i,j=1}^n a_i^\ast  g(e_i,e_j)  a_j \mright) = \tau\mleft(\hp{a}{X a}_{B^n}\mright) \leq \tau\mleft(\norm*{\sqrt{X}}^2 \hp{a}{a}_{\mathfrak{B}^n}\mright)\\ \leq \norm{X}\sum\nolimits_{i=1}^n \norm{a_i}^2.
	\end{multline*}
	Since \(B\) is separable as a normed vector space and since \((e_i)_{i=1}^m\) generates \(\Omega^m_B\) as a right \(B\)-module, it follows that \(\Omega^m_B\) is separable as a pre-Hilbert space.
	Hence, \(\Omega_B = \bigoplus_{m=0}^N \Omega^m_B\) is also separable as a pre-Hilbert space.
	
	We now show that the left \(B\)-module structure \(\pi : B \to \bL(\Omega_B)\) is isometric.
	Let \(b \in B\) be given.
	Since \(\pi\) is bounded, it is contractive, so that, by \eqref{eq:faithfulstate}, 
	\[
		\norm{b}^2 \geq \norm{\pi(b)^2} \geq \sup\Set{\ip{\pi(b)a}{\pi(b)a}_\tau \given a \in B,\, \ip{a}{a}_\tau \leq 1} = \norm{b}^2. \qedhere
	\]
\end{proof}

We now generalise the notion of conformal orientation-preserving diffeomorphism to our NC setting.
For convenience, let \(\mathcal{Z}_{>0}(B)\) denote the multiplicative group of all positive invertible elements of \(\Zent(\Omega_B)^0\), so that \(\mathcal{Z}_{>0}(B)\) admits a canonical right action of the differential Picard group \(\dpic(B)\) defined by
\begin{equation}\label{eq:confact}
	\forall \mu \in \mathcal{Z}_{>0}(B), \, \forall [E,\nabla_E] \in \dpic(B), \quad \mu \ract [E,\nabla_E] \coloneqq \hat{\Phi}^{-1}_{[E,\nabla_E]}(\mu).
\end{equation}

Note from Examples \ref{ex:twist2} and \ref{ex:classical3} and the proof of Theorem \ref{thm:stabilizer} that the dynamical content of a Hermitian line \(B\)-bimodule with connection is encoded by its generalised braiding.
Hence, we promote the behaviour of the usual Hodge star operator under orientation-preserving conformal diffeomorphisms~\cite[Thm.\ 1.159.h]{Besse} to the following definition.

\begin{definition}
	Let \(\star_B\) be a Hodge operator on \((\Omega_B,\du_B)\).
	A Hermitian line \(B\)-bimodule with connection \((E,\sigma_E,\nabla_E)\) is \emph{\(\star_B\)-conformal} when there exists (necessarily unique) \(\mu \in \mathcal{Z}_{>0}(B)\), the \emph{conformal factor} of \((E,\sigma_E,\nabla_E)\), such that
	\begin{multline}\label{eq:conformal}
		\forall x \in E, \, \forall k \in \Set{0,\dotsc,N}, \, \forall \alpha \in \Omega^k_B, \\ \sigma_E\mleft(\star_B(\alpha) \otimes x\mright) = \leg{\sigma_E\mleft(\alpha \otimes x\mright)}{0} \otimes \star_B{}\mleft(\leg{\sigma_E\mleft(\alpha \otimes x\mright)}{1}\mright)  \mu^{N-2k}.
	\end{multline}
	We denote by \(\dPic(B;\star_B)\) the strictly full subcategory of \(\dPic(B)\) whose objects are \(\star_B\)-conformal, we denote by \(\dpic(B;\star_B)\) the corresponding subset of \(\dpic(B)\), and we define
	\(
		\mu : \dpic(B;\star_B) \to \mathcal{Z}_{>0}(B)
	\)
	by mapping \([E,\nabla_E] \in \dpic(B;\star_B)\) to the conformal factor \(\mu_{[E,\nabla_E]}\) of any (and hence every) representative.
\end{definition}

In the classical case, orientation-preserving conformal diffeomorphisms form a group and their conformal factors define a multiplicative \(1\)-cocycle on this group.
The same is true in the NC setting.

\begin{proposition}
	Suppose that \(\star_B\) is a Hodge operator on \((\Omega_B,\du_B)\).
	Then \(\dPic(B;\star_B)\) defines a sub-\(2\)-group of \(\dPic(B)\), the subset \(\dpic(B;\star_B)\) defines a subgroup of \(\dpic(B)\), and the function \(\mu : \dpic(B;\star_B) \to \mathcal{Z}_{>0}(B)\) defines a group \(1\)-cocycle with respect to the restriction to \(\dpic(B;\star_B)\) of the right \(\dpic(B)\)-action on \(\mathcal{Z}_{>0}(B)\) defined by \eqref{eq:confact}.
\end{proposition}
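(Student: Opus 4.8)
The plan is to exploit strict fullness to cut the work down to three checks: that the unit object is $\star_B$-conformal, that $\dPic(B;\star_B)$ is closed under the monoidal product, and that it is closed under monoidal inversion. Once these hold, $\dPic(B;\star_B)$ automatically contains all unitor, associator, and evaluation arrows (their sources and targets being conformal objects) and is closed under isomorphism, hence is a sub-$2$-group of $\dPic(B)$; passing to $\pi_0$ and using that a strictly full subcategory reflects isomorphisms then exhibits $\dpic(B;\star_B)$ as a subgroup of $\dpic(B)$. The computation that proves closure under $\otimes$ will, at the same stroke, produce the conformal factor of a monoidal product as an explicit twisted product of the factors of the tensorands, which is exactly the $1$-cocycle identity. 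So the proof is really one nontrivial computation plus formalities.

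First I would dispatch the preliminaries. The unit $(B,\sigma_B,\nabla_B)$ has $\sigma_B=\id_{\Omega_B}$ by Example \ref{ex:trivialconn}, so \eqref{eq:conformal} becomes the tautology $\star_B(\alpha)=\star_B(\alpha)$ and its conformal factor is $1\in\mathcal{Z}_{>0}(B)$. Next, if $u:(E,\sigma_E,\nabla_E)\to(F,\sigma_F,\nabla_F)$ is an arrow of $\dPic(B)$ and $(E,\sigma_E,\nabla_E)$ is $\star_B$-conformal with factor $\mu$, then applying $u\otimes\id_{\Omega_B}$ to \eqref{eq:conformal}, using the intertwining relation $\sigma_F\circ(\id_{\Omega_B}\otimes u)=(u\otimes\id_{\Omega_B})\circ\sigma_E$ of Theorem \ref{thm:bema}, and using surjectivity of $u$, shows $(F,\sigma_F,\nabla_F)$ is $\star_B$-conformal with the \emph{same} factor $\mu$. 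This simultaneously shows $\dPic(B;\star_B)$ is closed under isomorphism and that $\mu:\dpic(B;\star_B)\to\mathcal{Z}_{>0}(B)$ is well defined on isomorphism classes.

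Now the heart. Given $\star_B$-conformal $(E,\sigma_E,\nabla_E)$ and $(F,\sigma_F,\nabla_F)$ with factors $\mu_E,\mu_F$, I would substitute the explicit formula $\sigma_{E\otimes_B F}=\alpha^{-1}_{E,F,\Omega_B}\circ(\id_E\otimes\sigma_F)\circ\alpha_{E,\Omega_B,F}\circ(\sigma_E\otimes\id_F)$ from Theorem \ref{thm:bema} into \eqref{eq:conformal} and apply \eqref{eq:conformal} twice: first for $\sigma_E$ (using that $\star_B$ is a right $B$-module map and that $\sigma_E$ preserves the $\Omega_B$-degree, so the relevant $\Omega_B$-leg lies in $\Omega^k_B$ and absorbs $\mu_E^{N-2k}$ into the argument of $\star_B$), then for $\sigma_F$. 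The one genuinely delicate point is that the factor $\mu_E^{N-2k}$ produced by $\sigma_E$ sits, after passing through $\sigma_F$, first as a left multiplier on the $F$-leg of $F\otimes_B\Omega_B$; here centrality of $\mu_E\in\Zent(\Omega_B)^0$ together with the defining property of the Fröhlich automorphism (Proposition-Definition \ref{propdef:froh}, its degree-$0$ part being Proposition-Definition \ref{propdef:topfroh}), namely that left multiplication by $b$ on $F$ equals right multiplication by $\hat{\Phi}^{-1}_{[F,\nabla_F]}(b)$, converts it into $\hat{\Phi}^{-1}_{[F,\nabla_F]}(\mu_E)^{N-2k}$ on the $\Omega_B$-leg, where it multiplies the $\mu_F^{N-2k}$ coming from $\sigma_F$. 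Since $\hat{\Phi}^{-1}_{[F,\nabla_F]}$ is an algebra homomorphism preserving $\Zent(\Omega_B)$ and restricting to an isometric $\ast$-automorphism in degree $0$, the element $\hat{\Phi}^{-1}_{[F,\nabla_F]}(\mu_E)\cdot\mu_F$ again lies in $\mathcal{Z}_{>0}(B)$; taking $k=0$ (so the exponent $N\ge 1$ is nonzero) and invoking uniqueness of conformal factors, it \emph{is} the conformal factor of $(E\otimes_B F,\sigma_{E\otimes_B F},\nabla_{E\otimes_B F})$. Thus $\dPic(B;\star_B)$ is closed under $\otimes$ and
\[
  \mu_{[E\otimes_B F,\,\nabla_{E\otimes_B F}]}
  = \hat{\Phi}^{-1}_{[F,\nabla_F]}\bigl(\mu_{[E,\nabla_E]}\bigr)\cdot\mu_{[F,\nabla_F]}
  = \bigl(\mu_{[E,\nabla_E]}\ract[F,\nabla_F]\bigr)\cdot\mu_{[F,\nabla_F]},
\]
which is precisely the $1$-cocycle identity for the right $\dpic(B)$-action \eqref{eq:confact}. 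Closure under monoidal inversion I would obtain by the analogous but shorter computation: feed the formula \eqref{eq:dpic1} for $\sigma_{\conj{E}}$ together with $\ast$-preservation of $\star_B$ (so $\star_B(\alpha)^\ast=\star_B(\alpha^\ast)$) and the naturality of $\Upsilon_{\Omega_B,E}$ into \eqref{eq:conformal}, apply \eqref{eq:conformal} for $\sigma_E$ in inverted form, and read off $\star_B$-conformality of $(\conj{E},\sigma_{\conj{E}},\nabla_{\conj{E}})$; its factor then comes out as $\hat{\Phi}_{[E,\nabla_E]}\bigl(\mu_{[E,\nabla_E]}\bigr)^{-1}$, consistently with the cocycle identity applied to $[E\otimes_B\conj{E}]=[B]$ and with $\hat{\Phi}$ being a group homomorphism, so $\hat{\Phi}^{-1}_{[\conj{E},\nabla_{\conj E}]}=\hat{\Phi}_{[E,\nabla_E]}$. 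With closure under $\otimes$, $\otimes$-inversion and the unit established, the sub-$2$-group claim, the subgroup claim for $\dpic(B;\star_B)$, and the $1$-cocycle property of $\mu$ all follow as indicated in the first paragraph.

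The main obstacle is the monoidal-product computation: correctly composing $\sigma_E$ and $\sigma_F$ through the two associators, keeping the degree exponents $N-2k$ straight, and—above all—recognising that the factor carried by the first tensorand picks up the twist $\hat{\Phi}^{-1}_{[F,\nabla_F]}$ (rather than appearing untwisted) when it is commuted past $F$ via the Fröhlich automorphism. The remaining steps—the unit, invariance under isomorphism, the inversion case, and the $2$-categorical conclusions—are bookkeeping or follow formally from strict fullness once this computation is in place.
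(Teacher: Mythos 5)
Your proposal is correct and follows essentially the same route as the paper: check the unit, run the tensor-product computation through $\sigma_E$ and $\sigma_F$ so that the first factor's conformal factor picks up the Fr\"{o}hlich twist $\hat{\Phi}^{-1}_{[F,\nabla_F]}$ (yielding both closure under $\otimes$ and the $1$-cocycle identity), and handle monoidal inversion via \eqref{eq:dpic1}, obtaining the factor $\hat{\Phi}_{[E,\nabla_E]}(\mu_{[E,\nabla_E]}^{-1})$. The additional remarks on isomorphism-invariance of the conformal factor are correct bookkeeping that the paper leaves implicit in its definitions.
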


\begin{proof}
	First, note that the monoidal unit \((B,\sigma_B,\nabla_B)\) is \(\star_B\)-conformal with conformal factor \(\mu_{[B,\nabla_B]} = 1\).
	On the one hand, suppose that \((E,\sigma_E,\nabla_E)\) and \((F,\sigma_F,\nabla_F)\) are \(\star_B\)-conformal. Then, given \(k \in \Set{0,\dotsc,N}\), \(\alpha \in \Omega^k_B\), \(x \in E\), and \(y \in F\),
	\begin{align*}
		&\sigma_{E \otimes_B F}\mleft(\star_B(\alpha) \otimes (x \otimes y)\mright)\\
		&= \left(\leg{\sigma_E(\alpha \otimes x)}{0} \otimes \leg{\sigma_F\mleft(\star_B(\leg{\sigma_E(\alpha \otimes x)}{1})  \mu_{[E,\nabla_E]}^{N-2k} \otimes y\mright)}{0}\right)\\ &\quad\quad \otimes \leg{\sigma_F\mleft(\star_B(\leg{\sigma_E(\alpha \otimes x)}{1})  \mu_{[E,\nabla_E]}^{N-2k} \otimes y\mright)}{1}\\
		&= \left(\leg{\sigma_E(\alpha \otimes x)}{0} \otimes \leg{\sigma_F\mleft(\leg{\sigma_E(\alpha \otimes x)}{1} \otimes y\mright)}{0}\right) \\ &\quad\quad \otimes \star_B\mleft(\leg{\sigma_F(\leg{\sigma_E(\alpha \otimes x)}{1}\otimes y)}{1}\mright)  \hat{\Phi}_{[F,\nabla_F]}^{-1}(\mu_{[E,\nabla_E]}^{N-2k})  \mu_{[F,\nabla_F]}^{N-2k}\\
		& = \leg{\sigma_{E \otimes_B F}(\alpha \otimes (x\otimes y))}{0}\\ &\quad\quad \otimes \star_B\mleft(\leg{\sigma_{E \otimes_B F}(\alpha \otimes (x\otimes y))}{1}\mright)  \left(\hat{\Phi}_{[F,\nabla_F]}^{-1}(\mu_{[E,\nabla_E]})  \mu_{[F,\nabla_F]}\right)^{N-2k}.
	\end{align*}
	Hence, the subcategory \(\dPic(B;\star_B)\) is closed under the monoidal product and the map \(\mu\) satisfies the required \(1\)-cocycle identity.
	On the other hand, suppose that \((E,\sigma_E,\nabla_E)\) is \(\star_B\)-conformal.
	Then, given \(k \in \Set{0,\dotsc,N}\), \(\alpha \in \Omega^k_B\), and \(x \in E\),
	\begin{align*}
		\sigma_{\conj{E}}\mleft(\star_B(\alpha) \otimes \conj{x}\mright)
		&= \conj{\leg{\sigma^{-1}_E(x \otimes \star_B(\alpha)^\ast)}{0}} \otimes \leg{\sigma^{-1}_E(x \otimes \star_B(\alpha)^\ast)}{-1}^\ast\\
		&= \conj{\leg{\sigma^{-1}_E(x \otimes \alpha^\ast)}{0}  \mu_{[E,\nabla_E]}^{-N+2k}} \otimes \star_B\mleft(\leg{\sigma^{-1}_E(x \otimes \alpha^\ast)}{-1}^\ast\mright)\\
		&= \mu_{[E,\nabla_E]}^{-N+2k}  \leg{\sigma_{\conj{E}}(\alpha \otimes \conj{x})}{0} \otimes \star_B\mleft(\leg{\sigma_{\conj{E}}(\alpha \otimes \conj{x})}{1}\mright)\\
		&= \leg{\sigma_{\conj{E}}(\alpha \otimes \conj{x})}{0} \otimes \star_B\mleft(\leg{\sigma_{\conj{E}}(\alpha \otimes \conj{x})}{1}\mright)  \hat{\Phi}_{[E,\nabla_E]}(\mu_{[E,\nabla_E]}^{-1})^{N-2k}.
	\end{align*}
	Hence, the the subcategory \(\dPic(B;\star_B)\) is also closed under monoidal inversion.
\end{proof}


\begin{example}\label{ex:classical6}
	Continuing from Example \ref{ex:classical4}, let \(\operatorname{Conf}_+(X,g)\) denote the group of conformal orientation-preserving diffeomorphisms of \((X,g)\).
	On the one hand, for every Hermitian line bundle with unitary connection \((\cE,\nabla_{\cE})\) on \(X\), \((\Gamma(\cE),\operatorname{flip},\nabla_{\cE})\) is \(\star_g\)-conformal with conformal factor \(1\).
	On the other, for every \(\phi \in \Diff(X)\), \(\hat{\tau}(0,(\phi^{-1})^\ast)\) is \(\star_g\)-conformal if and only if \(\phi\) is conformal and orientation-preserving, in which case
	\(
		\mu \circ \pi_0(\hat{\tau})(0,(\phi^{-1})^\ast) = \sqrt{\tfrac{\phi^\ast g}{g}}
	\).
	Hence, the isomorphism of Example \ref{ex:classical3} restricts to an isomorphism
	\(
		\operatorname{Conf}_+(X,g) \ltimes \check{H}^2(X) \to \dPic(C^\infty(X);\star_g),
	\) 
	with respect to which \(\mu : \dPic(C^\infty(X);\star_g) \to C^\infty(X,(0,\infty))\) reduces to the map
	\[
		\mleft((\phi,[\cE,\nabla_{\cE}]) \mapsto \sqrt{\tfrac{\phi^\ast g}{g}}\mright) : \operatorname{Conf}_+(X,g) \ltimes \check{H}^2(X) \to C^\infty(X,(0,\infty)).
	\]
\end{example}

In light of Theorem \ref{thm:horizontal} and Proposition-Definition \ref{propdef:vertical}, we may also consider conformality of horizontally differentiable quantum principal \(\U\)-bundles over \(B\).

\begin{definition}
	Let \(\star_B\) be a Hodge operator on \((\Omega_B,\du_B)\).
	Let \((P;\Omega_{P,\hor},\du_{P,\hor})\) be a horizontally differentiable quantum principal \(\U\)-bundle over \(B\) with Fr\"{o}hlich automorphism \(\hat{\Phi}_P\); define a right \(\bZ\)-action on \(\mathcal{Z}_{>0}(B)\) by
	\begin{equation}\label{eq:horizontalact}
		\forall \mu \in \mathcal{Z}_{>0}(B), \, \forall k \in \bZ, \quad \mu \ract k \coloneqq \hat{\Phi}_P^{-k}(\mu).
	\end{equation}
	Then \((P;\Omega_{P,\hor},\du_{P,\hor})\) is \emph{\(\star_B\)-conformal} if there exists a (necessarily unique) group \(1\)-cocycle \(\mu_P : \bZ \to \mathcal{Z}_{>0}(B)\), the \emph{conformal factor} of \((P;\Omega_{P,\hor},\du_{P,\hor})\), such that
	\begin{multline}\label{eq:conformalhor}
		\forall (m,j) \in \bN_0 \times \bZ, \, \forall \alpha \in \Omega^m_B, \, \forall p \in P_j,\\
		{\star_B}(\alpha) p = \leg{\hat{\ell}_P\mleft(\alpha p\mright)}{0} {\star_B} \mleft(\leg{\hat{\ell}_P\mleft(\alpha p\mright)}{1}\mright)  \mu_P(j)^{N-2k},
	\end{multline}
	where \(\hat{\ell}_P : \Omega_{P,\hor} \to P \otimes_B \Omega_B\) is the \(B\)-bimodule isomorphism of Proposition \ref{prop:assoclineconn}.
	We denote by \(\dCirc_{\hor}(B;\star_B)\) the strictly full subcategory of \(\dCirc_\hor(B)\) with \(\star_B\)-conformal objects.
\end{definition}

\begin{proposition}\label{prop:conformal}
	Let \(\star_B\) be a Hodge operator on \((\Omega_B,\du_B)\).
	The essential image of \(\dCirc_{\hor}(B;\star_B)\) under the functor \(\hat{\cL}\) is \(\grp{Hom}(\bZ,\dPic(B;\star_B))\), so that the functor \(\epsilon_1 \circ \hat{\cL} : \dCirc_{\hor}(B;\star_B) \to \dPic(B;\star_B)\) is an equivalence of categories.
	In particular, if \((P;\Omega_{P,\hor},\du_{P,\hor})\) is a \(\star_B\)-conformal horizontally differentiable quantum principal \(\U\)-bundle over \(B\), then its conformal factor \(\mu_P\) satisfies
	\begin{equation}
		\mu_P = \mu \circ \pi_0\mleft(\hat{\cL}(P;\Omega_{P,\hor},\du_{P,\hor})\mright).
	\end{equation}
\end{proposition}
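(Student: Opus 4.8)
The plan is to reduce the whole statement to one key claim: a horizontally differentiable quantum principal \(\U\)-bundle \((P;\Omega_{P,\hor},\du_{P,\hor})\) over \(B\) is \(\star_B\)-conformal if and only if the Hermitian line \(B\)-bimodule with connection \(\hat{\cL}(P;\Omega_{P,\hor},\du_{P,\hor})(1)\) is \(\star_B\)-conformal, in which case \(\mu_P = \mu\circ\pi_0(\hat{\cL}(P;\Omega_{P,\hor},\du_{P,\hor}))\). Granting this, the proposition follows formally. Since \(\dPic(B;\star_B)\) is a strictly full sub-\(2\)-group of \(\dPic(B)\), the subcategory \(\grp{Hom}(\bZ,\dPic(B;\star_B))\) is strictly full in \(\grp{Hom}(\bZ,\dPic(B))\); moreover, because \(\dPic(B;\star_B)\) contains the unit and is closed under the monoidal product and monoidal inversion, Corollary~\ref{cor:abstractpimsner} shows that a homomorphism \(F\colon\bZ\to\dPic(B)\) factors through \(\dPic(B;\star_B)\) precisely when \(F(1)\) is \(\star_B\)-conformal. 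Combining this with the key claim and with essential surjectivity of \(\hat{\cL}\) (Theorem~\ref{thm:horizontal}) identifies the essential image of \(\dCirc_\hor(B;\star_B)\) under \(\hat{\cL}\) with \(\grp{Hom}(\bZ,\dPic(B;\star_B))\); restricting the equivalence \(\hat{\cL}\) to these strictly full subcategories and composing with the evaluation equivalence \(\epsilon_1\colon\grp{Hom}(\bZ,\dPic(B;\star_B))\to\dPic(B;\star_B)\) of Corollary~\ref{cor:abstractpimsner} (applied to the coherent \(2\)-group \(\dPic(B;\star_B)\)) gives the asserted equivalence \(\epsilon_1\circ\hat{\cL}\colon\dCirc_\hor(B;\star_B)\to\dPic(B;\star_B)\), and the identity \(\mu_P=\mu\circ\pi_0(\hat{\cL}(P;\Omega_{P,\hor},\du_{P,\hor}))\) is the second half of the key claim.

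The substantive work is therefore the key claim, which I would prove by directly translating \eqref{eq:conformalhor} into \eqref{eq:conformal}. Recall from Proposition~\ref{prop:assoclineconn} that, with \(\hat{\iota}_P\) suppressed, \(\hat{\cL}(P;\Omega_{P,\hor},\du_{P,\hor})(j)=(\cL(P)(j),\sigma_{P;j},\nabla_{P;j})\), where \(\sigma_{P;j}(\beta\otimes p)=\hat{\ell}_P(\beta p)\) for \(\beta\in\Omega_B\), \(p\in P_j\), and \(\hat{\ell}_P\colon\Omega_{P,\hor}\to P\otimes_B\Omega_B\) is a \(\U\)-equivariant isomorphism of \(B\)-bimodules with \(\hat{\ell}_P^{-1}(p\otimes\beta)=p\beta\); recall also, from the proof of Proposition-Definition~\ref{propdef:vertical} and homomorphy of the Fröhlich homomorphism, that the Fröhlich automorphism of \(\cL(P)(j)\) equals \(\hat{\Phi}_P^{j}\). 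In particular, the right \(\bZ\)-action on \(\mathcal{Z}_{>0}(B)\) of \eqref{eq:horizontalact} is exactly the pullback along \(\pi_0(\hat{\cL}(P;\Omega_{P,\hor},\du_{P,\hor}))\) of the right \(\dpic(B)\)-action of \eqref{eq:confact}, so a \(1\)-cocycle \(\mu_P\colon\bZ\to\mathcal{Z}_{>0}(B)\) for the former is precisely the pullback of a \(1\)-cocycle on \(\dpic(B;\star_B)\) for the latter.

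Now fix \(j\in\bZ\), \(m\in\{0,\dotsc,N\}\), \(\alpha\in\Omega^m_B\), \(p\in P_j\) and apply \(\hat{\ell}_P\) to both sides of \eqref{eq:conformalhor}. On the left, since \(\star_B(\alpha)\in\Omega_B\) acts on \(P_j\) through \(\hat{\iota}_P\), we get \(\hat{\ell}_P(\star_B(\alpha)p)=\sigma_{P;j}(\star_B(\alpha)\otimes p)\); on the right, using that \(\hat{\ell}_P\) is left \(P\)-linear and right \(\Omega_B\)-linear together with \(\hat{\ell}_P(\alpha p)=\sigma_{P;j}(\alpha\otimes p)\), we get \(\leg{\sigma_{P;j}(\alpha\otimes p)}{0}\otimes\star_B(\leg{\sigma_{P;j}(\alpha\otimes p)}{1})\,\mu_P(j)^{N-2m}\). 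This is verbatim \eqref{eq:conformal} for \((\cL(P)(j),\sigma_{P;j},\nabla_{P;j})\) with conformal factor \(\mu_P(j)\). Hence \((P;\Omega_{P,\hor},\du_{P,\hor})\) satisfies \eqref{eq:conformalhor} with \(1\)-cocycle \(\mu_P\) if and only if each \(\hat{\cL}(P;\Omega_{P,\hor},\du_{P,\hor})(j)\) is \(\star_B\)-conformal with conformal factor \(\mu_P(j)\); since \(\dPic(B;\star_B)\) is a sub-\(2\)-group and \(\cL(P)(0)=B\) is always \(\star_B\)-conformal with factor \(1\), \(\star_B\)-conformality of \(\hat{\cL}(P;\Omega_{P,\hor},\du_{P,\hor})(1)\) forces that of every \(\hat{\cL}(P;\Omega_{P,\hor},\du_{P,\hor})(j)\), with conformal factors \(\mu\circ\pi_0(\hat{\cL}(P;\Omega_{P,\hor},\du_{P,\hor}))(j)\) by the \(1\)-cocycle property of \(\mu\). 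This closes the equivalence in both directions and identifies \(\mu_P\) with \(\mu\circ\pi_0(\hat{\cL}(P;\Omega_{P,\hor},\du_{P,\hor}))\). Finally, for the reverse inclusion of essential images, take any homomorphism \(\hat{F}\colon\bZ\to\dPic(B;\star_B)\); the bundle \(\hat{\Sigma}(\hat{F})\) produced by the weak inverse \(\hat{\Sigma}\) of Theorem~\ref{thm:horizontal} satisfies \(\hat{\cL}(\hat{\Sigma}(\hat{F}))(1)\cong\hat{F}(1)\), which is \(\star_B\)-conformal, so by the key claim \(\hat{\Sigma}(\hat{F})\in\dCirc_\hor(B;\star_B)\) and \(\hat{F}\cong\hat{\cL}(\hat{\Sigma}(\hat{F}))\).

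I expect the only genuinely delicate point to be the bookkeeping in the third paragraph: keeping straight which tensor leg of \(\hat{\ell}_P(\alpha p)\) lives in \(P\) and which in \(\Omega_B\), and exploiting that \(\hat{\ell}_P\) is simultaneously left \(P\)-linear, right \(\Omega_B\)-linear, and compatible with the \(B\)-bimodule structure on \(\cL(P)(j)\). Everything else is a formal consequence of Corollary~\ref{cor:abstractpimsner}, Theorem~\ref{thm:horizontal}, and the fact that both \(\dPic(B;\star_B)\) and \(\grp{Hom}(\bZ,\dPic(B;\star_B))\) are strictly full sub-\(2\)-groups of the respective ambient coherent \(2\)-groups.
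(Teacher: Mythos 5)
Your proof is correct: the paper states Proposition \ref{prop:conformal} without an explicit proof, treating it as an immediate consequence of the definitions together with Theorem \ref{thm:horizontal} and Corollary \ref{cor:abstractpimsner}, and your argument supplies exactly the intended details. In particular, the degreewise translation of \eqref{eq:conformalhor} into \eqref{eq:conformal} via \(\hat{\ell}_P\), combined with the facts that \(\dPic(B;\star_B)\) is a strictly full sub-\(2\)-group, that its conformal factors form the \(1\)-cocycle \(\mu\), and that the Fr\"{o}hlich automorphism of \(\cL(P)(j)\) is \(\hat{\Phi}_P^j\), is precisely the expected route, and it yields \(\mu_P = \mu \circ \pi_0\mleft(\hat{\cL}(P;\Omega_{P,\hor},\du_{P,\hor})\mright)\) as you say.
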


Thus, a Hermitian line \(B\)-bimodule with connection \((E,\sigma_E,\nabla_E)\) is \(\star_B\)-conformal if and only if \((P;\Omega_{P,\hor},\du_{P,\hor}) \coloneqq (B;\Omega_B,\du_B) \rtimes_{(E,\sigma_E,\nabla_E)} \bZ\) is \(\star_B\)-conformal, in which case, the conformal factor \(\mu_P\) of \((P;\Omega_{P,\hor},\du_{P,\hor})\) is determined by \(\mu_P(1) = \mu_{[E,\nabla_E]}\).

\subsection{The lifting problem for Riemannian structures \texorpdfstring{\emph{via}}{via} Hodge operators}\label{sec:4.2}

We now attack the problem of lifting Riemannian geometries in terms of Hodge operators to the total spaces of NC principal \(\U\)-bundles with connection.
The existence of such lifts will be governed by conformality in our NC sense, and the resulting lifted Riemannian geometries will necessarily involve modular phenomena in both vertical and horizontal directions that are generally non-trivial and distinct.

In what follows, let \(\kappa > 0\), let \((P;\Omega_P,\du_P;\Pi)\) be a \(\kappa\)-differentiable quantum principal \(\U\)-bundle with connection over \(B\), let \(\vartheta\) be the connection \(1\)-form of \(\Pi\), and let \(\hat{\Phi}_P\) be the Fr\"{o}hlich automorphism of \(\Hor_\kappa(P;\Omega_P,\du_P;\Pi) = (P,\Omega_{P,\hor},\du_{P,\hor})\).

We begin with a general definition of \(\U\)-equivariant Hodge operator on a NC total space that draws on standard requirements from the classical case: that the canonical surjection onto the base be a Riemannian submersion, that the principal Ehresmann connection be fibrewise orthogonal, that the fibres all have unit length, and that the total space have the 'fibre-first' orientation.
However, we carefully control possible failure of the Hodge operator to be right linear and \(\ast\)-preserving in terms of (possibly distinct) modular automorphisms in the vertical and horizontal directions.

On the one hand, we define a \emph{modular automorphism} of \(\Omega_P\) is a \(\U\)-equivariant automorphism \(\Delta\) of \(\Omega_P\) as a unital graded \(\bC\)-algebra satisfying \(\rest{\Delta}{\Omega_P^{\U}} = \id\) and 
\begin{equation}
	\forall j \in \bZ, \, \forall p \in P_j, \quad p^\ast \Delta(p) \geq 0;
\end{equation}
for example, given \(t \in (0,\infty)\), we may define a modular automorphism \(\Lambda_t\) of \(\Omega_P\) by 
\begin{equation}\label{eq:modularscale}
	\forall (m,j) \in \bN_0 \times \bZ, \, \forall \omega \in \Omega^m_j, \quad \Lambda_t(\omega) \coloneqq t^{-j}\omega.
\end{equation}
On the other hand, we use the connection \(\Pi\) to define a convenient bigrading \((\Omega^{j,k}_P)_{(j,k) \in \bN_0^2}\) of \(\Omega_P\) as follows.
For each \(k \in \Set{0,\dotsc,N}\), let
\begin{equation}\label{eq:bigrading}
	\Omega^{0,k}_P \coloneqq \Pi(\Omega^k_P) = \Omega^k_{P,\hor}, \quad \Omega^{1,k}_P \coloneqq (\id-\Pi)(\Omega^{k+1}_P) = \vartheta \cdot \Omega^k_{P,\hor},
\end{equation}
and for \((j,k) \notin \Set{0,1} \times \Set{0,\dotsc,N}\), set \(\Omega^{j,k}_P \coloneqq 0\).
Then \((\Omega^{j,k}_P)_{(j,k) \in \bN_0^2}\) satisfies:
\begin{align*}
	\forall m \in \Set{0,\dotsc,N+1}, && \bigoplus\nolimits_{j=0}^1 \bigoplus\nolimits_{k=0}^{m-1} \Omega^{j,k}_P &= \Omega^m_P,\\
	\forall (j_1,k_1),(j_2,k_2) \in \bN_0^2, && \Omega^{j_1,k_1}_P \cdot \Omega^{j_2,k_2}_P &= \Omega^{j_1+j_2,k_1+k_2}_P,\\
	\forall (j,k) \in \bN_0^2, && \left(\Omega^{j,k}_P\right)^\ast &= \Omega^{j,k}_P.
\end{align*}

\begin{definition}
	Let \((\Delta_\ver,\Delta_\hor)\) be a commuting pair of modular automorphisms of \(\Omega_P\) that commute with \(\Pi\).
	A \emph{\((\Delta_\ver,\Delta_\hor)\)-modular Hodge operator} on \((\Omega_P,\du_P)\) with respect to \(\Pi\) is a \(\U\)-equivariant left \(P\)-linear map that commutes with both \(\Delta_\ver\) and \(\Delta_\hor\), satisfies
	\begin{align}
		\forall (j,k) \in \Set{0,1} \times \Set{0,\dotsc,N}, && {\star}\mleft(\Omega^{j,k}_P\mright) &\subseteq \Omega^{1-j,N-k}_P, \label{eq:totalhodgeswap}\\
		\forall m \in \Set{0,\dotsc,N+1}, && \rest{\star^2}{\Omega^m_P} &= (-1)^{m(N+1-m)}\id_{\Omega^m_P}, \label{eq:totalhodgeinv}
	\end{align}
	and satisfies, for every \((j,k) \in \Set{0,1} \times\Set{0,\dotsc,N}\),
	\begin{align}
		\forall p \in P, \, \forall \omega \in \Omega^{j,k}_P, && {\star}(\omega p) &= {\star}(\omega) \cdot (\Delta_\hor^{2k-N} \circ \Delta_\ver^{2j-1})(p) ,\label{eq:totalhodgeright}\\
		\forall \omega \in \Omega^{j,k}_P, && {\star}(\omega)^\ast &= \star\mleft((\Delta_\hor^{2k-N} \circ \Delta_\ver^{2j-1})(\omega)^\ast\mright),\label{eq:totalhodgestar}\\
		\forall \omega \in \Omega^{j,k}_P, && {\star}\mleft(\delta^{j,0}\omega\mright) &= (-1)^{N-k}  {\star}(\omega\vartheta)\vartheta, \label{eq:totalhodgelift}\\ 
		\forall \omega,\eta \in \Omega^{j,k}_P, && \omega \cdot {\star}(\eta) &= {\star}^{-1}(\omega) \cdot (\Delta^{2k-N}_\hor \circ \Delta^{2j-1}_\ver)(\eta) \label{eq:totalhodgesym}.
	\end{align}
	Hence, in this case, the \emph{inverse metric} induced by the \((\Delta_\ver,\Delta_\hor)\)-modular Hodge operator \(\star\) is the \(\bR\)-bilinear map \(g : \Omega_P \times \Omega_P \to P\) defined by
	\begin{equation}
		\forall \omega,\eta \in \Omega_P, \quad g(\omega,\eta) \coloneqq {\star}\mleft(\omega^\ast \cdot {\star}(\eta)\mright).
	\end{equation}
\end{definition}

Notwithstanding the appearance of modular automorphisms, the following properties of inverse metrics will suffice for our purposes.

\begin{proposition}\label{prop:totalinversemetric}
	Let \(\Delta_\ver\) and \(\Delta_\hor\) be a commuting pair of modular automorphisms of \(\Omega_P\) that commute with \(\Pi\), and let \(\star\) be a \((\Delta_\ver,\Delta_\hor)\)-modular Hodge operator on \((\Omega_P,\du_P)\) with respect to \(\Pi\).
	Then the inverse metric \(g : \Omega_P \times \Omega_P \to P\) is \(\U\)-equivariant in the sense that
	\begin{equation}
		\forall (m,j), (n,k) \in \bN_0 \times \bZ, \, \forall \omega \in (\Omega^m_P)_j, \, \forall \eta \in (\Omega^n_P)_k, \quad g(\omega,\eta) \in P_{-j+k},
	\end{equation}
	makes \(\Pi\) into an orthogonal projection in the sense that
	\begin{equation}\label{eq:totalorthogonality}
		\forall \omega,\eta \in \Omega_P, \quad g(\omega,\eta) = g(\Pi(\omega),\Pi(\eta)) + g((\id-\Pi)(\omega),(\id-\Pi)(\eta)),
	\end{equation}
	and satisfies, for each \((j,k) \in \Set{0,1} \times \Set{0,\dotsc,N}\),
	\begin{align}
		\forall \omega,\eta \in \Omega^{j,k}_P, \, \forall p \in P, && g(\omega,\eta \cdot p) &= g(\omega,\eta) \cdot (\Delta_\ver^{2j} \circ \Delta_\hor^{2k})(p),\\
		\forall \omega,\eta \in \Omega^{j,k}_P, && g(\omega,\eta)^\ast &= (\Delta_\ver^{2j} \circ \Delta_\hor^{2k})\mleft(g(\eta,\omega)\mright).\label{eq:totalmodularsymmetry}
	\end{align}	
\end{proposition}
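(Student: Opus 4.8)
The plan is to derive each stated property of the inverse metric $g$ directly from the defining axioms \eqref{eq:totalhodgeswap}--\eqref{eq:totalhodgesym} of a $(\Delta_\ver,\Delta_\hor)$-modular Hodge operator, together with the structural facts about the bigrading $(\Omega^{j,k}_P)$ recorded just before the definition. Since every computation reduces to the case of homogeneous elements with respect to both the $\bN_0$-grading of $\Omega_P$ and the $\bZ$-grading coming from the $\U$-action, I would fix $(j,k) \in \Set{0,1} \times \Set{0,\dotsc,N}$ and $\omega, \eta \in \Omega^{j,k}_P$ throughout, treating the general case by bilinearity and by the orthogonality statement \eqref{eq:totalorthogonality}, which itself must be proved first among the substantive claims.

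First I would establish $\U$-equivariance: for $\omega \in (\Omega^m_P)_j$ and $\eta \in (\Omega^n_P)_k$, the element $\star(\eta)$ lies in $(\Omega_P)_k$ by $\U$-equivariance of $\star$, so $\omega^\ast \star(\eta) \in (\Omega_P)_{-j+k}$, and applying $\star$ once more preserves the $\U$-degree; hence $g(\omega,\eta) \in P_{-j+k}$. Next, for the orthogonality relation \eqref{eq:totalorthogonality}, I would use that $\star$ swaps the two `vertical strata' by \eqref{eq:totalhodgeswap}: writing $\omega = \Pi(\omega) + (\id-\Pi)(\omega)$ with the two summands in $\Omega^{0,\bullet}_P$ and $\Omega^{1,\bullet}_P$ respectively, the mixed terms $g(\Pi(\omega),(\id-\Pi)(\eta))$ and $g((\id-\Pi)(\omega),\Pi(\eta))$ involve products $\Omega^{0,k}_P \cdot \Omega^{0,N-k}_P$ against $\Omega^{1,k'}_P \cdot \Omega^{1,N-k'}_P$ landing in strata forced to vanish by the bidegree bookkeeping $\Omega^{j_1,k_1}_P \cdot \Omega^{j_2,k_2}_P = \Omega^{j_1+j_2,k_1+k_2}_P$ and the vanishing of $\Omega^{j,k}_P$ for $j \notin \Set{0,1}$; I would need to check this stratum-counting carefully, since this is precisely where the `fibre-first' orientation and the structure of $\vartheta$ enter.

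For the right-twisted linearity, given $p \in P$ and $\omega, \eta \in \Omega^{j,k}_P$, I compute $g(\omega, \eta p) = \star(\omega^\ast \star(\eta p)) = \star(\omega^\ast \star(\eta) \cdot (\Delta_\hor^{2k-N}\circ\Delta_\ver^{2j-1})(p))$ by \eqref{eq:totalhodgeright}; then, noting that $\star(\eta) \in \Omega^{1-j,N-k}_P$ so that $\omega^\ast\star(\eta) \in \Omega^{1,N}_P$, I apply \eqref{eq:totalhodgeright} again to the outer $\star$ to pull out the factor $(\Delta_\hor^{2N-N}\circ\Delta_\ver^{2\cdot 1-1})(\Delta_\hor^{2k-N}\circ\Delta_\ver^{2j-1})(p) = (\Delta_\ver^{2j}\circ\Delta_\hor^{2k})(p)$, using that the two modular automorphisms commute. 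The symmetry \eqref{eq:totalmodularsymmetry} follows analogously: $g(\omega,\eta)^\ast = \star(\omega^\ast\star(\eta))^\ast = \star\big((\Delta_\hor^{2N-N}\circ\Delta_\ver^{2\cdot 1 -1})(\omega^\ast\star(\eta))^\ast\big)$ by \eqref{eq:totalhodgestar} applied in top degree, then I rewrite $(\omega^\ast\star(\eta))^\ast = \star(\eta)^\ast \omega$, apply \eqref{eq:totalhodgestar} to $\star(\eta)^\ast$ and \eqref{eq:totalhodgesym} to reorganise $\omega \cdot \star(\cdots)$ into $\star^{-1}(\omega)\cdot(\cdots)$, and collect modular factors; the bookkeeping of exponents of $\Delta_\ver$ and $\Delta_\hor$ is the only delicate part, and I expect a clean cancellation to $(\Delta_\ver^{2j}\circ\Delta_\hor^{2k})(g(\eta,\omega))$.

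The main obstacle will be the orthogonality identity \eqref{eq:totalorthogonality} and, relatedly, keeping the modular exponents consistent across the repeated applications of \eqref{eq:totalhodgeright}, \eqref{eq:totalhodgestar}, and \eqref{eq:totalhodgesym}: each axiom carries a sign or modular twist depending on the bidegree $(j,k)$ of its argument, and when the outer $\star$ acts on $\omega^\ast\star(\eta) \in \Omega^{1,N}_P$ one must be sure the exponents $2\cdot 1 - 1 = 1$ and $2N - N = N$ are being used, not those of the inner argument. I would organise the proof as a short lemma fixing, once and for all, the bidegree of $\omega^\ast\star(\eta)$ and the resulting modular twist, after which all four displayed identities become one- or two-line consequences. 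The hypothesis that $\Delta_\ver$ and $\Delta_\hor$ commute with each other and with $\Pi$ is used silently throughout to move these automorphisms past one another and past the projection-induced splitting.
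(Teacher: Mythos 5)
Your plan is essentially the paper's own proof: the paper likewise treats \(\U\)-equivariance and the twisted right \(P\)-linearity as immediate double applications of \eqref{eq:totalhodgeright}, kills both mixed terms in \eqref{eq:totalorthogonality} by bidegree bookkeeping after applying \eqref{eq:totalhodgeswap} (one term because \((\id-\Pi)(\Omega_P)^2 = 0\), i.e.\ \(\Omega^{2,\bullet}_P = 0\), the other because the horizontal degree overflows past \(N\), i.e.\ \(\Omega^{0,N+1}_P = 0\) --- note that your cited vanishing reason \(j \notin \Set{0,1}\) only covers the first of these), and derives \eqref{eq:totalmodularsymmetry} by precisely your chain: \eqref{eq:totalhodgestar} at the top bidegree, the graded flip producing the sign \((-1)^{(j+k)(N+1-j-k)}\), \eqref{eq:totalhodgestar} again at bidegree \((j,k)\), and \eqref{eq:totalhodgesym}, with the two signs cancelling. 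So the approach is the same; when writing it out, just complete the stratum count for the second mixed term and carry the graded sign explicitly through the modular bookkeeping.
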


\begin{proof}
	The non-trivial claims are equations \ref{eq:totalorthogonality} and \ref{eq:totalmodularsymmetry}.
	On the one hand, let \(\omega,\eta \in \Omega_P\) be given; since \((\id-\Pi)(\Omega_P)^2 = 0\), it now follows by \eqref{eq:totalhodgeswap} that
	\begin{align*}
		\star{}\mleft(g(\omega,\eta)\mright)
		&= \left(\Pi(\omega^\ast) + (\id-\Pi)(\omega^\ast)\right) \star{}\mleft(\Pi(\eta) + (\id-\Pi)(\eta)\mright)\\
		&= \left(\Pi(\omega^\ast)\right) \star{}\mleft(\Pi(\eta)\mright) + \left(\id-\Pi)(\omega^\ast)\right) \star{}\mleft((\id-\Pi)(\eta)\mright)\\
		&= \star{}\mleft(g(\Pi(\omega),\Pi(\eta)) + g((\id-\Pi)(\omega),(\id-\Pi)(\eta))\mright).
	\end{align*}
	On the other hand, let \((j,k) \in \Set{0,1} \times \Set{0,\dotsc,N}\) and \(\omega,\beta \in \Omega^{j,k}_P\); by \eqref{eq:totalhodgesym},
	\begin{align*}
		{\star}\mleft(g(\omega,\eta)^\ast\mright)
		&= \Delta_\ver \circ \Delta_\hor^N\mleft({\star}(g(\omega,\eta))^\ast\mright)\\
		&= \Delta_\ver \circ \Delta_\hor^N\mleft((-1)^{(j+k)(N+1-j-k)}{\star}(\eta)^\ast \omega\mright)\\
		&= \Delta_\ver \circ \Delta_\hor^N\mleft((-1)^{(j+k)(N+1-j-k)}({\star} \circ \Delta^{2j-1}_\ver \circ \Delta^{2k-N}_\hor)(\eta^\ast) \cdot \omega\mright)\\
		&= \Delta^{2j}_\ver \circ \Delta^{2k}_\hor\mleft(\eta^\ast \cdot {\star}(\omega)\mright)\\
		&= {\star}\mleft(\Delta^{2j}_\ver \circ \Delta^{2k}_\hor\mleft(g(\eta,\omega)\mright)\mright). \qedhere
	\end{align*}
\end{proof}

At last, we give our proposed notion of lifted Riemannian geometry.

\begin{definition}
	A \emph{total Riemannian geometry} on \((P;\Omega_P,\du_P;\Pi)\) is a quadruple \((\Delta_\ver,\Delta_\hor,\star,\tau)\), where \((\Delta_\ver,\Delta_\hor)\) is a commuting pair of modular automorphisms of \(\Omega_P\) that commute with \(\Pi\), where \(\star\) is a \((\Delta_\ver,\Delta_\hor)\)-modular Hodge operator on \((\Omega_P,\du_P)\) with respect to \(\Pi\) whose inverse metric restricts, for each \((m,j) \in \bN_0 \times \bZ\), to a \(B\)-valued inner product on \((\Omega^m_P)_j\) admits a basis and satisfies
	\begin{equation}\label{eq:totalmetricbdd}
		\forall b \in B, \, \forall \omega \in (\Omega^m_P)_j, \quad g(b  \omega, b \omega) \leq \norm{b}^2 g(\omega,\omega),
	\end{equation}
	and where \(\tau\) is a \(\U\)-equivariant bounded state on \(P\) that satisfies
	\begin{align}
		\forall \omega \in \Omega^N_{P}, && (\tau \circ \star \circ \du)(\omega) &= 0; \label{eq:totalclosed}\\
		\forall p \in P, && \sup\Set{\tau(q^\ast p^\ast p q) \given q \in P, \, \tau(q^\ast q) \leq 1} &= \norm{p}^2. \label{eq:totalfaithful}
	\end{align}
\end{definition}

\begin{definition}
	Let \((\star_B,\tau_B)\) be a Riemannian geometry on \(B\) with respect to \((\Omega_B,\du_B)\), let \((\Delta_\ver,\Delta_\hor,\star,\tau)\) be a total Riemannian geometry on \((P;\Omega_P,\du_P;\Pi)\), and suppose that
	\begin{align}
		\forall \beta \in \Omega_B, && {\star}(\vartheta\beta) &= {\star_B}(\beta), \label{eq:hodgerest}\\
		\forall b \in B, && \tau(b) &= \tau_B(b). \label{eq:staterest}
	\end{align} 
	We call \((\star_B,\tau_B)\) a \emph{restriction} of \((\Delta_\ver,\Delta_\hor,\star,\tau)\) to \((B;\Omega_B,\du_B)\) and we call \((\Delta_\ver,\Delta_\hor,\star,\tau)\) a \emph{lift} of \((\star_B,\tau_B)\) to \((P;\Omega_P,\du_P;\Pi)\).
\end{definition}

Our definitions are justified by the following existence and uniqueness theorem, which characterizes existence of lifts in terms of conformality and demonstrates the inexorability of non-trivial modular phenomena outside of an narrow r\'{e}gime.

\begin{theorem}\label{thm:totalgeometry}
	Let \(\Lambda_\kappa\) denote the modular automorphism of \(\Omega_P\) defined by \eqref{eq:modularscale}.
	\begin{enumerate}[leftmargin=*]
		\item Suppose that \((\Delta_\ver,\Delta_\hor,\star,\tau)\) is a total Riemannian geometry on \((P;\Omega_P,\du_P;\Pi)\).
			There exists a unique restriction \((\star_B,\tau_B)\) of \((\Delta_\ver,\Delta_\hor,\star,\tau)\) to \((B;\Omega_B,\du_B)\).
			Moreover, it follows that \(\Delta_\ver = \Lambda_\kappa\) and that \((P;\Omega_{P,\hor},\du_{P,\hor})\) is \(\star_B\)-conformal with conformal factor \(\mu_P\) satisfying
			\begin{equation}\label{eq:liftedmod}
				\forall (m,j) \in \bN_0 \times \bZ, \, \forall \omega \in (\Omega^m_P)_j, \quad \Delta_\hor(\omega) = \omega  \mu_P(j)
			\end{equation}
		\item Let \((\star_B,\tau_B)\) be a Riemannian geometry on \((B;\Omega_B,\du_B)\), and suppose that \((P;\Omega_{P,\hor},\du_{P,\hor})\) is \(\star_B\)-conformal with conformal factor \(\mu_P\).
			Hence, define a modular automorphism \(\Delta_\hor\) of \(\Omega_P\) by \eqref{eq:liftedmod}.
			There exists a unique \((\Lambda_\kappa,\Delta_\hor)\)-modular Hodge operator \(\star\) on \((\Omega_P,\du_P)\) with respect to \(\Pi\) and faithful \(\U\)-equivariant bounded state \(\tau\) on \(P\) making \((\Lambda_\kappa,\Delta_\hor,\star,\tau)\) into a lift of \((\star_B,\tau_B)\) to \((P;\Omega_P,\du_P;\Pi)\), namely
		\begin{align}
			\forall p \in P, \, \forall k \in \Set{0,\dotsc,N}, \, \forall \beta \in \Omega^k_B, && {\star_P}\mleft(p\beta\mright) &\coloneqq (-1)^k p \vartheta {\star_B}(\beta), \label{eq:totalstara}\\
			\forall p \in P, \, \forall k \in \Set{0,\dotsc,N}, \, \forall \beta \in \Omega^k_B, && {\star_P}\mleft(p \vartheta\beta\mright) &\coloneqq p {\star_B}(\beta), \label{eq:totalstarb}\\
			\forall j \in \bZ, \, \forall p \in P_j, && \tau_P(p) &\coloneqq \tau_B\mleft(\delta^{j,0}p\mright).\label{eq:totalstate}
		\end{align}
	\end{enumerate}
\end{theorem}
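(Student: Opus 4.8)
The plan is to treat the two parts separately, observing that in each direction the compatibility conditions \eqref{eq:hodgerest} and \eqref{eq:staterest} rigidly determine the objects involved, so that the substance lies in proving well-definedness, the defining axioms, and the two structural rigidity statements (that $\Delta_\ver$ must be $\Lambda_\kappa$ and that horizontal conformality is forced). Throughout I would work with the bigrading $(\Omega^{j,k}_P)$ of \eqref{eq:bigrading} and the $\Omega_{P,\hor}$-bimodule decomposition $\Omega_P = \Omega_{P,\hor} \oplus \vartheta\cdot\Omega_{P,\hor}$ furnished by Proposition \ref{prop:connection}, together with the centrality relations \eqref{eq:conncent} for $\vartheta$.

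For part (1): the restriction is forced, since \eqref{eq:hodgerest} compels $\star_B \coloneqq \left(\beta \mapsto \star(\vartheta\beta)\right)$ on $\Omega_B = \Omega_{P,\bas}$ and \eqref{eq:staterest} compels $\tau_B \coloneqq \rest{\tau}{B}$; uniqueness is then immediate and only well-definedness and the axioms of a Riemannian geometry remain. I would note that $\vartheta\cdot\Omega_B$ is exactly the weight-$0$ part of $\Omega^{1,\bullet}_P$, so that by \eqref{eq:totalhodgeswap} the map $\star$ carries it into the weight-$0$ part of $\Omega^{0,\bullet}_P = \Omega_{P,\hor}$, i.e.\ into $\Omega_{P,\bas} = \Omega_B$, swapping degrees $k \leftrightarrow N-k$; left $B$-linearity of $\star_B$ is left $P$-linearity of $\star$, right $B$-linearity follows from \eqref{eq:totalhodgeright} once one observes that the modular factor acts trivially on $P_0 \subseteq \Omega^\U_P$, and $\ast$-preservation follows from \eqref{eq:totalhodgestar} together with \eqref{eq:conncent}, which makes the Koszul signs incurred in rewriting $(\vartheta\beta)^\ast$ cancel and again leaves a trivial modular factor. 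Involutivity \eqref{eq:hodge1} is \eqref{eq:totalhodgeinv} combined with \eqref{eq:totalhodgelift}, the symmetry \eqref{eq:hodge2} specialises \eqref{eq:totalhodgesym}, and the inverse metric of $\star_B$ is the restriction of the inverse metric $g$ of $\star$ to $\vartheta\cdot\Omega_B \cong \Omega_B$, whence the basis and boundedness condition \eqref{eq:metricbdd} descend from \eqref{eq:totalmetricbdd}; finally \eqref{eq:closed} follows from \eqref{eq:totalclosed} since the differential of $\Omega_B$ agrees with $\du_P$, and \eqref{eq:faithfulstate} is \eqref{eq:totalfaithful} restricted to $B$. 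For the `moreover' clause I would compute $\star(p)$ for $p \in P_j$ in two ways --- by left $P$-linearity as $p\,\star(1_{\Omega_P})$, and by \eqref{eq:totalhodgeright} at $(j,k) = (0,0)$ as $\star(1_{\Omega_P})\,(\Delta_\hor^{-N}\circ\Delta_\ver^{-1})(p)$ --- using \eqref{eq:totalhodgelift} to rewrite $\star(1_{\Omega_P}) = (-1)^N\star_B(1)\,\vartheta$, centrality of the volume form $\star_B(1) \in \Omega^N_B$ coming from \eqref{eq:hodge2}, and \eqref{eq:conncent}; together with the analogous comparison on $\Omega^{0,N}_P$ (which brings in $\hat{\Phi}_P$ via Proposition-Definition \ref{propdef:vertical}) and the nondegeneracy of the Hodge data, this forces $\Delta_\ver(p) = \kappa^{-j}p$, i.e.\ $\Delta_\ver = \Lambda_\kappa$ (the positivity constraint $p^\ast\Delta_\ver(p)\geq 0$ then being automatic since $\kappa>0$). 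Given this, $\Delta_\hor$, which commutes with $\star$ and $\Lambda_\kappa$, is seen from \eqref{eq:totalhodgeright}--\eqref{eq:totalhodgestar} to take the multiplicative form $\Delta_\hor(\omega) = \omega\,\mu_P(j)$ with $\mu_P(j) \in \mathcal{Z}_{>0}(B)$, the cocycle identity for $\mu_P$ with respect to the $\hat{\Phi}_P^{-1}$-action being multiplicativity of $\Delta_\hor$; unwinding \eqref{eq:totalhodgelift} and \eqref{eq:totalhodgeswap} through the isomorphism $\hat{\ell}_P$ of Proposition \ref{prop:assoclineconn} then identifies $\star_B$-conformality of $(P;\Omega_{P,\hor},\du_{P,\hor})$ with conformal factor $\mu_P$ in the sense of \eqref{eq:conformalhor}.

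For part (2) the construction runs in reverse. Given $\star_B$-conformality with conformal factor $\mu_P$, I would define $\Delta_\hor$ by \eqref{eq:liftedmod}; since $\mu_P$ is $\mathcal{Z}_{>0}(B)$-valued and is a cocycle, this is a well-defined modular automorphism commuting with $\Pi$ and with $\Lambda_\kappa$. I would then define $\star = \star_P$ by \eqref{eq:totalstara}--\eqref{eq:totalstarb}; the one genuine subtlety is consistency on $\Omega_{P,\hor} = P \cdot \Omega_B$, i.e.\ that \eqref{eq:totalstara} respects the relations between $P$ and $\Omega_B$ inside $\Omega_{P,\hor}$, and this is exactly what the conformality relation \eqref{eq:conformalhor} (equivalently \eqref{eq:conformal}, via Proposition \ref{prop:conformal}) supplies. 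The axioms \eqref{eq:totalhodgeswap}--\eqref{eq:totalhodgesym} then reduce, term by term, to the corresponding properties \eqref{eq:hodge1}--\eqref{eq:hodge3} of $\star_B$, the conformality relation, and the centrality relations \eqref{eq:conncent}; the basis and boundedness conditions \eqref{eq:totalmetricbdd} descend from those for the inverse metric of $\star_B$ via Proposition \ref{prop:totalinversemetric}. I would define $\tau_P$ by \eqref{eq:totalstate}; then $\tau_P = \tau_B \circ \bE_P$ for the canonical conditional expectation $\bE_P$ of Proposition \ref{prop:conditional}, so $\tau_P$ is $\U$-equivariant, bounded and faithful (by Proposition \ref{prop:conditional} together with \eqref{eq:faithfulstate}), while \eqref{eq:totalclosed} follows from \eqref{eq:closed} because, after applying $\tau_P$, the weight-$j$ component of $\star_P\du_P$ is killed for $j \neq 0$ by $\U$-equivariance, and for $j = 0$ the vertical (connection and curvature) contributions to $\du_P$ drop out upon projecting to $\Omega^{\U}_P$ and one is left with $\tau_B \circ \star_B \circ \du_B$ on $\Omega^{N-1}_B$ --- the NC Stokes argument. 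By construction $(\Lambda_\kappa,\Delta_\hor,\star_P,\tau_P)$ restricts to $(\star_B,\tau_B)$, so it is a lift; uniqueness follows because part (1) already forces $\Delta_\ver = \Lambda_\kappa$ and $\Delta_\hor$ as in \eqref{eq:liftedmod}, after which $\star$ is pinned on $\Omega_{P,\hor}$ by left $P$-linearity, \eqref{eq:totalhodgelift} and \eqref{eq:hodgerest}, pinned on $\vartheta\cdot\Omega_{P,\hor}$ by \eqref{eq:totalhodgelift}, and $\tau$ is pinned by $\U$-equivariance and \eqref{eq:staterest}.

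I expect two places to be delicate. In part (1), the main obstacle is the rigidity argument that $\Delta_\ver = \Lambda_\kappa$ and the subsequent identification of $\Delta_\hor$ with horizontal conformality: this requires keeping precise track of the bigrading, the generalised braidings $\sigma_{F(k)}$ hidden inside the algebra structure of $\Omega_{P,\hor}$, the Fröhlich automorphism $\hat{\Phi}_P$, and the two commuting modular factors simultaneously, so that \eqref{eq:totalhodgeright} and \eqref{eq:totalhodgelift} turn into \eqref{eq:liftedmod} and \eqref{eq:conformalhor}. In part (2), the main obstacle is the well-definedness of $\star_P$ on $P \cdot \Omega_B$ and the verification of the modular Hodge axioms from conformality, together with the integration-by-parts argument establishing \eqref{eq:totalclosed}.
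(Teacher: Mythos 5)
Most of your plan tracks the paper's proof (the restriction is forced by \eqref{eq:hodgerest}--\eqref{eq:staterest} and checked axiom by axiom; part (2) is built through \(\hat{\ell}_P\), with \(\tau = \tau_B \circ \bE_P\) and the Stokes-type computation for \eqref{eq:totalclosed}), but there is a genuine gap at the crux of part (1), the rigidity statement \(\Delta_\ver = \Lambda_\kappa\). First, your appeal to ``centrality of the volume form \(\star_B(1)\) coming from \eqref{eq:hodge2}'' only yields commutation with \(B\): moving \(p \in P_j\) with \(j \neq 0\) past \(\star_B(1)\) inside \(\Omega_{P,\hor}\) is governed by the generalised braiding \(\sigma_{P;j}\), i.e.\ by exactly the conformality data you are trying to determine, so the single identity \(p\,\star(1) = \star(1)\,(\Delta_\hor^{-N}\circ\Delta_\ver^{-1})(p)\) has too many unknowns to ``force'' anything on its own. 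Second, and more importantly, two-way comparisons of this kind can only deliver the squared identity \(\Delta_\ver^2 = \Lambda_\kappa^2\); this is also all the paper's computation gives, by applying \eqref{eq:totalhodgelift} to \(p\) itself, then \eqref{eq:totalhodgeright} with \(\omega = \vartheta\), and using \eqref{eq:conncent} so that only commutation with \(\vartheta\) is needed and the \(\Delta_\hor^{-N}\) factors cancel, yielding \(\star(p) = \star\bigl((\Delta_\ver^2 \circ \Lambda_\kappa^{-2})(p)\bigr)\) and hence \(\Delta_\ver^2 = \Lambda_\kappa^2\) by invertibility of \(\star\). Passing from the squared identity to \(\Delta_\ver = \Lambda_\kappa\) is precisely where the positivity constraint \(p^\ast \Delta_\ver(p) \geq 0\) enters, via Lemma \ref{lem:modular}: the symbol \(\mu\) of \(\Delta_\ver\) takes values in \(\mathcal{Z}_{>0}(B)\), so evaluating \(\sum_i e_i^\ast \Delta_\ver^2(e_i) = \mu(j)^2 = \kappa^{-2j}\) on a cobasis \((e_i)\) for \(\cL(P)(j)\) and taking the unique positive square root gives \(\mu(j) = \kappa^{-j}\). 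Your remark that the positivity constraint is ``automatic since \(\kappa > 0\)'' has the logic backwards: without it, \(\Delta_\ver\) could differ from \(\Lambda_\kappa\) by a central self-adjoint unitary square root of \(1\), and your sketch asserts a first-power conclusion that the comparisons you describe cannot supply.

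Two smaller points. In part (2), well-definedness of \(\star_P\) on \(\Omega_{P,\hor} \cong P \otimes_B \Omega_B\) requires only \(B\)-bilinearity of the defining formula together with Proposition \ref{prop:assoclineconn}; conformality is not what makes \eqref{eq:totalstara} well defined, but rather what delivers the modular right-multiplication axiom \eqref{eq:totalhodgeright} (and thence \eqref{eq:totalhodgestar} and \eqref{eq:totalhodgesym}). Also, \eqref{eq:totalfaithful} is not merely faithfulness of \(\tau_B \circ \bE_P\): one must show that the GNS norm of \(\tau\) recovers the full \(\mathrm{C}^\ast\)-norm of \(P\), which the paper does by observing that the GNS norm is a \(\U\)-invariant \(\mathrm{C}^\ast\)-norm agreeing with the given one on \(P^{\U} = B\) by \eqref{eq:faithfulstate} and then invoking Corollary \ref{cor:fell}; your sketch omits this step.
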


\begin{lemma}\label{lem:modular}
	For every modular automorphism \(\Delta\) of \(\Omega_{P}\), there exists a unique group \(1\)-cocycle \(\mu : \bZ \to \mathcal{Z}_{>0}(B)\) 
	for the right \(\bZ\)-action defined by \eqref{eq:horizontalact}, such that
	\begin{equation}\label{eq:modular}
				\forall (m,j) \in \bN_0 \times \bZ, \, \forall \omega \in (\Omega^m_P)_j, \quad \Delta(\omega) = \omega  \mu(j).
	\end{equation}
	Conversely, for every group \(1\)-cocycle \(\mu : \bZ \to \mathcal{Z}_{>0}(B)\), Equation \ref{eq:modular} defines a modular automorphism \(\Delta\) of \(\Omega_{P}\).
\end{lemma}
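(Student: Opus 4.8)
The plan is to prove both directions by exploiting the structure of \(\Omega_P\) as a strongly \(\bZ\)-graded unital \(\ast\)-algebra together with the fact that each spectral subspace \(P_j\) is a Hermitian line \(B\)-bimodule (this is how \(P\) is built via \(\Sigma\) and \(\hat\Sigma\)). First I would establish the forward implication. Given a modular automorphism \(\Delta\) of \(\Omega_P\), I want to define \(\mu(j) \in \mathcal Z_{>0}(B)\) for each \(j \in \bZ\). Since \(\Delta\) is \(\U\)-equivariant, restricts to the identity on \(\Omega_P^{\U}\), and is an algebra automorphism, for \(\omega \in (\Omega^m_P)_j\) the element \(\omega^{-1}\Delta(\omega)\)—to be made sense of via a frame—should land in \(\Omega_P^{\U} = \Omega_{P,\bas}\), actually in \(\Zent(\Omega_B)^0\) by a centrality argument: pick a strict cobasis \((\epsilon_i)_{i=1}^n\) for \(P_j\) as a Hermitian line \(B\)-bimodule, so \(\sum_i \epsilon_i^\ast \epsilon_i = 1\), and set \(\mu(j) \coloneqq \sum_i \epsilon_i^\ast \Delta(\epsilon_i)\). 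I would then check (i) \(\mu(j)\) is independent of the choice of cobasis and lies in \(B\) (it is \(\U\)-invariant of degree \(0\)); (ii) \(\mu(j)\) is central in \(\Omega_B\), using that \(\Delta\) is an algebra automorphism fixing \(\Omega_{P,\hor}^{\U}\) and that \(\Omega_{P,\hor} = P \cdot \Omega_B \cdot P\); (iii) \(\mu(j)\) is positive and invertible in \(\Zent(\Omega_B)^0\), using the modularity condition \(p^\ast \Delta(p) \ge 0\) for \(p \in P_j\) together with \(\Delta(\mu(-j)) = \mu(-j)^{-1}\)-type relations forced by \(\Delta\Delta^{-1} = \id\); and (iv) that \eqref{eq:modular} holds, i.e.\ \(\Delta(\omega) = \omega\,\mu(j)\) for all \(\omega \in (\Omega^m_P)_j\), which follows because \(\omega = \omega \sum_i \epsilon_i^\ast \epsilon_i\) and \(\Delta\) is right \(\Omega_{P,\hor}^{\U}\)-linear on the relevant pieces.

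Next I would verify the cocycle identity \(\mu(j+k) = \hat\Phi_P^{-k}(\mu(j)) \cdot \mu(k)\). This is a direct computation: for \(p \in P_j\) and \(q \in P_k\), applying \(\Delta\) to \(pq \in P_{j+k}\) gives \(\Delta(p)\Delta(q) = p\,\mu(j)\,q\,\mu(k)\), and then commuting \(\mu(j)\) past \(q\) using the defining property of the Fröhlich automorphism \(\hat\Phi_P\) from Proposition-Definition \ref{propdef:vertical}—namely \(\hat\iota_P(\hat\Phi_P^k(\beta))\,q = q\,\hat\iota_P(\beta)\) for \(q \in P_k\), \(\beta \in \Zent(\Omega_B)\)—yields \(pq\,\hat\Phi_P^{-k}(\mu(j))\,\mu(k)\); strong-gradedness of \(P\) (every element of \(P_{j+k}\) is a sum of products \(pq\)) then forces \(\mu(j+k) = \hat\Phi_P^{-k}(\mu(j))\cdot\mu(k) = \mu(j)\ract k \cdot \mu(k)\), which is exactly the right \(1\)-cocycle condition for the action \eqref{eq:horizontalact}. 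Uniqueness of \(\mu\) is immediate from \eqref{eq:modular} evaluated on any nonzero \(\omega\).

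For the converse, given a group \(1\)-cocycle \(\mu : \bZ \to \mathcal Z_{>0}(B)\), I define \(\Delta\) by \eqref{eq:modular} and must check it is a well-defined modular automorphism. Well-definedness on each homogeneous piece is automatic since \(\mu(j)\) is central and invertible; \(\U\)-equivariance and triviality on \(\Omega_P^{\U}\) (where \(j = 0\), \(\mu(0) = 1\) by the cocycle identity) are clear; multiplicativity \(\Delta(\omega\eta) = \Delta(\omega)\Delta(\eta)\) for \(\omega \in (\Omega^m_P)_j\), \(\eta \in (\Omega^n_P)_k\) reduces to the identity \(\mu(j+k) = \hat\Phi_P^{-k}(\mu(j))\,\mu(k)\) again, read in reverse, after commuting \(\mu(j)\) past \(\eta\) via \(\hat\Phi_P\); invertibility of \(\Delta\) follows from the fact that \(\mu(-j)\) provides the inverse on each piece (using \(\mu(j)\ract(-j)\cdot\mu(-j) = \mu(0) = 1\)); and the positivity condition \(p^\ast\Delta(p) = p^\ast p\,\mu(j) \ge 0\) for \(p \in P_j\) holds because \(p^\ast p = \hat\iota_P(\hp{p}{p})\) is a positive element of \(B\) and \(\mu(j)\) is positive and central, so their product is positive in the \Cstar-completion.

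The main obstacle I anticipate is the centrality and positivity of \(\mu(j)\) in step (ii)–(iii) of the forward direction: one must carefully use that \(\Omega_{P,\hor}\) is generated by \(P\) and \(\hat\iota_P(\Omega_B)\), that \(\Delta\) fixes the basic forms, and that the braiding relations packaged in \(\hat\Phi_P\) control how homogeneous elements of \(P\) commute with \(\Zent(\Omega_B)\); the positivity in particular requires relating \(\mu(j)\) to \(p^\ast\Delta(p)\) for a well-chosen \(p\) (e.g.\ taking \(p\) in a frame and summing) and invoking that sums of elements of the form \(p^\ast\Delta(p)\) with \(p^\ast p\) invertible force \(\mu(j)\) itself to be positive and invertible, which uses that \(B\) (and hence \(\Zent(\Omega_B)^0\)) sits inside its \Cstar-completion. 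Everything else is routine bookkeeping with strong gradings and the Fröhlich automorphism.
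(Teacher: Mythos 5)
Your proposal is correct and follows essentially the same route as the paper's proof: define \(\mu(j)\) from a cobasis \((\epsilon_i)_i\) of \(P_j\) (equivalently, as the central element of \(B\) implementing the \(B\)-bimodule endomorphism \(\rest{\Delta}{P_j}\)), obtain positivity from \(\mu(j)=\sum_i \epsilon_i^\ast\,\Delta(\epsilon_i)\), derive the cocycle identity from \(P_{j+k}=P_j\cdot P_k\) together with the Fr\"ohlich automorphism, extend to higher form degrees using that \(\Delta\) fixes \(\Omega_P^{\U}\), and reverse the argument for the converse with the positivity check \(p^\ast\Delta(p)\geq 0\). The only (cosmetic) slip is that uniqueness of \(\mu(j)\) is not immediate from a single nonzero \(\omega\); it follows from the same cobasis trick, multiplying \(\omega\,(\mu(j)-\mu'(j))=0\) on the left by \(\epsilon_i^\ast\) and summing.
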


\begin{proof}
	Let \(\Delta\) be a modular automorphism.
	For each \(j \in \bZ\), the map \(\Delta\) restricts to a \(B\)-bimodule morphism \(\cL(P)(j) \to \cL(P)(j)\), so that, by Proposition \ref{prop:imprimitivity}, there exists unique \(\mu(j) \in \Zent(B)\) that satisfies \eqref{eq:modular} for \(m=0\); given any cobasis \((e_i)_{i=1}^N\) for \(\cL(P)_j\), it follows that \(0 \leq \sum_{i=1}^N e_i^\ast  \Delta(e_i) = \sum_{i=1}^N e_i^\ast  e_i  \mu(j) = \mu(j)\) and
	\(
		\mu(j)  \alpha = \sum_{i=1}^N e_i^\ast \Delta(e_i)  \alpha = \sum_{i=1}^N e_i^\ast \Delta\mleft(e_i \alpha\mright) = \alpha  \mu_P(j)
	\)
	for all \(\alpha \in \Omega_B\),
	so that \(\mu_{P}(j) \in \Zent(\Omega_B)^0\).
	Given \(j,k \in \bZ\), for all \(x \in P_j\), and \(y \in P_k\), we find that
	\(
		xy  \mu(j+k) = \Delta(xy) = \Delta(x)\Delta(y) = x  \mu_P(j)\cdot y \cdot \mu_P(k) = xy  \Phi_P^{-k}(\mu(j))\mu(k)
	\),
	so that \(\mu_P(j+k) = \hat{\Phi}_P^{-k}(\mu_P(j))\mu_P(k)\) by the equality \(P_{j+k} = P_j \cdot P_k\) together with uniqueness of \(\mu_P(j+k)\).
	Since \(\Delta\) acts as the identity on \(P_0\), it follows that \(\mu(0) = 1\); hence, for each \(j \in \bZ\), it follows that \(\mu(j) \in \mathcal{Z}_{>0}(B)\)  with inverse \(\Phi_P^{-j}(\mu(-j))^{-1}\).
	Thus, we obtain a unique group \(1\)-cocycle \(\mu: \bZ \to \mathcal{Z}(B)^\times_{\geq 0}\) satisfying \eqref{eq:modular} for \(m = 0\).
	Finally, since \(\Omega_{P} = P \cdot \Omega_B \oplus P \cdot \vartheta \cdot \Omega_B\) and since \(\Delta\) acts as the identity on \(\Omega_B\) and \(\vartheta\), it follows that \(\mu: \bZ \to \mathcal{Z}_{>0}(B)\) is the unique group \(1\)-cocycle satisfying \eqref{eq:modular} in general.
	Reversing this argument almost suffices to show that a group \(1\)-cocycle \(\mu: \bZ \to \mathcal{Z}_{>0}(B)\) defines a modular automorphism \(\Delta\) by \eqref{eq:modular}; all that is left is that
	\(
		p^\ast \Delta(p) = p^\ast  p \cdot \iota_P\mleft(\mu_P(j)\mright) = p^\ast \Phi^{j}_P(\mu_P(j))  p \geq 0
	\)
	for all \(j \in \bZ\) and \(p \in P_j\).
\end{proof}

\begin{proof}[Proof of Theorem \ref{thm:totalgeometry}]
	First, suppose that \((\Delta_\ver,\Delta_\hor,\star,\tau)\) is a total Riemannian geometry on \((P;\Omega_P,\du_P;\Pi)\).
	We begin by showing that \(\Delta_\ver = \Lambda_\kappa\).
	By Lemma \ref{lem:modular}, let \(\mu : \bZ \to \mathcal{Z}_{>0}(B)\) be the unique group \(1\)-cocycle satisfying \eqref{eq:modular} with respect to \(\Delta = \Delta_\ver\).
	Then \(\Delta_\ver^2 = \Lambda_\kappa^2\) since, for every \(p \in P\),
	\begin{multline*}
		{\star}(p) = (-1)^N {\star}(p\vartheta)\vartheta
		= (-1)^N {\star}(\vartheta) \cdot (\Delta^{-N}_\hor \circ \Delta_\ver \circ \Lambda_\kappa^{-1})(p) \cdot \vartheta\\
		= {\star}(1) \cdot (\Delta^{-N}_\hor \circ \Delta_\ver \circ \Lambda^{-2}_\kappa)(p)
		= {\star}\mleft(\Delta^2_\ver \circ \Lambda_\kappa^{-2}(p)\mright).
	\end{multline*}
	Let \(j \in \bZ\) and let \((e_i)_{i=1}^N\) be a cobasis for \(\cL(P)(j)\).
	Then \(\mu(j) = \kappa^{-j}\) since \(\mu(j) \geq 0\) and
	\(
		\kappa^{-2j} = \sum_{i=1}^N e_i^\ast  \Lambda_\kappa^2(e_i) = \sum_{i=1}^N e_i^\ast \Delta_\ver^2(e_i) = \mu(j)^2
	\).
	
	Next, we show that there is a unique Hodge operator \(\star_B\) on \(B\) with respect to \((\Omega_B,\du_B)\) satisfying \eqref{eq:hodgerest}.
	By \eqref{eq:totalhodgelift} and \eqref{eq:totalhodgeinv}, there exists a unique \(\bC\)-linear map \(\star_B : \Omega_{B} \to \Omega_B\) satisfying \eqref{eq:hodgerest}, which is given by \({\star_B}(\beta) \coloneqq {\star}(\vartheta\beta)\) for all \(k \in \Set{0,\dotsc,N}\) and \(\beta \in \Omega^k_B\).
	The map \(\star_B\) is left \(B\)-linear by construction and \(\U\)-equivariant by \(\U\)-equivariance of \(\star\) and \(\U\)-invariance of \(\vartheta\).
	Moreover, since both \(\Lambda_\kappa\) and \(\Delta_\hor\) act as the identity on \(\Omega_B\) and on \(\vartheta\) and since \(\vartheta\) supercommutes with \(\Omega_B\), it follows that \(\star_B\) is right \(B\)-linear by \eqref{eq:totalhodgeright}, is \(\ast\)-preserving by \eqref{eq:totalhodgestar}, satisfies \eqref{eq:hodge1} by \eqref{eq:totalhodgeswap} and \eqref{eq:totalhodgeinv}, and satisfies \eqref{eq:hodge2} by \eqref{eq:totalhodgesym}.
	
	Next, we show that the pair \((\star_B,\rest{\tau}{B})\) defines a Riemannian geometry on \(B\) with respect to \((\Omega_B,\du_B)\).
	On the one hand, since both \(\Lambda_\kappa\) and \(\Delta_\hor\) act trivially on \(\Omega_B\), Proposition \ref{prop:totalinversemetric} together with the \(j=0\) case of \eqref{eq:totalmetricbdd} shows that the inverse metric induced by \(\star_B\) satisfies \eqref{eq:metricbdd}; indeed, for each \(m \in \Set{0,\dotsc,M}\), one can obtain a basis for \(\Omega^m_B\) from a basis for \((\Omega_P^m)^{\U}\) by applying \(\Pi\) retaining any non-zero vectors.
	On the other hand,
	\(
		\du_P(\vartheta  \beta) = \du_P(\vartheta)  \beta - \vartheta  \du_B(\beta) = -\cF_\Pi  \beta - \vartheta  \du_B(\beta) = -\vartheta  \du_B(\beta)
	\)
	for every \(\beta \in \Omega^{N-1}_B\), so that
	\(
		\tau \circ {\star_B} \circ \du_B(\beta) = \tau\mleft(-{\star}\mleft(\vartheta  \du_B(\beta)\mright)\mright) = \tau \circ {\star} \circ \du(\vartheta  \beta) = 0
	\).
	
	Finally, by Lemma \ref{lem:modular}, let \(\mu_P : \bZ \to \mathcal{Z}_{>0}(B)\) be the unique group \(1\)-cocycle satisfying \eqref{eq:liftedmod}.
	Given \eqref{eq:liftedmod} and \(\Delta_\hor(\vartheta) = \vartheta\), that \((P;\Omega_{P,\hor},\du_{P,\hor})\) is \(\star_B\)-conformal with conformal factor \(\mu_P\) follows from \eqref{eq:totalhodgeright}.
	Uniqueness of \(\tau_B\) is trivial.
	
	Now, let \((\star_B,\tau_B)\) be a Riemannian geometry on \((B;\Omega_B,\du_B)\), and suppose that \((P;\Omega_{P,\hor},\du_{P,\hor})\) is \(\star_B\)-conformal with conformal factor \(\star_B\).
	By Lemma \ref{eq:modular}, the modular automorphism \(\Delta_\hor\) of \(\Omega_P\) constructed from \(\mu_P\) by \eqref{eq:liftedmod} is well-defined.
	
	We first show that there is a unique \((\Delta_\ver,\Delta_\hor)\)-modular Hodge operator on \((\Omega_P,\du_P)\) with respect to \(\Pi\) satisfying \eqref{eq:hodgerest}.
	First, by Proposition \ref{prop:assoclineconn}, \eqref{eq:totalstara} and \eqref{eq:totalstarb} define the unique \(\U\)-equivariant left \(P\)-linear map \(\star : \Omega_P \to \Omega_P\) satisfying \eqref{eq:hodgerest} and \eqref{eq:totalhodgelift}.
	Next, the map \(\star\) satisfies \eqref{eq:totalhodgeswap} by construction, satisfies \eqref{eq:totalhodgeright} by \eqref{eq:liftedmod} and \eqref{eq:conformalhor}, satisfies \eqref{eq:totalhodgeinv} by \eqref{eq:hodge1} and left \(P\)-linearity, and satisfies \eqref{eq:totalhodgestar} by the fact that \(\star_B\) is \(\ast\)-preserving together with left \(P\)-linearity of \(\star\) and \eqref{eq:totalhodgeright}.
	Finally, the map \(\star\) satisfies \eqref{eq:totalhodgesym} by a case-by-case application of \eqref{eq:hodge2} together with \eqref{eq:totalhodgeright} and left \(P\)-linearity of \(\star\).
	
	Next, we turn to the inverse metric \(g\) induced by \(\star\).
	Let \(g_B\) be the inverse metric induced by \(\star_B\), let \((m,j) \in \Set{0,\dotsc,N} \times \bZ\), and let \(\hp{}{}_j\) be the \(B\)-valued inner product on \(\cL(P)(j)\).
	Let \(p_1,p_2 \in P_j\) and \(\alpha_1,\alpha_2 \in \Omega^m_B\).
	On the one hand,
	\[
		\star\mleft(g(p_1\alpha_1,p_2\alpha_2)\mright) = \alpha_1^\ast p_1^\ast {\star}(p_2\alpha_2) = \alpha_1^\ast p_1^\ast p_2 (-1)^N  {\star}_B(\alpha_2)\vartheta = \star\mleft(g_B(\alpha_1,\hp{p_1}{p_2}_j\alpha_2)\mright),
	\]
	while on the other, \(g(p_1\alpha_1\vartheta,p_2\alpha_2\vartheta) = g_B(\alpha_1,\hp{p_1}{p_2}_j\alpha_2)\) by a similar calculation.
	Thus, the \(B\)-bimodule isomorphism \(\hat{\ell}_P\) of Proposition \ref{prop:assoclineconn} yields a \(B\)-bimodule isomorphism
	\(
		\left(\omega \mapsto \hat{\ell}_P(\omega)\right) : (\Omega^m_{P,\hor})_j \to \cL(P)(j) \otimes_B \Omega^m_B\) and a \(B\)-bimodule isomorphism \(\left(\omega \cdot \vartheta \mapsto \hat{\ell}_P(\omega)\right) : \vartheta \cdot (\Omega^m_{P,\hor})_j \to \cL(P)(j) \otimes_B \Omega^m_B
	\)
	that respectively realise the restrictions of \(g\) to \((\Omega^m_{P,\hor})_j\) and \(\vartheta \cdot (\Omega^m_{P,\hor})_j\) as pullbacks of the \(B\)-valued inner product on the tensor product \(\cL(P)(j) \otimes_B \Omega^m_B\) of \(B\)-self-correspondences of finite type.
	Hence, both \((\Omega^m_{P,\hor})_j = \Pi((\Omega^m_P)_j)\) and \(\vartheta (\Omega^m_{P,\hor})_j = (\id-\Pi)((\Omega^{m+1}_P)_j)\) define \(B\)-self-correspondences of finite type with respect to \(g\), which suffices for us.

	Finally, we show that \eqref{eq:totalstate} defines the unique \(\U\)-equivariant bounded state \(\tau\) on \(P\) satisfying \(\rest{\tau}{B} = \tau_B\), \eqref{eq:totalclosed} and \eqref{eq:totalfaithful}.
	Recall the bounded faithful conditional expectation \(\bE_P : P \to B\) of Proposition \ref{prop:conditional}.
	First, the map \(\tau : P \to \bC\) defined by \eqref{eq:totalstate} can now be rewritten as \(\tau = \tau_B \circ \bE_P\), which therefore defines a faithful bounded \(\U\)-equivariant state on \(P\) restricting to \(\tau_B\) on \(B\).
	Next, by continuity and \(\U\)-invariance, any faithful bounded \(\U\)-equivariant state \(\tau^\prime\) on \(P\) satisfying \(\rest{\tau^\prime}{B} = \tau_B\) must satisfy \(\tau^\prime = \tau^\prime \circ \bE_P = \tau_B \circ \bE_P = \tau\).
	Finally, we show that \(\tau\) satisfies \eqref{eq:totalclosed} with respect to \(\star\).
	On the one hand, let \(j \in \bZ\), \(p \in P_j\), and \(\beta \in \Omega^N_B\).
	Since
	\(
		\delta^{j,0}\du_P(p\beta) = \delta^{j,0}\left(2\pi\iu{}[j]_\kappa \vartheta  p\beta + \du_{P,\hor}(p)\beta + p\du_B\beta\right) = 0 
	\),
	it follows by \eqref{eq:totalstara} that
	\(
		\tau \circ \star \circ \du(p\beta) = \tau_B \circ \star\mleft(\delta^{j,0}\du(p\beta)\mright) = 0
	\).
	On the other hand, let \(j \in \bZ\), \(p \in P_j\), and \(\alpha \in \Omega^{N-1}_B\) be given.
	Since
	\[
		\delta^{j,0}\du_P(p \alpha \vartheta) = \du_B\mleft((\delta^{j,0}p)\alpha\mright) \cdot \vartheta  + (-1)^N (\delta^{j,0}p)\alpha \cF_\Pi  = \du_B\mleft((\delta^{j,0}p)\alpha\mright)\vartheta,
	\]
	it follows by \eqref{eq:totalstarb} and \eqref{eq:totalstate} that
	\[	\tau \circ \star \circ \du(p\alpha\vartheta) = \tau_B \circ \star\mleft(\du_P(\delta^{j,0}p\alpha\vartheta)\mright) = (-1)^N\tau_B \circ \star_B \circ \du_B\mleft((\delta^{j,0}p)\alpha\mright) = 0.
	\]
	Thus, either way, the composition \(\tau \circ {\star} \circ \rest{\du}{\Omega^N_P}\) vanishes.
	
	Let us now show that \(\tau\) satisfies \eqref{eq:totalfaithful}.
	Define \(\norm{}^\prime : P \to [0,\infty)\) by
	\[
		\forall p \in P, \quad (\norm{p}^\prime)^2 \coloneqq \sup\Set{\tau(q^\ast p^\ast p q) \given q \in P, \, \tau(q^\ast q) \leq 1}.
	\]
	Since \(\norm{}^\prime\) is the operator norm on \(P\) with respect to the \textsc{gns} representation of \(P\) induced by the faithful bounded state \(\tau\), it follows that \(\norm{}^\prime\) is a \Cstar-norm bounded from above by \(\norm{}\); since \(\tau\) is \(\U\)-invariant, it follows that \(\norm{}^\prime\) is a \(\U\)-invariant \Cstar-norm on \(P\).
	Hence, by Corollary \ref{cor:fell}, it suffices to show that \(\tau\) satisfies \eqref{eq:totalfaithful} on \(P^{\U} = B\).
	But now, given \(b \in B\), it follows from \eqref{eq:faithfulstate} applied to \(\tau_B\) that
	\begin{align*}
		(\norm{b}^\prime)^2 &= \sup\Set{\tau(q^\ast p^\ast p q) \given q \in P, \, \tau(q^\ast q) \leq 1}\\ &\geq \sup\Set{\tau(c^\ast p^\ast p c) \given q \in B, \, \tau(c^\ast c) \leq 1}\\ &= \norm{b}^2. \qedhere
	\end{align*}
\end{proof}

The construction of Lemma \ref{lem:modular} will be used frequently enough to warrant the following definition.

\begin{definition}
	Equip \(\mathcal{Z}_{>0}(B)\) with the right \(\bZ\)-action constructed from \(\hat{\Phi}_P\) by \eqref{eq:horizontalact}.
	The \emph{symbol} of a modular automorphism \(\Delta\) is the unique group \(1\)-cocycle \(\mu : \bZ \to \mathcal{Z}_{>0}(B)\) that satisfies \eqref{eq:modular} with respect to \(\Delta\).
\end{definition}

In particular, we may use Proposition \ref{prop:conformal} to rewrite Theorem \ref{thm:totalgeometry} as follows.

\begin{corollary}
	Let \((\star_B,\tau_B)\) be a Riemannian geometry on \((B;\Omega_B,\du_B)\).
	Let \((E,\sigma_E,\nabla_E)\) be a Hermitian line \(B\)-bimodule with connection that is flat or has vertical deformation parameter \(\kappa\).
	Then \((\star_B,\tau_B)\) admits a lift \((\Delta_\ver,\Delta_\hor,\star,\tau)\) to \((B;\Omega_B,\Sigma_B) \rtimes_{(E,\sigma_E,\nabla_E)}^{\kappa,\tot} \bZ\) if and only if \((E,\sigma_E,\nabla_E)\) is \(\star_B\)-conformal, in which case the lift is unique, \(\Delta_\ver = \Lambda_\kappa\), and \(\Delta_\hor\) has symbol \(\mu \circ \left(m \mapsto [E,\nabla_E]^m\right)\).
\end{corollary}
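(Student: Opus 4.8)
The plan is to deduce this corollary directly from Theorem \ref{thm:totalgeometry} together with the dictionary established in Corollary \ref{cor:diffpimsner} and Proposition \ref{prop:conformal}. First I would set $(P;\Omega_P,\du_P;\Pi) \coloneqq (B;\Omega_B,\du_B) \rtimes_{(E,\sigma_E,\nabla_E)}^{\kappa,\tot}\bZ$; by the definition of the $\kappa$-total crossed product together with Theorem \ref{thm:dj}, Theorem \ref{thm:horizontal}, and Corollary \ref{cor:diffpimsner}, the associated horizontal calculus $\Hor_\kappa(P;\Omega_P,\du_P;\Pi)$ is precisely $(B;\Omega_B,\du_B)\rtimes^{\hor}_{(E,\sigma_E,\nabla_E)}\bZ$, and $\hat{\cL} \circ \Hor_\kappa$ sends it to a homomorphism $2$-isomorphic to any $\hat{F} : \bZ \to \dPic(B)$ with $\hat{F}(1)\cong(E,\sigma_E,\nabla_E)$; in particular $\epsilon_1 \circ \hat{\cL} \circ \Hor_\kappa$ sends it to $[E,\nabla_E]$. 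By Proposition \ref{prop:conformal} (applied to $\Hor_\kappa(P;\Omega_P,\du_P;\Pi)$), the horizontally differentiable bundle $(P;\Omega_{P,\hor},\du_{P,\hor})$ is $\star_B$-conformal if and only if $[E,\nabla_E] \in \dpic(B;\star_B)$, i.e.\ if and only if $(E,\sigma_E,\nabla_E)$ is $\star_B$-conformal, and in that case its conformal factor $\mu_P$ satisfies $\mu_P = \mu \circ \pi_0(\hat{F})$, so $\mu_P(m) = \mu([E,\nabla_E]^m) = \bigl(\mu \circ (m \mapsto [E,\nabla_E]^m)\bigr)(m)$ for all $m$, using that $\mu$ is a $1$-cocycle and $\pi_0(\hat{F})(m) = [E,\nabla_E]^m$ by Corollary \ref{cor:abstractpimsner}.

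Next I would invoke Theorem \ref{thm:totalgeometry} verbatim. Part (2) of that theorem says: if $(P;\Omega_{P,\hor},\du_{P,\hor})$ is $\star_B$-conformal with conformal factor $\mu_P$, then defining $\Delta_\hor$ by \eqref{eq:liftedmod} there is a \emph{unique} lift $(\Lambda_\kappa,\Delta_\hor,\star,\tau)$ of $(\star_B,\tau_B)$ to $(P;\Omega_P,\du_P;\Pi)$. Part (1) says: \emph{any} total Riemannian geometry $(\Delta_\ver,\Delta_\hor,\star,\tau)$ on $(P;\Omega_P,\du_P;\Pi)$ is automatically a lift of its (unique) restriction, forces $\Delta_\ver = \Lambda_\kappa$, and forces $(P;\Omega_{P,\hor},\du_{P,\hor})$ to be $\star_B$-conformal with $\Delta_\hor$ given by \eqref{eq:liftedmod}. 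Combining the two: a lift of $(\star_B,\tau_B)$ exists if and only if $(P;\Omega_{P,\hor},\du_{P,\hor})$ is $\star_B$-conformal (necessity from part (1), sufficiency from part (2)), the lift is unique when it exists (part (2)), $\Delta_\ver = \Lambda_\kappa$ always (part (1)), and by \eqref{eq:liftedmod} together with the Definition immediately following Lemma \ref{lem:modular}, the symbol of $\Delta_\hor$ is exactly $\mu_P$. Substituting $\mu_P = \mu \circ (m \mapsto [E,\nabla_E]^m)$ from the previous paragraph gives the stated form of the symbol.

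Finally I would note that the phrase ``$(E,\sigma_E,\nabla_E)$ is flat or has vertical deformation parameter $\kappa$'' is exactly the hypothesis under which the $\kappa$-total crossed product is defined (so the statement is not vacuous), and that the translation between ``conformality of $(E,\sigma_E,\nabla_E)$'' and ``conformality of the horizontal crossed product'' has already been recorded in the discussion immediately after Proposition \ref{prop:conformal}. So the only genuine content is bookkeeping: matching the objects produced by the crossed-product constructions of \S\ref{sec:3.3}--\S\ref{sec:3.4} with the hypotheses and conclusions of Theorem \ref{thm:totalgeometry}. I do not anticipate a serious obstacle here; the one point requiring a little care is to confirm that the $\Delta_\hor$ built intrinsically from $\mu_P$ via \eqref{eq:liftedmod} in Theorem \ref{thm:totalgeometry}(2) is the modular automorphism whose symbol is $\mu_P$ in the sense of the Definition after Lemma \ref{lem:modular}, which is immediate from comparing \eqref{eq:liftedmod} with \eqref{eq:modular}.
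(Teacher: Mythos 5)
Your proposal is correct and is essentially the paper's own argument: the paper states this corollary as a direct rewriting of Theorem \ref{thm:totalgeometry} via Proposition \ref{prop:conformal} (together with the identification of \(\Hor_\kappa\) of the \(\kappa\)-total crossed product with the horizontal crossed product and of its conformal factor with \(\mu \circ (m \mapsto [E,\nabla_E]^m)\)), which is exactly the bookkeeping you carry out. The one point needing care that you flag — matching \eqref{eq:liftedmod} with \eqref{eq:modular} so that the symbol of \(\Delta_\hor\) is \(\mu_P\) — is indeed immediate, as you say.
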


\begin{example}\label{ex:hopf7}
	Continuing from Examples \ref{ex:hopf5} and \ref{ex:hopf6}, observe that
	\[
		(\cO_q(\SU(2)),\Omega_{q,\hor}(\SU(2)),\du_{q,\hor}) = \Hor_\kappa(\cO_q(\SU(2)),\Omega_q(\SU(2)),\du_q,\Pi_q)
	\]
	is \(\star_q\)-conformal with conformal factor \(k \mapsto q^{-k}\); compare \cite[Lemma 5.6]{Zampini}.
	Moreover, recall that the usual basis for the free left \(\cO_q(\SU(2))\)-module \(\Omega_q(\SU(2))\) is \(\Set{e^0,e^+,e^-}\), where \(e^0 \coloneqq 2\pi\iu{}q^{-2}e_{q^2}\).
	Hence, by Theorem \ref{thm:totalgeometry}, the unique lift of \((\star_q,\rest{h_q}{\cO_q(\CP^1)})\) to \((\cO_q(\SU(2));\Omega_q(\SU(2)),\du_q;\Pi_q)\) is \((\Lambda_{q^2},\Lambda_q,\widetilde{\star}_q,h_q)\), where \(\widetilde{\star}_q\) is uniquely determined by \({\widetilde{\star}_q}(1) \coloneqq \tfrac{q^2}{2\pi} e^0 e^+  e^-\) and
	\[
		{\widetilde{\star}_q}(e^0) \coloneqq -\tfrac{2\pi}{q^2} e^+  e^-, \quad {\widetilde{\star}_q}(e^+) \coloneqq -\tfrac{q^6}{2\pi}e^0e^+,  \quad {\widetilde{\star}_q}(e^-) \coloneqq \tfrac{1}{2\pi q^2}e^0 e^-,
	\]
	and \(h_q\) is Woronowicz's Haar state on \(\cO_q(\SU(2))\).
	In fact, \({\widetilde{\star}_q}\) recovers the Hodge operator of Zampini \cite[Eq.\ 4.20]{Zampini} for \((\alpha^\prime,\beta,\nu,\gamma) = (\tfrac{-2\pi}{q^{4}},1,q^{-2},\tfrac{4\pi^2}{q^{4}})\), which satisfies his canonical constraints \cite[Remark 5.7]{Zampini} with respect to the choice of parameter \(\alpha^{\prime\prime} = -q^2\) from Example \ref{ex:hopf7}; it also necessarily recovers the Hodge operator of Kustermans--Murphy--Tuset \cite[Thm.\ 8.1 \emph{et seq.}]{KMThodge} up to suitable rescaling in each respective degree.
	Note that \(\Lambda_{q^2} \neq \Lambda_q\) since \(q \neq 1\).
\end{example}

\begin{example}\label{ex:heis9}
	We continue from Examples \ref{ex:heis6} and \ref{ex:heis7}.
	By Example \ref{ex:heis5}, the homomorphism \(\hat{E}\) of Example \ref{ex:heis4} is a homomorphism \(\hat{E} : \Gamma_\theta \to \dPic(C^\infty_\theta(\bT^2);\star)\) satisfying
	\(
		\mu \circ \pi_0(\hat{E}) = \left(g \mapsto (g_{21}\theta+g_{22})^{-1}\right)
	\).
	Hence, by Proposition \ref{prop:conformal} and Theorem \ref{thm:totalgeometry}, the unique lift of \((\star,\tau)\) from Example \ref{ex:heis7} to the real multiplication instanton \((P_\theta,\Omega_{P_\theta},\du_{P_\theta},\Pi_{P_\theta})\) is \((\Lambda_{\epsilon_\theta^2},\Lambda_{\epsilon_\theta},\widetilde{\star},\widetilde{\tau})\), where \(\widetilde{\star}\) is determined by
	\[
		{\widetilde{\star}}(1) \coloneqq e^0e^1e^2, \quad {\widetilde{\star}}(e^0) \coloneqq e^1e^2, \quad {\widetilde{\star}}(e^1) \coloneqq -e^0e^2, \quad {\widetilde{\star}}(e^2) \coloneqq e^0e^1,
	\]
	and \(\tilde{\tau}\) is determined by \(\rest{\tilde{\tau}}{C^\infty_\theta(\bT^2)} = \tau\).
	Note that \(\Lambda_{\epsilon_\theta^2} \neq \Lambda_{\epsilon_\theta}\) since \(\epsilon_\theta > 1\).
\end{example}

We conclude this section by showing that modular phenomena are no obstacle to equipping \(\Omega_P\) with an \(L^2\)-inner product and computing the formal adjoint of the exterior derivative \(\du_P\) in terms of the Hodge star operator.

\begin{proposition}\label{prop:totalprehilbert}
	Suppose that \((\Delta_\ver,\Delta_\hor,\star,\tau)\) be a total Riemannian geometry on \((P;\Omega_P,\du_P;\Pi)\) with inverse metric \(g\).
	Then \(\Omega_P\) defines a pre-Hilbert space with respect to the inner product \(\ip{}{}_\tau\) defined by
	\begin{equation}
		\forall \omega,\eta \in \Omega_P, \quad \ip{\omega}{\eta} \coloneqq \tau\mleft(g(\omega,\eta)\mright).
	\end{equation}
	Moreover, with respect to this pre-Hilbert space structure, the \(\U\)-action on \(\Omega_P\) defines a unitary representation of finite type, \(\Omega_P = \bigoplus_{j=0}^1 \bigoplus_{k=0}^N \Omega^{j,k}_P\) is an orthogonal direct sum, the left \(P\)-module structure on \(\Omega_P\) defines a \(\U\)-equivariant isometric \(\ast\)-representation on \(P\), the connection \(\Pi\) defines an orthogonal projection, the operator \(\star_P\) is unitary, and
	\(
		\du_P^\ast = {\star}^{-1} \circ {\du_P} \circ {\star} \circ \chi_P
	\).
\end{proposition}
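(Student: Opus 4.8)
The plan is to mirror the proof of Proposition~\ref{prop:basicseparable} as closely as possible, using the results of Theorem~\ref{thm:totalgeometry} and Proposition~\ref{prop:totalinversemetric} to reduce all structural claims to fibrewise statements that have already been established, together with the faithfulness hypotheses \eqref{eq:totalfaithful} and \eqref{eq:totalclosed}. First I would establish that \(\ip{}{}_\tau\) is a genuine pre-Hilbert inner product: positivity follows from \(g(\omega,\omega) = \star(\omega^\ast \star(\omega)) \in P_+\) (which one checks fibrewise, as in Proposition~\ref{prop:totalinversemetric}, using that \((\Omega^m_{P,\hor})_j\) and \(\vartheta\cdot(\Omega^m_{P,\hor})_j\) are \(B\)-self-correspondences of finite type and \(\tau = \tau_B \circ \bE_P\) sends \(P_+\) to \(\bR_{\geq 0}\)), while definiteness uses faithfulness of \(\tau\); Hermitian symmetry follows from the modular symmetry \eqref{eq:totalmodularsymmetry} together with the fact that \(\tau\) intertwines \(\Delta_\ver = \Lambda_\kappa\) and \(\Delta_\hor\) with the identity—this is where I would need that \(\tau\) is \(\U\)-invariant (for \(\Lambda_\kappa\)) and, for \(\Delta_\hor\), that its symbol is central in \(\Omega_B\) and \(\tau_B\) is a state, so \(\tau_B(b\mu(j)^{2k}) = \tau_B(\mu(j)^{2k}b)\) and one can move the modular factor across inside \(\tau\). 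The \(\U\)-representation is unitary of finite type because the \(\U\)-action is \(\U\)-equivariant for \(g\) (the \(P_{-j+k}\)-grading statement in Proposition~\ref{prop:totalinversemetric}) and \(\tau\) is \(\U\)-invariant, so distinct spectral subspaces of \(\Omega_P\) are orthogonal; combined with \(\U\)-equivariance of \(\Pi\) and \eqref{eq:bigrading}, the decomposition \(\Omega_P = \bigoplus_{j,k}\Omega^{j,k}_P\) is orthogonal by \eqref{eq:totalorthogonality} applied degree by degree.

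Next I would show the left \(P\)-module structure is an isometric \(\ast\)-representation. That it is a \(\ast\)-representation follows by a fibrewise computation: for \(p \in P\) and \(\omega,\eta\) homogeneous, one rewrites \(\ip{p\omega}{\eta}_\tau = \tau(\star(\omega^\ast p^\ast \star(\eta)))\), and by left \(P\)-linearity of \(\star\) and the modular right-linearity \eqref{eq:totalhodgeright}, \eqref{eq:totalhodgesym}, together with \(\U\)-equivariance of \(\tau\), this equals \(\ip{\omega}{p^\ast\eta}_\tau\) after absorbing the modular factors into \(\tau\) as above. Boundedness of \(\pi : P \to \bL(\Omega_P)\) comes from \eqref{eq:totalmetricbdd} extended from \(B\) to \(P\) via strong continuity and the \(\U\)-grading (or directly from \eqref{eq:totalmetricbdd} as stated, which already covers \(b \in B\) acting on \((\Omega^m_P)_j\); for general \(p \in P\) one writes \(p\) as a finite \(B\)-linear combination of the \(\U\)-homogeneous generators given by cobases of the \(P_k\), exactly as in the proof of Proposition~\ref{prop:basicseparable}). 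Then, precisely as in that proof, contractivity of \(\pi\) together with \eqref{eq:totalfaithful} forces \(\norm{p}^2 \geq \norm{\pi(p)^2} \geq \sup\{\ip{\pi(p)q}{\pi(p)q}_\tau : q \in P,\ \ip{q}{q}_\tau \leq 1\} = \norm{p}^2\), so \(\pi\) is isometric. That \(\Pi\) is an orthogonal projection is \eqref{eq:totalorthogonality}, and unitarity of \(\star\) follows from \eqref{eq:totalhodgeinv}, \eqref{eq:totalhodgeswap}, and \eqref{eq:totalhodgesym}: \(\ip{\star\omega}{\star\eta}_\tau = \tau(\star((\star\omega)^\ast \star^2\eta))\), and using \eqref{eq:totalhodgestar} to convert \((\star\omega)^\ast\) into \(\star\) of a modular translate of \(\omega^\ast\), one recovers \(\pm\ip{\omega}{\eta}_\tau\) with signs that cancel against those in \eqref{eq:totalhodgeinv}—again modular factors are harmless under \(\tau\).

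Finally, for the adjoint formula \(\du_P^\ast = \star^{-1}\circ\du_P\circ\star\circ\chi_P\), where \(\chi_P\) should denote the \(\bZ/2\bZ\)-grading of \(\Omega_P\) by parity of total degree (the analogue of \(\gamma_B\); I would flag that the statement as written uses \(\chi_P\), which I will take to be this grading operator), the argument is the standard integration-by-parts computation: for homogeneous \(\omega \in \Omega^m_P\), \(\eta \in \Omega^{m+1}_P\), one expands \(\ip{\du_P\omega}{\eta}_\tau = \tau(\star((\du_P\omega)^\ast\star\eta))\) and uses the graded Leibniz rule for \(\du_P\) to write \((\du_P\omega)^\ast \star\eta\) as \(\du_P(\omega^\ast\star\eta)\) minus \(\pm\omega^\ast\du_P(\star\eta)\); applying \(\star\), the first term is killed by \eqref{eq:totalclosed} (which says \(\tau\circ\star\circ\du_P\) vanishes on top-degree forms, hence—after multiplying \(\omega^\ast\star\eta\) out to total degree \(N\)—the whole term vanishes), and the second term, after inserting \(\star^2 = \pm\id\) in front of \(\du_P(\star\eta)\) and using unitarity of \(\star\) plus \eqref{eq:totalhodgesym}, becomes \(\ip{\omega}{\star^{-1}\du_P\star\chi_P\eta}_\tau\) once the sign bookkeeping from \eqref{eq:totalhodgeinv}, the Leibniz sign \((-1)^m\), and the grading \(\chi_P\) are reconciled.

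The main obstacle I anticipate is precisely the modular sign-and-factor bookkeeping in the last two steps: keeping track of how \(\Delta_\ver = \Lambda_\kappa\) and \(\Delta_\hor\) move past each other and get absorbed by \(\tau = \tau_B\circ\bE_P\) (this works only because \(\bE_P\) is \(\U\)-invariant and the symbol of \(\Delta_\hor\) is central and fixed by \(\tau_B\) in the appropriate sense), simultaneously with the \((-1)^{m(N+1-m)}\) signs from \eqref{eq:totalhodgeinv} and the Leibniz signs, so that the claimed clean formula \(\du_P^\ast = \star^{-1}\du_P\star\chi_P\) emerges with no residual modular twist. I would handle this by checking the adjoint formula separately on the two summands \(P\cdot\Omega_B\) and \(P\cdot\vartheta\cdot\Omega_B\) of \(\Omega_P\), exploiting that on basic forms everything reduces verbatim to Proposition~\ref{prop:basicseparable} for \((\star_B,\tau_B)\) and that \(\vartheta\) is \(\U\)-invariant and central up to the sign in \eqref{eq:conncent}, which the modular automorphisms are designed to absorb.
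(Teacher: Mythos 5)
Your proposal follows essentially the same route as the paper: reduce everything to the fibrewise \(B\)-self-correspondence structure on the \((\Omega^{j,k}_P)_\ell\), get orthogonality of the decomposition from \eqref{eq:totalorthogonality} together with \(\tau = \tau_B \circ \bE_P\), force isometry of the left \(P\)-action from \eqref{eq:totalfaithful} exactly as in Proposition~\ref{prop:basicseparable}, prove unitarity of \(\star\) from \eqref{eq:totalhodgestar}, \eqref{eq:totalhodgesym}, \eqref{eq:totalhodgeinv}, and obtain \(\du_P^\ast = \star^{-1}\circ\du_P\circ\star\circ\gamma_P\) by the Leibniz rule plus \eqref{eq:totalclosed} (your reading of \(\chi_P\) as the parity grading is the intended one). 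Two repairs are worth making, though neither changes the architecture. First, your stated mechanism for \(\tau\circ\Delta_\hor = \tau\) is not right: traciality of \(\tau_B\) against the central symbol only commutes the factor past \(b\), it does not remove it; the correct reason is that \(\Delta_\hor\) has symbol \(\mu_P(0)=1\), hence acts as the identity on \(\U\)-invariant elements, and \(\tau = \tau_B\circ\bE_P\) only sees the \(\U\)-invariant component --- moreover, for same-degree homogeneous \(\omega,\eta\) one has \(g(\omega,\eta)\in P^{\U}\), so in the instances you actually need, the modular factor never appears in the first place. Second, the modular bookkeeping you anticipate in the \(\ast\)-representation and adjoint steps does not arise at all if you compute as the paper does, via \(\star\mleft(g(\omega,\eta)\mright) = \omega^\ast\,\star(\eta)\) and left \(P\)-linearity of \(\star\): for instance \(\star\mleft(g(p\omega,\eta)\mright) = \omega^\ast p^\ast \star(\eta) = \star\mleft(g(\omega,p^\ast\eta)\mright)\) with no appeal to \eqref{eq:totalhodgeright}, and likewise the integration-by-parts computation for \(\du_P^\ast\) uses only the Leibniz rule, \eqref{eq:totalclosed} applied to the degree-\(N\) form \(\omega^\ast\star(\eta)\), and the parity sign, with no residual twist to absorb.
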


\begin{proof}
	Recall the faithful conditional expectation \(\bE_P : P \to B\) of Proposition \ref{prop:conditional}, so that the state \(\tau\) satisfies \(\tau = \tau \circ \bE_P = \tau_B \circ \bE_P\).
	
	First, let \((m,j) \in \Set{0,\dotsc,N} \times \bZ\) be given, so that \(g\) makes \((\Omega^m_P)_j\) and hence its direct summands \((\Omega^{0,m}_P)_j = \Pi((\Omega^m_P)_j)\) and \((\Omega^{1,m-1}_P)_j = (\id-\Pi)((\Omega^m_P)_j)\) into \(B\)-self-correspondences of finite type.
	Since the state \(\tau\) is faithful and positive, \(\ip{}{}_\tau\) restricts to \(\U\)-invariant positive-definite inner products on both \((\Omega^{0,m}_P)_j\) and \((\Omega^{1,m-1}_P)_j\); moreover, the proof of Proposition \ref{prop:basicseparable} shows that both \((\Omega^{0,m}_P)_j\) and \((\Omega^{1,m-1}_P)_j\) are separable as pre-Hilbert spaces by separability of \(B\) as a pre-\Cstar-algebra.
	But now, by \eqref{eq:totalstate}, for every \((m,j),(n,k) \in \Set{0,\dotsc,N} \times \bZ\), \(\omega \in (\Omega^{0,m}_{P,\hor})_j\), and \(\eta \in (\Omega^{0,n}_{P,\hor})_k\), we find that
	\(
		\ip{\omega}{\eta}_\tau = \tau\mleft(g(\omega,\eta)\mright) = \tau\mleft(\delta^{m,n}g(\omega,\eta)\mright) = \tau\mleft(\delta^{m,n}\delta^{j,k}g(\omega,\eta)\mright)
	\)
	and similarly that \(\ip{\omega\vartheta}{\eta\vartheta}_\tau = \tau\mleft(\delta^{m,n} g(\omega\vartheta,\eta\vartheta)\mright)\), while \(\ip{\omega}{\eta\vartheta}_\tau = \ip{\omega\vartheta}{\eta}_\tau = 0\) by \eqref{eq:totalorthogonality}.
	This shows that \(\Omega_{P} = \bigoplus_{j=0}^1 \bigoplus_{k=0}^N \bigoplus_{\ell=-\infty}^\infty (\Omega^{j,k}_P)_\ell\) is orthogonal with respect to \(\ip{}{}_\tau\), so that \(\ip{}{}_\tau\) defines a \(\U\)-invariant positive-definite inner product on \(\Omega_{P}\), with respect to which \(\Omega_{P}\) is separable and the \(\U\)-action is unitary and of finite type.	

	Next, we show that the left \(P\)-module structure on \(\Omega_{P}\) yields a \(\U\)-equivariant isometric \(\ast\)-representation of \(P\); note that \(\U\)-equivariance is automatic.
	First, we show that each \(p \in P\) acts as an adjointable operator on \(\Omega_{P,\hor}\) with adjoint given by \(p^\ast\).
	Indeed, let \(p \in P\).
	Then, for all \(\omega,\eta \in \Omega_P\),
	\[
		{\star}\mleft(g(p\omega,\eta)\mright) = (p \omega)^\ast \cdot {\star}(\eta) = \omega^\ast  p^\ast {\star}(\eta) = \omega^\ast {\star}(p^\ast  \eta) = {\star}\mleft(g(\omega,p^\ast  \eta)\mright)
	\]
	so that \(\ip{p\omega}{\eta}_\tau = \ip{\omega}{p^\ast  \eta}\).
	Now, let us show that each \(p \in P\) acts as a bounded operator on \(\Omega_{P}\).
	Indeed, let \(p \in P\) be given, and write \(p = \sum_{k\in\bZ} \hat{p}(k)\), where \(\hat{p}(k) \in P_k\) for each \(k \in \bZ\), so that
	\(
		E(p^\ast p) = E\mleft(\sum_{k,\ell \in \bZ} \hat{p}(k)^\ast \hat{p}(\ell)\mright) = \sum_{k \in \bZ} \hat{p}(k)^\ast \hat{p}(k).
	\)
	Let \((m,j) \in \Set{0,\dotsc,N} \times \bZ\) and let \(\omega \in (\Omega^m_{P})_j\).
	Then, by adjointability of \(p\), Equation \ref{eq:totalmetricbdd}, the proof of Proposition \ref{prop:basicseparable}, and contractivity of \(E\),
	\[
		(\bE_P \circ g)(p \omega,p \omega)) =  (\bE_P \circ g)\mleft(\omega,\sum_{k,\ell \in \bZ}\hat{p}(k)^\ast \hat{p}(\ell) \omega\mright) = g\mleft(\omega,\bE_P(p^\ast p) \omega\mright)  \leq \norm{p}^2 g(\omega,\omega).
	\]
	Thus, the left \(P\)-module structure defines a bounded \(\ast\)-representation of \(P\), which is isometric, \emph{mutatis mutandis}, by the proof of Proposition \ref{prop:basicseparable}.
	
	Next, we show that \(\du_{P}\) is adjointable with adjoint \({\star}^{-1} \circ {\du_P} \circ {\star} \circ \chi_P\).
	Let \(m \in \Set{0,\dotsc,N+1}\), let \(\omega \in \Omega^{m-1}_P\), and let \(\eta \in \Omega^m_P\).
	Then, since \(\tau_P \circ {\star} \circ \du_P\mleft(\omega^\ast \eta\mright) = 0\) by \eqref{eq:totalclosed}, it follows that
	\begin{align*}
		\du_P(\omega)^\ast  {\star}(\eta)
		&= \du_P\mleft(\omega^\ast \eta\mright) + (-1)^m	 \omega^\ast (\du_P \circ {\star})(\eta)\\
		&= \du_P\mleft(\omega^\ast \eta\mright) + \omega^\ast  {\star}\mleft(({\star}^{-1} \circ \du_P \circ {\star} \circ \gamma_P)(\eta)\mright).
	\end{align*}
	
	Finally, we show that \(\star_P\) is unitary. 
	Let \((j,k) \in \Set{0,1} \times \Set{0,\dotsc,N}\); let \(\omega,\eta \in \Omega^{j,k}_P\).
	Then \(\ip{{\star}(\omega)}{{\star}(\eta)}_\tau = \ip{\omega}{\eta}_\tau\) since
	\begin{align*}
		{\star}(\omega)^\ast \cdot {\star}\mleft({\star}(\eta)\mright)
		&= {\star}\mleft((\Delta^{1-2j}_\ver \circ \Delta^{N-2k}_\hor)(\omega^\ast)\mright) \cdot (-1)^{m(N+1-m)}\eta\\
		&= (\Delta^{1-2j}_\ver \circ \Delta^{N-2k}_\hor)\mleft({\star}^{-1}(\omega^\ast) \cdot (\Delta^{2k-N}_\hor \circ \Delta^{2j-1}_\ver)(\eta)\mright)\\
		&= (\Delta^{1-2j}_\ver \circ \Delta^{N-2k}_\hor)\mleft(\omega^\ast \cdot {\star}(\eta)\mright). \qedhere
	\end{align*}
\end{proof}

\subsection{Unbounded lifts of commutator representations}\label{sec:4.3}

We now consider the analogous lifting problem for Connes's NC Riemannian geometry in terms of \emph{spectral triples}~\cite{Connes95}.
Here, analogues of Dirac-type operators simultaneously encode differential calculus (to first order), index theory, Riemannian geometry, and metric geometry.
Following Schm\"{u}dgen \cite{Schmuedgen}, we restrict our attention to the first aspect and consider \emph{commutator representations} of \(\ast\)-exterior algebras through degree \(1\).
However, the resulting lifted commutator representations will once more invovle modular phenomena in the form of unboundedness of represented \(1\)-forms.

Just as before, let \(\kappa > 0\), let \((P;\Omega_P,\du_P;\Pi)\) be a \(\kappa\)-differentiable quantum principal \(\U\)-bundle with connection over \(B\), let \(\vartheta\) be the connection \(1\)-form of \(\Pi\), and let \(\hat{\Phi}_P\) be the Fr\"{o}hlich automorphism of \(\Hor_\kappa(P;\Omega_P,\du_P;\Pi) = (P,\Omega_{P,\hor},\du_{P,\hor})\).
Moreover, given a pre-Hilbert space \(H\), let \(\bL(H)\) denote the unital pre-\Cstar-algebra of bounded adjointable operators on \(H\), which is \(\bZ/2\bZ\)-graded as a \(\ast\)-algebra whenever the \(H\) is as a pre-Hilbert space.

We begin with a simplified version of the notion of spectral triple, which we shall apply to the NC base manifold \((\Omega_B,\du_B)\).
In short, it generalises Clifford actions of \(1\)-forms in terms of bounded commutators with a Dirac-type operator~\cite{CPR}.

\begin{definition}[{Baaj--Julg~\cite{BJ}, Connes~\cite{Connes95}, Schm\"{u}dgen~\cite{Schmuedgen}}]
	Let \(H\) be a separable \(\bZ/2\bZ\)-graded pre-Hilbert space equipped with a bounded \(\ast\)-homomorphism \(\pi : B \to \bL(H)\) and an odd symmetric \(\bC\)-linear map \(D : H \to H\), so that \(\bL(H)\) defines a \(B\)-bimodule with respect to \(\pi\).
	We call \((H,\pi,D)\) a \emph{bounded commutator representation} of \((B;\Omega_B,\du_B)\) whenever there exists a (necessarily unique) \(B\)-bimodule homomorphism \(\pi_D : \Omega^1_B \to \bL(H)\), such that
	\begin{equation}
		\forall b \in B, \quad \pi_D \circ \du_B(b) = \iu{}[D,\pi(b)];
	\end{equation}
	hence, we call \((H,\pi,D)\) \emph{faithful} whenever \(\pi\) is isometric and \(\pi_D\) is injective.
\end{definition}

\begin{remark}\label{rem:kk1}
	Let \(\mathfrak{B}\) denote the \Cstar-algebra completion of \(B\).
	A bounded commutator representation \((H,\pi,D)\) of \((B;\Omega_B,\du_B)\) defines an even spectral triple for \(\mathfrak{B}\) if and only if \(D\) is essentially self-adjoint and has compact resolvent.
\end{remark}

\begin{example}[D\k{a}browski--Sitarz \cite{DS01}, Majid \cite{Majid05}]\label{ex:hopf8}
	We continue from Example \ref{ex:hopf6}.
	Let \(\slashed{S}_{q,\pm}(\CP^1) \coloneqq \cO_q(\SU(2))_{\mp 1}\) with inner product \(\ip{}{}\) given by
	\(
		\ip{s_1}{s_2} \coloneqq h_q(s_1^\ast s_2)
	\)
	for all \(s_1,s_2 \in \slashed{S}_{q,\pm}(\CP^1)\); hence, by the proof of Proposition \ref{prop:basicseparable} and faithfulness of \(h_q\) on \(C_q(\SU(2))\) \cite{Nagy}, each of \(\slashed{S}_{q,\pm}(\CP^1)\) defines a separable pre-Hilbert space admitting isometric \(\pi_{\pm} : \cO_q(\CP^1) \to \bL(\slashed{S}_{q,\pm}(\CP^1))\), respectively, given by left multiplication in \(\cO_q(\SU(2))\).
	Hence, let \(\slashed{S}_q(\CP^1) \coloneqq \slashed{S}_{q,+}(\CP^1) \oplus \slashed{S}_{q,-}(\CP^1)\) as an orthogonal direct sum of pre-Hilbert spaces with \(\bZ/2\bZ\)-grading \(\id \oplus (-\id)\) and define \(\pi : \cO_q(\CP^1) \to \bL(\slashed{S}_q(\CP^1))\) by setting \(\pi \coloneqq (b \mapsto \pi_{+}(b) \oplus \pi_{-}(b))\).
	Finally, let \(\slashed{D}_q : \slashed{S}_q(\CP^1) \to \slashed{S}_q(\CP^1)\) be Majid's spin Dirac operator \cite[Prop.\ 5.5]{Majid05}, which is constructed from the maps \(\partial_+\) and \(\partial_-\) of Example \ref{ex:hopf3} by \(\slashed{D}_q \coloneqq \left(\begin{smallmatrix}0& q^{-1}\partial_+\\q\partial_- & 0\end{smallmatrix}\right)\).
	Then \((\slashed{S}_q(\CP^1),\pi,\slashed{D}_q)\) is a faithful bounded commutator representation of \((\Omega_q(\CP^1),\du_q)\) that recovers Majid's \(q\)-deformed Clifford action \cite[\S 5]{Majid05} as the induced map \(\pi_{\slashed{D}_q}\).
	Moreover, a straightforward calculation now shows that \((\slashed{S}_q(\CP^1),\pi,\slashed{D}_q)\) recovers the \emph{spin Dirac} spectral triple on \(\cO_q(\CP^1)\) of D\k{a}browski--Sitarz \cite{DS01} as reformulated by Neshveyev--Tuset \cite[\S 3]{NT}.
\end{example}

The following proposition shows that NC Riemannian geometry in terms of spectral triples generalises NC Riemannian geometry in terms of abstract Hodge star operators.

\begin{proposition}[{cf.\ Das--\'{O} Buachalla--Somberg~\cite[\S 3.2]{DOBS}}]\label{prop:hodgederham}
	Let \((\star,\tau)\) be a Riemannian geometry on \((B;\Omega_B,\du_B)\), so that \(\Omega_B\) defines a \(\bZ/2\bZ\)-graded separable pre-Hilbert space with respect to the inner product \(\ip{}{}_\tau\) induced by \((\star,\tau)\) and the \(\bZ/2\bZ\)-grading \(\gamma_B\).
	Let \(\pi : B \to \bL(\Omega_B)\) denote the bounded \(\ast\)-representation of \(B\) on \(\Omega_B\) defined by left multiplication.
	The triple \((\Omega_B,\pi,\du_B+\du_B^\ast)\) defines a faithful bounded commutator representation of \((\Omega_B,\du_B)\) that satisfies
	\begin{equation}
		\forall \alpha \in \Omega^1_B, \, \forall \beta \in \Omega_B \quad \pi_{\du_B+\du_B^\ast}(\alpha)\beta = \iu{}\alpha \cdot \beta + \iu{}{\star}^{-1}\mleft(\alpha \cdot ({\star} \circ \gamma_B)(\beta)\mright).
	\end{equation}
\end{proposition}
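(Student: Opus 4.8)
The plan is to establish the three claims in order: first that $\du_B + \du_B^\ast$ is an odd symmetric operator (so that $(\Omega_B, \pi, \du_B + \du_B^\ast)$ is data of the required type), then that the induced map $\pi_{\du_B + \du_B^\ast}$ exists and has the stated formula, and finally that the representation is faithful. The pre-Hilbert space structure, the orthogonal decomposition $\Omega_B = \bigoplus_{k=0}^N \Omega^k_B$, unitarity of $\star$, and the adjoint formula $\du_B^\ast = \star^{-1} \circ \du_B \circ \star \circ \gamma_B$ are all supplied by Proposition \ref{prop:basicseparable}, so I would quote these freely. Oddness of $\du_B + \du_B^\ast$ with respect to $\gamma_B$ is immediate since $\du_B$ raises degree by $1$ and $\du_B^\ast$ lowers it by $1$; symmetry is the statement that $\du_B^\ast$ is the adjoint of $\du_B$ on the pre-Hilbert space $\Omega_B$, which is exactly what Proposition \ref{prop:basicseparable} gives.

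The computational heart is the second claim. For $b \in B$ and $\beta \in \Omega_B$ I would compute $\iu{}[\du_B + \du_B^\ast, \pi(b)]\beta$ by expanding both $\du_B$ and $\du_B^\ast = \star^{-1} \circ \du_B \circ \star \circ \gamma_B$ using the graded Leibniz rule for $\du_B$ and the fact that $\star$ and hence $\star^{-1}$ are $\ast$-preserving $B$-bimodule morphisms (so in particular left $B$-linear). The degree-raising part yields $\iu{}\,\du_B(b)\cdot\beta$ after the Leibniz rule and cancellation, while the degree-lowering part, after pushing $\pi(b)$ through $\star$ (left $B$-linearity) and applying the Leibniz rule to $\du_B(b \cdot \star(\gamma_B\beta))$, yields $\iu{}\,\star^{-1}(\du_B(b) \cdot (\star \circ \gamma_B)(\beta))$; here one uses that $\gamma_B$ commutes with $\pi(b)$ and that $\du_B b \in \Omega^1_B$ so the sign from $\gamma_B$ works out. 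This shows that the map $\alpha \mapsto (\beta \mapsto \iu{}\alpha\cdot\beta + \iu{}\star^{-1}(\alpha\cdot(\star\circ\gamma_B)(\beta)))$ on $\du_B(B)$ agrees with $\iu{}[\du_B + \du_B^\ast, \pi(\cdot)]$; since $(\Omega_B,\du_B)$ is a $\ast$-exterior algebra, $\Omega^1_B$ is generated as a $B$-bimodule by $\du_B(B)$, so this formula extends uniquely to a $B$-bimodule homomorphism $\pi_{\du_B+\du_B^\ast}: \Omega^1_B \to \bL(\Omega_B)$, which is the required $\pi_D$. One should check the formula is well-defined as written on all of $\Omega^1_B$, i.e.\ that $\alpha \mapsto \iu{}\alpha\cdot\beta + \iu{}\star^{-1}(\alpha\cdot(\star\circ\gamma_B)\beta)$ really is $B$-bilinear in $\alpha$ and $\beta$-linear; $B$-bilinearity in $\alpha$ follows from $\star^{-1}$ being a $B$-bimodule morphism, and $B$-linearity on the right follows likewise, so the two descriptions of $\pi_D$ coincide and there is no consistency obstruction.

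Faithfulness has two parts. That $\pi$ is isometric is already in Proposition \ref{prop:basicseparable}. For injectivity of $\pi_{\du_B+\du_B^\ast}$, suppose $\pi_{\du_B+\du_B^\ast}(\alpha) = 0$ for some $\alpha \in \Omega^1_B$; applying the operator to $1 \in \Omega^0_B$ gives $\iu{}\alpha\cdot 1 + \iu{}\star^{-1}(\alpha \cdot \star(1)) = 0$, and since $\alpha \cdot 1 = \alpha \in \Omega^1_B$ while $\star^{-1}(\alpha \cdot \star(1)) \in \Omega^1_B$ as well (because $\star(1) \in \Omega^N_B$ and $\alpha \cdot \Omega^N_B \subseteq \Omega^{N+1}_B = 0$, so that term vanishes), we get $\iu{}\alpha = 0$, hence $\alpha = 0$. (If $N = 1$ one should note $\Omega^2_B = 0$ still forces the second term to vanish; if $N = 0$ the calculus is trivial and $\Omega^1_B = 0$, so there is nothing to check.) The only mild subtlety — and the place I expect to spend the most care — is tracking the grading signs coming from $\gamma_B$ in the degree-lowering computation and confirming that the term $\alpha\cdot\star(\gamma_B\beta)$ is handled correctly degree by degree, but since $\alpha$ has degree exactly $1$ these signs are uniform and cause no real trouble.
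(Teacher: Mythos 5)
Your commutator computation is exactly the one the paper uses (expand \(\du_B\) by the Leibniz rule and \(\du_B^\ast = \star^{-1}\circ\du_B\circ\star\circ\gamma_B\) by the Leibniz rule plus left \(B\)-linearity of \(\star\)), your verification that the resulting formula is a \(B\)-bimodule map in \(\alpha\) is correct, and your injectivity argument by evaluating \(\pi_D(\alpha)\) at \(1\in\Omega^0_B\) (the second term dies because \(\alpha\cdot\star(1)\in\Omega^{N+1}_B=0\)) is a clean way to make explicit a point the paper leaves implicit.

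However, there is a genuine gap: you never show that the operators in question actually lie in \(\bL(\Omega_B)\), i.e.\ that they are \emph{bounded and adjointable} on the pre-Hilbert space \(\Omega_B\). The definition of a bounded commutator representation requires \(\pi_D:\Omega^1_B\to\bL(\Omega_B)\), so you must prove that left multiplication \(\sce(\omega)\) by a form, and hence \(\star^{-1}\circ\sce(\omega)\circ\star\circ\gamma_B\) and the commutators \([\du_B+\du_B^\ast,\pi(b)]\), are bounded with an adjoint that again preserves \(\Omega_B\); on an incomplete inner-product space adjointability is not automatic, and boundedness of \(\sce(\omega)\) with respect to \(\ip{}{}_\tau\) is not a formal consequence of anything you quote (Proposition \ref{prop:basicseparable} only covers left multiplication by elements of \(B\)). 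The paper isolates exactly this step in Lemma \ref{lem:basicadjoint}: one first works in the finite-type \(B\)-self-correspondence \((\Omega_B,g)\), where the right \(B\)-linear map \(\sce(\omega)\) is automatically bounded and adjointable with \(\sce(\omega)^\ast=(-1)^k\star^{-1}\circ\sce(\omega^\ast)\circ\star\circ\gamma_B^k\) (this is where \eqref{eq:hodge2} enters), and then transfers both the adjoint relation and the norm bound to the scalar inner product via \(\ip{}{}_\tau=\tau\circ g\), using \(\tau\circ g(T\alpha,T\alpha)\le\norm{T}^2\,\tau\circ g(\alpha,\alpha)\). Your argument needs this lemma (or an equivalent substitute) before the phrase ``extends uniquely to a \(B\)-bimodule homomorphism \(\pi_{\du_B+\du_B^\ast}:\Omega^1_B\to\bL(\Omega_B)\)'' is justified; with it inserted, the rest of your proof goes through and coincides with the paper's.
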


\begin{lemma}\label{lem:basicadjoint}
	Under the hypotheses of Proposition \ref{prop:hodgederham}, given \(k \in \Set{0,\dotsc,N}\) and \(\omega \in \Omega^k_B\), define \(\sce(\omega) : \Omega_B \to \Omega_B\) to be left multiplication by \(\omega\) in \(\Omega_B\).
	Then, for all \(k \in \Set{0,\dotsc,N}\) and \(\omega \in \Omega^k_B\), the map \(\sce(\omega)\) defines a bounded adjointable operator on the pre-Hilbert space \(\Omega_B\) that satisfies
	\(
		\sce(\omega)^\ast = (-1)^k {\star}^{-1} \circ \sce(\omega^\ast) \circ {\star} \circ \gamma_B^k.
	\)
\end{lemma}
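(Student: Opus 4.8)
The plan is to obtain boundedness and adjointability of $\sce(\omega)$ almost for free from the pre-Hilbert \emph{module} structure on $\Omega_B$, and then to read off the explicit adjoint formula by a short degree-by-degree sign computation. First I would invoke Proposition~\ref{prop:basicseparable}, which makes $\Omega_B$ into a $B$-self-correspondence of finite type with respect to the inverse metric $g$; in particular $\Omega_B$ is a right pre-Hilbert $B$-module of finite type, so $\bL(\Omega_B)$ is a unital pre-\Cstar-algebra and every right $B$-linear endomorphism of $\Omega_B$ lies in it (cf.\ \S\ref{sec:2.1}). Since left multiplication by $\omega$ commutes with the right $B$-action, the map $\sce(\omega)$ is right $B$-linear, hence $\sce(\omega) \in \bL(\Omega_B)$; I would write $\sce(\omega)^{\dagger}$ for its $g$-adjoint.

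Next I would transport this to the pre-Hilbert space $(\Omega_B,\ip{}{}_\tau)$, where $\ip{\xi}{\eta}_\tau = \tau(g(\xi,\eta))$. For $\xi,\eta \in \Omega_B$ one has $\ip{\sce(\omega)\xi}{\eta}_\tau = \tau(g(\sce(\omega)\xi,\eta)) = \tau(g(\xi,\sce(\omega)^{\dagger}\eta)) = \ip{\xi}{\sce(\omega)^{\dagger}\eta}_\tau$, so $\sce(\omega)^{\dagger}$ is a formal adjoint of $\sce(\omega)$; and boundedness follows from the Hilbert-module operator inequality together with positivity of the state $\tau$, since
\[
	\ip{\sce(\omega)\xi}{\sce(\omega)\xi}_\tau = \tau\bigl(g(\xi,\sce(\omega)^{\dagger}\sce(\omega)\xi)\bigr) \leq \norm{\sce(\omega)}^2_{\bL(\Omega_B)}\,\tau\bigl(g(\xi,\xi)\bigr) = \norm{\sce(\omega)}^2_{\bL(\Omega_B)}\,\ip{\xi}{\xi}_\tau.
\]
Thus $\sce(\omega)$ is a bounded adjointable operator on $(\Omega_B,\ip{}{}_\tau)$ with Hilbert-space adjoint $\sce(\omega)^* = \sce(\omega)^{\dagger}$.

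It then remains to identify $\sce(\omega)^{\dagger}$ with $S := (-1)^k\,\star^{-1} \circ \sce(\omega^*) \circ \star \circ \gamma_B^k$. Since $\star$ maps $\Omega^j_B$ into $\Omega^{N-j}_B$ and $\gamma_B$ preserves degree, the operator $S$ lowers degree by $k$ while $\sce(\omega)$ raises it by $k$; as the decomposition $\Omega_B = \bigoplus_{j=0}^N \Omega^j_B$ is $g$-orthogonal, it suffices to check $g(\sce(\omega)\beta_1,\beta_2) = g(\beta_1,S\beta_2)$ for homogeneous $\beta_1 \in \Omega^m_B$ and $\beta_2 \in \Omega^{m+k}_B$. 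Unwinding the left side via $g(\cdot,\cdot) = \star\bigl((\cdot)^* \cdot \star(\cdot)\bigr)$ and the graded $\ast$-identity $(\omega\beta_1)^* = (-1)^{km}\beta_1^*\omega^*$ gives $(-1)^{km}\star\bigl(\beta_1^*\,\omega^*\,\star(\beta_2)\bigr)$; unwinding the right side, using $\gamma_B^k\beta_2 = (-1)^{k(m+k)}\beta_2$, the cancellation $\star \circ \star^{-1} = \id_{\Omega_B}$, and $2k \equiv k^2 + k \pmod 2$, would give the same expression, whence the claim.

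The computation has no genuine obstacle; the only points requiring care are confirming that the framework of \S\ref{sec:2.1} ($\bL(\Omega_B)$ a pre-\Cstar-algebra, every right $B$-linear endomorphism bounded and adjointable) genuinely applies to $(\Omega_B,g)$ — which it does by Proposition~\ref{prop:basicseparable} — and tracking the Koszul signs produced by the graded involution and by $\gamma_B^k$ so that they cancel precisely. Note that, perhaps surprisingly, neither $\ast$-invariance of $\star$ nor the compatibility \eqref{eq:hodge2} is needed for this lemma: only that $\star$ is an invertible $B$-bimodule morphism shifting degree $j \mapsto N-j$. Those further properties enter earlier, in Proposition~\ref{prop:basicseparable}, to ensure that $\ip{}{}_\tau$ is a genuine inner product and that $\star$ is unitary.
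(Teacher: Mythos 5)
Your proof is correct and takes essentially the same route as the paper's: both obtain boundedness and adjointability from the $B$-self-correspondence structure of $(\Omega_B,g)$ supplied by Proposition~\ref{prop:basicseparable}, identify the adjoint by the same homogeneous Koszul-sign computation using $(\omega\beta_1)^\ast = (-1)^{km}\beta_1^\ast\omega^\ast$ and $k^2+k \equiv 0 \pmod 2$, and transfer both conclusions to the pre-Hilbert space via $\ip{}{}_\tau = \tau \circ g$ and positivity of $\tau$. No changes are needed.
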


\begin{proof}
	Let \(k \in \Set{0,\dotsc,N}\) and \(\omega \in \Omega^k_B\) be given.
	Let \(g\) be the inverse metric induced by \((\star,\tau)\), so that \(\Omega_B\) defines a \(B\)-self-correspondence of finite type with respect to \(g\).
	Hence, the right \(B\)-linear map \(\sce(\omega)\) is adjointable and bounded as an operator on \((\Omega_B,g)\) with operator norm \(\norm{\sce(\omega)} < +\infty\).
	Moreover, given \(j \in \Set{0,\dotsc,N}\), \(\alpha \in \Omega^j_B\), and \(\beta \in \Omega^{k+j}_B\), we see that
	\[
		(\sce(\omega)\alpha)^\ast {\star}(\beta) = (-1)^{jk} \alpha^\ast  \omega^\ast  {\star}(\beta) = \alpha^\ast \cdot ((-1)^k {\star}^{-1} \circ \sce(\omega^\ast) \circ {\star} \circ \gamma_B^k)(\beta),
	\]
	so that \(\sce(\omega)^\ast = (-1)^k {\star}^{-1} \circ \sce(\omega^\ast) \circ {\star} \circ \gamma_B\) for \(\sce(\omega)\) as operators on the \(B\)-self-correspondence of finite type \(\Omega_B\).
	But now, recall that \(\ip{}{}_\tau = \tau \circ g\), which immediately implies that \(\sce(\omega)^\ast\) remains the adjoint of \(\sce(\omega)\) as an operator on the pre-Hilbert space \(\Omega_B\).
	Then \(\sce(\omega)\) is also bounded with operator norm bounded by \(\norm{\sce(\omega)}\), since, for all \(\alpha \in \Omega_B\),
	\[
		\ip{\sce(\omega)\alpha}{\sce(\omega)\alpha}_\tau = \tau\mleft(g(\sce(\omega)\alpha,\sce(\omega)\alpha)\mright) \leq \tau\mleft(\norm{\sce(\omega)}^2 g(\alpha,\alpha)\mright) = \norm{\sce(\omega)}^2 \ip{\alpha}{\alpha}_\tau. \qedhere
	\]
\end{proof}

\begin{proof}[Proof of Proposition \ref{prop:hodgederham}]
In light of Proposition \ref{prop:basicseparable} and Lemma \ref{lem:basicadjoint}, it suffices to show that
\(
	[\du_B+\du_B^\ast,\pi(b)]\beta = \du_B(b) \cdot \beta + {\star}^{-1}\mleft(\du_B(b) \cdot ({\star} \circ \gamma_B)(\beta)\mright).
\)
for all \(b \in B\) and \(\beta \in \Omega_B\).
Hence, let \(b \in B\), \(k \in \Set{0,\dotsc,N}\), and \(\beta \in \Omega^k_B\) be given.
On the one hand, the Leibniz rule for \(\du_B\) immediately implies that \([\du_B,\pi(b)]\beta = \du_B(b) \cdot \beta\).
On the other hand, together with left \(B\)-linearity of \(\gamma_B\) and \(\star\), it also implies that \([\du_B^\ast,\pi(b)]\beta = {\star}^{-1}\mleft(\du_B(b) \cdot ({\star} \circ \gamma_B)(\beta)\mright)\), since
\begin{multline*}
	\du_B^\ast\mleft(b \cdot \beta\mright) 
	= {\star}^{-1} \circ {\du_B} \circ {\star} \circ \gamma_B(b \cdot \beta)
	= {\star}^{-1} \circ {\du_B}\mleft(b \cdot ({\star} \circ \gamma_B)(\beta)\mright)\\
	= {\star}^{-1}\mleft(\du_B(b) \cdot ({\star}\circ \gamma_B)(\beta)\mright) + b \cdot \du_B^\ast\beta. 
\end{multline*}
Hence, in the notation of Lemma \ref{lem:basicadjoint}, we find that
\[
	\iu{}[\du_B+\du_B^\ast,\pi(b)] = \iu{}\sce(\du_B b) + \iu{}({\star}^{-1} \circ \sce(\du_B b) \circ {\star} \circ \gamma_B) = \iu{}\sce(\du_B b) + (\iu{}\sce(\du_B b^\ast))^\ast. \qedhere
\]
\end{proof}

\begin{definition}
	Let \((\star,\tau)\) be a Riemannian geometry on \((B;\Omega_B,\du_B)\).
	The \emph{Hodge--de Rham commutator representation} induced by \((\star,\tau)\) is the faithful bounded commutator representation \((\Omega_B,\pi_B,\du_B+\du_B^\ast)\) of \((B;\Omega_B,\du_B)\) constructed from \((\star,\tau)\) by Proposition \ref{prop:hodgederham}.
\end{definition}

We now turn to the construction of commutator representations for \((P;\Omega_P,\du_P;\Pi)\).
The following proposition shows that faithful bounded commutator representations of \((P;\Omega_P,\du_P)\) do not exist when \(\kappa \neq 1\).
This forces us to consider commutator representations where \(1\)-forms may be represented by unbounded operators.

\begin{proposition}[{cf.\ Schm\"{u}dgen \cite[Lemma 6]{Schmuedgen}}]\label{prop:nogobdd}
	Let \((H,\pi,D)\) be a bounded commutator representation of \((P,\Omega_P,\du_P)\). 
	If  \(\kappa \neq 1\), then \((\id-\Pi)(\Omega^1_P) \subseteq \ker \pi_D\).
\end{proposition}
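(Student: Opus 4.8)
The plan is to exploit the relation \eqref{eq:conncent} satisfied by the connection $1$-form $\vartheta$, which generates $(\id-\Pi)(\Omega^1_P) = \vartheta \cdot \Omega_{P,\hor}$ over $\Omega_{P,\hor}$, together with the Leibniz rule characterising $\pi_D$. First I would observe that since $\pi_D$ is a $P$-bimodule homomorphism and $\Omega^1_{P,\hor} = P \cdot \hat{\iota}_P(\Omega^1_B)$, while $(\id-\Pi)(\Omega^1_P) = \vartheta \cdot \Omega_{P,\hor}^{\Unit(1)} \cdot P$ as $P$-bimodules, it suffices by $P$-bilinearity to show that $\pi_D(\vartheta) = 0$. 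Since $P_1 \cdot P_{-1} = P^{\Unit(1)}$ and $\vartheta$ is $\Unit(1)$-invariant, I can pick a finite family $(\epsilon_j)_{j=1}^n$ in $P_1$ with $\sum_j \epsilon_j^\ast \epsilon_j = 1$ and use $\vartheta = \sum_j \epsilon_j^\ast \epsilon_j \vartheta$; the key computational input is that $\epsilon_j \vartheta = \kappa^{-1} \vartheta \epsilon_j$ by \eqref{eq:conncent} (the case $(n,k)=(0,1)$), so that $\epsilon_j \vartheta \epsilon_j^\ast$ has a controlled degree and commutation behaviour.

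The heart of the argument is a trace-type or degree-counting identity forced by the bounded commutator structure. Applying $\pi$ and $\pi_D$ to the identity $\vartheta \cdot \epsilon_j = \kappa \epsilon_j \cdot \vartheta$ in $\Omega^1_P$ — using that $\pi_D$ is a $P$-bimodule map, so $\pi_D(\vartheta \epsilon_j) = \pi_D(\vartheta)\pi(\epsilon_j)$ and $\pi_D(\epsilon_j \vartheta) = \pi(\epsilon_j)\pi_D(\vartheta)$ — yields $\pi_D(\vartheta)\pi(\epsilon_j) = \kappa\, \pi(\epsilon_j)\pi_D(\vartheta)$ for each $j$. Summing over $j$ after multiplying on the appropriate side by $\pi(\epsilon_j^\ast)$, and using $\sum_j \pi(\epsilon_j)\pi(\epsilon_j^\ast) = \pi(\sum_j \epsilon_j \epsilon_j^\ast)$ together with $\sum_j \pi(\epsilon_j^\ast)\pi(\epsilon_j) = \pi(1) = \id_H$, I would derive that $\pi_D(\vartheta) = \kappa\, T^{-1}\pi_D(\vartheta)T$ for a suitable bounded operator $T$ (the represented image of $\sum_j \epsilon_j \epsilon_j^\ast$ or a related element), or more directly an equation of the form $S = \kappa S$ in $\bL(H)$ after pairing against a cyclic vector or taking a trace of a compact perturbation. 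The cleanest route is probably: $\pi_D(\vartheta)$ both commutes (up to the factor $\kappa$) with all $\pi(\epsilon_j)$ and, being bounded and adjointable on $H$, cannot satisfy a genuine $\kappa$-twisted intertwining relation with a unitary-like family unless it vanishes — concretely, iterating the relation $\pi_D(\vartheta)\pi(\epsilon_j) = \kappa\pi(\epsilon_j)\pi_D(\vartheta)$ gives $\norm{\pi_D(\vartheta)\pi(p)} = \kappa^k\norm{\pi(p)\pi_D(\vartheta)}$ for $p \in P_k$, and applying this with $p$ and $p^\ast$ and using isometry or boundedness of $\pi$ forces $\norm{\pi_D(\vartheta)} = \kappa^{\pm}\norm{\pi_D(\vartheta)}$, whence $\pi_D(\vartheta)=0$ since $\kappa \neq 1$.

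I would then conclude: since $\pi_D(\vartheta) = 0$ and $(\id-\Pi)(\Omega^1_P)$ is generated as a $P$-bimodule by $\vartheta$, $P$-bilinearity of $\pi_D$ gives $(\id-\Pi)(\Omega^1_P) \subseteq \ker\pi_D$, which is the claim. One subtlety to address carefully is that $\pi_D$ is a priori only defined on $\Omega^1_B = \Omega^1_{P,\bas}$, extended along $\hat{\iota}_P$; I should make precise that a bounded commutator representation of $(P,\Omega_P,\du_P)$ supplies a $P$-bimodule homomorphism on all of $\Omega^1_P$ via $\pi_D(\du_P p) = \iu[D,\pi(p)]$ and bilinearity, and that the relations \eqref{eq:conncent}, \eqref{eq:connection} relating $\vartheta$ to $\du_P$ of elements of $P_k$ are exactly what is needed — indeed $\vartheta$ is, up to the scalar $\tfrac{\kappa}{2\pi\iu[1]_\kappa}$, equal to $\sum_j \epsilon_j^\ast (\id-\Pi)\du_P(\epsilon_j)$ by the uniqueness computation in the proof of Proposition \ref{prop:connection}, so $\pi_D(\vartheta)$ is automatically a well-defined bounded operator.

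\medskip

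The main obstacle I anticipate is making the degree-counting/norm argument fully rigorous: the naive iteration shows $\pi_D(\vartheta)$ satisfies $\pi_D(\vartheta)\pi(p) = \kappa^{\deg p}\pi(p)\pi_D(\vartheta)$ for homogeneous $p$, but to force $\pi_D(\vartheta)=0$ one must use that $\pi(P_1)$ contains enough elements with $\sum \epsilon_j^\ast\epsilon_j = 1$ and $\sum e_i e_i^\ast = 1$ simultaneously — i.e. the full strength of $P$ being a topological quantum principal $\Unit(1)$-bundle (both frame conditions) rather than just one of them. I expect the correct clean formulation is: from $\pi_D(\vartheta)\pi(\epsilon_j) = \kappa\pi(\epsilon_j)\pi_D(\vartheta)$, multiply on the left by $\pi(\epsilon_j^\ast)$ and sum to get $\sum_j \pi(\epsilon_j^\ast)\pi_D(\vartheta)\pi(\epsilon_j) = \kappa\pi_D(\vartheta)$; symmetrically, using a frame $(e_i)$ with $\sum e_i e_i^\ast = 1$ gives $\sum_i \pi(e_i)\pi_D(\vartheta)\pi(e_i^\ast) = \kappa^{-1}\pi_D(\vartheta)$ — and then a positivity or norm comparison between the two completely positive maps $X \mapsto \sum_j \pi(\epsilon_j^\ast)X\pi(\epsilon_j)$ and $X \mapsto \sum_i \pi(e_i)X\pi(e_i^\ast)$, both of which are unital and contractive, applied to $\pi_D(\vartheta)$, yields $\norm{\pi_D(\vartheta)} \geq \kappa\norm{\pi_D(\vartheta)}$ or its reciprocal, forcing $\pi_D(\vartheta)=0$. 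Getting the contractivity estimates and the direction of the inequality exactly right is the delicate point, and it is essentially the $\kappa\neq 1$ analogue of Schmüdgen's no-go lemma \cite[Lemma 6]{Schmuedgen}.
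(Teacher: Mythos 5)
Your final formulation---choosing frames \((e_i)\) and \((\epsilon_j)\) in \(P_1\) with \(\sum_i e_ie_i^\ast = 1\) and \(\sum_j \epsilon_j^\ast\epsilon_j = 1\), forming the two unital (hence contractive) completely positive maps \(T \mapsto \sum_i \pi(e_i)T\pi(e_i^\ast)\) and \(T \mapsto \sum_j \pi(\epsilon_j^\ast)T\pi(\epsilon_j)\), deducing \(\norm{\pi_D(\vartheta)} \leq \kappa^{\mp 1}\norm{\pi_D(\vartheta)}\) so that \(\pi_D(\vartheta) = 0\) when \(\kappa \neq 1\), and concluding by \(P\)-bilinearity of \(\pi_D\) on \((\id-\Pi)(\Omega^1_P) = P\cdot\vartheta\)---is exactly the paper's proof. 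The only slip is the direction of the exponent in the commutation relation (\eqref{eq:conncent} gives \(\vartheta\epsilon_j = \kappa^{-1}\epsilon_j\vartheta\), not \(\kappa\epsilon_j\vartheta\)), which merely swaps which frame produces which inequality and does not affect the conclusion.
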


\begin{proof}
	Let \((e_i)_{i=1}^m\) and \((\epsilon_j)_{j=1}^n\) be finite families in \(P_1\) satisfying \(\sum_{i=1}^m e_ie_i^\ast = 1\) and \(\sum_{j=1}^n \epsilon_j^\ast \epsilon_j = 1\); define bounded completely positive \(\phi_{\pm} : \bL^{\U}(H) \to \bL^{\U}(H)\) by
	\[
		\forall T \in \bL^{\U}(H), \quad \phi_+(T) \coloneqq \sum\nolimits_{i=1}^m \pi(e_i) T \pi(e_i^\ast), \quad \phi_{-}(T) \coloneqq \sum\nolimits_{j=1}^n \pi(\epsilon_j^\ast) T \pi(\epsilon_j),
	\]
	which are unit preserving and hence contractive.
	Since \(\kappa^{-1}\sum_{i=1}^m e_i \vartheta e_i^\ast = \vartheta\) and \(\kappa \sum_{j=1}^m \epsilon_j^\ast \vartheta \epsilon_j = \theta\), it follows that
	\(
		\norm{\pi_D(\vartheta)} = \kappa^{\mp 1} \norm{\phi_{\pm} \circ \pi_D(\vartheta)} \leq \kappa^{\mp 1}\norm{\pi_D(\vartheta)}
	\).
	Thus, if \(\kappa \neq 1\), then \(\pi_D(\vartheta) = 0\), so that \(\pi_D\) vanishes on \((\id-\Pi)(\Omega^1_P) = P \cdot \vartheta\).
\end{proof}

\begin{example}[{Schm\"{u}dgen \cite[Thm.\ 3]{Schmuedgen}}]\label{ex:hopf9}
	Continuing from Example \ref{ex:hopf5}, let \((H,\pi,D)\) be a bounded commutator representation of \((\cO_q(\SU(2));\Omega_q(\SU(2)),\du_q)\).
	On the one hand, since \(q^2 \neq 1\), Proposition \ref{prop:nogobdd} shows that \(\pi_D(e^0) = 0\).
	On the other hand, since \(q \neq 1\), the proof of Proposition \ref{prop:nogobdd}, \emph{mutatis mutandis}, shows that \(\pi_D(e^\pm) = 0\).
	Hence, it follows that \(\pi_D = 0\).
\end{example}

\begin{example}\label{ex:heis11}
	Continuing from Example \ref{ex:heis6}, let us suppose that \((H,\pi,D)\) is a bounded commutator representation of \((P_\theta,\Omega_{P_\theta},\du_{P_\theta})\).
	On the one hand, since \(\epsilon_\theta^2 \neq 1\), Proposition \ref{prop:nogobdd} shows that \(\pi_D(e^0) = 0\).
	On the other hand, since \(\epsilon_\theta \neq 1\), the proof of Proposition \ref{prop:nogobdd}, \emph{mutatis mutandis}, shows that \(\pi_D(e^1) = 0\) and \(\pi_D(e^2) = 0\).
	Hence, it follows that \(\pi_D = 0\).
\end{example}

This catastrophic failure of bounded commutator representations to accommodate important examples of NC differentiable principal \(\U\)-bundles forces us to consider a more general notion of commutator representation where elements of \(\Omega^1_P\) may be represented by unbounded operators of the following kind.

\begin{definition}
	Let \(H\) be a separable \(\bZ/2\bZ\)-graded pre-Hilbert space equipped with a unitary representation \(V : \U \to \Unit(H)_{\even}\) of finite type.
	We say that an operator \(T : H \to H\) is \emph{locally bounded} if it satisfies both of the following conditions:
	\begin{enumerate}
		\item for all \(j,k \in \bZ\), the map \(\rest{P_j T P_k}{H_k} : H_k \to H_j\) is bounded and adjointable;
		\item the set \(\Set{c \in \bZ \given \exists k \in \bZ, \, P_{k+c}TP_k \neq 0}\) is finite.
	\end{enumerate}
	Hence, \(\bL_{\loc}^{\U}(H)\) is the \(\bZ/2\bZ\)-graded unital \(\ast\)-algebra of locally bounded operators on \(H\), where the \(\ast\)-operation is given by taking operator adjoints and the \(\bZ/2\bZ\)-grading is induced by the \(\bZ/2\bZ\)-grading on \(H\).
	At last, set \(\bL^{\U}(H) \coloneqq \bL(H) \cap \bL_{\loc}^{\U}(H)\).
\end{definition}

\begin{example}
	Let \(H\) be a separable \(\bZ/2\bZ\)-graded pre-Hilbert space equipped with a unitary representation \(V : \U \to \U(\bL(H))_{\even}\) of finite type.
	Then each \((T_j)_{j \in \bZ} \in \prod_{j \in \bZ}\bL(H_j)\) yields \(\U\)-equivariant \(\bigoplus_{j\in\bZ}T_j \in \bL^{\U}_{\loc}(H)^{\U}\).
	In particular, given \(\kappa > 0\), we define even \(\U\)-equivariant \(\Lambda_\kappa\), \(\partial_\kappa \in \bL_{\loc}^{\U}(H)\) by
	\begin{equation}
		\Lambda_\kappa \coloneqq \bigoplus_{j \in \bZ} \kappa^{-j} \id_{H_j}, \quad \partial_\kappa \coloneqq \bigoplus_{j \in \bZ} 2\pi\iu{}[j]_\kappa\id_{H_j},
	\end{equation}	
	so that \(\Lambda_\kappa\) is formally self-adjoint while \(\partial_\kappa\) is formally skew-adjoint.
\end{example}

We now weaken the definition of bounded commutator representation accordingly.

\begin{definition}
	Let \(A\) be a \(\U\)-pre-\Cstar-algebra of finite type, and let \((\Omega,\du)\) be a \(\U\)-\(\ast\)-quasi-\textsc{dga} of finite type over \(A\).
	Let \(H\) be a separable \(\bZ/2\bZ\)-graded pre-Hilbert space equipped with a unitary representation \(V : \U \to \Unit(H)_\even\) of finite type, a \(\U\)-equivariant bounded \(\ast\)-automorphism \(\pi : A \to \bL^{\U}(H)_{\even}\), and a \(\U\)-invariant odd formally self-adjoint \(\bC\)-linear map \(D : H \to H\), so that \(\bL^{\U}_{\loc}(H)\) defines a \(A\)-bimodule with respect to \(\pi\).
	We call \((H,\pi,D)\) a \emph{locally bounded commutator representation} of \((A;\Omega,\du)\) whenever there exists a (necessarily unique) \(A\)-bimodule homomorphism \(\pi_D : \Omega^1 \to \bL^{\U}_{\loc}(H)\), such that
	\begin{equation}
		\forall a \in A, \quad \pi_D \circ \du(a) = \iu{}[D,\pi(a)].
	\end{equation}
	Hence, we call \((H,\pi,D)\) \emph{faithful} whenever \(\pi\) is isometric and \(\pi_D\) is injective.
\end{definition}

At last, we propose a refined notion of locally bounded commutator representation for \(\kappa\)-differentiable quantum principal \(\U\)-bundles with connection over \(B\).
When \(\kappa = 1\), it reduces to a multigraded variation on a D\k{a}browski--Sitarz's definition of principal \(\U\)-spectral triples~\cite{DS} in the spirit of \'{C}a\'{c}i\'{c}--Mesland~\cite{CaMe}.

\begin{definition}
	A \emph{projectable commutator representation} of \((P;\Omega_P,\du_P;\Pi)\) is a quadruple of the form \((H,\pi,D,\Gamma)\), where:
	\begin{enumerate}[leftmargin=*]
		\item \((H,\pi,D)\) is a locally bounded commutator representation of \((P,\Omega_P,\du_P)\), such that
		\(
			\left(p \otimes \xi \mapsto \pi(p)\xi\right) : P \otimes_B H^{\U} \to H
		\)
		is bijective and \(\pi_D(\vartheta)^2 = \Lambda_\kappa^2\);
		\item \(\Gamma \in \bL^{\U}(H)\) is an even \(\U\)-invariant self-adjoint unitary commuting with \(\ran\pi\) and  anticommuting with \(\pi_D(\vartheta)\), such that the \emph{horizontal Dirac operator}
		\begin{equation}
			D_{\hor} \coloneqq \tfrac{1}{2}(D+\Gamma D \Gamma)
		\end{equation}
		supercommutes with \(\pi_D(\vartheta)\) and the \emph{remainder}
		\begin{equation}
			Z \coloneqq \tfrac{1}{2}(D-\Gamma D \Gamma) + \iu{}\pi_D(\vartheta)\partial_\kappa
		\end{equation}
		is bounded and supercommutes with \(\ran\pi\).
	\end{enumerate}
	Hence, we call \((H,\pi,D,\Gamma)\) \emph{faithful} whenever \((H,\pi,D)\) is faithful and the maps
	\[
		\left(b \mapsto \rest{\pi(b)}{H^{\U}}\right) : B \to \bL(H^{\U}), \quad \left(\beta \mapsto \rest{\pi_D(\beta)}{H^{\U}}\right) : \Omega^1_B \to \bL(H^{\U})
	\]
	are isometric and injective, respectively.
\end{definition}

\begin{remark}\label{rem:kk2}
	Let \(\mathfrak{P}\) be the \(C^\ast\)-algebra completion of \(P\).
	A projectable commutator representation \((H,\pi,D,\Gamma)\) of \(P\) can be viewed as defining a formal \(\U\)-equivariant unbounded \(\mathrm{KK}_1\)-cycle \((P,H,D)\) for \((\mathfrak{P},\bC)\), where the \(\U\)-invariant odd self-adjoint unitary \(-\iu{}\Gamma\pi_D(\vartheta)\Lambda_\kappa^{-1}\) generates the \(1\)-multigrading.
	If \(\kappa = 1\), the horizontal Dirac operator \(D_\hor\) has bounded commutators with \(\pi(B)\), and the operator \(D\) is essentially self-adjoint with compact resolvent, then \((P,H,D)\) defines a genuine \(\U\)-equivariant odd spectral triple for \(\mathfrak{P}\).
	Otherwise, the formal unbounded \(\mathrm{KK}_1\)-cycle \((P,H,D)\) generally lies outside the current scope of unbounded \(\mathrm{KK}\)-theory.\end{remark}

The following shows that a total Riemannian geometry on \((P,\Omega_P,\du_P;\Pi)\) induces a canonical projectable commutator representation just as a Riemannian geometry on \((B;\Omega_B,\du_B)\) induces a bounded commutator representation.

\begin{proposition}\label{prop:totalderham}
	Suppose that \((\Delta_\ver,\Delta_\hor,\star,\tau)\) be a total Riemannian geometry on  \((P,\Omega_P,\du_P,\Pi)\).
	Hence, view \(\Omega_P\) as a \(\bZ/2\bZ\)-graded separable pre-Hilbert space with respect to the inner product \(\ip{}{}_\tau\) induced by \((\Delta_\ver,\Delta_\hor,\star,\tau)\) and the \(\bZ/2\bZ\)-grading \(\gamma_P\), so that the \(\U\)-action \(\hat{\sigma}\) on \(\Omega_P\) defines a unitary \(\U\)-representation of finite type by even operators.
	Let \(\pi : P \to \bL(\Omega_P)\) denote the isometric \(\ast\)-representation of \(P\) on \(\Omega_P\) defined by left multiplication.
	Then \((\Omega_P,\pi,\du_P+\du_P^\ast,2\Pi-\id)\) defines a faithful projectable commutator representation of \((P,\Omega_P,\du_P,\Pi)\) that satisfies
	\begin{equation}\label{eq:totalderhamcommutator}
		\forall \omega \in \Omega^1_P, \, \forall \eta \in \Omega_P, \quad \pi_{\du_P+\du_P^\ast}(\omega)\eta = \iu{}\omega \cdot \eta + {\star^{-1}}\mleft(\iu{}\omega \cdot ({\star}\circ\gamma_P)(\eta)\mright).
	\end{equation}
	Moreover, the remainder \(Z\) of \((\Omega_P,\pi,\du_P+\du_P^\ast,2\Pi-\id)\) is given by
	\begin{equation}
		\forall p_1,p_2 \in P, \, \forall \alpha_1,\alpha_2 \in \Omega_B, \quad Z(p_1\alpha_1 + p_2\vartheta\alpha_2) = -p_2 \cF_\Pi \alpha_2 - p_1 \vartheta {\star_B^{-1}}\mleft(\cF_\Pi {\star_B}(\alpha_1)\mright),
	\end{equation}
	where \(\cF_\Pi\) is the curvature \(2\)-form of the connection \(\Pi\) and where \((\star_B,\tau_B)\) is the restriction of \((\Delta_\ver,\Delta_\hor,\star,\tau)\) to \((B;\Omega_B,\du_B)\).
\end{proposition}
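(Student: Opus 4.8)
The plan is to mirror the proof of Proposition \ref{prop:hodgederham} at the level of the total space, using the structural results of Proposition \ref{prop:totalprehilbert} to handle all analytic and grading bookkeeping, and then to identify the horizontal Dirac operator, the grading operator $\Gamma = 2\Pi - \id$, and the remainder $Z$ explicitly by a bigrading-indexed computation. I would first invoke Proposition \ref{prop:totalprehilbert} to record that $\Omega_P$ is a separable $\bZ/2\bZ$-graded pre-Hilbert space on which the $\U$-action is unitary and of finite type (by even operators), on which $\pi$ is a $\U$-equivariant isometric $\ast$-representation landing in $\bL^{\U}(\Omega_P)$, on which $\star$ is unitary and $\Pi$ an orthogonal projection, and for which $\du_P^\ast = \star^{-1} \circ \du_P \circ \star \circ \gamma_P$. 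Surjectivity and injectivity of $(p \otimes \xi \mapsto \pi(p)\xi) : P \otimes_B (\Omega_P)^{\U} \to \Omega_P$ follows from $\Omega_{P,\hor} = P \cdot \Omega_{P,\bas} \cdot P$ together with the $B$-self-correspondence structure of the bigraded pieces.

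Next I would establish the commutator identity \eqref{eq:totalderhamcommutator}. The key local lemma is the total-space analogue of Lemma \ref{lem:basicadjoint}: for $\omega \in \Omega^k_P$, left multiplication $\sce(\omega)$ by $\omega$ is a locally bounded operator with $\sce(\omega)^\ast = (-1)^k \star^{-1} \circ \sce(\omega^\ast) \circ \star \circ \gamma_P^k$ — here local boundedness rather than boundedness is what one gets, since $\sce(\omega)$ shifts $\U$-degree by $\deg_{\U}\omega$ and has bounded adjointable components between finite-type spectral subspaces by \eqref{eq:totalmetricbdd} and Proposition \ref{prop:totalprehilbert}; the modular automorphisms $\Delta_\ver, \Delta_\hor$ act as scalars on each bigraded-and-$\U$-graded piece, so they cause no difficulty. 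Then the Leibniz rule for $\du_P$ gives $[\du_P, \pi(p)]\eta = \du_P(p)\cdot\eta$, and the formula for $\du_P^\ast$ together with left $P$-linearity of $\star$ and $\gamma_P$ gives $[\du_P^\ast, \pi(p)]\eta = \star^{-1}(\du_P(p)\cdot(\star\circ\gamma_P)(\eta))$, exactly as in the proof of Proposition \ref{prop:hodgederham}, yielding $\pi_{\du_P + \du_P^\ast} = \iu{}\,\sce(\cdot) + (\iu{}\,\sce((\cdot)^\ast))^\ast$ on $\Omega^1_P$, which is \eqref{eq:totalderhamcommutator}; faithfulness on $(\Omega_P)^{\U}$ reduces via $\hat\iota_P$ to faithfulness of the Hodge--de Rham representation on $(B;\Omega_B,\du_B)$.

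The heart of the argument — and the step I expect to be the main obstacle — is verifying that $(\Omega_P, \pi, \du_P + \du_P^\ast, 2\Pi-\id)$ is a \emph{projectable} commutator representation and computing $Z$. Using \eqref{eq:totalderhamcommutator} and \eqref{eq:hodgerest}, one checks $\pi_{\du_P+\du_P^\ast}(\vartheta)$ acts on $p\alpha + q\vartheta\alpha'$ (with $\alpha, \alpha' \in \Omega_B$, $p,q \in P$) by interchanging the two summands up to the $\pm\iu{}$ and a sign, so that $\pi_D(\vartheta)^2 = \id$ on the $\U$-degree-zero part; but against a spectral subspace $P_j$ one must track the $\Delta_\ver = \Lambda_\kappa$-twisting coming from \eqref{eq:totalhodgelift} and \eqref{eq:totalhodgeright}, giving $\pi_D(\vartheta)^2 = \Lambda_\kappa^2$ as required. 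That $\Gamma = 2\Pi - \id$ is an even $\U$-invariant self-adjoint unitary commuting with $\ran\pi$ follows since $\Pi$ is an orthogonal projection that is left $P$-linear and $\U$-equivariant; that it anticommutes with $\pi_D(\vartheta)$ and that $D_\hor = \frac12(D + \Gamma D\Gamma)$ supercommutes with $\pi_D(\vartheta)$ are consequences of the bigrading \eqref{eq:bigrading} being preserved by $D_\hor$ but exchanged by $\pi_D(\vartheta)$. The delicate point is boundedness of $Z = \frac12(D - \Gamma D\Gamma) + \iu{}\pi_D(\vartheta)\partial_\kappa$: one expands $D = \du_P + \du_P^\ast$ in the bigrading, notes that $\frac12(D - \Gamma D\Gamma)$ is the part of $D$ that changes the $\Pi$-degree, hence by the structure equations $\du_P|_{P_j} = 2\pi\iu{}[j]_\kappa\, \vartheta\,(\cdot) + \du_{P,\hor}$ (from \eqref{eq:dce}) the unbounded vertical derivative contributes precisely $\iu{}\pi_D(\vartheta)\partial_\kappa$, which cancels, leaving the curvature terms; using $\du_P(\vartheta) = -\cF_\Pi$ and the formula for $\du_P^\ast$ one reads off $Z(p_1\alpha_1 + p_2\vartheta\alpha_2) = -p_2\cF_\Pi\alpha_2 - p_1\vartheta\,\star_B^{-1}(\cF_\Pi\,\star_B(\alpha_1))$, which is visibly bounded (it is $P$-linear with $\cF_\Pi \in \Zent(\Omega_B)^2$ acting by a bounded operator) and supercommutes with $\ran\pi$ by left $P$-linearity and centrality of $\cF_\Pi$. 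The care needed is in keeping the $\Lambda_\kappa$- and $\Delta_\hor$-twists consistent across the $\du_P$ and $\du_P^\ast$ contributions, and in confirming that the purely vertical unbounded pieces of $D$ and of $\iu{}\pi_D(\vartheta)\partial_\kappa$ match exactly rather than up to a bounded error that would then need to be reabsorbed.
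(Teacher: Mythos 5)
Your proposal is correct and follows essentially the same route as the paper: the adjoint lemma for left multiplication (the paper's Lemma~\ref{lem:totaladjoint}), re-running the proof of Proposition~\ref{prop:hodgederham} on top of Proposition~\ref{prop:totalprehilbert}, and then an explicit bigraded computation of \(\pi_D(\vartheta)\), \(D_\hor\), and \(Z\) in which the unbounded vertical pieces cancel exactly and only the curvature terms survive. The one spot where the paper does more work than your sketch is the boundedness of \(Z\): ``visibly bounded'' is justified there by conjugating with a unitary \((P \otimes \bC^2) \otimes_B \Omega_B \to \Omega_P\) built from the conditional expectation \(\bE_P\) and amplifying the bounded operators \(\sce_B(\cF_\Pi)\) and \(\sci_B(\cF_\Pi)\) via standard Hilbert \(\Cstar\)-module lore, rather than appealing to left \(P\)-linearity alone.
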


\begin{lemma}\label{lem:totaladjoint}
	Under the hypotheses of Proposition \ref{prop:totalderham}, let \(m \in \Set{0,\dotsc,N+1}\) and \(\omega \in \Omega^m_P\) be given, and let \(\sce(\omega) : \Omega_P \to \Omega_P\) be left multiplication by \(\omega\) in \(\Omega_P\). Then \(\sce(\omega) \in \bL_{\loc}^{\U}(H)\) and
	\(
		\sce(\omega)^\ast = (-1)^m {\star^{-1}} \circ \sce(\omega^\ast) \circ {\star} \circ \gamma_P^m.
	\)
\end{lemma}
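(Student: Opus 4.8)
The plan is to transplant the proof of Lemma~\ref{lem:basicadjoint} to the total space, using the pre-Hilbert-module data furnished by Proposition~\ref{prop:totalprehilbert} in place of that of Proposition~\ref{prop:basicseparable}, and to observe that the modular automorphisms $\Delta_\ver,\Delta_\hor$ do not intervene. The point is that $\sce(\omega)$ is right $B$-linear, so its adjointability and its adjoint are visible already at the level of the $B$-valued inverse metric $g$ (where $\Delta_\ver,\Delta_\hor$ play no role), and the formula then descends to the pre-Hilbert space $\Omega_P$ because $\ip{}{}_\tau=\tau\circ g$.

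First I would dispose of local boundedness. Writing $\omega=\sum_{\ell\in\bZ}\omega_\ell$ with $\omega_\ell\in(\Omega^m_P)_\ell$, finiteness of the $\U$-action on $\Omega^m_P$ (Proposition~\ref{prop:totalprehilbert}) leaves only finitely many $\omega_\ell$ nonzero, so $\Set{c\in\bZ \given \exists k,\,P_{k+c}\sce(\omega)P_k\neq 0}\subseteq\Set{\ell\in\bZ \given \omega_\ell\neq 0}$ is finite, giving condition~(2) in the definition of $\bL^{\U}_{\loc}(H)$. For condition~(1), the block $\rest{P_j\sce(\omega)P_k}{H_k}$ is left multiplication by $\omega_{j-k}$ as a right $B$-linear map $\bigoplus_{s=0}^{N+1}(\Omega^s_P)_k\to\bigoplus_{s=0}^{N+1}(\Omega^s_P)_j$; by the defining hypotheses on a total Riemannian geometry, together with the identifications in the proof of Proposition~\ref{prop:totalprehilbert}, each $(\Omega^s_P)_\ell$ is a $B$-self-correspondence of finite type with respect to $g$, so $\sce(\omega_{j-k})$ is bounded and adjointable for $g$ with some finite operator norm $\norm{\sce(\omega)}$. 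Exactly as in the proof of Lemma~\ref{lem:basicadjoint}, positivity of the state $\tau$ together with the pointwise inequality $g(\omega\eta,\omega\eta)\le\norm{\sce(\omega)}^2g(\eta,\eta)$ transfers this boundedness to $\ip{}{}_\tau$, while the $g$-adjoint transfers verbatim since $\ip{}{}_\tau=\tau\circ g$; hence $\sce(\omega)\in\bL^{\U}_{\loc}(H)$, and it remains only to identify $\sce(\omega)^\ast$.

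Since $\Omega_P$ is an orthogonal direct sum of its graded $\U$-homogeneous subspaces and the operators $\star$ and $\gamma_P$ are $\U$-equivariant, it suffices to check the formula blockwise, i.e.\ on homogeneous $\alpha,\beta$ with $\deg\beta=m+\deg\alpha$ (all other pairings vanish by orthogonality of the grading). Using graded antimultiplicativity of $\ast$, the relation $\star\circ\star^{-1}=\id_{\Omega_P}$ coming from \eqref{eq:totalhodgeinv}, and $\gamma_P^m(\beta)=(-1)^{m\deg\beta}\beta$ together with $\deg\beta=m+\deg\alpha$ and $m^2\equiv m\pmod 2$, one computes
\begin{align*}
	g\mleft(\sce(\omega)\alpha,\beta\mright)
	&= (-1)^{m\deg\alpha}\,\star\mleft(\alpha^\ast\cdot\omega^\ast\star(\beta)\mright)\\
	&= (-1)^{m\deg\alpha}\,g\mleft(\alpha,\star^{-1}\mleft(\omega^\ast\star(\beta)\mright)\mright)\\
	&= g\mleft(\alpha,(-1)^m\,\star^{-1}\circ\sce(\omega^\ast)\circ\star\circ\gamma_P^m(\beta)\mright);
\end{align*}
applying $\tau$ gives $\ip{\sce(\omega)\alpha}{\beta}_\tau=\ip{\alpha}{(-1)^m\star^{-1}\circ\sce(\omega^\ast)\circ\star\circ\gamma_P^m(\beta)}_\tau$ on all of $\Omega_P$, which is the claim. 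The genuine obstacle is combinatorial: handling the $\U$-grading and the $\Set{0,\dotsc,N+1}$-grading on $\Omega_P$ simultaneously in the local-boundedness step, and tracking parities carefully enough to see that $\gamma_P^m$ precisely absorbs the degree-dependent sign produced by $\ast$ and that $\Delta_\ver,\Delta_\hor$ cancel out of the adjoint altogether — the latter being the conceptually essential point, and the reason the bare sign $(-1)^m$ survives rather than a modular twist.
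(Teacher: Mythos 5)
Your proof is correct. The adjoint-formula half is exactly the paper's route: the author also just transplants the computation of Lemma \ref{lem:basicadjoint}, observing (as you do) that the modular automorphisms never enter because the formula lives at the level of \(g\) and of graded antimultiplicativity of \(\ast\); your explicit sign bookkeeping (\(\gamma_P^m\) absorbing \((-1)^{m\deg\beta}\) via \(m^2\equiv m\)) matches the intended argument. Where you genuinely diverge is the local-boundedness half. The paper proceeds by cases: first \(\omega=\vartheta\), with an explicit per-block computation giving \(\norm{\rest{\sce(\vartheta)}{H_j}}\leq\kappa^{-j}\); then \(\omega\in\Omega^m_B\), via right \(B\)-linearity of the restrictions to the bigraded blocks \((\Omega^{r,s}_P)_j\); and finally general \(\omega=p_1\alpha_1+p_2\vartheta\alpha_2\), via the factorization \(\sce(\omega)=\pi(p_1)\sce(\alpha_1)+\pi(p_2)\sce(\vartheta)\sce(\alpha_2)\) together with boundedness of \(\pi(p_1),\pi(p_2)\) from Proposition \ref{prop:totalprehilbert}. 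You instead decompose \(\omega\) into its finitely many \(\U\)-homogeneous components and apply the right-\(B\)-linearity argument directly to each block \(\sce(\omega_{j-k}) : H_k \to H_j\); this is legitimate because left multiplication is right \(P\)-linear (hence right \(B\)-linear) and the definition of a total Riemannian geometry makes every \((\Omega^s_P)_\ell\) a pre-Hilbert \(B\)-module of finite type with bounded left \(B\)-action, so each block is adjointable and bounded for \(g\) and the estimate transfers through \(\tau\) exactly as in Lemma \ref{lem:basicadjoint}. Your route is shorter and more uniform; the paper's case analysis buys the explicit per-block norm for \(\sce(\vartheta)\) (re-used implicitly when identifying \(\pi_D(\vartheta)\) and the remainder in Proposition \ref{prop:totalderham}), but that information is not needed for the lemma itself. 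One small imprecision to fix: your phrase ``bounded \ldots with some finite operator norm \(\norm{\sce(\omega)}\)'' should be a per-block norm depending on \((j,k)\) --- there is in general no bound uniform in the \(\U\)-degree, as the paper's \(\kappa^{-j}\) estimate for \(\sce(\vartheta)\) already shows --- but since membership in \(\bL_{\loc}^{\U}(H)\) only requires blockwise boundedness, this does not affect the argument.
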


\begin{proof}
	The proof of Lemma \ref{lem:basicadjoint} applies almost verbatim; all that remains is to show that \(\sce(\omega) \in \bL_{\loc}^{\U}(H)\).
	
	First, suppose that \(m=1\) and \(\omega = \vartheta\).
	Let \((m,j) \in \Set{0,\dotsc,N+1} \times \bZ\) and let \(\eta \in (\Omega^m_P)_j\), so that \(\eta = \eta_1 + \vartheta\eta_2\) for \(\eta_1 \in (\Omega^m_{P,\hor})_j\) and \(\eta_2 \in (\Omega^{m-1}_{P,\hor})_j\).
	Then \(\sce(\omega)(\eta) = \vartheta\eta_1 \in (\Omega^{m+1}_P)_j\), so that
	\(
		\ip{\sce(\omega)(\eta)}{\sce(\omega)(\eta)}_\tau = \ip{\vartheta\eta_1}{\vartheta\eta_1}_\tau = \kappa^{-2j}\ip{\eta_1}{\eta_1}_\tau \leq \kappa^{-2j}\ip{\eta}{\eta}_\tau.
	\)
	by the proof of Proposition \ref{prop:totalprehilbert}.
	Thus, given \(j,k \in \bZ\), we see that \(\bE_j \sce(\omega) \bE_k \neq 0\) only if \(j = k\), in which case \(\norm{\bE_j \sce(\omega) \bE_j} \leq \kappa^{-j}\).
	
	Next, suppose that \(\omega \in \Omega^m_B\).
	Let \((r,s,j) \in \Set{0,1} \times \Set{0,\dotsc,N} \times \bZ\) be given, and recall from Proposition \ref{prop:totalprehilbert} that both \((\Omega^{r,s}_P)_{j}\) and \((\Omega^{r,s+m}_P)_{j}\) are \(B\)-self-correspondences of finite type with respect to the inverse metric \(g\) induced by \(\star\).
	Let \(S^{r,s}_j\) denote the restriction of \(\sce(\omega)\) to \((\Omega^{r,s}_P)_{j}\), whose range is therefore contained in \((\Omega^{r,s+m}_P)_{j}\).
	Since \(S^{r,s}_j : (\Omega^{r,s}_P)_j \to (\Omega^{r,s+m}_P)_{j}\) is right \(B\)-linear, it is bounded as a map of right pre-Hilbert \(B\)-modules, and hence, since \(\ip{}{}_\tau = \tau \circ g\), as a map of pre-Hilbert spaces.
	Thus, given \(j,k \in \bZ\), it follows that \(\bE_j \sce(\omega) \bE_k \neq 0\) only if \(j = k\), in which case \(\norm{\bE_j \sce(\omega) \bE_j} \leq \sup\Set{\norm{S^{r,s}_j} \given (r,s) \in \Set{0,1} \times \Set{0,\dotsc,N}}\).
	
	Let us finally consider the general case.
	Without loss of generality, there exist \(p_1,p_2 \in P\), \(\alpha_1 \in \Omega^m_B\), and \(\alpha_2 \in \Omega^{m-1}_B\), such that \(\omega = p_1\alpha_1+p_2\vartheta\alpha_2\).
	Then \(\sce(\omega) = \pi(p_1)\sce(\alpha_1) + \pi(p_2)\sce(\vartheta)\sce(\alpha_2) \in \bL_{\loc}^{\U}(\Omega_P)\) since \(\pi(p_1),\pi(p_2) \in \bL^{\U}(\Omega_P)\) by Proposition \ref{prop:totalprehilbert}.
\end{proof}

\begin{proof}[Proof of Prop.\ \ref{prop:totalderham}]
	Let \(\sce : \Omega_P \to \bL_{\loc}^{\U}(\Omega_P)\) be the \(\U\)-equivariant \(\bC\)-linear map defined by Lemma \ref{lem:totaladjoint} and linearity, let \(\sci : \Omega_P \to \bL_{\loc}^{\U}(\Omega_P)\) be the \(\U\)-equivariant \(\bC\)-linear map defined by \(\sci(\omega) \coloneqq (-1)^m {\star^{-1}} \circ \sce(\omega) \circ {\star} \circ \gamma_P^m = \sce(\omega^\ast)^\ast\) for \(m \in \Set{0,\dotsc,N+1}\) and \(\omega \in \Omega^1_P\), let \(c \coloneqq \iu{}(\sce - \sci)\), and set \(D \coloneqq \du_P+\du_P^\ast\),
	By analogy, define \(\sce_B,\sci_B,c_B : \Omega_B \to \bL(\Omega_B)\) and set \(D_B \coloneqq \du_B+\du_B^\ast\), so that \(\pi_{D_B} = \rest{c_B}{\Omega^1_B}\).
	Finally, let \(\vartheta\) denote the connection \(1\)-form of \(\Pi\), let \(\nabla \coloneqq \hat{\ell}_P \circ \Pi \circ \rest{\du_P}{P}\), where  \(\hat{\ell}_P : \Omega_{P,\hor} \to P \otimes_B \Omega_B\) is the \(\U\)-equivariant isomorphism of \(B\)-bimodules of Proposition \ref{prop:assoclineconn}, let \(D_\ver \coloneqq -\iu{}\pi_D(\vartheta)\partial_\kappa\), and let \(\Gamma \coloneqq 2\Pi-\id\).

	First, after substituting Proposition \ref{prop:totalprehilbert} for Proposition \ref{prop:basicseparable} and Lemma \ref{lem:totaladjoint} for Lemma \ref{lem:basicadjoint}, the proof of Proposition \ref{prop:hodgederham} shows that \((\Omega_P,\pi,D)\) defines a faithful locally bounded commutator representation satisfying \eqref{eq:totalderhamcommutator}.
	Moreover, Proposition \ref{prop:assoclineconn} combined with Theorem \ref{thm:dj} yields bijectivity of the multiplication map \((p \otimes \omega \mapsto p\omega) : P \otimes_B \Omega_P^{\U} \to \Omega_P\).
	
	Let us now consider \(\pi_D(\vartheta)\), the would-be horizontal Dirac operator \(D_\hor\), and the would-be remainder \(Z\).
	Note that \(\sce(\vartheta)\) maps \(\Omega_{P,\hor}\) to \(\vartheta \cdot \Omega_{P,\hor}\) and vanishes on \(\vartheta \cdot \Omega_{P,\hor} = \Omega_{P,\hor}^\perp\), so that its adjoint \(\sci(\vartheta)\) maps \(\vartheta \cdot \Omega_{P,\hor}\) to \(\Omega_{P,\hor}\) and vanishes on \(\Omega_{P,\hor}\); since \(\Gamma\) acts as \(\id\) on \(\Omega_{P,\hor}\) and as \(-\id\) on \(\vartheta \cdot \Omega_{P,\hor}\), this suffices to show that \(\Gamma\) anticommutes with \(\pi_D(\vartheta)\).
	Now, let \(p \in P\), \(m \in \Set{0,\dotsc,N}\), and \(\alpha \in \Omega^m_B\) be given.
	On the one hand, we find that \(\sce(\vartheta)(p\alpha) = \Lambda_\kappa(p)\vartheta\alpha\) and
	\begin{align*}
		D(p \alpha) &= \du_P(p\alpha) + (-1)^m {\star^{-1}} \circ \du_P\mleft(p \vartheta \star_B(\alpha)\mright) \\
		&= (\Lambda_\kappa \circ \partial_\kappa)(p)\vartheta\alpha + \leg{\nabla(p)}{0} \leg{\nabla(p)}{1} \alpha + p \du_B(\alpha)\\ &\quad+ {\star^{-1}}\mleft(-\leg{\nabla(p)}{0}\vartheta\leg{\nabla(p)}{1}\star_B(\alpha) - p \cF_\Pi {\star_B}(\alpha) + p\vartheta(\du_B\circ {\star_B})(\alpha)\mright)\\
		&= (\Lambda_\kappa \circ \partial_\kappa)(p)\vartheta\alpha + \leg{\nabla(p)}{0} \sce_B(\leg{\nabla(p)}{1})(\alpha) + p \du_B(\alpha)\\ &\quad- \leg{\nabla(p)}{0} \sci_B(\leg{\nabla(p)}{1})(\alpha) - p \vartheta\sci_B(\cF_\Pi)(\alpha) + p \du_B^\ast(\alpha)\\
		&= \left(- \iu{}\leg{\nabla(p)}{0}c_B(\leg{\nabla(p)}{1})(\alpha) + p D_B(\alpha)\right) + \left((\Lambda_\kappa \circ \partial_\kappa)(p)\vartheta\alpha  - p \vartheta\sci_B(\cF_\Pi)(\alpha)\right),
	\end{align*}
	so that
	\(
		D_\hor(p\alpha) = - \iu{}\leg{\nabla(p)}{0}c_B(\leg{\nabla(p)}{1})(\alpha) + p D_B(\alpha)
	\), 
	and hence
	\[
		Z(p\alpha) = (\Lambda_\kappa \circ \partial_\kappa)(p)\vartheta\alpha  - p \vartheta\sci_B(\cF_\Pi)(\alpha) - \iu{}c(\vartheta)\partial_\kappa(p\alpha) = -p\vartheta\sci_B(\cF_\Pi)(\alpha).
	\]
	On the other hand,
	\begin{align*}
		\sci(\vartheta)(p\vartheta\alpha) &= (-1)^m {\star^{-1}}\mleft(\vartheta \cdot p {\star_B}(\alpha)\mright)
		= (-1)^m {\star^{-1}}\mleft(\Lambda_\kappa(p) \vartheta {\star_B}(\alpha)\mright)
		= \Lambda_\kappa(p) \alpha,\\
		D(p\vartheta\alpha)
		&= \du_P(p\vartheta\alpha) + (-1)^{m+1}{\star^{-1}} \circ \du_P\mleft(p{\star_B}(\alpha)\mright)\\
		&= \leg{\nabla(p)}{0}\!\! \leg{\nabla(p)}{1}\vartheta\alpha - p \cF_\Pi \alpha -p\vartheta\du_B(\alpha) + (-1)^{m+1}{\star^{-1}}\!(\Lambda_\kappa \circ \partial_\kappa)(p){\star_B}(\alpha)\\&\quad\quad + \leg{\nabla(p)}{0} \leg{\nabla(p)}{1}{\star_B}(\alpha) + p(\du_B \circ {\star_B})(\alpha))\\
		&= \leg{\nabla(p)}{0} \leg{\nabla(p)}{1}\vartheta\alpha - p \cF_\Pi \alpha -p\vartheta\du_B(\alpha)\\ &\quad-(\Lambda_\kappa \circ \partial_\kappa)(p)\alpha + \leg{\nabla(p)}{0}  \vartheta  \sci_B(\leg{\nabla(p)}{1})(\alpha) - p\vartheta\du_B^\ast(\alpha)\\
		&= \left(-(\Lambda_\kappa \circ \partial_\kappa)(p)\alpha-p \cF_\Pi \alpha\right) + \left(\iu{}\leg{\nabla(p)}{0} \vartheta c_B(\leg{\nabla(p)}{1})(\alpha) - pD_B(\alpha)\right),
	\end{align*}
	so that \(D_\hor(p\vartheta\alpha) = \iu{}\leg{\nabla(p)}{0} \vartheta c_B(\leg{\nabla(p)}{1})(\alpha) - pD_B(\alpha)\), and hence
	\[
		Z(p\vartheta\alpha) = (\Lambda_\kappa \circ \partial_\kappa)(p)\alpha-p \cF_\Pi \alpha - \iu{}c(\vartheta)\partial_\kappa(p\vartheta\alpha) = -p \sce_B(\cF_\Pi)(\alpha).
	\]
	Thus, for all \(p_1,p_2 \in P\) and \(\alpha_1,\alpha_2 \in \Omega_B\),
	\begin{align*}
		\pi_D(\vartheta)(p_1\alpha_1+p_2\vartheta\alpha_2) &= \Lambda_\kappa(p_2)\alpha_2 + \Lambda_\kappa(p_1)\vartheta\alpha_1,\\
		D_\hor(p_1\alpha_1+p_2\vartheta\alpha_2) &= - \iu{}\leg{\nabla(p_1)}{0}c_B(\leg{\nabla(p_1)}{1})(\alpha_1) + p_1 D_B(\alpha_1)\\ &\quad + \iu{}\leg{\nabla(p_2)}{0} \vartheta c_B(\leg{\nabla(p_2)}{1})(\alpha_2) - p_2 \vartheta D_B(\alpha_2)\\
		Z(p_1\alpha_1+p_2\vartheta\alpha_2) &= -p_2 \sce_B(\cF_\Pi)(\alpha_2) - p_1\vartheta\sci_B(\cF_\Pi)(\alpha_1).
	\end{align*}
	These expressions for \(\pi_D(\vartheta)\), \(D_\hor\), and \(Z\) now make clear that \(\pi_D(\vartheta)^2 = \Lambda_\kappa^2\), that \(D_\hor\) supercommutes with \(\pi_D(\vartheta)\), and that \(Z\) supercommutes with \(\ran\pi\).
	
	Finally, we show that \(Z\) is bounded.
	Recall the faithful conditional expectation \(\bE_P : P \to B\) of Proposition \ref{prop:conditional}, so that \(P \otimes \bC^2\) defines a countably generated right pre-Hilbert \(B\)-module with respect to the \(B\)-valued inner product \(\hp{}{}\) given by
	\[
		\forall p_1,p_2 \in P, \, \forall v_1,v_2 \in \bC^2, \quad \hp{p_1 \otimes v_1}{p_2 \otimes v_2} \coloneqq \ip{v_1}{v_2} \bE_P(p_1^\ast p_2).
	\]
	Thus, \((P \otimes \bC^2) \otimes_B \Omega_B\) defines a pre-Hilbert space with respect to the inner product defined, \emph{mutatis mutandis}, by \eqref{eq:tensorip}.
	Moreover, by Proposition \ref{prop:assoclineconn}, Theorem \ref{thm:dj}, and the proof of Proposition \ref{prop:totalprehilbert}, define unitary \(M : (P \otimes \bC^2) \otimes_B \Omega_B \to \Omega_P\) by setting
	\(
		M \coloneqq \left(\left(\begin{smallmatrix}p_1\\p_2\end{smallmatrix}\right) \otimes \alpha \mapsto \coloneqq (p_1+p_2\vartheta)\alpha\right)
	\).
	Since the left \(B\)-linear maps \(\sce(\cF_\Pi)\) and \(\sci(\cF_\Pi)\) are both bounded as operators on the pre-Hilbert space \(\Omega_B\), standard Hilbert \(C^\ast\)-module lore implies that
	\(
		Z = -M\left(\left(\begin{smallmatrix}0&1\\0&0\end{smallmatrix}\right) \otimes \sce(\cF_\Pi) + \left(\begin{smallmatrix}0&0\\1&0\end{smallmatrix}\right) \otimes\sci(\cF_\Pi)\right)M^\ast
	\)
	is bounded and symmetric as an operator on the pre-Hilbert space \(\Omega_P\).
	
	We conclude by showing that the maps
	\[
		\left(b \mapsto \rest{\pi(p)}{\Omega_P^{\U}}\right) : B \to \bL(\Omega_P^{\U}), \quad \left(\beta \mapsto \rest{\pi_{\du_P+\du_P^\ast}(\beta)}{\Omega_P^{\U}}\right) : \Omega^1_B \to \bL(\Omega_P^{\U})
	\]
	are isometric and injective, respectively.
	First, let \(b \in B\).
	On the one hand, \(\rest{\pi(b)}{\Omega_P^{\U}}\) block-diagonal with respect to orthogonal decomposition \(\Omega_P^{\U} = \Omega_B \oplus \vartheta \Omega_B\), where \(\rest{\pi(b)}{\Omega_B}\) is left multiplication by \(b\) on \(\Omega_B\).
	On the other hand, by Proposition \ref{prop:hodgederham}, left multiplication of \(B\) on \(\Omega_B\) defines an isometric \(\ast\)-homorphism \(B \to \bL(\Omega_B)\).
	Hence, it follows that
	\(
		\norm{b} = \norm{\rest{\pi(p)}{\Omega_B}} \leq \norm{\rest{\pi(b)}{\Omega_P^{\U}}} \leq \norm{\pi(b)} \leq \norm{b}
	\).
	Now, let \(\beta \in \Omega^1_B\) be given.
	On the one hand, both \(\rest{\sce(\beta)}{\Omega_P^{\U}}\) and \(\rest{\sce(\beta^\ast)}{\Omega_P^{\U}}\) are both block-diagonal with respect to the orthogonal decomposition \(\Omega_P^{\U} = \Omega_B \oplus \vartheta \Omega_B\), where \(\rest{\sce(\beta)}{\Omega_B} = \sce_B(\beta)\) and \(\rest{\sce(\beta^\ast)}{\Omega_B} = \sce_B(\beta^\ast)\), so that \(\rest{c(\beta)}{\Omega_P^{\U}}\) is similarly block-diagonal with \(\rest{c(\beta)}{\Omega_B} = c_B(\beta)\).
	On the other hand, by Proposition \ref{prop:hodgederham}, the map \(c_B : \Omega^1_B \to \bL(\Omega_B)\) is injective.
	Hence, it follows that \(\rest{c(\beta)}{\Omega_P^{\U}} = 0\) only if \(\beta = 0\).
\end{proof}

\begin{definition}
	Suppose that \((\Delta_\ver,\Delta_\hor,\star,\tau)\) is a total Riemannian geometry on \((P;\Omega_P,\du_P;\Pi)\). 
	We define the \emph{total Hodge--de Rham commutator representation} induced by \((\Delta_\ver,\Delta_\hor,\star,\tau)\) to be the faithful projectable commutator representation \((\Omega_P,\pi_P,\du_P+\du_P^\ast,2\Pi-\id)\) of \((P;\Omega_P,\du_P;\Pi)\) constructed from \((\Delta_\ver,\Delta_\hor,\star,\tau)\) by Proposition \ref{prop:totalderham}.
\end{definition}

We now justify our terminology by showing that a faithful projectable commutator representation of \((P;\Omega_P,\du_P;\Pi)\) canonically projects to a faithful bounded commutator representation of \((B;\Omega_B,\du_B)\).
This will make precise the notion of lifting a faithful bounded commutator representation of \((B;\Omega_B,\du_B)\) to \((P;\Omega_P,\du_P;\Pi)\).

On the one hand, define the concrete category \(\grp{BCRep}(B)\) of faithful bounded commutator representations of \((B;\Omega_B,\du_B)\) and their isomorphisms as follows:
\begin{enumerate}[leftmargin=*]
	\item an object is a faithful bounded commutator represention \((H,\pi,D)\) of \((\Omega_B,\du_B)\);
	\item an arrow \(U : (H_1,\pi_1,D_1) \to (H_2,\pi_2,D_2)\) is a unitary \(U : H_1 \to H_2\) that satisfies
	\(
		U \pi_1(\cdot) U^\ast = \pi_2\) and \(U D_1 U^\ast = D_2
	\).
\end{enumerate}
On the other hand, given \(\kappa > 0\) and \((P;\Omega_P,\du_P;\Pi)\) a \(\kappa\)-differentiable quantum principal \(\U\)-bundle with connection over \(B\), define the concrete category \(\grp{PCRep}(P;\Pi)\) of faithful projectable commutator representations of \((P;\Omega_P,\du_P;\Pi)\) and their isomorphisms as follows:
\begin{enumerate}[leftmargin=*]
	\item an object of \(\grp{PCRep}(P;\Pi)\) is a faithful projectable commutator representation \((H,\pi,D,\Gamma)\) of \((P;\Omega_P,\du_P;\Pi)\);
	\item an arrow \((U,Z) : (H_1,\pi_1,D_1,\Gamma_1) \to (H_2,\pi_2,D_2,\Gamma_2)\) of \(\grp{PCRep}(P;\Pi)\) consists of an even \(\U\)-equivariant unitary \(U : H_1 \to H_2\) and odd \(\U\)-invariant symmetric \(Z \in \bL^{\U}(H_1)\) supercommuting with \(\ran \pi\) and \(\Gamma\), such that
	\[
		U \pi_1(\cdot) U^\ast = \pi_1, \quad U (D_1-Z) U^\ast = D_2, \quad U \Gamma_1 U^\ast = \Gamma_2;
	\]
	\item given objects \((H_1,\pi_1,D_1,\Gamma_1)\), \((H_2,\pi_2,D_2,\Gamma_2)\), \((H_3,\pi_3,D_3,\Gamma_3x)\), and arrows
	\begin{align*}
		(U_1,Z_1) &: (H_1,\pi_1,D_1,\Gamma_1) \to (H_2,\pi_2,D_2,\Gamma_2),\\
		(U_2,Z_2) &: (H_2,\pi_2,D_2,\Gamma_2) \to (H_3,\pi_3,D_3,\Gamma_3),
	\end{align*}
 	the composition \((U_2,Z_2) \circ (U_1,Z_1) : (H_1,\pi_1,D_1,\Gamma_1) \to (H_3,\pi_3,D_3,\Gamma_3)\) is given by
	\(
		(U_2,Z_2) \circ (U_1,Z_1) \coloneqq \left(U_2U_1,U_1^\ast Z_2 U_1+Z_1\right)
	\);
	\item the identity arrow of an object \((H,\pi,D,\Gamma)\) is given by \((\id,0)\).
\end{enumerate}
Note that an arrow \((U,Z) : (H_1,\pi_1,D_1,\Gamma_1) \to (H_2,\pi_2,D_2,\Gamma_2)\) in \(\grp{PCRep}(P;\Pi)\) encodes \(\U\)-equivariant unitary equivalence of \((H_1,\pi_1,D_1,\Gamma_1)\) and \((H_2,\pi_2,D_2,\Gamma_2)\) after perturbation by the \emph{relative remainder} \(Z\).

\begin{proposition}\label{prop:restfunct}
	The following gives a functor \(\iota_P^\ast : \grp{PCRep}(P;\Pi) \to \grp{BCRep}(B)\).
	\begin{enumerate}[leftmargin=*]
		\item Given an object \((H,\pi,D,\Gamma)\), let
		\(
			\iota_P^\ast (H,\pi,D,\Gamma) \coloneqq \left(P H^{\U},P\pi(\cdot)P,PD_{\hor}P\right)
		\),
		where \(P \coloneqq \rest{\tfrac{1}{2}(\id+\Gamma)}{H^{\U}}\) and \(D_\hor\) is the horizontal Dirac operator of the faithful projectable commutator representation \((H,\pi,D,\Gamma)\).
		\item Given an arrow \(U : (H_1,\pi_1,D_1,\Gamma_1) \to (H_2,\pi_2,D_2,\Gamma_2)\), let \(\iota_P^\ast U\) be given by \(P_2 U P_1\), where \(P_1 \coloneqq \rest{\tfrac{1}{2}(\id+\Gamma_1)}{H_1^{\U}}\) and \(P_2 \coloneqq \rest{\tfrac{1}{2}(\id+\Gamma_2)}{H_2^{\U}}\).
	\end{enumerate}
\end{proposition}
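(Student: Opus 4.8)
The plan is to verify the three things that make $\iota_P^\ast$ a functor: that it sends an object of $\grp{PCRep}(P;\Pi)$ to an object of $\grp{BCRep}(B)$, that it sends an arrow to an arrow, and that it preserves composition and identities. The last is immediate from the stated formulas: $\iota_P^\ast(\id,0)$ is the compression of $\id_H$ to $PH^{\U}$, hence $\id$; and since $U_1\Gamma_1 U_1^\ast = \Gamma_2$ forces $U_1$ to map $P_1 H_1^{\U}$ into $P_2 H_2^{\U}$, one has $\iota_P^\ast\bigl((U_2,Z_2)\circ(U_1,Z_1)\bigr) = P_3 U_2 U_1 P_1 = (P_3 U_2 P_2)(P_2 U_1 P_1)$. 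So the content lies in the first two points, and the genuinely delicate step is faithfulness.

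For objects, fix a faithful projectable commutator representation $(H,\pi,D,\Gamma)$ and write $P \coloneqq \rest{\tfrac12(\id+\Gamma)}{H^{\U}}$, an even orthogonal projection on the separable pre-Hilbert space $H^{\U}$, so that $PH^{\U}$ is a separable $\bZ/2\bZ$-graded pre-Hilbert space. Since $\Gamma D_\hor\Gamma = D_\hor$, the horizontal Dirac operator $D_\hor$ is odd, symmetric, $\U$-invariant, and commutes with $\Gamma$, hence restricts to an odd symmetric operator on $PH^{\U}$; and each $\pi(b)$, $b\in B$, is even, $\U$-invariant, and commutes with $\Gamma$ (as $\ran\pi$ does), so $P\pi(\cdot)P$ restricts to a bounded $\ast$-representation of $B$ on $PH^{\U}$. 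The crucial observation is that $\iu[D,\pi(b)] = \iu[D_\hor,\pi(b)] = \pi_D(\hat{\iota}_P(\du_B b))$ for $b\in B$: writing $D - D_\hor = \tfrac12(D - \Gamma D\Gamma) = Z - \iu\pi_D(\vartheta)\partial_\kappa$ with $Z$ the remainder, both summands commute with $\pi(b)$, using that $\vartheta b = b\vartheta$ by \eqref{eq:conncent}, that $\partial_\kappa$ preserves weight spaces, and that $Z$ supercommutes with $\ran\pi$. Consequently $\pi_D(\hat{\iota}_P\beta)$ commutes with $\Gamma$ for every $\beta\in\Omega^1_B$ (since $\Omega^1_B$ is generated as a $B$-bimodule by $\du_B(B)$ and $\ran\pi$ commutes with $\Gamma$), preserves $H^{\U}$ (being $\U$-invariant), and is bounded there (local boundedness); compressing by $P$ therefore defines a $B$-bimodule homomorphism $\Omega^1_B \to \bL(PH^{\U})$, $\beta\mapsto\rest{\pi_D(\hat{\iota}_P\beta)}{PH^{\U}}$, which on $\du_B b$ equals $\iu\bigl[\rest{PD_\hor P}{PH^{\U}},\rest{P\pi(b)P}{PH^{\U}}\bigr]$. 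Thus $\iota_P^\ast(H,\pi,D,\Gamma)$ is a bounded commutator representation of $(B;\Omega_B,\du_B)$.

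Faithfulness is handled via $\Theta \coloneqq \rest{\pi_D(\vartheta)}{H^{\U}}$. Since $\pi_D(\vartheta)^2 = \Lambda_\kappa^2$ and $\Lambda_\kappa = \id$ on $H^{\U} = H_0$, the operator $\Theta$ is invertible; it anticommutes with $\Gamma$, hence interchanges $PH^{\U}$ and $(\id-P)H^{\U}$; it commutes with $\rest{\pi(b)}{H^{\U}}$ for $b\in B$; and it anticommutes with $\rest{\pi_D(\hat{\iota}_P\beta)}{H^{\U}}$ for $\beta\in\Omega^1_B$ — the latter two following from $\vartheta b = b\vartheta$ and, via the computation above, from $\pi_D(\vartheta)$ commuting with $\ran\pi$ together with $D_\hor$ supercommuting with $\pi_D(\vartheta)$. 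Because $\rest{\pi(b)}{H^{\U}}$ and $\rest{\pi_D(\hat{\iota}_P\beta)}{H^{\U}}$ are block-diagonal for $H^{\U} = PH^{\U}\oplus(\id-P)H^{\U}$ while $\Theta$ conjugates each diagonal block into a similar copy of the other, the norm of $\rest{\pi(b)}{PH^{\U}}$ equals that of $\rest{\pi(b)}{H^{\U}}$ (spectral radii of similar positive operators agree) and $\rest{\pi_D(\hat{\iota}_P\beta)}{PH^{\U}} = 0$ implies $\rest{\pi_D(\hat{\iota}_P\beta)}{H^{\U}} = 0$; the faithfulness hypotheses on $(H,\pi,D,\Gamma)$ then make $\rest{P\pi(\cdot)P}{PH^{\U}}$ isometric and $\beta\mapsto\rest{\pi_D(\hat{\iota}_P\beta)}{PH^{\U}}$ injective, so $\iota_P^\ast(H,\pi,D,\Gamma)$ is faithful.

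For arrows, an arrow $(U,Z) : (H_1,\pi_1,D_1,\Gamma_1)\to(H_2,\pi_2,D_2,\Gamma_2)$ supplies an even $\U$-equivariant unitary $U$ with $U\Gamma_1 U^\ast = \Gamma_2$, which restricts to a unitary $P_1 H_1^{\U}\to P_2 H_2^{\U}$ agreeing with $\iota_P^\ast U = P_2 U P_1$, and $U\pi_1(\cdot)U^\ast = \pi_2$ shows it intertwines the compressed representations. Since the relative remainder $Z$ is block-off-diagonal for the $\Gamma_1$-eigenspace decomposition — it anticommutes with $\Gamma_1$ — its $\Gamma_1$-average vanishes, so $(D_1 - Z)_\hor = (D_1)_\hor$; applying the $\Gamma_2$-average to $U(D_1-Z)U^\ast = D_2$ and using $U\Gamma_1 U^\ast = \Gamma_2$ gives $U(D_1)_\hor U^\ast = (D_2)_\hor$, whence $\iota_P^\ast U$ intertwines $\rest{P_1(D_1)_\hor P_1}{P_1 H_1^{\U}}$ and $\rest{P_2(D_2)_\hor P_2}{P_2 H_2^{\U}}$ exactly, so it is an arrow of $\grp{BCRep}(B)$. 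The main obstacle is the faithfulness argument of the third paragraph: transporting isometry and injectivity from the spectral subspace $H^{\U}$ to its further compression $PH^{\U}$, which hinges on recognizing that $\pi_D(\vartheta)$ restricts to an invertible operator on $H^{\U}$ intertwining the two $\Gamma$-eigenspaces.
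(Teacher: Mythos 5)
Your proof is correct and takes essentially the same route as the paper's: the one non-routine point, transferring isometry of \(b \mapsto \rest{\pi(b)}{H^{\U}}\) and injectivity of \(\beta \mapsto \rest{\pi_D(\beta)}{H^{\U}}\) down to the compression \(PH^{\U}\), is handled in both cases by splitting \(H^{\U}\) into the two \(\Gamma\)-eigenspaces and conjugating one diagonal block onto the other by an invertible operator that commutes with \(\rest{\pi(B)}{H^{\U}}\) and anticommutes with the represented \(1\)-forms --- the paper via a unitary interchanging the two summands, you via \(\rest{\pi_D(\vartheta)}{H^{\U}}\), compensating for not knowing unitarity by the spectral-radius argument applied to \(\pi(b^\ast b)\). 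The remaining verifications (the identity \([D,\pi(b)] = [D_\hor,\pi(b)]\) for \(b \in B\), preservation of the \(\Gamma\)-eigenspaces by \(\pi_D(\Omega^1_B)\), and the arrow and composition checks using that the relative remainder is off-diagonal for \(\Gamma\)) are exactly the details the paper dismisses as routine, and you carry them out correctly.
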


\begin{proof}
	This is a routine verification except for one subtlety.
	Let \((H,\pi,D,\Gamma)\) be a faithful projectable commutator representation of \((P;\Omega_P,\du_P;\Pi)\).
	It remains to show that the bounded commutator representation \((H_B,\pi_B,D_B) \coloneqq \iota_P^\ast(H,\pi,D,\Gamma)\) of \((B;\Omega_B,\du_B)\) is faithful.
	Observe that \(H^{\U}\) admits the orthogonal decomposition \(H^{\U} = H_B \oplus \Gamma H_B\), where \(\Gamma\) restricts to a unitary \(V : H_B \to \Gamma H_B\).
	Hence, it follows that \(\rest{\pi(b)}{H^{\U}} = \pi_B(b) \oplus (V \pi_B(b) V^\ast)\) for all \(b \in B\), so that \(\pi_B\) is isometric since \(\left(b \mapsto \rest{\pi(b)}{H^{\U}}\right) : B \to \bL(H^{\U})\) is isometric.
	A qualitatively identical argument shows that \(\pi_D\) is injective.
\end{proof}

\begin{definition}
	Let \((H,\pi,D)\) be a faithful bounded commutator representation of \((B;\Omega_B,\du_B)\), and let \((\tilde{H},\tilde{\pi},\tilde{D},\tilde{\Gamma})\) be a faithful projectable commutator representation of \((P;\Omega_P,\du_P;\Pi)\).
	We say that \((\tilde{H},\tilde{\pi},\tilde{D},\tilde{\Gamma})\) is a \emph{lift} of \((B;\Omega_B,\du_B)\) to \((P;\Omega_P,\du_P;\Pi)\) whenever \(\iota_P^\ast (\tilde{H},\tilde{\pi},\tilde{D},\tilde{\Gamma})\) and \((H,\pi,D)\) are isomorphic in \(\grp{BCRep}(B)\).
\end{definition}

\begin{example}
	Suppose that \((\Delta_\ver,\Delta_\hor,\star,\tau)\) is a total Riemannian geometry on \((P,\Omega_P,\du_P,\Pi)\), and let \((\star_B,\tau_B)\) be its restriction to a Riemannian geometry on \((B;\Omega_B,\du_B)\).
	Then the total Hodge--de Rham commutator representation \((\Omega_P,\pi_P,\du_P+\du_P^\ast,2\Pi-\id)\) induced by \((\Delta_\ver,\Delta_\hor,\star,\tau)\)  is a lift of the Hodge--de Rham commutator representation \((\Omega_B,\pi_B,\du_B+\du_B^\ast)\) induced by \((\star_B,\tau_B)\).
	Indeed, the inclusion map \(\hat{\iota}_P : \Omega_B \xrightarrow{\sim} \Omega_{P,\hor}^{\U} = \Pi(\Omega_P^{\U})\) defines an isomorphism 
	\[
		\hat{\iota}_P : (\Omega_B,\pi_B,\du_B+\du_B^\ast) \to \iota_P^\ast(\Omega_P,\pi_P,\du_P+\du_P^\ast,2\Pi-\id).
	\]
\end{example}

At last, we show that every faithful bounded commutator representation of \((B;\Omega_B,\du_B)\) has an essentially unique lift to \((P;\Omega_P,\du_P;\Pi)\), namely, up to \(\U\)-equivariant unitary equivalence after perturbation by a relative remainder.
Note that we cannot use Schwieger--Wagner's lifting construction~\cite{SW2}, even after generalisation to \(\kappa \neq 1\), since it requires unnatural choices of representation-theoretic data that need not even yield locally bounded commutator representations of \((P;\Omega_P,\du_P)\).

We first show that lifts always exist.
When \(\kappa = 1\), the right \(B\)-module \(\Omega^1_B\) is free with basis consisting of self-adjoint elements of \(\Zent(\Omega_B)^1\), the Fr\"{o}hlich automorphism \(\hat{\Phi}_P\) is the identity map, and the faithful bounded commutator representation of \((B;\Omega_B,\du_B)\) takes a certain restrictive form, our construction recovers a lifting construction for spectral triples first proposed by Gabriel--Grensing \cite{GG}.

In what follows, recall the self-adjoint Pauli matrices
\[
	\sigma^1 \coloneqq \begin{pmatrix}0&1\\1&0\end{pmatrix}, \quad \sigma^2 \coloneqq \begin{pmatrix}0&-\iu{}\\ \iu{}&0\end{pmatrix}, \quad \sigma^3 \coloneqq \begin{pmatrix} 1 & 0 \\ 0 & -1\end{pmatrix} = -\iu{}\sigma^1\sigma^2.
\]

\begin{proposition}\label{prop:liftcommrepconstruct}
	Let \((H,\pi,D)\) be a faithful bounded commutator representation of \((B;\Omega_B,\du_B)\).
	Define a map \(\nabla : P \to P \otimes_B \Omega_B\) by \(\nabla \coloneqq \hat{\ell}_P \circ \Pi \circ \rest{\du_P}{P}\),
	where \(\hat{\ell}_P : \Omega_{P,\hor} \to P \otimes_B \Omega_B\) is the \(B\)-bimodule isomorphism of Proposition \ref{prop:assoclineconn}, and let \(\bE_P : P \to B\) be the faithful conditional expectation of Proposition \ref{prop:conditional}.
	Equip the left \(P\)-module \(P \otimes \bC^2\) with the right \(B\)-module structure
	\[
		\forall p \in P, \, \forall x \in \bC^2, \, \forall b \in B, \quad (p \otimes x) \cdot b \coloneqq pb \otimes x,
	\]
	equip \(H\) with the left \(B\)-module structure defined by \(\pi\), and equip \((P \otimes \bC^2) \otimes_B H\) with the inner product \(\ip{}{}\) defined by
	\begin{multline*}
		\forall p_1,p_2 \in P, \, \forall x_1,x_2 \in \bC^2, \, \forall h_1,h_2 \in H, \\ \ip{p_1 \otimes x_1 \otimes h_1}{p_2 \otimes x_2 \otimes h_2} \coloneqq \ip{x_1}{x_2}\ip{h_1}{\pi(\bE_P(p_1^\ast p_2))h_2},
	\end{multline*}
	the \(\bZ/2\bZ\)-grading \(\id \otimes \sigma^3 \otimes \chi_H\) and the linear \(\U\)-representation induced by the \(\U\)-action on \(P\).
	Finally, define an operator \((\id \otimes \sigma^3) \otimes_\nabla D\) on \((P \otimes \bC^2) \otimes_B H\) by
	\begin{multline*}
		\forall p \in P, \, \forall x \in \bC^2, \, \forall h \in H, \\ (\id \otimes \sigma^3) \otimes_\nabla D\mleft(p \otimes x \otimes h\mright) \coloneqq -\iu{}\leg{\nabla(p)}{0} \otimes \sigma^3 x \otimes \pi_D\mleft(\leg{\nabla(p)}{1}\mright)h + p \otimes \sigma^3 x \otimes D h.
	\end{multline*}
	Then
	\[
		\left((P \otimes \bC^2) \otimes_B H, \id \otimes \id \otimes \pi(\cdot),\iu{}(\Lambda_\kappa \circ \partial_\kappa) \otimes \sigma^2 \otimes \id + (\id \otimes \sigma^3) \otimes_\nabla D,\id \otimes \sigma^3 \otimes \id\right)
	\]
	is a lift of \((H,\pi,D)\) to \((P;\Omega_P,\du_P;\Pi)\) with horizontal Dirac operator \((\id \otimes \sigma^3) \otimes_\nabla D\) and remainder \(0\).
\end{proposition}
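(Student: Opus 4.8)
The plan is to realise the claimed quadruple as a connection Kasparov product of the bounded commutator representation $(H,\pi,D)$ with the conditional expectation $\bE_P$ and the Hermitian connections of Proposition~\ref{prop:assoclineconn}, and then to read off the horizontal Dirac operator and remainder by a direct computation. The one genuinely computational ingredient will be the $q$-integer cocycle identity $[j+k]_\kappa\kappa^{-(j+k)} - [j]_\kappa\kappa^{-j} = \kappa^{-j}[k]_\kappa\kappa^{-k}$, which is the additive form of the twisted $1$-cocycle already appearing in \eqref{eq:kappahom}.

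First I would set up the Hilbert-space bookkeeping. Since $\bE_P : P \to B$ is a faithful conditional expectation (Proposition~\ref{prop:conditional}), the pairing $\hp{p}{q}_{\bE_P} \coloneqq \bE_P(p^\ast q)$ makes $P$ into a right pre-Hilbert $B$-module; because the $\bZ$-grading of $P$ is strong, $P = \bigoplus_{k\in\bZ} P_k$ is a countable algebraic direct sum of finitely generated projective $B$-modules, each with a finite frame, and $\hp{}{}_{\bE_P}$ restricts on $P_k = \cL(P)(k)$ to the $B$-valued inner product of Proposition~\ref{prop:assocline}. Hence $(P \otimes \bC^2) \otimes_B H$ with the stated inner product is the orthogonal direct sum $\bigoplus_{k\in\bZ}(P_k \otimes \bC^2) \otimes_B H$ of genuine pre-Hilbert spaces, so it is a separable $\bZ/2\bZ$-graded pre-Hilbert space carrying a unitary $\U$-representation of finite type; the representation $\pi_{\mathrm{tot}}$ of $P$ by left multiplication on the first factor has finite propagation and is bounded by $\norm{p}$ (frame estimate, using boundedness of $\pi$), so lands in $\bL^\U$, is $\U$-equivariant and even, and is isometric on all of $P$ by Corollary~\ref{cor:fell} (its restriction to $B = P^\U$ being unitarily equivalent to $\id_{\bC^2}\otimes\pi$). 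Bijectivity of $(p\otimes\xi\mapsto\pi_{\mathrm{tot}}(p)\xi)$ on $P\otimes_B(\cdots)^\U$ is immediate from $P\otimes_B B = P$.

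Next I would treat the Dirac operator. By Proposition~\ref{prop:assoclineconn}, $\nabla = \hat\ell_P \circ \Pi \circ \rest{\du_P}{P}$ is the direct sum over $k$ of the right Hermitian parts of Hermitian bimodule connections, and since $\hp{}{}_{\bE_P}$ restricts to the $\hp{}{}_k$, this exhibits $\nabla$ as a Hermitian connection on the right pre-Hilbert $B$-module $P$; the standard connection-product argument (Leibniz rules for $\nabla$ and $D$, Hermitian-ness of $\nabla$, the bimodule $\ast$-property of $\pi_D$) then shows $(\id\otimes\sigma^3)\otimes_\nabla D$ is a $\U$-invariant, odd, formally self-adjoint operator, while the vertical piece $\iu(\Lambda_\kappa\circ\partial_\kappa)\otimes\sigma^2\otimes\id$ is formally self-adjoint since $\Lambda_\kappa$ is self-adjoint, $\partial_\kappa$ is skew-adjoint and commutes with $\Lambda_\kappa$, and $\sigma^2$ is self-adjoint; hence their sum $\tilde D$ is $\U$-invariant, odd and formally self-adjoint. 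I would then compute $\iu[\tilde D,\pi_{\mathrm{tot}}(p)]$ on $\U$-degree $j$ for $p\in P_k$: the horizontal part produces a bounded operator built from $\pi_D$, and the vertical part contributes $2\pi\iu\kappa^{-j}[k]_\kappa\kappa^{-k}\,\pi_{\mathrm{tot}}(p)\otimes\sigma^2$ by the cocycle identity above. Comparing with $(\id-\Pi)\du_P(p) = 2\pi\iu[k]_\kappa\kappa^{-k}\,p\vartheta$ (Proposition~\ref{prop:connection}) and $\du_P(p) = \hat\ell_P^{-1}(\nabla p) + (\id-\Pi)\du_P(p)$ yields a well-defined $P$-bimodule map $\pi_{\tilde D}:\Omega^1_P\to\bL_{\loc}^\U$ with $\pi_{\tilde D}(\vartheta) = -\Lambda_\kappa\otimes\sigma^2\otimes\id$; in particular $\pi_{\tilde D}(\vartheta)^2 = \Lambda_\kappa^2$, so $(H_{\mathrm{tot}},\pi_{\mathrm{tot}},\tilde D)$ is a locally bounded commutator representation of $(P,\Omega_P,\du_P)$ satisfying the required vertical normalisation.

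Finally I would analyse $\Gamma \coloneqq \id\otimes\sigma^3\otimes\id$, which is an even $\U$-invariant self-adjoint unitary commuting with $\ran\pi_{\mathrm{tot}}$ and anticommuting with $\pi_{\tilde D}(\vartheta)$. Since $\Gamma$ anticommutes with $\sigma^2$ and commutes with $(\id\otimes\sigma^3)\otimes_\nabla D$, one gets $\Gamma\tilde D\Gamma = -\iu(\Lambda_\kappa\circ\partial_\kappa)\otimes\sigma^2\otimes\id + (\id\otimes\sigma^3)\otimes_\nabla D$, hence $D_\hor = \tfrac12(\tilde D+\Gamma\tilde D\Gamma) = (\id\otimes\sigma^3)\otimes_\nabla D$ and $Z = \tfrac12(\tilde D-\Gamma\tilde D\Gamma) + \iu\pi_{\tilde D}(\vartheta)\partial_\kappa = \iu(\Lambda_\kappa\circ\partial_\kappa)\otimes\sigma^2\otimes\id - \iu\Lambda_\kappa\partial_\kappa\otimes\sigma^2\otimes\id = 0$, using that $\partial_\kappa$ commutes with $\Lambda_\kappa$ and $\sigma^2$ and acts by $\U$-degree; matching the $\kappa^{-k}$ factors on each $P_k$ shows $D_\hor$ supercommutes with $\pi_{\tilde D}(\vartheta)$, and $Z = 0$ is trivially bounded and supercommutes with $\ran\pi_{\mathrm{tot}}$. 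Thus $(H_{\mathrm{tot}},\pi_{\mathrm{tot}},\tilde D,\Gamma)$ is a projectable commutator representation with horizontal Dirac $(\id\otimes\sigma^3)\otimes_\nabla D$ and remainder $0$; faithfulness follows by restricting to the $\binom{1}{0}$-summand of $H_{\mathrm{tot}}^\U \cong B\otimes\bC^2\otimes_B H$, where $\pi_{\mathrm{tot}}$ and $\pi_{\tilde D}$ act through $\pi$ and $\pi_D$ (the vertical part vanishing since $[0]_\kappa = 0$), together with isometry of $\pi_{\mathrm{tot}}$ and injectivity of $\pi_{\tilde D}$ on $\Omega^1_P = P\cdot\Omega^1_B \oplus P\cdot\vartheta$, which separates via the $\sigma^3$- and $\sigma^2$-types. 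For the lift: $h\mapsto 1\otimes\binom{1}{0}\otimes h$ is a unitary $H\xrightarrow{\sim}\tfrac12(\id+\Gamma)H_{\mathrm{tot}}^\U$ intertwining $\pi$ with $\iota_P^\ast\pi_{\mathrm{tot}}$, and since $\nabla(b) = 1\otimes\du_B b$ for $b\in B$ (as $\Pi\du_P(b) = \du_{P,\hor}(b) = \hat\iota_P(\du_B b)$ and $\hat\ell_P(\hat\iota_P(\beta)) = 1\otimes\beta$), it also intertwines $D$ with $\iota_P^\ast D_\hor$, giving an isomorphism $(H,\pi,D)\cong\iota_P^\ast(H_{\mathrm{tot}},\pi_{\mathrm{tot}},\tilde D,\Gamma)$ in $\grp{BCRep}(B)$. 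The main obstacle is the formal self-adjointness of the connection-twisted operator $(\id\otimes\sigma^3)\otimes_\nabla D$ and the attendant domain and local-boundedness verifications, since $D$ need not be bounded and $\pi_{\tilde D}(\vartheta)$ is genuinely unbounded when $\kappa\neq 1$.
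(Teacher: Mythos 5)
Your route is essentially the paper's: decompose $(P\otimes\bC^2)\otimes_B H$ over $\U$-degrees using $\bE_P$ and finite frames for the $P_k$, split the operator into the vertical piece $S=\iu(\Lambda_\kappa\circ\partial_\kappa)\otimes\sigma^2\otimes\id$ and the connection piece $T=(\id\otimes\sigma^3)\otimes_\nabla D$, verify the commutator identities degree by degree via the cocycle identity for $j\mapsto[j]_\kappa\kappa^{-j}$ together with \eqref{eq:connection}, read off $D_\hor=T$ and $Z=0$ from the (anti)commutation with $\Gamma=\id\otimes\sigma^3\otimes\id$, and exhibit the lift through $h\mapsto 1\otimes\left(\begin{smallmatrix}1\\0\end{smallmatrix}\right)\otimes h$. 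Your sign $\pi_{\tilde D}(\vartheta)=-\Lambda_\kappa\otimes\sigma^2\otimes\id$ is in fact the one forced by the computation and the only one consistent with $Z=0$ (the displayed $+$ sign at the corresponding point of the paper's proof is a slip), so the computational core of your sketch is sound, and the intertwining argument for the final isomorphism in $\grp{BCRep}(B)$, via $\nabla(b)=1\otimes\du_B b$, matches the paper's.

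There is, however, one genuine gap: faithfulness. You assert ``injectivity of $\pi_{\tilde D}$ on $\Omega^1_P=P\cdot\Omega^1_B\oplus P\cdot\vartheta$, which separates via the $\sigma^3$- and $\sigma^2$-types,'' but the type separation only reduces the problem to injectivity on each summand; it does not establish it. The vertical summand is easy (injectivity of $\tilde\pi$ plus invertibility of $\pi_{\tilde D}(\vartheta)$), but on the horizontal summand you must show that an element $\sum_i p_i\otimes\alpha_i$ of $P_j\otimes_B\Omega^1_B$ with $\sum_i p_i\otimes\pi_D(\alpha_i)h=0$ in $P_j\otimes_B H$ for all $h\in H$ is zero, and injectivity of $\pi_D$ on $\Omega^1_B$ alone does not give this. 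The paper supplies exactly the missing module-theoretic step: factor the map through $\id\otimes\pi_D:P_j\otimes_B\Omega^1_B\to P_j\otimes_B\bL(H)$, injective because the finitely generated projective module $P_j$ is flat over $B$, followed by $P_j\otimes_B\bL(H)\to\End_{\bC}(H,P_j\otimes_B H)$, injective because a frame for $P_j$ provides an explicit left inverse. Some such argument must appear in your proof. A secondary, fixable thin spot: the uniform bound $\norm{\tilde\pi(p)}\le\norm{p}$ on all of $P$ does not follow from a bare frame estimate (which produces constants depending on the frame of each target degree); as in the paper, one needs the operator-matrix positivity trick with $\sqrt{\bE_P(p^\ast p)}$ in the $C^\ast$-completion to make the estimate uniform in the $\U$-degree, and this is what legitimises both membership of $\tilde\pi(p)$ in $\bL^{\U}$ and your subsequent appeal to Corollary~\ref{cor:fell}.
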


\begin{lemma}\label{lem:commreplift}
	Let \((H,\pi,D)\) be a bounded commutator representation of \((B;\Omega_B,\du_B)\), and let \((E,\sigma,\nabla)\) be a Hermitian line \(B\)-bimodule with connection.
	Equip \(E \otimes_B H\) with the positive definite inner product defined, \emph{mutatis mutandis}, by \eqref{eq:tensorip}.
	\begin{enumerate}[leftmargin=*]
		\item For every \(x \in E\), we obtain contractive \(\phi_E[x] : E \otimes_B H \to H\) by setting
		\[
			\forall y \in E, \, \forall h \in H, \quad \phi_E[x](y \otimes h) = \pi(\hp{x}{y}_E)h
		\]
		Hence, in particular, the pre-Hilbert space \(E \otimes_B H\) is separable.
		\item We obtain formally self-adjoint \(\id \otimes_\nabla D : E \otimes_B H \to E \otimes_B H\) by setting
		\[
			\forall y \in E, \, \forall h \in H, \quad (\id \otimes_\nabla D)(y \otimes h) = -\iu{}\leg{\nabla(y)}{0} \otimes \pi_D(\leg{\nabla(y)}{1})h + y \otimes Dh.
		\]
		\item For every \(\alpha \in \Omega^1_B\), we obtain bounded \(\rho_E[\alpha] : E \otimes_B H \to E \otimes_B H\) by setting
		\[
			\forall y \in E, \, \forall h \in H, \quad \rho_\alpha(y \otimes h) = \leg{\sigma(\alpha \otimes y)}{0} \otimes \pi(\leg{\sigma(\alpha \otimes y)}{1}) h.
		\]
	\end{enumerate}
\end{lemma}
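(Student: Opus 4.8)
The plan is to dispatch the three assertions in order, in each case first verifying well-definedness on the balanced tensor product $E \otimes_B H$ and then the stated boundedness. Throughout I would fix a basis $(e_i)_{i=1}^n$ for $E$ with respect to $\hp{}{}_E$, which exists because $E$ is in particular a right pre-Hilbert $B$-module of finite type by Proposition~\ref{prop:imprimitivity}. For Part~(1), well-definedness of $\phi_E[x]$ is immediate from right $B$-linearity of $\hp{}{}_E$ in the second slot together with multiplicativity of $\pi$, since $\phi_E[x](yb \otimes h) = \pi(\hp{x}{y}_E b)h = \pi(\hp{x}{y}_E)\pi(b)h = \phi_E[x](y \otimes \pi(b)h)$. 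For boundedness I would identify $\phi_E[x]$ with the adjoint of the bounded map $h \mapsto x \otimes h$, whose norm is controlled by $\norm{\hp{x}{x}_E}^{1/2}$ via the defining inner product on $E \otimes_B H$ and the Cauchy--Schwarz inequality~\eqref{eq:cauchyschwarz}. Separability of $E \otimes_B H$ then follows from the reconstruction identity $\xi = \sum_{i=1}^n e_i \otimes \phi_E[e_i](\xi)$, a consequence of $(e_i)$ being a basis, which exhibits $E \otimes_B H$ as a bounded-linear image of the separable pre-Hilbert space $H^n$.

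For Part~(2), the only non-formal point in well-definedness of $\id \otimes_\nabla D$ is compatibility with the balancing $yb \otimes h = y \otimes \pi(b)h$: expanding $\nabla(yb)$ via the right Leibniz rule~\eqref{eq:rightleibniz}, using that $\pi_D$ is a $B$-bimodule homomorphism (so $\pi_D(\omega b) = \pi_D(\omega)\pi(b)$) and the commutator identity $\pi_D \circ \du_B(b) = \iu{}[D,\pi(b)]$, one sees that the two extra terms produced by the Leibniz rule combine into $y \otimes (D\pi(b) - \pi(b)D)h$ and cancel against the difference of the $y \otimes D(\cdot)$ contributions after rebalancing. Formal self-adjointness---symmetry on the algebraic domain, $D$ being only formally self-adjoint---is then a direct computation: using symmetry of $D$, the fact that $\pi_D$ is $\ast$-preserving (which follows from $\ast$-preservation of $\du_B$, the identity $\pi_D \circ \du_B(b) = \iu{}[D,\pi(b)]$, and $\Omega^1_B$ being generated as a $B$-bimodule by $\du_B(B)$), and the Hermitian connection condition~\eqref{eq:hermconn}, the cross-terms involving $\hp{\nabla y_1}{y_2 \otimes 1}$ and $\hp{y_1 \otimes 1}{\nabla y_2}$ assemble into $\du_B\hp{y_1}{y_2}_E$ and exactly absorb the contribution of $D$.

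For Part~(3), since $\sigma$ is an isomorphism of $B$-bimodules $\Omega_B \otimes_B E \to E \otimes_B \Omega_B$, one has $\sigma(\alpha \otimes yb) = \sigma(\alpha \otimes y)b$, and combined with right $B$-linearity of $\pi_D$ (applied to the $\Omega^1_B$-leg $\leg{\sigma(\alpha \otimes y)}{1}$) this yields well-definedness of $\rho_E[\alpha]$. For boundedness I would write $\sigma(\alpha \otimes e_i) = \sum_{j=1}^n e_j \otimes \omega_{ij}$ for a finite matrix $(\omega_{ij})$ of $1$-forms; $\rho_E[\alpha]$ then acts through the finitely many bounded operators $\pi_D(\omega_{ij})$ under the bounded surjection $H^n \to E \otimes_B H$, $(h_i) \mapsto \sum_i e_i \otimes h_i$, and is therefore bounded. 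Alternatively, one may estimate $\rho_E[\alpha]$ directly using the braiding-Hermitian identity~\eqref{eq:braidherm}.

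I expect the main obstacle to be the bookkeeping in Part~(2): well-definedness of $\id \otimes_\nabla D$ is the single place where the right Leibniz rule, the bimodule-homomorphism property of $\pi_D$, and the commutator identity must be combined in concert, and the analogous care is needed to see that the cross-terms in the self-adjointness computation are governed precisely by the Hermitian compatibility~\eqref{eq:hermconn}. Everything else is a routine verification once a basis for $E$ has been fixed.
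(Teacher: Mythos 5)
Your proposal is correct and takes essentially the same route as the paper: part (1) via the insertion map \(\psi_E[x] = (h \mapsto x \otimes h)\) and its adjoint together with the frame reconstruction \(\xi = \sum_i e_i \otimes \phi_E[e_i]\xi\) for separability, and part (3) via expanding \(\sigma(\alpha \otimes e_j)\) over the basis into a finite matrix of bounded operators \(\pi_D(\cdot)\). For part (2) the paper simply defers to the standard unbounded-\(\mathrm{KK}\) argument (its cited reference), which is precisely the Leibniz-rule/commutator and \eqref{eq:hermconn} computation you sketch, so no substantive difference remains.
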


\begin{proof}
	Before continuing, let us fix a basis \((e_i)_{i=1}^N\) for \(E\).
	Recall, moreover, that by the proof of Proposition \ref{prop:basicseparable}, \emph{mutatis mutandis}, every positive \(X \in M_n(B)\) satisfies
	\begin{equation}\label{eq:matrix}
		\forall h = (h_i)_{i=1}^N \in H^N, \quad \ip{h}{\pi_n(X)h} \leq \norm{X} \sum\nolimits_{i=1}^N \norm{h_i}^2,
	\end{equation}
	where \(\pi_n : M_n(B) \to \bL(H^n)\) is the bounded \(\ast\)-homomorphism canonically induced by \(\pi : B \to \bL(H)\).
	Note that this applies, in particular, to \(X \coloneqq (\hp{e_i}{e_j})_{i,j=1}^N\).
	
	First, let \(x \in E\) be given.
	Define \(\psi_E[x] : H \to E \otimes_B H\) by \(\psi_E[x] \coloneqq (h \mapsto x \otimes h)\).
	A standard calculation show that \(\phi_E[x] = \psi_E[x]^\ast\) and that  \(\psi_E[x]\) is bounded with operator norm \(\norm{\psi_E[x]} = \norm{\pi(\ip{x}{x})}^{1/2} \leq 1\), so that \(\phi_E[x]\) is contractive.
	Since \((e_i)_{i=1}^N\) is a basis for \(E\), it now follows that
	\begin{equation}\label{eq:vectorframe}
		\forall \xi \in E \otimes_B H, \quad \xi = \sum\nolimits_{i=1}^N e_i \otimes \phi_{E}[e_i]\xi.
	\end{equation}
	
	Next, let \(V\) be a countable dense subset of \(H\); we claim that \(E \otimes_B H\) admits the dense subset \(\Set{\sum_{i=1}^N e_i \otimes v_i \given v_1,\dotsc,v_N \in V}\).
	Let \(\xi \in E \otimes_B H\) and \(\epsilon > 0\) be given.
	Let \(X \coloneqq (\hp{e_i}{e_j})_{i,j=1}^N\), and choose \(v_1,\dotsc,v_N \in V\), such that \(\norm{\phi_E[e_i]\xi - v_i}^2 < \tfrac{\epsilon^2}{CN+1}\).
	Then, by \eqref{eq:vectorframe} and \eqref{eq:matrix},
	\begin{multline*}
		\norm*{\xi-\sum\nolimits_{i=1}^N e_i \otimes v_i}^2 = \sum\nolimits_{i,j=1}^N \ip{\phi_{e_i}\xi - v_i}{\pi(\hp{e_i}{e_j})(\phi_E[e_j]\xi-v_j}\\ \leq \norm{X} \sum\nolimits_{i=1}^N \norm{\phi_{E}[e_i]\xi-v_i}^2 < \epsilon^2.
	\end{multline*}
	
	Next, that \(\id \otimes_\nabla D\) is well-defined and formally self-adjoint is well-known in the literature on unbounded \(\mathrm{KK}\)-theory---see, e.g., \cite[Lemma 2.28]{BMS}.
	
	Finally, let \(\alpha \in \Omega^1_B\) be given.
	Then right \(B\)-linearity of the generalised braiding \(\sigma\) guarantees that \(\rho_E[\alpha]\) is a well-defined map.
	Hence, by \eqref{eq:vectorframe}, for every \(\xi \in E \otimes_B H\), 
	\begin{align*}
		\norm{\rho_E[\alpha]\xi} &\leq \sum\nolimits_{i,j=1}^N \norm*{e_i \otimes \pi_D\mleft(\hp{e_i}{\sigma(\alpha \otimes e_j)}\mright)\phi_{E}(e_j)\xi}\\ &\leq \left(\sum\nolimits_{i,j=1}^N \norm{e_i} \cdot \norm{\pi_D\mleft(\hp{e_i}{\sigma(\alpha \otimes e_j)}\mright)}\right) \norm{\xi}. \qedhere
	\end{align*}
\end{proof}

\begin{proof}[Proof of Proposition \ref{prop:liftcommrepconstruct}]
	For notational convenience, we conflate the isotypical subspace \(P_j\) with the Hermitian line \(B\)-bimodule \(P_j\) for each \(j \in \bZ\).
	Moreover, we shall also use the notation of Lemma \ref{lem:commreplift} and its proof. 
	Let us first show that 
	\[
		(\tilde{H},\tilde{\pi},\tilde{D}) \coloneqq \left((P \otimes \bC^2) \otimes_B H, \id \otimes \id \otimes \pi(\cdot),\iu{}(\Lambda_\kappa \circ \partial_\kappa) \otimes \sigma^2 \otimes \id + (\id \otimes \sigma^3) \otimes_\nabla D\right)
	\]
	defines a faithful locally bounded commutator representation of \((P;\Omega_P,\du_P)\).
	Let \(\mathfrak{B}\) be the \(C^\ast\)-algebraic completion of \(B\), and let \(\tau : \tilde{H} \to \bC^2 \otimes (P \otimes_B H)\) be the canonical unitary defined by \(\tau \coloneqq \left(p \otimes x \otimes h \mapsto x \otimes p \otimes h\right)\).
	
	First, we show that \(\tilde{H}\) is separable.
	By Lemma \ref{lem:commreplift}, the pre-Hilbert space \(P \otimes_B H = \bigoplus_{j \in \bZ} P_j \otimes_B H\) is separable, so that \(\tilde{H} \cong (P \otimes_B H)^2\) is also separable.
	Hence, \(\chi_{\tilde{H}} \coloneqq {\id} \otimes \sigma^3 \otimes \chi_H\) defines a \(\bZ_2\)-grading on \(\tilde{H}\) and \(\sigma_\cdot \otimes \id \otimes \id\) defines a unitary \(\U\)-representation of finite type on \(\tilde{H}\) with \(\tilde{H}_j = (P_j \otimes \bC^2) \otimes_B H \cong (P \otimes_B H)^2\) for \(j \in \bZ\).
	
	Next, we show that \(\tilde{\pi}\) is well-defined.
	It suffices to show that the left \(P\)-module structure on \(P \otimes_B H\) defines a bounded \(\ast\)-homomorphism \(\lambda : P \to \bL(P \otimes_B H)\), since this will imply boundedness of \(\tilde{\pi} = \tau^\ast \circ (\id \otimes \lambda(\cdot)) \circ \tau\); the other properties of \(\tilde{\pi}\) will follow by routine checks.
	In turn, the only non-trivial points are that \(\lambda\) is well-defined and bounded as a map of Banach spaces.
	Let \(p \in P\) be given, so that there exists \(N \in \bN\), such that \(p \in \bigoplus_{j=-N}^N P_j\); hence, we may uniquely write \(p = \sum_{j=-N}^N \hat{p}(j)\), where \(\hat{p}(j) \in P_j\) for each \(j \in \Set{-N,\dotsc,N}\), so that \(\bE_P(p^\ast p) = \sum_{j=-N}^N \hat{p}(j)^\ast\hat{p}(j)\).
	Let \(k \in \bN\) and \(\xi \in P_j \otimes_B H\) be given. Let \((e_i)_{i=1}^M\) be a basis for \(P_j\).
	Then 
	\begin{align*}
		\norm{\lambda(p)\xi}^2 &= \sum\nolimits_{m,n=1}^M \ip{\phi_{P_j}[e_m]\xi}{\pi(\bE_P(e^\ast_m p^\ast p e_n))\phi_{P_j}[e_n]\xi}\\ &= \sum\nolimits_{m,n=1}^M \ip{\phi_{P_j}[e_m]\xi}{\pi(e^\ast_m \bE_P(p^\ast p) e_n)}.
	\end{align*}
	Now, let \(b \coloneqq \sqrt{\bE_P(p^\ast p)} \in \mathfrak{B}\).
	After passing to Hilbert \(C^\ast\)-module and Hilbert space completions, we may apply \cite[p.\ 42]{Lance} to conclude that
	\begin{align*}
		\sum_{m,n=1}^M \ip{\phi_{P_j}[e_m]\xi}{\pi(e^\ast_m \bE_P(p^\ast p) e_n)} 
		&= \sum_{m,n=1}^N \ip{\phi_{P_j}[e_m]\xi}{\pi(\hp{b e_m}{b e_n})\phi_{P_j}[e_n]\xi}\\
		&\leq \norm{b}^2 \sum_{m,n=1}^N \ip{\phi_{P_j}[e_m]\xi}{\pi(\hp{e_m}{e_n}_j)\phi_{P_j}[e_n]\xi}\\
		&\leq \norm{p}^2 \norm{\xi}^2.
	\end{align*}
	
	Next, we show that \(\tilde{D}\) is \(\U\)-invariant, odd, and symmetric.
	Define \(S\) and \(T\) satisfying \(\tilde{D} = S + T\) by \(S \coloneqq \iu{}(\Lambda_\kappa \circ \partial_\kappa) \otimes \sigma^2 \otimes \id\) and \(T \coloneqq (\id \otimes \sigma^3) \otimes_\nabla D\), respectively.
	On the one hand, the block-diagonal operator \(S = \bigoplus_{j\in\bZ}(-2\pi[j]_\kappa\kappa^{-j}\id) \otimes \sigma^2 \otimes \id\) is \(\U\)-invariant, odd, and formally self-adjoint by construction.
	On the other hand,the operator \(T\) is likewise \(\U\)-invariant and odd by construction; by \(\U\)-invariance, it follows that \(T = \bigoplus_{j \in \bZ} \rest{T}{\tilde{H}_j}\), where
	\(
		\rest{T}{\tilde{H}_j} = \tau^\ast \circ \left(\sigma^3 \otimes (\id \otimes_{\nabla_{P,j}} D)\right) \circ \rest{\tau}{\tilde{H}_j}
	\)
	is symmetric for each \(j \in \bZ\) by Lemma \ref{lem:commreplift}.
	
	Next, we show that \(\tilde{\pi}_{\tilde{D}} : \Omega^1_P \to \bL_{\loc}^{\U}(\tilde{H})\) is well-defined.
	On the one hand, recall that  \((\id-\Pi)(\Omega^1_P)\) is freely generated as a left \(P\)-module by the connection \(1\)-form \(\vartheta\) of \(\Pi\); hence, we may define a map \(\tilde{\pi}_\ver : (\id-\Pi)(\Omega^1_P) \to \bL_{\loc}^{\U}(\tilde{H})\) by
	\[
		\forall p \in P, \quad \tilde{\pi}_\ver(p \vartheta) \coloneqq \tilde{\pi}(p) \cdot (\Lambda_\kappa \otimes \sigma^2 \otimes \id).
	\]
	On the other hand, since \((p \otimes \alpha \mapsto p\alpha) : P \otimes_B \Omega^1_B \to \Pi(\Omega^1_P)\) is a \(B\)-bimodule isomorphism, we may use Lemma \ref{lem:commreplift} to define \(\tilde{\pi}_{\hor} : \Pi(\Omega^1_P) \to \bL_{\loc}^{\U}(\tilde{H})\) by
	\[
		\forall p \in P, \, \forall \alpha \in \Omega^1_B, \, \forall j \in \bZ, \, \forall  \quad \rest{\tilde{\pi}_\hor(p\alpha)}{\tilde{H}_j} \coloneqq \tilde{\pi}(p) \circ \tau^\ast \circ (\sigma^3 \otimes \rho_{P_j}[\alpha]) \circ \rest{\tau}{\tilde{H}_j}.
	\]
	Since \(\tilde{D} = S+T\), it now suffices to show that
	\[
		\forall p \in P, \quad \iu{}[S,\tilde{\pi}(p)] = \tilde{\pi}_\ver \circ (\id-\Pi) \circ \du_P(p), \quad \iu{}[T,\tilde{\pi}(p)] = \tilde{\pi}_\hor \circ \Pi \circ \du_P(p).
	\]
	
	Finally, let \(j,k \in \bZ\), \(p \in P_j\), \(q \in P_k\), \(x \in \bC^2\), and \(h \in H\).
	On the one hand,
	\begin{align*}
		[S,\tilde{\pi}(p)](q \otimes x \otimes h) 
		&= -2\pi [j+k]_\kappa \kappa^{-j-k} pq \otimes \sigma^2 x \otimes h + 2\pi[k]_\kappa \kappa^{-k}pq \otimes \sigma^2 x\otimes h\\
		&= 2\pi[j]_\kappa\kappa^{-j}pq \otimes \sigma^2 x \otimes h\\
		& -\iu{}\tilde{\pi}_\ver(2\pi\iu{}[j]_\kappa\kappa^{-j}p\vartheta)(q \otimes x \otimes h)\\
		&= -\iu{}(\tilde{\pi}_\ver \circ (\id-\Pi) \circ \du_P)(p)(q \otimes x \otimes h).
	\end{align*}
	On the other hand, since \(\Pi \circ \du_P\) is a derivation, it follows that
	\begin{align*}
		\nabla_{P;j+k}(pq) &= \leg{\nabla_{P;j}(p)}{0} \leg{\leg{\sigma_{P;k}(\nabla_{P;j}(p))}{1}\otimes q)}{0} \otimes \leg{\leg{\sigma_{P;k}(\nabla_{P;j}(p)}{1}\otimes q)}{1}\\ &\quad\quad\quad\quad + p \leg{\nabla_{P;k}(q)}{0} \otimes \leg{\nabla_{P;k}(q)}{1},
	\end{align*}
	so that \(\iu{}[T,\tilde{\pi}(p)](q \otimes x \otimes h) = (\tilde{\pi}_\hor \circ \Pi \circ \du_P)(p)(q \otimes x \otimes h)\) since
	\begin{align*}
		&\iu{}T(pq \otimes x \otimes h)\\
		&\quad\quad=	\leg{\nabla_{P;j}(p)}{0} \leg{\leg{\sigma_{P;k}(\nabla_{P;j}(p)}{1}\otimes q)}{0} \otimes \sigma^3 x \otimes \pi(\leg{\leg{\sigma_{P;k}(\nabla_{P;j}(p)}{1}\otimes q)}{1})h\\
		&\quad\quad\quad + p \leg{\nabla_{P;k}(q)}{0} \otimes \sigma^3 x \otimes\pi(\leg{\nabla_{P;k}(q)}{1})h + pq \otimes \sigma^3 x \otimes Dh\\
		&\quad\quad= (\tilde{\pi}_\hor \circ \Pi \circ \du_P)(p)(q \otimes x \otimes h) + \iu{}\tilde{\pi}(p)T(q \otimes x \otimes h).
	\end{align*}
	
	Finally, we show that \((\tilde{H},\tilde{\pi},\tilde{D})\) is faithful.
	We first show that \(\tilde{\pi}\) is isometric.
	Since \(\tilde{\pi}\) is bounded, faithful, and \(\U\)-equivariant, it suffices by Corollary \ref{cor:fell} to show that \(\rest{\tilde{\pi}}{B}\) is isometric.
	Indeed, let \(b \in B\).
	Since \(\tau^\ast(B \otimes_B H) \cong H\) is an orthogonal direct summand of \(\tilde{H}\) and \(\pi\) is isometric, 
	\(
		\norm{b} \geq \norm{\tilde{\pi}(b)} \geq \norm{\rest{\tau \tilde{\pi}(b) \tau^\ast}{B \otimes_B H}} = \norm{\pi(b)} = \norm{b}
	\).
	Now, let us show that \(\tilde{\pi}_{\tilde{D}}\) is injective; to do so, it suffices to show that \(\tilde{\pi}_\ver\) and \(\tilde{\pi}_\hor\) are both injective.
	On the one hand, \(\tilde{\pi}_\ver\) is injective since \(\tilde{\pi}\) is injective and \(\tilde{\pi}_\ver(\vartheta)\) is invertible.
	On the other, to show that \(\tilde{\pi}_\hor\) is injective, it suffices to show injectivity of \(f : P \otimes_B \Omega^1_B \to \End_{\bC}(H,P \otimes_B H)\) defined by \(f(p \otimes \beta)h \coloneqq \pi(p)\pi_D(\beta)h\) for \(p \in P\), \(\beta \in \Omega^1_B\), and \(h \in H\).
	Indeed, fix \(j \in \bZ\), and note that \(\rest{f_j}{P_j \otimes_B \Omega^1_B} = r_j \circ s_j\), where \(s_j : P_j \otimes_B \Omega^1_B \to P_j \otimes_B \bL(H)\) and \(r_j : P_j \otimes_B \bL(H) \to \End_{\bC}(H,P_j \otimes_B H)\) are given by \(s_j \coloneqq \id \otimes \pi_D\) and \(r_j \coloneqq (p \otimes S \mapsto \psi_{P_j}[p]S)\), respectively.
	Then \(s_j\) is injective by flatness of the projective right \(B\)-module \(P_j\), while \(r_j\) is injective by existence of the left inverse \(T \mapsto \sum_{i=1}^N e_i \otimes \phi_{P_j}[e_i]T\), where \((e_i)_{i=1}^N\) is any basis for the Hermitian line \(B\)-bimodule \(P_j\).
	Hence, the map \(\rest{f_j}{P_j \otimes_B \Omega^1_B} : P_j \otimes_B \Omega^1_B \to \End_{\bC}(H,P_j \otimes_B H)\) is also injective.
	
	Now, let \(\tilde{\Gamma} \coloneqq \id \otimes \sigma^3 \otimes \id\), which is an even \(\U\)-invariant self-adjoint unitary commuting with \(\ran\tilde{\pi}\).
	Let us check that \((\tilde{H},\tilde{\pi},\tilde{D},\tilde{\Gamma})\) defines a lift of \((H,\pi,D)\).
	
	First, let \(M : P \otimes_B \tilde{H}^{\U} \to \tilde{H}\) be given by \(M \coloneqq (p \otimes \xi \mapsto \tilde{\pi}\xi)\).
	Define a left \(B\)-linear unitary \(\Phi : \bC^2 \otimes H \to \tilde{H}^{\U}\) by \(\Phi \coloneqq (x \otimes h \mapsto 1 \otimes x \otimes h)\), and observe that
	\[
		\forall p \in P, \, \forall x \in \bC^2, \, \forall h \in H, \quad \tau \circ M \circ (\id \otimes \Phi)(p \otimes x \otimes h) = x \otimes p \otimes h,
	\]
	so that \(\tau \circ M \circ (\id \otimes \Phi) : P \otimes_B (\bC^2 \otimes H) \to \bC^2 \otimes (P \otimes_B H)\) is manifestly bijective, which implies that \(M\) is bijective as well.
	Next, note that \(\tilde{pi}_{\tilde{D}}(\vartheta)^2 = \Lambda_\kappa^2\) since \(\tilde{\pi}_{\tilde{D}}(\vartheta) = \tilde{\pi}_\ver(\vartheta) = \Lambda_\kappa \otimes \sigma^2 \otimes \id\), which also shows that \(\tilde{\Gamma}\) anticommutes with \(\tilde{\Gamma}\).
	Next, observe that \(\tilde{\Gamma}\) anticommutes with \(S\) and commutes with \(T\), so that \(\tilde{D}_\hor = T\) and \(Z = S + \iu{}\tilde{\pi}_{\tilde{D}}(\vartheta)\partial_\kappa = 0\).	
	Next, since \(\tau^\ast (B \otimes H) \cong H\) is an orthogonal direct summand of \(H^{\U}\), the proof that \(\tilde{\pi}\) is isometric also shows that the map \((b \mapsto \rest{\tilde{\pi}(b)}{\tilde{H}^{\U}}) : B \to \bL(\tilde{H}^{\U})\) is isometric.
	Likewise, the proof that \(\tilde{\pi}_\hor\) is injective, specialised to \(j=0\), shows that \((\beta \mapsto \rest{\tilde{\pi}_{\tilde{D}}(\beta)}{\tilde{H}^{\U}}) : \Omega^1_B \to \bL(\tilde{H}^{\U})\) is injective.
	Finally, we may construct an arrow \(V : (H,\pi,D) \to \iota_P^\ast(\tilde{H},\tilde{\pi},\tilde{D},\tilde{\Gamma})\) by setting \(V \coloneqq \left(h \mapsto 1 \otimes \left(\begin{smallmatrix}1\\0\end{smallmatrix}\right) \otimes h\right)\).
\end{proof}

Having proved existence of lifts, we now show that they are indeed unique up to \(\U\)-equivariant unitary equivalence after perturbation by a relative remainder.

\begin{theorem}\label{thm:liftequivalence}
	The functor \(\iota_P^\ast\) of Proposition \ref{prop:restfunct} is an equivalence of categories with weak inverse \((\iota_P)_! : \grp{BCRep}(B) \to \grp{PCRep}(P;\Pi)\) defined as follows.
	\begin{enumerate}[leftmargin=*]
		\item Given an object \((H,\pi,D)\), let \((\iota_P)_!(H,\pi,D)\) be the projectable commutator representation of \((P;\Omega_P,\du_P;\Pi)\) constructed from \((H,\pi,D)\) by Proposition \ref{prop:liftcommrepconstruct}.
		\item Given an arrow \(U : (H_1,\pi_1,D_1) \to (H_2,\pi_2,D_2)\), let \((\iota_P)_!(U) \coloneqq ({\id} \otimes {\id} \otimes U,0)\).
	\end{enumerate}
	Thus, in particular, every bounded commutator representation of \((B;\Omega_B,\du_B)\) has an essentially unique lift to \((P;\Omega_P,\du_P;\Pi)\).
\end{theorem}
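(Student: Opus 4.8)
The plan is to establish that $\iota_P^\ast$ and $(\iota_P)_!$ form an adjoint equivalence by constructing the requisite natural isomorphisms and verifying functoriality. First I would check that $(\iota_P)_!$ is well-defined on objects, which is exactly the content of Proposition \ref{prop:liftcommrepconstruct}: given a faithful bounded commutator representation $(H,\pi,D)$ of $(B;\Omega_B,\du_B)$, the quadruple $(\iota_P)_!(H,\pi,D)$ is a faithful projectable commutator representation of $(P;\Omega_P,\du_P;\Pi)$ with horizontal Dirac operator $(\id \otimes \sigma^3) \otimes_\nabla D$ and vanishing remainder. For arrows, I would check that if $U : (H_1,\pi_1,D_1) \to (H_2,\pi_2,D_2)$ is a unitary intertwining the representations and Dirac operators, then $\id \otimes \id \otimes U$ is an even $\U$-equivariant unitary intertwining the lifted representations, lifted Dirac operators, and $\bZ/2\bZ$-gradings, with zero relative remainder; this is routine since $\id \otimes \id \otimes U$ commutes with $(\Lambda_\kappa \circ \partial_\kappa) \otimes \sigma^2 \otimes \id$ and conjugates $(\id \otimes \sigma^3) \otimes_\nabla D$ to the analogous operator built from $\nabla$ and $D_2$ (here one uses that $\nabla$ is intrinsic to $(P;\Omega_P,\du_P;\Pi)$ and independent of the fibre representation). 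Functoriality of $(\iota_P)_!$ then follows from the composition law in $\grp{PCRep}(P;\Pi)$: composing $(\id \otimes \id \otimes U_1, 0)$ with $(\id \otimes \id \otimes U_2, 0)$ gives $(\id \otimes \id \otimes U_2 U_1, 0)$ since the relative remainders are all zero.

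Next I would construct the unit natural isomorphism $\eta : \id_{\grp{BCRep}(B)} \Rightarrow \iota_P^\ast \circ (\iota_P)_!$. For a faithful bounded commutator representation $(H,\pi,D)$, the proof of Proposition \ref{prop:liftcommrepconstruct} already exhibits the arrow $V : (H,\pi,D) \to \iota_P^\ast (\iota_P)_!(H,\pi,D)$ given by $V \coloneqq \left(h \mapsto 1 \otimes \left(\begin{smallmatrix}1\\0\end{smallmatrix}\right) \otimes h\right)$; I would verify that $V$ is a unitary onto $\iota_P^\ast (\iota_P)_!(H,\pi,D) = (P\,\tilde{H}^{\U},P\tilde{\pi}(\cdot)P,P\tilde{D}_\hor P)$ with $P = \rest{\tfrac{1}{2}(\id+\tilde{\Gamma})}{\tilde{H}^{\U}}$, intertwines the $B$-representations and horizontal Dirac operators, and is natural in $(H,\pi,D)$ — naturality amounts to $V \circ U = (\iota_P^\ast (\iota_P)_! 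U) \circ V$, which is immediate from the formulas. The surjectivity of $V$ onto $P\tilde{H}^{\U}$ uses the orthogonal decomposition $\tilde{H}^{\U} \cong \bC^2 \otimes H$ together with $\tilde\Gamma$ acting as $\sigma^3$ on the $\bC^2$ factor, so $P\tilde H^{\U}$ is precisely the $\left(\begin{smallmatrix}1\\0\end{smallmatrix}\right)$-component; that $P\tilde D_\hor P$ corresponds to $D$ under $V$ uses $\tilde D_\hor = (\id \otimes \sigma^3)\otimes_\nabla D$ together with the fact that the $\nabla$-twisting term involves $P_0 = B$, where $\nabla$ restricts to $\du_B$.

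The main obstacle will be constructing the counit natural isomorphism $\varepsilon : (\iota_P)_! \circ \iota_P^\ast \Rightarrow \id_{\grp{PCRep}(P;\Pi)}$, which requires showing that an arbitrary faithful projectable commutator representation $(H,\pi,D,\Gamma)$ is isomorphic \emph{in} $\grp{PCRep}(P;\Pi)$ — i.e.\ $\U$-equivariantly unitarily equivalent after perturbation by a relative remainder — to the canonical lift of its projection $\iota_P^\ast(H,\pi,D,\Gamma)$. The construction of the isomorphism hinges on the defining bijectivity condition $\left(p \otimes \xi \mapsto \pi(p)\xi\right) : P \otimes_B H^{\U} \to H$ together with the further orthogonal splitting $H^{\U} = H_B \oplus \Gamma H_B$ (where $H_B = P H^{\U}$), which identifies $H^{\U}$ with $\bC^2 \otimes H_B$ compatibly with $\Gamma \leftrightarrow \sigma^3$ and hence $H$ with $(P \otimes \bC^2) \otimes_B H_B$; one then checks that under this unitary the horizontal Dirac operator $D_\hor$ transports to the canonical twisted operator $(\id\otimes\sigma^3)\otimes_\nabla D_B$ (this is where the strong connection condition and the structure of $\pi_D(\vartheta)$ are used, via $\pi_D(\vartheta)^2 = \Lambda_\kappa^2$ and the fact that $D_\hor$ supercommutes with $\pi_D(\vartheta)$), while the difference $D - Z$ transports to the canonical lifted operator up to the vertical term $\iu{}(\Lambda_\kappa\circ\partial_\kappa)\otimes\sigma^2\otimes\id$, which is forced by $Z = \tfrac{1}{2}(D - \Gamma D\Gamma) + \iu{}\pi_D(\vartheta)\partial_\kappa$. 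The relative remainder for $\varepsilon_{(H,\pi,D,\Gamma)}$ is then precisely (the transport of) $Z$. I would verify that this assignment is natural and that $\varepsilon$ together with $\eta$ satisfies the triangle identities — here one must be careful that composition of arrows in $\grp{PCRep}(P;\Pi)$ accumulates relative remainders additively, so the triangle identities reduce to bookkeeping that the remainders cancel correctly. The final conclusion — existence and essential uniqueness of lifts — is then formal: a lift of $(H,\pi,D)$ is by definition an object whose image under $\iota_P^\ast$ is isomorphic to $(H,\pi,D)$, and an equivalence of categories both produces such an object (namely $(\iota_P)_!(H,\pi,D)$) and shows any two are isomorphic in $\grp{PCRep}(P;\Pi)$.
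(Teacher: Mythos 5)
Your proposal is correct and takes essentially the same approach as the paper: the unit is exactly the paper's map \(V_{(H,\pi,D)} = \left(h \mapsto 1 \otimes \left(\begin{smallmatrix}1\\0\end{smallmatrix}\right) \otimes h\right)\), and your counit---built from the bijection \(P \otimes_B H^{\U} \to H\), the \(\Gamma\)-splitting of \(H^{\U}\) implemented via \(\pi_D(\vartheta)\), and the remainder \(Z\) serving as the relative remainder---is precisely the paper's explicit unitary \(\Upsilon\) (read in the opposite, equally valid direction, since every arrow of \(\grp{PCRep}(P;\Pi)\) is invertible). The only superfluous element is the appeal to the triangle identities, which are not needed once both natural isomorphisms are constructed.
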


\begin{proof}
	It remains to construct natural isomorphisms \(U : \id_{\grp{PCRep}(P;\Pi)} \Rightarrow (\iota_P)_! \circ \iota_P^\ast\) and \(V : \id_{\grp{BCRep}(B)} \Rightarrow \iota_P^\ast \circ (\iota_P)_!\).
	
	First, let \((H,\pi,D,\Gamma)\) be an object of \(\grp{PCRep}(P;\Pi)\); let \(D_\hor\) be its horizontal Dirac operator and \(Z\) its remainder, and let \((H_B,\pi_B,D_B) \coloneqq \iota_P^\ast(H,\pi,D,\Gamma)\).
	Define an even \(\U\)-equivariant unitary \(\Upsilon : (P \otimes \bC^2) \otimes_B H_B \to H\) by
	\begin{multline*}
		\forall p \in P, \, \forall \begin{pmatrix}v_1\\v_2\end{pmatrix} \in \bC^2, \, \forall h \in H_B, \\ \Upsilon\mleft(p \otimes \begin{pmatrix}v_1\\v_2\end{pmatrix} \otimes h\mright) \coloneqq \pi(p)\left(v_1\id - \iu{}v_2\Gamma\pi_D(\vartheta)\Lambda_\kappa^{-1}\right)h.
	\end{multline*}
	A straightforward if tedious calculation generalising the proof of Proposition \ref{prop:totalderham} now shows, in the notation of Proposition \ref{prop:liftcommrepconstruct}, that
	\begin{gather*}
		\Upsilon^\ast (-\iu{}\pi_D(\vartheta)\partial_\kappa) \Upsilon = \iu{}(\Lambda_\kappa \circ \partial_\kappa) \otimes \sigma^2 \otimes \id, \quad \Upsilon^\ast D_\hor \Upsilon = ({\id} \otimes \sigma^3) \otimes_\nabla D_B, \\ \Upsilon^\ast \Gamma \Upsilon = {\id} \otimes \sigma^3 \otimes \id,
	\end{gather*}
	so that we may take \(U_{(H,\pi,D,\Gamma)} \coloneqq (\Upsilon^\ast,Z)\).
	
	Now, let \((H,\pi,D)\) be an object of \(\grp{BCRep}(B)\).
	Then, as in the proof of Proposition \ref{prop:liftcommrepconstruct}, we may define \(V_{(H,\pi,D)} : (H,\pi,D) \to \iota_P^\ast \circ (\iota_P)_!(H,\pi,D)\) by setting \(V_{(H,\pi,D)} \coloneqq \left(h \mapsto 1 \otimes \left(\begin{smallmatrix}1\\0\end{smallmatrix}\right) \otimes h\right)\).
\end{proof}

Note that Corollary \ref{cor:diffpimsner} and Theorem \ref{thm:liftequivalence} combine to yield a formalisation of the constructions of Bellissard--Marcolli--Reihani \cite{BMR} and Gabriel--Grensing \cite{GG} for (generalised) crossed product spectral triples.
Indeed, let \((H,\pi,D)\) be a faithful bounded commutator representation of \((B;\Omega_B,\du_B)\).
Let \((E,\sigma_E,\nabla_E)\) be a Hermitian line \(B\)-bimodule with connection, such that \(\epsilon_1 \circ \hat{\cL} \circ \Hor_\kappa(P;\Omega_P,\du_P;\Pi) \cong (E,\sigma_E,\nabla_E)\).
Then the \emph{\(\kappa\)-total crossed product} of \((H,\pi,D)\) by \((E,\sigma_E,\nabla_E)\) is the canonical lift
\(
	(H,\pi,D) \rtimes_{(E,\sigma_E,\bZ)}^{\kappa,\tot} \bZ \coloneqq (\iota_P)_!(H,\pi,D)
\)
of \((H,\pi,D)\) to \((P;\Omega_P,\du_P)\).

\begin{remark}\label{rem:kk3}
	We continue from Remarks \ref{rem:kk1} and \ref{rem:kk2}.
	The right pre-Hilbert \(B\)-module \(P \otimes \bC^2 \cong \bigoplus_{j \in \bZ} \cL(P)(j)^2\) yields a formal \(\U\)-equivariant unbounded \(KK_1\)-cycle \((P,P \otimes \bC^2,\iu{}(\Lambda_\kappa \circ \partial_\kappa) \otimes \sigma^2)\) for \((\mathfrak{P},\mathfrak{B})\), where \({\id} \otimes \sigma^1\) generates the \(1\)-multigrading.
	This defines a genuine \(\U\)-equivariant unbounded \(KK_1\)-cycle for \((\mathfrak{P},\mathfrak{B})\) if and only if \(\kappa=1\), in which case, it recovers a well-known construction of Carey--Neshveyev--Nest--Rennie \cite[Cor.\ 2.10]{CNNR} up to \(1\)-multigrading; in all cases, its formal bounded transform recovers, up to \(1\)-multigrading, the canonical representative of the extension class \([\partial] \in KK_1(\mathfrak{P},\mathfrak{B})\) of \(\mathfrak{P}\) as a Pimsner algebra \cite[\S 2.2]{AKL}.
	Moreover, we may now reinterpret Theorem \ref{thm:liftequivalence} in terms of formal unbounded Kasparov products \cite{Mesland,KL}:
	\begin{enumerate}[leftmargin=*]
		\item given an object \((H,\pi,D)\) of \(\grp{BCRep}(B)\), we may write
		\[
			(P,\tilde{H},\tilde{D}) \coloneqq (P, P \otimes \bC^2, \iu{}(\Lambda_\kappa \circ \partial_\kappa) \otimes \sigma^2; \nabla) \otimes_B (B,H,D);
		\]
		\item given an object \((H,\pi,D,\Gamma)\) of \(\grp{PCRep}(P;\Pi)\), the isomorphism \(U_{(H,\pi,D,\Gamma)}\) of the proof of Theorem \ref{thm:liftequivalence} yields
		\[
			(P,H,D-Z) \cong (P, P \otimes \bC^2, \iu{}(\Lambda_\kappa \circ \partial_\kappa) \otimes \sigma^2; \nabla) \otimes_B (B,H_B,D_B),
		\]
		where \(Z\) is the remainder of \((H,\pi,D,\Gamma)\) and \((H_B,\pi_B,D_B) \coloneqq \iota_P^\ast(H,\pi,D,\Gamma)\).
	\end{enumerate}
	In both cases, \(\nabla\) is the represented connection on \((P,P \otimes \bC^2,\iu{}(\Lambda_\kappa \circ \partial_\kappa) \otimes \sigma^2)\) constructed from \(\Pi\) in Proposition \ref{prop:liftcommrepconstruct}.
	In the second case, if \((P,H,D)\) defines a genuine \(\U\)-equivariant unbounded \(KK_1\)-cycle for \((\mathfrak{P},\bC)\), then this formal unbounded Kasparov product defines a genuine unbounded Kasparov product \cite[Thm.\ 2.44]{CaMe}.
	Otherwise, the \(KK\)-theoretic significance of Theorem \ref{thm:liftequivalence} is an open question.
\end{remark}

\begin{corollary}
	Suppose that \((\star_B,\tau_B)\) is a Riemannian geometry on \((B;\Omega_B,\du_B)\) that lifts to a total Riemannian geometry \((\Delta_\ver,\Delta_\hor,\star,\tau)\) on \((P,\Omega_P,\du_P,\Pi)\).
	Then the total Hodge--de Rham commutator representation \((P;\pi_P,\du_P+\du_P^\ast;2\Pi-\id)\) induced by \((\Delta_\ver,\Delta_\hor,\star,\tau)\) is the essentially unique lift of the Hodge--de Rham commutator representation \((B;\pi_B,\du_B+\du_B^\ast)\) induced by \((\star_B,\tau_B)\) to \((P;\Omega_P,\du_P;\Pi)\).
\end{corollary}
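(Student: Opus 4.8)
The plan is to deduce the statement by combining the construction in Proposition \ref{prop:totalderham} with the essential uniqueness of lifts from Theorem \ref{thm:liftequivalence}. First I would unwind the hypothesis. By definition, to say that $(\star_B,\tau_B)$ \emph{lifts} to the total Riemannian geometry $(\Delta_\ver,\Delta_\hor,\star,\tau)$ on $(P;\Omega_P,\du_P;\Pi)$ is precisely to say that $(\star_B,\tau_B)$ is a restriction of $(\Delta_\ver,\Delta_\hor,\star,\tau)$ in the sense of \eqref{eq:hodgerest}--\eqref{eq:staterest}, and by Theorem \ref{thm:totalgeometry}(1) such a restriction is unique; hence $(\star_B,\tau_B)$ is \emph{the} restriction of $(\Delta_\ver,\Delta_\hor,\star,\tau)$. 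Proposition \ref{prop:totalderham} then supplies the total Hodge--de Rham commutator representation $(\Omega_P,\pi_P,\du_P+\du_P^\ast,2\Pi-\id)$ as a faithful projectable commutator representation of $(P;\Omega_P,\du_P;\Pi)$, and Proposition \ref{prop:hodgederham} supplies the Hodge--de Rham commutator representation $(\Omega_B,\pi_B,\du_B+\du_B^\ast)$ as a faithful bounded commutator representation of $(B;\Omega_B,\du_B)$.

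Next I would invoke the Example immediately preceding Theorem \ref{thm:liftequivalence}: since $(\star_B,\tau_B)$ is the restriction of $(\Delta_\ver,\Delta_\hor,\star,\tau)$, the inclusion $\hat{\iota}_P : \Omega_B \xrightarrow{\sim} \Omega_{P,\hor}^{\U} = \Pi(\Omega_P^{\U})$ defines an isomorphism
\[
	\hat{\iota}_P : (\Omega_B,\pi_B,\du_B+\du_B^\ast) \to \iota_P^\ast(\Omega_P,\pi_P,\du_P+\du_P^\ast,2\Pi-\id)
\]
in $\grp{BCRep}(B)$. Thus the total Hodge--de Rham commutator representation is already known to be \emph{a} lift of the Hodge--de Rham commutator representation to $(P;\Omega_P,\du_P;\Pi)$; it only remains to promote this to \emph{the essentially unique} lift.

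Finally I would use Theorem \ref{thm:liftequivalence}, according to which $\iota_P^\ast : \grp{PCRep}(P;\Pi) \to \grp{BCRep}(B)$ is an equivalence of categories with weak inverse $(\iota_P)_!$. Being an equivalence, $\iota_P^\ast$ is fully faithful and hence essentially injective on objects: if $(\tilde H_1,\tilde\pi_1,\tilde D_1,\tilde\Gamma_1)$ and $(\tilde H_2,\tilde\pi_2,\tilde D_2,\tilde\Gamma_2)$ are two lifts of $(H,\pi,D) \coloneqq (\Omega_B,\pi_B,\du_B+\du_B^\ast)$, then the images $\iota_P^\ast(\tilde H_i,\tilde\pi_i,\tilde D_i,\tilde\Gamma_i)$ are each isomorphic to $(H,\pi,D)$, hence to one another, whence $(\tilde H_1,\tilde\pi_1,\tilde D_1,\tilde\Gamma_1) \cong (\tilde H_2,\tilde\pi_2,\tilde D_2,\tilde\Gamma_2)$ in $\grp{PCRep}(P;\Pi)$. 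Applying this with the total Hodge--de Rham representation in place of one lift and the canonical lift $(\iota_P)_!(\Omega_B,\pi_B,\du_B+\du_B^\ast)$ of Proposition \ref{prop:liftcommrepconstruct} in place of the other yields an isomorphism in $\grp{PCRep}(P;\Pi)$, i.e.\ a $\U$-equivariant unitary equivalence after perturbation by a relative remainder — exactly the sense in which "the essentially unique lift" is meant following Theorem \ref{thm:liftequivalence}. I do not anticipate a genuine obstacle here: the argument is a formal consequence of the categorical equivalence, and the only point that requires a moment's care is checking that the Example's standing hypothesis (that $(\star_B,\tau_B)$ be \emph{the} restriction of the given total Riemannian geometry) follows automatically from the definition of "lift" together with the uniqueness clause of Theorem \ref{thm:totalgeometry}.
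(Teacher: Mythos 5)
Your proposal is correct and matches the paper's intended argument: the paper states this corollary without proof precisely because it follows immediately from the Example showing that \(\hat{\iota}_P\) gives an isomorphism \((\Omega_B,\pi_B,\du_B+\du_B^\ast) \to \iota_P^\ast(\Omega_P,\pi_P,\du_P+\du_P^\ast,2\Pi-\id)\) together with the essential uniqueness of lifts supplied by Theorem \ref{thm:liftequivalence}. Your use of full faithfulness of \(\iota_P^\ast\) to identify any two lifts is exactly the intended mechanism.
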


\begin{example}\label{ex:hopf10}
	Continuing from Examples \ref{ex:hopf5} and \ref{ex:hopf8}, we use Proposition \ref{prop:liftcommrepconstruct} to construct
	\(
		\iota^\ast_{\cO_q(\SU(2))}(\slashed{S}_q(\CP^1),\pi,\slashed{D}_1).
	\)
	First, let \(\slashed{S}_q(\SU(2)) \coloneqq \bC^2 \otimes \bC^2 \otimes \cO_q(\SU(2))\) with the inner product \(\ip{}{}\) given by
	\begin{multline*}
		\forall x_1,x_2,y_1,y_2 \in \bC^2, \, \forall p_1,p_2 \in \cO_q(\SU(2)), \\ \ip{x_1\otimes y_1 \otimes p_1}{x_2 \otimes y_2 \otimes p_2} \coloneqq \ip{x_1}{x_2}\ip{y_1}{y_2}h_q(p_1^\ast p_2),
	\end{multline*}
	the \(\bZ_2\)-grading \(\sigma^3 \otimes \sigma^3 \otimes \id\), and the unitary \(\U\)-representation of finite type \(\tilde{U} \coloneqq \left(z \mapsto {\id} \otimes \left(\begin{smallmatrix}z & 0 \\ 0 & z^{-1}\end{smallmatrix}\right) \otimes \alpha_z\right)\); hence, define \(\Lambda_{q^2}\) and \(\partial_{q^2}\) on \(\slashed{S}_q(\SU(2))\) in terms of \(\tilde{U}\).
	Next, let \(\tilde{\pi} : \cO_q(\SU(2)) \to \bL^{\U}(\slashed{S}_q(\SU(2)))_{\even}\) be induced by multiplication from the left in \(\cO_q(\SU(2))\).
	At last, let \(\tsD_q \coloneqq \tsD_{q,\ver} + \tsD_{q,\hor}\), 
	\begin{align*}
		\tsD_{q,\ver} &\coloneqq \iu{}(\sigma^2 \otimes \id \otimes \id) \circ \Lambda_{q^2} \circ \partial_{q^2}, \\
		\tsD_{q,\hor} &\coloneqq \sigma^3 \otimes \left(\tfrac{1}{2}(\sigma^1-\iu{}\sigma^2) \otimes q\partial_{-} + \tfrac{1}{2}(\sigma^1+\iu{}\sigma^2) \otimes q\partial_{+}\right),
	\end{align*}
	and let
	\(
		\Gamma_q \coloneqq \sigma^3 \otimes {\id} \otimes \id
	\).
	Since \((p \otimes x \mapsto p \cdot x) : \cO_q(\SU(2)) \otimes_{\cO_q(\CP^1)} \slashed{S}_{q,\pm}(\CP^1)\) are left \(\cO_q(\SU(2))\)-module isomorphisms by Proposition \ref{prop:assocline}, we may construct an even \(\U\)-equivariant unitary \(\Phi : (\cO_q(\SU(2)) \otimes \bC^2) \otimes_{\cO_q(\CP^1)} \slashed{S}_q(\CP^1) \to \slashed{S}_q(\SU(2))\) by
	\begin{multline*}
		\forall p \in \cO_q(\SU(2)), \, \forall x \in \bC^2, \, \forall \left(\begin{smallmatrix}s_+\\s_-\end{smallmatrix}\right) \in \slashed{S}_q(\CP^1), \\ \Phi\mleft(p \otimes x \otimes \left(\begin{smallmatrix}s_+\\s_-\end{smallmatrix}\right)\mright) \coloneqq x \otimes \left(\left(\begin{smallmatrix}1\\0\end{smallmatrix}\right)\otimes p \cdot s_+ + \left(\begin{smallmatrix}0\\1\end{smallmatrix}\right) \otimes p \cdot s_-\right),
	\end{multline*}
	which yields the desired isomorphism of projective commutator representations
	\[
		(\Phi,0) : \iota_{\cO_q(\SU(2))}(\slashed{S}_q(\CP^1),\pi,\slashed{D}_q) \to \left(\slashed{S}_q(\SU(2)),\tilde{\pi},\tsD_q,\Gamma_q\right).
	\]
	Note that \(\left(\slashed{S}_q(\SU(2)),\tilde{\pi},\tsD_q,\Gamma_q\right)\) is faithful since \((\slashed{S}_q(\CP^1),\pi,\slashed{D}_1)\) is.
\end{example}

\subsection{Twisted boundedness of lifted commutator representations}\label{sec:4.4}

We have solved the lifting problem for faithful bounded commutator representations, but at a cost: faithful projectable commutator representations typically involve unbounded represented \(1\)-forms.
We now control this unboundedness in the spirit of Connes--Moscovici's \emph{twisted spectral triples} \cite{CM} by permitting distinct vertical and horizontal twists.
One upshot is that quantum \(\SU(2)\) \emph{qua} total space of the \(q\)-monopole does not admit a non-pathological \(\U\)-equivariant twisted spectral triple. 
The other is that Kaad--Kyed's compact quantum metric space \cite{KK} on quantum \(\SU(2)\) for a canonical choice of parameters can be geometrically derived, up to equivalence of Lipschitz seminorms, from the spin Dirac spectral triple on quantum \(\CP^1\) using the \(q\)-monopole.

Once more, let \(\kappa > 0\), let \((P,\Omega_P,\du_P,\Pi)\) be a \(\kappa\)-differentiable quantum principal \(\U\)-bundle over \(B\), let \(\vartheta\) be the connection \(1\)-form of \(\Pi\), let \(\hat{\Phi}_P\) be the Fr\"{o}hlich automorphism of \(\Hor_\kappa(P;\Omega_P,\du_P;\Pi) = (P,\Omega_{P,\hor},\du_{P,\hor})\), and let \(\Phi_P\) be the Fr\"{o}hlich automorphism of the Hermitian line \(B\)-bimodule \(\cL(P)(1)\), so that \(\hat{\Phi}_P\) and \(\Phi_P\) agree on \(\Zent(\Omega_B)^0\).
Hence, recall that \(\hat{\Phi}_P\) induces the right \(\bZ\)-action on \(\mathcal{Z}_{>0}(B) \coloneqq (\Zent(\Omega^B)^0)^\times_{+}\) defined by \eqref{eq:horizontalact}, which therefore extends, \emph{mutatis mutandis}, to a right \(\bZ\)-action on \(\Zent(B)^{\times}_{+}\) in terms of \(\Phi_P\).

We begin with the analogue for locally bounded commutator representations of modular automorphisms.

\begin{definition}
	Suppose that \((H,\pi,D)\) is a locally bounded commutator representation of \((P;\Omega_P,\du_P)\).
	A \emph{modular symmetry} of \((H,\pi,D)\) is an even positive \(\U\)-invariant invertible operator \(N \in \bL_{\loc}^{\U}(H)\) the restricts to the identity on \(H^{\U}\), commutes with \(\pi(B)\), and satisfies \(N \ran(\pi) N^{-1} = \ran(\pi)\).
\end{definition}

\begin{remark}\label{rem:twist}
Let \(\mathfrak{P}\) denote the \(C^\ast\)-completion for \(P\).
Suppose that \((H,\pi,D)\) is a locally bounded commutator representation of \((P;\Omega_P,\du_P)\), that \(\nu\) is a modular automorphism of \(\Omega_P\), and that \(N\) is a modular symmetry of \((H,\pi,D)\) that satisfies \(N^{-1} \pi(\cdot) N = \pi \circ \nu\).
Hence, let \(D^N \coloneqq NDN\).
Since, for all \(p \in P\),
\[
	N[D,\pi(p)]N = D^N \pi\mleft(\nu(p)\mright) - \pi\mleft(\nu^{-1}(p)\mright) D^N = D^N \pi\mleft(\nu(p)\mright) - \pi\mleft(\nu^{-2}\mleft(\nu(p)\mright)\mright) D^N,
\]
it follows that \((P,H,D^N)\) defined a \(\U\)-equivariant even \(\nu^{-2}\)-twisted spectral triple for \(\mathfrak{P}\) only if \(N \ran(\pi_D) N \subseteq \bL^{\U}(H)\).
\end{remark}

In light of the above remark, the following theorem will exclude the existence of non-pathological \(\U\)-equivariant twisted spectral triples that faithfully represent the total spaces of the \(q\)-monopole or the real multiplication instanton of Example \ref{ex:heis6}, respectively.
In particular, it will imply that faithful projectable commutator representations of these two examples cannot naturally be accommodated by the theory of twisted spectral triples.

\begin{theorem}\label{thm:nogo}
	Suppose that \(\Zent(B) = \bC\).
	Let \((H,\pi,D)\) be a locally bounded commutator representation of \((P;\Omega_P,\du_P)\), such that \(\pi\) is injective, \(\pi(P) \cdot H^{\U}\) is dense in \(H\), and there exists a modular symmetry \(N\) of \((H,\pi,D)\), such that \(N  \pi_D(\Omega^1_P)  N \subseteq \bL^{\U}(H)\).
	Suppose that  \(\eta \in \Omega^1_{P,\hor} \setminus \Set{0}\) and \(t \in (0,\infty) \setminus \Set{\kappa}\) satisfy
	\begin{equation}\label{eq:tcentral}
		\forall p \in P, \quad \eta \cdot p = \Lambda_t(p) \cdot \eta.
	\end{equation}
	Then \((\id-\Pi)(\Omega^1_P) \subseteq \ker \pi_D\) or \(\pi_D(\eta) = 0\).
\end{theorem}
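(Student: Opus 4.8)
The plan is to adapt Schmüdgen's no-go argument from the proof of Proposition \ref{prop:nogobdd} to the twisted setting by conjugating throughout with the modular symmetry $N$, and then to play the two scaling parameters $\kappa$ and $t$ against one another. Throughout I write $\hat{\Phi}_P$ for the Fr\"{o}hlich automorphism of $\Hor_\kappa(P;\Omega_P,\du_P;\Pi)$ and note that $\Zent(B) = \bC$ forces $\Zent(\Omega_B)^0 = \bC$, so that $\mathcal{Z}_{>0}(B) = (0,\infty)$ and $\hat{\Phi}_P$ restricts to the identity there.

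First I would pin down how conjugation by $N$ acts on $\pi(P)$. Since $N$ is $\U$-invariant, invertible in $\bL^{\U}_{\loc}(H)$, and satisfies $N\ran(\pi)N^{-1} = \ran(\pi)$ with $\pi$ injective, there is a unique map $\beta : P \to P$ with $N\pi(p)N^{-1} = \pi(\beta(p))$; one checks directly that $\beta$ is a $\U$-graded unital algebra automorphism of $P$ and, because $N$ commutes with $\pi(B)$, that $\beta$ fixes $B$ pointwise. Hence, for each $j \in \bZ$, the restriction $\rest{\beta}{\cL(P)(j)}$ is a $B$-bimodule automorphism of the Hermitian line $B$-bimodule $\cL(P)(j)$, so by Proposition \ref{prop:imprimitivity} and $\Zent(B) = \bC$ it is multiplication by a scalar; repeating the argument of Lemma \ref{lem:modular} verbatim (the resulting group $1$-cocycle $\bZ \to (0,\infty)$ for the now-trivial $\bZ$-action is a homomorphism), together with positivity and invertibility of $N$ and the density of $\pi(P)H^{\U}$, yields a single $s \in (0,\infty)$ with
\[
	\forall j \in \bZ, \, \forall p \in P_j, \quad N\pi(p)N^{-1} = s^{j}\pi(p).
\]

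Next I would extract a squeeze lemma. Fix finite families $(e_i)_i$ and $(\epsilon_j)_j$ in $P_1$ with $\sum_i e_ie_i^\ast = 1$ and $\sum_j \epsilon_j^\ast\epsilon_j = 1$. For any $\omega \in \Omega^1_P$ and $r \in (0,\infty)$ with $\omega\,p = \Lambda_r(p)\,\omega$ for all $p \in P$, the operator $T_\omega \coloneqq N\pi_D(\omega)N$ is bounded by hypothesis. Using $N\pi(e_i)N^{-1} = s\pi(e_i)$ and $N\pi(e_i^\ast)N^{-1} = s^{-1}\pi(e_i^\ast)$ (and likewise for the $\epsilon_j$), the bimodule property of $\pi_D$, and the identities $\sum_i e_i\omega e_i^\ast = r\omega$ and $\sum_j \epsilon_j^\ast\omega\epsilon_j = r^{-1}\omega$ (both immediate from $\omega\,p = \Lambda_r(p)\,\omega$), a short computation gives
\[
	\sum_i \pi(e_i)\,T_\omega\,\pi(e_i^\ast) = s^{-2}r\,T_\omega, \qquad \sum_j \pi(\epsilon_j^\ast)\,T_\omega\,\pi(\epsilon_j) = s^{2}r^{-1}\,T_\omega.
\]
Since $T \mapsto \sum_i \pi(e_i)T\pi(e_i^\ast)$ and $T \mapsto \sum_j \pi(\epsilon_j^\ast)T\pi(\epsilon_j)$ are unital completely positive maps on $\bL^{\U}(H)$, hence norm-contractive, it follows that if $T_\omega \neq 0$ then $s^{-2}r \leq 1$ and $s^2r^{-1} \leq 1$, whence $s^2 = r$.

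Finally I would apply the squeeze lemma to $\omega = \vartheta$ with $r = \kappa$ (the relation $\vartheta\,p = \Lambda_\kappa(p)\,\vartheta$ being the $\bN_0^0 \times \bZ$-case of \eqref{eq:conncent}) and to $\omega = \eta$ with $r = t$ (this is \eqref{eq:tcentral}). If $\pi_D(\vartheta) \neq 0$ then $T_\vartheta \neq 0$ by invertibility of $N$, so $s^2 = \kappa$; if $\pi_D(\eta) \neq 0$ then $s^2 = t$. As $t \neq \kappa$, these cannot both occur, so $\pi_D(\vartheta) = 0$ or $\pi_D(\eta) = 0$. In the first case $(\id - \Pi)(\Omega^1_P) = \vartheta\cdot P$ by \eqref{eq:bigrading}, and $\pi_D(\vartheta\cdot p) = \pi_D(\vartheta)\pi(p) = 0$, so $(\id-\Pi)(\Omega^1_P) \subseteq \ker\pi_D$; this establishes the claimed dichotomy. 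The only genuine work is the first step---showing conjugation by $N$ is a uniform positive scaling $p \mapsto s^j p$ on $P_j$---but this is a routine reprise of Lemma \ref{lem:modular}; everything else is the $t \neq \kappa$ squeeze, which is the $N$-conjugated version of the argument in Proposition \ref{prop:nogobdd}.
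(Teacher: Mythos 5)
Your proposal is correct and follows essentially the same route as the paper: the paper's Lemma \ref{lem:modularsym} (with \(\Zent(B)=\bC\)) pins \(N\) down as a power scaling \(\Lambda_s\), and its Lemma \ref{lem:tcentral} is exactly your squeeze via the unital completely positive maps \(\phi_\pm\) built from \((e_i)_i\) and \((\epsilon_j)_j\), applied first to \(\eta\) (forcing \(s^2=t\)) and then to \(\vartheta\) (forcing \(s^2=\kappa\)), with \(t\neq\kappa\) yielding the dichotomy. Your only deviation is packaging: you work directly with the commutation scalar \(N\pi(p)N^{-1}=s^j\pi(p)\) rather than identifying \(N=\Lambda_{t^{-1/2}}\) outright, which changes nothing of substance.
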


\begin{lemma}\label{lem:modularsym}
	Let \((H,\pi,D)\) be a \(\U\)-equivariant commutator representation of \((P;\Omega_P,\du_P)\).
	If \(\pi\) is injective, \(\left(b \mapsto \rest{\pi(b)}{H^{\U}}\right) : \Zent(B) \to \bL(H^{\U})\) is isometric, and \(\pi(P) \cdot H^{\U}\) is dense in \(H\), then for every modular symmetry \(N\) of \((H,\pi,D)\), there exists a unique right \(1\)-cocycle \(\mu : \bZ \to \Zent(B)^\times_+\), such that 
	\begin{equation}\label{eq:modularsymm}
		N = \bigoplus_{j \in \bZ} \rest{\pi(\mu(-j))}{H_j}.
	\end{equation}
	Conversely, for every \(1\)-cocycle \(\mu : \bZ \to \Zent(B)^\times_+\), Equation \ref{eq:modularsymm} defines a modular symmetry \(N\) of \((H,\pi,D)\).
\end{lemma}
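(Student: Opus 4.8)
The plan is to mimic the structure of the proof of Lemma~\ref{lem:modular}, which establishes the analogous statement for modular automorphisms of $\Omega_P$, replacing $\Omega_P$ by $H$ and the Fröhlich automorphism $\hat{\Phi}_P$ by $\Phi_P$ throughout. The key point is that a modular symmetry $N$, being $\U$-invariant and even, respects the spectral decomposition $H = \bigoplus_{j \in \bZ} H_j$, so that $N = \bigoplus_{j \in \bZ} N_j$ where $N_j \coloneqq \rest{N}{H_j}$; I must then identify each $N_j$ with $\rest{\pi(\mu(-j))}{H_j}$ for a suitable element $\mu(-j) \in \Zent(B)^\times_+$ and check the cocycle identity.

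First I would establish that each $N_j$ is implemented by a central element. Since $N \ran(\pi) N^{-1} = \ran(\pi)$ and $N$ commutes with $\pi(B)$, conjugation by $N$ is a $B$-bimodule automorphism of $\ran(\pi)$; via the identification of $\ran(\pi)$ with $P$ through the injective map $\pi$ and using that $\pi(P) \cdot H^{\U}$ is dense, this conjugation descends to a $B$-bimodule automorphism of each spectral subspace $P_j \cong \cL(P)(j)$. Because $N$ restricts to the identity on $H^{\U} \supseteq \pi(B) H^{\U}$ and $\pi(P_0) = \pi(B)$, the map $\rest{N}{H_0} = \id$. For general $j$, I would argue that $N_j$ is given by left multiplication by some element of the commutant of $\pi(B)$ on $H_j$; using that $\cL(P)(j)$ is a Hermitian line $B$-bimodule, Proposition~\ref{prop:imprimitivity} and the fact that the Fröhlich automorphism $\Phi_P$ governs the left-right discrepancy, one deduces there exists unique $\mu(-j) \in \Zent(B)$ with $N(\pi(p)\xi) = \pi(p)\xi \cdot$ (suitably interpreted) for $p \in P_j$, $\xi \in H^{\U}$; concretely, picking a cobasis $(e_i)_{i=1}^n$ for $\cL(P)(j)$, one sets $\mu(-j) \coloneqq \sum_i$ (the appropriate $B$-valued pairing of $N \pi(e_i)$ against $\pi(e_i)$), exactly as $\mu(j)$ is extracted in Lemma~\ref{lem:modular}. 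Positivity of $\mu(-j)$ follows from positivity of $N$ together with $\sum_i e_i^\ast \bullet e_i = 1$, and invertibility of $\mu(-j)$ follows from invertibility of $N$ by running the same argument with $N^{-1}$; centrality follows by the same computation $\mu(-j)\alpha = \sum_i e_i^\ast N(\pi(e_i \alpha)) = \alpha\,\mu(-j)$ used in Lemma~\ref{lem:modular}.

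Next I would verify the cocycle identity. For $j, k \in \bZ$, $p \in P_j$, $q \in P_k$, one computes $N(\pi(pq)\xi)$ in two ways using $P_{j+k} = P_j \cdot P_k$: directly it gives $\pi(pq)\xi \cdot \mu(-j-k)$, while applying $N$ stepwise through $\pi(p)$ and $\pi(q)$ (and using that $N$ commutes with $\pi(B)$ and that moving $\mu(-k)$ past $\pi(p)$ introduces $\Phi_P^{-j}$ by definition of the $\bZ$-action on $\Zent(B)^\times_+$ built from $\Phi_P$) gives $\pi(pq)\xi \cdot \Phi_P^{-k}(\mu(-j))\mu(-k)$. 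Uniqueness of $\mu(-j-k)$ together with $P_{j+k} = P_j P_k$ yields $\mu(-j-k) = \Phi_P^{-k}(\mu(-j))\mu(-k)$; since $\mu(0) = 1$, reindexing shows $\mu \colon \bZ \to \Zent(B)^\times_+$ is a right $1$-cocycle for the action of \eqref{eq:horizontalact} transported via $\Phi_P$. The converse is immediate: given any $1$-cocycle $\mu$, the formula \eqref{eq:modularsymm} defines an even $\U$-invariant invertible operator commuting with $\pi(B)$ and restricting to the identity on $H^{\U} = H_0$, and $N \ran(\pi) N^{-1} = \ran(\pi)$ because $\pi(\mu(-j))\pi(p)\pi(\mu(-k))^{-1} \in \pi(P_{j-k+\dots})$ by centrality and the cocycle identity; local boundedness of $N$ holds since each $N_j = \rest{\pi(\mu(-j))}{H_j}$ is bounded and $N$ is block-diagonal with respect to the spectral decomposition.

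The main obstacle I anticipate is the step extracting $\mu(-j)$ as a \emph{central} element that implements $N_j$ while correctly tracking the role of the Fröhlich automorphism: one must be careful that the relevant "multiplication" is on the correct side (this is where $\Phi_P$ enters, since $H_j$ carries the bimodule structure of $\cL(P)(j)$ whose left and right $B$-actions differ by $\Phi_P$), and that isometry of $\rest{\pi}{\Zent(B)}$ on $H^{\U}$ is genuinely needed to conclude the $\mu(-j)$ are bounded-below and well-behaved. Once the dictionary with Lemma~\ref{lem:modular} is set up carefully, the remaining computations are the same routine cobasis manipulations carried out there, so I would not reproduce them in full.
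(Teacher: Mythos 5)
Your proposal is correct and follows essentially the paper's route: transport conjugation by \(N\) through the injective map \(\pi\) to a \(\U\)-equivariant automorphism of \(P\) fixing \(B\), apply Lemma \ref{lem:modular} (\emph{mutatis mutandis}) to extract the unique central cocycle, identify \(N\) with \(\bigoplus_j \rest{\pi(\mu(-j))}{H_j}\) on \(\pi(P)\cdot H^{\U}\) using \(\rest{N}{H^{\U}} = \id\) and density, and treat the converse as routine. The one point to tighten is positivity: the isometry of \(b \mapsto \rest{\pi(b)}{H^{\U}}\) on \(\Zent(B)\) is needed precisely to reflect positivity of the operator \(\rest{\pi(\mu(1)^{-1})}{H^{\U}} = \rest{\sum_i \pi(\epsilon_i)^\ast N \pi(\epsilon_i)}{H^{\U}}\) (where \(\sum_i \epsilon_i^\ast\epsilon_i = 1\) in \(P_1\)) back to positivity of the element \(\mu(1)\) in \(\Zent(B)\) -- not for "boundedness below" as you suggest -- since a merely injective bounded \(\ast\)-homomorphism of pre-\(\Cstar\)-algebras need not detect positivity.
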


\begin{proof}
	We prove the non-trivial direction.
	Suppose that \(\pi\) is injective, the map \(b \mapsto \rest{\pi(b)}{H^{\U}}\) restricts to an isometry on \(\Zent(B)\), and \(\pi(P) \cdot H^{\U}\) is dense in \(H\).
	Let \(N\) be a modular symmetry of \((H,\pi,D)\).
	Since \(\pi\) is injective, there exists a unique \(\U\)-equivariant algebra automorphism \(\Delta\) of \(P\), such that \(\pi \circ \Delta = N^{-1} \pi(\cdot) N\); in particular, \(\rest{\Delta}{B} = \id_B\) since \(N\) commutes with \(\pi(B)\).
	Hence, by Lemma \ref{lem:modular}, \emph{mutatis mutandis}, there exists a unique \(1\)-cocycle \(\mu : \bZ \to \Zent(B)^\times\), such that
	\(\Delta(p) = p \cdot \mu(j)\) for all \(j \in \bZ\) and \(p \in P_j\).
	Hence, for all \(j \in \bZ\), \(p \in P_j\) and \(h \in H^{\U}\),
	\(
		N \pi(p) h = N \pi(p) N^{-1} h = \pi(\Delta^{-1}(p)) h = \pi(p \mu(j)^{-1}) h = \pi(\mu(-j))\pi(p)h
	\)
	since \(\hat{\Phi}_P^j(\mu(j)^{-1}) = \mu(-j)\),
	so that \((N,\mu)\) satisfies \eqref{eq:modularsymm} since \(\pi(P) \cdot H^{\U}\) is dense in \(H\).
	Finally, let \((\epsilon_i)_{i=1}^N\) be a finite family in \(P_1\) satisfying \(\sum_{i=1}^N \epsilon_i^\ast \epsilon_i = 1\).
	Then
	\[
		0 \leq \sum\nolimits_{i=1}^N \pi(\epsilon_i)^\ast N \pi(\epsilon_i) = \sum\nolimits_{i=1}^N \pi(\epsilon_i)^\ast \pi(\epsilon_i \mu(1)^{-1}) N = \pi(\mu(1)^{-1}) N,
	\]
	so that \(\rest{\pi(\mu(1)}{H^{\U})} \geq 0\).
	Hence, since \(\left(b \mapsto \rest{\pi(b)}{H^{\U}}\right) : \Zent(B) \to \bL(H^{\U})\) is isometric, it follows that \(\mu(1) \geq 0\), so that \(\mu\) takes its values in \(\Zent(B)^{\times}_{\geq 0}\).
\end{proof}

\begin{lemma}\label{lem:tcentral}
	Suppose that \(\Zent(B) = \bC\).
	Let \(\eta \in \Omega^1_{P,\hor} \setminus \Set{0}\) and \(t \in (0,\infty) \setminus \Set{\kappa}\), and suppose that \(\eta\) and \(t\) satisfy \eqref{eq:tcentral}.
	Let \((H,\pi,D)\) be a locally bounded commutator representation of \((P;\Omega_P,\du_P)\), such that \(\pi\) is injective and \(\pi(P) \cdot H^{\U}\) is dense in \(H\).
	For every modular symmetry \(N\) of \((H,\pi,D)\), the operator \(N \pi_D(\omega) N\) is bounded only if \(N = \Lambda_{t^{-1/2}}\) or \(\pi_D(\omega) = 0\).
\end{lemma}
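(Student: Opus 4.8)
The plan is to first pin down the modular symmetry $N$ explicitly by means of Lemma~\ref{lem:modularsym}, and then to play the $t$-centrality relation \eqref{eq:tcentral} for $\eta$ against the unital completely positive averaging maps associated with the two frames of $P_1$ that come with any topological quantum principal $\U$-bundle. We may assume $H^{\U}\neq 0$, since otherwise density of $\pi(P)\cdot H^{\U}$ forces $H=0$ and there is nothing to prove. Then $\bigl(\lambda\mapsto\lambda\,\id_{H^{\U}}\bigr)\colon\Zent(B)=\bC\to\bL(H^{\U})$ is isometric, so Lemma~\ref{lem:modularsym} yields a right $1$-cocycle $\mu\colon\bZ\to\Zent(B)^{\times}_{+}=(0,\infty)$ with $N=\bigoplus_{j\in\bZ}\rest{\pi(\mu(-j))}{H_j}$. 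Since the Fr\"{o}hlich automorphism $\Phi_P$ fixes $\Zent(B)=\bC$ pointwise, the induced $\bZ$-action on $(0,\infty)$ is trivial, so $\mu$ is a homomorphism; writing $s\coloneqq\mu(1)>0$ gives $\mu(j)=s^{j}$, hence $N=\Lambda_{s}$ as an everywhere-defined, invertible, locally bounded operator acting as $s^{-k}\id$ on $H_k$.

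Next I would record the commutation rules that $N=\Lambda_s$ satisfies with the frames. Fix $(e_i)\subset P_1$ with $\sum_i e_ie_i^{\ast}=1$ and $(\epsilon_i)\subset P_1$ with $\sum_i\epsilon_i^{\ast}\epsilon_i=1$, which exist by definition of a topological quantum principal $\U$-bundle. Since $\pi(p)$ maps $H_k$ into $H_{k+j}$ for $p\in P_j$, one has $\pi(p)N=s^{j}N\pi(p)$; in particular $\pi(e_i)N=sN\pi(e_i)$, $N\pi(e_i)^{\ast}=s\pi(e_i)^{\ast}N$, and likewise for the $\epsilon_i$. On the other hand, applying the $P$-bimodule homomorphism $\pi_D$ to the instances $\eta e_i=t^{-1}e_i\eta$ and $\eta\epsilon_i=t^{-1}\epsilon_i\eta$ of \eqref{eq:tcentral} gives $\pi_D(\eta)\pi(e_i)=t^{-1}\pi(e_i)\pi_D(\eta)$ and $\pi_D(\eta)\pi(\epsilon_i)=t^{-1}\pi(\epsilon_i)\pi_D(\eta)$; multiplying these by $\pi(e_i)^{\ast}$ on the right, resp.\ by $\pi(\epsilon_i)^{\ast}$ on the left, summing, and using $\sum_i\pi(e_i)\pi(e_i)^{\ast}=1=\sum_i\pi(\epsilon_i)^{\ast}\pi(\epsilon_i)$, I obtain
\begin{equation*}
	\sum_i\pi(e_i)\pi_D(\eta)\pi(e_i)^{\ast}=t\,\pi_D(\eta),\qquad\sum_i\pi(\epsilon_i)^{\ast}\pi_D(\eta)\pi(\epsilon_i)=t^{-1}\pi_D(\eta).
\end{equation*}

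Finally, set $T\coloneqq N\pi_D(\eta)N$ and assume $T\in\bL(H)$. If $T=0$, then invertibility of $N$ forces $\pi_D(\eta)=0$, so assume $T\neq 0$. Let $\Phi_{+}(S)\coloneqq\sum_i\pi(e_i)S\pi(e_i)^{\ast}$ and $\Phi_{-}(S)\coloneqq\sum_i\pi(\epsilon_i)^{\ast}S\pi(\epsilon_i)$ on $\bL(H)$; these are unital completely positive, hence contractive. Pushing the two copies of $N$ through the frames using the commutation rules and then invoking the displayed identities, one computes $\Phi_{+}(T)=s^{2}t\,T$ and $\Phi_{-}(T)=s^{-2}t^{-1}\,T$. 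Contractivity then yields $s^{2}t\leq 1$ and $s^{-2}t^{-1}\leq 1$, so $s^{2}t=1$, i.e.\ $s=t^{-1/2}$ and $N=\Lambda_{t^{-1/2}}$. The most delicate point — bookkeeping rather than a genuine obstacle — is that $N=\Lambda_s$ and $\pi_D(\eta)$ are in general only locally bounded, so each identity such as $\pi(e_i)N=sN\pi(e_i)$ must be read as an equality of everywhere-defined operators on the pre-Hilbert space $H$, and the contractivity bound may be applied only after $T$ has been assumed bounded (passing to the $C^{\ast}$-completion of $\bL(H)$ if one wishes to be scrupulous about unital completely positive maps). Note that neither $t\neq\kappa$ nor horizontality of $\eta$ enters here; these hypotheses are used only when Lemma~\ref{lem:tcentral} is combined, in the proof of Theorem~\ref{thm:nogo}, with the vertical relation \eqref{eq:conncent} for $\vartheta$ to separate the scales $\kappa$ and $t$.
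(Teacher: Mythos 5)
Your proposal is correct and follows essentially the same route as the paper: pin down \(N=\Lambda_s\) from \(\Zent(B)=\bC\) via the modular-symmetry cocycle lemma, then apply the unital completely positive frame averages \(\phi_\pm\) built from families \((e_i)\), \((\epsilon_j)\) in \(P_1\) to \(T=N\pi_D(\eta)N\), using \eqref{eq:tcentral} to get \(\phi_\pm(T)=(s^2t)^{\pm1}T\) and contractivity to force \(s^2t=1\) or \(T=0\). Your extra bookkeeping (the \(H^{\U}\neq 0\) check for the isometry hypothesis of Lemma~\ref{lem:modularsym}, the explicit commutation rules \(\pi(p)N=s^jN\pi(p)\), and the observation that \(t\neq\kappa\) and horizontality of \(\eta\) are not used) is accurate and consistent with the paper's argument.
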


\begin{proof}
	Let \(N\) be a modular symmetry of \((H,\pi,D)\); suppose that \(N \pi_D(\omega) N\) is bounded.
	Since \(\Zent(B) = \bC\), it follows from Lemma \ref{lem:modular} that there exists unique \(s \in (0,\infty)\), such that \(N = \Lambda_s\).
	Now, let \((e_i)_{i=1}^m\) and \((\epsilon_j)_{j=1}^n\) be finite families in \(P_1\), such that \(\sum_{i=1}^m e_ie_i^\ast = 1\) and \(\sum_{j=1}^n \epsilon_j^\ast \epsilon_j = 1\); hence, let \(\phi_{\pm} : \bL^{\U}(H) \to \bL^{\U}(H)\) be the unit-preserving contractions from the proof of Proposition \ref{prop:nogobdd} induced by \((e_i)_{i=1}^m\) and \((\epsilon_j)_{j=1}^n\), respectively.
	Then
	\begin{align*}
		\phi_+\mleft(N \pi_D(\eta) N\mright) &= \sum\nolimits_{i=1}^m \pi(e_i) \Lambda_s \pi_D(\eta) \Lambda_s \pi(e_i^\ast)\\ &= \pi\mleft(\sum\nolimits_{i=1}^m e_i \cdot (\Lambda_s \circ \Lambda_t \circ \Lambda_s)(e_i^\ast)\mright) N \pi_D(\eta) N\\ &= s^2 t N \pi_D(\eta) N,
	\end{align*}
	while a similar calculation shows that \(\phi_-\mleft(N \pi_D(\eta) N \mright) = (s^2 t)^{-1} N \pi_D(\eta) N\).
	Hence,
	\[
		\norm{N \pi_D(\eta) N} = (s^2 t)^{\mp 1} \norm{\phi_{\pm}\mleft(N \pi_D(\eta) N\mright)} \leq (s^2 t)^{\mp 1} \norm{N \pi_D(\eta) N},
	\]
	so that \(N \pi_D(\eta) N = 0\) or \(s^2 t = 1\).
\end{proof}

\begin{proof}[Proof of Theorem \ref{thm:nogo}]
	Let \(N\) be a modular symmetry of \((H,\pi,D)\) that satisfies \(N \cdot \pi_D(\Omega^1_P) \cdot N \subseteq \bL^{\U}(H)\).
	Suppose that \(\pi_D(\eta) \neq 0\).
	By Lemma \ref{lem:tcentral} applied to \(\eta\), it follows that \(N = \Lambda_{t^{-1/2}} \neq \Lambda_{\kappa^{-1/2}}\); hence, \(\pi_D(\vartheta) = 0\) by Lemma \ref{lem:tcentral} applied to \(\vartheta\), so that \(\pi_D\) annihilates \((\id-\Pi)(\Omega^1_P) = P \cdot \vartheta\).
\end{proof}

\begin{example}\label{ex:hopf11}
	Continuing from Example \ref{ex:hopf9}, let \((H,\pi,D)\) be a locally bounded commutator representation of \((\cO_q(\SU(2)),\Omega_q(\SU(2)),\du_q)\), such that \(\pi\) is injective and \(\pi(\cO_q(\SU(2))) \cdot H^{\U}\) is dense in \(H\).
	If there exists a modular symmetry \(N\) of \((H,\pi,D)\) satisfying \(N \ran(\pi_D) N \subseteq \bL^{\U}(H)\), then \((\id-\Pi_q)(\Omega^1_P) \subseteq \ker \pi_D\) or \(\Omega^1_{q,\hor}(\SU(2)) \subseteq \ker \pi_D\).
	Suppose that \(N\) is such a modular symmetry and \((\id-\Pi_q)(\Omega^1_q(\SU(2))) \setminus \ker \pi_D \neq \emptyset\).
	Note that \((\eta,t) \coloneqq (e^\pm,q)\) satisfy \eqref{eq:tcentral}, where \(q \neq q^2\), so that \(\pi_D(e^\pm) = 0\) by Theorem \ref{thm:nogo}.
	Since \(\Set{e^+,e^-}\) generates \(\Omega^1_{q,\hor}(\SU(2))\) as a left \(\cO_q(\SU(2))\)-module, it follows that \(\Omega^1_{q,\hor}(\SU(2)) \subseteq \ker \pi_D\).
\end{example}

\begin{example}\label{ex:heis12}
	Continuing from Example \ref{ex:heis11}, let \((H,\pi,D)\) be a locally bounded commutator representation of \((P_\theta,\Omega_{P_\theta},\du_{P_\theta})\), such that the map \(\pi\) is injective and \(\pi(P_\theta) \cdot H^{\U}\) is dense in \(H\),
	If there exists a modular symmetry \(N\) of \((H,\pi,D)\), such that \(N \ran(\pi_D) N \subseteq \bL^{\U}(H)\), then \((\id-\Pi_{P_\theta})(\Omega^1_{P_\theta}) \subseteq \ker \pi_D\) or \(\Omega^1_{P_\theta,\hor} \subseteq \ker \pi_D\).
	Indeed, note that the left \(P\)-module \(\Omega^1_{P,\hor}\) is freely generated by \(e^1,e^2 \in \Zent(\Omega_\theta(\bT^2))^1\), where \eqref{eq:tcentral} is satisfied by \((\eta,t) = (e^i,\epsilon_\theta)\) for \(i=1,2\) by Example \ref{ex:heis5}.
	Since \(\epsilon_\theta \neq \epsilon_\theta^2\), we may apply Theorem \ref{thm:nogo} exactly as in Example \ref{ex:hopf10}.
\end{example}

Since a single modular symmetry cannot generally be used to control the unboundedness of represented \(1\)-forms, we must to allow for distinct modular symmetries in the vertical and horizontal directions.
Recall that \((\id-\Pi)(\Omega^1_P) = P \cdot \vartheta\) and \(\Pi(\Omega^1_P) = P \cdot \Omega^1_B\).

\begin{definition}
	Suppose that \((H,\pi,D,\Gamma)\) is a projectable commutator representation of \((P;\Omega_P,\du_P;\Pi)\).	
	\begin{enumerate}[leftmargin=*]
		\item A \emph{vertical twist} for \((H,\pi,D,\Gamma)\) is a pair \((N_\ver,\nu_\ver)\), where \(N_\ver\) is a modular symmetry of \((H,\pi,D)\) commuting with both \(\Gamma\) and \(\pi_D(\vartheta)\) and \(\nu_\ver\) is a modular automorphism of \(\Omega_P\), such that \[N_\ver^{-1}\pi(\cdot) N_\ver = \pi \circ \rest{\nu_\ver}{P}, \quad N_\ver \pi_D(\vartheta) N_\ver \in \bL^{\U}(H).\]
		\item A \emph{horizontal twist} for \((H,\pi,D,\Gamma)\) is a pair \((N_\hor,\nu_\hor)\), where \(N_\hor\) is a modular symmetry of \((H,\pi,D)\) commuting with both \(\Gamma\) and \(\pi_D(\vartheta)\) and \(\nu_\hor\) is a modular automorphism of \(\Omega_P\), such that \[N_\hor^{-1}\pi(\cdot)N_\hor = \pi \circ \rest{\nu_\hor}{P},\quad N_\hor \pi_D\mleft(\Omega^1_B\mright) N_\hor \subseteq \bL^{\U}(H).\]
	\end{enumerate}
\end{definition}

Thus, if \((H,\pi,D,\Gamma)\) is a projectable commutator representation of \((P;\Omega_P,\du_P;\Pi)\), then any vertical twist \((N_\ver,\nu_\ver)\) satisfies \(N_\ver \pi_D\mleft((\id-\Pi)(\Omega^1_P)\mright) N_\ver \subseteq \bL^{\U}(H)\), and any horizontal twist \((N_\hor,\nu_\hor)\) satisfies \(N_\hor \pi_D\mleft(\Pi(\Omega^1_P)\mright) N_\hor \subseteq \bL^{\U}(H)\).

We now study the existence of vertical and horizontal twists for faithful projectable commutator representations.
Lemmata \ref{lem:modular} and \ref{lem:modularsym} justify the following definition.

\begin{definition}
	Suppose that \((H,\pi,D,\Gamma)\) is faithful projectable commutator representation of \((P;\Omega_P,\du_P;\Pi)\).
	We define a \emph{modular pair} for \((H,\pi,D,\Gamma)\) to be a pair \((N,\nu)\), where \(N\) is a modular symmetry of \((H,\pi,D)\) and \(\nu\) is a modular automorphism of \(\Omega_P\) satisfying the equation \(N^{-1}  \pi(\cdot)  N = \pi \circ \rest{\nu}{P}\).
	In this case, the \emph{symbol} of \((N,\nu)\) is the unique right \(1\)-cocycle \(\lambda : \bZ \to \mathcal{Z}_{>0}(B)\), such that 
	\[
		\forall j \in \bZ, \quad \rest{N}{H_j} = \rest{\pi\mleft(\lambda(-j)\mright)}{H_j}, \quad \rest{\nu}{(\Omega_P)_j} = \left(\omega \mapsto \omega \lambda(j)\right).
	\]
\end{definition}

We first show that there is a canonical choice of vertical twist.

\begin{proposition}\label{prop:verticaltwist}
	Let \((H,\pi,D,\Gamma)\) be a faithful projectable commutator representation of \((P;\Omega_P,\du_P;\Pi)\).
	Then \((\Lambda_{\kappa^{-1/2}},\Lambda_{\kappa^{1/2}})\) defines a vertical twist of \((H,\pi,D,\Gamma)\), which is unique whenever \(\Zent(B) = \bC\).
\end{proposition}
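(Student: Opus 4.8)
The plan is to verify directly that the pair \((\Lambda_{\kappa^{-1/2}},\Lambda_{\kappa^{1/2}})\)---with \(\Lambda_{\kappa^{-1/2}} \in \bL^{\U}_{\loc}(H)\) the operator \(\bigoplus_{j \in \bZ} \kappa^{j/2}\id_{H_j}\) and \(\Lambda_{\kappa^{1/2}}\) the modular automorphism of \(\Omega_P\) that scales the \(\U\)-isotypical component of \(\bZ\)-degree \(j\) by \(\kappa^{-j/2}\)---meets every clause in the definition of a vertical twist, the only substantive input being the identity \(\pi_D(\vartheta)^2 = \Lambda_\kappa^2\) built into the definition of a projectable commutator representation. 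Uniqueness under \(\Zent(B) = \bC\) will then be extracted from Lemmata \ref{lem:modular} and \ref{lem:modularsym} together with that same identity.

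For existence I would first dispatch the bookkeeping. Since \(\bZ\)-degrees add under the product in \(\Omega_P\), the map \(\Lambda_{\kappa^{1/2}}\) is a \(\U\)-equivariant unital graded algebra automorphism of \(\Omega_P\) with inverse \(\Lambda_{\kappa^{-1/2}}\); it fixes \(\Omega_P^{\U} = (\Omega_P)_0\) pointwise and satisfies \(p^\ast \Lambda_{\kappa^{1/2}}(p) = \kappa^{-j/2}p^\ast p \geq 0\) for \(p \in P_j\), so it is a modular automorphism of \(\Omega_P\) with symbol \(j \mapsto \kappa^{-j/2}\). Dually, \(\Lambda_{\kappa^{-1/2}}\) is even, positive, \(\U\)-invariant and invertible in \(\bL^{\U}_{\loc}(H)\), restricts to the identity on \(H^{\U} = H_0\), commutes with \(\pi(B)\) and with \(\Gamma\) (both block-diagonal over the \(H_j\), on which \(\Lambda_{\kappa^{-1/2}}\) is a scalar), and conjugates each \(\pi(P_k)\) onto itself, so it is a modular symmetry of \((H,\pi,D)\); a computation on each \(H_j\) gives \(\Lambda_{\kappa^{1/2}}\pi(p)\Lambda_{\kappa^{-1/2}} = \pi(\kappa^{-k/2}p) = \pi(\Lambda_{\kappa^{1/2}}(p))\) for \(p \in P_k\), which is exactly the intertwining \(N_\ver^{-1}\pi(\cdot)N_\ver = \pi \circ \rest{\Lambda_{\kappa^{1/2}}}{P}\). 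Because \(\pi_D\) is \(\U\)-equivariant (it is determined by \(\pi_D\circ\du_P = \iu{}[D,\pi(\cdot)]\) and the \(P\)-bimodule structure, with \(\Omega_P\) generated by \(P\) and \(\du_P(P)\)) and \(\vartheta\) is \(\U\)-invariant, the operator \(\pi_D(\vartheta)\) is \(\U\)-invariant, hence block-diagonal, hence commutes with \(\Gamma\) and \(\Lambda_{\kappa^{-1/2}}\). The one step needing an idea is boundedness of \(\Lambda_{\kappa^{-1/2}}\pi_D(\vartheta)\Lambda_{\kappa^{-1/2}}\): as in any commutator representation \(\pi_D\) is \(\ast\)-preserving, so \(\pi_D(\vartheta)=\pi_D(\vartheta)^\ast\), and with \(\pi_D(\vartheta)^2 = \Lambda_\kappa^2\) this gives \(\norm{\rest{\pi_D(\vartheta)}{H_j}}^2 = \norm{\rest{\pi_D(\vartheta)^2}{H_j}} = \kappa^{-2j}\) for every \(j\); hence \(\rest{\Lambda_{\kappa^{-1/2}}\pi_D(\vartheta)\Lambda_{\kappa^{-1/2}}}{H_j} = \kappa^j\rest{\pi_D(\vartheta)}{H_j}\) has norm at most \(1\) uniformly in \(j\), and since \(\pi_D(\vartheta)\) is already locally bounded and block-diagonal the conjugated operator lies in \(\bL(H) \cap \bL^{\U}_{\loc}(H) = \bL^{\U}(H)\).

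For uniqueness, assume \(\Zent(B) = \bC\) and let \((N,\nu)\) be any vertical twist; one may assume \(H \neq \{0\}\), so \(H^{\U} \neq \{0\}\). Faithfulness of \((H,\pi,D,\Gamma)\) and bijectivity of \(P \otimes_B H^{\U} \to H\) supply the hypotheses of Lemma \ref{lem:modularsym}, whence \(N = \bigoplus_j \rest{\pi(\mu(-j))}{H_j}\) for a right \(1\)-cocycle \(\mu : \bZ \to \Zent(B)^\times_+ = (0,\infty)\); since \(\hat{\Phi}_P\) fixes scalars, \(\mu\) is a group homomorphism, so \(\mu(j)=s^j\) and \(N = \Lambda_s\) for some \(s > 0\). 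Because each \(P_j = \cL(P)(j)\) is a Hermitian line \(B\)-bimodule, \(P_j \otimes_B H^{\U} \neq \{0\}\), so \(H_j \neq \{0\}\) for all \(j\) and hence \(\rest{\pi_D(\vartheta)}{H_j} \neq 0\); thus \(\norm{\rest{N\pi_D(\vartheta)N}{H_j}} = (s^2\kappa)^{-j}\) for every \(j\), and boundedness of \(N\pi_D(\vartheta)N\) forces \(s^2\kappa = 1\), i.e. \(N = \Lambda_{\kappa^{-1/2}}\). Finally \(N^{-1}\pi(\cdot)N = \pi\circ\rest{\nu}{P}\) together with injectivity of \(\pi\) and \(\Zent(B) = \bC\) forces \(\rest{\nu}{P_j} = (p \mapsto \kappa^{-j/2}p)\), so \(\nu\) has symbol \(j\mapsto\kappa^{-j/2}\) and \(\nu = \Lambda_{\kappa^{1/2}}\) by Lemma \ref{lem:modular}.

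Everything here is routine bookkeeping once one unpacks \(\pi_D(\vartheta)^2 = \Lambda_\kappa^2\); the only point requiring genuine care is the uniqueness step, where one must know \(H_j \neq \{0\}\) for every \(j\) to exclude every scaling but \(s = \kappa^{-1/2}\), which rests on the Morita-invertibility of the Hermitian line \(B\)-bimodules \(\cL(P)(j)\) together with bijectivity of \(P \otimes_B H^{\U} \to H\).
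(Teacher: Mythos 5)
Your proof is correct and follows essentially the same route as the paper: both reduce modular pairs to their symbols via Lemmata \ref{lem:modular} and \ref{lem:modularsym} and use \(\pi_D(\vartheta)^2 = \Lambda_\kappa^2\) to turn boundedness of \(N\,\pi_D(\vartheta)\,N\) into the condition \(\sup_{j}(\kappa^{1/2}s)^{-j} < +\infty\), which is satisfied by \(s = \kappa^{-1/2}\) and, when \(\Zent(B) = \bC\), by no other value. Your explicit check that \(H_j \neq \{0\}\) for every \(j\) (via invertibility of the Hermitian line \(B\)-bimodules \(\cL(P)(j)\) and bijectivity of \(P \otimes_B H^{\U} \to H\)) spells out a point the paper's uniqueness step leaves tacit, but it is not a different argument.
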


\begin{proof}
	Let \((N,\nu)\) be a modular pair for \((H,\pi,D,\Gamma)\) with symbol \(\lambda\).
	Since \(\pi_D(\theta)\) satisfies \(\pi_D(\theta)^2 = \Lambda_\kappa^4\), it follows that \((N \pi_D(\theta) N)^2 = \Lambda_\kappa^2 N^4\), so that \((N,\nu)\) is a vertical twist for \((H,\pi,D,\Gamma)\) if and only if
	\(
		\sup_{j \in \bZ}\kappa^{-j/2}\norm{\rest{\pi(\lambda(-j))}{H_j}} < +\infty
	\),
	which is certainly satisfied by \(\lambda \coloneqq (j \mapsto \kappa^{-j/2})\).
	Moreover, if \(\Zent(B) = \bC\), then \(\lambda = (j \mapsto t^j)\) for unique real \(t > 0\), so that \((N,\nu)\) is a vertical twist for \((H,\pi,D,\Gamma)\) if and only if \(\sup_{j\in\bZ} (\kappa^{1/2}t)^{-j} < +\infty\), if and only if \(t = \kappa^{-1/2}\).
\end{proof}

To characterize existence of horizontal twists, we shall need the following broad generalisation of a definition from the literature on spectral triples for crossed products due to Bellissard--Marcolli--Reihani \cite{BMR}.

\begin{definition}\label{def:equicontinuity}
	Suppose that \((H,\pi,D)\) is a faithful bounded commutator representation of \((B;\Omega_B,\du_B)\).
	Let \(\Gamma\) be a group, and let \(\hat{F} : \Gamma \to \dPic(B)\) be a homomorphism, so that the right \(\dpic(B)\)-action on \(\mathcal{Z}_{>0}(B)\) of \eqref{eq:confact} pulls back via \(\hat{\Phi} \circ \pi_0(\hat{F})\) to a right \(\Gamma\)-action.
	For each \(\gamma \in \bZ\), let \((F(\gamma),\sigma_\gamma,\nabla_\gamma) \coloneqq \hat{F}(\gamma)\), and equip \(F(\gamma) \otimes_B H\) with the inner product defined by \eqref{eq:tensorip}; hence, for each \(\beta \in \Omega^1_B\), define \(\rho_\gamma[\beta] : F(\gamma) \otimes_B H  \to F(\gamma) \otimes_B H\) by
	\begin{equation}\label{eq:equicontinuity}
		\forall x \in F(\gamma), \, \forall h \in H, \quad \rho_\gamma[\beta](x \otimes h) \coloneqq \leg{\sigma_\gamma(\beta \otimes x)}{0} \otimes \pi_D\mleft(\leg{\sigma_\gamma(\beta \otimes x)}{1}\mright)h,
	\end{equation}
	and let \(\norm{\rho_\gamma[\beta]}\) denote the resulting operator norm of \(\rho_\gamma[\beta]\), which we set to equal \(+\infty\) whenever \(\rho_\gamma[\beta]\) is not bounded.
	Given a \(1\)-cocycle \(\lambda : \Gamma \to \mathcal{Z}_{>0}(B)\), we say that \(\hat{F}\) is \emph{\(\lambda\)-metrically equicontinuous} with respect to \((H,\pi,D)\) whenever
	\begin{equation}
		\forall \beta \in \Omega^1_B, \quad \sup_{\gamma \in \Gamma} \norm*{\rho_\gamma\!\left[\lambda(\gamma^{-1})^2\beta\right]} < +\infty.
	\end{equation}
\end{definition}

\begin{example}\label{ex:heis10}
	Recall the homomorphism \(\hat{E} : \Gamma_\theta \to \dPic(C^\infty_\theta(\bT^2))\) of Example \ref{ex:heis4}; define \(\lambda : \Gamma_\theta \to \bR_{>0}\) by \(\lambda \coloneqq \left(g \mapsto (g_{21}\theta+g_{22})^{-1/2}\right)\).
	Then \(\hat{E}\) is \(\lambda\)-metrically equicontinuous with respect to every faithful bounded commutator representation of \((C^\infty_\theta(\bT^2),\Omega_\theta(\bT^2),\du)\).
	Indeed, let \((H,\pi,D)\) be such a bounded commutator representation.
	Recall that the left \(C^\infty_\theta(\bT^2)\)-module \(\Omega^1_\theta(\bT^2)\) is generated by \(\Set{e^1,e^2} \subset \Zent(\Omega_\theta(\bT^2))^1\).
	Given \(i=1,2\) and \(g \in \Gamma_\theta\), it follows by construction of \(\hat{E}\) that \(\rho_g[e^i] = \tfrac{1}{g_{21}\theta+g_{22}} \id \otimes \pi_D(e^i)\), so that
	\[
		\norm{\rho_g[\lambda(g^{-1})^{2}e^i]} = \norm{\id \otimes \pi_D(e^i)} \leq \norm{\id}\norm{\pi_D(e^i)} \leq \norm{\pi_D(e^i)}.
	\]
\end{example}

\begin{example}\label{ex:hopf12a}
	Recall from Example \ref{ex:hopf3} the homomorphisms of coherent \(2\)-groups \(\cE : \bZ \to \Pic(\cO_q(\CP^1))\) and \(\hat{\cE} : \bZ \to \dPic(\cO_q(\CP^1))\), and define the homomorphism \(\lambda  : \bZ \to \bR_{>0}\) by \(\lambda \coloneqq (k \mapsto q^{-j})\).
	Then \(\hat{\cE}\) is \(\lambda\)-metrically equicontinuous with respect to every faithful bounded commutator representation of \((\cO_q(\CP^1);\Omega_q(\CP^1),\du_q)\).
	Indeed, let \((H,\pi,D)\) be such a bounded commutator representation.
	Recall that \(\Omega_q(\CP^1) = \cE_{-2} \cdot e^+ \oplus \cE_2 \cdot e^-\).
	Choose a cobasis \((\eta^\mp_i)_{i=1}^{N_\mp}\) for \(\cE_{\mp 2}\), and define \(\tau_\pm : H \to \cE_{\pm 2} \otimes_{\cO_q(\CP^1)} H\) by \(\tau_\pm(h) \coloneqq \sum_{i=1}^{N_\mp} (\eta^\mp_i)^\ast \otimes \pi_D(\eta^\mp_i e^\pm)h\) for \(h \in H\);
	note that \(\tau_\pm\) is bounded and left \(B\)-linear since, for all \(h,k \in H\) and \(x \in \cE_{\pm 2}\),
	\[
		\ip{x \otimes k}{\tau_\pm(h)} = \ip*{k}{\pi_D\mleft(x^\ast \left(\sum\nolimits_{i=1}^{N_\mp} (\eta^\mp_i)^\ast \eta^\mp_i\right) e^\pm\mright) h} = \ip{k}{\pi_D(x^\ast e^\pm) h}.
	\]
	For \(i,j \in \bZ\), define \(V_{i,j} : \cE_i \otimes_{\cO_q(\CP^1)} (\cE_j \otimes_{\cO_q(\CP^1)} H) \to (\cE_i \otimes_{\cO_q(\CP^1)} \cE_j) \otimes_{\cO_q(\CP^1)} H\) by \(V_{i,j} \coloneqq( x \otimes (y \otimes h) \mapsto (x \otimes y) \otimes h)\) and, for each \(p \in \cE_i\), the bounded adjointable map \(\pi_{i,j}(p) : \cE_j \otimes_{\cO_q(\CP^1)} H \to \cE_{i+j} \otimes_{\cO_q(\CP^1)} H\) by \(\pi_{i,j}(p) \coloneqq (y \otimes h \mapsto p \cdot y \otimes h)\), which satisfies \(\norm{\lambda_{i,j}(p)} \leq \norm{p}\).
	At last, let \(j \in \bZ\) and \(p \in \cE_{\mp 2}\) be given.
	Since each \(x \in \cE_j\) satisfies \(p e^\pm x = q^{-j} \sum_{i=1}^{N_\mp} px(\eta^\mp)_i^\ast \eta^\mp_i e^\pm\), it follows that
	\[
		\rho_j[\lambda(-j)^2 p e^\pm] = \pi_{\mp 2,j\pm 2}(p) \circ (\cE^{(2)}_{j,\pm 2} \otimes \id) \circ V_{j,\pm 2} \circ (\id{} \otimes \tau_\pm),
	\]
	and hence
	\[
		\norm{\rho_j[\lambda(-j)^2 p e^\pm]} \leq \norm{\pi_{\mp 2,j\pm 2}(p)} \norm{\cE^{(2)}_{j,\pm 2} \otimes \id} \norm{V_{j,\pm 2}} \norm{\id{} \otimes \tau_\pm} \leq \norm{p} \norm{\tau_\pm}.
	\]
\end{example}

In light of Corollary \ref{cor:abstractpimsner}, one may ask whether our generalised notion of metric equicontinuity makes sense at the level of Hermitian line \(B\)-bimodules with connection.
The following proposition answer this question in the affirmative.

\begin{proposition}
	Suppose that \((H,\pi,D)\) is a faithful bounded commutator representation of \((B;\Omega_B,\du_B)\).	
	Let \(\Gamma\) be a group, let \(\hat{F}_1,\hat{F}_2 : \Gamma \to \dPic(B)\) be homomorphisms, and suppose that \(\hat{F}_1 \cong \hat{F}_2\) in \(\grp{Hom}(\Gamma,\dPic(B))\).
	Hence, let \(\lambda : \Gamma \to \mathcal{Z}_{>0}(B)\) be a right \(1\)-cocycle for the pullback of the \(\dpic(B)\)-action of \eqref{eq:confact} by \(\pi_0(\hat{F}_1) = \pi_0(\hat{F}_2)\).
	Then \(\hat{F}_1\) is \(\lambda\)-metrically equicontinuous if and only if \(\hat{F}_2\) is.
\end{proposition}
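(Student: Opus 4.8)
The plan is to extract the required invariance directly from a chosen $2$-isomorphism $\eta : \hat{F}_1 \Rightarrow \hat{F}_2$ in $\grp{Hom}(\Gamma,\dPic(B))$, which exists by hypothesis. By definition, $\eta$ consists of arrows $\eta_\gamma : F_1(\gamma) \to F_2(\gamma)$ in $\dPic(B)$, so each $\eta_\gamma$ is a $B$-bimodule isomorphism satisfying \eqref{eq:unitary} that intertwines the respective connections and hence, by the final clause of Theorem \ref{thm:bema}, intertwines the respective generalised braidings: $\sigma^{(2)}_\gamma \circ (\id_{\Omega_B}{} \otimes \eta_\gamma) = (\eta_\gamma \otimes \id_{\Omega_B}) \circ \sigma^{(1)}_\gamma$. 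First I would note that, since $\eta_\gamma$ preserves the $B$-valued inner products by \eqref{eq:unitary}, the map $\eta_\gamma \otimes \id_H : F_1(\gamma) \otimes_B H \to F_2(\gamma) \otimes_B H$ is a well-defined surjection preserving the inner product obtained from \eqref{eq:tensorip} \emph{mutatis mutandis} as in Lemma \ref{lem:commreplift}, and is therefore a unitary of pre-Hilbert spaces.

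Next I would verify that, for every $\gamma \in \Gamma$ and every $\beta \in \Omega^1_B$, the unitary $\eta_\gamma \otimes \id_H$ intertwines the operators $\rho^{(1)}_\gamma[\beta]$ and $\rho^{(2)}_\gamma[\beta]$ of \eqref{eq:equicontinuity}. Given $x \in F_1(\gamma)$ and $h \in H$, applying $\eta_\gamma \otimes \id_{\Omega_B}$ to $\sigma^{(1)}_\gamma(\beta \otimes x) \in F_1(\gamma) \otimes_B \Omega_B$ and using the braiding-intertwining identity yields $(\eta_\gamma \otimes \id_{\Omega_B})\mleft(\sigma^{(1)}_\gamma(\beta \otimes x)\mright) = \sigma^{(2)}_\gamma\mleft(\beta \otimes \eta_\gamma(x)\mright)$; unwinding the Sweedler notation in the definition of $\rho_\gamma[\beta]$---whose well-definedness on the balanced tensor product is exactly that of Lemma \ref{lem:commreplift}(3)---then gives $(\eta_\gamma \otimes \id_H) \circ \rho^{(1)}_\gamma[\beta] = \rho^{(2)}_\gamma[\beta] \circ (\eta_\gamma \otimes \id_H)$. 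Since $\eta_\gamma \otimes \id_H$ is unitary, this forces $\norm{\rho^{(1)}_\gamma[\beta]} = \norm{\rho^{(2)}_\gamma[\beta]}$, with the convention that both equal $+\infty$ simultaneously.

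Finally I would conclude. Because $\hat{F}_1 \cong \hat{F}_2$, we have $\pi_0(\hat{F}_1) = \pi_0(\hat{F}_2)$, so the two pullbacks of the $\dpic(B)$-action of \eqref{eq:confact} coincide and $\lambda$ is a right $1$-cocycle for both simultaneously; moreover $\lambda(\gamma^{-1})^2 \in \mathcal{Z}_{>0}(B) \subseteq \Zent(\Omega_B)^0$, so $\lambda(\gamma^{-1})^2\beta \in \Omega^1_B$ whenever $\beta \in \Omega^1_B$. Substituting $\beta$ by $\lambda(\gamma^{-1})^2\beta$ in the equality of operator norms from the previous step and taking the supremum over $\gamma$ shows that $\sup_{\gamma \in \Gamma} \norm{\rho^{(1)}_\gamma[\lambda(\gamma^{-1})^2\beta]} < \infty$ if and only if $\sup_{\gamma \in \Gamma} \norm{\rho^{(2)}_\gamma[\lambda(\gamma^{-1})^2\beta]} < \infty$, which is precisely the assertion that $\hat{F}_1$ is $\lambda$-metrically equicontinuous with respect to $(H,\pi,D)$ if and only if $\hat{F}_2$ is. The statement is thus essentially formal once the right unitary is identified; the only delicate point---and where I would concentrate the care of a full write-up---is the bookkeeping in the second paragraph, namely checking that $\eta_\gamma \otimes \id_H$ is genuinely well-defined on $F_1(\gamma) \otimes_B H$ and that it transports the merely right-$B$-linear assignment $\leg{\cdot}{0} \otimes \pi_D(\leg{\cdot}{1})$ faithfully, which is why one should invoke the well-definedness machinery of Lemma \ref{lem:commreplift} rather than manipulate Sweedler tensors naively.
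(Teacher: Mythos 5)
Your proposal is correct and follows essentially the same route as the paper's proof: choose a $2$-isomorphism $\eta$, observe that each $\eta_\gamma \otimes \id_H$ is a unitary intertwining $\rho^{(1)}_\gamma[\beta]$ and $\rho^{(2)}_\gamma[\beta]$ (using that arrows in $\dPic(B)$ intertwine the generalised braidings, per the final clause of Theorem \ref{thm:bema}), and conclude equality of the relevant operator norms. The extra care you take with well-definedness of $\eta_\gamma \otimes \id_H$ and the Sweedler bookkeeping is sound but not a departure from the paper's argument.
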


\begin{proof}
	Choose a natural isomorphism \(\eta : \hat{F}_1 \rightarrow \hat{F}_2\).
	Let \(\gamma \in \Gamma\).
	For \(i=1,2\), let \((F_i(\gamma),\sigma_{i;\gamma},\nabla_{i;\gamma}) \coloneqq \hat{F}_i(\gamma))\), and define \(\rho_{i;\gamma}[\beta] : F_i(\gamma) \otimes_B H \to F_i(\gamma) \otimes_B H\) for each \(\beta \in \Omega^1_B\) by \eqref{eq:equicontinuity}.
	Since \(\eta_\gamma : (F_1(\gamma),\sigma_{1;\gamma},\nabla_{1;\gamma}) \to (F_2(\gamma),\sigma_{2;\gamma},\nabla_{2;\gamma})\) is an isomorphism is \(\dPic(B)\), the map \(\eta_\gamma \otimes \id : F_1(\gamma) \otimes_B H \to F_2(\gamma) \otimes_B H\) is a unitary that satisfies \((\eta_\gamma \otimes \id) \circ \rho_{1;\gamma}[\beta] = \rho_{2;\gamma}[\beta] \circ (\eta_\gamma \otimes \id)\) for all \(\beta \in \Omega^1_B\).
\end{proof}

Hence, given a Hermitian line \(B\)-bimodule with connection \((E,\sigma_E,\nabla_E)\) and a group \(1\)-cocycle \(\lambda : \bZ \to \mathcal{Z}_{>0}(B)\) for the right \(\bZ\)-action generated by \(\hat{\Phi}_E^{-1}\), we define \((E,\sigma_E,\nabla_E)\) to be \emph{\(\lambda\)-metrically equicontinuous} whenever some (and hence every) homomorphism \(\hat{F} : \bZ \to \dPic(B)\) satisfying \(\hat{F}(1) \cong (E,\sigma_E,\nabla_E)\) is \(\lambda\)-metrically equicontinuous.
The following characterisation of metric equicontinuity in our general sense for crossed products by extended diffeomorphisms now shows that metric equicontinuity (in our sense) with respect to a trivial \(1\)-cocycle corresponds to the existing definition in the literature on crossed product spectral triples.

\begin{proposition}[cf.\ Bellissard--Marcolli--Reihani \cite{BMR}]\label{prop:equicont}
	Let \((H,\pi,D)\) be a bounded commutator representation of \((B;\Omega_B,\du_B)\).
	Let \((\omega,\phi) \in \tDiff(B)\), and suppose that \(\lambda : \bZ \to \mathcal{Z}_{>0}(B)\) is a right \(1\)-cocycle for the right \(\bZ\)-action generated by \(\phi^{-1}\).
	Then \(\hat{\tau}(\omega,\phi)\) is \(\lambda\)-metrically equicontinuous with respect to \((H,\pi,D)\) if and only if
	\begin{equation}
		\forall b \in B, \quad \sup_{k \in \bZ} \norm*{\pi(\lambda(k)^{-1}) \cdot [D,\pi(\phi^{-k}(b))] \cdot \pi(\lambda(k)^{-1})} < +\infty.
	\end{equation}
\end{proposition}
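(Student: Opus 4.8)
The plan is to unwind both sides of the claimed equivalence into comparable families of operator norms, and then to reconcile them using the $\ast$-exterior algebra structure of $(\Omega_B,\du_B)$. First I would use the preceding proposition to note that metric equicontinuity is invariant under $2$-isomorphism of homomorphisms, so that it may be tested on the homomorphism $\hat F := \hat\tau\circ\bigl(k\mapsto(\omega,\phi)^k\bigr)$, whose value at $k$ is $\hat F(k)=(B_{\phi^k},\sigma_{\phi^k},\nabla_{(\omega^{(k)},\phi^k)})$ with $(\omega^{(k)},\phi^k):=(\omega,\phi)^k$. The formulas defining $B_{\phi^k}$ (Example~\ref{ex:triviallines}) and the tensor inner product show that $1_{\phi^k}\otimes h\mapsto h$ extends to a well-defined unitary $V_k:B_{\phi^k}\otimes_B H\to H$; well-definedness uses the balancing $b_{\phi^k}\otimes h=1_{\phi^k}\otimes\pi(\phi^{-k}(b))h$, and isometry uses that $\phi$ is a $\ast$-automorphism. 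Under $V_k$ the operator $\rho_k[\beta]$ of Definition~\ref{def:equicontinuity} becomes exactly $\pi_D(\phi^{-k}(\beta))$, so $\|\rho_k[\lambda(-k)^2\beta]\|=\|\pi_D(\phi^{-k}(\lambda(-k)^2\beta))\|$.

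Next I would do the cocycle bookkeeping. Since $\lambda$ is a normalised right $1$-cocycle for the action generated by $\phi^{-1}$, one gets $\phi^{-k}(\lambda(-k))=\lambda(k)^{-1}$; as $\lambda(k)^{-1}\in\Zent(\Omega_B)^0\subseteq\Zent(B)$, its image $\pi(\lambda(k)^{-1})$ commutes with $\pi(B)$ and with $\pi_D(\Omega^1_B)$, so
\[
\|\rho_k[\lambda(-k)^2\beta]\|=\bigl\|\pi(\lambda(k)^{-1})\,\pi_D(\phi^{-k}(\beta))\,\pi(\lambda(k)^{-1})\bigr\|.
\]
Hence $\hat\tau(\omega,\phi)$ is $\lambda$-metrically equicontinuous with respect to $(H,\pi,D)$ iff $\sup_k\|\pi(\lambda(k)^{-1})\pi_D(\phi^{-k}(\beta))\pi(\lambda(k)^{-1})\|<\infty$ for every $\beta\in\Omega^1_B$; and since $[D,\pi(a)]=-\iu\,\pi_D(\du_B a)$ for $a\in B$, the displayed condition in the statement says exactly $\sup_k\|\pi(\lambda(k)^{-1})\pi_D(\du_B(\phi^{-k}(b)))\pi(\lambda(k)^{-1})\|<\infty$ for every $b\in B$. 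Writing $C_k:=\pi(\lambda(k)^{-1})$ and $T_k(\alpha):=C_k\pi_D(\alpha)C_k$ (a $\pi(B)$-bimodule map into $\bL(H)$, by centrality of $\lambda(k)^{-1}$), and using that $\Omega^1_B$ is generated as a left $B$-module by $\du_B(B)$ together with $\|T_k(b'\alpha)\|\le\|b'\|\,\|T_k(\alpha)\|$, metric equicontinuity reduces to $\sup_k\|T_k(\phi^{-k}(\du_B b))\|<\infty$ for all $b$, while the stated condition is $\sup_k\|T_k(\du_B(\phi^{-k}b))\|<\infty$ for all $b$.

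It then remains to reconcile these. The defining relation~\eqref{eq:extdiff} for $(\omega^{(k)},\phi^k)\in\tDiff(B)$ gives the commutation identity $\phi^{-k}(\du_B b)-\du_B(\phi^{-k}b)=\iu\,[\varpi_k,\phi^{-k}b]$, where $\varpi_k:=\phi^{-k}(\omega^{(k)})\in(\Omega^1_B)_{\mathrm{sa}}$, so the two families differ precisely by the correction terms $T_k([\varpi_k,\phi^{-k}b])=\iu\,[\,C_k\pi_D(\varpi_k)C_k,\ \pi(\phi^{-k}b)\,]$; granting that these are uniformly bounded in $k$, each implication follows by expanding a general $\beta=\sum_i b_i\du_B c_i$, applying left $\pi(B)$-linearity and the Leibniz rule for $\du_B$, and invoking the hypothesis for the finitely many $b_i,c_i,b_ic_i\in B$ occurring. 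The main obstacle is exactly this uniform boundedness of the correction terms. I expect the argument to exploit that $k\mapsto\varpi_k$ is a right $1$-cocycle ($\varpi_{k+\ell}=\varpi_k+\phi^{-k}(\varpi_\ell)$), so that $b\mapsto[\varpi_k,\phi^{-k}b]$ is a $\phi^{-k}$-twisted derivation (hence determined by its values on algebra generators), together with the fact that~\eqref{eq:extdiff} re-expresses $[\varpi_k,\phi^{-k}b]$ through a telescoping sum of $\phi^{-j}(\du_B(\cdot))$-terms that are damped by the hypothesis taken over all of $B$. Note that when $\phi$ commutes with $\du_B$ — equivalently $\omega\in\Zent(\Omega_B)^1$, which covers all the running examples and the crossed-product situation of Bellissard–Marcolli–Reihani~\cite{BMR} — the corrections vanish identically and the two conditions coincide on the nose, so all the difficulty is concentrated in the genuinely twisted case $\phi\circ\du_B\neq\du_B\circ\phi$.
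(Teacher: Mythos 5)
Your first two paragraphs reproduce the paper's entire proof: reduce, via the invariance of metric equicontinuity under \(2\)-isomorphism, to the homomorphism \(\hat{\tau}\circ(k\mapsto(\omega,\phi)^k)\); conjugate \(\rho_k\) by the unitary \(V_k : B_{\phi^k}\otimes_B H\to H\), \(b_{\phi^k}\otimes h\mapsto\pi(\phi^{-k}(b))h\), to get \(V_k\rho_k[\beta]V_k^\ast=\pi_D(\phi^{-k}(\beta))\); use the cocycle identity \(\phi^{-k}(\lambda(-k))=\lambda(k)^{-1}\) together with centrality of \(\lambda(k)^{-1}\); and conclude because \(\du_B(B)\) generates \(\Omega^1_B\) as a left \(B\)-module. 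The paper's proof stops exactly there: it identifies the conjugated operator directly with \(\iu\,\pi(\lambda(k)^{-1})[D,\pi(\phi^{-k}(b))]\pi(\lambda(k)^{-1})\), i.e.\ it passes from \(\phi^{-k}(\du_B b)\) to \(\du_B(\phi^{-k}(b))\) on the nose and performs no correction-term analysis; that identification is exact precisely when the corrections you isolate vanish, e.g.\ when \(\omega\in\Zent(\Omega_B)^1\), equivalently \(\phi\in\Aut(\Omega_B,\du_B)\), which is the situation of the running examples and of the crossed products of Bellissard--Marcolli--Reihani~\cite{BMR}.

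Your third paragraph is where the proposal stops being a proof, and the gap is real: you establish the equivalence only \emph{granting} uniform boundedness in \(k\) of the correction terms \([\,\pi(\lambda(k)^{-1})\pi_D(\varpi_k)\pi(\lambda(k)^{-1}),\,\pi(\phi^{-k}(b))\,]\), and the strategy you sketch does not deliver it. The cocycle expansion \(\varpi_k\) as a sum of \(k\) translates \(\phi^{-j}(\omega)\) only re-derives the correction: its summands enter as \(\pi(\lambda(k)^{-1})\pi_D(\phi^{-j}(\omega))\pi(\lambda(k)^{-1})\) with \(j<k\), damped by the \emph{wrong} weight \(\lambda(k)\) rather than \(\lambda(j)\), so neither the displayed condition of the proposition (which controls only the \(\du_B(\phi^{-k}(b))\)-terms at weight \(\lambda(k)\)) nor metric equicontinuity itself (which controls only the diagonal terms \(j=k\)) bounds them; crude estimation gives growth of order \(k\), and telescoping cannot remove the weight mismatch. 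So your reconciliation step is an unproven conjecture rather than a proof, and it is not the route the paper takes: if you insist on distinguishing the two conditions for a genuinely twisted \((\omega,\phi)\) with \(\phi\circ\du_B\neq\du_B\circ\phi\), you would need either an additional hypothesis damping \(\varpi_k\) (for instance centrality of \(\omega\), or a uniform bound on \(\pi(\lambda(k)^{-1})\pi_D(\varpi_k)\pi(\lambda(k)^{-1})\)) or a genuinely new idea — the difficulty you flag is concentrated exactly at the step the paper's one-paragraph argument treats as an identity.
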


\begin{proof}
	By Proposition \ref{prop:equicont}, it suffices to check that \(\hat{\tau} \circ (k \mapsto (\omega,\phi)^k)\) is \(\lambda\)-metrically equicontinuous.
	Let \(k \in \bZ\) be given.
	Define a unitary \(V_k : B_{\phi}^k \otimes_B H \to H\) by \(V \coloneqq (b_{\phi}^k \otimes h \mapsto \pi(\phi^{-k}(b))h)\).
	By construction of \(\hat{\tau}((\omega,\phi)^k) \coloneqq (B_\phi,\sigma_{\phi^k},\nabla_{(\omega,\phi)^k})\), it follows that \(V_k \rho_k[\beta] V_k^\ast = \pi_D\mleft(\phi^{-k}(\beta)\mright)\) for all \(\beta \in \Omega^1_B\), so that
	\begin{align*}
		V_k \rho_k\mleft[\hat{\Phi}_{[\hat{\tau}(\omega,\phi)]}(\lambda(-k)^2) \cdot \du_B(b)\mright] V_k^\ast 
		&= \pi_D\mleft(\lambda(k)^{-2} \du_B \phi^{-k}(b)\mright)\\
		&= \iu{} \pi(\lambda(k)^{-1}) [D,\pi(\phi^{-k}(b))] \pi(\lambda(k)^{-1}).
	\end{align*}
	for every \(b \in B\).
	Since \(\du_B(B)\) generates \(\Omega^1_B\) as a left \(B\)-module, this suffices.
\end{proof}

At last, we characterise horizontal twists among all modular pairs.

\begin{proposition}\label{prop:equicontinuous}
	Let \((H,\pi,D)\) be a faithful bounded commutator representation of \((B;\Omega_B,\du_B)\), and let \((\tilde{H},\tilde{\pi},\tilde{D},\tilde{\Gamma})\) be a lift of \((H,\pi,D)\) to \((P;\Omega_P,\du_P,\Pi)\).
	Let \((N,\nu)\) be a modular pair for \((\tilde{H},\tilde{\pi},\tilde{D},\tilde{\Gamma})\) with symbol \(\lambda\).
	Then \((N,\nu)\) defines a horizontal twist for \((\tilde{H},\tilde{\pi},\tilde{D},\tilde{\Gamma})\) if and only if \(\hat{\cL} \circ \Hor_\kappa(P;\Omega_P,\du_P;\Pi)\) is \(\lambda\)-metrically equicontinuous with respect to \((H,\pi,D)\).
\end{proposition}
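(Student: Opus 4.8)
The plan is to strip the statement down to a single boundedness condition, reduce to the canonical lift of Proposition~\ref{prop:liftcommrepconstruct} via Theorem~\ref{thm:liftequivalence}, and then match the represented basic $1$-forms against the operators $\rho_\gamma[\cdot]$ of Definition~\ref{def:equicontinuity}. First I would show that, for \emph{any} modular pair $(N,\nu)$ for $(\tilde{H},\tilde{\pi},\tilde{D},\tilde{\Gamma})$ with symbol $\lambda$, the non-analytic requirements in the definition of horizontal twist hold automatically. Indeed, $N$ restricts to $\rest{\tilde{\pi}(\lambda(-j))}{\tilde{H}_j}$ on each spectral subspace, and $\lambda(-j) \in \mathcal{Z}_{>0}(B)$ lies in the degree-zero part $\Zent(\Omega_B)^0$ of the graded centre; hence $N$ commutes with $\tilde{\Gamma}$ (which commutes with $\ran\tilde{\pi}$ and preserves degrees) and, by $P$-bimodule linearity of $\tilde{\pi}_D$ together with $\lambda(-j)\vartheta = \vartheta\lambda(-j)$ (the case $n = k = 0$ of \eqref{eq:conncent}), with $\tilde{\pi}_D(\vartheta)$, while $N^{-1}\tilde{\pi}(\cdot)N = \tilde{\pi}\circ\rest{\nu}{P}$ is part of being a modular pair. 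Thus ``$(N,\nu)$ is a horizontal twist'' is equivalent to the single condition $N\,\tilde{\pi}_D(\Omega^1_B)\,N \subseteq \bL^{\U}(\tilde{H})$.

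Next I would observe that this condition, and the symbol $\lambda$, are invariant under isomorphism in $\grp{PCRep}(P;\Pi)$. Given an arrow $(U,Z) \colon (\tilde{H},\tilde{\pi},\tilde{D},\tilde{\Gamma}) \to (\tilde{H}^\prime,\tilde{\pi}^\prime,\tilde{D}^\prime,\tilde{\Gamma}^\prime)$, the relative remainder $Z$ supercommutes with $\ran\tilde{\pi}$, hence commutes with it (as $\tilde{\pi}$ is even), so $[\tilde{D}^\prime,\tilde{\pi}^\prime(b)] = U[\tilde{D},\tilde{\pi}(b)]U^\ast$ for all $b \in B$ and therefore $\tilde{\pi}^\prime_D(\beta) = U\tilde{\pi}_D(\beta)U^\ast$ for every $\beta \in \Omega^1_B$; meanwhile $\U$-equivariance of $U$ makes $(UNU^\ast,\nu)$ a modular pair for $(\tilde{H}^\prime,\tilde{\pi}^\prime,\tilde{D}^\prime,\tilde{\Gamma}^\prime)$ with the same symbol $\lambda$, and $(UNU^\ast)\,\tilde{\pi}^\prime_D(\beta)\,(UNU^\ast) = U\,N\tilde{\pi}_D(\beta)N\,U^\ast$. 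By Theorem~\ref{thm:liftequivalence} I may therefore replace $(\tilde{H},\tilde{\pi},\tilde{D},\tilde{\Gamma})$ by the canonical lift $(\iota_P)_!(H,\pi,D)$ of Proposition~\ref{prop:liftcommrepconstruct}.

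For the canonical lift I would invoke the explicit formula for $\tilde{\pi}_D$ obtained in the proof of Proposition~\ref{prop:liftcommrepconstruct}: for $\beta \in \Omega^1_B$, the restriction of $\tilde{\pi}_D(\beta)$ to degree $j$ is, up to the unitaries $\tau$ and $\id \otimes \sigma^3$, the operator $\rho_{\cL(P)(j)}[\beta]$ of Lemma~\ref{lem:commreplift} built from the generalised braiding $\sigma_{P;j}$ of Proposition~\ref{prop:assoclineconn}; this is precisely $\rho_j[\beta]$ in the sense of Definition~\ref{def:equicontinuity} applied to $\hat{F} \coloneqq \hat{\cL} \circ \Hor_\kappa(P;\Omega_P,\du_P;\Pi)$, since $\hat{F}(j) = (\cL(P)(j),\sigma_{P;j},\nabla_{P;j})$. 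Using left and right $B$-bimodule linearity of $\tilde{\pi}_D$ together with centrality and degree zero of $\lambda(-j)$, I compute $\rest{N\tilde{\pi}_D(\beta)N}{\tilde{H}_j} = \rest{\tilde{\pi}_D(\lambda(-j)^2\beta)}{\tilde{H}_j}$, whence $\norm{\rest{N\tilde{\pi}_D(\beta)N}{\tilde{H}_j}} = \norm{\rho_j[\lambda(-j)^2\beta]}$. Since $N\tilde{\pi}_D(\beta)N$ is locally bounded and degree-preserving, it lies in $\bL^{\U}(\tilde{H})$ if and only if $\sup_{j \in \bZ}\norm{\rho_j[\lambda(-j)^2\beta]} < +\infty$; requiring this for every $\beta \in \Omega^1_B$, and recalling that $\gamma^{-1} = -\gamma$ in $\bZ$, is exactly $\lambda$-metric equicontinuity of $\hat{F}$ with respect to $(H,\pi,D)$. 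The main obstacle I anticipate is the bookkeeping in this last step — precisely identifying the braiding used in Lemma~\ref{lem:commreplift} with the braiding $\sigma_{P;j}$ of the horizontal calculus entering Definition~\ref{def:equicontinuity}, and tracking the several unitary identifications (namely $\tau$, $\id \otimes \sigma^3$, and the balanced-tensor manipulation turning $\tilde{\pi}(\lambda(-j))$ into right multiplication by $\lambda(-j)$) — but once the reduction to the canonical lift is in place, none of it is deep.
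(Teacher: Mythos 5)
Your proposal is correct and follows essentially the same route as the paper: reduce to the canonical lift \((\iota_P)_!(H,\pi,D)\) via Theorem~\ref{thm:liftequivalence}, use the explicit formula from the proof of Proposition~\ref{prop:liftcommrepconstruct} identifying \(\rest{\tilde{\pi}_{\tilde{D}}(\beta)}{\tilde{H}_j}\) with \(\rho_j[\beta]\) up to unitaries, and observe that conjugating by \(N\) replaces \(\beta\) by \(\lambda(-j)^2\beta\) degreewise, so that boundedness is exactly \(\lambda\)-metric equicontinuity. The only difference is that you explicitly verify points the paper leaves implicit (the automatic commutation of \(N\) with \(\tilde{\Gamma}\) and \(\tilde{\pi}_D(\vartheta)\), and the invariance of all relevant data under isomorphism in \(\grp{PCRep}(P;\Pi)\) behind the ``without loss of generality'' reduction), and your degreewise computation via \(P\)-bimodule linearity of \(\tilde{\pi}_D\) lands directly on the same final identity \(\norm{\rest{N\tilde{\pi}_{\tilde{D}}(\beta)N}{\tilde{H}_j}} = \norm*{\rho_j[\lambda(-j)^2\beta]}\) as the paper.
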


\begin{proof}
	By Theorem \ref{thm:liftequivalence}, we may assume that \((\tilde{H},\tilde{\pi},\tilde{D},\tilde{\Gamma}) = (\iota_P)_!(H,\pi,D)\) without any loss of generality.
	Hence, we reprise the notation of the proof of Proposition \ref{prop:liftcommrepconstruct}.
	In particular, \(N \tilde{\pi}(\cdot) N^{-1} = \tilde{\pi} \circ \nu\) since
	\(
		\tau \circ N \circ \tau^\ast = \id \otimes \nu \otimes \id
	\).

	Let \(\beta \in \Omega^1_B\) be given.
	Fix \(j \in \bZ\).
	Then 
	\(
		\rest{\tilde{\pi}_{\tilde{D}}(\beta)}{\tilde{H}_j} = \tau^\ast \circ (\sigma^3 \otimes \rho_j[\beta]) \circ \rest{\tau}{\tilde{H}_j}
	\)
	by the proof of Proposition \ref{prop:liftcommrepconstruct}.
	Hence, for every \(x \in \bC^2\), \(p \in P_j\), and \(h \in H\), 
	\begin{align*}
		&\tau N \tilde{\pi}_{\tilde{D}}(\beta) N \tau^\ast(x \otimes p \otimes h) \\
		&\quad\quad= (\id \otimes \nu \otimes \id)\mleft(\sigma^3 x \otimes \leg{\sigma_j(\beta \otimes p)}{0} \otimes \pi_D\mleft(\leg{\sigma_j(\beta \otimes p)}{1} \lambda(j)\mright)h\mright)\\
		&\quad\quad= \sigma^3 \otimes \leg{\sigma_j(\beta \otimes p)}{0} \lambda(j) \otimes \pi_D\mleft(\leg{\sigma_j(\beta \otimes p)}{1} \lambda(j)\mright)h\\
		&\quad\quad= \sigma^3 \otimes \rho_j\mleft[\lambda(-j)^2\beta\mright](x \otimes p \otimes h),
	\end{align*}
	which implies that
	\(
		\norm{\rest{N \tilde{\pi}_{\tilde{D}}(\beta) N}{\tilde{H}_j}} = \norm*{\sigma^3 \otimes \rho_j\mleft[\lambda(-j)^2\beta\mright]} = \norm*{\rho_j\mleft[\lambda(-j)^2\beta\mright]}
	\)
	by unitarity of \(\sigma^3 \in M_2(\bC)\).
	Thus, \(N \tilde{\pi}_{\tilde{D}}(\beta) N \in \bL_{\loc}^{\U}(H)\) is bounded if and only if \(\Set*{\norm*{\rho_j\mleft[\lambda(-j)^2\beta\mright]} \given j \in \bZ}\) is bounded from above.
\end{proof}

\begin{example}\label{ex:heis13}
	Continuing from Examples \ref{ex:heis6} and \ref{ex:heis10}, let \((H,\pi,D)\) be a faithful bounded commutator representation of \((C^\infty_\theta(\bT^2),\Omega_\theta(\bT^2),\du)\), and let \((\tilde{H},\tilde{\pi},\tilde{D},\tilde{\Gamma})\) be a lift of \((H,\pi,D)\) to the real multiplication instanton \((P_\theta,\Omega_{P_\theta},\du_{P_\theta},\Pi_{P_\theta})\).
	On the one hand, by Proposition \ref{prop:verticaltwist}, \((\Lambda_{\epsilon_\theta^{-1}},\Lambda_{\epsilon_\theta})\) is the unique vertical twist of \((\tilde{H},\tilde{\pi},\tilde{D},\tilde{\Gamma})\).
	On the other hand, by Example \ref{ex:heis10}, the homomorphism \(\hat{\cL} \circ \Hor_{\epsilon_\theta^2}(P_\theta,\Omega_{P_\theta},\du_{P_\theta},\Pi_{P_\theta})\) is \((m \mapsto \epsilon_\theta^{-m/2})\)-equicontinuous with respect to \((H,\pi,D)\), so that \((\Lambda_{\epsilon_\theta^{-1/2}},\Lambda_{\epsilon_\theta^{1/2}})\) is the unique horizontal twist of \((\tilde{H},\tilde{\pi},\tilde{D},\tilde{\Gamma})\) by Proposition \ref{prop:equicontinuous} together with Lemma \ref{lem:tcentral} applied to \((\eta,t) = (e^1,\epsilon_\theta)\).
	Note that \((\Lambda_{\epsilon_\theta^{-1}},\Lambda_{\epsilon_\theta})\) and \((\Lambda_{\epsilon_\theta^{-1/2}},\Lambda_{\epsilon_\theta^{1/2}})\) are non-trivial and distinct since \(\epsilon_\theta \neq 1\).
\end{example}

\begin{example}\label{ex:hopf12}
	Continuing from Example \ref{ex:hopf10}, let \((H,\pi,D)\) be a faithful bounded commutator representation of \((\cO_q(\CP^1),\Omega_q(\CP^1),\du_q)\), and let \((\tilde{H},\tilde{\pi},\tilde{D},\tilde{\Gamma})\) be a lift of \((H,\pi,D)\) to the \(q\)-monopole \((\cO_q(\SU(2));\Omega_q(\SU(2)),\du_q;\Pi_q)\).
	On the one hand, by \ref{prop:verticaltwist}, the modular pair \((\Lambda_{q^{-1}},\Lambda_q)\) is the unique vertical twist of \((\tilde{H},\tilde{\pi},\tilde{D},\tilde{\Gamma})\).
	On the other hand, by Example \ref{ex:hopf12a}, the homomorphism \(\hat{\cE} : \bZ \to \dPic(\cO_q(\CP^1))\) of Example \ref{ex:hopf3} is \((m \mapsto q^{-m/2})\)-equicontinuous with respect to \((H,\pi,D)\), so that \((\Lambda_{q^{-1/2}},\Lambda_{q^{1/2}})\) is the unique horizontal twist of \((\tilde{H},\tilde{\pi},\tilde{D},\tilde{\Gamma})\) by Proposition \ref{prop:equicontinuous} together with Lemma \ref{lem:tcentral} applied to \((\eta,t) = (e^\pm,q)\).	
	Note that \((\Lambda_{q^{-1}},\Lambda_q)\) and \((\Lambda_{q^{-1/2}},\Lambda_{q^{1/2}})\) are non-trivial and distinct since \(q \neq 1\).
\end{example}

We now show that total Hodge--de Rham commutator representations admit canonical horizontal twists under a mild hypothesis on \(B\). 

\begin{theorem}\label{thm:derhamtwist}
	Suppose that \((\Delta_\ver,\Delta_\hor,\star,\tau)\) is a total Riemannian geometry on \((P;\Omega_P,\du_P;\Pi)\).
	Let \(\mu_P\) be the symbol of \(\Delta_\hor\), and suppose that \(\mu_P(1)\) has a square root in \(\mathcal{Z}_{>0}(B)\); hence, let \(\mu_P^{1/2} : \bZ \to \mathcal{Z}_{>0}(B)\) be the unique right \(1\)-cocycle for the right \(\bZ\)-action of \eqref{eq:horizontalact} that satisfies \(\mu_P^{1/2}(\cdot)^2 = \mu_P\), and let \((N_\hor,\nu_\hor)\) be the modular pair with symbol \(\mu_P^{1/2}\) for total Hodge--de Rham commutator representation \((\Omega_P,\pi_P,\du_P+\du_P^\ast,2\Pi-\id)\) induced by \((\Delta_\ver,\Delta_\hor,\star,\tau)\).
	Then \((N_\hor,\nu_\hor)\) defines a horizontal twist for \((\Omega_P,\pi_P,\du_P+\du_P^\ast,2\Pi-\id)\).
\end{theorem}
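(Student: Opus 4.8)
The plan is to reduce the statement to a metric-equicontinuity assertion and then invoke Proposition~\ref{prop:equicontinuous}. Since $(\Delta_\ver,\Delta_\hor,\star,\tau)$ is a total Riemannian geometry on $(P;\Omega_P,\du_P;\Pi)$, Theorem~\ref{thm:totalgeometry}(1) furnishes its unique restriction $(\star_B,\tau_B)$ to $(B;\Omega_B,\du_B)$, shows that $\Delta_\ver=\Lambda_\kappa$, and identifies the symbol $\mu_P$ of $\Delta_\hor$ with the conformal factor of the horizontally differentiable quantum principal $\U$-bundle $\Hor_\kappa(P;\Omega_P,\du_P;\Pi)=(P,\Omega_{P,\hor},\du_{P,\hor})$ as a $\star_B$-conformal object; in particular $\mu_P^{1/2}$ is well defined under our hypothesis, and $(N_\hor,\nu_\hor)$ is a legitimate modular pair by Lemmata~\ref{lem:modular} and~\ref{lem:modularsym}. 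By the discussion in \S\ref{sec:4.3} (the example immediately following Proposition~\ref{prop:restfunct}), the total Hodge--de Rham commutator representation $(\Omega_P,\pi_P,\du_P+\du_P^\ast,2\Pi-\id)$ induced by $(\Delta_\ver,\Delta_\hor,\star,\tau)$ is a lift of the faithful Hodge--de Rham commutator representation $(\Omega_B,\pi_B,\du_B+\du_B^\ast)$ induced by $(\star_B,\tau_B)$. Hence, by Proposition~\ref{prop:equicontinuous}, it suffices to prove that the homomorphism $\hat F\coloneqq\hat{\cL}\circ\Hor_\kappa(P;\Omega_P,\du_P;\Pi)\colon\bZ\to\dPic(B)$---which is $\star_B$-conformal with conformal factor $\mu_P$ by Proposition~\ref{prop:conformal}---is $\mu_P^{1/2}$-metrically equicontinuous with respect to $(\Omega_B,\pi_B,\du_B+\du_B^\ast)$.

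To this end, fix $\beta\in\Omega^1_B$ and write $c_B\coloneqq\pi_{\du_B+\du_B^\ast}$; by Definition~\ref{def:equicontinuity} and $\mu_P^{1/2}(-m)^2=\mu_P(-m)$ it must be shown that $\sup_{m\in\bZ}\norm{\rho_m[\mu_P(-m)\beta]}<\infty$, where $\rho_m$ is built from $c_B$ on $F(m)\otimes_B\Omega_B$ with $F(m)\coloneqq\hat F(m)$. The first step is a purely algebraic identity: using the Fr\"{o}hlich relation for $F(m)$ and the right $1$-cocycle property of $\mu_P$ one finds $\sigma_{F(m)}(\mu_P(-m)\beta\otimes p)=\leg{\hat{\ell}_P(\beta p)}{0}\otimes\mu_P(m)^{-1}\leg{\hat{\ell}_P(\beta p)}{1}$ for $p\in P_m$, so that $\rho_m[\mu_P(-m)\beta]$ is obtained from $\rho_m[\beta]$ by inserting left multiplication by the element $\mu_P(m)^{-1}$, which is central in $\Omega_B$ and therefore commutes through $c_B$, since the Hodge operator $\star_B$ is a $B$-bimodule morphism (Proposition~\ref{prop:hodgederham}, Lemma~\ref{lem:basicadjoint}). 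The second, essential step is to show that this weight $\mu_P(m)^{-1}$ exactly cancels the conformal weight incurred when a basic $1$-form---and, in the annihilation term, the Hodge star---is commuted through the winding-$m$ spectral subspace: here the conformality identity~\eqref{eq:conformalhor}, with its degree-dependent exponent, shows that $\leg{\hat{\ell}_P(\beta p)}{1}=\mu_P(m)\cdot\widetilde{\beta}_{m}$ for $1$-forms $\widetilde{\beta}_m$ whose associated braided operators have operator norm bounded uniformly in $m$. Substituting, $\rho_m[\mu_P(-m)\beta]$ becomes an $m$-independent expression that one bounds uniformly by choosing fixed cobases for the spectral subspaces and estimating exactly as in Examples~\ref{ex:hopf12a} and~\ref{ex:heis10}; this yields $\mu_P^{1/2}$-metric equicontinuity, and the theorem follows from Proposition~\ref{prop:equicontinuous}.

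The main obstacle is the second step for the annihilation part of $c_B$: unlike the creation term (plain left multiplication), it involves two applications of $\star_B$, so verifying that the net conformal weight produced by transporting $\star_B^{-1}\bigl(\beta\cdot\star_B(-)\bigr)$ through the braiding is exactly $\mu_P(m)$---rather than merely comparable to it---requires the precise exponent $N-2k$ in~\eqref{eq:conformalhor} and a case analysis in form degree. A secondary subtlety is that $\mu_P(m)$ is central only in $\Omega_B$, not in $\Omega_P$ or in $P$, so every commutation must be carried out inside $\Omega_B$ and transported via $\hat{\ell}_P$ and $\hat\iota_P$, with the residual twists tracked through the Fr\"{o}hlich automorphism $\hat\Phi_P$ (equivalently, the defining relation~\eqref{eq:liftedmod} of $\Delta_\hor$). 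One may also bypass $\hat F$ and argue directly: $N_\hor$ is left multiplication on $(\Omega_P)_j$ by $\mu_P^{1/2}(-j)\in B$, which commutes with $2\Pi-\id$ and with $\pi_{\du_P+\du_P^\ast}(\vartheta)$ and, by left $P$-linearity of $\star$, with $\pi_{\du_P+\du_P^\ast}(\beta)$, whence $N_\hor\,\pi_{\du_P+\du_P^\ast}(\beta)\,N_\hor=\bigoplus_{j}\pi_P(\mu_P(-j))\circ\pi_{\du_P+\du_P^\ast}(\beta)|_{(\Omega_P)_j}$; boundedness of this operator is equivalent to the same estimate.
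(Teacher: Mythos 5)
Your reductions are fine: Theorem \ref{thm:totalgeometry} does identify the symbol of \(\Delta_\hor\) with the conformal factor \(\mu_P\) of \(\Hor_\kappa(P;\Omega_P,\du_P;\Pi)\), the total Hodge--de Rham representation is a lift of the base Hodge--de Rham representation of \((\star_B,\tau_B)\), and Proposition \ref{prop:equicontinuous} (equivalently your direct ``bypass'' identity \(N_\hor\,\pi_D(\beta)\,N_\hor=\bigoplus_{j}\pi_P(\mu_P(-j))\circ\rest{\pi_D(\beta)}{(\Omega_P)_j}\)) converts the theorem into the single estimate \(\sup_{m}\norm{\rho_m[\mu_P(-m)\beta]}<+\infty\) for each \(\beta\in\Omega^1_B\). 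That is exactly where the paper's own proof also lands, although it argues directly on \(\Omega_P\): by adjointness it suffices to control \(N_\hor\,\sce(\Omega^1_B)\,N_\hor\), and the restriction of \(N_\hor\sce(\beta)N_\hor\) to \((\Omega^{r,s}_P)_j\) is factored through unitary associators, the \(j\)-independent maps \(\beta\otimes{-}\) and a multiplication map \(m_{r,s}\), and the single \(j\)-dependent map \(\tau_j\coloneqq\sigma_{P;j}(\cdot)\,\mu_P(j)^{-1}\colon\Omega^1_B\otimes_B P_j\to P_j\otimes_B\Omega^1_B\).

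The gap is your ``second, essential step''. You posit a pointwise factorization \(\leg{\hat{\ell}_P(\beta p)}{1}=\mu_P(m)\cdot\widetilde{\beta}_m\) with the braided operators attached to \(\widetilde{\beta}_m\) bounded uniformly in \(m\), and then propose to conclude ``exactly as in Examples \ref{ex:hopf12a} and \ref{ex:heis10}''. Neither point is justified: Equation \eqref{eq:conformalhor} only describes how the braiding interacts with \(\star_B\) (with a degree-dependent weight); for \(\beta\notin\Zent(\Omega_B)\) there is no scalar-type formula for \(\sigma_{P;m}(\beta\otimes p)\), so no such pointwise factorization is available, and the two cited examples rely on special structure (central generators \(e^1,e^2\), respectively explicit cobases and commutation relations for \(\cE_{\pm2}\)) with no analogue in the general setting. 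What is actually true---and is the whole content of the paper's proof---is a statement at the level of \(B\)-valued inner products: \(\tau_j\) is isometric, \(\norm{\tau_j}=1\), proved by computing, for \(\eta=\sum_i\alpha_i\otimes p_i\) and \(\tilde{\eta}=\sum_i\alpha_i p_i\), that \(\star\mleft(\hp{\eta}{\eta}\mright)=\tilde{\eta}^\ast\star(\tilde{\eta})\,\mu_P(j)^{-2}\) while \(\star\mleft(\hp{\sigma_{P;j}(\eta)}{\sigma_{P;j}(\eta)}\mright)=\tilde{\eta}^\ast\star(\tilde{\eta})\), using \eqref{eq:totalhodgeright} for the total Hodge operator, whence \(\hp{\tau_j(\eta)}{\tau_j(\eta)}=\hp{\eta}{\eta}\) and the conformal weight cancels exactly. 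You explicitly flag this cancellation as ``the main obstacle'' but never carry it out, and your bypass, as you yourself note, merely restates the estimate; so the uniform bound---and with it the theorem---is not established by the proposal.
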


\begin{proof}
	We use the notation of the proof of Proposition \ref{prop:totalderham}.
	Hence, it suffices to show that \(N_\hor\) satisfies \(N_\hor \cdot \sce(\Omega^1_B) \cdot N_\hor \subseteq \bL^{\U}(\Omega_P)\).
	
	Let \(\beta \in \Omega^1_B\).
	Note that \(\Omega_P = \bigoplus_{j=-\infty}^\infty \bigoplus_{r=0}^1 \bigoplus_{s=0}^N (\Omega^{r,s}_P)_j\) is an orthogonal direct sum of pre-Hilbert spaces.
	Fix \((r,s,j) \in \Set{0,1} \times \Set{0,\dotsc,N} \times \bZ\), so that \(\sce(\beta)\) maps \((\Omega^{r,s}_P)_j\) to \((\Omega^{r,s+1}_P)_j\), and let
	\(
		T^{r,s}_j \coloneqq \rest{N_\hor \sce(\beta) N_\hor}{(\Omega^{r,s}_P)_j} = \rest{\sce(\hat{\Phi}_P^j(\mu_P(j)^{-1})\beta)}{(\Omega^{r,s}_P)_j}
	\).
	It therefore suffices to bound the operator norm of \(T^{r,s}_j\) uniformly in \(j \in \bZ\).
	
	Let \(E \coloneqq (\Omega^{r,s}_P)_j\), \(F \coloneqq (\Omega^{r,s+1}_P)_j\), \(V \coloneqq (\Omega^{r,s}_P)^{\U}\), and \(W \coloneqq (\Omega^{r,s+1}_P)^{\U}\), which we view as orthogonal direct summands of the pre-Hilbert space \(\Omega_P\); note that each of these pre-Hilbert spaces also defines a \(B\)-self-correspondence of finite type by the proof of Proposition \ref{prop:totalprehilbert}, where each pre-Hilbert space norm is bounded from above by the corresponding norm as a \(B\)-self-correspondence of finite type.
	Write \(\hat{L} \circ \Hor_\kappa(P;\Omega_P,\du_P;\Pi) \eqqcolon (P_j,\sigma_{P;j},\nabla_{P;j})\), where we conflate the Hermitian line \(B\)-bimodule \(\cL(P)(j)\) with \(P_j\); recall that \(\Omega^1_B\) defines a \(B\)-self-correspondence of finite type by Proposition \ref{prop:basicseparable}, so that \(\Omega^1_B \otimes_B P_j\) and \(P_j \otimes_B \Omega^1_B\) both define \(B\)-self-correspondences of finite type.
	Hence, we can view each of \(\Omega^1_B \otimes_B E\), \((\Omega^1_B \otimes_B P_j) \otimes_B V\), \((P_j \otimes_B \Omega^1_B) \otimes_B V\), \(P_j \otimes_B (\Omega^1_B \otimes V)\) and \(P_j \otimes_B W\)
	as pre-Hilbert spaces with respect to the inner product defined, \emph{mutatis mutandis}, by \eqref{eq:tensorip}.
	Finally, define \(\tau_j : \Omega^1_B \otimes_B P_j \to P_j \otimes_B \Omega^1_B\) by \(\tau_j(\eta) \mapsto \sigma_{P;j}\mleft(\mu_P(-j))\eta\mright) = \sigma_{P;j}(\eta)\mu_P(j)^{-1}\).
	It now follows that \(T^{r,s}_j : E \to F\) factorizes as the composition
	\begin{multline*}
		E \xrightarrow{\beta \otimes {-}} \Omega^1_B \otimes_B E \xrightarrow{\cong} (\Omega^1_B \otimes_B P_j) \otimes_B V \xrightarrow{\tau_j \otimes \id} (P_j \otimes_B \Omega^1_B) \otimes_B V\\ \xrightarrow{\cong} P_j \otimes_B (\Omega^1_B \otimes V) \xrightarrow{\id \otimes m_{r,s}} P_j \otimes_B W \xrightarrow{\cong} F,
	\end{multline*}
	where the first two arrows denoted by \(\cong\) are the usual (inverse) associators, which are unitary \cite[\S 8.2.12]{BL}, where \(P_j \otimes_B W \xrightarrow{\cong} F\) is given by multiplication in \(\Omega_P\) and hence is unitary by the proof of Proposition \ref{prop:totalprehilbert}, and where \(m_{r,s} : \Omega^1_B \otimes V \to W\) is given by multiplication in \(\Omega_P\).
	
	Let us now look at the non-trivial arrows in this composition.
	First, an explicit calculation shows that \(\beta \otimes {-} \coloneqq (\xi \mapsto \beta \otimes \xi)\) is bounded with operator norm \(\norm{\beta \otimes {-}} \leq \norm{\beta}\), where \(\norm{\beta} = \norm{g_B(\beta,\beta)}^{1/2}\) is the norm of \(\beta\) as an element of the \(B\)-self-correspondence of finite type \(\Omega_B\).
	Next, since \(\tau_j\) is right \(B\)-linear map between pre-Hilbert \(B\)-modules of finite type, it is necessarily bounded and adjointable, so that \(\tau_j \otimes \id\) is bounded as a map between pre-Hilbert spaces with operator norm \(\norm{\tau_j \otimes \id} \leq \norm{\tau_j} \norm{\id} = \norm{\tau_j}\) by standard results \cite[\S 8.2.12]{BL}.
	Finally, since \(m_{r,s} : \Omega^1_B \otimes_B V \to W\) is a right \(B\)-linear map of pre-Hilbert \(B\)-modules of finite type, it is bounded and adjointable, and hence bounded as a map of pre-Hilbert spaces with operator norm \(\norm{m_{r,s}}\), so that \(\id \otimes m_{r,s}\) is also bounded as a map of pre-Hilbert spaces with operator norm \(\norm{\id \otimes m_{r,s}} \leq \norm{\id}\norm{m_{r,s}} = \norm{m_{r,s}}\).
	Thus, the operator norm of \(T^{r,s}_j\) is bounded from above by \(\norm{\beta}\norm{\tau_j}\norm{m_{r,s}}\), so that, at last, it suffices to show that \(\norm{\tau_j} = 1\).
	
	Finally, let \(\eta \in \Omega^1_B \otimes_B P_j\) be given, so that \(\eta = \sum_{i=1}^n \alpha_i \otimes p_i\) for \(\alpha_1,\dotsc,\alpha_n \in \Omega^1_B\) and \(p_1,\dotsc,p_n \in P_j\); hence, let \(\tilde{\eta} \coloneqq \sum_{i=1}^n \alpha_i \cdot p_i \in (\Omega^{0,1}_P)_j\), so that
	\begin{multline*}
		{\star}(\tilde{\eta}) = \sum\nolimits_i \star(\alpha_i p_i) = -\sum\nolimits_i \vartheta\star_B(\alpha_i) p_i \mu_P(j)^{2-N} \kappa^j\\ = (-1)^N \sum\nolimits_i \star_B(\alpha_i) p_i \mu_P(j)^{-N} \vartheta \mu_P(j)^2
	\end{multline*}
	by \eqref{eq:totalhodgeright}. 
	On the one hand,
	\begin{multline*}
		{\star_P}\mleft(\hp{\eta}{\eta}\mright) = \sum\nolimits_{i,j} p_i^\ast g_B(\alpha_i,\alpha_j) p_j \vartheta\star_B(1) = (-1)^N\sum\nolimits_{i,j} p_i^\ast \alpha_i \star_B(1)p_j\mu_P(j)^{-N}\vartheta\\ = \tilde{\eta}^\ast \star_P(\tilde{\eta}) \mu_P(j)^{-2},
	\end{multline*}
	while on the other,
	\begin{multline*}
		{\star_P}\mleft(\hp{\sigma_{P;j}(\eta)}{\sigma_{P;j}(\eta)}\mright) = \sum\nolimits_{i,j} g_B(\alpha_i,p_i^\ast p_j \alpha_j) \vartheta \star_B(1)\\ = (-1)^N \sum\nolimits_{i,j} \alpha_i^\ast p_i^\ast p_j \star_B(\alpha_j) \vartheta = \tilde{\eta}^\ast \star_P(\tilde{\eta}),
	\end{multline*}
	so that 
	\begin{multline*}
		\hp{\tau_j(\eta)}{\tau_j(\eta)} = (\mu_P(j)^{-1})^\ast \hp{\sigma_{P;j}(\eta)}{\sigma_{P;j}(\eta)} \mu_P(j)^{-1}\\ = \hp{\sigma_{P;j}(\eta)}{\sigma_{P;j}(\eta)} \mu_P(j)^{-2} = \hp{\eta}{\eta}. \qedhere
	\end{multline*}
\end{proof}

We conclude with a first step towards relating our constructions to Rieffel's compact quantum metric spaces~\cite{Rieffelmemoir}.
We show that a faithful projectable commutator representation of \((P;\Omega_P,\du_P;\Pi)\) equipped with vertical and horizontal twists yields a Lipschitz seminorm~\cite[Def.\ 2.1]{KK} on the \(C^\ast\)-algebra completion of \(P\) that satisfies a twisted Leibniz inequality~\cite[Lemma 4.8]{KK}.
This, in turn, will recover, up to equivalence of seminorm, Kaad--Kyed's compact quantum metric space on quantum \(\SU(2)\) for a canonical choice of parameters~\cite[\S 4]{KK}.

\begin{proposition}[{cf.\ Kaad--Kyed~\cite[Lemma 48]{KK}}]\label{prop:lipschitz}
	Let \((H,\pi,D,\Gamma)\) be a faithful projectable commutator representation of \((P;\Omega_P,\du_P;\Pi)\) with vertical twist \((N_\ver,\nu_\ver)\) and horizontal twist \((N_\hor,\nu_\hor)\).
	Define a \(\U\)-invariant norms \(\norm{}_{\tau}\) and \(\norm{}_{\tau,\tot}\) on \(P\) and \(\Omega^1_P\), respectively, by
	\begin{align*}
		\forall p \in P, && \norm{p}_{\tau} &\coloneqq \max\Set*{\norm{\nu_\ver(p)} + \norm{\nu_\hor(p)},\norm{\nu_\ver^{-1}(p)} + \norm{\nu_\hor^{-1}(p)}},\\
		\forall \omega \in \Omega^1_P, && \norm{\omega}_{\tau;\tot} &\coloneqq \norm*{N_\ver (\pi_D \circ (\id-\Pi))(\omega)  N_\ver + N_\hor  (\pi_D \circ \Pi)(\omega)  N_\hor}.
	\end{align*}
	Then \(\norm{}_{\tau}\) makes \(P\) into a normed \(\ast\)-algebra, while \(\norm{}_{\tau;\tot}\) is invariant under the \(\ast\)-operation and satisfies
	\begin{equation}\label{eq:twistedmodule}
		\forall p_1,p_2 \in P, \, \forall \omega \in \Omega^1_P, \quad \norm{p_1  \omega  p_2}_{\tau;\tot} \leq \norm{p_1}_{\tau} \norm{\omega}_{\tau;\tot} \norm{p_2}_{\tau}.
	\end{equation}
	Hence, the \(\U\)-invariant seminorm \(L_\tau \coloneqq \norm{\du_P(\cdot)}_{\tau;\tot}\) on \(P\) annihilates \(\bC \subseteq P\), is invariant under the \(\ast\)-operation, and satisfies
	\begin{equation}\label{eq:twistedleibniz}
		\forall p_1,p_2 \in P, \quad L_\tau(p_1p_2) \leq L_\tau(p_1) \norm{p_2}_{\tau} + \norm{p_1}_{\tau} L_\tau(p_2).
	\end{equation}
\end{proposition}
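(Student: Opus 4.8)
The plan is to reduce the whole statement to two structural inputs: the $\ast$-compatibility of the represented exterior derivative and of the connection, and a $\bZ/2\bZ$-grading dichotomy separating the vertical and horizontal pieces of a represented $1$-form. For $\omega \in \Omega^1_P$ write $A_\omega := N_\ver\,(\pi_D\circ(\id-\Pi))(\omega)\,N_\ver$ and $B_\omega := N_\hor\,(\pi_D\circ\Pi)(\omega)\,N_\hor$, so that $\norm{\omega}_{\tau;\tot} = \norm{A_\omega + B_\omega}$. First I would record that $\pi_D$, $\Pi$ and $\id-\Pi$ are $\ast$-preserving: $\Pi$ and $\id-\Pi$ by the very definition of connection, and $\pi_D$ because $\pi_D(\du_P(p))^\ast = (\iu[D,\pi(p)])^\ast = \iu[D,\pi(p^\ast)] = \pi_D(\du_P(p^\ast))$ (using that $D$ is formally self-adjoint and $\du_P$ is $\ast$-preserving), whence $\pi_D(\omega^\ast) = \pi_D(\omega)^\ast$ on all of $\Omega^1_P = \operatorname{span}\{p\,\du_P(q) : p,q \in P\}$ by the $P$-bimodule property of $\pi_D$. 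Since $N_\ver$ and $N_\hor$ are self-adjoint, this already gives $\ast$-invariance of $\norm{}_{\tau;\tot}$.

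\textbf{The grading dichotomy.} The crux is the following. Since $(\id-\Pi)(\Omega^1_P) = P\cdot\vartheta$, with $\Gamma$ commuting with $\ran\pi$ and anticommuting with $\pi_D(\vartheta)$, the operator $(\pi_D\circ(\id-\Pi))(\omega)$ anticommutes with $\Gamma$; as $N_\ver$ commutes with $\Gamma$, so does $A_\omega$ anticommute with $\Gamma$. On the other hand, by \eqref{eq:strong} together with $\Omega^1_B = B\cdot\du_B(B)$ we have $\Omega^1_{P,\hor} = P\cdot\du_P(\iota_P(B))$, and for $b\in B$ one computes $\pi_D(\du_P(\iota_P(b))) = \iu[D,\pi(\iota_P(b))] = \iu[D_\hor,\pi(\iota_P(b))]$: the remainder term drops out because $Z$ supercommutes with $\ran\pi$, and the $\pi_D(\vartheta)\partial_\kappa$ term drops out because $\pi(\iota_P(b))$ is $\U$-invariant and $\vartheta\cdot\iota_P(b) = \iota_P(b)\cdot\vartheta$ by \eqref{eq:conncent}; since both $D_\hor$ and $\pi(\iota_P(b))$ commute with $\Gamma$, so does $(\pi_D\circ\Pi)(\omega)$, hence so does $B_\omega$. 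Consequently $A_\omega = \tfrac12\big((A_\omega+B_\omega) - \Gamma(A_\omega+B_\omega)\Gamma\big)$ and $B_\omega = \tfrac12\big((A_\omega+B_\omega) + \Gamma(A_\omega+B_\omega)\Gamma\big)$, so that $\max\{\norm{A_\omega},\norm{B_\omega}\} \le \norm{A_\omega+B_\omega} = \norm{\omega}_{\tau;\tot}$. This is the one genuine point, and it is exactly what lets us tolerate the fact that $N_\ver$ and $N_\hor$ are generally distinct.

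\textbf{Module inequalities and $\norm{}_\tau$.} With this in hand the module estimates are bookkeeping. Using $P$-bimodularity of $\pi_D$, $\Pi$, $\id-\Pi$ and the twist relations $\pi(p)N_\ver = N_\ver\pi(\nu_\ver(p))$, $\pi(p)N_\hor = N_\hor\pi(\nu_\hor(p))$ and their inverses, one obtains
\[
	N_\ver(\pi_D\circ(\id-\Pi))(p_1\omega p_2)N_\ver + N_\hor(\pi_D\circ\Pi)(p_1\omega p_2)N_\hor = \pi(\nu_\ver^{-1}(p_1))\,A_\omega\,\pi(\nu_\ver(p_2)) + \pi(\nu_\hor^{-1}(p_1))\,B_\omega\,\pi(\nu_\hor(p_2)),
\]
and then the triangle inequality, $\max\{\norm{A_\omega},\norm{B_\omega}\}\le\norm{\omega}_{\tau;\tot}$, and the elementary bound $xy' + zw' \le (x+z)(y'+w')$ for nonnegative reals yield $\norm{p_1\omega p_2}_{\tau;\tot} \le \norm{p_1}_\tau\norm{\omega}_{\tau;\tot}\norm{p_2}_\tau$, i.e.\ \eqref{eq:twistedmodule}; the one-sided variants $\norm{\omega p_2}_{\tau;\tot} \le \norm{\omega}_{\tau;\tot}\norm{p_2}_\tau$ and $\norm{p_1\omega}_{\tau;\tot} \le \norm{p_1}_\tau\norm{\omega}_{\tau;\tot}$ follow in the same way with one $\pi$-factor absent, so that no spurious constant appears (this matters since $\norm{1}_\tau = 2$). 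For $\norm{}_\tau$: submultiplicativity comes from $\nu_\ver,\nu_\hor$ being algebra homomorphisms together with the same elementary inequality, and $\ast$-invariance from the identity $\nu(p^\ast) = \nu^{-1}(p)^\ast$ valid for every modular automorphism $\nu$, which I would deduce from the symbol description of Lemma~\ref{lem:modular} and the Fr\"{o}hlich relation of Proposition-Definition~\ref{propdef:vertical}: if $\nu$ has symbol $\mu$ and $p\in P_j$ then $\nu(p^\ast) = p^\ast\mu(-j) = \hat\Phi_P^{-j}(\mu(-j))p^\ast = \mu(j)^{-1}p^\ast = \nu^{-1}(p)^\ast$, using $\mu(-j) = \hat\Phi_P^{j}(\mu(j))^{-1}$ and self-adjointness of $\mu(j)$. $\U$-invariance of both norms is immediate from $\U$-equivariance of $\pi_D$, $\Pi$, $\nu_\ver$, $\nu_\hor$ and $\U$-invariance of $N_\ver$, $N_\hor$.

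\textbf{Conclusion for $L_\tau$ and the main obstacle.} The assertions about $L_\tau := \norm{\du_P(\cdot)}_{\tau;\tot}$ then follow formally: it annihilates $\bC\subseteq P$ because $\du_P(1) = 0$; it is $\ast$-invariant because $\du_P$ is $\ast$-preserving and $\norm{}_{\tau;\tot}$ is $\ast$-invariant; and applying the triangle inequality to $\du_P(p_1p_2) = \du_P(p_1)p_2 + p_1\du_P(p_2)$ followed by the two one-sided module inequalities gives \eqref{eq:twistedleibniz}. The main obstacle — really the only nonroutine step — is the $\Gamma$-grading dichotomy of the second paragraph: recognising the vertical part of a represented $1$-form as $\Gamma$-odd and the horizontal part as $\Gamma$-even, so that each is dominated in norm by their sum. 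Without this one cannot control $N_\ver$ and $N_\hor$ simultaneously, since they cannot be factored out of a single operator at once; everything else is straightforward manipulation of bimodule maps, triangle inequalities, and the inequality $xy'+zw'\le(x+z)(y'+w')$.
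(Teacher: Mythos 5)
Your proof is correct, and it follows the same overall strategy as the paper: the whole statement is reduced to the key estimate \(\max\Set{\norm{N_\ver\pi_D((\id-\Pi)\omega)N_\ver},\norm{N_\hor\pi_D(\Pi\omega)N_\hor}} \leq \norm{\omega}_{\tau;\tot}\) (this is exactly the paper's Lemma~\ref{lem:lipschitz}), after which \eqref{eq:twistedmodule} and \eqref{eq:twistedleibniz} are the same twist-and-triangle-inequality bookkeeping you carry out. The only real difference is how the dichotomy is implemented: you grade by \(\Gamma\), showing the vertical piece is \(\Gamma\)-odd and the horizontal piece \(\Gamma\)-even (via \([D,\pi(b)]=[D_\hor,\pi(b)]\) for \(b \in B\), which kills the \(Z\) and \(\pi_D(\vartheta)\partial_\kappa\) terms), whereas the paper grades by the self-adjoint unitary \(c \coloneqq \pi_D(\vartheta)\Lambda_\kappa^{-1}\), with the parities reversed, and reads off the bound from the block decomposition with respect to \(E_\pm = \tfrac{1}{2}(\id\pm c)\); both choices work because the definitions of vertical and horizontal twist supply commutation of \(N_\ver\) and \(N_\hor\) with both \(\Gamma\) and \(\pi_D(\vartheta)\) (and \(\Lambda_\kappa\)), and the same reduction of horizontal commutators to \(D_\hor\)-commutators appears in the paper's lemma. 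Two further points in your favour: your verification that \(\nu(p^\ast)=\nu^{-1}(p)^\ast\) via the symbol and the cocycle identity is exactly the right way to get \(\ast\)-invariance of \(\norm{}_\tau\) (the paper treats this as routine), and your derivation of the one-sided module inequalities to avoid the spurious factor \(\norm{1}_\tau = 2\) when passing to \eqref{eq:twistedleibniz} is slightly more careful than the paper's bare assertion that \eqref{eq:twistedmodule} implies \eqref{eq:twistedleibniz}.
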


\begin{lemma}[{cf.\ Kaad--Kyed \cite[Rem.\ 4.6]{KK}}]\label{lem:lipschitz}
	Under the hypotheses of Proposition \ref{prop:lipschitz}, the \(\U\)-invariant seminorms \(\norm{}_{\tau,\ver}\) and \(\norm{}_{\tau,\hor}\) on \(\Omega^1_P\) defined by
	\[
		\norm{}_{\tau,\ver} \coloneqq \norm*{N_\ver (\pi_D \circ (\id-\Pi))(\cdot)  N_\ver}, \quad \norm{}_{\tau,\hor} \coloneqq \norm*{N_\hor  (\pi_D \circ \Pi)(\cdot)  N_\hor}
	\]
	satisfy the inequality \( \max\Set{\norm{\omega}_{\tau,\ver},\norm{\omega}_{\tau,\hor}} \leq \norm{\omega}_{\tau,\tot}\) for all \(\omega \in \Omega^1_P\).
\end{lemma}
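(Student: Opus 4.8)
The plan is to exploit the $\bZ/2\bZ$-grading carried by the self-adjoint unitary $\Gamma$, under which represented vertical $1$-forms turn out to be odd and represented horizontal $1$-forms even. Note first that for any $T \in \bL^{\U}(H)$, the decomposition $T = T_+ + T_-$ into $\Gamma$-even and $\Gamma$-odd parts, $T_\pm \coloneqq \tfrac12(T \pm \Gamma T \Gamma)$, satisfies $\norm{T_\pm} \leq \norm{T}$, since $\Gamma$ is a self-adjoint unitary and hence $\norm{\Gamma T \Gamma} = \norm{T}$. The structural claim I would establish first is that $\pi_D\mleft((\id-\Pi)(\omega)\mright)$ is $\Gamma$-odd and $\pi_D\mleft(\Pi(\omega)\mright)$ is $\Gamma$-even for every $\omega \in \Omega^1_P$, using $(\id-\Pi)(\Omega^1_P) = P \cdot \vartheta$ and $\Pi(\Omega^1_P) = P \cdot \Omega^1_B$.

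The vertical case is immediate: since $\pi_D$ is a $P$-bimodule homomorphism, $\pi_D(p\vartheta) = \pi(p)\pi_D(\vartheta)$, and here $\pi(p)$ is $\Gamma$-even because $\Gamma$ commutes with $\ran\pi$ while $\pi_D(\vartheta)$ is $\Gamma$-odd by definition of a projectable commutator representation, so the product is $\Gamma$-odd. For the horizontal case, since $\Omega^1_B$ is generated as a $B$-bimodule by $\du_P(B)$ and $\pi_D(P \cdot \Omega^1_B) = \pi(P)\,\pi_D(\Omega^1_B)$, it suffices to check that $\iu[D,\pi(b)] = \pi_D(\du_P b)$ is $\Gamma$-even for each $b \in B = P_0$, i.e.\ that $[D - \Gamma D \Gamma, \pi(b)] = 0$. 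Writing $D - \Gamma D\Gamma = 2Z - 2\iu\,\pi_D(\vartheta)\partial_\kappa$, I would observe that $[Z,\pi(b)] = 0$ because $Z$ supercommutes with $\ran\pi$ and $\pi(b)$ is even, and that $[\pi_D(\vartheta)\partial_\kappa,\pi(b)] = 0$ because $\vartheta b = b\vartheta$ forces $\pi_D(\vartheta)$ to commute with $\pi(b)$, while $b \in P_0$ forces $\pi(b)$ to preserve each weight space $H_j$ and hence to commute with $\partial_\kappa$. (Alternatively, Theorem~\ref{thm:liftequivalence} reduces this to the standard representation of Proposition~\ref{prop:liftcommrepconstruct}, for which the grading is visible from the Pauli-matrix factors, once one notes that $\pi_D$ is conjugation-invariant under arrows of $\grp{PCRep}(P;\Pi)$ since the relative remainder supercommutes with $\ran\pi$.)

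Granting the claim, the rest is formal. Since $N_\ver$ and $N_\hor$ both commute with $\Gamma$, the operator $N_\ver\,(\pi_D\circ(\id-\Pi))(\omega)\,N_\ver$ stays $\Gamma$-odd and $N_\hor\,(\pi_D\circ\Pi)(\omega)\,N_\hor$ stays $\Gamma$-even, and both are bounded by the very definitions of a vertical and a horizontal twist. Hence, setting
\[
	T \coloneqq N_\ver\,(\pi_D\circ(\id-\Pi))(\omega)\,N_\ver + N_\hor\,(\pi_D\circ\Pi)(\omega)\,N_\hor,
\]
so that $\norm{\omega}_{\tau;\tot} = \norm{T}$, one gets $T_- = N_\ver\,(\pi_D\circ(\id-\Pi))(\omega)\,N_\ver$ and $T_+ = N_\hor\,(\pi_D\circ\Pi)(\omega)\,N_\hor$, whence $\norm{\omega}_{\tau,\ver} = \norm{T_-} \leq \norm{T} = \norm{\omega}_{\tau;\tot}$ and likewise $\norm{\omega}_{\tau,\hor} = \norm{T_+} \leq \norm{\omega}_{\tau;\tot}$. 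The main obstacle is precisely the structural claim that horizontal represented $1$-forms are $\Gamma$-even, which forces one to unwind the defining relations of a projectable commutator representation (the remainder supercommuting with $\ran\pi$, the degree-zero centrality of $\vartheta$, the weight-preservation of $\partial_\kappa$); once that is in place, the inequality is a two-line norm estimate.
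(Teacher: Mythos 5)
Your proposal is correct, and it is structurally parallel to the paper's proof but executed with respect to a different grading operator. The paper decomposes relative to the self-adjoint unitary \(c \coloneqq \pi_D(\vartheta)\Lambda_\kappa^{-1}\): there the vertical part \(N_\ver\pi_D(p\vartheta)N_\ver\) commutes with \(c\) (this is why the \(\Lambda_\kappa^{-1}\) correction is built in, so that \(c\) commutes with \(\ran\pi\)), the horizontal part anticommutes with \(c\) (derived from \([D,\pi(b)]=[D_\hor,\pi(b)]\) for \(b \in B\), which uses the remainder supercommuting with \(\ran\pi\), together with the defining supercommutation of \(D_\hor\) with \(\pi_D(\vartheta)\)), and the estimate is phrased as bounding the block-diagonal and block-off-diagonal parts of a \(2\times 2\) matrix with respect to \(E_\pm = \tfrac12(\id\pm c)\). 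You instead grade by \(\Gamma\), so the parities are reversed (vertical odd, horizontal even), and your bookkeeping is somewhat more economical: the vertical parity and the compatibility of \(N_\ver,N_\hor\) with the grading are literally definitional (twists commute with \(\Gamma\), and \(\Gamma\) commutes with \(\ran\pi\) and anticommutes with \(\pi_D(\vartheta)\)), whereas the paper implicitly also needs \(N_\ver,N_\hor\) and \(\ran\pi\) to commute with \(c\). Your horizontal parity check — \([D-\Gamma D\Gamma,\pi(b)]=2[Z,\pi(b)]-2\iu[\pi_D(\vartheta)\partial_\kappa,\pi(b)]=0\) for \(b\in B=P_0\) — invokes exactly the same two ingredients as the paper (the supercommuting remainder, and the commutation of \(\pi_D(\vartheta)\) and \(\partial_\kappa\) with degree-zero elements), and your reduction to generators via \(\Pi(\Omega^1_P)=P\cdot\Omega^1_B\), \(\Omega^1_B = \operatorname{Span}_\bC B\,\du_B(B)\), and \((\id-\Pi)\circ\du_P\!\restriction_B = 0\) (since \([0]_\kappa = 0\)) is sound; the final step (even and odd parts have norm at most that of the sum, since conjugation by a self-adjoint unitary is isometric) is the same elementary estimate as the paper's block-matrix inequality, just phrased without introducing \(c\). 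One microscopic elaboration worth adding: boundedness of \(N_\ver\pi_D(p\vartheta)N_\ver\) and \(N_\hor\pi_D(p\beta)N_\hor\) for general \(p \in P\) follows from the twist conditions only after pulling \(\pi(p)\) through \(N_\ver\) or \(N_\hor\) via \(N^{-1}\pi(\cdot)N = \pi\circ\nu\), not purely "by definition"; this is routine and does not affect the argument.
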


\begin{proof}
	Let \(\omega \in \Omega^1_P\) be given; let \(\omega_\hor \coloneqq \Pi(\omega)\), and write \((\id-\Pi)(\omega) = p\vartheta\) for unique \(p \in P\).
	Let \(c \coloneqq \pi_D(\vartheta)\Lambda_\kappa^{-1}\), which is a \(\U\)-invariant self-adjoint unitary by definition of a projectable commutator representation.
	On the one hand, \(c\) manifestly commutes with the operator \(\pi_D(p\vartheta) = \pi(p)c\Lambda_\kappa\).
	On the other hand, since \(\Omega^1_{P,\hor} = P \cdot \du_B(B)\) and \([D,\pi(b)] = [\pi_D(\vartheta)\partial_\kappa+D_\hor,\pi(b)] = [D_\hor,\pi(b)]\) for all \(b \in B\), it follows that \(c\) anticommutes with \(\pi_D(\omega_\hor)\) as well.
	Setting \(E_\pm \coloneqq \tfrac{1}{2}(\id \pm c)\), we may now decompose \(N_\ver \pi_D(p\vartheta) N_\ver + N_\hor \pi_D(\omega_\hor) N_\hor\) with respect to the orthogonal direct sum decomposition \(H = E_+(H) \oplus E_-(H)\) as
	\[
		\begin{pmatrix} E_+ N_\ver \pi_D(p\vartheta) N_\ver E_+ & E_+ N_\hor \pi_D(\omega_\hor) N_\hor E_-\\ E_- N_\hor \pi_D(\omega_\hor) N_\hor E_+ & E_- N_\ver \pi_D(p\vartheta) N_\ver E_-\end{pmatrix},
	\]
	so that
	\begin{gather*}
		\norm{\omega}_{\tau;\ver} = \norm*{\begin{pmatrix} E_+ N_\ver \pi_D(p\vartheta) N_\ver E_+ & 0\\ 0 & E_- N_\ver \pi_D(p\vartheta) N_\ver E_-\end{pmatrix}} \leq \norm{\omega}_{\tau;\tot}, \\ 
		\norm{\omega}_{\tau,\hor} = \norm*{\begin{pmatrix} 0 & E_+ N_\hor \pi_D(\omega_\hor) N_\hor E_-\\ E_- N_\hor \pi_D(\omega_\hor) N_\hor E_+ & 0 \end{pmatrix}} \leq \norm{\omega}_{\tau;\tot}. \qedhere
	\end{gather*}
\end{proof}

\begin{proof}[Proof of Proposition \ref{prop:lipschitz}]
	In what follows, we use the notation of Lemma \ref{lem:lipschitz}.
	The only non-trivial points are positive-definiteness of \(\norm{}_{\tau,\tot}\), \eqref{eq:twistedmodule}, and \eqref{eq:twistedleibniz}; note that \(\norm{}_{\tau,\tot}\) is positive-definite by the proof of Lemma \ref{lem:lipschitz}, while \eqref{eq:twistedmodule} implies \eqref{eq:twistedleibniz} by the usual Leibniz rule for \(\du_P\).
	Let \(p_1,p_2 \in P\) and \(\omega \in \Omega^1_P\) be given; set \(\omega_\ver \coloneqq (\id-\Pi)(\omega)\) and \(\omega_\hor \coloneqq \Pi(\omega)\).
	Then, by Lemma \ref{lem:lipschitz},
	\begin{multline*}
		\norm{p_1\omega p_2}_{\tau;\ver} = \norm*{N_\ver \pi_D(p_1 \omega_\ver p_2) N_\ver} \leq \norm{\nu_\ver^{-1}(p)} \norm{N_\ver \pi_D(\omega_\ver)} \norm{\nu_\ver(p)}\\ \leq \norm{\nu_\ver^{-1}(p)} \norm{\omega}_{\tau,\tot} \norm{\nu_\ver(p)},
	\end{multline*}
	and similarly \(\norm{p_1 \omega p_2}_{\tau;\hor} \leq \norm{\nu_\hor^{-1}(p)} \norm{\omega}_{\tau,\tot} \norm{\nu_\hor(p)}\), so that, in turn,
	\begin{align*}
		\norm{p_1\omega p_2}_{\tau;\tot} &\leq \norm{p_1\omega p_2}_{\tau;\ver} + \norm{p_1\omega p_2}_{\tau;\hor}\\
		&\leq \norm{\nu_\ver^{-1}(p)} \norm{\omega}_{\tau,\tot} \norm{\nu_\ver(p)} + \norm{\nu_\hor^{-1}(p)} \norm{\omega}_{\tau,\tot} \norm{\nu_\hor(p)}\\
		&\leq \norm{p}_{\tau} \norm{\omega}_{\tau;\tot} \norm{p}_\tau. \qedhere
	\end{align*}
\end{proof}

\begin{example}\label{ex:hopf13}
	Continuing from Examples \ref{ex:hopf10} and \ref{ex:hopf12}, we may apply Proposition \ref{prop:lipschitz} to \((\slashed{S}_q(\SU(2)),\tilde{\pi},\tilde{\slashed{D}}_q,\Gamma_q)\) equipped with its unique vertical twist \((\Lambda_{q^{-1}},\Lambda_q)\) and unique horizontal twist \((\Lambda_{q^{-1/2}},\Lambda_{q^{1/2}})\).
	We claim that the resulting seminorm \(L_\tau\) is equivalent to \(L_{q^2,q}\), where \((L_{t,q})_{t \in (0,\infty)}\) is the family of Lipschitz seminorms on \(\cO_q(\SU(2))\) with which Kaad--Kyed make \(C_q(\SU(2))\) into a compact quantum metric space \cite[Def.\ 4.6 \& Cor.\ 5.24]{KK}.
	First, note that, for all \(p \in \cO_q(\SU(2))\),
	\[
		\norm{p}_{\tau} = \max\Set*{\norm{\Lambda_{q}(p)} + \norm{\Lambda_{q^{1/2}}(p)},\norm{\Lambda_q^{-1}(p)} + \norm{\Lambda_{q^{1/2}}^{-1}(p)}} = \norm{p}_{q^2,q},
	\]
	where \((\norm{}_{t,q})_{t \in (0,\infty)}\) is the family of norms on \(\cO_q(\SU(2))\) of \cite[\S 3.5]{KK}, so that \eqref{eq:twistedleibniz} for \(L_\tau\) is identical to the inequality of \cite[Lemma 4.8]{KK} for \(L_{q^2,q}\).
	Next, using the explicit construction of Example \ref{ex:hopf10} and the proof of Proposition \ref{prop:liftcommrepconstruct}, we may now write \(L_\tau = \norm{\partial_\tot(\cdot)}\), where \(\partial_{\tot} : \cO_q(\SU(2)) \to \bL^{\U}(\slashed{S}_q(\SU(2))\) is given by 
	\begin{align*}
		\partial_\tot &\coloneqq \Lambda_{q^{-1}}\iu{}[\tilde{\slashed{D}}_{q,\ver},\tilde{\pi}(\cdot)]\Lambda_{q^{-1}} + \Lambda_{q^{-1/2}}\iu{}[\tilde{\slashed{D}}_{q,\hor},\tilde{\pi}(\cdot)]\Lambda_{q^{-1/2}}\\
		&= \sigma^2 \otimes \begin{pmatrix} \Lambda_q \circ \partial_{q^2} & 0 \\ 0 & \Lambda_q \circ \partial_{q^2}\end{pmatrix} + \sigma^3 \otimes \begin{pmatrix} 0 & \Lambda_{q^{-1/2}} \circ \partial_+ \\ \Lambda_{q^{-1/2}} \circ \partial_- &0\end{pmatrix};
	\end{align*}
	here, we identify \(M_2(\cO_q(\SU(2))) \cong M_2(\bC) \otimes \cO_q(\SU(2))\) with its image in \(\bL(\bC^2 \otimes \cO_q(\SU(2)))\) via left multiplication of \(\cO_q(\SU(2))\) on itself, while, for \(t \in (0,\infty)\), we define \(\partial_t : \cO_q(\SU(2)) \to \cO_q(\SU(2))\) by \(\partial_t \coloneqq \bigoplus_{j \in \bZ} 2\pi\iu{}[j]_t \id_{\cO_q(\SU(2))_j}\).
	At last, we relate \(L_\tau\) to \(L_{q^2,q}\) as follows.
	On the one hand, if \(H\) is a \(\bZ/2\bZ\)-graded pre-Hilbert space with an odd self-adjoint unitary \(c\) and \(S : H \to H\) is an odd bounded operator supercommuting with \(c\), then \(\norm{S} = \norm{S_0}\) for \(S_0 \coloneqq -\iu{} c \circ \rest{S}{H_{\even}} = \iu{} S \circ \rest{c}{H_\even}\).
	On the other, we may construct unitary \(U : \cO_q(\SU(2))^2 \to \slashed{S}_q(\SU(2))_{\even}\) by \(U \coloneqq \left(\left(\begin{smallmatrix}p_1\\p_2\end{smallmatrix}\right) \mapsto \left(\begin{smallmatrix}1\\0\end{smallmatrix}\right) \otimes \left(\begin{smallmatrix}1\\0\end{smallmatrix}\right) \otimes p_1 + \left(\begin{smallmatrix}0\\1\end{smallmatrix}\right) \otimes \left(\begin{smallmatrix}0\\1\end{smallmatrix}\right) \otimes p_2\right)\),
	Applying these considerations to \(c = \sigma^1 \otimes \id \otimes \id\) and \(S = \partial_\tot(p)\) for \(p \in \cO_q(\SU(2))\) shows that \(L_\tau = \norm{\partial^\prime_{\tot}(\cdot)}\), where \(\partial^\prime_\tot : \cO_q(\SU(2)) \to M_2(\cO_q(\SU(2)))\) is given by
	\[
		\partial^\prime_\tot \coloneqq \begin{pmatrix} \Lambda_q \circ \partial_{q^2} & -\Lambda_{q^{-1/2}} \circ \partial_+ \\ \Lambda_{q^{-1/2}} \circ \partial_- & -\Lambda_q \circ \partial_{q^2} \end{pmatrix}.
	\]
	But now, given \(t \in (0,\infty)\), comparison with Kaad--Kyed's notations~\cite[\S\S 3.1, 3.5, 4.1]{KK} shows that \(L_{t,q} = \norm{\partial_{t,q}(\cdot)}\) for \(\partial_{t,q} : \cO_q(\SU(2)) \to M_2(\cO_q(\SU(2)))\) given by
	\[
		\partial_{t,q} = \begin{pmatrix} -\iu{}K_t \Lambda_{t^{1/2}} \circ \partial_{t} & -\Lambda_{q^{-1/2}} \circ \partial_+ \\ -\Lambda_{q^{-1/2}} \circ \partial_- & \iu{} K_t \Lambda_{t^{1/2}} \circ \partial_{t}\end{pmatrix}, \quad K_t \coloneqq \frac{1}{2\pi(1+t^{-1})}.
	\]
	Hence, an elementary comparison of \(\partial_{\tot}^\prime\) with \(\partial_{q^2,q}\) implies that
	\[
		\forall p \in \cO_q(\SU(2)), \quad \frac{1}{1+K_{q^2}^{-1}}L_\tau(p) \leq L_{q^2,q}(p) \leq (1+K_{q^2})L_\tau(p).
	\]
\end{example}

\printbibliography

\end{document}